\def\fontsettingup{2} 
\pgfplotsset{compat=newest}
\newtheorem{theorem}{Theorem}%[section]
\newtheorem{observation}[theorem]{Observation}
\newtheorem{claim}[theorem]{Claim}
\newtheorem*{claim*}{Claim}
\newtheorem{condition}[theorem]{Condition}
\newtheorem{fact}[theorem]{Fact}
\newtheorem{lemma}[theorem]{Lemma}
\newtheorem{corollary}[theorem]{Corollary}
\theoremstyle{definition}
\newtheorem{definition}[theorem]{Definition}
\newtheorem{remark}[theorem]{Remark}
\newtheorem*{remark*}{Remark}
  \def\*#1{\mathbf{#1}} % Use \*A for \mathbf{A}
  \def\+#1{\mathcal{#1}} % Use \+A for \mathcal{A}
  \def\-#1{\mathrm{#1}} % Use \-A for \mathrm{A}
  \def\^#1{\mathbb{#1}} % Use \^A for \mathbb{A}
  \def\!#1{\mathfrak{#1}} % Use \!A for \mathfrak{A}
  \def\*#1{\boldsymbol{#1}} % Use \*A for \mathbf{A}
  \def\+#1{\mathcal{#1}} % Use \+A for \mathcal{A}
  \def\-#1{\mathrm{#1}} % Use \-A for \mathrm{A}
  \def\^#1{\mathbb{#1}} % Use \^A for \mathbb{A}
  \def\!#1{\mathfrak{#1}} % Use \!A for \mathfrak{A}
\DeclareMathOperator*{\argmin}{arg\,min}
\def\oPr{\mathbf{Pr}}
\renewcommand{\Pr}[2][]{ \ifthenelse{\isempty{#1}}
  {\oPr\left[#2\right]}
  {\oPr_{#1}\left[#2\right]} } % Use \Pr[a]{b} for \mathbf{Pr}_a[b], \Pr{b} for  \mathbf{Pr}[b]
\def\oE{\mathbb{E}}
\newcommand{\E}[2][]{ \ifthenelse{\isempty{#1}}
  {\oE\left[#2\right]}
  {\oE_{#1}\left[#2\right]} }
\def\oVar{\mathbf{Var}}
\newcommand{\Var}[2][]{ \ifthenelse{\isempty{#1}}
  {\oVar\left[#2\right]}
  {\oVar_{#1}\left[#2\right]} }
\def\oEnt{\mathbf{Ent}}
\newcommand{\Ent}[2][]{ \ifthenelse{\isempty{#1}}
  {\oEnt\left[#2\right]}
  {\oEnt_{#1}\left[#2\right]} }
\newcommand{\DKL}[2]{\-D_{\-{KL}}\left(#1 \parallel #2\right)}
\newcommand{\DTV}[2]{\-D_{\mathrm{TV}}\left({#1},{#2}\right)}
\newcommand{\e}{\mathrm{e}}
\renewcommand{\epsilon}{\varepsilon}
\renewcommand{\emptyset}{\varnothing}
\newcommand{\norm}[1]{\left\Vert#1\right\Vert}
\newcommand{\tuple}[1]{\left(#1\right)} \newcommand{\eps}{\varepsilon}
\newcommand{\inner}[2]{\left\langle #1,#2\right\rangle}
\newcommand{\tp}{\tuple}
\newcommand{\abs}[1]{\left\vert#1\right\vert}
\newcommand{\ctp}[1]{\left\lceil#1\right\rceil}
\newcommand{\ftp}[1]{\left\lfloor#1\right\rfloor}
\newcommand{\Ind}[1]{\-{Ind}_{#1}}
\newcommand{\FD}[2][\theta]{P^{\-{FD}}_{#1, #2}}
\newcommand{\FDdown}[2][\theta]{P^{\downarrow}_{#1, #2}}
\newcommand{\FDup}[2][\theta]{P^{\uparrow}_{#1, #2}}
\newcommand{\Tmix}[1]{t_{\-{mix}}\tp{#1}}
\newcommand{\GD}[1][\nu]{P^{\-{GD}}_{#1}}
\newcommand{\sgap}[2][]{\gamma_{#1}\tp{#2}}
\newcommand{\PB}{P_{\-B}}
\newcommand{\PBdown}{\PB^\downarrow}
\newcommand{\PBup}{\PB^\uparrow}
\newcommand{\BHC}{\ensuremath{\#\mathrm{BipHardCore}}}
\begin{document}
\title{Uniqueness and Rapid Mixing\\
 in the Bipartite Hardcore Model}
\date{}
\author{Xiaoyu Chen\thanks{State Key Laboratory for Novel Software Technology, Nanjing University, 163 Xianlin Avenue, Nanjing, Jiangsu Province, China. \textnormal{E-mails: \url{chenxiaoyu233@smail.nju.edu.cn}, \url{liu@nju.edu.cn}, \url{yinyt@nju.edu.cn}}}
\and 
Jingcheng Liu\footnotemark[1]
\and 
Yitong Yin\footnotemark[1]
}

\pagenumbering{roman}
\maketitle
\begin{abstract}
	We characterize the uniqueness condition in the hardcore model for bipartite graphs with degree bounds only on one side, and provide a nearly linear time sampling algorithm that works up to the uniqueness threshold.
	We show that the uniqueness threshold for bipartite graph has almost the same form of the tree uniqueness threshold for general graphs, except with degree bounds only on one side of the bipartition.
	The hardcore model is originated in statistical physics for modeling equilibrium of lattice gas. Combinatorially, it can also be seen as a weighted enumeration of independent sets. Counting the number of independent sets in a bipartite graph (\#BIS) is a central open problem in approximate counting. Compared to the same problem in a general graph, surprising tractable regime have been identified that are believed to be hard in general.
	This is made possible by two lines of algorithmic approach: the high-temperature algorithms starting from Liu and Lu (STOC 2015), and the low-temperature algorithms starting from Helmuth, Perkins, and Regts (STOC 2019).

	In this work, we study the limit of these algorithms in the high-temperature case. 
	Our characterization of the uniqueness condition is obtained by proving decay of correlations for arguably the best possible regime, which involves locating fixpoints of multivariate iterative rational maps and showing their contraction. 
	Interestingly, we are able to show that a regime that was considered ``low-temperature'' is actually well within the uniqueness (high-temperature) regime.
	We also give a nearly linear time sampling algorithm based on simulating field dynamics only on one side of the bipartite graph that works up to the uniqueness threshold.  
	Our algorithm is very different from the original high-temperature algorithm of Liu and Lu (STOC 2015), and it makes use of a connection between correlation decay and spectral independence of Markov chains. Along the way, we also build an explicit connection between the very recent developments of negative-fields stochastic localization schemes and field dynamics.
	Last but not the least, we are able to show that the standard Glauber dynamics on both side of the bipartite graph mixes in polynomial time up to the uniqueness. Remarkably, this is a model where both the total influence and the spectral radius of the adjacency matrix can be unbounded, yet we are able to prove mixing time bounds through the framework of spectral independence.
\end{abstract}

\newpage
\setcounter{tocdepth}{2}
\tableofcontents

\clearpage

\pagenumbering{arabic}
\section{Introduction}
%{\tiny \color{blue} \textbf{Outline}
%  \begin{enumerate}
%  \item introduce \#BIS; (1) what problem is contained in \#BIS (2) we know nearly nothing about \#BIS
%  \item weighted version of \#BIS, which arises naturally in statistical physics known as the hardcore model on biartite graph with fugacity $\lambda$.
%    \begin{enumerate}
%    \item define hardcore model on bipartite graph. \#BIS is a special case where $\lambda = 1$;
%    \item what have been achieved for this problem (list related works). \\ (I) algo; (II) hard.
%    \end{enumerate}
%  \item In this work, we focus on weighted \#BIS in the tree-uniqueness regime.
%    This topic is well done for hardcore model in general graph [Weitz, Sly].
%    However, there is only a very primitive understanding for the bipartite graph [Liu, Lu].
%  \end{enumerate}
%}

Counting the number of independent sets in a bipartite graph (\#BIS) is arguably one of the most important open problem in the field of approximate counting.
Many natural counting problems are known to have the same complexity as \#BIS (\#BIS-equivalent) or at least as hard (\#BIS-hard) under approximation-preserving reductions (AP-reductions), while \#BIS itself is a complete problem for a logically defined class known as \#RH$\Pi_1$~\cite{dyer2004relative}.
These problems arise from the study of counting CSPs~\cite{dyer2010approximation, dyer2012complexity, bulatov2013expressibility, galanis2021approximating}, spin systems in statistic physics~\cite{goldberg2007complexity, goldberg2012approximating, liu2014complexity, goldberg2015complexity, galanis2016ferromagnetic, cai2016bis}, and combinatorial settings~\cite{dyer2004relative, chebolu2012complexity}.
To name a few, these include counting the number of down-sets in a partial order system, counting stable matchings, counting the number of $q$-colorings in a bipartite graph, computing the partition function of the ferromagnetic Potts model~\cite{goldberg2012approximating} and ferromagnetic Ising model with mixed external fields~\cite{goldberg2007complexity}.
\#BIS also plays a major role in complexity classification for Boolean counting CSP as an intermediate class.
It is conjectured that neither does \#BIS admit fully polynomial-time randomized approximation scheme (FPRAS), nor is it as hard as \#SAT~\cite{dyer2004relative}.
%In many ways, \#BIS plays a role similar to unique games for optimization problems, or the End-Of-The-Line problem for finding fixed points and Nash equilibria

For a general graph, approximately counting the number of independent sets is a well studied problem.
Its weighted version, the hardcore model, was originally used in statistical physics to model equilibrium of lattice gas.
%which is bipartite by the way
The partition function of the hardcore model coincides with the weighted counting of independent sets, which we define next.
Given a graph $G$ and fugacity $\lambda > 0$, the Gibbs distribution of the hardcore model on $G$, denoted by  $\mu$, is given by
\begin{align*}
	\forall S \in \Ind{G}, \quad \mu(S) := \frac{\lambda^{\abs{S}}}{Z(G)},
\end{align*}
where $\Ind{G}$ is the family of independent sets of $G$, and the normalizing factor $Z(G) := \sum_{S \in \Ind{G}} \lambda^{\abs{S}}$ is the \emph{partition function} of the hardcore model.
Since then, the hardcore partition function has also found applications in the Lov\'asz local lemma~\cite{Shearer85} and its algorithmic counterparts~\cite{KS11}.
The approximability of $Z(G)$ is well understood. If the maximum degree of $G$ is $\Delta$, then there is a critical threshold $\lambda_c(\Delta):= (\Delta-1)^{\Delta-1} / (\Delta-2)^\Delta$: 
for $\lambda < \lambda_c(\Delta) $, 
efficient algorithms for approximating $Z(G)$ have been known through deterministic approximate counting~\cite{weitz2006counting, patel2017deterministic} which run in polynomial-time when $\Delta$ is bounded by a constant, 
and more recently through rapidly mixing Markov chains~\cite{anari2020spectral,chen2020rapid,chen2021optimal,chen2021rapid,anari2022entropic,chen2022localization,chen2022optimal} which can run a lot faster, especially when $\Delta$ is large; 
while for $\lambda > \lambda_c(\Delta)$,  an FPRAS for $Z(G)$ does not exist unless NP=RP~\cite{sly2010computational,sly2014counting,galanis2016inapproximability}. The threshold $\lambda_c(\Delta)$ is known as the \emph{uniqueness threshold}, as it corresponds to the uniqueness of Gibbs measure in infinite $(\Delta-1)$-ary trees.

%The weighted version of \#BIS problem refers to approximately calculating the partition function $Z$ and \#BIS is a special case where $\lambda = 1$.

When $G$ is restricted to bipartite graphs $G=(L \cup R, E)$, the problem of approximating a bipartite hardcore partition function gets more interesting.
It can also be seen as a weighted version of \#BIS, and it will be the main subject of this paper. We will refer to this problem by \BHC{}.
For the hardness side, if one allows for complex fugacity $\lambda\in \mathbb{C}$ outside a cardioid-shaped region, %or negative fugacity $\lambda<\lambda^*(\Delta) = -\frac{(\Delta-1)^(\Delta-1)}{\Delta^\Delta}$, 
Bez\'{a}kov\'{a}, Galanis, Goldberg, and \v{S}tefankovi\v{c} show that there are \#P-hardness~\cite{bezakova2020inapproximability} even for approximating the complex norm, or the complex argument of a bipartite hardcore partition function.
However, we note that it is highly unlikely that the \#P-hardness results can be extended to the case of positive real fugacity $\lambda$, which is the main focus of this paper, as they are NP-easy via Valiant and Varzirani~\cite{valiant1985np}.
The main challenge of producing an NP-hardness proof for \BHC{} with real fugacity (and \#BIS in particular) is that many optimization problems become easy in a bipartite graph. In particular, the problem of approximately finding the largest independent set in a general graph is hard, but its bipartite counterpart become easy.
%We will focus mainly on the case of positive real fugacity $\lambda$, and 
For $\lambda>0$, there are known \#BIS-hardness established by Cai, Galanis, Goldberg, Guo, Jerrum, {\v S}tefankovi\v{c}, and Vigoda~\cite{cai2016bis}, which show that when $\lambda > \lambda_c(\Delta)$, $\BHC(\lambda)$ become \#BIS-hard on bipartite graphs of maximum degree $\Delta \ge 3$. 
They also formulated sufficient conditions for their \#BIS-hardness gadget construction to succeed, the unary symmetry breaking and the (balanced) nearly-independent phase-correlated spins property, which roughly corresponds to supporting a balanced mixture of two phases.
To some extent, this can be viewed as characteristic for the ``hardest'' \#BIS instances.
Such characteristic for ``hardness'' is well-known in the literature, and they have appeared formally as concrete algorithmic barriers before: it was shown that any \emph{local} Markov chain Monte-Carlo (MCMC) based algorithm that moves via updating $o(n)$ vertices mixes slowly on $\Delta$-regular random bipartite graph when $\lambda > \lambda_c(\Delta)$ with high probability~\cite{dyer2002counting, mossel2009hardness}.
There are later works proposing interesting Markov chains working on subsets of edges rather than vertices, but they were also shown to mix slowly on a different family of bipartite graphs with a similar characteristic~\cite{ge2012graph, goldberg2012counterexample}.

The algorithmic fronts on \BHC{} and \#BIS have been more active. 
There are mainly two lines of algorithmic approaches: the \emph{high-temperature} ones via correlation decay, and the \emph{low-temperature} ones via cluster expansion and the polymer models.
Based on the method of correlation decay, Liu and Lu~\cite{liu2015fptas} gave an algorithm for \#BIS when $\lambda = 1$ and $\Delta_L \leq 5$, where, unlike the general case, here $\Delta_L$ is the maximum degree only on one side of the bipartition $L$.
%This is known as the first algorithmic evidence to distinguish \#BIS from \#IS.
The other line of algorithmic approach emerging recently, starting from Helmuth, Perkins, and Regts~\cite{helmuth2020algorithmic}, is based on the cluster expansion and the  polymer model (or the more sophisticated contour model) in Pirogov-Sinai theory from statistical physics.
Since then, several improvements, extensions, and generalizations give algorithms for estimating the partition function of the hardcore model on bipartite expander graph with large fugacity~\cite{jenssen2020algorithms, chen2021fast, jessen2022approximately}; on random regular bipartite graph for sufficiently large maximum degree $\Delta$ and fugacity $\lambda = \widetilde{\Omega}(1/\Delta)$~\cite{liao2019counting, jenssen2020algorithms, chen2022sampling}; on unbalanced bipartite graph with large fugacity~\cite{cannon2020counting, friedrich2023polymer, blanca2022fast}; on $d$-regular bipartite graph that runs in subexponential time provided that $d=\omega(1)$~\cite{jessen2022approximately}.

Both lines of algorithmic approaches have failed to give efficient algorithm for \BHC{} or \#BIS. As discussed earlier, one characteristic of the ``hardest'' \#BIS instances is the ability to support a balanced mixture of two phases. Essentially, for a non-random bipartite graph, both lines of algorithmic approaches give new sufficient conditions when the system does not support a balanced mixture of two phases, leading to new tractable instances. Specifically, the high-temperature ones provide new sufficient condition for uniqueness with degree bounds on one side; The low-temperature ones mainly show that the imbalance in the parameters, or the expansion of the graph can introduce a dominating phase, and then a cluster expansion around the dominant phase can provide good approximations. 
This motivates the following question: what is the limit of these algorithmic approaches? 
%In particular, for high-temperature algorithms, can we give a characterization of uniqueness with degree bounds only on one side; and for low-temperature ones, can we characterize the regime of having a dominating phase.

In this work, we resolve the question for the high-temperature case, by giving a complete characterization of the uniqueness condition in \BHC{} with degree bounds only on one side.
We complement the new characterization with a nearly linear time MCMC based sampler that works up to the uniqueness threshold.

%\begin{itemize}
%\item
\paragraph{Uniqueness condition in the bipartite hardcore model:}
We show that the correlation decay property holds if and only if $\lambda < \lambda_c(\Delta_L)$ where $\Delta_L$ is the degree bound only on one side.
The threshold here is exactly the tree-uniqueness critical threshold for hardcore model on general graph with maximum degree $\Delta$.
This shows that the tractable regime in Liu and Lu~\cite{liu2015fptas}, with $\lambda = 1$ and $\Delta_L \leq 5$ being the degree bound on one side, is not a coincidence.
And it confirms a heuristic that might have been suggested by the result in~\cite{liu2015fptas} in the non-weighted case, 
that the correlation decay property is guaranteed by the uniqueness condition on one side.
More surprisingly,
it seems that one should never expect to exploit such one-sided criterion of uniqueness any further:
if further allowing fugacity to be different on two sides of the bipartition, that is, $\lambda_L \neq \lambda_R$,
then the class of instances satisfying the one-sided uniqueness condition $\lambda_L < \lambda_c(\Delta_L)$ may still exhibit non-uniqueness and can be \#BIS-hard. 
Therefore,
%for the bipartite hardcore model,
although the correlation decay property is critically captured by the uniqueness condition on one side,
such criticality  does not hold
oblivious to the other side.
To the opposite, it crucially relies on that the two sides are in the same hardcore model with the same fugacity. 
We give a characterization of the uniqueness
in a more refined setting with arbitrary lopsided fugacity $\lambda_L$, $\lambda_R$ (formally stated in \Cref{thm:delta-unique}), which subsumes the setting with $\lambda_L=\lambda_R$ as special case.
Interestingly, we are able to identify a regime that was considered ``low-temperature'', and show that it actually lies well within the uniqueness (high-temperature) regime (see \Cref{rem:low-temp}).
%
%We establish the uniqueness condition with degree bounds on one side, and the correlation decay property by analyzing fixpoints and their critical behaviors in multivariate rational maps.

The \BHC{} also seems to pose unique challenges that requires an approach that differs significantly from previous works for establishing correlation decay. In particular, our starting point is not the global contraction of multivariate rational maps under certain set of parameters. Instead, our uniqueness condition is defined first by requiring that all the fixpoints to be contractive. Then, we use implicit function theorems to locate the fixpoints for \emph{every} set of parameters (rather than just the fixpoints at the critical threshold), and show that the worst case contraction rate of the multivariate rational maps can be bounded by the contraction rate at the fixpoints determined by the given parameters. 
In particular, we have to do this for the parameters away from the uniqueness threshold.

%We also extend our characterization to allow fugacity to be different on two sides of the bipartition, that is, $\lambda_L \neq \lambda_R$.
%A formal statement of this can be found in \Cref{thm:delta-unique}.
%Notably, the model behaves drastically different when $\lambda_L \neq \lambda_R$. 
%In particular, if one only enforces a ``uniqueness''-like condition on one side of the bipartition, that is, $\lambda_L < \lambda_c(\Delta_L)$, then the model can already support \#BIS-hard instances. 
%We establish the uniqueness condition with degree bounds on one side, and the correlation decay property by studying iterative multivariate rational maps.

%\item
\paragraph{A nearly linear time sampler up to uniqueness:}
%Our new characterization enables us to 
We also give a nearly linear time sampling algorithm based on simulating the \emph{field dynamics} on one side of the \BHC{}. 
Our algorithm works for all $\lambda < \lambda_c(\Delta_L)$, where $\Delta_L$ is the degree bound only on one side.
The field dynamics is a Markov chain introduced by Chen, Feng, Yin and Zhang as a proxy for the analysis of the Glauber dynamics~\cite{chen2021rapid}.
Since then, it has found further applications in establishing optimal mixing time of Glauber dynamics (and its variants)~\cite{chen2021rapid, anari2022entropic, chen2022optimal, chen2022localization}, and the field dynamics itself has also been used in designing nearly linear time samplers~\cite{anari2022entropic, chen2022near}.
At a high level, we will first establish the decay of correlation property up to the uniqueness threshold, which has been an extremely important algorithmic tool by itself.
Then we adopt the notion of  {spectral independence} introduced by Anari, Liu, and Oveis-Gharan~\cite{anari2020spectral}.
This is a very important notion, through which numerous proofs of correlation decay has been successfully translated to proofs of rapid mixing of Markov chains.
In particular,  we follow a similar route developed by Chen, Liu, and Vigoda~\cite{chen2020rapid} to prove sharp spectral independence via {contraction}.
For \BHC{} however, we do not have bounded total influence, and we are only able to show spectral independence on one side of the bipartite graph.
To get a nearly linear time sampling algorithm, we have to also establish a stronger notion called {entropic independence} for the field dynamics, which is a novel form of entropy decay introduced by Anari, Jain, Koehler, Pham, and Vuong~\cite{anari2022entropic}.
Along the way, we also build an explicit connection between the field dynamics and the negative-fields stochastic localization scheme of Chen and Eldan~\cite{chen2022localization}.  %analysis framework~\cite{chen2022optimal,chen2022near}.
%(which, to the best of our knowledge, is the first such connection in written form)
%Intuitively, field dynamics allows one to boost a weak entropic independence result close to the uniqueness threshold to a 
%
Remarkably, the case of $\Delta_L=2$ for \BHC{} is highly non-trivial. 
While we established the correlation decay property on one side of the bipartite hardcore model, when projected to the other side it is equivalent to a ferromagnetic Ising model in the so-called ``spin world'' representation. 
Indeed, if one allow fugacity to be different on two sides of the bipartite graph (that is, $\lambda_L \neq \lambda_R$), then any ferromagnetic Ising model can be equivalently represented by the bipartite hardcore model with $\Delta_L=2$.
Even though the ferromagnetic Ising model does not have decay of long range correlations in general, in the case of $\lambda_L=\lambda_R$ of the bipartite hardcore model, we are able to show that it does. 
%{\color{red} We provide an alternative nearly linear-time sampler for $\Delta_L=2$ by using the very recent work by Chen and Zhang~\cite{chen2022near} on Ising samplers, which has a slightly better dependency on $\lambda$. 
%%Though their algorithm is also based on field dynamics, 
%We leave as future work to provide a unified algorithm.}
%Recently, its connections to zero-freeness~\cite{peters2019conjecture,liu2019approximate,shao2021contraction} and to MCMC through spectral independence~\cite{anari2020spectral} have also been established.
%Recently, its connections to rapid mixing of Markov chains has been established through the notion of spectral independence by Anari, Liu, and Oveis-Gharan~\cite{anari2020spectral}.
%Combining our new characterization of the uniqueness conditions, and the recent connection of correlation decay to spectral independence by Anari, Liu, and Oveis-Gharan~\cite{anari2020spectral}, 

%\item
\paragraph{Mixing of the standard Glauber dynamics up to uniqueness:}
With the new characterization of the uniqueness conditions and the decay of correlation that it implies, 
and by augmenting from a series of new tools for mixing times developed in aforementioned works and in~\cite{chen2021optimal},
we are also able to show polynomial mixing time bound for the Glauber dynamics up to the uniqueness threshold.
%{\color{red}\textbf{[[Maybe we should present the new algorithm first: it's a one-sided sampler using one-sided uniqueness]]}}
Remarkably, we are able to establish the mixing time bound through the spectral independence framework, despite the fact that model does not have bounded total influence, nor bounded spectral radius of the adjacency matrix (due to unbounded degree on the other side).
%To the best of our knowledge, this is the first result of its kind on spin models.
We also remark that these are very different algorithms from the one used in~\cite{liu2015fptas}, 
as it is an MCMC that has been successfully applied to \#BIS beyond what can be inherited from algorithms for general graphs. 
Furthermore,  such a rapid mixing result does not seem to follow from directly applying censoring inequality to the field dynamics, despite its monotonicity on one side of the bipartition. %, and  
Our starting point is that the field dynamics mixes rapidly on only one side of the bipartite graph.
Then, we adapt the framework of approximate tensorization of variance~\cite{chen2021optimal,caputo2015approximate,cesi2001quasi} to the bipartite hardcore model and perform a comparison argument on variance decay,  and we are able to show that the field dynamics on one side can still be ``approximately tensorized'' into a single site Glauber dynamics on both side.
Another challenge in analyzing the standard Glauber dynamics on the bipartite hardcore model  arises due to the fact that we only have degree bounds on one side of the bipartition, while the other side can have vertices with unbounded degree.
Yet, we are able to show a mixing time bound that only depends on one side of the maximum degree.
%Under a technical assumption of $\delta$-uniqueness (see~\cref{def:lambda-d-unique}), our mixing time bound improves on ~\cite{chen2021optimal}
%Since we establish the mixing time bound of the Glauber dynamics through the spectral independence framework, in order to avoid getting a mixing time bound that depends on the maximum degree of the graph, a standard approach is to use comparison with the uniform block dynamics as first demonstrated by Chen, Liu, Vigoda~\cite{chen2021optimal}.
%The success of uniform block dynamics requires that the bipartite graph  induced on a uniformly random subset of one side will consists of small connected components, a property that we do not have due to the absence of degree bounds on the other side.
%Existing comparison arguments between MCMC does not seem to apply easily, for instance, because the uniform dynamics
%To the best of our knowledge, this is the first successful attempt.
%Last but not least, we note that the field dynamics itself has not been used as an algorithm until very recently by Chen and Zhang~\cite{chen2022near}.

%\end{itemize}

\bigskip
Throughout the rest of the paper, we will only be interested in the degree bounds on one side, so we will simply write $\Delta$ for $\Delta_L$ unless otherwise stated.

\section{Main results and technical overview}
%In this section, we formally state our results. 
Our first result gives a tight characterization for the uniqueness of the bipartite hardcore model, in terms of a ``tree-uniqueness'' condition with degree bounds only on one side (that is to say, all the odd level of a tree). 
The classical notion of uniqueness condition for a general graph has been characterized on infinite $\Delta$-regular tree since Kelly~\cite{kelly1985stochastic}.
For technical convenience, we introduce the following notion of $\delta$-uniqueness to ensure that there is an explicit gap.  

\begin{definition} \label{def:lambda-d-unique}
	Let $\delta \in [0, 1)$ be any real number, and $d,w \in \^R_{>0}$.
  We will be interested in the fixpoints of the recurrence $F(x) := \lambda (1 + \lambda(1 + x)^{-w})^{-d}$.
  We say that $(\lambda, d) \in \^R^2_{>0}$ is \emph{$\delta$-unique} if for any $w \in \^R_{>0}$, all fixpoints $\hat{x} = F(\hat{x})$ of $F$ satisfy $F'(\hat{x}) \leq 1 - \delta$.
\end{definition}
%We will be mainly concerned about the case of $\lambda=\alpha$, and $w$ can be chosen arbitrarily.

As will be discussed in \Cref{sec:uniqueness}, our notion of $\delta$-uniqueness is defined with respect to degree bounds only on one side of the bipartite graph. 
%The recurrence is defined on a bi-regular bipartite graph, 
The parameter $d+1$ correspond to the degree on one side, and $w+1$ correspond to the degree on the other side, where $w$ can be chosen arbitrarily. 
In contrast, the tree uniqueness threshold is defined over infinite $\Delta$-regular trees, with degree bounds on both side. 
We show that our one-sided uniqueness threshold on bipartite graph coincides with the uniqueness threshold on general graph, and further that the slackness $\delta$ is roughly equivalent. %for sufficiently small $\delta$.
%\begin{theorem} \label{thm:delta-unique-eq}
%	Fix any $\Delta=d+1 \geq 3$. There exists $\delta_0(\Delta) >0$ such that, 
%	for any $\delta \in [0, \delta_0(\Delta))$,
%  %Then for $\lambda \leq (1 - \delta)\lambda_c(\Delta)$, $(\lambda, d)$ is $\frac{\delta}{4}$-unique.
%  the pair $(\lambda, d)$ is $\frac{\delta}{4}$-unique if $ \lambda \leq (1 - \delta) \lambda_c(\Delta) = (1 - \delta) \frac{(\Delta - 1)^{\Delta - 1}}{(\Delta - 2)^{\Delta}}$.
%\end{theorem}
\begin{theorem} \label{thm:delta-unique-eq}
	Fix any $\Delta=d+1 \geq 3$ and any $\delta \in [0, 1)$,
  %Then for $\lambda \leq (1 - \delta)\lambda_c(\Delta)$, $(\lambda, d)$ is $\frac{\delta}{4}$-unique.
    the pair $(\lambda, d)$ is $\frac{\delta}{10}$-unique if
    \[\textstyle \lambda \leq (1 - \delta) \lambda_c(\Delta) = (1 - \delta) \frac{(\Delta - 1)^{\Delta - 1}}{(\Delta - 2)^{\Delta}}. \]
\end{theorem}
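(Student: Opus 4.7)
The plan is to analyze the derivative $F'$ at a fixpoint of $F = F_d \circ F_w$ (where $F_k(t) := \lambda(1+t)^{-k}$), reduce to a one-variable optimization via the fixpoint constraints, and bound using the classical single-step tree-uniqueness analysis. For a fixpoint $\hat x$ with $y := F_w(\hat x)$, the equations $\hat x(1+y)^d = \lambda$ and $y(1+\hat x)^w = \lambda$ hold, and the chain rule gives
\[
F'(\hat x) \;=\; F_d'(y)\, F_w'(\hat x) \;=\; \frac{d\hat x}{1+\hat x}\cdot\frac{wy}{1+y}.
\]

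For fixed $\lambda$ and $d$, as $w$ ranges over $\mathbb{R}_{>0}$ the joint fixpoints trace a one-dimensional curve, which I would parameterize by $\hat x \in (0,\lambda)$: set $y(\hat x) = (\lambda/\hat x)^{1/d} - 1$ and $w(\hat x) = \log(\lambda/y(\hat x))/\log(1+\hat x)$. This turns $F'(\hat x)$ into a single-variable function $H(\hat x;\lambda,d)$. The key structural claim is that $H$ is (essentially) maximized at the symmetric fixpoint $\hat x = y = x^\ast$, where $x^\ast$ is the unique positive fixpoint of $F_d$ (so $w = d$); at this point $H(x^\ast) = (\alpha^\ast)^2$ with $\alpha^\ast := dx^\ast/(1+x^\ast) = |F_d'(x^\ast)|$. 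I would establish this via (i) the boundary vanishing $H \to 0$ as $\hat x \to 0^+$ or $\hat x \to \lambda^-$ (corresponding to the degenerate limits $w \to \infty$ or $w \to 0^+$), and (ii) implicit differentiation of the fixpoint system to locate the interior stationary points of $H$, showing the symmetric one is the dominant maximum.

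Reduced to the symmetric case, the classical explicit parameterization of the one-step tree-uniqueness curve reads $\lambda(\alpha^\ast) = \alpha^\ast d^d/(d-\alpha^\ast)^{d+1}$, smooth and strictly increasing on $\alpha^\ast \in [0,1]$ with $\lambda(1) = \lambda_c(d+1)$. Under the hypothesis $\lambda \leq (1-\delta)\lambda_c$, inverting yields $\alpha^\ast \leq 1 - c\delta$ for an explicit constant $c \geq (d-1)/(2d) \geq 1/4$ valid for all $d \geq 2$, obtained from the derivative $d\lambda/d\alpha^\ast|_{\alpha^\ast = 1} = 2d^{d+1}/(d-1)^{d+2}$ together with global monotonicity. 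Squaring then gives $(\alpha^\ast)^2 \leq 1 - \delta/10$ for all $\delta \in [0,1)$, with the factor-of-$10$ slack comfortably absorbing both the $O(\delta^2)$ correction from squaring and any small gap between $\max_{w,\hat x} F'(\hat x)$ and the exact symmetric value.

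The principal obstacle is step (ii), rigorously analyzing the stationary structure of $H$. The function has awkward $\log/\log$ form once $w$ is eliminated, making direct calculus unpleasant. A cleaner route uses the substitution $r := 1+\hat x$, $s := 1+y$ to rewrite $F'(\hat x) = dw(r-1)(s-1)/(rs)$ under the constraint $(r-1)s^d = (s-1)r^w$; eliminating $w = \log(\lambda/(s-1))/\log r$ reduces the problem to a one-variable optimization whose stationarity condition, via a log-concavity or AM-GM inequality exploiting the symmetric form of the constraint, should reduce to $r = s$, recovering the symmetric fixpoint.
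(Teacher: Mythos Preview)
Your key structural claim---that $F'(\hat x_w)$ is maximized over $w$ at the symmetric fixpoint $w=d$, $\hat x = y = x^*$---is false whenever $\lambda < \lambda_c(\Delta)$. Carrying out the implicit differentiation you propose (parameterizing by $\hat x$, with $y=(\lambda/\hat x)^{1/d}-1$ and $w=\log(\lambda/y)/\log(1+\hat x)$), one finds at the symmetric point
\[
\left.\frac{d}{d\hat x}\log F'(\hat x_w)\right|_{\hat x=x^*} \;=\; \frac{1-(d-1)x^*}{d(x^*)^2(1+x^*)}\left(\frac{1+(d+1)x^*}{d\log(1+x^*)} - 1\right),
\]
and both factors are strictly positive for $x^* < 1/(d-1)$, i.e., for $\lambda<\lambda_c$. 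So the symmetric point is not even stationary; the maximizer lies at some $w>d$. (For $d=2$, $\lambda=2$: $(\alpha^*)^2\approx 0.673$ at $w=2$, while the actual maximum is $\approx 0.688$ at $w\approx 2.12$.) Your suggested AM--GM/log-concavity route to ``$r=s$'' therefore cannot work, and you have no independent handle on the gap $\max_w F'(\hat x_w)-(\alpha^*)^2$, so the argument does not close.

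This asymmetry is precisely why the paper's proof is so involved. The true maximizing $w_c$ is characterized only implicitly via the system $T_\delta=M_\delta=0$ (\Cref{thm:resolve-lambda-2-c}), where $M_\delta$ encodes the genuine first-order condition. That system is solvable in closed form only at $\delta=0$, where one does recover $w_c=d$ (\Cref{lem:exact-0-unique-eq}); for $\delta>0$ the implicit function theorem handles only a neighborhood of $\delta=0$ (\Cref{thm:solve-sys-eq}). To reach all $\delta\in(0,1)$ the paper pivots to a correlation-decay contraction with the potential $\psi(x)=x/((1+x)\log(1+x))$, proving $H_{\lambda,d,\lambda,w}(x)\le 1-\delta$ for \emph{all} $x$ under $\delta$-uniqueness (\Cref{thm:contract-sym}), and then exploits monotonicity of the symmetrized $U(\lambda,z)$ in $\lambda$ (\Cref{lem:dU-lambda-lb}, \Cref{lem:max-U-z-lb}) to quantitatively bound how far the exact $\delta$-unique fugacity sits below $\lambda_c$ (\Cref{lem:alpha-c-eq-lb}). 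Your $(\alpha^*)^2$ bound plays no role in that route.
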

The proof of this theorem is given in \Cref{sec:delta-unique} and \Cref{sec:SI}.
In \Cref{sec:delta-unique}, we will first handle the $\delta$-uniqueness regime for sufficiently small $\delta$ in \Cref{thm:solve-sys-eq}.
This gives a weaker form of the theorem that only holds for sufficiently small $\delta$.
Then, the proof will be completed in \Cref{sec:delta-unique-eq-7}, where we leverage the analysis of correlation decay to extend \Cref{thm:solve-sys-eq} for all $\delta \in (0, 1)$.
Along the way,  we also characterized the uniqueness condition for $\lambda\neq \alpha$ (see~\Cref{thm:delta-unique-lambda-d-alpha-w}), where $\lambda$ is the fugacity on one side of the bipartite graph, and $\alpha$ is the fugacity on the other side.
%Along the way,  we also characterized the uniqueness condition for infinite $(d,w)$-ary tree, where we also allow $\lambda\neq \alpha$ (see~\Cref{thm:delta-unique-lambda-d-alpha-w}), where $\lambda$ is the fugacity on one side of the bipartite graph, and $\alpha$ is the fugacity on the other side.

Our next result states that, by simulating the field dynamics, we can sample approximately from the bipartite hardcore distribution in nearly linear time, provided that the model satisfy $\delta$-uniqueness.
%, up to the uniqueness threshold. %, provided that the model satisfy $\delta$-uniqueness.
We measure distance between two distributions $\mu$ and $\nu$ over a finite space $\Omega$ in \emph{total variation distance (TV distance)}:
$  \DTV{\mu}{\nu} := \frac{1}{2} \sum_{X \in \Omega} \abs{\mu(X) - \nu(X)}$.
%Next, we are ready to state our main algorithmic result.
\begin{theorem} \label{thm:main-unique}
Fix a degree $\Delta = d+1 \ge 2$, $\delta\in (0,1)$, fugacity $\lambda > 0$, such that the pair $(\lambda, d)$ is $\delta$-unique, and $\eps>0$. 
%If $(\lambda, d)$ is $\delta$-unique,
	Then there is an algorithm that approximately samples, within TV distance $\epsilon$, from the hardcore distribution of all $n$-vertex  bipartite graphs of maximum degree $\Delta$ on one side, in time
  \begin{align*}
	 n \cdot \tp{\frac{\Delta \log n}{\lambda}}^{O(C/\delta)} \cdot \log^2(1/\epsilon),
  \end{align*}
  where %$n$ is the number of vertices on the one side of the bipartite graph with maximum degree bound $\Delta$, and 
  $C < \exp(15\e^2)$ is an absolute constant for $\Delta \ge 3$, and $C= (1 + \lambda)^{10}$ for $\Delta=2$.
\end{theorem}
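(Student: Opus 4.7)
The approach is two-stage: first, sample $X\subseteq L$ from the marginal $\mu_L$ by simulating the field dynamics $\-{FD}$ on the $L$-side with a suitable parameter $\theta=\theta(\delta,\lambda)\in(0,1)$; second, conditional on $X$, sample each $v\in R$ independently, since fixing $X$ makes the $R$-vertices conditionally independent, with explicit Bernoulli marginal $\lambda/(1+\lambda)$ at each $v\in R$ having no neighbor in $X$, and $0$ otherwise. The whole runtime follows once (a) $\-{FD}$ on the $L$-side mixes in polylogarithmically many steps and (b) each step can be implemented in time $(\Delta\log n/\lambda)^{O(1/\delta)}$.

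First, I would convert the $\delta$-uniqueness hypothesis of \Cref{def:lambda-d-unique} into a quantitative strong spatial mixing (SSM) statement on $\mu_L$: for any $u,v\in L$ and any pinning on $L\setminus\{u,v\}$, the influence of $v$ on $u$ decays as $(1-\Theta(\delta))^{\-{dist}(u,v)}$ along the Weitz self-avoiding-walk tree. The key point is that, after pinning, the Weitz recursion at $u$ reduces to iterating the one-sided map $F(x)=\lambda(1+\lambda(1+x)^{-w})^{-d}$ of \Cref{def:lambda-d-unique}, and $\delta$-uniqueness gives uniform contraction at every fixpoint; a standard potential-function argument transfers this pointwise contraction to contraction along arbitrary orbits, which is needed because we are working away from the critical threshold, not just at it.

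Second, I would translate SSM on $\mu_L$ into spectral and then \emph{entropic} independence of $\mu_L$. Spectral independence with constant $O(1/\delta)$ follows the Chen--Liu--Vigoda template: the pairwise-influence matrix on $L$ has operator norm bounded by a geometric sum of the SSM estimate. Since degree bounds are only on $L$, we cannot hope for spectral independence of the joint $\mu$ on $L\cup R$, but the one-sided version is enough for $\-{FD}$. To upgrade to entropic independence I would verify that the SSM estimate is preserved under the family of tilts obtained from $\mu_L$ by the external fields that appear along a single $\-{FD}$ transition. This is where the connection to the Chen--Eldan negative-fields stochastic localization enters: each $\-{FD}$ step is recast as a one-step negative-fields localization update, so that contraction of $F$ transports along the whole localization trajectory and yields a $\delta$-dependent entropic-independence constant, hence $O(\log n\log(1/\eps))$ mixing via the Anari--Jain--Koehler--Pham--Vuong modified log-Sobolev framework.

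Third, implementing one step of $\-{FD}$ reduces to sampling in a bipartite hardcore model with $L$-fugacity $\theta\lambda$, which is again $\delta$-unique (the condition is monotone in $\lambda$). A Weitz self-avoiding-walk computation truncated at depth $O(\log(n/\eps)/\delta)$ then produces each required marginal to accuracy $\eps/\-{poly}(n)$ in time $(\Delta\log(n/\eps)/\lambda)^{O(1/\delta)}$, and combining mixing with per-step cost yields the stated runtime. The principal obstacle is the upgrade from spectral to entropic independence in a regime where the total pairwise influence on $L$ is not bounded by any function of $\Delta$ alone --- the unbounded $R$-degree makes Dobrushin-type or direct Glauber arguments fail. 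Routing the analysis through $\-{FD}$ and the associated stochastic localization, rather than through Glauber on $L\cup R$, is what makes the entropy-decay constant depend only on $\delta$ and $\Delta$. A secondary obstacle arises at $\Delta=2$, where the distribution projected onto $R$ is a ferromagnetic Ising model without unconditional SSM; there the matching-fugacity structure $\lambda_L=\lambda_R=\lambda$ must be exploited explicitly, producing the model-specific $C=(1+\lambda)^{10}$ in the exponent.
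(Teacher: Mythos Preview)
Your high-level two-stage plan (field dynamics on $\mu_L$, then sample $R$ conditionally) and the spectral-to-entropic route via negative-fields localization are the same as the paper's. The gap is in how you implement a single field-dynamics transition.

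You propose to sample from the tilted measure by Weitz self-avoiding-walk computation truncated at depth $O(\log(n/\eps)/\delta)$, claiming cost $(\Delta\log(n/\eps)/\lambda)^{O(1/\delta)}$. This fails on two counts. First, the SAW tree rooted at an $L$-vertex has branching $\le\Delta$ at even levels but branching up to $\Delta_R$ at odd levels, and $\Delta_R$ is \emph{unbounded} in this model; the truncated tree at depth $2k$ can have $(\Delta\cdot\Delta_R)^k$ leaves, so the cost depends on $\Delta_R$, which may be $\Theta(n)$. Second, even if branching were bounded by $\Delta$, depth $O(\log n/\delta)$ gives tree size $\Delta^{O(\log n/\delta)}=n^{\Theta(\log\Delta/\delta)}$, which is polynomial in $n$, not polylogarithmic; the near-linear runtime would be lost. (Your own arithmetic slipped here: $B^D$ with $D=\Theta(\log n)$ is not $(\log n)^{O(1)}$.)

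The paper implements each field-dynamics step by \emph{Glauber dynamics on $L$}, not Weitz. Crucially, it runs the \emph{flipped} field dynamics $P^{\mathrm{FD}}_{1/\theta,\nu}$, so the subcritical instance has $L$-fugacity $\lambda/\theta\gg 1$ (not $\theta\lambda$ as you wrote). In this high-fugacity regime the paper proves (\Cref{lem:GD-good-0}, \Cref{sec:GD-good}) via a coupling-with-stationary argument that Glauber on $L$ mixes in $O(n\log n)$ steps regardless of $\Delta_R$: after a burn-in, with high probability every $R$-vertex of degree exceeding $\log n$ already has an occupied $L$-neighbor (the ``good event'' $\mathcal G(m)$), so disagreements can only propagate through low-degree $R$-vertices, and path coupling then contracts. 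Each Glauber update costs $O(\Delta)$ time. This subcritical mixing lemma is the missing algorithmic ingredient; the monotonicity of $\delta$-uniqueness in $\lambda$ that you invoke does not by itself supply a fast sampler for the tilted measure.
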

\Cref{thm:main-unique} is proved in~\Cref{sec:main-unique} by combining results presented in~\Cref{sec:uniqueness,sec:SI-and-uniqueness,sec:ent-decay-field-general}. %and \Cref{sec:SI-and-ent-decay-field-general}.
%For the case of $\Delta=2$, we also provide an alternative algorithm by first reducing the problem to a ferromagnetic Ising model in~\Cref{sec:ising-case}, which has a slightly better dependency on $\lambda$.

\begin{remark}
Our algorithm in \Cref{thm:main-unique} works in arguably the best possible regime. This is because when $\lambda > \lambda_c(\Delta)$, %with maximum degree greater than three, 
$\BHC{(\lambda)}$ becomes \#BIS-hard~\cite{cai2016bis}.
  We also note that when $\Delta \leq 1$, the bipartite graph $G$ is a forest, and the partition function of the hardcore model encoded by $G$ and $\lambda$ can be computed exactly in polynomial time.
  %We also note that for very small $\lambda$, once $(\lambda, d)$ is $\delta$-unique, one could use \Cref{thm:SI} and \cite[Theorem 1.1]{anari2020spectral} to show that the Glauber dynamics on $\nu_L$ mixes in time $n^{O_{\delta}(1)}$.
\end{remark}

Last but not least, through comparison arguments between the field dynamics on one side and the  single-site Glauber dynamics on both side, we also derive a nearly cubic mixing time bound for the standard Glauber dynamics.
Let $P$ be a Markov chain with stationary distribution $\mu$ with support $\Omega(\mu)$. 
The \emph{mixing time} is defined by
%\begin{align*}
$  \Tmix{\epsilon} := \min\left\{t \mid \max_{X \in \Omega(\mu)} \DTV{P^t(X, \cdot)}{\mu} \leq \epsilon \right\}$.
%\end{align*}
%where $\Omega(\mu)$ denotes the support of $\mu$.
\begin{theorem} \label{thm:GD-mu-mix-intro}
Fix a degree $\Delta =d+1\ge 3$, $\delta\in (0,1)$, fugacity $\lambda \in( 0, (1-\delta)\lambda_c(\Delta))$, and $\eps>0$. 
	Then the mixing time for the  standard Glauber dynamics $(X_t)_{t\in\mathbb{N}}$ 
	for the hardcore distribution of of all $n$-vertex bipartite graphs of maximum degree $\Delta$ on one side,
	is bounded as
	%approximately samples, within TV distance $\epsilon$, from the hardcore distribution of all $n$-vertex bipartite graphs of maximum degree $\Delta$ on one side, in time
  \begin{align*}
   \Tmix{\epsilon}\le \tp{\frac{\Delta \log n}{\lambda}}^{O(C/\delta)} \cdot n^2 \cdot \tp{n \log \frac{1 + \lambda}{\min\{1, \lambda\}} + \log \frac{1}{\epsilon}},
  \end{align*}
  where %$C$ is as given in \Cref{thm:main-unique}.
  $C <  \exp(15\e^2)$ is an absolute constant. %, and $C=  (1 + \lambda)^{10}$ for $\Delta=2$.

  In addition, when $\Delta=2$ and $(\lambda, d)$ is $\delta$-unique, the above mixing bound also holds with $C=  (1 + \lambda)^{10}$.
\end{theorem}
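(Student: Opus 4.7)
The plan is to lift the rapid mixing of the field dynamics on one side of the bipartition (established in the proof of \Cref{thm:main-unique}) to a mixing time bound for the standard single-site Glauber dynamics on both sides. The overall strategy is an approximate tensorization / comparison argument, adapted to the bipartite structure, which decomposes the Dirichlet form of the two-sided Glauber dynamics into (i) contributions from updates on $R$ conditioned on the $L$-configuration, and (ii) contributions from updates on $L$, which can then be compared to the field dynamics on $L$.

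The crucial structural observation is that for the bipartite hardcore Gibbs distribution $\mu$, the conditional law on $R$ given any $\sigma_L$ is a \emph{product measure}: each $R$-vertex unblocked by $\sigma_L$ is independently occupied with probability $\lambda/(1+\lambda)$, while every blocked vertex is unoccupied. Consequently, the Glauber dynamics restricted to $R$ (with any fixed boundary $\sigma_L$) satisfies the strongest possible form of approximate tensorization of variance, with constant $1$, requiring no spectral or entropic independence input on the $R$-side. This is precisely what allows the analysis to circumvent the obstruction that $R$-vertices may have unbounded degree, so that both the total influence and the spectral radius of the adjacency matrix can be unbounded.

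The technical heart of the argument is the comparison on the $L$-side. Via the framework of approximate tensorization of variance from \cite{chen2021optimal,caputo2015approximate,cesi2001quasi}, I would reduce the $L$-block Dirichlet form to the Dirichlet form of the field dynamics on $L$, for an appropriately chosen parameter $\theta$ depending on $\lambda$ and $\Delta$. The field dynamics on $L$ is already known to contract rapidly under $\delta$-uniqueness via spectral and entropic independence on the $L$-marginal of $\mu$, as established in the proof of \Cref{thm:main-unique}; transferring this variance decay down to the single-site $L$-Glauber introduces an overhead of $\tp{\Delta \log n / \lambda}^{O(C/\delta)}$. Combining the trivial tensorization on $R$ with this comparison on $L$ yields a spectral gap lower bound of the form $\gamma \geq n^{-O(1)} \cdot \tp{\Delta \log n / \lambda}^{-O(C/\delta)}$ for the full two-sided Glauber dynamics, and the claimed mixing bound then follows by the standard conversion from spectral gap to $\eps$-mixing, with the $n \log \tfrac{1+\lambda}{\min\{1,\lambda\}}$ factor absorbing $\log(1/\mu_{\min})$.

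The main obstacle I anticipate lies in the $L$-side comparison: a single step of the field dynamics resamples a random block according to a conditional Gibbs measure rather than a product measure, so translating its variance contraction into a statement about single-site $L$-moves requires carefully controlling the cost of simulating one field-dynamics step by many single-site moves, and this cost is sensitive to the choice of $\theta$ and to the worst-case ratio of single-site marginals. A secondary subtlety is that the bipartite approximate tensorization must be proved \emph{without} appealing to bounded total influence, so one cannot directly invoke the Glauber mixing results for the hardcore model on general graphs from \cite{anari2020spectral,chen2020rapid,chen2021optimal,chen2022optimal}; instead, a bipartite-specific decomposition that handles $R$ via its product structure and $L$ via the field dynamics comparison must be carried out explicitly, and the polynomial $n$-dependence in the final bound must be preserved through this two-step reduction.
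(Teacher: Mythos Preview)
Your proposal is correct and follows essentially the same route as the paper: field dynamics on the $L$-marginal $\nu=\mu_L$ plus a block-factorization/comparison argument exploiting the product structure on $R$. The paper's implementation differs from your phrasing in one useful way worth noting. Rather than splitting directly into ``$R$-updates given $\sigma_L$'' and ``$L$-updates'', the paper introduces an intermediate block dynamics $\PB$ on $\Omega(\mu)$ that picks $v\in L$ uniformly and resamples the whole block $\{v\}\cup R$ from $\mu(\cdot\mid\sigma_{L\setminus\{v\}})$. The $L$-projection of $\PB$ is \emph{exactly} $\GD[\nu]$, so $\sgap{\PB}\ge\sgap{\GD[\nu]}$ follows from a one-line coupling, and the field-dynamics comparison (your step (ii)) is then carried out entirely on $\nu$, via \Cref{lem:var-boost}, yielding the $(\Delta\log n/\lambda)^{O(C/\delta)}/n$ gap you anticipate. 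For the remaining comparison $\sgap{\GD[\mu]}\ge \zeta\,\sgap{\PB}/|L\cup R|$, note that the block $\{v\}\cup R$ conditioned on $\sigma_{L\setminus\{v\}}$ is \emph{not} quite a product measure (your observation ``$R$ is product given $\sigma_L$'' does not directly apply because $\sigma_v$ is unfixed): it factors into independent singletons on $R\setminus\Gamma_v$ together with a hardcore model on the star $\{v\}\cup\Gamma_v$, and the paper handles the star piece by a uniform spectral-gap bound for hardcore on a graph of size $\le\Delta+1$. This is the one concrete ingredient your outline does not name; everything else matches.
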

We note that $\Delta$ is only the degree bound on one side,
and the hardcore distribution does not have bounded total influence, nor does the graph have bounded spectral radius, due to unbounded degree on the other side.
\Cref{thm:GD-mu-mix-intro} proves the rapid mixing up to criticality without establishing a total influence bound or a spectral independence bound for the hardcore distribution on the bipartite graph.
	Instead, we study a ``one-sided'' distribution which has a much better total influence bound.
%Though we do not prove such spectral independence bound, we strongly believe that \Cref{thm:GD-mu-mix-intro} already hints bounded spectral independence.
We believe this will serve as an important example in understanding the relationship between the total influence and the spectral independence framework.
%\Cref{thm:GD-mu-mix-intro} proves the rapid mixing up to criticality for a spin model without bounded total influence or bounded graph spectrum, which, to the best of our knowledge, is the first result of its kind.
%

In the case of $\Delta=2$, we have $\lambda_c(\Delta)=\infty$. Indeed, we show that the hardcore distribution with $\Delta=2$ can be reduced to a ferromagnetic Ising model, which is always unique (see \Cref{sec:ising-case}).
\Cref{thm:GD-mu-mix-intro} will be proved in \Cref{sec:main-GD}.

In the following, we give a technical overview of our proofs.
\subsection{Tree-uniqueness in the bipartite hardcore model}
\label{sec:uniqueness}
%In this subsection, we give an overview on the characterization of the uniqueness condition of the bipartite hardcore model. %, and its connection to the classical notion of tree-uniqueness.
To characterize the uniqueness of the bipartite hardcore model, we consider a more general setting where different fugacities are allowed on two sides of the bipartite graph.
Given a bipartite graph $G = ((L, R), E)$ with maximum degree $\Delta$ on $L$ and maximum degree $W$ on $R$,
let $\lambda > 0$ be the fugacity on $L$ and $\alpha > 0$ be the fugacity on $R$.
Then, the hardcore distribution $\mu$ on $G$ is given by:
\begin{align*}
  \forall S \subseteq L \cup R, \quad \mu_G(S) \propto \lambda^{\abs{S\cap L}} \alpha^{\abs{S\cap R}} \cdot \*1[S \in \Ind{G}].
\end{align*}

As is typically the case, we study the ``tree-uniqueness'' by considering $\mu_G$ when $G$ is a tree $T$ rooted at $r$, and $T$ can have different branching numbers on the even level and on the odd level.
Suppose $r$ has $d$ children and for $i \in [d]$, the $i$-th child of $r$ has $w_i$ children.
For $i\in [d]$, we denote $u_i$ as the $i$-th child of $r$, and $T_i$ as the subtree rooted at $u_i$.
Then, for $j\in [w_i]$, we denote $u_{ij}$ as the $j$-th child of $u_i$.
Without loss of generality, we assume that $r \in L$, and thus it has fugacity $\lambda$.
Due to the independence of subtrees, we have an easy recurrence for calculating the marginal occupation ratio:
\begin{align*}
	R_r  := \frac{\Pr[S \sim \mu]{ r \in S} }{\Pr[S \sim \mu]{ r \not\in S}} 
	 = \lambda \prod_{i=1}^d \Pr[S_i \sim \mu_{T_i}]{ u_i \not\in S}  
  = \lambda \prod_{i=1}^d \tp{1 + R_{u_i}}^{-1} 
  = \lambda \prod_{i=1}^d \tp{1 + \alpha \prod_{j=1}^{w_i} \tp{1 + R_{u_{ij}}}^{-1}}^{-1},
\end{align*}
where for any vertex $v$, we use $R_v$ to denote the marginal ratio of $v$ in the subtree rooted at $v$.
\newcommand{\dwTree}{\^T_{\Delta, W}}
We consider the uniqueness of the hardcore Gibbs measure on the infinite $(\Delta, W)$-regular tree $\dwTree$, in which the recurrence is simplified to
\begin{align*}
  F(x) := \lambda (1 + \alpha (1 + x)^{-w})^{-d},
\end{align*}
where $d := \Delta - 1$ and $w := W - 1$ are the branching numbers.
The critical condition that governs the uniqueness of hardcore Gibbs measure on the infinite bi-regular tree is that all fixpoints should be attractive fixpoints.

We start by considering the tree-uniqueness threshold for $(d,w)$ infinite bi-regular tree in the case of $0$-uniqueness, which is the classical non-gapped notion of uniqueness.
\begin{definition} \label{def:lambda-d-alpha-w-unique0}
  Let $d,w \in \^R_{>0}$.
We say a tuple $(\lambda, d, \alpha, w) \in \^R^4_{>0}$ is $0$-\emph{unique},  if for all $\hat{x} \geq 0$ satisfying $F(\hat{x}) = \hat{x}$, it holds that $F'(\hat{x}) \leq 1 $.
We say a  tuple $(\lambda, d, \alpha) \in \^R^3_{>0}$ is $0$-\emph{unique}, if $(\lambda,d,\alpha,w)$ is 0-unique for all $w\in\mathbb{R}_{>0}$.
We will also refer to being $0$-unique as simply being \emph{unique}.
\end{definition}
Indeed, we implicitly obtained uniqueness criteria for the tuple $(\lambda, d, \alpha, w)$ in~\Cref{sec:delta-unique}, but they are only defined as implicit functions and consist of disconnected intervals in general. 
In this work, our main focus is when the degree $w$ can be chosen arbitrarily, including fractional degrees.
Then, the uniqueness regime become connected
%\footnote{The connectedness also helps in designing algorithms, as there are known technical challenges in current algorithmic toolkits (e.g., establishing spectral independence or correlation decay) when applied to a uniqueness regime that is not connected.} 
and can be explicitly stated.
%Due to technical limitations of current algorithmic toolkits (e.g., establishing spectral independence or correlation decay), 
%we only state a uniqueness regime that is connected in order to design algorithms.
%Specifically,  we say that  $(\lambda,d,\alpha)$ is unique if $(\lambda,d,\alpha,w)$ is unique for all $w\in\mathbb{R}_{>0}$.
Specifically, the $0$-uniqueness condition for the tuple $(\lambda,d,\alpha)$ can be stated in terms of the function  ${\boldsymbol{\hat\lambda}}(d,w)$:
\begin{align} \label{eq:A0}
  {\boldsymbol{\hat\lambda}}(d, w) := \frac{ d^w (w + 1)^{w+1}}{(d w - 1)^{w+1}}.
\end{align}

\begin{theorem} \label{thm:delta-unique-lambda-d-alpha-w}
	Fix any $d,w \in \^R_{>0}$ such that $d \geq 1$ and $d w > 1$. %$\Delta=d+1 \geq 3$.
	Let $\lambda_c(d,w):=  {\boldsymbol{\hat\lambda}}(w,d)$, and $\alpha_c(d,w):=  {\boldsymbol{\hat\lambda}}(d,w)$.
	Then for any $\lambda \ge \lambda_c(d,w), \alpha \le \alpha_c(d,w)$, the tuple $(\lambda,d,\alpha)$ is unique.

	Furthermore, for any fixed $d\ge 1$, such a threshold-pair $(\lambda_c(d,w), \alpha_c(d,w))$ is the best possible: 
	if we fix $\alpha = \alpha_c(d,w)$, then $\lambda_c(d,w)$ is the smallest possible for the $\lambda$ such that the tuple $(\lambda,d,\alpha)$ is unique;
	and likewise, if we fix $\lambda = \lambda_c(d,w)$, then $\alpha_c(d,w)$ is the largest possible $\alpha$ for $(\lambda,d,\alpha)$ being unique.
\end{theorem}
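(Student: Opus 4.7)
The plan is to reduce 0-uniqueness of the tuple $(\lambda, d, \alpha)$ to controlling a single scalar quantity at the fixed points of $F$. Differentiating $F(x) = \lambda(1+\alpha(1+x)^{-w})^{-d}$, at any fixed point $\hat x = F(\hat x)$, setting $\hat y := \alpha(1+\hat x)^{-w}$ (so that $\hat x = \lambda(1+\hat y)^{-d}$), one obtains the identity
\[
F'(\hat x) \;=\; \frac{d\,w\,\hat x\,\hat y}{(1+\hat x)(1+\hat y)}.
\]
Hence $(\lambda,d,\alpha)$ is 0-unique if and only if $d w\,\hat x\,\hat y \leq (1+\hat x)(1+\hat y)$ at every fixed point for every $w>0$, so it suffices to characterize the boundary of the uniqueness region in the $(\lambda,\alpha)$-plane (with $d$ fixed) by analyzing where equality is attained.

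Next I parameterize the critical set $\{F'=1\}$. The condition $F'=1$ together with the fixed-point equations yields $\hat y = (1+\hat x)/((dw-1)\hat x - 1)$ (valid for $(dw-1)\hat x > 1$), and then $\lambda = (dw)^d\, \hat x^{d+1}/((dw-1)\hat x - 1)^d$ and $\alpha = (1+\hat x)^{w+1}/((dw-1)\hat x - 1)$, giving a $2$-surface in $(\lambda,\alpha,w)$ parameterized by $(\hat x, w)$. Its envelope in the $(\lambda,\alpha)$-plane is where the Jacobian $\partial(\lambda,\alpha)/\partial(\hat x, w)$ degenerates, and a direct calculation shows that \emph{both} $\partial \lambda/\partial \hat x$ and $\partial \alpha/\partial \hat x$ vanish exactly at $\hat x = (d+1)/(dw-1)$, at which $\hat y = (w+1)/(dw-1)$ and $(\lambda, \alpha) = (\hat\lambda(w,d), \hat\lambda(d,w)) = (\lambda_c(d,w), \alpha_c(d,w))$, matching the envelope curve in the statement.

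For the sufficient condition, I show the closed quadrant $Q_w := \{(\lambda,\alpha) : \lambda \geq \lambda_c(d,w),\ \alpha \leq \alpha_c(d,w)\}$ lies entirely in the uniqueness region. Direct differentiation gives $\partial_w \log \lambda_c = -d(1+w)/[w(dw-1)] < 0$, and $\partial_w \log \alpha_c = \log z + 1 - z$ where $z := d(w+1)/(dw-1) > 1$, hence $\log z + 1 - z < 0$, so both $\lambda_c$ and $\alpha_c$ are strictly decreasing in $w$. Consequently, for $w'\neq w$ the envelope point $(\lambda_c(d,w'),\alpha_c(d,w'))$ lies strictly outside $Q_w^\circ$ (if $w'>w$ then $\lambda_c(d,w')<\lambda_c(d,w)$, and if $w'<w$ then $\alpha_c(d,w')>\alpha_c(d,w)$). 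Therefore $Q_w^\circ$ is connected, disjoint from the entire uniqueness boundary, and contains trivially-unique points $(\lambda,\varepsilon)$ with $\varepsilon\downarrow 0$ (for which $F(x)\to\lambda$ and $F'\to 0$); it must therefore lie entirely in the uniqueness region, and the marginal cusp follows from taking closure. Tightness (the ``furthermore'' part) follows by unfolding the cusp: perturbing $\lambda$ below $\lambda_c(d,w)$ at fixed $\alpha = \alpha_c(d,w)$ locally crosses onto the non-uniqueness side of the critical surface, yielding nearby $(\hat x', w')$ with $F'_{w'}(\hat x')>1$.

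The main obstacle I expect is making the tightness argument rigorous: the cusp is a higher-order degeneracy (both $\partial_{\hat x}\lambda$ and $\partial_{\hat x}\alpha$ vanish there), so a direct linearization cannot decide which side of the envelope the non-uniqueness region lies on. The cleanest route I see is to study the scalar function $H(\hat y) := F'(\hat x(\hat y))$ directly in the reparameterization where each fixed point for any $w'>0$ corresponds to some $\hat y\in(0,\alpha)$ with $\hat x = \lambda/(1+\hat y)^d$ and $w' = \log(\alpha/\hat y)/\log(1+\hat x)$: since $H$ vanishes at both endpoints of $(0,\alpha)$, interior maxima satisfy $H'=0$, and combining $H=1$ with $H'=0$ reproduces $\hat y = (w'+1)/(dw'-1)$ at the cusp; a monotonicity check of $\max_{\hat y} H$ in $\lambda$ and in $\alpha$ then confirms that crossing the envelope at $\alpha = \alpha_c(d,w)$ by decreasing $\lambda$ pushes $\max H$ strictly above $1$, producing non-uniqueness.
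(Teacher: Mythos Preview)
Your computations are correct and the geometric approach via the envelope is genuinely different from the paper's, which works instead through the auxiliary function $T_0(x) = (x+1)(\alpha+(1+x)^w) - \alpha d w x$: the paper first proves (via convexity of $T_0$ and a case analysis) that $(\lambda,d,\alpha)$ is $0$-unique iff $\lambda \ge \lambda^0_{2,c}(\alpha)$ for an implicitly defined threshold, then identifies $\lambda^0_{2,c}$ by solving the system $T_0 = 0$, $\alpha d = (1+x)^{w+1}$, which yields exactly your $(\lambda_c(d,w'),\alpha_c(d,w'))$, and finally uses the same monotonicity of $\hat{\boldsymbol\lambda}(d,w)$ in $w$ that you compute to cover all $\alpha \le \alpha_c(d,w)$.

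There is, however, a real gap in your sufficiency argument. You conclude that ``$Q_w^\circ$ is disjoint from the entire uniqueness boundary'' only from the fact that the envelope points $(\lambda_c(d,w'),\alpha_c(d,w'))$ lie outside $Q_w^\circ$; but you have not shown that the uniqueness boundary $\partial U$ coincides with the envelope curve. What you have at the corner $(\lambda_c(d,w),\alpha_c(d,w))$ is $H(\hat y_0)=1$ and $H'(\hat y_0)=0$; what you need is $\max_{\hat y\in(0,\alpha)} H = 1$, i.e.\ that $\hat y_0$ is the \emph{global} maximizer of $H$. If instead $\max_{\hat y} H > 1$ there, the corner sits strictly inside the non-uniqueness region and the connectedness argument collapses. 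This is exactly the substantive content the paper's threshold characterization supplies, and it is not bypassed by your envelope computation: knowing that $\partial_{\hat x}\lambda$ and $\partial_{\hat x}\alpha$ vanish at the same $\hat x$ tells you the corner is a singular point of the projection, not that the associated fixed point realizes the worst $w'$. Your $H(\hat y)$ reparameterization together with the monotonicity of $\max_{\hat y} H$ in $(\lambda,\alpha)$ would indeed give both sufficiency and tightness cleanly \emph{once} you establish $\max_{\hat y} H = 1$ at one envelope point (for instance by proving $H$ is unimodal in $\hat y$), but that step is the crux and is not carried out in your proposal.
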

The proof of this theorem will be deferred to \Cref{sec:delta-unique-eq} (specifically, in \Cref{sec:0-unique-eq}). 
The following corollary immediately follows from the theorem.
\begin{corollary}
	Fix any $(\lambda,d,\alpha) \in \^R_{>0}^3$. 
	If $d \geq 1$ and there exists $w$ such that $d w > 1$, $\lambda \ge \lambda_c(d,w)$, and $\alpha \le \alpha_c(d,w)$, then the tuple $(\lambda,d,\alpha)$ is unique.
	%\label{<+label+>}
%
	In particular, for $\lambda=\alpha = \lambda_c(d+1)= \frac{d^d}{(d-1)^{(d+1)}}$, one can choose $w=d$ so that $\lambda_c(d,d) = \alpha_c(d,d) = \lambda_c(d+1)$.
	Hence, the tuple $(\lambda,d,\alpha)$ is always unique.
\end{corollary}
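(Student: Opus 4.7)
The plan is to derive both assertions directly from Theorem~\ref{thm:delta-unique-lambda-d-alpha-w}. The first statement is essentially a rewording: given $(\lambda,d,\alpha)\in\mathbb{R}_{>0}^3$ with $d\ge 1$, the hypothesis asserts the existence of some $w$ with $dw>1$, $\lambda\ge\lambda_c(d,w)$, and $\alpha\le\alpha_c(d,w)$. These are exactly the hypotheses of Theorem~\ref{thm:delta-unique-lambda-d-alpha-w} for this choice of $w$, so uniqueness of $(\lambda,d,\alpha)$ (equivalently, of $(\lambda,d,\alpha,w)$ for every $w$, by Definition~\ref{def:lambda-d-alpha-w-unique0}) follows immediately.

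For the ``in particular'' clause, I would simply verify the numerical identities $\lambda_c(d,d)=\alpha_c(d,d)=\lambda_c(d+1)$ by substituting $w=d$ into the definitions. Recall from~\eqref{eq:A0} that ${\boldsymbol{\hat\lambda}}(d,w)=\dfrac{d^{w}(w+1)^{w+1}}{(dw-1)^{w+1}}$, so
\[
\alpha_c(d,d)={\boldsymbol{\hat\lambda}}(d,d)=\frac{d^{d}(d+1)^{d+1}}{(d^{2}-1)^{d+1}}=\frac{d^{d}(d+1)^{d+1}}{(d-1)^{d+1}(d+1)^{d+1}}=\frac{d^{d}}{(d-1)^{d+1}}=\lambda_c(d+1),
\]
and by symmetry $\lambda_c(d,d)={\boldsymbol{\hat\lambda}}(d,d)$ as well, so the same value is obtained. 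One also checks that the side-condition $dw>1$ is satisfied because $d\ge 2$ (implicit in $\lambda_c(d+1)$ being well-defined), hence $d^{2}\ge 4>1$. With $w=d$ thus exhibited, the first part of the corollary applies and gives uniqueness for $\lambda=\alpha=\lambda_c(d+1)$.

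There is no real obstacle here: the only non-trivial content is the algebraic simplification $(d^{2}-1)^{d+1}=(d-1)^{d+1}(d+1)^{d+1}$, which collapses the two expressions to the classical tree-uniqueness threshold $\lambda_c(\Delta)=\lambda_c(d+1)$. All the analytic work, in particular the existence, explicit form, and optimality of the threshold-pair $(\lambda_c(d,w),\alpha_c(d,w))$, has already been done in Theorem~\ref{thm:delta-unique-lambda-d-alpha-w}; the corollary is merely a convenient specialization that recovers the standard one-parameter threshold as the ``diagonal'' point of the two-parameter uniqueness region.
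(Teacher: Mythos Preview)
Your proposal is correct and follows essentially the same route as the paper: the first part is a direct restatement of Theorem~\ref{thm:delta-unique-lambda-d-alpha-w}, and the ``in particular'' clause is exactly the algebraic simplification ${\boldsymbol{\hat\lambda}}(d,d)=\frac{d^d(d+1)^{d+1}}{(d^2-1)^{d+1}}=\frac{d^d}{(d-1)^{d+1}}=\lambda_c(d+1)$ that the paper also carries out (see Corollary~\ref{lem:exact-0-unique-eq}).
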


\begin{remark}
\label{rem:low-temp}
We highlight a surprising comparison between our characterization of uniqueness ({high-temperature}) regime and the state of the art results for ``{low-temperature}'' regime. 
In the following, we compare our uniqueness threshold for infinite $(d,w)$-ary tree, and the state of the art low-temperature algorithmic results on a bi-regular bipartite graph~\cite{cannon2020counting,blanca2022fast,friedrich2023polymer}. 

	Fix any pair of degrees $d,w$ such that $w \ge d$, and let $\alpha= \alpha_c(d,w)$ on a $(d+1,w+1)$ bi-regular bipartite graph. 
	If both $d,w$ are sufficiently large, one can show that $\alpha_c(d,w) = \Theta (1/d)$ and $\lambda_c(d,w) = \Theta(1/w)$. 
	For fixed $\alpha=\alpha_c(d,w)$, the condition required by current low-temperature algorithms~\cite{cannon2020counting,blanca2022fast,friedrich2023polymer}, denoted by $\lambda_{\mathrm{low}}$, is asymptotically $\lambda \ge \lambda_{\mathrm{low}} = \Theta(w)$.
	On the other hand, we have tree-uniqueness as soon as $\lambda\ge \lambda_c(d,w) = \Theta(1/w)$.
	This suggests that there could be a significant portion of regime that was considered ``low-temperature'', actually lies well within the ``high-temperature'' regime.
	Interestingly, Cannon and Perkins~\cite{cannon2020counting} also showed that within their regime, pairwise correlations decay exponentially fast.

Below, we also include a numerical plot for the two thresholds as a function of $w$ in the case of $d=2$ and $d=3$ in~\Cref{fig:low-temp}. 
	As $\lambda_{\mathrm{low}}:= 3(d+1)(w+1)\alpha_c(d,w) -1$ is much bigger than our $\lambda_c(d,w)$ even for small $d$, we have chosen to plot in log-scale for the two regimes $[\lambda_{\mathrm{low}},\infty)\subset [\lambda_c(d,w),\infty)$.
\begin{figure}[htbp]
\centering
\scalebox{0.7}{  
  \begin{tikzpicture}
    \begin{axis}[xmin = 0.9, xmax=8.1, ymin=-1.1, ymax=8.1, width=7.2cm, height=8.2cm]
      \addplot [name path=A, blue, mark=none, thick] table {logA_d=2.txt};
      \addplot [name path=L, orange, mark=none, thick] table {logLow_d=2.txt};
      \path [name path = T] (axis cs:1,8) -- (axis cs:8,8);
      \addplot[orange!50, opacity=0.8] fill between [of=L and T];
      \addplot[blue!50, opacity=0.5] fill between [of=A and T];
    \end{axis}
    \node at (2.6, -0.8) {$d = 2$};
  \end{tikzpicture}
}
\scalebox{0.7}{  
  \begin{tikzpicture}
    \begin{axis}[xmin = 0.9, xmax=8.1, ymin=-1.1, ymax=8.1, width=7.2cm, height=8.2cm, legend style={draw=none}, legend pos=outer north east]
      \addplot [name path=A, blue, mark=none, thick] table {logA_d=3.txt};
      \addplot [name path=L, orange, mark=none, thick] table {logLow_d=3.txt};
      \path [name path = T] (axis cs:1,8) -- (axis cs:8,8);
      \addplot[orange!50, opacity=0.8] fill between [of=L and T];
      \addplot[blue!50, opacity=0.5] fill between [of=A and T];
      \legend{$\log\tp{\lambda_c}$, $\log\tp{\lambda_{\mathrm{low}}}$}
    \end{axis}
    \node at (2.6, -0.8) {$d = 3$};
  \end{tikzpicture}
}
\caption{\small The $\lambda$ regimes (in log scale) identified  by the thresholds $\lambda_c$ and $\lambda_{\mathrm{low}}$  for $d \in\{2,3\}$}
%Fix $d = 2$ and $d = 3$, plot $\log(\lambda_c)$ and $\log(\lambda_{\mathrm{low}})$ as the function of $w$}
\label{fig:low-temp}
\end{figure}
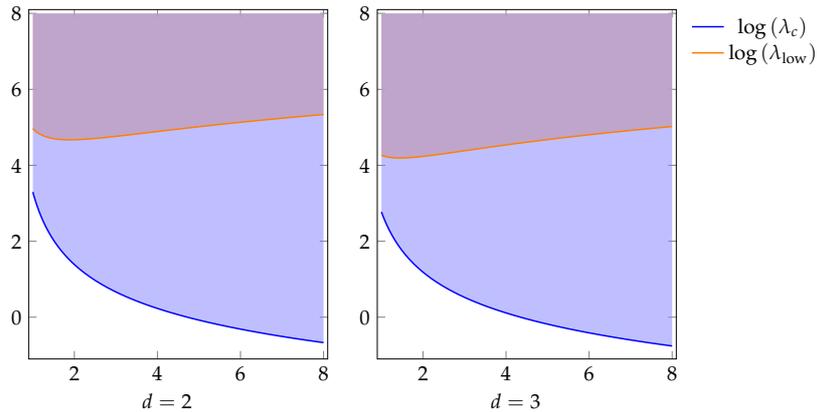

We note however, that we do not claim to have recovered the entire ``low-temperature'' regime. %obtained by Cannon and Perkins. 
	First of all, due to technical difficulties, we have only been able to establish $0$-uniqueness for the threshold pair $(\lambda_c(d,w), \alpha_c(d,w))$.
	To get an algorithm, we will need to show a stronger gapped version, $\delta$-uniqueness, also holds whenever $(\lambda, \alpha)$ is strictly inside the uniqueness regime.
	%We conjecture that $
	Secondly, we have fixed $\alpha=\alpha_c(d,w)$ in the above comparison.
	To some extent, in order for existing ``low-temperature'' expansions to really shine, it seems to require that \BHC{} being more balanced, rather than more unbalanced, contrary to what previous works on ``low-temperature'' algorithm might suggest.
While it is extremely unlikely that one can recover \emph{every} application of the ``low-temperature'' paradigm (especially for the case of random regular graphs), it would also be interesting to see how much of the ``low-temperature'' regime can be matched with a better analysis of ``high-temperature'' algorithms.
	%Despite the fact that our approach is inherently based on ``high-temperature'' algorithm, we are able to recover low-temperature results
	%While we could inherently allow $\lambda=\alpha$ in our analysis, we did not find a concise characterization of the threshold unless $w$ can be chosen arbitrarily.

% There are also low-temperature Markov chain, but they are also based on cluster expansion
\end{remark}

\subsection{$\delta$-uniqueness and spectral independence}
\label{sec:SI-and-uniqueness}

As discuss earlier, for algorithms to work, we need a gapped version that we call $\delta$-uniqueness.  In this subsection, we formally define $\delta$-uniqueness, then we outline our plans for showing that $\delta$-uniqueness implies $O(\frac{1}{\delta})$-spectral independence. 

\begin{definition} \label{def:lambda-d-alpha-w-unique}
  Let $\delta \in [0, 1)$ be a real number.
  We say a tuple $(\lambda, d, \alpha, w) \in \^R^4_{>0}$ is $\delta$-unique, if for all $\hat{x} \geq 0$ satisfying $F(\hat{x}) = \hat{x}$, it holds that $F'(\hat{x}) \leq 1 - \delta$.
\end{definition}

Our main focus is when the degree $w$ can be chosen arbitrarily, including fractional degrees.
This ensures the connectedness of the uniqueness regime. %that is needed due to limitations in the analysis of algorithms.

\begin{definition} \label{def:lambda-d-alpha-unique}
  Let $\delta \in [0, 1)$ be a real number.
  We say a tuple $(\lambda, d, \alpha) \in \^R^3_{>0}$ is $\delta$-unique, if the tuple $(\lambda, d, \alpha, w)$ is $\delta$-unique for all $w \in \^R_{>0}$.
\end{definition}
We remark that when $\lambda = \alpha$, \Cref{def:lambda-d-alpha-unique} is equivalent to \Cref{def:lambda-d-unique}.
These definitions of $\delta$-uniqueness can be rephrased in terms of $\alpha$ and $\lambda$.

\begin{theorem} \label{thm:delta-unique}
  Given $\delta \in [0, 1)$, $\alpha > 0$ and $d \geq 1 - \delta$,
  \begin{enumerate}
  \item if $\alpha \leq \frac{1-\delta}{d} \e^{1 + \frac{1-\delta}{d}}$, then $(\lambda, d, \alpha)$ is $\delta$-unique for all $\lambda > 0$; 
  \item if $\alpha > \frac{1-\delta}{d} \e^{1 + \frac{1-\delta}{d}}$, there is $\lambda_c > 0$ such that $(\lambda, d, \alpha)$ is $\delta$-unique iff $\lambda \geq \lambda_c$, where $\lambda_c = x_c (1 + \alpha (1 + x_c)^{-w_c})^d$, 
	  and $(x_c, w_c)$ is the unique positive solution of
    \begin{align*}
      \begin{cases}
        %T_\delta := 
        (1 - \delta)(x + 1)(\alpha + (1 + x)^w) - \alpha d w x = 0, \\
        %M_\delta := 
        w \log(1 + x) (\alpha d - (1 + x)^{w+1}) + \delta(x + 1)(\alpha + (x + 1)^w) = 0.
      \end{cases}
    \end{align*}
  \end{enumerate}
\end{theorem}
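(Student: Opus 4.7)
The plan is to translate the $\delta$-uniqueness condition into a geometric statement about one two-variable function and then run a critical-point analysis. First, I would compute $F'(\hat x)$ at a fixpoint $\hat x$ of $F(x)=\lambda(1+\alpha(1+x)^{-w})^{-d}$ and eliminate $\lambda$ via the fixpoint identity $\lambda=\hat x(1+\alpha(1+\hat x)^{-w})^d$, obtaining
\[
F'(\hat x)=\frac{d\alpha w\,\hat x}{(1+\hat x)\bigl[\alpha+(1+\hat x)^w\bigr]}=: H(\hat x,w).
\]
Because the map $(x,w)\mapsto \lambda(x,w):=x\bigl(1+\alpha(1+x)^{-w}\bigr)^d$ realizes every positive $\lambda$ as a fixpoint for each fixed $w$, the $\delta$-uniqueness of $(\lambda,d,\alpha)$ (over all $w>0$) is equivalent to the requirement that for every $w>0$ the preimage $\lambda(\cdot,w)^{-1}(\lambda)$ avoids the region $\{H>1-\delta\}$ in $(x,w)$-space.

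For Part~(1), since $\lambda$ is unconstrained the condition becomes $\sup_{x,w>0} H(x,w)\le 1-\delta$. Setting $\partial_x H=0$ gives $\alpha=(wx-1)(1+x)^w$ while $\partial_w H=0$ gives $\alpha=(w\log(1+x)-1)(1+x)^w$, and these are incompatible since $\log(1+x)<x$ on $\mathbb{R}_{>0}$; so $H$ has no interior maximum. Under the change of variables $y:=wx$, letting $w\to\infty,\,x\to 0$ with $y$ fixed, $(1+x)^w\to \e^y$ and $H\to d\alpha y/(\e^y+\alpha)$; optimizing over $y$ yields the critical relation $\alpha=(y^\ast-1)\e^{y^\ast}$ with $\sup H=d(y^\ast-1)$. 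Requiring this to be $\le 1-\delta$ gives $y^\ast=1+(1-\delta)/d$ and exactly the stated threshold $\alpha^\ast=\tfrac{1-\delta}{d}\e^{1+(1-\delta)/d}$ (the hypothesis $d\ge 1-\delta$ pins $y^\ast\in(1,2]$). I would confirm that this corner limit truly is the supremum by checking $\partial_w\log H>0$ at each interior $x$-critical point, which after substituting $\alpha=(wx^\ast-1)(1+x^\ast)^w$ reduces to $\log(1+x^\ast)<x^\ast$.

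For Part~(2), when $\alpha>\alpha^\ast$ the set $\{H>1-\delta\}$ is non-empty. Because $\lambda(x,w)\to 0$ in the same scaling regime that saturates $\sup H$, the image $\lambda(\{H>1-\delta\})$ is an interval extending down to zero, so the $\delta$-unique $\lambda$ form an interval $[\lambda_c,\infty)$ with
\[
\lambda_c=\max\bigl\{\lambda(x,w):H(x,w)=1-\delta\bigr\}.
\]
Lagrange multipliers at this constrained maximum give $\partial_x\lambda\cdot\partial_w H=\partial_w\lambda\cdot\partial_x H$. Computing the log-partials explicitly, multiplying through by $xw(1+x)(\alpha+(1+x)^w)$, and substituting $xwd\alpha=(1-\delta)(1+x)(\alpha+(1+x)^w)$ from the constraint $H=1-\delta$, the cross-ratio collapses cleanly to exactly equation~(II) of the theorem, $w\log(1+x)\bigl(\alpha d-(1+x)^{w+1}\bigr)+\delta(x+1)\bigl(\alpha+(1+x)^w\bigr)=0$.

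The hard part will be proving that the coupled system consisting of $H=1-\delta$ and this Lagrangian equation has a \emph{unique} positive solution $(x_c,w_c)$ and that this solution actually realizes the constrained maximum (rather than a saddle or an endpoint limit). I would tackle this by parametrizing the level curve $\{H=1-\delta\}$ by $x$, verifying that it is a single smooth arc in the open first quadrant on which $\lambda(x,w(x))$ vanishes at both ends, and then establishing strict unimodality of $\lambda$ along this arc. The ``only if'' direction of the theorem then follows from continuity: for any $\lambda<\lambda_c$ there is $(x,w)$ in the open bad region with $\lambda(x,w)=\lambda$, and this $(x,w)$ witnesses a fixpoint of $F$ with slope strictly exceeding $1-\delta$.
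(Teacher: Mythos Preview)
Your plan is correct and leads to the same pair of critical equations $T_\delta=0,\,M_\delta=0$ as the paper, but the packaging differs enough to be worth a comparison. Both arguments rewrite $\delta$-uniqueness as a constrained optimization in $(x,w)$-space and locate the threshold $\lambda_c$ as the maximum of $\lambda(x,w)$ on the curve $\{H=1-\delta\}=\{T_\delta=0\}$. The paper parametrizes this curve by $w$ (two branches $x_1^\delta(w)\le x_2^\delta(w)$ meeting at $w=w_\delta$), differentiates $\lambda_2^\delta(w):=\lambda(x_2^\delta(w),w)$ implicitly, and shows that the sign of $(\lambda_2^\delta)'$ is governed by the auxiliary function $M_\delta$; your Lagrange-multiplier step is exactly this first-order condition in different clothing. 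Your treatment of Part~(1) via the corner limit $w\to\infty,\,x\to 0,\,wx=y$ is more direct than the paper's route through the root structure of $T_\delta$ (its Lemma on the ``lifecycle'' of roots). More notably, by working globally with the image $\lambda(\{H>1-\delta\})$ rather than fixing $w$ and studying $\lambda(\cdot,w)^{-1}(\lambda)$, you avoid the paper's case split on whether $\alpha$ exceeds $\tfrac{1}{d}e^{1+1/d}$ (the threshold past which $\lambda(\cdot,w)$ is non-monotone and multiple fixpoints appear); connectedness of the bad region plus $\partial_w\lambda<0$ are enough to see the image is an interval and the maximum lies on the boundary.

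One small caution: your proposed parametrization of the level curve by $x$ likely fails globally, since for each $x$ in the relevant range there are typically two $w$-values on $\{T_\delta=0\}$ (the curve is a hairpin with both ends at $(0,\infty)$ and a turning point at $w=w_\delta$). The paper's parametrization by $w$ into the two branches $x_1^\delta,x_2^\delta$ is the natural fix; the ``strict unimodality'' step then becomes showing $\lambda_2^\delta(w)$ is unimodal in $w$, which is precisely where the paper spends its effort (via monotonicity lemmas for $x_2^\delta$, $x_M^\delta$, and the single sign change of $M_\delta$ along the upper branch).
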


%Given the fugacity $\lambda > 0$, consider the hardcore model on the $(\Delta, W)$-regular tree $\dwTree$ of depth $2\ell$, where all the nodes with even depth have degree $\Delta$ and all the nodes with odd depth have degree $W$.
%Fix the configuration $\sigma_\ell$ on the leaves, looking at the random configurations on other nodes other than the leaves.
%If for all sequences $(\sigma_\ell)_{\ell \in \^N}$ of fixed leaf configurations, taking $\ell \to \infty$ makes the marginal at the root converge to a unique binary distribution, then we say uniqueness holds.
%Otherwise, we say non-uniqueness holds.

As a special case, when $\alpha = \lambda$, \Cref{thm:delta-unique} implies a weaker form of \Cref{thm:delta-unique-eq}, which we prove in~\Cref{thm:solve-sys-eq}. %which says that the pair $(\lambda, d)$ is $\frac{\delta}{4}$-unique if $\lambda \leq (1 - \delta) \lambda_c(\Delta)$ for sufficiently small $\delta$.

%\begin{theorem} %\label{cor:delta-unique-eq}
%  Let $\delta \in (0, 1)$ be a sufficiently small real number, $\lambda > 0$ and $\Delta=d+1 \geq 3$.
%  Then the pair $(\lambda, d)$ is $\frac{\delta}{4}$-unique if $\lambda \leq (1 - \delta) \lambda_c(\Delta)$.
%\end{theorem}

The proofs of \Cref{thm:delta-unique} is given in \Cref{sec:delta-unique}. At a high level, we need to locate the fixpoints of the iterative recurrence $F(x)$, and study their critical behavior.
While the recurrence $F(x)$ has been studied in the special case of $\lambda=\alpha=1$ by Liu and Lu~\cite{liu2015fptas}, 
the main difference in our work is that, in order to get a tight characterization, we have to actually locate the fixpoints and prove theorems about them.
This is the first time that the fixpoints and the critical behavior of the recurrence $F(x)$ has been identified, including for the case of $\lambda=\alpha=1$.
It is for this reason that it may seem surprising that we are able to extend the uniqueness condition on a general graph, to a bipartite graph with degree bounds only on one side.

\begin{remark}
  %Given a bipartite graph $G = ((L, R), E)$, with maximum degree $\Delta$ on $L$.
We note that $\lambda < \lambda_c(\Delta)$ is a tight characterization of uniqueness for bipartite graphs with maximum degree $\Delta$ on one side. For any $\lambda > \lambda_c(\Delta)$, the infinite $\Delta$-regular tree is also a bipartite graph with maximum degree $\Delta$ on one side, which is known to be non-unique~\cite{kelly1985stochastic}.
\end{remark}

%\Cref{thm:delta-unique-eq} shows that, in terms of $\lambda$, the hardcore model on a bipartite graph with degree bound $\Delta$ on one side has exactly the same uniqueness regime as the hardcore model on general graph with degree bound $\Delta$.
%Prior to our work, this was only known for a special case where $\lambda = 1$~\cite{liu2015fptas}.

Next, we are ready to outline our plans of establishing spectral independence from $\delta$-uniqueness.
Spectral independence is a notion introduced by Anari, Liu and Oveis-Gharan~\cite{anari2020spectral}, which can be seen as a probabilistic formulation of local expansion without explicit reference to high-dimensional expanders.
A few definitions will be in order.

\begin{definition}[influence matrix]
  Let $\nu$ be a distribution over $\{-1, +1\}^n$, the influence matrix $\Psi_{\nu} \in \^R^{n \times n}$ is defined as
  \begin{align*}
    \forall i, j \in [n], \quad \Psi_{\nu}(i, j) :=
    \begin{cases}
      \Pr[\nu]{j \mid i} - \Pr[\nu]{j \mid \overline{i}} & \text{if $\Pr[\nu]{i} \in (0, 1)$,}  \\
      0 & \text{otherwise,}
    \end{cases}
  \end{align*}
  where we use $\Pr[\nu]{i} = \Pr[\sigma \sim \nu]{\sigma_i = +1}$ and $\Pr[\nu]{\overline{i}} = \Pr[\sigma \sim \nu]{\sigma_i = -1}$.
\end{definition}
Note that the influence matrix here has $1$ on its diagonal.

%\begin{definition}[pinning]
A \emph{pinning} $\tau$ is a partial configuration defined on $\Lambda \subseteq [n]$. 
%\end{definition}
Given a pinning $\tau$ on $\Lambda$, and another set $S \subseteq \Lambda$, we write $\tau_S$ for the pinning (partial configuration) restricted to $S$.

Let $\nu$ be a probability distribution over $\{-1, +1\}^n$.
We write $\Omega(\mu)$ for the support of $\mu$.
We will also write $\nu_S$ for its restriction to $S \subseteq [n]$.
%Next, we define the notion of \emph{conditional distribution given a partial configuration}.
Given a pinning $\tau$ on $\Lambda \subseteq [n]$, 
we write $\nu^\tau$ for the conditional distribution induced by $\nu$, on configurations that are consistent with $\tau$ on $\Lambda$. 
%Formally, $\nu^\tau(\sigma): = \Pr[\eta \sim \nu]{\eta = \sigma | \eta_\Lambda = \tau_\Lambda}$.
In particular, we write $\Psi_{\nu^\tau}$ for the influence matrix of the conditional distribution $\nu^\tau$.

\begin{definition}[spectral independence]
	Fix any $\eta > 0$. 
	A probability distribution $\nu$ over $\{-1, +1\}^n$ is called \emph{$\eta$-spectrally independent} if for any $\Lambda \subseteq [n]$ with $|\Lambda| \le n-2$, and any pinning $\tau \in \Omega(\nu_\Lambda)$, it holds that $\lambda_{\max}(\Psi_{\nu^\tau}) \leq \eta$. 
\end{definition}

We will derive spectral independence of the bipartite hardcore measure on one side, assuming the $\delta$-uniqueness on one side.
\begin{theorem} \label{thm:SI}
  Let $\delta \in (0, 1)$ be a real number.
  Let $G = ((L, R), E)$ be a bipartite graph with $\Delta = d+1 \geq 2$ be the degree bound on $L$.
  Let $\lambda > 0$ and $\alpha > 0$ be the fugacity on $L$ and $R$, respectively.
  %Let $\mu$ be the Gibbs distribution of the hardcore model on $G$.
  If the tuple $(\lambda, d, \alpha)$ is $\delta$-unique, then the hardcore distribution on $G$ restricted to $L$, denoted by $\mu_L$, is $\eta$-spectrally independent for $\eta = \frac{\Delta}{d}\frac{(1 + \alpha)^\Delta}{\delta}$.
\end{theorem}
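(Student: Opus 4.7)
The plan is to bound $\lambda_{\max}(\Psi_{\mu_L^\tau})$ by the operator infinity-norm of the influence matrix, $\lambda_{\max}(\Psi_{\mu_L^\tau}) \le \max_{u \in L} \sum_{v \in L} |\Psi_{\mu_L^\tau}(u,v)|$. Thus it suffices to prove that, for every pinning $\tau$ on $\Lambda \subseteq L$ and every root $u \in L \setminus \Lambda$, we have
\[
\sum_{v \in L} |\Psi_{\mu_L^\tau}(u,v)| \;\le\; \frac{\Delta}{d} \cdot \frac{(1+\alpha)^{\Delta}}{\delta}.
\]

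First I would set up a Weitz-style self-avoiding-walk tree $T$ rooted at $u$, adapted to the bipartite structure of $G$. In $T$, levels alternate between copies of $L$-vertices (even levels) and copies of $R$-vertices (odd levels); the root has up to $\Delta$ $R$-children, every internal $L$-vertex has at most $d = \Delta - 1$ $R$-children (excluding the edge back to its parent), while each $R$-vertex can have arbitrarily many $L$-children. After encoding $\tau$ as fixed leaves, the marginal ratio at $u$ computed by the tree recursion on $T$ coincides with the marginal ratio at $u$ in $\mu_L^\tau$. Moreover, the recursion from the $L$-grandchildren of a vertex up to that vertex has exactly the form $F(x) = \lambda(1+\alpha(1+x)^{-w})^{-d}$ from \Cref{def:lambda-d-alpha-w-unique}, where $w$ is the number of $L$-children of the intermediate $R$-vertex; crucially, the quantification over all real $w > 0$ in the definition of $\delta$-uniqueness on $(\lambda,d,\alpha)$ covers every possible $R$-branching that can arise in $T$.

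Second, I would use the chain rule on $T$ to write $\Psi_{\mu_L^\tau}(u,v)$ as a sum, over SAW-tree copies of $v$, of products of partial derivatives of $F$ along the unique tree-path from $u$ to that copy, times a local base contribution at the leaf $v$. The base contribution is $O\!\left((1+\alpha)^{\Delta}\right)$: once $R$ is integrated out, the projected weight $\mu_L(S) \propto \lambda^{|S|}(1+\alpha)^{|R\setminus N(S)|}$ can change by a factor at most $(1+\alpha)^{\Delta}$ when a single $L$-vertex is flipped, because each $L$-vertex has at most $\Delta$ neighbors in $R$.

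The third step, which I expect to be the main obstacle, is converting the pointwise fixpoint contraction $F'(\hat x) \le 1-\delta$ (which is all that $\delta$-uniqueness provides) into a uniform bound on the products of $|F'|$ along tree paths. The map $F$ is not globally $(1-\delta)$-Lipschitz, so one must introduce an adaptive potential function $\phi$ along the standard lines of correlation-decay proofs: setting $\Phi(x) = \int \phi$, the transformed map $\widetilde{F}(y) = \Phi(F(\Phi^{-1}(y)))$ must be arranged so that $|\widetilde{F}'(y)| \le 1-\delta$ holds uniformly on its domain. The parametric localization of fixpoints via the implicit function theorem, carried out for the uniqueness analysis in \Cref{sec:uniqueness}, supplies the quantitative control needed to construct $\phi$ and verify the contraction away from the critical parameters. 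A routine message-passing computation on $T$ then yields $\sum_{v \text{ at } L\text{-depth } k} \prod_{e} |\widetilde{F}'| \le (1-\delta)^k$, so summing over $k$ produces the geometric factor $1/\delta$. Assembling the three steps, the base contribution $(1+\alpha)^{\Delta}$, the geometric sum $1/\delta$, and a factor $\Delta/d$ that accounts for the root having $\Delta$ children rather than $d$, combine to give the target bound $\eta = \frac{\Delta}{d}\cdot\frac{(1+\alpha)^{\Delta}}{\delta}$. Steps one and two are adaptations of the classical Weitz self-avoiding-walk-tree argument to the bipartite projection $\mu_L$; the technical heart is step three, namely bootstrapping pointwise fixpoint contraction into uniform contraction via an appropriate potential.
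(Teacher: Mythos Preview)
Your outline matches the paper's approach (SAW-tree reduction, chain rule, potential-function contraction, geometric sum), but two points deserve correction. First, the $(1+\alpha)^\Delta$ factor does not enter as a ``base contribution at the leaf $v$'': in the actual argument the leaf contributes $\Psi_\mu(u,u)=1$, and $(1+\alpha)^\Delta$ arises at the \emph{root} through the boundedness condition (\Cref{lem:boundedness}), namely the ratio $\max_{a,b}\psi(a)/\psi(b)$ over the marginal-ratio range $[\lambda(1+\alpha)^{-d},\lambda]$ together with the $\tfrac{\Delta}{d}(1+\alpha)$ correction for the root having $\Delta$ rather than $d$ children. Second, the paper does not construct the potential from the fixpoint analysis as you suggest; it fixes $\psi(x)=x/\bigl((x+1)\log(x+1)\bigr)$ (the Liu--Lu potential) in advance. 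The contraction proof (\Cref{lem:contract}) then proceeds by a symmetrization lemma (\Cref{lem:sym-s}) that collapses the multivariate $H^\psi_{\*w}(\*x)$ to a univariate $U(z)$, and shows that $U$ attains its supremum precisely at the fixpoint $x_c$ of the \emph{critical} $\delta$-unique instance (\Cref{cond:critical}), where $H(x_c)=F'(x_c)=1-\delta$ by the definition of $\delta$-uniqueness (\Cref{lem:contract-critical}); monotonicity of $U$ in $\lambda,\alpha$ then handles sub-critical parameters. So the logic is inverted relative to your sketch: the potential is given a priori, and the fixpoint localization from \Cref{sec:delta-unique} is used to certify that the contraction rate under this fixed potential is exactly $1-\delta$.
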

The proof of \Cref{thm:SI} is deferred to \Cref{sec:SI}.
We first prove a contraction with respect to the potential function discovered by Liu and Lu in~\cite{liu2015fptas}.
Unlike their analysis, which only verified the contraction for the cases of $d\in\{1,2,3,4\}$ and $\lambda=\alpha=1$, 
we show that their potential function can in fact support correlation decay up to criticality in the bipartite hardcore model.
However, this model seems to pose unique challenges to correlation decay analysis.
%
%As such, we have to deviate from the traditional way, 
Traditionally, the contraction was often established by comparing to the contraction at the critical (0-unique) fixpoints. Examples of such analysis include~\cite{liu2015fptas} (or correlation decay analyses of other models~\cite{li2012approximate,sinclair2012approximation,li2013correlation,guo2018uniqueness, efthymiou2019convergence, anari2020spectral, chen2020rapid}).
In our analysis, we have to deviate from the traditional way, and establish a contraction by directly relating to the contraction at subcritical ($\delta$-unique) fixpoints.
This crucially relies on locating all the fixpoints and studying their critical behavior.
Once having established such contraction with respect to a suitable potential function, 
our arguments follow the same high-level plan of Chen, Liu and Vigoda~\cite{chen2020rapid} to bound the spectral gap of the influence matrix by the \emph{total influence} on the self-avoiding walk (SAW) tree, but only on the even depth of the SAW tree.  
The key difference here is that the total influence on the entire SAW tree can be unbounded.
%And unlike~\cite{chen2020rapid}, we have to first prove that the correlation decay property holds up to uniqueness.

\subsection{Entropy decay of field dynamics on general distributions}
\label{sec:ent-decay-field-general}
\iffalse
{\color{red}
\begin{definition}[marginal robustness]
  Let $\zeta > 0$, a distribution $\nu$ over $\{-1, +1\}^n$ is said to be \emph{$\zeta$-marginally robust} if for any $i \in [n]$, any $S \subseteq \Lambda \subseteq [n] \setminus \{i\}$ and any $\sigma \in \Omega(\mu_\Lambda)$, it holds that $R_i^{\sigma} \leq \zeta \cdot R_i^{\sigma_S}$,
  where $R_i^\sigma := \nu^\sigma_i(+1)/\nu^\sigma_i(-1)$ and $R ^{\sigma_S}_i$ is defined for $\sigma_S$ accordingly.
\end{definition}
}
\fi

Our nearly linear time sampling algorithm is based on simulating the field dynamics.
Currently, the state-of-art techniques~\cite{anari2022entropic, chen2022optimal, chen2022localization} for proving rapid mixing of field dynamics goes through the framework of \emph{entropy decay}.
The ``entropy'' here refers to the \emph{relative entropy} (a.k.a.~\emph{KL-divergence}).
Let $\nu$ and $\mu$ be two probability distribution such that $\nu$ is absolutely continuous with respect to $\mu$, we define the relative entropy of $\nu$ with respect to $\mu$ as
%\begin{align*}
$  \DKL{\nu}{\mu} := \sum_{X \in \Omega(\mu)} \nu(X) \log \frac{\nu(X)}{\mu(X)}$,
%\end{align*}
where we use the convention that $0 \log 0 = 0$.

These entropy decay analysis of field dynamics is done by using a stronger notion of spectral independence called \emph{entropic independence} introduced by Anari, Jain, Koehler, Pham, and Vuong in~\cite{anari2022entropic}.
At a high level, we follow a similar route by showing entropic independence, but there are new technical challenges that are unique to the model.
In the following, we explain the technical ingredient that we need, and the new challenges that arise in proving them.

The success of field dynamics crucially depends on the existence of a \emph{subcritical regime}, in which a mixing time bound can be established.  And the power of field dynamics is manifested by the following operation on a distribution.
\begin{definition} \label{def:external-fields} %[measure tilted by external field]
  Let $\nu$ be a distribution on $\{-1, +1\}^n$ and $\*\lambda \in \^R^n_{>0}$ be a positive vector of \emph{external fields}.
  The measure \emph{$\nu$ tilted by the external field $\*\lambda$} is denoted as $\*\lambda * \nu$ and is defined as
  \begin{align*}
    \forall \sigma \in \{-1, +1\}^n, \quad (\*\lambda * \nu)(\sigma) \propto \nu(\sigma) \cdot \prod_{i: \sigma_i = +1} \lambda_i.
  \end{align*}
  In particular, if $\lambda_i = \theta, \forall i \in [n]$, we simply write $\*\lambda * \nu$ as $\theta * \nu$.
\end{definition}

%Intuitively, let $\nu$ be a distribution on $\{-1, +1\}^n$ that we want to sample from.
%Often times, the Glauber dynamics and other canonical samplers are hard to analyze on $\nu$. %since otherwise, we will not get interested in $\nu$.
The main insight of \Cref{def:external-fields} is that for many ``hard'' distribution $\nu$, once we impose on it an external field $\lambda$ such that $\lambda$ is close to $0$ or $+\infty$, the biased distribution $\lambda * \nu$ becomes an ``easy'' distribution. 
%Roughly speaking, 
We refer to such an ``easy'' biased distribution as \emph{subcritical}.
%%{\color{red} TODO: maybe give an example, like hardcore model, but seems too long}
%%
The field dynamics is a tool to take advantage of this phenomenon.
%Its power comes out directly from its definition.
For \BHC{} with degree bounds only on one side however, the existence of a  subcritical regime is not readily available, and in particular it does not follow from the general graph case. 
%an external field ($\lambda$ being close to $0$ or $+\infty$) so that the biased distribution becomes easy
%and it is not clear if $\lambda$ being close to $0$ makes the distribution $\lambda*\nu$ any easier.
To the best of our knowledge, we are the first to identify an optimal mixing subcritical regime for this model.

We explain a few more notations before describing the field dynamics.
%%% all the following notations have been defined in previous subsection
%Let $n \geq 1$ be an integer, we use the notation $[n] := \{0, 1, \cdots, n-1\}$.
%Let $\nu$ be a probability distribution over $\{-1, +1\}^n$, we will use $\Omega(\nu)$ to denote its support.
%For any subset $\Lambda \subseteq [n]$, we will use $\nu_\Lambda$ to denote the marginal distribution of $\nu$ on $\Lambda$.
%For simplicity, for every $i \in [n]$,  we will use $\nu_i$ for $\nu_{\{i\}}$.
%Given $\Lambda \subseteq [n]$ and a configuration $\tau \in \Omega(\mu_\Lambda) \subseteq \{-1, +1\}^\Lambda$ on $\Lambda$, we write $\nu^\tau$ for the distribution $\nu$ conditioned on the coordinates in $\Lambda$ be fixed to $\tau$.
%Formally,
%\begin{align*}
%  \forall \sigma \in \{-1, +1\}^n, \quad \nu^{\tau}(\sigma) := \Pr[X \sim \nu]{X = \sigma \mid X_\Lambda = \tau}.
%\end{align*}
For a configuration $\tau$, we will use $\tau_S \in \{-1, +1\}^S$ to denote the configuration $\tau$ being restricted to $S \subseteq [n]$.
For a distribution $\mu$, a pinning $\tau \in \Omega(\mu_\Lambda)$, and $S \subseteq [n]\setminus \Lambda$, let $\mu^\tau_S$ be the marginal distribution of $\mu^\tau$ restricted to $S$, which means $\mu^\tau_S = (\mu^\tau)_S$.
We also use $\pm\*1_\Lambda$ to denote the all $(\pm1)$-configuration on $\Lambda$.

%we write $\nu^\tau_i$ for the marginal probability of the $i$-th variable under the distribution $\nu^\tau$. Here, we recall that $\nu^\tau$ is the conditional distribution under pinning $\tau$.
%Formally, $\nu^\tau_i (+1) = \Pr[\sigma\sim\nu^\tau]{\sigma_i = +1}$ and $\nu^\tau_i (-1) = \Pr[\sigma\sim\nu^\tau]{\sigma_i = -1}$.

\begin{definition}[field dynamics] \label{def:field-dynamics-ori}
  Let $\theta \in (0, 1)$ be a real number.
  For any distribution $\nu$ over $\{-1, +1\}^n$, the field dynamics $\FD{\nu}$ on $\nu$ with parameter $\theta$ is a Markov chain on state space $\Omega(\nu)$, with the following rule for updating a configuration $\sigma \in \Omega(\nu)$:
  \begin{enumerate}
  \item generate a random set $S$ by adding each $i \in [n]$ into $S$ with probability
    \begin{align*}
      p_i := \begin{cases}
               1 & \text{if } \sigma_i = -1; \\
               \theta & \text{if } \sigma_i = +1;
             \end{cases}
    \end{align*}
  \item replace $\sigma_S$ by a random partial configuration $\tau \sim (\theta * \nu)^{\sigma_{[n]\setminus S}}_{S} = (\theta * \nu)^{\*1_{[n]\setminus S}}_{S}$.
    That is, we deterministically set $X_i(v) = 1$ for $v \in [n]\setminus S$, and sample the remaining entries from $\theta * \nu$ conditional on the partial configuration $\sigma_{[n]\setminus S} = \*1_{[n]\setminus S}$.
  \end{enumerate}
\end{definition}

As shown in~\cite{chen2021rapid}, $\FD{\nu}$ is irreducible and aperiodic, and is reversible with respect to $\nu$.
This means that $\FD{\nu}$ converges to $\nu$ rather than $\theta * \nu$ during its evolution.
If we can show that $\FD{\nu}$ converges rapidly to $\nu$, then we have reduced the task of sampling from $\nu$ to a potentially (much) easier task of sampling from $\theta * \nu$.
To show the rapid mixing of $\FD{\nu}$, we use the following notion of entropic independence introduced in~\cite{anari2022entropic}.

\begin{definition}[entropic independence] \label{def:EI-intro}
  Let $\alpha > 0$ be a real number.
  A distribution $\nu$ over $\{-1, +1\}^n$ is said to be \emph{$\alpha$-entropically independent} if for every distribution $\mu$ which is absolutely continuous with respect to $\nu$, it holds that
  \begin{align*}
    %\sum_{i \in [n]} \tp{\nu_i(+1) \log \frac{\nu_i(+1)}{\mu_i(+1)} + \nu_i(-1) \log \frac{\nu_i(-1)}{\mu_i(-1)}} \leq \alpha \cdot \DKL{\mu}{\nu}.
    \sum_{i \in [n]} \DKL{\mu_i}{\nu_i} \leq \alpha \cdot \DKL{\mu}{\nu}.
  \end{align*}
\end{definition}
%Here, for simplicity, we will not involved in the details of entropic independence, its formal definition of it is given at \Cref{def:EI}.
As shown in \cite{anari2022entropic,chen2022optimal, chen2022localization}, and implicitly in \cite{chen2021optimal}, the entropic independence of a distribution $\nu$ can be established through the spectral independence of $\nu$ together with some additional requirements on $\nu$'s marginals.
For this purpose, we use the notion of \emph{marginal stability}~\cite{chen2022optimal, chen2022localization}.

  Let $\nu$ be a distribution over $\{-1, +1\}^n$.
  We write $\overline{\nu}$ for the ``flipped'' version of $\nu$ as another distribution over $\{-1, +1\}^n$, defined by $\overline{\nu}(\sigma) := \nu(\*{-1} \odot \sigma), \forall \sigma \in \{-1, +1\}^n$, where $\odot$ denotes the entry-wise product of two vectors.
  \begin{definition} [\cite{chen2022optimal, chen2022localization}] \label{def:K-stable}
    Let $K \geq 1$, a distribution $\nu$ is said to be \emph{$K$-marginally stable} if there is $\rho \in \{\nu, \overline{\nu}\}$ such that for every $i \in [n]$, $S \subseteq \Lambda \subseteq [n] \setminus \{i\}$, $\tau \in \Omega(\rho_\Lambda)$, it holds that
    \begin{align} \label{eq:K-stable}
      R^{\tau}_i \leq K \cdot R_i^{\tau_S} \quad \text{and} \quad \rho^{\tau}_i(-1) \geq K^{-1},
    \end{align}
    where $R^{\tau}_i := \rho^\tau_i(+1)/\rho^\tau_i(-1)$ is the marginal ratio on $i$ and $R^{\tau_S}_i$ is defined accordingly.
  \end{definition}
%\begin{definition} [\cite{chen2022localization}] \label{def:K-tame}
%  Let $K > 1$, a distribution $\nu$ over $\{-1, +1\}^n$ is said to have \emph{$K$-tame marginals} if for every $i \in [n]$, $S \subseteq \Lambda \subseteq [n], \tau \in \Omega(\nu_\Lambda)$, either $\nu^\tau_i(+1) \in \{0, 1\}$ or
%  \begin{align*}
%    \min\{\nu^\tau_i(+1), \nu^\tau_i(-1)\} \leq K \cdot \min\{\nu^{\tau_S}_i(+1), \nu^{\tau_S}_i(-1)\}.
%  \end{align*}
%\end{definition}

%{\color{red}
%\begin{definition} [\cite{chen2022localization}] \label{def:K-tame}
%  Let $K > 1$, a distribution $\nu$ over $\{-1, +1\}^n$ is said to have \emph{$K$-tame marginals} if for every $i \in [n]$, $S \subseteq \Lambda \subseteq [n], \tau \in \Omega(\nu_\Lambda)$, either $\nu^\tau_i(+1) \in \{0, 1\}$ or
%  \begin{align*}
%    & \frac{\mu^\tau_i(+1)}{\mu^\tau_i(-1)} \leq K_1 \cdot \frac{\mu^{\tau_S}_i(+1)}{\mu^{\tau_S}_i(-1)} \quad \text{with} \quad \mu^\tau_i(-1) \geq 1/K_2 \\
%    \text{or} \quad & 
%    \frac{\mu^\tau_i(-1)}{\mu^\tau_i(+1)} \leq K_1 \cdot \frac{\mu^{\tau_S}_i(-1)}{\mu^{\tau_S}_i(+1)} \quad \text{with} \quad \mu^\tau_i(-1) \geq 1/K_2 \\
%  \end{align*}
%\end{definition}
%}
 
%However, there is a barrier for us to apply the result in \cite{chen2022localization}.
Through the \emph{stochastic  localization schemes} developed by Chen and Eldan in  \cite{chen2022localization}, %and \cite{chen2022optimal}, 
optimal mixing of Glauber dynamics can be proved via entropic independence, 
by lifting from a known \emph{modified log-Sobolev inequality} in a suitable subcritical regime.
%by lifting from a known entropic independence in a suitable subcritical regime.
%
For \BHC{} however, such a modified log-Sobolev inequality in a subcritical regime is not readily available. %we are not aware of any such result
%For \BHC{} however, this entropic independence in subcritical regime is no longer available.
%
Instead, we apply the result of \cite{chen2022localization}  to obtain the entropy decay for the field dynamics.
%Therefore, instead we apply the result of \cite{chen2022localization}  to obtain the entropic independence for the field dynamics. 

\begin{theorem}\label{thm:FD-ent-decay-short}
  Let $\theta \in (0, 1)$,  and $K, \eta \geq 1$ be real numbers.
  Let $\nu$ be a distribution over $\{-1, +1\}^n$. If 
  \begin{enumerate}
  \item $\lambda * \nu$ is  $K$-marginally stable for all $\lambda \in [\theta, 1]$,
  \item $\lambda * \nu$ is $\eta$-spectrally independent for all $\lambda \in [\theta, 1]$,
  \end{enumerate}
  then for  $\kappa := \theta^{2 \cdot 10^3 \eta K^4}$,
  for any distribution $\pi$ that is absolutely continuous respect to $\nu$, we have
  \begin{align*}
    \DKL{\pi \FD{\nu}}{\nu \FD{\nu}} \leq (1 - \kappa) \DKL{\pi}{\nu}.
  \end{align*}
\end{theorem}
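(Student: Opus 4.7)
The plan is to deduce the theorem from two ingredients, following the negative-fields stochastic localization paradigm of~\cite{chen2022localization}. First, I would promote the hypothesized spectral independence and marginal stability of every $\lambda * \nu$ with $\lambda \in [\theta,1]$ into uniform entropic independence of the same one-parameter family. Second, I would feed this uniform entropic independence into the Chen--Eldan localization-scheme machinery, which identifies $\FD{\nu}$ with (the time-reversal of) a localization process parametrized by $\log \lambda \in [\log\theta, 0]$ and translates pointwise entropic independence along this path into a one-step entropy contraction for the chain.

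For the first step, I would invoke the conversion lemma established in~\cite{chen2022localization} (with closely related forms in~\cite{anari2022entropic}): an $\eta$-spectrally independent and $K$-marginally stable distribution on $\{-1,+1\}^n$ is $\alpha$-entropically independent with $\alpha \le c_1 \eta K^4$ for some absolute constant $c_1$. The factor $K^4$ arises from controlling the logarithmic derivative of the marginals along the localization path, and marginal stability (\Cref{def:K-stable}) is exactly the hypothesis that supplies the two ingredients --- bounded marginal ratios under further conditioning, and a lower bound on the $(-1)$-marginal --- needed to bound this derivative. Applying the conversion pointwise along the path yields $\alpha$-entropic independence of $\lambda * \nu$ for every $\lambda \in [\theta,1]$ with a uniform $\alpha = O(\eta K^4)$.

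For the second step, I would parametrize the tilting path by $t = -\log \lambda$, so that $t$ runs over $[0, -\log\theta]$ as $\lambda$ runs from $1$ down to $\theta$. Using the identification of $\FD{\nu}$ with one step of the localization scheme, the relative entropy along the pushforward obeys a differential inequality of the form $\frac{d}{dt} H(t) \le -\frac{1}{\alpha} H(t)$, where $H(t)$ is the relative entropy at scale $t$ and $\alpha$ is the uniform entropic-independence bound from the previous paragraph. Integrating from $0$ to $-\log \theta$ yields an entropy contraction factor of $\exp(\log\theta / \alpha) = \theta^{1/\alpha}$; after tracking the multiplicative constants arising in the stochastic-localization identity itself, the bookkeeping closes with the claimed rate $\kappa = \theta^{2 \cdot 10^3 \eta K^4}$.

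The main obstacle is to keep the marginal-stability hypothesis correctly aligned with the direction of localization throughout the argument. \Cref{def:K-stable} permits either $\nu$ or its flip $\overline{\nu}$ to witness stability, but the field dynamics is asymmetric --- each step upweights the $+1$-configuration via the subcritical tilt --- and the SI-to-EI conversion uses whichever orientation matches the localization direction; one needs to check that the correct orientation survives under further tilting by every $\lambda \in [\theta,1]$, which is precisely what the uniformity in the hypothesis buys. A secondary but more bookkeeping-heavy task is propagating the absolute constants through the chain of reductions SI $\Rightarrow$ EI $\Rightarrow$ entropy contraction, which is where the (admittedly generous) $2 \cdot 10^3$ factor arises. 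Once these alignments are verified, the remainder of the proof is essentially a black-box application of the stochastic-localization framework.
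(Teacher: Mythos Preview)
Your overall strategy---upgrading spectral independence plus marginal stability to entropic independence via \cite{chen2022localization}, then feeding uniform entropic independence along the path $\{\lambda*\nu : \lambda\in[\theta,1]\}$ into the negative-fields localization machinery---is exactly the paper's approach. The paper additionally supplies, as a technical contribution, an explicit identification of the field-dynamics down-operator with the negative-fields localization process (\Cref{lem:NFL-is-FD}), which you take as given.

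However, your differential inequality has the exponent inverted. With your own convention $\alpha = O(\eta K^4)$, the quantity that is tracked is the expected conditional entropy $H(t):=\E{\Ent[\nu_t]{f}}$ (with $f=\pi/\nu$), and \Cref{lem:dEnt} gives $\frac{d}{dt}\log H(t)\ge -c\alpha$; integrating to $t=-\log\theta$ yields $H(-\log\theta)\ge\theta^{c\alpha}H(0)$, after which the chain rule for KL gives $\DKL{\pi\FDdown{\nu}}{\nu\FDdown{\nu}}\le(1-\theta^{c\alpha})\DKL{\pi}{\nu}$. The exponent in $\kappa$ is therefore proportional to $\alpha$, not to $1/\alpha$: your $\theta^{1/\alpha}$ and the target $\theta^{2\cdot10^3\eta K^4}$ move in opposite directions as $\eta K^4$ grows, so no ``tracking of multiplicative constants'' can reconcile them. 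Once this sign is fixed (using $\alpha\le 384\,\eta K^4$ from \Cref{lem:EI} and the factor $4$ in \Cref{lem:dEnt}), the bookkeeping does close at $\kappa=\theta^{2\cdot10^3\eta K^4}$, and the stated inequality for $\FD{\nu}=\FDdown{\nu}\FDup{\nu}$ then follows by data processing.
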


%Towards this goal, it builds a stochastic process called the negative field localization process, and the whole analysis does not explicitly involve field dynamics.
%This is quite restricted, since the field dynamics itself can also be used as an algorithm, and we want the results obtained for the negative field localization to be applicable for the field dynamics.
%In applying the result of \cite{chen2022localization} and proving entropy decay, we also build an explicit connection between the field dynamics and the negative field localization.
%This shows that the analysis in \cite{chen2022localization} can in fact give an entropy decay result for the field dynamics.
%In \cite{chen2022localization}, a stochastic process called the negative field localization process was built.

\Cref{thm:FD-ent-decay-short} is proved in \Cref{sec:FD-ent-decay}, 
where an explicit connection  is provided between the field dynamics, and 
a  stochastic process built in \cite{chen2022localization}, called  the \emph{negative field localization process}.

\subsection{A fast sampler for the bipartite hardcore model and Proof of~\Cref{thm:main-unique}} \label{sec:main-unique}
Now we are ready to prove \Cref{thm:main-unique}, the main theorem for the sampling algorithm.

Let $G = ((L, R), E)$ be a bipartite graph with $\Delta = d + 1$ be the degree bound on $L$.
Let $n = \abs{L}$, %be the number of vertices in $L$.
and $\delta \in (0, 1)$ be a real number.
%Let $\lambda > 0$ be a fugacity that satisfies $\lambda < \lambda_c(\Delta)$.
Let $\mu$ be the hardcore distribution on $G$ with fugacity $\lambda< \lambda_c(\Delta)$.
To sample from $\mu$, we simulate the field dynamics on $\mu_L$, where each update is simulated by a mixing Glauber dynamics. 
We note that, given a sample of $\mu$ on $L$, it is easy to generate a sample on $R$ so that their joint distribution is $\mu$.

Let $\nu = \mu_L$ and let $\overline{\nu}$ be the distribution on $\{-1, +1\}^L$ defined as
\begin{align}
  \forall \sigma \in \{-1, +1\}^L, \quad \overline{\nu}(\sigma) := \nu(\*{-1} \odot \sigma),\label{eq:sign-flipping-nu}
\end{align}
where $\odot$ denotes the entry-wise product of two vectors.

Let $\theta \in (0, 1)$. We denote the process of field dynamics $\FD{\overline{\nu}}$ by $(\overline{X_t})_{t \in \^N}$.
Define another process $(X_t)_{t \in \^N}$ as: for $t \in \^N$, let $X_t = \*{-1} \odot \overline{X_t}$.
By the definition of the field dynamics, the process $(X_t)_{t\in \^N}$ starts with state $X_0 = \*{-1} \odot \overline{X_0}$, and in the $t$-th transition, it does:
\begin{enumerate}
\item let $S = \emptyset$; \textbf{for} each $i$ with $X_{t-1}(i) = -1$, add $i$ to $S$ with prob. $1 - \theta$;
\item sample $X_t(L\setminus S) \sim (\theta^{-1} * \nu)^{\*{-1}_{S}}_{L\setminus S}$ and let $X_t(S) = X_{t-1}(S) = \*{-1}_S$, \label{item:resample}
\end{enumerate}
where \Cref{item:resample} comes from the fact that $(\theta * \overline{\nu})^{\*1_S}(\sigma) = (\theta^{-1} * \nu)^{\*{-1}_{S}}(-1 \odot \sigma)$, for $\sigma \in \{-1, +1\}^L$.
For convenience, we denote the transition matrix of the Markov chain $(X_t)_{t \in \^N}$ as $\FD[1/\theta]{\nu}$.
As suggested by this abuse of notation, the chain $\FD[1/\theta]{\nu}$, by flipping the signs of spins, can transform $\nu$ to $\theta^{-1}*\nu$ (as opposed to the field dynamics $\FD[\theta]{\nu}$ which transforms $\nu$ to $\theta*\nu$) for a $\theta\in(0,1)$.
We note that the chain  $\FD[1/\theta]{\nu}$ also has stationary distribution $\nu$.

\newcommand{\PGDm}[2]{\ensuremath{P^{(#1)}_{#2}}}
This is not an efficient algorithm yet, as it is usually hard to generate perfect samples as required by \Cref{item:resample} of $\FD[1/\theta]{\nu}$.
Instead we approximate it by running a Glauber dynamics on  $L$ with the stationary distribution $(\theta^{-1} * \nu)^{\*{-1}_S}$,  started from $\*1$ for $m$ steps. 
We denote by $\PGDm{m}{(\theta^{-1} * \nu)^{\*{-1}_S}}$ the $m$-step transition matrix of this  Glauber dynamics.
%Let $\FD[1/\theta]{m}$ be the transition matrix of another Markov chain which is $\FD[1/\theta]{\nu}$ but \Cref{item:resample} is implemented by running a Glauber dynamics on $(\theta^{-1} * \nu)^{\*{-1}_S}$ starting from $\*1$ for $m$ steps.
Now we describe how our algorithm works.

\begin{definition}[Our algorithm] \label{def:algo}
  For carefully chosen parameters $\theta, T, m$, starting from  $X_0 = \*{-1}$,
  \begin{enumerate}
  %\item sample $X \in \{-1, +1\}^L$ by running $\FD[1/\theta]{m}$ for $T$ steps starting from the initial state $\*{-1}$;
%  \item set $X_0 = \*{-1}$;
  \item for each $t=1,2,\ldots, T$:
  \begin{enumerate}
  \item let $S = \emptyset$; \textbf{for} each $i$ with $X_{t-1}(i) = -1$, add $i$ to $S$ with probability $1 - \theta$;
  \item sample $X_{t}\in \{-1, +1\}^L$ by running \PGDm{m}{(\theta^{-1} * \nu)^{\*{-1}_S}}, started from $\*1$;
  \end{enumerate}
  \item sample $W \in \{-1, +1\}^R$ from $\mu^{X_T}_R$ and return $(X_T, W)$. 
  \end{enumerate}
\end{definition}
For convenience, in the rest of this section, we use $C := (1 + \lambda)^\Delta$.
Recall~\Cref{thm:FD-ent-decay-short}, to show that the field dynamics $\FD[1/\theta]{\nu}$ itself mixes rapidly,
it remains to check that $\theta * \overline{\nu}$ in the uniqueness regime is marginally stable.% has $C$-tame marginals.

\begin{lemma} \label{lem:verify-K-tame}
  For all $\theta \in (0, 1)$, the distribution $\theta * \overline{\nu}$ is $2C$-marginally stable.
\end{lemma}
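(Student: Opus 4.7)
The plan is to exploit the bipartite structure to obtain uniform envelopes on the marginal ratio $R_i^\tau$ for both candidates $\rho \in \{\theta * \overline{\nu},\, \overline{\theta * \overline{\nu}}\}$ allowed by \Cref{def:K-stable}, and then to select between them via a case split on $\theta$. A direct manipulation of definitions gives the flipping identity $\overline{\theta * \overline{\nu}} = \theta^{-1} * \nu$; concretely, both candidates are $L$-marginals of a bipartite hardcore Gibbs measure with (left, right) fugacities $(\lambda/\theta, \lambda)$, but under opposite sign conventions: under $\rho_1 := \theta * \overline{\nu}$ the spin value $-1$ encodes ``IN the original independent set,'' whereas under $\rho_2 := \theta^{-1} * \nu$ it is $+1$ that encodes ``IN.''

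To bound $R_i^\tau$ for either candidate, I would sum out the $R$-side (each $r \in R$ contributes independently once its $L$-neighborhood is pinned) and then isolate the dependence of the Gibbs weight on $\sigma_i$ versus the unpinned $L$-vertices outside $\Lambda \cup \{i\}$. Writing $B(\tau, i)$ for the set of $R$-neighbors of $i$ whose other $L$-neighbors in $\Lambda$ are pinned ``OUT'' by $\tau$, so that $|B(\tau, i)| \leq \Delta$, the resulting expression for $R_i^\tau(\rho_1)$ is $\theta/\lambda$ times a conditional expectation of a product $\prod_{r \in B(\tau, i)} (1 + \lambda)^{A_r(\sigma_*)}$, where each indicator $A_r \in \{0, 1\}$ records whether the remaining $L$-neighbors of $r$ are also ``OUT.'' The analogous expression (with the product inverted) holds for $R_i^\tau(\rho_2)$. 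Bounding each factor pointwise then yields the uniform envelopes
\begin{align*}
  R_i^\tau(\rho_1) \in \left[\frac{\theta}{\lambda},\, \frac{\theta C}{\lambda}\right], \qquad R_i^\tau(\rho_2) \in \left[\frac{\lambda}{\theta C},\, \frac{\lambda}{\theta}\right],
\end{align*}
uniformly in $\tau$. Condition~(i) of \Cref{def:K-stable} is then immediate for both candidates with $K = C$, since refining $\tau_S$ to $\tau$ can move $R_i^\tau$ only within its envelope of width-ratio $C$.

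The main subtlety lies in condition~(ii): $\rho_i^\tau(-1) \geq 1/(2C)$. The envelopes above translate to worst-case marginal bounds $\rho_{1,i}^\tau(-1) \geq \lambda/(\lambda + \theta C)$ and $\rho_{2,i}^\tau(-1) \geq \theta/(\theta + \lambda)$, and neither bound dominates $1/(2C)$ uniformly in $\theta \in (0,1)$: $\rho_1$ fails precisely when $\theta C \gg \lambda$, while $\rho_2$ fails precisely when $\theta \ll \lambda$. The key observation is that these failure regimes are disjoint. If $\theta C \leq \lambda$, then $\rho_{1,i}^\tau(-1) \geq \lambda/(2\lambda) = 1/2 \geq 1/(2C)$; otherwise $\theta > \lambda/C$ and $\rho_{2,i}^\tau(-1) \geq (\lambda/C)/((\lambda/C) + \lambda) = 1/(C+1) \geq 1/(2C)$. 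Either way, a single witness $\rho \in \{\rho_1, \rho_2\}$ certifies $2C$-marginal stability for the given $\theta$. The main obstacle is therefore not in bounding $R_i^\tau$, which is standard once the bipartite structure is in hand, but in recognizing that the witnessing sign convention in \Cref{def:K-stable} must be chosen as a function of $\theta$, rather than once and for all.
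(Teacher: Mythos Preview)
Your proposal is correct and is essentially the same argument as the paper's proof: both establish the identical envelopes on the marginal ratio via the bipartite structure, and both make the same case split at $\theta = \lambda/C$ (equivalently, the paper's $\lambda_L = \lambda/\theta$ versus $C$) to select between $\rho_1 = \theta * \overline{\nu}$ and $\rho_2 = \theta^{-1} * \nu$ as the witness in \Cref{def:K-stable}. Your write-up is a bit more explicit about why the envelope extends from full pinnings to partial ones, but the core idea and the numbers are identical.
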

\begin{proof}
  Note that the ``flipped'' version of $\theta * \overline{\nu}$ is $\theta^{-1} * \nu$, that is, $\theta * \overline{\nu} = \overline{\theta^{-1} * \nu}$.
  Let $\pi$ be the hardcore distribution on graph $G = ((L, R), E)$ with fugacity $\lambda_L = \lambda / \theta$ on $L$, and fugacity $\lambda_R = \lambda$ on $R$.
  It is straightforward to verify that $(\theta^{-1} * \nu) = \pi_L$ is $2C$-marginally stable.

  If $\lambda_L \leq C$, we let $\rho = \theta^{-1} * \nu$. For every $i \in L, S \subseteq \Lambda \subseteq L \setminus \{i\}$, and $\tau \in \Omega(\rho_\Lambda)$,
  it holds that
  \begin{align*}
    \frac{\rho^\tau_i(+1)}{\rho^\tau_i(-1)} \left/ \frac{\rho^{\tau_S}_i(+1)}{\rho^{\tau_S}_i(-1)} \right. \leq \frac{\lambda_L}{1} \left/ \frac{\lambda_L}{(1 + \lambda_R)^\Delta} \right. \leq C \quad \text{and} \quad \rho^\tau_i(-1) \geq \frac{1}{1 + \lambda_L} \geq \frac{1}{2C}.
  \end{align*}

  If $\lambda_L > C$, we let $\rho = \theta * \overline{\nu}$. For every $i \in L, S \subseteq \Lambda \subseteq L \setminus \{i\}$, and $\tau \in \Omega(\rho_\Lambda)$,
  it holds that
  \begin{align*}
    \frac{\rho^\tau_i(+1)}{\rho^\tau_i(-1)} \left/ \frac{\rho^{\tau_S}_i(+1)}{\rho^{\tau_S}_i(-1)} \right. \leq \frac{(1 + \lambda_R)^\Delta}{\lambda_L} \left/ \frac{1}{\lambda_L} \right. \leq C \quad &\text{and} \quad \rho^\tau_i(-1) \geq \frac{\lambda_L}{(1 + \lambda_R)^\Delta + \lambda_L} \geq \frac{1}{2}. \qedhere
  \end{align*}

\end{proof}

Now we are ready to state the entropy decay of field dynamics under $\delta$-uniqueness.
\begin{lemma} \label{lem:ent-decay-field}
  If the pair $(\lambda, d)$ is $\delta$-unique and $d \geq 1$, then for any distribution $\pi$ that is absolutely continuous with respect to $\nu$, it holds that
%  \begin{align*}
%    \DKL{\overline{\pi} \FDdown{\overline{\nu}}}{\overline{\nu} \FDdown{\overline{\nu}}} \leq (1 - \theta^{400 C^2/\delta}) \DKL{\overline{\pi}}{\overline{\nu}},
%  \end{align*}
%  which implies that
  \begin{align*}
    \DKL{\pi \FD[1/\theta]{\nu}}{\nu \FD[1/\theta]{\nu}} \leq  (1 - \theta^{10^5 C^5/\delta}) \DKL{\pi}{\nu}.
  \end{align*}
\end{lemma}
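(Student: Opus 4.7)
The plan is to derive this lemma as a direct consequence of the general entropy decay result~\Cref{thm:FD-ent-decay-short}, applied to the sign-flipped distribution $\overline{\nu}$. Observe first that sign-flipping is a deterministic bijection on $\{-1,+1\}^L$ which intertwines the two chains: $\FD[1/\theta]{\nu}$ is, by construction, the pushforward of $\FD[\theta]{\overline{\nu}}$ under the map $\sigma \mapsto -\sigma$. Since KL-divergence is preserved under such bijections, it suffices to establish
\[
\DKL{\pi' \FD[\theta]{\overline{\nu}}}{\overline{\nu} \FD[\theta]{\overline{\nu}}} \leq (1 - \theta^{10^5 C^5/\delta})\, \DKL{\pi'}{\overline{\nu}}
\]
for any $\pi' \ll \overline{\nu}$, and then appeal to~\Cref{thm:FD-ent-decay-short} with $\nu \leftarrow \overline{\nu}$ and parameter $\theta$.

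To invoke~\Cref{thm:FD-ent-decay-short}, I need $K$-marginal stability and $\eta$-spectral independence of $\lambda' * \overline{\nu}$ uniformly in $\lambda' \in [\theta, 1]$. The first ingredient is exactly the content of~\Cref{lem:verify-K-tame}, which yields $K = 2C$ for every $\lambda' \in (0,1]$ (the statement applies verbatim at the endpoint $\lambda' = 1$ since $\overline{\nu}$ itself is the marginal on $L$ of the bipartite hardcore distribution at fugacities $(\lambda,\lambda)$, which is $2C$-marginally stable).

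For the second ingredient, the key observation is that, modulo the sign-flip, $\lambda' * \overline{\nu}$ coincides with the $L$-marginal of the bipartite hardcore distribution with fugacities $(\lambda/\lambda', \lambda)$ on $(L,R)$. As $\lambda'$ ranges over $[\theta,1]$, the effective $L$-fugacity ranges over $[\lambda, \lambda/\theta]$, i.e.\ it only increases relative to the original $\lambda$. By~\Cref{thm:delta-unique} (and the description of the uniqueness region as $\lambda \ge \lambda_c$ at fixed $\alpha$ and $d$), the tuple $(\lambda/\lambda', d, \lambda)$ remains $\delta$-unique throughout this range; moreover, by inspecting the fixpoint $\hat x$ and the derivative $F'(\hat x) = \frac{\hat x \cdot d\alpha w}{(1+\hat x)(\alpha + (1+\hat x)^w)}$, increasing $\lambda_L$ at fixed $\alpha$ shifts the fixpoint structure so as to preserve (indeed, tighten) the $\delta$-gap. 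Applying~\Cref{thm:SI} then yields $\eta$-spectral independence of $\lambda' * \overline{\nu}$ with
\[
\eta \leq \frac{\Delta}{d} \cdot \frac{(1+\lambda)^{\Delta}}{\delta} \leq \frac{2C}{\delta},
\]
where I used $d \ge 1$ so that $\Delta/d \le 2$.

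Plugging $K = 2C$ and $\eta = 2C/\delta$ into the conclusion of~\Cref{thm:FD-ent-decay-short} gives the decay constant
\[
\kappa = \theta^{2 \cdot 10^3 \cdot \eta \cdot K^4} = \theta^{2 \cdot 10^3 \cdot (2C/\delta) \cdot (2C)^4} = \theta^{6.4 \cdot 10^4 \cdot C^5/\delta} \leq \theta^{10^5 C^5/\delta},
\]
which is the asserted rate. The main delicate step is the monotonicity argument for the second bullet: one must verify that $\delta$-uniqueness of the symmetric tuple $(\lambda,d,\lambda)$ automatically transfers to the asymmetric tuples $(\lambda/\lambda',d,\lambda)$ for every $\lambda' \in [\theta,1]$, with the \emph{same} $\delta$. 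This is essentially a monotonicity-in-$\lambda_L$ property of the fixpoint equation, which follows from the characterization of the uniqueness region given by~\Cref{thm:delta-unique} together with a direct inspection of how $F'(\hat x)$ varies with $\hat x$ when $\alpha$ and $d$ are held fixed.
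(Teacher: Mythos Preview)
Your proposal is correct and follows essentially the same route as the paper: reduce to the sign-flipped chain $\FD{\overline{\nu}}$ via the bijection $\sigma\mapsto -\sigma$, then invoke \Cref{thm:FD-ent-decay-short} with $K=2C$ from \Cref{lem:verify-K-tame} and spectral independence from \Cref{thm:SI}. You actually spell out a point the paper's proof elides, namely why \Cref{thm:SI} applies uniformly to $\lambda'*\overline{\nu}$ for all $\lambda'\in[\theta,1]$: since $(\lambda')^{-1}*\nu$ is the $L$-marginal of the bipartite hardcore model with parameters $(\lambda/\lambda',d,\alpha=\lambda)$, and by \Cref{thm:delta-unique} the $\delta$-uniqueness region at fixed $(d,\alpha)$ is an upper half-line in $\lambda$, the tuple $(\lambda/\lambda',d,\lambda)$ remains $\delta$-unique whenever $(\lambda,d,\lambda)$ is. That is the right justification; the extra remark about ``direct inspection of how $F'(\hat x)$ varies'' is unnecessary once you cite \Cref{thm:delta-unique}.

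One minor slip: your final displayed inequality $\theta^{6.4\cdot10^4\,C^5/\delta}\le\theta^{10^5 C^5/\delta}$ has the wrong direction (for $\theta\in(0,1)$ the smaller exponent gives the larger value). But this works in your favor: what you need is $\kappa\ge\theta^{10^5 C^5/\delta}$, hence $1-\kappa\le 1-\theta^{10^5 C^5/\delta}$, and that is exactly what the corrected inequality gives.
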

\begin{proof}
  Note that by definition, it holds that
  \begin{align*}
    \forall X, Y \in \{-1, +1\}^L, \quad \FD{\overline{\nu}}(X, Y) = \FD[1/\theta]{\nu}(\*{-1} \odot X, \*{-1} \odot Y),
  \end{align*}
  which implies that
  \begin{align*}
    \forall X \in \{-1, +1\}^L, \quad \overline{\pi} \FD{\overline{\nu}} (X) = \pi \FD[1/\theta]{\nu} (\*{-1} \odot X) \quad \text{and} \quad \overline{\nu} \FD{\overline{\nu}} (X) = \nu \FD[1/\theta]{\nu} (\*{-1} \odot X).
  \end{align*}
  %\begin{align*}
  %  \forall X \in \{-1, +1\}^L, \quad \overline{\pi} \FD{\overline{\nu}} (X) = \pi \FD[1/\theta]{\nu} (X) \quad \text{and} \quad \overline{\nu} \FD{\overline{\nu}} (X) = \nu \FD[1/\theta]{\nu} (X).
  %\end{align*}
Note that $\DKL{\pi}{\nu}= \DKL{\overline{\pi}}{\overline{\nu}}$, so it is sufficient to show that
  \begin{align*}
%    \DKL{\overline{\pi} \FDdown{\overline{\nu}}}{\overline{\nu} \FDdown{\overline{\nu}}} &\leq (1 - \theta^{400 C^2/\delta}) \DKL{\overline{\pi}}{\overline{\nu}}, \\
    %\text{and} \quad 
    \DKL{\overline{\pi} \FD{\overline{\nu}}}{\overline{\nu} \FD{\overline{\nu}}} &\leq  (1 - \theta^{10^5 C^5/\delta}) \DKL{\overline{\pi}}{\overline{\nu}}.
  \end{align*}
  Recalling \Cref{thm:FD-ent-decay-short}, since we have verified  $2C$-marginal stability in \Cref{lem:verify-K-tame}, and spectral independence in \Cref{thm:SI}, we conclude the proof.
\end{proof}

%Moreover, we have the following result, indicating that the Glauber dynamics $(Z_t)_{t\in \^N}$ on $(\theta^{-1} * \nu)^{\*{-1}_S}$ starting from the state $Z_0 = \*1$ is rapidly mixing.

%\begin{lemma} \label{lem:GD-good-0}
%  Suppose $n \geq 4$ and let $k > \e^9$ be a real number.
%  Suppose
%  \begin{align*}
%    \theta^{-1} \geq C \cdot k \Delta \log n / \lambda,
%  \end{align*}
%  then for $T \geq 0$, it holds that 
%  \begin{align*}
%    \DTV{Z_T}{(\theta^{-1} * \nu)^{\*{-1}_S}} &\leq n^{1 - \frac{T}{2 n \log n}} + n^{4-\log (k \cdot \log n)},
%  \end{align*}
%  holds for all $S \subseteq L$.
%\end{lemma}

As alluded to earlier, the success of field dynamics requires the existence of a subcritical regime, which is what allows us to efficiently simulate the field dynamics with Glauber dynamics. %for \BHC{} is highly non-trivial 
%Let $G = ((L, R), E)$  be a bipartite graph with $n=|L|$ vertices in $L$, and maximum degree at most $\Delta$ on $L$.
Recall that we let $\nu = \mu_L$ be the projection of the bipartite hardcore measure on $L$.
We show that the standard Glauber dynamics $(Z_t)_{t\in \^N}$ on $(\theta^{-1} * \nu)^{\*{-1}_S}_{L\setminus S}$ is rapidly mixing for every $S \subseteq L$.
\begin{lemma} \label{lem:GD-good-0}
	Let $k \geq \e^9$ be a real number,
%  Suppose
%  \begin{align*}
	and  $\theta$ satisfy $\theta^{-1} \geq C \cdot k \Delta \frac{\log n}{ \lambda}$.
  %\end{align*}
	Fix any $S \subseteq L$ and $\tau \in \Omega(\mu_S)$.  
	Then for $\ell: = \abs{L \setminus S}$ and any $T \geq 21 \cdot \ell\log \ell$, the Glauber dynamics $(Z_t)_{t\in \^N}$ on $(\theta^{-1} * \nu)^{\tau}_{L\setminus S}$ satisfies %for any $S \subseteq L$ and $\tau \in \Omega(\mu_S)$,
  \begin{align*}
    \DTV{Z_T}{(\theta^{-1} * \nu)^\tau_{L\setminus S}} &\leq 2 \ell^{- \ftp{\frac{T}{21 \cdot \ell\log \ell}}}.
  \end{align*}
\end{lemma}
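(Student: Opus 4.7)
The plan is to prove this mixing bound via a monotone coupling that exploits the strong marginal concentration forced by the hypothesis $\theta^{-1} \geq C k \Delta \log n/\lambda$. First, I observe that at every $i \in L \setminus S$ and under any pinning of the remaining free coordinates, the conditional marginal of $\sigma_i = -1$ under $(\theta^{-1}*\nu)^\tau$ is at most $\theta C/\lambda \leq 1/(k\Delta\log n)$, since the marginal ratio $\Pr{+1}/\Pr{-1}$ at $i$ equals $(\lambda/\theta)(1+\lambda)^{-m_i} \geq \lambda/(\theta C) \geq k\Delta\log n$, where $m_i \leq \Delta$ counts the ``eligible'' $R$-neighbors of $i$ (those with no occupied $L$-neighbor outside $i$). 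Moreover, the distribution is monotone in the partial order $+1 \prec -1$: setting any $L$-site to $-1$ only frees more $R$-vertices, weakly increasing $m_i$ for every other $i$, and hence weakly increasing the conditional probability of $-1$ there.

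Using monotonicity, I would set up a heat-bath monotone coupling between the chain $Z_t$ started at $\*{1}_{L\setminus S}$ (the $\prec$-minimum state) and a reference chain $Z_t^*$ started at stationarity $(\theta^{-1}*\nu)^\tau_{L\setminus S}$, so that $Z_t \preceq Z_t^*$ is maintained throughout. Setting $D_t := \{i : Z_t(i) \neq Z_t^*(i)\}$, the coupling inequality gives $\DTV{Z_T}{(\theta^{-1}*\nu)^\tau_{L\setminus S}} \leq \Pr{Z_T \neq Z_T^*} \leq \E{|D_T|}$. A single heat-bath update at a uniformly random site $i$ removes a disagreement at $i \in D_t$ with probability $1 - (p_i(Z_t^*) - p_i(Z_t))$, and creates a new one at $i \notin D_t$ with probability $p_i(Z_t^*) - p_i(Z_t) \leq \theta C/\lambda$, yielding $\E{|D_{t+1}|} = (1 - 1/\ell)\E{|D_t|} + (1/\ell)\, \E{\sum_i (p_i(Z_t^*) - p_i(Z_t))}$.

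The hard part will be to bound the aggregate introduction rate $\sum_i (p_i(Z_t^*) - p_i(Z_t))$ by a quantity that does not depend on the $R$-side degree $W$, which is unbounded. A naive per-pair bound summed over shared $R$-neighbors of $D_t$ gives a useless $W$-dependent estimate like $|D_t| \cdot \Delta W \cdot \theta C/\lambda$. To circumvent this, I would group the influences by $R$-vertex: each $v \in R$ contributes only when the disagreement in $D_t$ actually flips $v$'s ``eligibility'' status for some target site, which requires that all occupied $L$-neighbors of $v$ (other than the target) lie in $D_t$. Amortizing the per-$v$ contribution against edges incident to $D_t$ and using the pointwise bound $\theta C/\lambda \leq 1/(k\Delta\log n)$, I expect an aggregate introduction rate of order $|D_t|/(k\log n)$, which is negligible compared to the removal rate $|D_t|/\ell$.

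Combining these yields a geometric recurrence $\E{|D_{t+1}|} \leq (1 - c/\ell)\E{|D_t|}$ with an absolute constant $c > 1/10$, easily verified when $k \geq \e^9$. Running the coupling for $T \geq 21\ell\log\ell$ steps from $|D_0| \leq \ell$ then gives $\E{|D_T|} \leq \ell \cdot (1 - c/\ell)^T \leq \ell \cdot \exp(-cT/\ell)$; for $T = k \cdot 21 \ell \log \ell$ this equals $\ell^{1 - 21ck} \leq 2\ell^{-k}$ for every integer $k \geq 1$, so $\DTV{Z_T}{(\theta^{-1}*\nu)^\tau_{L\setminus S}} \leq \E{|D_T|} \leq 2\ell^{-\lfloor T/(21\ell\log\ell)\rfloor}$, as claimed.
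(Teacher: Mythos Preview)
Your monotone–coupling setup is sound and the marginal bound $p_i(Z)\le\theta C/\lambda$ is correct, but the proof breaks at the step you yourself flag as ``the hard part''. The claim that the aggregate introduction rate $\sum_i\bigl(p_i(Z_t^*)-p_i(Z_t)\bigr)$ can be amortized to $O\bigl(|D_t|/(k\log n)\bigr)$ against edges incident to $D_t$ is false as a pointwise statement. Take a single $R$-vertex $v$ of degree $W$ and a configuration pair with $|D_t|=1$, the sole disagreement at some $j\in\Gamma_v$, and all other $L$-neighbors of $v$ equal to $-1$ in both $Z_t$ and $Z_t^*$. Then $v$ is eligible for every $i\in\Gamma_v\setminus\{j\}$ in $Z_t^*$ but for none of them in $Z_t$, so $m_i(Z_t^*)=m_i(Z_t)+1$ for $W-1$ targets $i$, each contributing $\Theta(\theta C/\lambda)$. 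The introduction rate is therefore $\Theta\bigl(W/(k\Delta\log n)\bigr)$ against $|D_t|=1$, with only one $v$–$D_t$ edge to amortize over; this can exceed $1$ and destroy contraction. Your condition ``all occupied $L$-neighbors of $v$ (other than the target) lie in $D_t$'' is correct, but it bounds neither $|A_v|$ nor the per-$v$ contribution in terms of $|\Gamma_v\cap D_t|$.

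The configuration above is exponentially unlikely under stationarity, and that is exactly what the paper exploits: it introduces a good event $\mathcal{G}(m)$ (every $R$-vertex of degree $>m$ has an occupied $L$-neighbor other than any given $u$), runs both chains independently for a burn-in of $\Theta(n\log n)$ steps so that both lie in $\mathcal{G}(\log n)$ with high probability, and then does \emph{path coupling} conditional on $\mathcal{G}$. On $\mathcal{G}$, only low-degree $R$-vertices can propagate disagreement, and the number of affected two-hop neighbors of the single disagreeing site is at most $\Delta m=\Delta\log n$; combined with the per-site bound $\eta\le C/\lambda$ this gives the contraction $1-(1-\eta\Delta m)/n$. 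The rare failures of $\mathcal{G}$ are absorbed by the ``coupling with stationary'' lemma of Hayes (\Cref{lem:coupling-with-stationary}). Your monotone framework could in principle be salvaged by inserting the same good-event layer (and, since $Z_t\preceq Z_t^*$ propagates $\mathcal{G}$ downward, you would only need it for the stationary copy), but that is precisely the missing idea, and without it the drift inequality you wrote does not hold.
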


\Cref{lem:GD-good-0} is proved in \Cref{sec:GD-good}.

Our last ingredient is a standard argument, which basically allows us to reduce the task of sampling from $\mu$ to the task of sampling from $\nu$.
  \begin{lemma} \label{lem:nu-to-mu}
  Let $U$ be any ground set, $\mu$ be any distribution over $\{-1, +1\}^U$ and $\nu: = \mu_S$ for some $S \subsetneq U$.
  Let $X$ be any random vector on $\{-1, +1\}^S$, and $Y\sim\mu^X$ be another random vector $\{-1, +1\}^U$.
  Then it holds that $\DTV{Y}{\mu} \leq \DTV{X}{\nu}$.
\end{lemma}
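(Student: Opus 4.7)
The plan is to recognize this lemma as a direct instance of the data processing inequality for total variation distance. The key observation is that the map $x \mapsto \mu^x$ defines a Markov kernel $K$ from $\{-1,+1\}^S$ to $\{-1,+1\}^U$, namely $K(x,y) := \mu^x(y)$, with the property that composing $\nu$ with $K$ recovers $\mu$. Indeed, since $\nu = \mu_S$, the factorization $\mu(y) = \mu_S(y_S) \cdot \mu^{y_S}(y) = \nu(y_S) K(y_S, y)$ shows $\mu = \nu K$, while by construction of $Y$, the law of $Y$ is $\mathcal{L}(X) K$.

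I would then bound the total variation distance by writing
\[
\DTV{Y}{\mu} = \frac{1}{2} \sum_{y \in \{-1,+1\}^U} \left| \sum_{x \in \{-1,+1\}^S} \bigl(\Pr{X=x} - \nu(x)\bigr) K(x,y) \right|,
\]
applying the triangle inequality to pull the absolute value inside, and then swapping the order of summation:
\[
\DTV{Y}{\mu} \leq \frac{1}{2} \sum_{x} \bigl|\Pr{X=x} - \nu(x)\bigr| \sum_y K(x,y) = \frac{1}{2} \sum_x \bigl|\Pr{X=x} - \nu(x)\bigr| = \DTV{X}{\nu},
\]
using that $K(x,\cdot)$ is a probability distribution and hence sums to $1$ for each fixed $x$.

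There is essentially no obstacle here; the only points requiring care are (i) confirming the identity $\mu = \nu K$, which is just the chain rule of conditional probability applied to $\mu$ decomposed over its $S$-marginal, and (ii) making sure that $Y$ is indeed generated by first sampling $X$ and then applying the same kernel $K$, so that the common kernel structure allows the triangle inequality step. Both are immediate from the setup, so the proof should be a short, two- or three-line calculation.
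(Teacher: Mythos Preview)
Your proof is correct and complete. However, the paper takes a different route: it uses a coupling argument rather than the direct data-processing computation. Specifically, the paper invokes the coupling lemma to produce an optimal coupling $(X, X_0)$ with $X_0 \sim \nu$ and $\Pr{X \neq X_0} = \DTV{X}{\nu}$, then sets $Y_0 \sim \mu^{X_0}$ using the same auxiliary randomness as $Y$ whenever $X = X_0$, so that $X = X_0$ forces $Y = Y_0$. The coupling lemma then gives $\DTV{Y}{\mu} \leq \Pr{Y \neq Y_0} \leq \Pr{X \neq X_0} = \DTV{X}{\nu}$.

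Both arguments are standard and of comparable length. Your algebraic proof makes the Markov-kernel structure and the data-processing inequality explicit, which is clean and generalizes immediately to any channel. The paper's coupling proof is more probabilistic in spirit and avoids writing out the sums, at the cost of relying on the coupling characterization of total variation distance. Neither approach has a real advantage here; they are two equally natural ways to see the same contraction.
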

\begin{proof}
  By the coupling lemma, there is a vector $X_0 \sim \nu$ such that $\Pr{X \neq X_0} = \DTV{X}{\nu}$.
  We construct vector $Y_0$ by sampling from $\mu^{X_0}$.
  By the coupling lemma,
  \begin{align*}
    \DTV{Y}{\mu} \leq \Pr{Y \neq Y_0}
    &\leq \Pr{Y\neq Y_0 \mid X = X_0 } + \Pr{X \neq X_0} \\
    &\overset{(\star)}{=} \Pr{X \neq X_0} = \DTV{X}{\nu},
  \end{align*}
  where in $(\star)$, by definition, $X = X_0$ implies that $Y = Y_0$.
\end{proof}

Now, we are ready to prove \Cref{thm:main-unique}.

\begin{proof}[Proof of \Cref{thm:main-unique}]

  Let $\theta = (C \cdot \e^9 \Delta \log n / \lambda)^{-1}$.
  Moreover, let
  \begin{align*}
    T &=  \tp{C \cdot \e^9 \Delta \log n / \lambda}^{10^5 C^5/\delta} \cdot \log \frac{n \log C}{\epsilon^2/2} \\
    \text{and} \quad m &= \ctp{\frac{\log (4T/\epsilon)}{\log n}} \cdot 21 n\log n.
  \end{align*}

  \paragraph{The total running time:}
  We note that by our choice of parameters, it holds that
  \begin{align*}
    T =  \tp{\Delta \log n / \lambda}^{O(C^5/\delta)} \cdot \log(1/\epsilon).
  \end{align*}
  We claim that each update of the Glauber dynamics on $\nu$ can be performed in $O(\Delta)$ time,
  and the last step of our algorithm takes at most $O(|R|) = O(n\Delta)$ time.
  Hence the running time after $T$ iterations in our algorithm as described in~\Cref{def:algo} is bounded by
  \begin{align*}
	  m T  \cdot O(\Delta) + O(n\Delta) &= n \cdot  \tp{\Delta \log n / \lambda}^{O(C^5/\delta)} \cdot \log^2(1/\epsilon).
  \end{align*}

  Recall that $C = (1 + \lambda)^\Delta$.
  When $\Delta \geq 3$, it holds that $\lambda_c(\Delta) \leq 3\e^2/\Delta$.
  Hence we have $(1 + \lambda)^\Delta \leq \exp(3\e^2)$ bounded by an absolute constant for any $\lambda \le \lambda_c(\Delta)$.
  When $\Delta = 2$, such a bound does not apply. Instead, $C= (1 + \lambda)^\Delta= (1 + \lambda)^2$.
  %Since $\Delta = 2$ is a constant, it holds that $\lambda$ is bounded by some function of $f(\delta)$ and so does $C$.
%{\color{red}TODO: explain this better.}

%We claim that each update of the Glauber dynamics on $\nu$ can be implemented in $O(\Delta)$. %, which can be absorbed into $O_\delta(1)$ in the exponent.
  It remains to verify that each update of the Glauber dynamics can be performed in $O(\Delta)$ time, and the last step of our algorithm takes at most $O(|R|)$ time.
Let $(Z_t)_{t\in \^N}$ be the Glauber dynamics on $\nu$. 
We maintain a $\-{Cnt}_t(v) := \abs{\{u \in \Gamma_v \mid Z_t(u) = +1\}}$ for each $v \in R$.
When $u \in L$ is picked at time $t$, the marginal probability can be calculated by
  \begin{align*}
     \Pr{Z_t(u) = +1} = \frac{\lambda}{\lambda + (1 + \lambda)^{\sum_{v \in \Gamma_u} \*1[\-{Cnt}_{t-1}(v) = 0]}}.
  \end{align*}
%  This probability can be used to generate $Z_t(u)$.
  Since the summation only involves neighbors of $u$, and $\-{Cnt}_t$ can be obtained from $\-{Cnt}_{t-1}$ by updating entries for neighbors of $u$,
  it is straightforward to see that in each update of the Glauber dynamics, all these operations can be implemented in $O(\Delta)$.

  Finally, for the last step of algorithm, for each $v \in R$, we sample $W_v\in \{+1, -1\}$ according to 
  \[
     \Pr{W_v = +1} = \frac{\lambda}{\lambda + 1} \cdot \*1[\-{Cnt}_{\star}(v) = 0],
  \]
  where $\-{Cnt}_{\star}$ is the vector maintained by the Glauber dynamics for generating $X_T$ as in \Cref{def:algo}.
 % \[
 %    \Pr{W_v = +1} = \frac{\lambda}{\lambda + (1 + \lambda)^{\-{Cnt}_{T}(v)}}.
 % \]
  
\paragraph{The error bounds $\DTV{X_T}{\nu}$ and $\DTV{(X_T,W)}{\mu}$:}
Let $(X_t)_{0 \leq t \leq T}$ be the process on $L$ generated by our algorithm (except for the last step), and $ (Y_t)_{0 \leq t \leq T}$ be the process generated by the transition rule $\FD[1/\theta]{\nu}$.
We start both processes from the same initial state $X_0 = Y_0 = \*{-1}$.
Note that by our choice of $\theta$ and $T$, it holds that $\theta^{-1} \geq C\cdot \e^9\Delta \frac{\log n}{\lambda}$, which meets the requirement for \Cref{lem:GD-good-0} to apply.
Recall that in every iteration of our algorithm, the transition rule $\FD[1/\theta]{\nu}$ is replaced by:
\begin{enumerate}
	\item the same subsampling step of generating $S$;
	\item instead of sampling from $(\theta^{-1} * \nu)^{\*{-1}_S}$ directly, we run \PGDm{m}{(\theta^{-1} * \nu)^{\*{-1}_S}} to generate a sample.
\end{enumerate}
Applying~\Cref{lem:GD-good-0}, the error in TV-distance introduced by the $m$-step Glauber dynamics is at most $2 n^{-\ftp{\frac{m}{21n\log n}}}$.
This means that for any given $X_t = Y_t$, there is a coupling between $X_{t+1}$ and $Y_{t+1}$, such that
  \begin{align*}
    \Pr{X_{t+1} \neq Y_{t+1} \mid X_t, Y_t} \leq 2 n^{-\ftp{\frac{m}{21n\log n}}}
  \end{align*}
  Hence, we can construct a coupling between the two process $(X_t)_{0 \leq t \leq T}$ and $(Y_t)_{0 \leq t \leq T}$ as follows: for $1 \leq t \leq T$,
  \begin{enumerate}
  \item if $X_{t-1} = Y_{t-1}$, generate $X_t, Y_t$ from the coupling offered by \Cref{lem:GD-good-0};
  \item otherwise, generate $X_t, Y_t$ independently.
  \end{enumerate}
  Hence, by union bound, for our choice of $m$, it holds that
  \begin{align} 
   \DTV{X_T}{Y_T}\le \Pr{X_T \neq Y_T}
    \label{eq:eps-tar-1} &\leq 2 T n^{-\ftp{\frac{m}{21 n\log n}}}\le\frac{\epsilon}{2}.
  \end{align}
  On the other hand, by \Cref{lem:ent-decay-field}, it holds that 
  \begin{align}
    \DTV{Y_T}{\nu}
    \nonumber &\leq \sqrt{\frac{1}{2} \DKL{Y_T}{\nu}} = \sqrt{\frac{1}{2} \DKL{\*1_{[X = \*{-1}]} (\FD[1/\theta]{\nu})^T}{\nu (\FD[1/\theta]{\nu})^T}} \\
    \nonumber &\leq \sqrt{\frac{1}{2}  (1 - \theta^{10^5 C^5/\delta})^T \DKL{\*1_{[X = \*{-1}]}}{\nu}} \\
    \label{eq:eps-tar-2} &\leq \sqrt{\frac{1}{2}  (1 - \theta^{10^5 C^5/\delta})^T \cdot n \log (1 + \lambda)},
  \end{align}
  where the first inequality is the Pinsker's inequality, the second inequality holds by \Cref{lem:ent-decay-field}, and the last inequality holds because $1 / \nu(\*{-1}) \leq (1 + \lambda)^n$.
  Meanwhile, notice that
  \begin{align*}
    T
    &\geq  \theta^{-10^5 C^5/\delta} \cdot \log \frac{n \log C}{\epsilon^2/2} 
=  \tp{C \cdot \e^9 \Delta \frac{\log n}{ \lambda}}^{10^5 C^5/\delta}  \cdot \log \frac{n \log C}{\epsilon^2/2}.
  \end{align*}
  Plugging this into $\eqref{eq:eps-tar-2}$, we have $\DTV{Y_T}{\nu}\le {\epsilon}/{2}$. Recall $\DTV{X_T}{Y_T}\le  {\epsilon}/{2}$ in \eqref{eq:eps-tar-1}.
Therefore, 
\[
\DTV{X_T}{\nu} \leq \DTV{X_T}{Y_T} + \DTV{Y_T}{\nu}\le \epsilon.
\] 
    %in order to show that our choice of parameters makes $\DTV{X_T}{\nu} \leq \epsilon$, it is sufficient to show that $\eqref{eq:eps-tar-1} \leq \epsilon/2$ and $\eqref{eq:eps-tar-2} \leq \epsilon/2$, respectively.

%To verify that $\eqref{eq:eps-tar-1} \leq \epsilon / 2$, 
%%
%%  \paragraph{Verification of $\eqref{eq:eps-tar-1} \leq \epsilon / 2$:}
%  by our choice of $m$, it holds that 
%  \begin{align*}
%    T n^{- \ftp{\frac{m}{21 n\log n}}} \leq 2 T n^{- \frac{\log (4T/\epsilon)}{\log n}} = \epsilon / 2.
%  \end{align*}
%To verify that $\eqref{eq:eps-tar-2} \leq \epsilon / 2$, 
%
%  \paragraph{Verification of $\eqref{eq:eps-tar-2} \leq \epsilon / 2$:}
%   it is sufficient to notice that

Finally, we note that in the last step of our algorithm, $W \sim \mu^{X_T}$ is sampled faithfully.
By \Cref{lem:nu-to-mu}, we have $\DTV{(X_T,W) }{\mu} \le \DTV{X_T}{\nu} \le \epsilon$,

%which follows from a standard argument as follows.
%  By $\DTV{X_T}{\nu} \le \epsilon$, it means that there is a coupling of distributions $(X_T,\nu)$ such that  $\Pr[(X_T,X \sim \nu)]{X_T \neq X} \le \epsilon$.
%  Since $\mu$ conditional on $\nu$ is a product measure, such a coupling can be extended to a coupling of distributions $((X_T,W),\mu)$, such that
%  $\DTV{(X_T,W) }{\mu} \le\Pr[((X_T,W), (X,W')\sim \mu]{(X_T,W) \neq (X,W')} \le \Pr[(X_T,X \sim \nu)]{X_T \neq X} \le \epsilon$.
  %{\color{red} TODO: needs a better way to say this?}
\end{proof}

\subsection{Rapid mixing of the Glauber dynamics via Markov chain comparison}\label{sec:main-GD}
%\Cref{sec:GD-mu-mix}.
%
Through a novel comparison argument between the field dynamics defined  on only one side of the bipartite graph and the single-site Glauber dynamics defined on the entire graph,
we prove the rapid mixing of the standard Glauber dynamics as stated in \Cref{thm:GD-mu-mix-intro}.

The Glauber dynamics $(X_t)_{t\in\^N}$ (a.k.a. Gibbs sampler) is a canonical single-site Markov chain for sampling. 
Consider an abstract distribution $\pi$ over $\{-1, +1\}^n$. 
The Glauber dynamics on $\pi$, denoted by $\GD[\pi]$, is a Markov chain on space $\Omega(\pi)$, with the $t$-th transition defined naturally as:
\begin{enumerate}
\item pick a coordinate $i \in [n]$ uniformly at random;
\item sample $X_t \sim \pi(\cdot \mid X_{t-1}(V\setminus \{i\}))$.
\end{enumerate}
It is well known that the chain is reversible with respect to the stationary distribution $\pi$, and moreover, $\GD[\pi]$ has non-negative spectrum  (see \Cref{lem:GD-psd} for details).%~\cite{levin2017markov}.

We assume the same bipartite hardcore model as in \Cref{sec:main-unique}, 
on bipartite graph $G = ((L, R), E)$ with $n = \abs{L}$ vertices and degree bound $\Delta = d + 1\ge 2$ on one side $L$.
Let $\mu$ be the hardcore distribution on $G$ with fugacity $\lambda< \lambda_c(\Delta)$, and let $\nu=\mu_L$.
We are interested in the standard Glauber dynamics $\GD[\mu]$ for the hardcore distribution $\mu$ and its one-sided version $\GD[\nu]$.

We will show the rapid mixing of these Glauber dynamics through variance decay.
%
%Let $P$ be the transition matrix of a reversible, irreducible, and aperiodic Markov chain with stationary distribution $\mu$.
Let $\lambda_2(P)$ be the second largest eigenvalue of a irreducible, aperiodic, and reversible chain $P$.
And denote by $\sgap{P}$ its \emph{spectral gap}:
\[
\sgap{P} := 1 - \lambda_2(P).
\]
%It is well known (see \cite[Theorem 12.4]{levin2017markov}) that the mixing time can be bounded by $\Tmix{\epsilon} \leq \frac{1}{\sgap{P}} \log \tp{\frac{1}{\epsilon \pi_{\min}}}$ if $P$ has nonnegative spectrum.

First, we prove the following result for the spectral gap of the Glauber dynamics on one side.
\begin{theorem} \label{lem:sgap-nu}
%Fix a degree $\Delta=d+1\ge 2$, $\delta\in(0,1)$, fugacity $\lambda>0$, such that 
If $(\lambda, d)$ is $\delta$-unique, 
then it holds for the Glauber dynamics $\GD[\nu]$ on $\nu=\mu_L$ that
\[
\sgap{\GD[\nu]} \geq (22 n)^{-1} \cdot  \tp{\frac{C\cdot \e^9 \Delta \log n}{\lambda}}^{-10^5 C^5/\delta} ,
\]
where $n=|L|$ and $C = (1 + \lambda)^\Delta$.
\end{theorem}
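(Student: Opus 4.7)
The plan is to lift the entropy contraction for the field dynamics $\FD[1/\theta]{\nu}$ from Lemma~\ref{lem:ent-decay-field} to a spectral-gap bound for $\GD[\nu]$ through a comparison / approximate-tensorization-of-variance argument that loses essentially only a factor of $n$.

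Since $\FD[1/\theta]{\nu}$ is reversible with respect to $\nu$, its entropy contraction at rate $\kappa := \theta^{10^5 C^5/\delta}$ implies a Poincar\'e inequality at rate $\Omega(\kappa)$: specializing the KL-bound to infinitesimal perturbations $\pi = \nu(1+\epsilon g)$ with $\E[\nu]{g}=0$ and letting $\epsilon\to 0$ gives $\Var[\nu]{P_{\FD}g}\le (1-\kappa)\Var[\nu]{g}$, hence $\sgap{\FD[1/\theta]{\nu}}\ge \Omega(\kappa)$. Unfolding the definition of $\FD[1/\theta]{\nu}$ rephrases this as an approximate block factorization of variance
\[
\Var[\nu]{f} \;\le\; \frac{O(1)}{\kappa}\,\E[\sigma\sim\nu,\,S\mid\sigma]{\Var[(\theta^{-1}*\nu)^{\*{-1}_S}_{L\setminus S}]{f(\*{-1}_S,\cdot)}},
\]
where $S$ is the random subset sampled in one $\FD[1/\theta]{\nu}$-step from $\sigma$.

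I would then reduce each block variance under the tilted conditional to a sum of single-site variances in $\nu$ itself. With $\theta = (C\e^9\Delta\log n/\lambda)^{-1}$, Lemma~\ref{lem:GD-good-0} yields an $O(|B|\log|B|)$ mixing-time bound for the single-site Glauber dynamics on the tilted conditional on the block $B = L\setminus S$, which for hardcore-type chains upgrades to a Poincar\'e constant $\Omega(1/|B|)$ on the block. A pointwise comparison between the tilted and the plain single-site conditional marginals (both hardcore marginals differing only by a one-vertex tilt and a pinning $\*{-1}_S$) then transports these tilted single-site variances into variances under $\nu$, at the cost of a bounded multiplicative factor amortized over the random $S$. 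Composing the bounds gives $\Var[\nu]{f} \le O(n/\kappa) \sum_{i\in L}\E[\nu]{\Var[\nu^{L\setminus\{i\}}]{f}}$, i.e.\ $O(n/\kappa)$-approximate tensorization of variance for $\nu$. The standard equivalence between approximate tensorization and single-site Glauber spectral gap then gives $\sgap{\GD[\nu]} \ge \Omega(\kappa/n)$; plugging in $\kappa$ and absorbing the $\log n$ factor into $\theta^{-1}$ produces the claimed bound.

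The main obstacle is the second step. A naive Dirichlet-form chaining through (tilted block mixing) and (tilted vs.\ plain single-site marginals) would produce a blowup of order $\theta^{-1}\cdot n \log n$, far exceeding the target $O(n)$. Handling this cleanly requires an amortized argument: because the subsampling in $\FD[1/\theta]{\nu}$ places only currently-negative vertices into $S$, the $\*{-1}_S$ pinning and the $\theta^{-1}$ tilt are highly correlated through the distribution of $S$, so their effects cancel on average rather than compounding multiplicatively. Making this cancellation rigorous under only a one-sided degree bound---with the $R$-side vertices possibly of unbounded degree---is the technical heart of the argument and is precisely what distinguishes the bipartite hardcore case from prior Glauber-via-field-dynamics analyses.
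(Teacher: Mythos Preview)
Your first two steps match the paper: entropy contraction of the field dynamics is linearized to a Poincar\'e inequality (this is exactly \Cref{lem:sgap-field}), and the subcritical mixing of \Cref{lem:GD-good-0} is converted to a spectral-gap bound for the tilted-block Glauber dynamics (this is \Cref{cor:sgap-good}). Where you go astray is in the last step, which you describe as the ``technical heart'' requiring a bipartite-specific amortized cancellation. In fact the paper handles this step by a black-box citation to the \emph{field dynamics comparison lemma} (\Cref{lem:var-boost}, from \cite[Lemma~2.4]{chen2021rapid}), valid for \emph{any} distribution $\pi$ on $\{-1,+1\}^n$:
\[
\sgap{\GD[\pi]} \;\ge\; \sgap{\FD{\pi}}\cdot \min_{\Lambda,\tau}\sgap{\GD[(\theta*\pi)^\tau]}.
\]
Plugging in $\pi=\overline{\nu}$, the two ingredients above, and $\theta=(C\e^9\Delta\log n/\lambda)^{-1}$ gives the claim in one line; nothing about unbounded $R$-side degrees enters here.

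The cancellation you are looking for is built into the proof of \Cref{lem:var-boost} and is entirely general. With $p:=\pi_i(+1\mid\sigma_{-i})$, $p_\theta:=\theta p/(\theta p+1-p)$, and $D:=f(\sigma_{-i},+1)-f(\sigma_{-i},-1)$, the tilted single-site variance at $\sigma$ equals $p_\theta(1-p_\theta)D^2$. Averaging against the joint law of $(\sigma,R)$ (namely $\sigma\sim\pi$ followed by the down step of $\FD{\pi}$) together with the indicator $\*1[i\notin R]$ contributes the factor $(1-p)+\theta p$ from $\E{\*1[i\notin R]\mid\sigma}$, and a two-line computation gives
\[
\E[R]{(\theta*\pi)^{\*1_R}[\Var[i]{f}]\cdot\*1[i\notin R]}
\;=\; \E[\sigma_{-i}\sim\pi_{-i}]{\frac{\theta\,p(1-p)}{\theta p+1-p}\,D^2}
\;\le\; \pi[\Var[i]{f}].
\]
No $\theta^{-1}$ or $\log n$ blowup ever appears: the correlation between the tilt and the random pinning cancels by identity, not by any hardcore- or degree-specific argument. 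All of the bipartite-specific content of \Cref{lem:sgap-nu} is upstream, in the two inputs (\Cref{lem:ent-decay-field} via \Cref{thm:SI}, and \Cref{lem:GD-good-0}); the comparison step itself is off-the-shelf.
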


We remark that \Cref{lem:sgap-nu} already implies the rapid mixing of the Glauber dynamics on $\nu$. Furthermore, its mixing time only has a quadratic dependency on $n$.
The proof of the theorem follows from the so-called ``field dynamics comparison lemma''.

\begin{lemma}[\text{\cite[Lemma 2.4]{chen2021rapid}}] \label{lem:var-boost}
  Let $\pi$ be a distribution over $\{-1, +1\}^n$.
  For all $\theta \in (0, 1)$, we have
  \begin{align*}
    \sgap{\GD[\pi]} &\geq \sgap{\FD{\pi}} \cdot \min_{\Lambda \subseteq [n], \tau \in \Omega(\pi_\Lambda)} \sgap{\GD[(\theta * \pi)^\tau]}.
  \end{align*}
\end{lemma}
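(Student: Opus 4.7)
I would prove the lemma through a Dirichlet form comparison. Since both $\GD[\pi]$ and $\FD{\pi}$ are reversible with respect to $\pi$, their spectral gaps admit the Rayleigh quotient characterization $\sgap{P} = \inf_f \mathcal{E}_P(f,f)/\Var[\pi]{f}$. Hence it suffices to establish the Dirichlet form inequality
\begin{align*}
  \mathcal{E}_{\GD[\pi]}(f,f) \geq c \cdot \mathcal{E}_{\FD{\pi}}(f,f), \qquad c := \min_{\Lambda \subseteq [n],\, \tau \in \Omega(\pi_\Lambda)} \sgap{\GD[(\theta*\pi)^\tau]},
\end{align*}
for every $f : \Omega(\pi) \to \mathbb{R}$. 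Combined with the Poincar\'e inequality $\sgap{\FD{\pi}}\cdot \Var[\pi]{f} \leq \mathcal{E}_{\FD{\pi}}(f,f)$, this immediately yields $\sgap{\GD[\pi]} \geq \sgap{\FD{\pi}}\cdot c$.

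The first step is to express $\mathcal{E}_{\FD{\pi}}(f,f)$ in its block heat-bath form. Using the subsampling rule --- any coordinate with value $-1$ is always included in $S$, forcing $\sigma_{\bar S} = \*1_{\bar S}$ in every reachable pair $(\sigma, S)$ --- together with the identity that $\sigma_S \mid S$ is marginally distributed as $(\theta*\pi)^{\*1_{\bar S}}_S$ (the factor $\theta^{|\bar S|}$ cancels upon renormalization) and the reversibility of $\FD{\pi}$, one obtains
\begin{align*}
  \mathcal{E}_{\FD{\pi}}(f,f) = \E[S]{\Var[(\theta*\pi)^{\*1_{\bar S}}_S]{f}},
\end{align*}
where the outer expectation is over the marginal of $S$ under $\sigma \sim \pi,\ S \sim p(\cdot\mid \sigma)$. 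For each realization of $S$, the Poincar\'e inequality for $\GD[(\theta*\pi)^{\*1_{\bar S}}]$ bounds the inner variance by $\mathcal{E}_{\GD[(\theta*\pi)^{\*1_{\bar S}}]}(f,f)/c$.

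It then remains to prove the Dirichlet form domination
\begin{align*}
  \E[S]{\mathcal{E}_{\GD[(\theta*\pi)^{\*1_{\bar S}}]}(f,f)} \leq \mathcal{E}_{\GD[\pi]}(f,f).
\end{align*}
Both sides expand, via the Bernoulli-variance identity $\Var[\mathrm{Ber}(p)]{g} = p(1-p)(g(+1)-g(-1))^2$, into weighted sums of squared coordinate-differences $(\Delta_i f(\sigma_{\bar i}))^2 := (f(\sigma_{\bar i}, +1)-f(\sigma_{\bar i}, -1))^2$. The right-hand side assigns the weight $\pi_{\bar i}(\sigma_{\bar i})\,\pi_i^{\sigma_{\bar i}}(+1)\,\pi_i^{\sigma_{\bar i}}(-1)/n$ to each $(\sigma_{\bar i}, i)$. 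The left-hand side aggregates contributions from all subsampled blocks $T \ni i$ consistent with $\sigma_{\bar i}$ --- i.e.\ requiring $\sigma_j = +1$ for $j \in \bar T$ --- weighted by $\Pr{S=T}/|T|$ times the tilted single-coordinate Bernoulli-variance factor. Using the clean identity $\Pr{S=T} = (1-\theta)^{|\bar T|}\, Z_T$, with $Z_T$ denoting the tilted partition function on $T$, the comparison reduces to a combinatorial identity involving binomial sums over the subsampling probabilities.

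The main obstacle will be verifying this combinatorial comparison. A naive pointwise comparison of the coordinate weights does not succeed --- small examples show that the locally tilted Bernoulli-variance factor can exceed the untilted one at individual $(\sigma_{\bar i}, i)$ --- so the proof must integrate globally over the subsampling distribution of $S$ and exploit cancellations across configurations. I expect this amortized argument to be the main technical content. Once the identity is established, chaining the three bounds above yields $\mathcal{E}_{\FD{\pi}}(f,f) \leq c^{-1}\mathcal{E}_{\GD[\pi]}(f,f)$, completing the proof via the Poincar\'e inequality for $\FD{\pi}$.
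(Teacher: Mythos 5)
This lemma is not proved in the paper at all: it is quoted verbatim from \cite[Lemma 2.4]{chen2021rapid}, so there is no internal proof to compare against. Your proposal essentially reconstructs the original argument, and its skeleton is correct: the identity $\+E_{\FD{\pi}}(f,f)=\E[R]{\Var[(\theta*\pi)^{\*1_R}]{f}}$ coming from the down--up factorization $\FD{\pi}=\FDdown{\pi}\FDup{\pi}$ (\Cref{def:FD-down-up}), then the Poincar\'e inequality for each pinned tilted measure $(\theta*\pi)^{\*1_R}$, and finally the domination $\E[R]{\+E_{\GD[(\theta*\pi)^{\*1_R}]}(f,f)}\le \+E_{\GD[\pi]}(f,f)$, chained together to give the stated bound.

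The only place you stop short is that last domination, which you predict requires a global amortization ``exploiting cancellations across configurations.'' It does not: once you aggregate over the random retained set consistent with a fixed $(\sigma_{\bar i},i)$, the comparison holds pointwise. The joint stationary law of the down walk gives $\Pr{R}\,(\theta*\pi)^{\*1_R}(\sigma)=\pi(\sigma)\,\*1[R\subseteq\sigma^{-1}(+1)]\,(1-\theta)^{\abs{R}}\theta^{\abs{\sigma}_+-\abs{R}}$, so summing over all $R\not\ni i$ the binomial sums collapse and the total mass attached to $(\sigma_{\bar i},i)$ on the left is $\pi_{\bar i}(\sigma_{\bar i})\bigl(1-(1-\theta)p\bigr)$, where $p=\pi_i^{\sigma_{\bar i}}(+1)$. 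Since the tilted conditional probability is $q=\theta p/(1-(1-\theta)p)$, the left-hand coordinate weight equals
\begin{align*}
  \pi_{\bar i}(\sigma_{\bar i})\bigl(1-(1-\theta)p\bigr)\,q(1-q)
  \;=\;\pi_{\bar i}(\sigma_{\bar i})\,\frac{\theta p(1-p)}{1-(1-\theta)p}
  \;\le\;\pi_{\bar i}(\sigma_{\bar i})\,p(1-p),
\end{align*}
because $1-(1-\theta)p\ge\theta$. You are right that comparing only $q(1-q)$ against $p(1-p)$ fails when $p>1/2$; the missing ingredient is exactly the aggregated mass factor $1-(1-\theta)p$, not any cancellation across configurations. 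With this one-line computation (and a consistent convention that the Glauber dynamics on a pinned measure selects all $n$ coordinates, acting lazily on pinned ones, so that both sides carry the same $1/n$ normalization) your three inequalities close and the lemma follows, matching the proof in \cite{chen2021rapid}.
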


This lemma relates the spectral gap of the Glauber dynamics $\GD[\pi]$ on a distribution $\pi$ to the spectral gap of the Glauber dynamics in a subcritical regime $\theta * \pi$, through the spectral gap of the field dynamics $\FD{\pi}$.
In order to prove~\Cref{lem:sgap-nu}, we apply \Cref{lem:var-boost} to $\pi=\bar{\nu}$, where $\bar{\nu}$ is the distribution obtained from $\nu$ by flipping the signs as in \eqref{eq:sign-flipping-nu}.
Specifically:
\begin{itemize}
\item
The lower bound on the spectral gap $\sgap{\FD{\pi}}$ for the field dynamics ${\FD{\pi}}={\FD{\bar{\nu}}}={\FD[1/\theta]{\nu}}$
is implied by the entropy decay stated in \Cref{lem:ent-decay-field}.  
The implication from entropy decay to variance decay is through a standard trick called linearization~\cite{rothaus1981diffusion, jerrum2003counting, caputo2015approximate}.

\item The lower bound on the spectral gap $\sgap{\GD[(\theta * \pi)^\tau]}$ for the Glauber dynamics in subcritical regime $\theta * \pi=\theta * \bar{\nu}=\theta^{-1} * \nu$ is implied by its rapid mixing, stated in~\Cref{lem:GD-good-0}.
This implication from mixing time to spectral gap is standard (see e.g.~\cite[Corollary 12.7]{levin2017markov}).
\end{itemize}
\Cref{lem:sgap-nu} follows naturally by combining these together. 
A detailed proof is given in \Cref{sec:sgap-nu}.

Next, we prove the following comparison result for the spectral gaps of the Glauber dynamics $\GD[\mu]$ on two sides and the Glauber dynamics $\GD[\nu]$ on one side.
\begin{lemma} \label{lem:comp-nu-mu}
If $\lambda \leq (1 - \delta)\lambda_c(\Delta)$, 
then it holds for the Glauber dynamics $\GD[\mu]$ and $\GD[\nu]$ %respectively on $\mu$ and $\nu=\mu_L$, 
that 
\[
\sgap{\GD[\mu]} \geq \sgap{\GD[\nu]} \cdot \zeta \cdot ((\Delta+1) n)^{-1},
\]
where  $n=|L|$ , $\zeta = 50^{-400/\delta}$ for $\Delta \geq 3$, and $\zeta = (9 \cdot 4^7(1+\lambda)^8)^{-1}$ for $\Delta = 2$.
%  \begin{itemize}
%  \item when $\Delta \geq 3$, then $\zeta = 50^{-400/\delta}$;
%  \item when $\Delta = 2$, then $\zeta = (4^7(1+\lambda)^6)^{-1}$.
%  %\item when $\Delta \geq 3$, then $\sgap{P} \geq 50^{-300/\delta} \cdot (3\Delta)^{-1}$;
%  %\item when $\Delta = 2$, then $\sgap{P} \geq (10 (1+\lambda)^6)^{-1}$.
%  \end{itemize}
\end{lemma}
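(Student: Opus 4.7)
The plan is to establish an approximate tensorization of variance (ATV) for $\mu$ of the form $\Var[\mu]{f}\le K\,(|L|+|R|)\,\mathcal{E}_{\GD[\mu]}(f,f)$ with $K$ of order $1/(\zeta\,\sgap{\GD[\nu]})$; combined with $|L|+|R|\le(\Delta+1)n$ this yields the stated comparison via the Poincar\'e inequality. To begin, I set $g(X_L):=\E[\mu]{f\mid X_L}$ and apply the total law of variance to write $\Var[\mu]{f}=\E[\mu]{\Var[\mu]{f\mid X_L}}+\Var[\nu]{g}$. The first summand is handled immediately: since $\mu_R^{X_L}$ is a product measure of independent Bernoullis, Efron--Stein with constant $1$ yields $\E[\mu]{\Var[\mu]{f\mid X_L}}\le\sum_{v\in R}\E[\mu]{\Var[\mu]{f\mid X_{-v}}}$. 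The second summand is controlled by the Poincar\'e inequality for $\GD[\nu]$ from \Cref{lem:sgap-nu}: $\Var[\nu]{g}\le\frac{1}{\sgap{\GD[\nu]}\,|L|}\sum_{u\in L}\E[\nu]{\Var[\nu]{g\mid X_{L\setminus u}}}$. What remains is a \emph{local} comparison bounding each such summand by single-site Dirichlet contributions of $\mu$.

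The heart of the proof is this local comparison, where the main obstacle lies. Fix $u\in L$ and $\sigma=X_{L\setminus u}$, and set $\pi_\pm:=\mu_R^{X_u=\pm 1,\sigma}$, $E:=\{X_v=-1\text{ for every }v\in N(u)\}$, $q:=\pi_-(E)$, and $\phi_\pm(\sigma):=\E[\mu]{f\mid X_u=\pm 1,\sigma}$, so that $\Var[\nu]{g\mid\sigma}=p(1-p)(\phi_+-\phi_-)^2$ with $p=\nu_u^\sigma(+1)$. Adding and subtracting $\E[\pi_+]{f(-1_u,\cdot)}$ decomposes
\[\phi_+-\phi_- \;=\; \E[\pi_+]{f(+1_u,\cdot)-f(-1_u,\cdot)} \;+\; (1-q)\bigl(\E[\pi_+]{f(-1_u,\cdot)}-\E[\pi_-\mid E^c]{f(-1_u,\cdot)}\bigr).\]
The ``direct'' first term, squared and weighted by $p(1-p)$, is bounded via Jensen by $(1+\lambda)\,\E{\Var[\mu]{f\mid X_{-u}}\mid\sigma}$. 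For the ``correction'' second term, the marginal lower bound $\pi_-^v(-1)\ge 1/(1+\lambda)$ gives $q\ge(1+\lambda)^{-\Delta}$ and hence $(1-q)/q\le(1+\lambda)^\Delta$; a binary-split identity yields $q(1-q)(\E[\pi_+]{h}-\E[\pi_-\mid E^c]{h})^2\le\Var[\pi_-]{\E[\pi_-]{h\mid X_{N(u)}}}$ with $h=f(-1_u,\cdot)$, and applying Efron--Stein to this variance of a function of the \emph{product} measure $\pi_-$ restricted to $X_{N(u)}$, followed by a Jensen step, bounds it by $\sum_{v\in N(u)}\E[\pi_-]{\Var[\pi_-^v]{h\mid X_{R\setminus v}}}$; each summand equals $\E{\Var[\mu]{f\mid X_{-v}}\mid\sigma,X_u=-1}$, which is at most $(1+\lambda)\,\E{\Var[\mu]{f\mid X_{-v}}\mid\sigma}$. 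Altogether this produces
\[\E[\nu]{\Var[\nu]{g\mid X_{L\setminus u}}}\;\le\; A\,\E[\mu]{\Var[\mu]{f\mid X_{-u}}}\;+\;B\!\!\sum_{v\in N(u)\cap R}\!\!\E[\mu]{\Var[\mu]{f\mid X_{-v}}},\]
with constants $A,B$ polynomial in $1+\lambda$ and depending on $\Delta$.

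Finally, I sum this local inequality over $u\in L$; the cross-term rearranges as $\sum_u\sum_{v\in N(u)\cap R}\E[\mu]{\Var[\mu]{f\mid X_{-v}}}=\sum_{v\in R}d_v^L\,\E[\mu]{\Var[\mu]{f\mid X_{-v}}}$, which I bound by $n\sum_v\E[\mu]{\Var[\mu]{f\mid X_{-v}}}$ using the crude inequality $d_v^L:=|N(v)\cap L|\le|L|=n$. Substituting back into the decomposition gives the ATV with $K=O(1/(\zeta\,\sgap{\GD[\nu]}))$ and hence the claimed spectral-gap bound. The main technical obstacle is the local comparison above: a naive Efron--Stein over \emph{all} of $R$ in the correction step would introduce an extraneous factor of $|L|$, so it is essential first to project $h$ onto the sub-$\sigma$-algebra generated by $X_{N(u)}$ before applying Efron--Stein, so that the summation is restricted to $N(u)$. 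The case $\Delta=2$ requires more careful tracking of $(1+\lambda)$-factors in each of these steps (since $\lambda$ is not a priori bounded when $\Delta=2$), which is what delivers the stated $(1+\lambda)^8$ dependence in $\zeta$.
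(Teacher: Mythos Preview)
Your approach is correct and takes a genuinely different route from the paper's. The paper proceeds via an intermediate block dynamics $P_{\mathrm B}$ on $\Omega(\mu)$: pick $v\in L$ uniformly and resample $\{v\}\cup R$ from $\mu$ conditioned on $X_{L\setminus\{v\}}$. A simple coupling with $\GD[\nu]$ yields $\sgap{P_{\mathrm B}}\ge\sgap{\GD[\nu]}$, and then a block-factorization argument compares $P_{\mathrm B}$ to $\GD[\mu]$: each block $\{v\}\cup R$ splits, under the product structure of $\mu_R^{X_L}$, into the star $\{v\}\cup\Gamma_v$ plus isolated vertices in $R\setminus\Gamma_v$, and the constant $\zeta$ is imported wholesale as the Poincar\'e constant of single-site Glauber on that star (quoting \cite{chen2021rapid} for $\Delta\ge 3$, and a Cheeger/coupling bound for $\Delta=2$).

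Your route sidesteps both the auxiliary chain and the external star bound: the law of total variance plus Poincar\'e for $\GD[\nu]$ reduces everything to the local inequality $\Var[\nu]{g\mid X_{L\setminus u}}\le A\,\mu[\Var[u]{f}]+B\sum_{v\in N(u)}\mu[\Var[v]{f}]$, which you establish by hand. The crucial point---projecting $h$ onto $X_{N(u)}$ \emph{before} applying Efron--Stein, so that the correction sum ranges over $N(u)$ rather than all of $R$---is exactly what makes the crude bound $d_v^L\le n$ cancel against the $1/|L|$ from Poincar\'e. Your constants $A,B$ come out polynomial in $1+\lambda$ (indeed $O((1+\lambda)^2)$ with careful bookkeeping, since $p(1-p)\tfrac{1-q}{q}\le\lambda$), which is far smaller than the paper's $\zeta^{-1}=50^{400/\delta}$ for $\Delta\ge 3$ and comfortably within $9\cdot 4^7(1+\lambda)^8$ for $\Delta=2$; so the lemma as stated follows with room to spare. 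The paper's approach is more modular---if the local structure around $u$ were richer than a star, one would simply plug in an external Poincar\'e constant for it---whereas yours is more elementary and yields sharper constants in the present setting.
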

A main challenge for comparing $\GD[\nu]$ are $\GD[\mu]$ is that they are  defined on different distributions.
To overcome this issue, we introduce a third chain, denoted by $\PB$, which is a block dynamics on $\mu$ that resembles the behavior of $\GD[\nu]$.
Specifically, let $\PB$ be a Markov chain $(X_t)_{t\in\^N}$ on the space $\Omega(\mu)$. 
We note that $X_t$ is supported on $L \cup R$. In the $t$-th transition, it does:
\begin{enumerate}
\item pick a vertex $v\in L$ uniformly at random;
\item sample $X_t \sim \mu(\cdot\mid {X_{L\setminus \{v\}}})$.
\end{enumerate}

On the one hand, by a coupling argument, 
it is not hard to see that the rate at which $\PB$ converges to $\mu$ is bounded by that of $\GD[\nu]$ to $\nu$, in TV-distance. 
Then, by the standard connection between mixing time and spectral gap (\cite[Corollary 12.7]{levin2017markov}), we can prove  that
%\begin{align}
$\sgap{\PB} \geq \sgap{\GD[\nu]}$.
%\label{eq:PB-spectral-bound}
%\end{align}

On the other hand, %we compare the block dynamics $\PB$ with the Glauber dynamics $\GD[\mu]$, both on the same distribution $\mu$, 
we conduct a comparison between the block dynamics $\PB$ and the Glauber dynamics $\GD[\mu]$, both on the same distribution $\mu$, and demonstrate that
%\begin{align}
$\sgap{\GD[\mu]} \geq \sgap{\PB} \cdot \zeta \cdot ((\Delta+1) n)^{-1}$.
%\label{eq:PB-spectral-comparison}
%\end{align}
%
To establish this result, we rely on a specialized form of the Poincar{\'e} inequality, which we prove using the framework of \emph{approximate block factorization} of variance, developed in~\cite{chen2021optimal,caputo2021block, caputo2015approximate}.
Altogether, this proves \Cref{lem:comp-nu-mu}. 
A detailed proof is presented in \Cref{sec:comp-nu-mu}.

\begin{proof}[Proof of \Cref{thm:GD-mu-mix-intro}]
%Combining \Cref{lem:sgap-nu} and \Cref{lem:comp-nu-mu}, we have 
%$\sgap{\GD[\mu]} \geq (22 n)^{-1} \cdot \tp{\frac{C\cdot \e^9 \Delta \log n}{\lambda}}^{-400C^2/\delta} \cdot \zeta \cdot ((\Delta+1) n)^{-1}$,
%
%\Cref{thm:GD-mu-mix-intro} then follows from \Cref{lem:sgap-nu} and \Cref{lem:comp-nu-mu}, and the $\Tmix{\epsilon} \leq \frac{1}{\sgap{\GD[\mu]}} \log \tp{\frac{1}{\epsilon \mu_{\min}}}$.
%
%
%The mixing time of $\GD[\mu]$ is bounded by
%  \begin{align*}
%    \Tmix{\epsilon} &\leq \sgap{\GD[\mu]}^{-1} \cdot \log\tp{\frac{1}{\epsilon\mu_{\min}}}.
%  \end{align*}
	We start by verifying that both \Cref{lem:sgap-nu} and \Cref{lem:comp-nu-mu} are applicable.
	For $\Delta=2$, we already have $\delta$-uniqueness by assumptions.
	For $\Delta\ge 3$, recalling~\Cref{thm:delta-unique-eq}, we also have $(\lambda,d)$ being $\delta/10$-unique when $\lambda \leq (1 - \delta)\lambda_c(\Delta)$.

	Next, notice that
  %\begin{align*}
    $\mu_{\min} \geq \tp{\frac{\min\{1, \lambda\}}{1 + \lambda}}^{\abs{L\cup R}}\ge \tp{\frac{\min\{1, \lambda\}}{1 + \lambda}}^{\abs{(\Delta+1)n}}$.
  %\end{align*}
  It is a standard result (see e.g.~\cite[Theorem 12.4]{levin2017markov}) that the mixing time of $\GD[\mu]$ can be bounded through its spectral gap, 
  which in turn is bounded through \Cref{lem:sgap-nu} and \Cref{lem:comp-nu-mu} as 
  \begin{align*}
    \Tmix{\epsilon} &\leq \sgap{\GD[\mu]}^{-1} \cdot \log\tp{\frac{1}{\epsilon\mu_{\min}}}\\
    &\leq  \tp{\frac{\Delta \log n}{\lambda}}^{O(C^5/\delta)}  \cdot n^2 \cdot \tp{n \log \frac{1 + \lambda}{\min\{1, \lambda\}} + \log \frac{1}{\epsilon}},
  \end{align*}
%  where we recall that $n = \abs{L\cup R}$.
where $n$ is overridden as  $n=|L\cup R|\le (\Delta+1)|L|$ and $C=(1+\lambda)^{\Delta}$ is bounded as in the proof of \Cref{thm:main-unique}. %the statement of \Cref{thm:GD-mu-mix-intro}.
  %By \Cref{lem:var-to-mix}, \Cref{lem:GD-psd}, \Cref{lem:sgap-nu}, and \Cref{lem:comp-nu-mu}, with
  %we have the following result.
\end{proof}

\section{Related works and discussions}
\paragraph{The hardcore model \& two-spin systems}
%\BHC{} is a hardcore model on bipartite graph.
%The computational researches for the hardcore model or, more generally, two-spin systems, on general graph has been taken for decades.
The hardcore model has been a very important model for equilibrium statistical physics. 
Sampling from the hardcore distribution has been widely studied not only in statistical physics, but also in combinatorics and distributed computing, as 
the hardcore model can also be seen as an enumeration of weighted independent sets. 
%Its generalization, the hardcore lattice gas is the \emph{universal} statistical physics model in the sense that any such model can be represented by a gas of nonoverlapping ``polymers'' on the same vertex set~\cite[Section 5.7]{simon93}. 

The hardcore model belongs to a more general family known as the two-spin systems, which can be classified as either \emph{anti-ferromagnetic} or \emph{ferromagnetic} depending on the nature of their edge interactions. 
In anti-ferromagnetic two-spin systems, where the edge interactions are \emph{repulsive} and neighboring vertices tend to take on different assignments, 
a sharp computational phase transition has been established at the uniqueness threshold: initially for the hardcore model~\cite{weitz2006counting, sly2010computational, galanis2014improved, sly2012computational}, and later extended to all anti-ferromagnetic two-spin systems in~\cite{ li2012approximate, sinclair2012approximation,li2013correlation, sly2012computational,galanis2016inapproximability}. 
There has been significant research for this model, aimed at developing faster algorithms and improving analysis of existing methods. 
New techniques and analysis continue to be developed and evolve~\cite{luby1997approximately, dyer2000markov, vigoda2001note, dyer2002counting, goldberg2003computational, hayes2006coupling, weitz2006counting, sly2010computational, galanis2014improved, sly2012computational, li2012approximate, sinclair2012approximation, galanis2016inapproximability, li2013correlation, barvinok2016combinatorics, patel2017deterministic, peters2019conjecture, liu2019fisher, bencs2018note, efthymiou2019convergence, anari2020spectral, chen2020rapid, chen2021optimal, chen2021rapid, anari2022entropic, chen2022optimal, chen2022localization}; we do not attempt to provide a comprehensive list here.

In ferromagnetic two-spin systems, where the edge interactions are \emph{attractive} and neighboring vertices tend to take on the same assignments, the computational phase transition disappears in an important special case known as the Ising model. 
However, the uniqueness condition for the general ferromagnetic two-spin system is more complicated \cite{guo2018uniqueness}, and our understanding of this model is limited. 
It has also been shown that there is a close connection between the ferromagnetic Ising model with local fields and \#BIS \cite{goldberg2007complexity}. 
Research on ferromagnetic two-spin systems is becoming more active in recent years~\cite{jerrum1993polynomial, goldberg2003computational, goldberg2007complexity, mossel2013exact, guo2018uniqueness, liu2019fisher, guo2020zeros, shao2020contraction, chen2020rapid, chen2022localization, chen2022near}.
We remark that \BHC{} when projected to one side is inherently a ferromagnetic system.
Specifically, we consider a bipartite graph $G=\left( (L \cup R), E \right)$ with $\Delta_L=2$, the hardcore measure projected to $R$ is indeed a ferromagnetic two-spin system on $R$. A proof of this fact via standard holographic transformation can be found in~\Cref{sec:ising-case}. For the same reasons, when $\Delta_L \ge 3$, \BHC{} corresponds to a \emph{hypergraph} ferromagnetic two-spin system on $R$, where multi-body interactions are also allowed in the system.
Unlike the $\Delta_L=2$ case, the system does not have symmetric edge interactions, which precludes applying known results on hypergraph ferromagnetic two-spin systems easily.
%In the tree-uniqueness regime, it has been shown that there are fast algorithms to sample from the Gibbs distributions by the techniques of: coupling~\cite{jerrum1993polynomial, luby1997approximately, luby1999fast, vigoda2001note, goldberg2003computational, dyer2002counting, weitz2006counting, mossel2009hardness, sly2010computational, galanis2014improved, sly2012computational, li2012approximate, sinclair2012approximation, galanis2016inapproximability, li2013correlation, barvinok2016combinatorics, patel2017deterministic, peters2019conjecture, liu2019fisher, bencs2018note, guo2020zeros, shao2020contraction, guo2018uniqueness, dyer2000markov, hayes2006coupling, efthymiou2019convergence, mossel2013exact, goldberg2007complexity}; 

\paragraph{The bipartite hardcore model}
% as a middiate class of computational complexity.
The unweighted version of the bipartite hardcore model (\#BIS), is originally introduced as a problem of intermediate complexity, to facilitate the complexity classifications of approximate counting problems, under the so-called approximation-preserving reductions (AP-reductions)~\cite{dyer2004relative}.
Due to its bipartite nature, two sides of the bipartition can often encode different objectives. This makes it especially flexible in building gadgets on \#BIS and reducing other important problems to \#BIS.
There are many natural problems that have been proven to be \#BIS-equivalent or \#BIS-hard~\cite{dyer2004relative, goldberg2007complexity, dyer2010approximation, dyer2012complexity, chebolu2012complexity, goldberg2012approximating, bulatov2013expressibility, liu2014complexity, goldberg2015complexity, galanis2016ferromagnetic, cai2016bis}.
% algorithmic works
\#BIS or its weighted version \BHC{} is a computational problem of its own interests.
Many sampling algorithms have been proposed for the hardcore model, and the standard ``heat-bath'' Glauber dynamics is arguably one of the most prominent ones.
However, it was soon discovered that there is a concrete algorithmic barrier for any \emph{local} Markov chains above the uniqueness threshold (as soon as $\lambda> \lambda_c(\Delta)$)~\cite{dyer2002counting,mossel2009hardness}, even when restricted to regular bipartite graphs.

Remarkably, \#BIS above the uniqueness threshold is also used as a gadget in a randomized AP-reduction in~\cite{dyer2002counting}, through which they showed that approximately counting independent sets in graphs of constant bounded degree is already NP-hard.
The reduction crucially relies on the non-uniqueness of \#BIS to succeed, and this gadget in the non-uniqueness regime is also the starting point of~\cite{sly2010computational,sly2012computational,galanis2014improved} that established computational hardness phase transitions for the hardcore model at the uniqueness threshold.
Prior to using a non-unique \#BIS as gadget, the only known reduction of showing hardness of approximately counting independent sets essentially relies on the hardness of finding the largest independent sets, and the hardness result requires a much larger maximum degree (see, e.g., the proof of Theorem 1.17 in~\cite{sinclair1993} and the proof of Theorem 4 in~\cite{luby1997approximately}).
Due to the algorithmic barriers presented by non-unique \#BIS instances, combined with its central role in the complexity classification of approximate counting problems,
this leads to a conjecture that neither does \#BIS  admit an FPRAS, nor is it as hard as \#SAT~\cite{dyer2004relative}.
%Despite this, there were many algorithmic study launched on the \#BIS model.
%Many sampling algorithms have been proposed for the hardcore model, and the standard ``heat-bath'' Glauber dynamics is arguably one of the most prominent ones.
%%The Glauber dynamics for the hardcore model has also been studied on 

Since then, algorithmic efforts have been mainly focused on identifying tractable instances or regimes of parameters.
There are mainly two line of works,  focusing on algorithms that run in either the high-temperature, or the low-temperature regime of the \BHC{} model.
In the high-temperature regime (a.k.a.~tree-uniqueness regime), \cite{liu2015fptas} gives a fast algorithm for \BHC{} with $\lambda = 1$ and $\Delta_L \leq 5$ based on the method of correlation decay~\cite{weitz2006counting}.
However, until the present work, little progress had been made on the high-temperature regime of \BHC{} or \#BIS beyond what can be inherited from faster algorithms on general graphs.
%We note however, that

In the low-temperature regime, current progress relies heavily on the polymer representation of the \BHC{} model.
Initiated by \cite{helmuth2020algorithmic}, there has been a series of works that have developed fast algorithms for the \BHC{} model or its variants in the low-temperature regime, using the polymer representation~\cite{helmuth2020algorithmic, liao2019counting, jenssen2020algorithms, cannon2020counting, chen2021fast, jessen2022approximately, blanca2022fast, chen2022sampling, friedrich2023polymer}.
There are also interesting sampling algorithms based on Markov chains designed for \BHC{}~\cite{chen2021fast, blanca2022fast,friedrich2023polymer}.
However, these are Markov chains running on the polymer representation, and they are very different from the standard Glauber dynamics.

We note that there are also more detailed study on special classes of bipartite graphs, such as on lattices (e.g.,~\cite{martinelli1994approach2, lubetzky2014cutoff}), and on trees (e.g.,~\cite{restrepo2014phase}).

\paragraph{The bipartite hardcore model and the Lov\'asz local lemma}
The hardcore partition function also gives a characterization of the worst-case extremal measure in applications of the Lov\'asz local lemma~\cite{Shearer85} and its algorithmic counterparts~\cite{KS11}. %This is known as the abstract Lov\'asz local lemma (LLL) framework. 
For specific applications (such as $k$-SAT), however, the dependency graph that naturally arises often comes with extra structure.
%This was the starting point of a more refined framework known as the variable LLL framework, and 
Taking advantage of these extra structure lead to improved LLL framework such as the cluster expansion LLL~\cite{bissacot2011improvement,harvey2015algorithmic}.
Given our new uniqueness characterization of \BHC{}, which seems to outperform a convergent cluster expansion in ``unbalanced'' settings,
it would also be interesting to see if this can lead to improved LLL conditions when the dependency graph is bipartite.
Many constraint satisfaction problem (CSP) can be naturally modeled with a dependency graph that is bipartite. 
Specifically, one side of the bipartite graph can encode the ``equality'' constraint, the other side can encode the actual constraint of the CSP instance, and then the edges in between can be viewed as duplicated variables.
Our current uniqueness characterization does not immediately give any new criterion for such ``bipartite LLL'', 
for many of these applications, one needs to consider \emph{negative} fugacities.
However, we believe that our framework of locating fixpoints and studying their critical behavior will be very useful in such analysis, which we leave as future work.
%Many recent progress on algorithmic LLL has been based on improving the local search procedure or its analysis.

\paragraph{High-dimensional expanders and field dynamics related}
In a seminal work~\cite{anari2020spectral}, Anari, Liu, and Oveis-Gharan introduced the concept of spectral independence based on previously developed techniques for analyzing down-up random walks in high-dimensional expanders~\cite{nima2019log, cryan2019modified, alev2020improved}.
This notion is first applied to the hardcore model, which gives the first rapid mixing result of the Glauber dynamics up to the uniqueness threshold without any other assumptions.
Then the notion of spectral independence is also generalized to multi-spin systems such as $q$-colorings~\cite{chen2021coloring,feng2022rapid}.
\cite{chen2020rapid} generalizes the result in~\cite{anari2020spectral} from the hardcore model to general anti-ferromagnetic two-spin systems, while also giving a sharp bound for spectral independence. Then \cite{chen2021optimal} refines this result to give an optimal $O(n \log n)$ mixing time of the Glauber dynamics for bounded degree instances.

Further developments may be roughly categorized into two lines of works.
One of them features removing the degree or marginal bounds assumption in~\cite{chen2021optimal}.
This line of work has recently reached this goal by introducing a new Markov chain called field dynamics and the entropy version of spectral independence called entropic independence~\cite{chen2021rapid, anari2022entropic, chen2022optimal, chen2022localization}.
%Based on the analysis in~\cite{chen2021optimal}, \cite{chen2021rapid} proposes a novel adaptive block dynamics called the field dynamics.
%Levering field dynamics as a analysis tool, \cite{chen2021rapid} proves an optimal $\Omega(n^{-1})$ spectral gap and $O(n^3)$ mixing time for the Glauber dynamics and removes the bounded assumption on maximum degree.
%Then, \cite{anari2022entropic} takes an important step forward by introducing a new notion known as entropic independence (and its restricted variation). Combining with the field dynamics, it is showed that a Glauber-like algorithm mixes in $O(n \log n)$ time.
%Recently, \cite{chen2022optimal} and \cite{chen2022localization} give new method to establish entropic independence, which implies an $O(n\log n)$ optimal mixing time of Glauber dynamics on instances with unbounded degree.
%\cite{chen2022localization} even build a unified framework for analyzing mixing of Markov chains.
%
Another line of work concerns establishing spectral independence.
Up to this date, a connection has been established between spectral independence and many of the techniques that was used to analyze and design fast samplers: correlation decay~\cite{anari2020spectral, chen2020rapid}; real-stability and zero-freeness of the partition functions~\cite{alimohammadi2021fractionally, chen2021spectral}; matrix trickle-down method~\cite{abdolazimi2021matrix, abdolazimi2022improved}; topological method~\cite{efthymiou2022spectral}; contractive coupling of local Markov chains~\cite{liu2021coupling, blanca2022mixing}.
%In~\cite{anari2020spectral, chen2020rapid}, it is already shown that spectral independence could be established by levering the correlation decay phenomenon.
%\cite{alimohammadi2021fractionally, chen2021spectral} show that spectral independence could also be established from the properties like real-stableness or zero-freeness of the partition polynomials.
%\cite{abdolazimi2021matrix, abdolazimi2022improved} extend the trickle-down method used in the theory of high-dimensional expansion, and build a matrix trickle-down method to give spectral independence bound beyond $\infty$-norm.
%\cite{efthymiou2022spectral} bound the spectral independence by the topological method which levers the spectral radius of the adjacency matrix of the underlying graph and also avoid the worst case $\infty$-norm bound.
%\cite{liu2021coupling, blanca2022mixing} build a black-box implications from a contractive coupling of any local Markov chain to the spectral independence.

%After that, \cite{chen2020rapid} generalize the result in~\cite{anari2020spectral} to general anti-ferromagnetic two-spin systems.
%Then, \cite{chen2021optimal} gives the first $O(n\log n)$ optimal mixing result for Glauber dynamics on bounded degree instances.

\section{Organization of the paper}
The rest of the paper is organized as follows.
We start by introducing notations and conventions in~\Cref{sec:prelim}.
Then, in~\Cref{sec:delta-unique} we prove our characterization of uniqueness for the bipartite hardcore model with degree bound on one side, by locating fixpoints and analyzing their critical behavior.
Next, we establish spectral independence by proving correlation decay and crucially a contraction property in~\Cref{sec:SI}, by reducing the contraction rate of the system to the contraction rate near fixpoints.
As explained earlier, our algorithm is based on simulating the field dynamics on one side by a Glauber dynamics on one side.
For this to work, we need to identify a subcritical regime of bipartite hardcore model in which a coupling argument can succeed, so that the Glauber dynamics can efficiently simulate the field dynamics. 
This is shown in~\Cref{sec:GD-good}.
Then, we need to show that the field dynamics itself on one side is rapidly mixing. 
We show entropy decay of the field dynamics by building on results in negative fields localization schemes and field dynamics in~\Cref{sec:FD-ent-decay}.
Last but not the least, we show that the field dynamics on one side can be "approximately tensorized" into a single site Glauber dynamics on both sides in~\Cref{sec:GD-mu-mix}.
This allows us to use a comparison of variance decay to conclude a polynomial mixing time bound for the standard Glauber dynamics.
\section{Preliminaries} \label{sec:prelim}
\subsection{Notations and conventions}
\paragraph{Graph related}
Given a graph $G = (V, E)$ and a vertex $v \in V$, we use $\Gamma_v(G)$ to denote the set of neighbors of $v$ in graph $G$.
When the context is clear, we may omit $G$ and simply write $\Gamma_v$.
Given a vertex $v \in V$, we may use $\deg_G(v)$ to denote the degree of $v$ in $G$.

\paragraph{Vectors}
Let $U$ be some ground set and let $\sigma \in \^R^U$ be a vector.
For $i \in U$, we will use $\sigma_i$ or $\sigma(i)$ to denote the value of the $i$-th coordinate of $\sigma$.
Let $S \subseteq U$, let $\sigma_S \in \^R^S$ (or in some place, written as $\sigma(S)$) be a projection of $\sigma$ to $S$ such that for $i \in S$, $(\sigma_S)_i = \sigma_i$, and undefined for $i\not\in S$.
For $c \in \^R$, let $\sigma^{-1}(c) = \{i \in U \mid \sigma_i = c\}$ be the pre-image of $\sigma$.
For $\sigma \in \{-1, +1\}^U$, let $\abs{\sigma}_\pm := \abs{\sigma^{-1}(\pm 1)}$ to denote the number of $+1$s in $\sigma$.
For two vector $\sigma, \tau \in \^R^U$, we will use $\sigma \oplus \tau := \{i \in U \mid \sigma_i \neq \tau_i \}$ to denote the set of unequal coordinates of $\sigma$ and $\tau$; and we will use $\sigma \odot \tau$ to be the entry-wise product of $\sigma$ and $\tau$.
For convenience, for $\tau \in \^R^U$, we may also use $\overline{\tau}$ to denote the vector $\tau \odot \*{-1}$.
For $\Lambda \subseteq V$, we may use $\*1_{\Lambda}$ (and $\*{-1}_\Lambda$) to denote the all-$(+1)$ (and all-$(-1)$) state on $\Lambda$.
We may omit the subscript $\Lambda$ when the context is clear.

\paragraph{Distributions}
Let $\mu$ be a distribution over $\{-1, +1\}^U$.
Let $\Omega(\mu) \subseteq \{-1, +1\}^U$ be the support of $\mu$, that is, $X \in \Omega(\mu)$ iff $\mu(X) > 0$.
For $S\subseteq U$, let $\mu_S$ be the projection of distribution $\mu$ on the set $S$, that is, $\mu_S(\tau) := \sum_{\sigma: \sigma_S = \tau} \mu(\sigma), \forall \tau \in \{-1, +1\}^S$.
For convenience, if $S = \{i\}$, then we will write $\mu_i$ instead of $\mu_{\{i\}}$.
For $\Lambda \subseteq U$ and $\sigma \in \Omega(\mu_\Lambda)$, we let $\mu^\sigma$ be $\mu$ condition on the coordinates in $\Lambda$ be fixed to $\sigma$, that is, $\mu^\sigma(\tau) = \mu(\tau \mid \tau_S = \sigma)$.
Moreover, if $S \subseteq V\setminus \Lambda$, then we use the notation $\mu^\sigma_S := (\mu^\sigma)_S$.
For two distribution $\mu$ and $\nu$, we say $\mu$ is \emph{absolutely continuous} with respect to $\nu$ if $\Omega(\mu) \subseteq \Omega(\nu)$.

Given a distribution $\nu$ over $\{-1, +1\}^U$, we define its ``flipped'' version $\overline{\nu}$ as another distribution over $\{-1, +1\}^U$ defined by $\overline{\nu}(\sigma) := \nu(\*{-1} \odot \sigma), \forall \sigma \in \{-1, +1\}^U$.

In this work, most distributions are on $\{-1, +1\}^U$ for some ground set $U$.
However, for some minor cases, we will also use distributions over $2^U$.
We note that for a distribution $\mu$ over $2^U$, we could redefine $\mu$ on $\{-1, +1\}^U$ by considering $\mu(S)$ as $\mu(\*{-1}_{U\setminus S}, \*1_S)$.
Hence we do not distinguish these two cases.

\paragraph{Others}
All the logarithm used in this work is based on $\e$.
Let $n \geq 1$ be an integer, we use the notation $[n] := \{0, 1, \cdots, n-1\}$.

\subsection{Markov chain and related topics} \label{sec:prelim-MC}
\paragraph{Basic definitions}
Let $(X_t)_{t\in \^N}$ be Markov chain over a finite state space $\Omega$ with transition matrix $P \in \^R^{\Omega \times \Omega}_{\geq 0}$.
If it is clear from the context, we will also use $P$ to refer to the Markov chain directly.
We say the Markov chain is
\begin{itemize}
\item \emph{irreducible}, if for any $X, Y \in \Omega$, there is $t > 0$ such that $P^t(X, Y) > 0$;
\item \emph{aperiodic}, if for any $X \in \Omega$, it holds that $\-{gcd}\{t > 0 \mid P^t(X, X) > 0\} = 1$.
\end{itemize}
A distribution $\mu$ over $\Omega$ is call a stationary distribution of $P$ if $\mu = \mu P$, where we use $\mu$ as a row vector.
If a Markov chain is both irreducible and aperiodic, then it has a unique stationary distribution.
The Markov chain $P$ is \emph{reversible} with respect to $\mu$ if the following \emph{detailed balanced equation} holds
\begin{align*}
  \forall X, Y, \in \Omega, \quad \mu(X)P(X, Y) = \mu(X)P(Y, X).
\end{align*}
This also implies that $\mu$ is a stationary distribution of $P$.

\paragraph{Variance}
Let $P$ be the transition matrix of a reversible, irreducible, and aperiodic Markov chain with stationary distribution $\mu$.
Let $\lambda_2(P)$ be the second largest eigenvalue of $P$.
Let $\lambda_\star(P) := \max \{\abs{\lambda} \mid \text{$\lambda$ is an eigenvalue of $P$, $\lambda \neq 1$}\}$ be the second largest eigenvalue of $P$ in terms of absolute value (see \cite[Lemma 12.1]{levin2017markov}).
For convenience, let $\sgap{P} := 1 - \lambda_2(P)$ and $\sgap[\star]{P} = 1 - \lambda_\star(P)$, be the spectral gap and absolute spectral gap of $P$, respectively.
Given any function $f: \Omega(\mu) \to \^R$, the \emph{variance} of $f$ with respect to $\mu$ is defined as
\begin{align*}
  \Var[\mu]{f} := \E[\mu]{ \tp{f - \E[\mu]{f}}^2 } = \frac{1}{2} \sum_{X, Y \sim \Omega(\mu)} \mu(X) \mu(Y) (f(X) - f(Y))^2.
\end{align*}
Given any function $f, g: \Omega(\mu) \to \^R$, the inner product of $f, g$ with respect to $\mu$ is
\begin{align*}
  \inner{f}{g}_\mu := \sum_{X\in \Omega(\mu)} \mu(X) f(X) g(X).
\end{align*}
The \emph{Dirichlet form} associated to $(P, \mu)$ for function $f$, $g$ is 
\begin{align*}
  \+E_P(f, g)
  &:= \inner{(I - P) f}{g}_\mu 
  = \frac{1}{2} \sum_{X, Y \in \Omega(\mu)} \mu(\sigma) P(X, Y) (f(X) - f(Y)) (g(X) - g(Y)).
\end{align*}
We also use $\+E_P(f)$ to denote $\+E_P(f, f)$.
The spectral gap could be characterized by
\begin{align*}
  \sgap{P} &= \min_{\substack{f: \Omega(\mu) \to \^R \\ \Var[\mu]{f} \neq 0}} \frac{\+E_P(f)}{\Var[\mu]{f}}.
\end{align*}
See~\cite[Remark 13.8]{levin2017markov} for details.
The Poincar{\'e} inequality follows from above characterization:
\begin{align}
  \forall f:\Omega(\mu) \to \^R, \quad \sgap{P} \cdot \Var[\mu]{f} \leq \+E_P(f).\label{eq:Poincare-inequality}
\end{align}
The mixing time and the eigenvalues of a Markov chain are closely related.
Below are some classic results under this topic.

\begin{lemma}[\text{\cite[Corollary 12.7]{levin2017markov}}] \label{lem:mix=var}
  Let $P$ be a reversible, irreducible, and aperiodic Markov chain with stationary $\mu$.
  It holds that
  \begin{align*}
    \lim_{t\to\infty} \max_{X, \in \Omega(\mu)} \DTV{P^t(X, \cdot)}{\mu}^{1/t} = \lambda_\star(P).
  \end{align*}
\end{lemma}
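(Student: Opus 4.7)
The plan is to exploit the reversibility of $P$ to obtain a spectral decomposition and then extract the rate of convergence in total variation from the dominant non-trivial eigenvalue. Since $P$ is reversible with respect to $\mu$, it is self-adjoint as an operator on $L^2(\mu)$, and hence admits an orthonormal eigenbasis $f_1 \equiv 1, f_2, \ldots, f_N$ with real eigenvalues $1 = \lambda_1 > \lambda_2 \geq \cdots \geq \lambda_N \geq -1$ (irreducibility and aperiodicity ensure $|\lambda_i| < 1$ for $i \geq 2$). Writing the kernel in this basis gives the identity
\[
\frac{P^t(X,Y)}{\mu(Y)} - 1 = \sum_{i=2}^{N} \lambda_i^{t}\, f_i(X)\, f_i(Y),
\]
which will drive both bounds.

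For the upper bound, I would pass from total variation to the $L^2(\mu)$ norm of the density: by Cauchy--Schwarz,
\[
2\,\DTV{P^t(X,\cdot)}{\mu} = \sum_Y \mu(Y)\left|\frac{P^t(X,Y)}{\mu(Y)} - 1\right| \leq \left(\sum_Y \mu(Y)\left(\frac{P^t(X,Y)}{\mu(Y)} - 1\right)^{2}\right)^{1/2}.
\]
Plugging in the spectral expansion and using orthonormality of the $f_i$ in $L^2(\mu)$, the right-hand side equals $\bigl(\sum_{i \geq 2} \lambda_i^{2t} f_i(X)^{2}\bigr)^{1/2} \leq \lambda_\star(P)^{t}\cdot \bigl(\sum_{i \geq 2} f_i(X)^{2}\bigr)^{1/2}$. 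The constant depends on $X$ but is finite, so taking max over $X$ and then $t$-th roots yields $\limsup_{t\to\infty} \max_X \DTV{P^t(X,\cdot)}{\mu}^{1/t} \leq \lambda_\star(P)$.

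For the matching lower bound, I would test against an eigenfunction achieving the extremal eigenvalue. Let $f_\star$ be an eigenvector with $P f_\star = \lambda f_\star$ and $|\lambda| = \lambda_\star(P)$; by orthogonality to $f_1 \equiv 1$, $\mathbb{E}_\mu[f_\star] = 0$. Then for every $X$,
\[
\sum_{Y} \bigl(P^{t}(X,Y) - \mu(Y)\bigr) f_\star(Y) = (P^{t} f_\star)(X) - \mathbb{E}_\mu[f_\star] = \lambda^{t} f_\star(X),
\]
and the left-hand side is bounded in absolute value by $2\,\|f_\star\|_{\infty}\cdot \DTV{P^t(X,\cdot)}{\mu}$. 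Choosing $X^\star \in \arg\max_X |f_\star(X)|$ (which gives $|f_\star(X^\star)| = \|f_\star\|_\infty > 0$ since $f_\star \not\equiv 0$), we conclude $\max_X \DTV{P^t(X,\cdot)}{\mu} \geq \tfrac{1}{2}\lambda_\star(P)^{t}$, so taking $t$-th roots gives the matching $\liminf$.

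The main obstacle, and the reason both halves of the proof need separate treatment, is that total variation is an $L^1$-type quantity while the spectral decomposition is natural in $L^2(\mu)$. The upper bound loses nothing essential through Cauchy--Schwarz because multiplicative constants disappear after taking $t$-th roots; the lower bound, however, is delicate in that one must verify the eigenfunction test picks up exactly the leading eigenvalue, which relies on the reversibility assumption giving a \emph{real} spectrum and an \emph{orthogonal} eigenbasis, so that $f_\star$ is genuinely mean-zero and the cross-terms vanish cleanly.
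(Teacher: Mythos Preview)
The paper does not prove this lemma; it merely cites \cite[Corollary 12.7]{levin2017markov} as a standard fact. Your argument is correct and is essentially the textbook proof: the spectral decomposition of the reversible kernel gives both the Cauchy--Schwarz upper bound and the eigenfunction-test lower bound, with the constants washing out under the $t$-th root.
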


\begin{lemma}[\text{\cite[Theorem 12.4]{levin2017markov}}] \label{lem:var-to-mix}
  Let $P$ be a reversible, irreducible, and aperiodic Markov chain with stationary $\mu$.
  It holds that
  \begin{align*}
    \Tmix{\epsilon} \leq \frac{1}{\sgap[\star]{P}} \log \tp{\frac{1}{\epsilon \mu_{\min}}},
  \end{align*}
  where $\mu_{\min} := \min_{X \in \Omega(\mu)} \mu(X)$.
\end{lemma}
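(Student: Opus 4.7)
The plan is to prove this classical bound via the standard spectral approach: control the $\chi^2$-divergence by the second-largest eigenvalue in absolute value and then pass to total variation via Cauchy--Schwarz. Since $P$ is reversible with respect to $\mu$, the operator $P$ is self-adjoint on $L^2(\mu)$, so there exists an orthonormal eigenbasis $f_0, f_1, \ldots, f_{N-1}$ (with $N = \abs{\Omega(\mu)}$) and real eigenvalues $1 = \lambda_0 > \lambda_1 \geq \cdots \geq \lambda_{N-1} \geq -1$, where $f_0 \equiv 1$ and $\abs{\lambda_i} \leq \lambda_\star(P) = 1 - \sgap[\star]{P}$ for every $i \geq 1$.

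Next, I would fix an arbitrary starting state $X \in \Omega(\mu)$ and analyze the density $h_X := \delta_X/\mu$ of the initial point mass, where $\delta_X(y) = \*1[y = X]$. Expanding in the eigenbasis, $h_X = \sum_{i=0}^{N-1} c_i f_i$ with $c_i = \inner{h_X}{f_i}_\mu = f_i(X)$; in particular $c_0 = 1$. By reversibility, $P^t(X, \cdot)/\mu(\cdot) = (P^t h_X)(\cdot)$, so
\begin{align*}
\chi^2\tp{P^t(X,\cdot)\,\|\,\mu}
= \sum_{y \in \Omega(\mu)} \mu(y)\tp{\frac{P^t(X,y)}{\mu(y)} - 1}^2
= \sum_{i=1}^{N-1} c_i^2 \lambda_i^{2t}
\leq \lambda_\star(P)^{2t} \sum_{i=1}^{N-1} f_i(X)^2.
\end{align*}
By Parseval, $\sum_{i=0}^{N-1} f_i(X)^2 = \|h_X\|_{2,\mu}^2 = 1/\mu(X)$, so the last sum is at most $1/\mu(X) - 1 \leq 1/\mu_{\min}$.

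Finally, I would combine this with the standard inequality $2\DTV{\nu}{\mu} \leq \sqrt{\chi^2(\nu\,\|\,\mu)}$ (Cauchy--Schwarz applied to $\sum_y \mu(y)\abs{\nu(y)/\mu(y) - 1}$), obtaining
\begin{align*}
\DTV{P^t(X,\cdot)}{\mu} \leq \tfrac{1}{2}\lambda_\star(P)^t/\sqrt{\mu_{\min}} \leq \tfrac{1}{2}\tp{1 - \sgap[\star]{P}}^t \mu_{\min}^{-1/2}.
\end{align*}
Using $\log(1/(1-x)) \geq x$ for $x \in (0,1)$, the right-hand side is at most $\epsilon$ whenever $t \cdot \sgap[\star]{P} \geq \log\tp{1/(2\epsilon\sqrt{\mu_{\min}})}$, and since $\tfrac{1}{2}\mu_{\min}^{-1/2} \leq \mu_{\min}^{-1}$ (as $\mu_{\min} \leq 1/2$ whenever $\abs{\Omega(\mu)} \geq 2$, the only nontrivial case), this is implied by $t \geq \sgap[\star]{P}^{-1}\log\tp{1/(\epsilon\mu_{\min})}$, which is the claimed bound.

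There is no genuine obstacle here; the result is textbook. The only points requiring mild care are (i) verifying that self-adjointness of $P$ on $L^2(\mu)$ really follows from detailed balance, (ii) keeping track of the $\sqrt{\mu_{\min}}$ versus $\mu_{\min}$ distinction in the exponent when loosening the constant, and (iii) confirming $\abs{\Omega(\mu)} = 1$ is trivially fine since both sides vanish.
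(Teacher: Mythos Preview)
Your proof is correct and is essentially the standard spectral argument found in the cited reference (Levin--Peres--Wilmer, Theorem~12.4); the paper itself does not supply a proof and simply cites the textbook. One tiny remark: the inequality $(1/2)\mu_{\min}^{-1/2} \le \mu_{\min}^{-1}$ holds whenever $\mu_{\min}\le 4$, hence always, so the parenthetical about $\abs{\Omega(\mu)}\ge 2$ is unnecessary (though harmless).
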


\begin{remark}
  Note that \Cref{lem:mix=var} and \Cref{lem:var-to-mix} only work for $\sgap[\star]{P}$.
  However, as we will see later, all the Markov chains considered in this work are irreducible, aperiodic, reversible and only have non-negative eigenvalues.
  Hence \Cref{lem:mix=var} and \Cref{lem:var-to-mix} also work for $\sgap{P}$ in our setting.
\end{remark}

\paragraph{Entropy}
The \emph{KL-divergence} is also used to measure how close two distributions are.
When $\nu$ is absolutely continuous with respect to $\mu$, it is defined as
\begin{align*}
  \DKL{\nu}{\mu} := \sum_{X \in \Omega(\mu)} \nu(X) \log \frac{\nu(X)}{\mu(X)},
\end{align*}
where we use the convention that $0 \log 0 = 0$.
We note that $\DKL{\mu}{\nu}$ may not equal to $\DKL{\nu}{\mu}$ so that $\DKL{\cdot}{\cdot}$ is not a metric.
The total variation distance and the KL-divergence is connected by the well-known Pinsker's inequality.
Suppose $\nu$ is absolutely continuous with respect to $\mu$, 
\begin{align*}
  \DTV{\nu}{\mu} \leq \sqrt{\frac{1}{2} \DKL{\nu}{\mu}}.
\end{align*}
We also use a notion of entropy that is closely connected to the KL-divergence.
For any function $f:\Omega \to \^R_{\geq 0}$, the entropy of $f$ with respect to $\mu$ is defined as
\begin{align*}
  \Ent[\mu]{f} := \E[\mu]{f\log \frac{f}{\E[\mu]{f}}},
\end{align*}
where we also use the convention that $0\log 0 = 0$.
Note that if we let $f = \frac{\nu}{\mu}$, then $\Ent[\mu]{f} = \DKL{\nu}{\mu}$.

Most result for entropy also works for variance (one-way direction).
The key insight here is a standard trick called linearization~\cite{rothaus1981diffusion, jerrum2003counting, caputo2015approximate}.
\begin{lemma} \label{lem:ent-to-var}
  Let $\mu$ be a distribution.
  For every $f: \Omega(\mu) \to \^R$ , and sufficiently small $\epsilon > 0$, it holds that
  \begin{align*}
    \Ent[\mu]{1 + \epsilon f} = \frac{\epsilon^2}{2} \Var[\mu]{f} + o(\epsilon^2).
  \end{align*}
\end{lemma}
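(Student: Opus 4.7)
The plan is to prove the identity via a direct Taylor expansion of the function $h(x) := x \log x$ around $x = 1$, which is the standard linearization trick for entropy. Since $\Omega(\mu)$ is finite and $f$ is bounded, for all sufficiently small $\epsilon > 0$ both $1 + \epsilon f(X)$ and $1 + \epsilon \E[\mu]{f}$ are uniformly bounded away from $0$, so all logarithms and expansions below are valid pointwise and can be safely integrated against $\mu$.

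First I would unfold the definition of entropy for $g := 1 + \epsilon f$, writing
\[
\Ent[\mu]{1 + \epsilon f} = \E[\mu]{g \log g} - \E[\mu]{g} \log \E[\mu]{g}.
\]
Using $h(1) = 0$, $h'(1) = 1$, $h''(1) = 1$, the expansion $h(1 + u) = u + \tfrac{1}{2} u^2 + O(u^3)$ applied pointwise to $g(X) = 1 + \epsilon f(X)$ yields
\[
g \log g = \epsilon f + \tfrac{1}{2} \epsilon^2 f^2 + O(\epsilon^3),
\]
where the $O(\epsilon^3)$ error is uniform in $X$ because $f$ is bounded on the finite support. Taking expectations gives $\E[\mu]{g \log g} = \epsilon \E[\mu]{f} + \tfrac{1}{2} \epsilon^2 \E[\mu]{f^2} + O(\epsilon^3)$.

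Second, I would apply the same expansion to $\E[\mu]{g} \log \E[\mu]{g}$, where now the argument is the deterministic scalar $1 + \epsilon \E[\mu]{f}$, obtaining
\[
\E[\mu]{g} \log \E[\mu]{g} = \epsilon \E[\mu]{f} + \tfrac{1}{2} \epsilon^2 \bigl(\E[\mu]{f}\bigr)^2 + O(\epsilon^3).
\]
Subtracting the two displays, the first-order terms $\epsilon \E[\mu]{f}$ cancel, and the surviving $\epsilon^2$ contribution is exactly $\tfrac{1}{2}\epsilon^2 \bigl( \E[\mu]{f^2} - (\E[\mu]{f})^2 \bigr) = \tfrac{1}{2} \epsilon^2 \Var[\mu]{f}$. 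Collecting the remainders into a single $o(\epsilon^2)$ term yields the claim.

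There is no real obstacle here beyond bookkeeping; the only mild care is in arguing the $o(\epsilon^2)$ bound on the remainder. Since $\Omega(\mu)$ is finite and $f$ bounded, one can cite a uniform cubic remainder $|h(1+u) - u - \tfrac{1}{2} u^2| \le C |u|^3$ valid on a neighborhood of $0$, which legitimizes interchanging expansion and expectation. This is the only step worth writing carefully; the rest is algebra.
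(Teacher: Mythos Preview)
Your proposal is correct and essentially identical to the paper's proof: both expand $h(1+u)=(1+u)\log(1+u)=u+\tfrac{1}{2}u^2+o(u^2)$, apply this to $g\log g$ and to $\E[\mu]{g}\log\E[\mu]{g}$, and subtract to isolate $\tfrac{\epsilon^2}{2}\Var[\mu]{f}$. You are slightly more explicit about why the remainder is uniform (finite support, bounded $f$), but the argument is the same.
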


\begin{proof}
  Note that by the Taylor's series, it holds that
  \begin{align*}
    (x + 1) \log (x + 1) = x + \frac{x^2}{2} + o(x^2).
  \end{align*}
  Then by the definition, we have
  \begin{align*}
    \Ent[\mu]{1 + \epsilon f}
    &= \E[\mu]{(1 + \epsilon f)\log (1 + \epsilon f)} - \E[\mu]{1 + \epsilon f} \log \E[\mu]{1 + \epsilon f} \\
    &= \E[\mu]{\epsilon f + \frac{\epsilon^2 f^2}{2} + o(\epsilon^2)} - \epsilon \E[\mu]{f} - \frac{\epsilon^2 \E[\mu]{f}^2}{2} - o(\epsilon^2) \\
    &= \frac{\epsilon^2}{2} \tp{\E[\mu]{f^2} - \E[\mu]{f}^2} + o(\epsilon^2). \qedhere
  \end{align*}
\end{proof}

\paragraph{Glauber dynamics}
The Glauber dynamics $(X_t)_{t\in\^N}$ (a.k.a. Gibbs sampler) is the canonical single-site Markov chain for sampling from distribution $\mu$ over $\{-1, +1\}^n$.
Let $X_0$ is picked arbitrary from $\Omega(\mu)$.
Then, in the $t$-th step, the Glauber dynamics does as follows:
\begin{enumerate}
\item pick a coordinate $i \in [n]$ uniformly at random;
\item sample $X_t \sim \mu(\cdot \mid X_{t-1}(V\setminus \{i\}))$.
\end{enumerate}
It can be verify that $\mu$ is reversible with respect to Glauber dynamics.
We will use $\GD[\mu]$ to denote the transition matrix of the Glauber dynamics on $\mu$.
There is a classic result showing that the transition matrix of Glauber dynamics only has non-negative eigenvalues (see \cite{dyer2014structure, levin2017markov, alev2020improved}).

\begin{lemma} \label{lem:GD-psd}
  For distribution $\mu$ over $\{-1, +1\}^n$.
  The Glauber dynamics for $\mu$ is reversible with respect to $\mu$, and its transition matrix only has non-negative eigenvalues.
\end{lemma}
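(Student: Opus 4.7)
The plan is to decompose the Glauber dynamics into an average of single-site updates and recognize each update as a conditional expectation operator, which is an orthogonal projection and hence positive semidefinite.

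First I would write $\GD[\mu] = \frac{1}{n}\sum_{i=1}^n P_i$, where $P_i$ is the ``update site $i$'' transition matrix defined by $P_i(X, Y) = \mu(Y_i \mid X_{[n]\setminus\{i\}})$ when $Y_{[n]\setminus\{i\}} = X_{[n]\setminus\{i\}}$, and $P_i(X,Y) = 0$ otherwise. Reversibility of each $P_i$ follows from the factorization $\mu(X) = \mu_{[n]\setminus\{i\}}(X_{[n]\setminus\{i\}}) \cdot \mu(X_i \mid X_{[n]\setminus\{i\}})$: for $X, Y$ agreeing outside coordinate $i$,
\[
\mu(X) P_i(X, Y) = \mu_{[n]\setminus\{i\}}(X_{[n]\setminus\{i\}}) \cdot \mu(X_i \mid X_{[n]\setminus\{i\}}) \cdot \mu(Y_i \mid X_{[n]\setminus\{i\}}),
\]
which is symmetric in $X$ and $Y$. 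Averaging over $i$ gives reversibility of $\GD[\mu]$.

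For the non-negative spectrum, the key observation is that $P_i$ is the conditional expectation operator on $L^2(\mu)$: for every $f:\{-1,+1\}^n\to\^R$,
\[
(P_i f)(X) = \E[\sigma \sim \mu]{f(\sigma) \mid \sigma_{[n]\setminus\{i\}} = X_{[n]\setminus\{i\}}}.
\]
This operator is idempotent ($P_i^2 = P_i$, by the tower property of conditional expectation) and self-adjoint on $L^2(\mu)$ (which is exactly the detailed balance verified above, rephrased in functional form). Hence $P_i$ is an orthogonal projection in $L^2(\mu)$ onto the subspace of functions depending only on coordinates $[n]\setminus\{i\}$, so all its eigenvalues lie in $\{0, 1\}$, and in particular $\inner{P_i f}{f}_\mu \geq 0$ for every $f$. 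Since $\GD[\mu]$ is a convex combination of positive semidefinite self-adjoint operators, it is itself positive semidefinite, which means all its eigenvalues are non-negative.

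There is no real obstacle here, as this is a classical fact; the only step requiring slight care is verifying the idempotence $P_i^2 = P_i$, which is immediate once one rewrites $P_i$ as a conditional expectation. I would also briefly note that the conclusion justifies the earlier remark that \Cref{lem:mix=var} and \Cref{lem:var-to-mix} can be applied with $\sgap{\cdot}$ in place of $\sgap[\star]{\cdot}$ throughout the paper.
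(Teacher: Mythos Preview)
Your argument is correct and is one of the standard proofs of this classical fact. The paper itself does not prove \Cref{lem:GD-psd}; it simply cites \cite{dyer2014structure, levin2017markov, alev2020improved} for the result. So strictly speaking there is no ``paper's own proof'' to compare against.

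That said, it is worth noting that later in the paper (\Cref{lem:PB-psd}) an analogous non-negativity statement is proved for a block dynamics $\PB$, and there the paper uses a different but closely related technique: it factors the chain as $\PB = \PBdown\PBup$ through an auxiliary space $\+S$, verifies that $\PBdown$ and $\PBup$ are adjoints between $L^2(\mu)$ and $L^2(\mu_0)$ (where $\mu_0 = \mu\PBdown$), and concludes $\inner{f}{\PB f}_\mu = \inner{\PBup f}{\PBup f}_{\mu_0} \ge 0$. Your approach instead writes $\GD[\mu]$ as a convex combination of the single-coordinate operators $P_i$ and identifies each $P_i$ as an orthogonal projection via idempotence. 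Both arguments are standard; the down--up factorization is the one emphasized in the high-dimensional expander literature (and is what \cite{alev2020improved} does), while your projection-averaging argument is perhaps the most elementary route for the single-site case. Either would be perfectly acceptable here.
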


\section{Uniqueness condition for the bipartite hardcore model}
\label{sec:delta-unique}
%In this section, we characterize the uniqueness condition for the bipartite hardcore model with degree bound on one side.
In this section, 
we will characterize the $\delta$-uniqueness  in the bipartite hardcore model and prove \Cref{thm:delta-unique} and a weaker version of \Cref{thm:delta-unique-eq},
by exactly resolving the $\delta$-uniqueness conditions.
Along the way,  in \Cref{sec:0-unique-eq}, we will prove the explicit uniqueness criterion stated in \Cref{thm:delta-unique-lambda-d-alpha-w}.
%\Cref{thm:delta-unique}, together with a monotone property, will prove a weaker version of \Cref{thm:delta-unique-eq}. %which will be fully proved in \Cref{sec:SI}.
And \Cref{thm:delta-unique-eq} will be fully proved in \Cref{sec:SI} by using potential analysis of correlation decay.

Throughout the section, we assume the following setting for parameters: %$\delta\in[0,1)$, and parameters $\lambda,\alpha>0$, $d \geq 1 - \delta$  and $w\ge 0$.
\begin{align}\label{eq:uniqueness-parameters}
\delta\in[0,1),\quad \lambda,\alpha>0,\quad d \geq 1 - \delta,\quad \text{ and }\quad w > 0.
\end{align}
As a reminder, $\delta$ is for $\delta$-uniqueness, $\lambda$ is the fugacity on $L$ (one side of the bipartite graph), $d+1$ is the maximum degree on $L$, and $\alpha$ is the fugacity on $R$ (the other side of the bipartite graph), and $w+1$ is a parameter that can be roughly regarded as a ``fractional degree'' on $R$.
This section is organized as follows:
\begin{enumerate}
	\item \label{item:delta-unique-1} First, we show that for any given $d, \alpha, \delta$, there is an implicitly defined threshold $\lambda_{2, c}^\delta$ as a function of $d$, $\alpha$ and $\delta$, such that $(\lambda, d, \alpha)$ is $\delta$-unique iff $\lambda \geq \lambda_{2,c}^\delta$. This is achieved by \Cref{thm:delta-unique-implicit}.
\item \label{item:delta-unique-2} Then, we will solve for $\lambda_{2, c}^\delta$ by studying a system of equations that characterizes $\lambda_{2, c}^\delta$. This is achieved by \Cref{thm:resolve-lambda-2-c}.
\item \label{item:delta-unique-3} Finally, we impose further that $\lambda = \alpha$ in the system  as in \eqref{eq:delta-sys-eq}, which we resolve in \Cref{thm:solve-sys-eq}. In the special case of $\delta = 0$, we will also solve it explicitly.
\end{enumerate}
\Cref{thm:delta-unique} will be proved by \Cref{item:delta-unique-1} and \Cref{item:delta-unique-2}.
Then combining \Cref{thm:delta-unique} and \Cref{item:delta-unique-3} will lead us to \Cref{thm:delta-unique-lambda-d-alpha-w} and a part of \Cref{thm:delta-unique-eq} (sufficiently small $\delta$).
We start by observing some nice properties about the fixpoints.
%{\color{red} TODO: add a summary of this section here}

%We will prove \Cref{thm:delta-unique} in \Cref{sec:delta-unique-general}, and \Cref{thm:delta-unique-eq} in \Cref{sec:delta-unique-eq}.

%In this section, we prove \Cref{thm:delta-unique} that characterizes the $\delta$-uniqueness of $(\lambda,d,\alpha)\in\mathbb{R}_{>0}^3$ for $d \geq 1 - \delta$.
%As a corollary, \Cref{thm:delta-unique-eq} that characterizes the $\delta$-uniqueness for the case with $\lambda=\alpha$, will be proved in the end of this section.

%{\color{red} TODO: include the def of $0$-unique in previous defs of uniqueness}

Recall the two-level tree-recursion for marginal ratios deduced in \Cref{sec:uniqueness}:
\begin{align}\label{eq:univariate-tree-recursion-F(x)}
F(x) := \lambda (1 + \alpha(1 + x)^{-w})^{-d}.
\end{align}

By \Cref{def:lambda-d-alpha-w-unique}, the tuple $(\lambda, d, \alpha, w)$ is $\delta$-unique, if and only if $F$ only has attractive fixpoints:
\begin{align} \label{eq:lambda-d-alpha-w-unique}
  \forall \hat{x} \geq  0 \text{ such that } F(\hat{x}) = \hat{x}, \text{ we have } F'(\hat{x}) \leq 1 - \delta.
\end{align}
The constraint in \eqref{eq:lambda-d-alpha-w-unique} is imposed on the fixpoints $\hat{x}$ of $F$, which seems like nowhere to launch an analysis at first glance.
%Usually, a correct angle plays the most important role in such analysis.
Here, we have the following crucial observation.

\begin{observation}\label{obs:hat-x-lambda}
  Given $d, \alpha, w$ and $\hat{x} > 0$, there is a unique $\lambda > 0$ such that $\hat{x}$ is a fixpoint of $F$.
\end{observation}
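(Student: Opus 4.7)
The plan is to simply solve the fixpoint equation $F(\hat x)=\hat x$ algebraically for $\lambda$, showing that $\lambda$ is forced to equal an explicit positive expression in $d,\alpha,w,\hat x$.

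Concretely, I would start from the definition \eqref{eq:univariate-tree-recursion-F(x)},
\[
F(\hat x) = \lambda\,\bigl(1 + \alpha(1+\hat x)^{-w}\bigr)^{-d},
\]
and observe that the condition $F(\hat x)=\hat x$ is linear in $\lambda$. Rearranging yields
\[
\lambda \;=\; \hat x\,\bigl(1 + \alpha(1+\hat x)^{-w}\bigr)^{d}.
\]
The right-hand side is a product of strictly positive quantities (here we use $\hat x>0$, $\alpha>0$, $w,d\in\mathbb{R}_{>0}$ as in \eqref{eq:uniqueness-parameters}), so it defines a unique $\lambda\in\mathbb{R}_{>0}$. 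Conversely, substituting this value of $\lambda$ back into $F$ shows that $\hat x$ is indeed a fixpoint, establishing both existence and uniqueness.

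There is essentially no obstacle here: the observation is a one-line consequence of the fact that the recursion $F$ is affine (in fact, multiplicative) in $\lambda$. The content of the observation is conceptual rather than technical; it licenses one to re-parametrize the family of fixpoints by the location $\hat x$ instead of by the fugacity $\lambda$, which will be the key change of variables driving the analysis of $\delta$-uniqueness in the subsequent steps outlined as items \ref{item:delta-unique-1}--\ref{item:delta-unique-3}.
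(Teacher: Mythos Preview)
Your proposal is correct and is essentially identical to the paper's own proof: both simply solve the linear-in-$\lambda$ fixpoint equation to obtain $\lambda = \hat{x}\,(1 + \alpha(1+\hat{x})^{-w})^{d}$ and note that this value is positive. If anything, you add a touch more detail by explicitly noting the converse direction (existence), but the argument is the same.
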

\begin{proof}
  If $F(\hat{x}) = \hat{x}$, then by \eqref{eq:univariate-tree-recursion-F(x)} we have $\lambda = \hat{x}(1 + \alpha(1 + \hat{x})^{-w})^d$.
\end{proof}
\Cref{obs:hat-x-lambda} suggests that we define the following function:
\begin{align} \label{eq:lambda-x}
  \lambda(x) := x(1 + \alpha(1 + x)^{-w})^d.
\end{align}
Then, instead of resolving \eqref{eq:lambda-d-alpha-w-unique} directly,  one can first resolve \eqref{eq:lambda-d-alpha-w-unique} under the coordinate system $(\hat{x}, d, \alpha, w)$, then use the function $\lambda(x)$ to translate the result from $(\hat{x}, d, \alpha, w)$ to $(\lambda, d, \alpha, w)$.

To resolve \eqref{eq:lambda-d-alpha-w-unique} under the coordinate system $(\hat{x}, d, \alpha, w)$, note that $F'(x)$ is given by
\begin{align}\label{eq:F(x)-derivative}
  F'(x) &= \alpha d w (x + 1)^{-w-1} (1 + \alpha (x + 1)^{-w})^{-1} F(x).
\end{align}
And hence $F'(x) \leq 1 - \delta$ if and only if
\begin{align*}
  (1 - \delta) (x + 1) (\alpha + (1 + x)^w) \geq \alpha d w F(x).
\end{align*}
Recall that at a fixpoint we have $\hat{x}=F(\hat{x})$.
This motivates us to define the following function:
\begin{align}\label{eq:function-T-delta}
  T_\delta(x) := (1 - \delta)(x + 1)(\alpha + (1 + x)^w) - \alpha d w x.
\end{align}
%Now consider the fixpoint $\hat{x}=F(\hat{x})$. 
For a tuple $(\hat{x}, d, \alpha, w)$, the $\delta$-uniqueness condition $F'(\hat{x})\le 1-\delta$ at the fixpoint $\hat{x}=F(\hat{x})$ is equivalent to $T_\delta(\hat{x})\ge 0$.
%
%For a point $(\hat{x}, d, \alpha, w)$, $T_\delta(\hat{x}) \geq 0$ means the recursion $F$ encoded by $(\lambda(\hat{x}), d, \alpha, w)$ has a fixpoint $\hat{x}$, and it holds that $F'(\hat{x}) \leq 1-\delta$.
%Similarly, $T_\delta(\hat{x}) > 0$ means $F$ has a fixpoint $\hat{x}$, and $F'(\hat{x}) > 1-\delta$.
In the coordinate system $(\hat{x}, d, \alpha, w)$, the $\delta$-uniqueness changes sharply around the root of the equation $T_\delta(\hat{x}) = 0$.
Moreover, the sign of $\lambda'(x)$ is also governed by the sign of $T_0(x)$ (see \Cref{sec:delta-unique-implicit} for more details). 
To some extent, the roots of $T_\delta(x)=0$ is what characterizes the uniqueness regime.
Hence, we need to study the roots of the equation $T_\delta(x) = 0$, and we summarize the result in the following results.

\begin{fact}\label{fact:T-delta}
The followings hold for the function $T_\delta(x)$ defined in \eqref{eq:function-T-delta}:
\begin{align*}
  T_\delta'(x) &= (1 - \delta)(1 + w)(1 + x)^w - \alpha(d w  - (1 - \delta));\\
  T_\delta''(x) &= (1 - \delta)w(1 + w)(1 + x)^{w - 1} > 0;\\ 
\lim_{x \to 0} T_\delta(x) &= (1 - \delta)(1 + \alpha) > 0.
\end{align*}
\end{fact}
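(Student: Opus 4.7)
The statement is a routine calculus verification, so the plan is simply to unfold the definition of $T_\delta(x)$ and differentiate. The only mild simplification worth doing first is to expand the product
\[
(x+1)\bigl(\alpha + (1+x)^{w}\bigr) = \alpha(x+1) + (1+x)^{w+1},
\]
which turns the definition into
\[
T_\delta(x) = (1-\delta)\alpha(x+1) + (1-\delta)(1+x)^{w+1} - \alpha d w\, x.
\]
Written in this form, the function is a sum of a linear term and a power of $(1+x)$, so its derivatives become immediate.

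Differentiating once, the linear pieces contribute $(1-\delta)\alpha - \alpha d w$, and the power term contributes $(1-\delta)(w+1)(1+x)^{w}$. Collecting,
\[
T_\delta'(x) = (1-\delta)(1+w)(1+x)^{w} - \alpha\bigl(dw - (1-\delta)\bigr),
\]
matching the claimed formula. Differentiating once more kills the constant $\alpha\bigl(dw-(1-\delta)\bigr)$ and brings down one more factor of $w$ from the power, giving
\[
T_\delta''(x) = (1-\delta)\,w(1+w)(1+x)^{w-1}.
\]
Each factor is strictly positive under the standing assumptions \eqref{eq:uniqueness-parameters}, namely $\delta\in[0,1)$, $w>0$, and $x \ge 0$ (so $(1+x)^{w-1}>0$ for any real exponent $w-1$), so $T_\delta''(x)>0$ as asserted.

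For the final limit, substitute $x = 0$ directly into $T_\delta$: the second term of the expansion vanishes since it is multiplied by $x$, and the first two terms evaluate to $(1-\delta)\alpha + (1-\delta) = (1-\delta)(1+\alpha)$, which is strictly positive because $\delta<1$ and $\alpha>0$. There is no real obstacle here; the only thing to be careful about is keeping track of the factor $(1-\delta)$ throughout, since it multiplies both the linear and the power components of $T_\delta$, but never touches the $-\alpha d w\, x$ term. All three claims thus follow from this direct computation together with the sign conventions in \eqref{eq:uniqueness-parameters}.
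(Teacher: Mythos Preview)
Your proof is correct and is precisely the direct computation the paper has in mind; the paper states this as a \textbf{Fact} without proof because it is an immediate differentiation exercise. One small wording slip: when computing the limit you say ``the second term of the expansion vanishes since it is multiplied by $x$,'' but in your displayed expansion it is the \emph{third} term $-\alpha d w\, x$ that vanishes at $x=0$, while the second term $(1-\delta)(1+x)^{w+1}$ evaluates to $(1-\delta)$ --- your final value $(1-\delta)(1+\alpha)$ is correct regardless.
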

%We have the following result regarding the roots of $T_\delta(x) = 0$.
\begin{lemma} \label{lem:T-root-lifecycle}
Fix any $\delta \in [0,1)$, $\alpha>0$, and $d \geq 1 - \delta$.
  The followings hold for the roots of $T_\delta(x) = 0$ as an equation in $x$:
  \begin{enumerate}
  \item if $\alpha \leq \frac{1-\delta}{d} \cdot \e^{1 + \frac{1-\delta}{d}}$, then the equation $T_\delta(x) = 0$ has no positive solution;
  \item if $\alpha > \frac{1-\delta}{d} \cdot \e^{1 + \frac{1-\delta}{d}}$, 
  then the following equation in $w$:
  \begin{align}\label{eq:equation-w-delta}
      \alpha &= \frac{(1 - \delta) d^w (w + 1)^{w+1}}{(d w - (1 - \delta))^{w+1}}
   \end{align}
  has a unique positive solution $w_\delta$, such that
    \begin{itemize}
    \item if $w < w_\delta$, the equation $T_\delta(x) = 0$ has no positive solution;
    \item if $w = w_\delta$, the equation $T_\delta(x) = 0$ has a unique positive solution;
    \item if $w > w_\delta$, the equation $T_\delta(x) = 0$ has two positive solutions.
    \end{itemize}
%    where $w_\delta$ is the unique positive solution of the following equation
%    \begin{align*}
%      \alpha &= \frac{(1 - \delta) d^w (w + 1)^{w+1}}{(d w - (1 - \delta))^{w+1}}.
%    \end{align*}
  \end{enumerate}
\end{lemma}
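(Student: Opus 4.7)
The plan is to exploit the strict convexity of $T_\delta$ on $(0,\infty)$ (given by \Cref{fact:T-delta}) together with a precise analysis of its minimum value, viewed as a function of $w$. Since $T_\delta'' > 0$ and $T_\delta(0) = (1-\delta)(1+\alpha) > 0$, the equation $T_\delta(x)=0$ has at most two positive roots, and the classification of roots is determined by the sign of $\min_{x\geq 0}T_\delta(x)$. Setting $T_\delta'(x)=0$ yields a critical point $x^*$ with $(1+x^*)^w = A$, where $A := \frac{\alpha(dw-(1-\delta))}{(1-\delta)(1+w)}$. A positive critical point $x^*>0$ exists iff $dw>1-\delta$ and $A>1$; otherwise $T_\delta$ is increasing on $[0,\infty)$, stays strictly positive, and gives no positive root (this handles the degenerate case $dw\leq 1-\delta$).

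Next, I would substitute $x^* = A^{1/w}-1$ back into $T_\delta$. A direct algebraic simplification yields
\[
T_\delta(x^*) \;=\; \frac{\alpha w}{1+w}\Bigl[\,d(1+w)-(dw-(1-\delta))\,A^{1/w}\,\Bigr],
\]
so $T_\delta(x^*)=0$ is equivalent to $A^{1/w}=\frac{d(1+w)}{dw-(1-\delta)}$. After plugging in the definition of $A$, this collapses exactly to $\alpha = \phi(w)$, where
\[
\phi(w) \;:=\; \frac{(1-\delta)\,d^{w}(w+1)^{w+1}}{(dw-(1-\delta))^{w+1}},
\]
which matches equation \eqref{eq:equation-w-delta}. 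Since $A$ is monotone increasing in $\alpha$, the sign of $T_\delta(x^*)$ is governed by whether $\alpha \lessgtr \phi(w)$: positive if $\alpha<\phi(w)$, zero if $\alpha=\phi(w)$, negative if $\alpha>\phi(w)$. Combined with the convexity of $T_\delta$, this gives the 0/1/2 root dichotomy.

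The final step is to study $\phi$ on its natural domain $w>(1-\delta)/d$. Writing $\log\phi(w)$ and differentiating gives
\[
(\log\phi)'(w) \;=\; \log y + 1 - y, \qquad y \;:=\; \tfrac{d(w+1)}{dw-(1-\delta)},
\]
and $y>1$ throughout the domain (since $d(w+1)>dw>dw-(1-\delta)$), so the elementary inequality $\log y < y - 1$ yields $(\log\phi)'<0$. Hence $\phi$ is strictly decreasing. A direct asymptotic computation gives $\phi(w)\to\infty$ as $w\downarrow (1-\delta)/d$ and $\phi(w)\to \frac{1-\delta}{d}\,\e^{\,1+(1-\delta)/d}$ as $w\to\infty$ (the first factor $\frac{(1-\delta)(1+w)}{dw-(1-\delta)}\to \frac{1-\delta}{d}$ and $\bigl(1+\frac{d+(1-\delta)}{dw-(1-\delta)}\bigr)^w\to \e^{1+(1-\delta)/d}$). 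Hence $\phi$ is a strictly decreasing bijection from $((1-\delta)/d,\infty)$ onto $\bigl(\frac{1-\delta}{d}\e^{\,1+(1-\delta)/d},\infty\bigr)$. If $\alpha\leq\frac{1-\delta}{d}\e^{\,1+(1-\delta)/d}$, then $\phi(w)>\alpha$ for every admissible $w$, so $T_\delta(x^*)>0$ whenever $x^*$ exists and there is never a positive root; if $\alpha$ exceeds this value, the equation $\phi(w)=\alpha$ admits a unique $w_\delta$, and the trichotomy $w\lessgtr w_\delta$ transfers via the sign of $T_\delta(x^*)$ to the three stated cases.

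The main obstacle is the algebraic reduction in the second paragraph: one must carefully substitute the implicit formula for $x^*$ and check that $T_\delta(x^*)=0$ distills exactly into the equation defining $\phi$. The monotonicity of $\phi$ is delicate on the surface, but becomes clean once the derivative is rewritten in the variable $y$. A minor technical subtlety is that when $\alpha<\phi(w)$ the critical point $x^*$ may fail to be positive; however in that regime $T_\delta$ is increasing on $[0,\infty)$ with $T_\delta(0)>0$, so this case still produces no positive root and is consistent with the classification.
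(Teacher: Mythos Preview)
Your proposal is correct and follows essentially the same approach as the paper: use the strict convexity of $T_\delta$ to reduce to the sign of its minimum, show that sign is governed by $\alpha \lessgtr \phi(w)$ (the paper's \Cref{lem:sign-T-delta-y}), and then analyze the monotonicity and limits of $\phi$ (the paper's \Cref{lem:alpha-c-w}). Your computations are somewhat more streamlined—the closed-form expression for $T_\delta(x^*)$ and the $(\log\phi)' = \log y + 1 - y$ identity are cleaner than the paper's step-by-step manipulations, and your observation that $A\le 1$ forces $\alpha<\phi(w)$ avoids the separate check in the paper's \Cref{claim:y-exists-check}—but the logical skeleton is identical.
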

The proof of \Cref{lem:T-root-lifecycle}  is postponed  to \Cref{sec:T-root-lifecycle}.
We remark that the roots of $T_\delta(x) = 0$ is what governs the location of the (unique) positive fixpoint of $F$.

Given $\alpha, d$, when the equation $T_\delta(x) = 0$ has at least one solution, we use the following notation to refer to its roots. The corresponding $\lambda$ is obtained by translating the roots back to $\lambda$ using \eqref{eq:lambda-x}.

\begin{definition}\label{def:x-lambda-delta}
Let $\alpha > \frac{1-\delta}{d} \cdot \e^{1 + \frac{1-\delta}{d}}$. 
Define $w_\delta=w_\delta(d,\alpha)$ to be the unique positive solution of \eqref{eq:equation-w-delta}.
%whose existence is guaranteed by \Cref{lem:T-root-lifecycle}.
Let $w \geq w_\delta$.
Define the positive roots of $T_\delta(x) = 0$ to be %$x^\delta_1=x^\delta_1(w)$ and $x^\delta_2=x^\delta_2(w)$ to be the solutions of the equation $T_\delta(x) = 0$:
\begin{align*}
x^\delta_1
:=x^\delta_1(w)
\quad \le \quad
x^\delta_2:=x^\delta_2(w).
\end{align*}
%whose existence is guaranteed by \Cref{lem:T-root-lifecycle}.
For $i \in \{1, 2\}$, let $\lambda^\delta_i=\lambda^\delta_i(w)$ be defined by 
\[
\lambda^\delta_i(w) := \lambda(x^\delta_i(w))=x^\delta_i(w)(1 + \alpha(1 + x^\delta_i(w))^{-w})^d.
\]
And define
\begin{align*}%\label{eq:lambda-delta-1-2-c}
  \lambda^\delta_{1, c} := \inf_{w > w_\delta} \lambda^\delta_1(w) \quad \text{and} \quad \lambda^\delta_{2,c} := \sup_{w > w_\delta} \lambda^\delta_2(w).
\end{align*}
Furthermore, when $\alpha \le \frac{1-\delta}{d} \cdot \e^{1 + \frac{1-\delta}{d}}$, we define $\lambda^\delta_{1, c}=\lambda^\delta_{2, c}=0$ by convention.
\end{definition}
The functionality of functions and variables defined in \Cref{def:x-lambda-delta} is illustrated in \Cref{fig:illustrate-unique}.
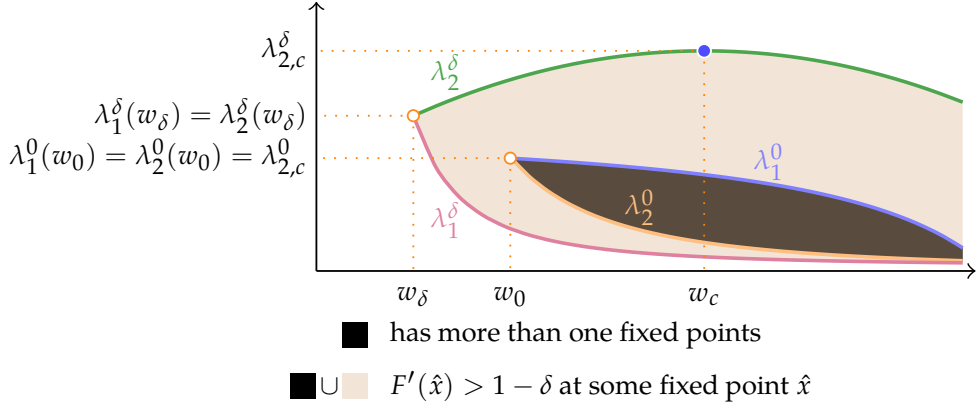
\begin{figure}[h]
  \centering
  {\centering
    
    \scalebox{1.7}{

      \begin{tikzpicture}
     %      0-unique fill
        \fill[domain=1.5:5, smooth, variable=\x, black, thick]
        (5, 0) -- plot ({6.5-\x}, {1.18-1.5/ \x}) -- (1.5, 0) -- cycle;
        \fill[domain=1.5:5, smooth, variable=\x, white, thick]
        (1.45, 0) -- plot ({\x}, {2/\x / \x}) -- (5.05, 0) -- cycle;

     %      delta-uinque fill
        \fill[domain=0.75:5, smooth, variable=\x, brown!50, opacity=0.4]
        (0.75, 1.211)
        -- plot ({\x}, {1.717 - 0.1 * (\x - 3) * (\x - 3)})
        -- (5, 0.168)
        -- plot ({5.75 - \x}, {0.66/ (5.75 - \x) / (5.75 - \x) + 0.037});

     %      0-unique draw
        \draw[domain=1.5:5, smooth, variable=\x, blue!50, thick]
        plot ({6.5-\x}, {1.18-1.5/ \x});
        \draw[domain=1.5:5, smooth, variable=\x, orange!50, thick] 
        plot ({\x}, {2/\x / \x});

     %      delta-unique draw
        \draw[domain=0.75:5, smooth, variable=\x, purple!50, thick] 
        plot ({\x}, {0.66/\x / \x + 0.037});
        \draw[domain=0.75:5, smooth, variable=\x, green!50!black!70, thick] 
        plot ({\x}, {1.717 - 0.1 * (\x - 3) * (\x - 3)});
        
     %      axis
        \draw[->] (0, 0) to (5.1, 0);  
        \draw[->] (0, 0) to (0, 2.1);  

     %      labels
        \draw[-, dotted, draw = orange] (1.5, 0.88) to (1.5, 0);
        \node at (1.5, -0.2) {\tiny $w_0$};
        \draw[-, dotted, draw = orange] (0.75, 1.211) to (0.75, 0);
        \node at (0.75, -0.2) {\tiny $w_\delta$};
        \draw[-, dotted, draw = orange] (1.5, 0.88) to (0, 0.88);
        \node[label={left:\tiny $\lambda^0_1(w_0) = \lambda^0_2(w_0) = \lambda^0_{2, c}$}] at (0.2, 0.88) {};
        \draw[-, dotted, draw = orange] (0.75, 1.211) to (0, 1.211);
        \node[label={left:\tiny $\lambda^\delta_1(w_\delta) = \lambda^\delta_2(w_\delta)$}] at (0.2, 1.211) {};
        \draw[-, dotted, draw = orange] (3, 1.717) to (0, 1.717);
        \node[label={left:\tiny $\lambda^\delta_{2,c}$}] at (0.2, 1.717) {};
        \draw[-, dotted, draw = orange] (3, 1.717) to (3, 0);
        \node at (3, -0.2) {\tiny $w_c$};

     %      function name
        \node at (1, 0.4) {\color{purple!50} \tiny $\lambda^\delta_1$};
        \node at (1, 1.55) {\color{green!50!black!70} \tiny $\lambda^\delta_2$};
        \node at (2.5, 0.5) {\color{orange!50} \tiny $\lambda^0_2$};
        \node at (3.5, 0.85) {\color{blue!50} \tiny $\lambda^0_1$};

     %      critical points
        \node[draw = orange!90, fill=white, circle, inner sep=0pt, minimum size = 2.5pt] at (1.5, 0.88) {};
        \node[draw = orange!90, fill=white, circle, inner sep=0pt, minimum size = 2.5pt] at (0.75, 1.2111) {};

        \node[draw = white, fill=blue!70, circle, inner sep=0pt, minimum size = 3pt] at (3, 1.717) {};

     %      name of regions
        \fill [black] (0.2, -0.4) -- (0.4, -0.4) -- (0.4, -0.6) -- (0.2, -0.6) -- cycle;
        \node[label={right: \tiny has more than one fixed points}] at (0.3, -0.5) {};
        \fill [brown!50, opacity=0.4] (0.2, -0.8) -- (0.4, -0.8) -- (0.4, -1.0) -- (0.2, -1.0) -- cycle;
        \node[label={right: \tiny $F'(\hat{x}) > 1 - \delta$ at some fixed point $\hat{x}$ }] at (0.3, -0.9) {};
        \fill [black] (-0.2, -0.8) -- (-0.0, -0.8) -- (-0.0, -1.0) -- (-0.2, -1.0) -- cycle;
        \node at (0.1, -0.9) {\tiny $\cup$};

      \end{tikzpicture}

    } % end scalebox

    \par}
  \caption{\small Fix $d \geq 1-\delta$ and $\frac{1}{d} \cdot \e^{1 + \frac{1}{d}} < \alpha$, we note that $\lambda^\delta_i, \lambda^0_i$ ($i \in \{1, 2\}$) capture the non-unique regime and the non-contract regime for the pair $(\lambda, d, \alpha, w)$. Note that $\lambda^0_{2,c}$ is achieved at $w_0$ (\Cref{lem:lambda-c-monotone-0}); $\lambda^\delta_{2, c} > \lambda^0_{2, c}$ (\Cref{lem:lambda-c-monotone-delta}); and the maximum of $\lambda^\delta_2$ is achieved at some unique point $w_c > w_\delta$ (\Cref{thm:resolve-lambda-2-c}). Also note that as long as $w \to \infty$, we have $\lambda^\delta_i, \lambda^0_i$ ($i \in \{1, 2\}$) approach to $0$ (\Cref{lem:lb-is-0}).}
  \label{fig:illustrate-unique}
\end{figure}

The following limits of $\lambda^\delta_i(w)$ and $\lambda^\delta_{i,c}$ justify the convention assumed in \Cref{def:x-lambda-delta}. 
\begin{lemma} \label{lem:lb-is-0}
If $\alpha>\frac{1-\delta}{d} \cdot \e^{1 + \frac{1-\delta}{d}}$, then for $i \in \{1, 2\}$, it holds that
  \begin{align*}
    \lim_{w \to +\infty} \lambda^\delta_i(w) = \lim_{w \to +\infty} x^\delta_i(w) = 0.
  \end{align*}
  Consequently, $\lambda_{i,c}^{\delta}\downarrow0$ as $\alpha\downarrow \frac{1-\delta}{d} \cdot \e^{1 + \frac{1-\delta}{d}}$.
\end{lemma}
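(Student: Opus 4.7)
The plan is a rescaling argument $x = c/w$, which reduces the $w$-dependence of $T_\delta$ to a one-variable limit function. Substituting,
\[
T_\delta(c/w) \;=\; (1-\delta)(1+c/w)\bigl(\alpha + (1+c/w)^w\bigr) - \alpha d c
\;\xrightarrow{w\to\infty}\;
\Phi(c) := (1-\delta)\bigl(\alpha + \e^c\bigr) - \alpha d c,
\]
uniformly on compact subsets of $c \ge 0$. The limit $\Phi$ is convex with $\Phi(0) > 0$ and $\Phi(c) \to \infty$; its minimum at $c^\star := \log(\alpha d/(1-\delta))$ satisfies $\Phi(c^\star) < 0$ iff $\alpha > \frac{1-\delta}{d}\e^{1+(1-\delta)/d}$ (same computation as in the proof of \Cref{lem:T-root-lifecycle}). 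So under the hypothesis, $\Phi$ has exactly two positive roots $c_1 < c_2$, with $\Phi < 0$ strictly between them and $\Phi > 0$ outside.

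Since $T_\delta$ is also convex (\Cref{fact:T-delta}) with $T_\delta(0) > 0$, its two positive roots $x_1^\delta(w) \le x_2^\delta(w)$ likewise bracket the region $\{T_\delta < 0\}$. The strict sign pattern of $\Phi$ combined with uniform convergence on compacts gives: for any small $\eta > 0$ and $w$ large, $T_\delta((c_i \pm \eta)/w)$ has the sign of $\Phi(c_i \pm \eta)$, which pins $w \cdot x_i^\delta(w) \in (c_i - \eta, c_i + \eta)$. Letting $\eta \to 0$ yields $w \cdot x_i^\delta(w) \to c_i$, and therefore $x_i^\delta(w) \to 0$. The trivial bound $(1+x)^{-w} \le 1$ for $x \ge 0$ then delivers $\lambda_i^\delta(w) \le x_i^\delta(w)(1+\alpha)^d \to 0$, proving the first claim.

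For the consequence, I would first show $w_\delta(\alpha) \to \infty$ as $\alpha \downarrow \alpha^\star := \frac{1-\delta}{d}\e^{1+(1-\delta)/d}$. Letting $g(w)$ denote the right-hand side of~\eqref{eq:equation-w-delta}, a direct computation using $(1+1/w)^{w+1} \to \e$ and $(1 - (1-\delta)/(dw))^{w+1} \to \e^{-(1-\delta)/d}$ gives $\lim_{w \to \infty} g(w) = \alpha^\star$, and $g(w) \to \infty$ as $w \downarrow (1-\delta)/d$. The uniqueness of $w_\delta$ asserted in \Cref{lem:T-root-lifecycle} forces $g$ to be strictly decreasing on $((1-\delta)/d, \infty)$, so $w_\delta(\alpha) = g^{-1}(\alpha) \to \infty$. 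Combining with the first part,
\[
\lambda_{2,c}^\delta(\alpha) \;\le\; (1+\alpha)^d \sup_{w > w_\delta(\alpha)} x_2^\delta(w;\alpha) \;\le\; (1+\alpha)^d \cdot \frac{c_2(\alpha)+1}{w_\delta(\alpha)} \;\longrightarrow\; 0,
\]
using that $c_2(\alpha)$ remains bounded for $\alpha$ near $\alpha^\star$. The case $i=1$ is trivial: $\lambda_{1,c}^\delta(\alpha) = \inf_{w > w_\delta(\alpha)} \lambda_1^\delta(w) = 0$ already for each $\alpha > \alpha^\star$, since $\lim_{w \to \infty} \lambda_1^\delta(w) = 0$.

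The main obstacle will be the uniformity of the bound on $c_2(\alpha)$ as $\alpha \downarrow \alpha^\star$: at $\alpha = \alpha^\star$ the roots $c_1, c_2$ of $\Phi$ coalesce at $c^\star$, and a standard implicit function theorem fails there. However, the variational characterization $c_2(\alpha) = \max\{c : \Phi(c;\alpha) \le 0\}$, combined with joint continuity of $\Phi$ in $(c,\alpha)$ and convexity in $c$, implies $c_2(\alpha) \downarrow c^\star < \infty$ as $\alpha \downarrow \alpha^\star$, so $c_2$ is bounded in a small neighborhood of $\alpha^\star$, which is all we need.
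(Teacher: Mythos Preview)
Your argument for the first assertion is correct and in fact sharper than the paper's: the rescaling $x=c/w$ with the limit function $\Phi(c)=(1-\delta)(\alpha+\e^c)-\alpha d c$ pins down the precise asymptotics $w\cdot x_i^\delta(w)\to c_i$, whereas the paper simply fixes an arbitrary $x>0$ and observes that $T_\delta(x)\to+\infty$ and $T'_\delta(x)\to+\infty$ as $w\to\infty$, forcing both roots below $x$ by convexity. Your approach gives more information at the price of a slightly longer argument; both are valid.

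For the consequence, however, your displayed inequality
\[
\sup_{w>w_\delta(\alpha)} x_2^\delta(w;\alpha)\ \le\ \frac{c_2(\alpha)+1}{w_\delta(\alpha)}
\]
has a gap that your final paragraph does not address. From $w\cdot x_2^\delta(w)\to c_2(\alpha)$ you only get $w\cdot x_2^\delta(w;\alpha)\le c_2(\alpha)+1$ for $w>W(\alpha)$ with some threshold $W(\alpha)$; the inequality you wrote requires this for \emph{all} $w>w_\delta(\alpha)$. To control $W(\alpha)$ uniformly you would need, for a single fixed $M$, both a positive test point $T_\delta(M/w;\alpha)>0$ \emph{and} a negative test point $T_\delta(c'/w;\alpha)<0$ with $c'<M$, uniformly in $\alpha$ near $\alpha^\star$. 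The latter is exactly what degenerates as $\alpha\downarrow\alpha^\star$: the interval $(c_1(\alpha),c_2(\alpha))$ on which $\Phi<0$ collapses to the single point $c^\star$, so $|\Phi(c';\alpha)|\to 0$ and the approximation $T_\delta(c'/w)\approx\Phi(c')$ loses its grip. The obstacle you flag (boundedness of $c_2(\alpha)$) is real but is not the binding one.

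The cleanest repair is to abandon the rate for this part and revert to the paper's cruder pointwise bound, which is automatically uniform in $\alpha$: for any fixed $\epsilon>0$, $T_\delta(\epsilon)$ and $T'_\delta(\epsilon)$ tend to $+\infty$ as $w\to\infty$ uniformly for $\alpha$ in a compact interval (the $\alpha$-dependence is linear), so there is a single $W_0(\epsilon)$ with $x_2^\delta(w;\alpha)<\epsilon$ for all $w>W_0(\epsilon)$ and all such $\alpha$. Since $w_\delta(\alpha)\to\infty$, eventually $w_\delta(\alpha)>W_0(\epsilon)$, and you are done.
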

\begin{proof}
 It is easy to verify that $\lambda^\delta_i(w) = x^\delta_i(w)$ as $w\to+\infty$.
  It is then sufficient to show that $x^\delta_i(w) = 0$ as $w\to+\infty$.
  Fix an arbitrary $x > 0$. We have that $T_\delta(x)$ is finite when $w>0$ is finite and 
  \begin{align*}
    \lim_{w \to +\infty} T_\delta(x) = \lim_{w\to +\infty} (1 - \delta)(x + 1)(\alpha + (1 + x)^w) - \alpha d w x = +\infty.
  \end{align*}
  Hence, there is a finite $w_a(x)$ such that $T_\delta(x) > 0$ for all $w \geq w_a(x)$.
  Moreover, we also have that $T'_\delta(x)$ is finite when $w>0$ is finite and 
  \begin{align*}
    \lim_{w \to +\infty} T_\delta'(x) = \lim_{w\to +\infty} (1 - \delta)(1 + w)(1 + x)^w - \alpha(d w - (1 - \delta)) = +\infty.
  \end{align*}
  Together with the fact that $T''_{\delta}(x) > 0, \forall x > 0$ in \Cref{fact:T-delta}, there is a finite $w_b(x)$ such that $T_\delta'(y) > 0$ for all $w \geq w_b(x)$ and $y \geq x$.
  This means that when $w \geq \max\{w_a(x), w_b(x)\}$, it holds that $x^\delta_i(w) < x$ for $i \in \{1, 2\}$.
  This shows that  for any $\epsilon > 0$, $x^\delta_i(w) < \epsilon$ for all sufficiently large $w>0$, i.e.~$\lim_{w \to +\infty} x^\delta_i(w) = 0$.
  
  Next, as $\alpha\downarrow \frac{1-\delta}{d} \cdot \e^{1 + \frac{1-\delta}{d}}$, it can be verified that the unique positive solution $w_\delta$ of  \eqref{eq:equation-w-delta} approaches $+\infty$.
  Fix $x > 0$, when $\alpha\downarrow \frac{1-\delta}{d} \cdot \e^{1 + \frac{1-\delta}{d}}$ and $w \to \infty$, we also have $T_\delta(x)$ and $T'_\delta(x)$ approach to $+\infty$.
  Hence, by a similar argument, both $\lambda^\delta_{1,c} = \inf_{w > w_\delta}\lambda^\delta_1(w)$ and $\lambda^\delta_{2,c} = \sup_{w > w_\delta}\lambda^\delta_2(w)$ approach from the left to the limit $\lim_{w \to +\infty} \lambda^\delta_i(w)=0$. 
\end{proof}

For fixed $\alpha, d$ satisfying \eqref{eq:uniqueness-parameters}, using the notation in \Cref{def:x-lambda-delta}, we first resolve the $\delta$-uniqueness requirement under the coordinate system $(\hat{x}, d, \alpha, w)$, then use the function $\lambda(x)$ in \eqref{eq:lambda-x} to translate the result to the coordinate system $(\lambda, d, \alpha, w)$.
Notice that different $\hat{x}$ might be mapped to the same $\lambda$, the $\delta$-uniqueness regime in terms of $(\lambda, d, \alpha, w)$ defined in \eqref{eq:lambda-d-alpha-w-unique} is quite complicated as illustrated in \Cref{fig:illustrate-unique}.

However, if we strengthen the $\delta$-uniqueness requirement to hold for all $w > 0$ as in \Cref{def:lambda-d-alpha-unique}, then all complicated cases will collapse into a single and elegant requirement on the parameter $\lambda$.
The intuitive ideas are illustrated in \Cref{fig:illustrate-unique}.
By \Cref{def:lambda-d-alpha-unique}, the tuple $(\lambda, d, \alpha)$ is $\delta$-unique if and only if
\begin{align} \label{eq:lambda-d-alpha-unique}
  \forall w > 0, \quad \text{the tuple $(\lambda, d, \alpha, w)$ is $\delta$-unique}.
\end{align}
Under the strengthened $\delta$-uniqueness requirement in \eqref{eq:lambda-d-alpha-unique}, we have the following result.
\begin{theorem} \label{thm:delta-unique-implicit}
  %For $\delta \in [0, 1)$, $\lambda,\alpha > 0$ and $d \geq 1 - \delta$, the tuple 
  Assuming \eqref{eq:uniqueness-parameters},
  the tuple $(\lambda, d, \alpha)$ is $\delta$-unique iff $\lambda \geq  \lambda^{\delta}_{2,c}$.   
\end{theorem}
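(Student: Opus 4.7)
I propose to translate the $\delta$-uniqueness condition into a geometric question about a two-dimensional region in the $(\hat x, w)$-plane, and then to compute the image of that region under the marginal map $\lambda(\hat x, w)$ from~\eqref{eq:lambda-x}. By~\Cref{obs:hat-x-lambda}, positive fixpoints of $F$ at parameters $(d,\alpha,w)$ are exactly those $\hat x>0$ with $\lambda(\hat x,w)=\lambda$. Combining this with~\Cref{def:lambda-d-alpha-unique},~\Cref{def:lambda-d-alpha-w-unique}, and~\eqref{eq:lambda-d-alpha-w-unique}, the tuple $(\lambda,d,\alpha)$ fails to be $\delta$-unique iff $\lambda \in B$, where
\[
B := \{\lambda(\hat x, w) : \hat x, w > 0,\; T_\delta(\hat x,w) < 0\}.
\]
The theorem thus reduces to the claim $B = (0,\lambda^\delta_{2,c})$. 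The subcase $\alpha \le \tfrac{1-\delta}{d}\e^{1+(1-\delta)/d}$ is immediate from~\Cref{lem:T-root-lifecycle} ($B=\emptyset$) together with the convention $\lambda^\delta_{2,c}=0$.

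For the main case $\alpha > \tfrac{1-\delta}{d}\e^{1+(1-\delta)/d}$,~\Cref{lem:T-root-lifecycle} identifies $S := \{T_\delta<0\}$ as the open ``lens'' $\{(\hat x,w):w>w_\delta,\;x^\delta_1(w)<\hat x<x^\delta_2(w)\}$, a path-connected region whose cusp is the tangency point $(x_0,w_\delta)$. A direct computation yields
\[
\partial_w \lambda(\hat x, w) = -\frac{d\alpha\lambda(\hat x,w)\log(1+\hat x)(1+\hat x)^{-w}}{1+\alpha(1+\hat x)^{-w}} < 0,
\]
so $\lambda$ has no interior critical point in $\overline S$. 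Combined with $\lambda\to 0$ as $w\to\infty$ (from~\Cref{lem:lb-is-0}), the supremum of $\lambda$ over $\overline S$ is attained at some finite point on the boundary $\partial\overline S$, which consists of the two curves $\hat x = x^\delta_i(w)$, $i\in\{1,2\}$ (meeting at the cusp). On these curves $\lambda$ equals $\lambda^\delta_i(w)$, and $\sup_w \lambda^\delta_2(w) = \lambda^\delta_{2,c}$ by definition. To dominate $\sup_w \lambda^\delta_1(w)$, I plan to argue that $x^\delta_1(w)$ is monotonically decreasing in $w \ge w_\delta$ via the implicit function theorem applied to $T_\delta = 0$ (using $\partial_x T_\delta < 0$ at the smaller root together with a sign analysis of $\partial_w T_\delta$ via the defining equation $T_\delta(x^\delta_1,w)=0$); since additionally $\partial_x\lambda > 0$ at $x^\delta_1$ (because $T_0(x^\delta_1)=\delta(1+x^\delta_1)(\alpha+(1+x^\delta_1)^w)>0$) and $\partial_w\lambda<0$, this forces $\lambda^\delta_1$ to be monotonically decreasing, so $\sup_w\lambda^\delta_1(w) = \lambda^\delta_1(w_\delta) = \lambda^\delta_2(w_\delta)\le \lambda^\delta_{2,c}$.

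Combining, $\sup_{\overline S}\lambda = \lambda^\delta_{2,c}$, and since this sup is attained only on $\partial\overline S$ (not in the open $S$), $\lambda^\delta_{2,c}\notin B$. For the reverse inclusion $(0,\lambda^\delta_{2,c})\subseteq B$: path-connectedness of $S$ (any two points can be joined through the midpoint curve $(x^\delta_1(w)+x^\delta_2(w))/2$) passes through the continuous map $\lambda$ to make $B=\lambda(S)$ a connected subset of $\^R_{>0}$, hence an interval. Combined with $\sup B = \lambda^\delta_{2,c}$, it remains to show $\inf B = 0$, which follows from~\Cref{lem:lb-is-0}: as $w\to\infty$, both $x^\delta_1(w),x^\delta_2(w)\to 0$, so $\lambda(\hat x,w)\to 0$ for $(\hat x,w)$ inside the lens at large $w$. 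Thus $B=(0,\lambda^\delta_{2,c})$, completing the proof.

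The main obstacle I anticipate is the monotonicity of $x^\delta_1(w)$ in $w$ (equivalently, sign control of $\partial_w T_\delta$ along $T_\delta=0$ at the smaller root). A local Puiseux-type expansion near the cusp already gives $\tfrac{d}{dw}\lambda^\delta_1(w)\to-\infty$ as $w\to w_\delta^+$, so this is the correct sign locally; extending it globally may require the detailed algebraic structure of $T_\delta$. Failing a clean derivative analysis, an alternative route is to bound $\lambda^\delta_1(w)\le \lambda(x^\delta_1(w),w_\delta)$ using $\partial_w\lambda<0$ and then exploit monotonicity of $\lambda(\cdot,w_\delta)$ (where $T_0\ge 0$) to compare with a point on the $x^\delta_2$ branch.
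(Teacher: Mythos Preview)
Your approach is correct and genuinely different from the paper's. The paper fixes $w$ and translates $\delta$-uniqueness of $(\lambda,d,\alpha,w)$ through the one-variable map $\hat x\mapsto\lambda(\hat x,w)$, which forces a three-way case split on $\alpha$ according to whether $T_0$ can vanish; in the hardest case ($\alpha>\tfrac{1}{d}\e^{1+1/d}$, so $\lambda(\cdot,w)$ is non-monotone for $w>w_0$) it must track multiple-fixpoint regions and invoke \Cref{lem:lambda-c-monotone-0} and \Cref{lem:lambda-c-monotone-delta} to collapse them. By contrast, you work globally in the $(\hat x,w)$-plane and observe that non-$\delta$-uniqueness is exactly $\lambda\in B=\lambda(S)$; path-connectedness of the lens $S$ forces $B$ to be an interval, so the theorem reduces to locating $\inf B$ and $\sup B$. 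This is cleaner and sidesteps the multiple-fixpoint bookkeeping entirely. Regarding your identified obstacle: the monotonicity of $x^\delta_1$ is indeed provable along your sketch (it is essentially \Cref{lem:sign-d-xi} combined with \Cref{lem:x1-le-xm}), but you can bypass it using the fact you already noted, namely $\partial_{\hat x}\lambda>0$ on $\partial S$ when $\delta>0$ (since $T_0=T_\delta+\delta(1+\hat x)(\alpha+(1+\hat x)^w)>0$ there): were the maximum of $\lambda$ over $\overline S$ attained at a point $(x^\delta_1(w^*),w^*)$ with $w^*>w_\delta$, moving infinitesimally in the $+\hat x$ direction would enter $S\subset\overline S$ and strictly increase $\lambda$, contradicting maximality; hence the maximum sits on the $x^\delta_2$-branch or at the cusp, giving value at most $\lambda^\delta_{2,c}$. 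For $\delta=0$ this degenerates ($\partial_{\hat x}\lambda=0$ on $\partial S$), but then the chain rule gives $(\lambda^0_i)'=\partial_{\hat x}\lambda\cdot(x^0_i)'+\partial_w\lambda=\partial_w\lambda<0$ directly, which is exactly \Cref{lem:lambda-c-monotone-0}. Your fallback route in the last paragraph is therefore unnecessary.
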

The proof of \Cref{thm:delta-unique-implicit} is given in \Cref{sec:delta-unique-implicit}.

Notice that when $\alpha \leq \frac{1-\delta}{d}\e^{1 + \frac{1-\delta}{d}}$, by definition, $\lambda^\delta_{2,c} = 0$.
To prove \Cref{thm:delta-unique}, it is sufficient for us to resolve $\lambda^\delta_{2,c}$ when $\alpha > \frac{1-\delta}{d}\e^{1 + \frac{1-\delta}{d}}$.
In this case, by \Cref{def:x-lambda-delta}, we know that $\lambda^\delta_2$ is an implicit function of $w$.
Since we want to resolve the supremum $\lambda^\delta_{2,c}$ of the function $\lambda^\delta_2(w)$ for $w > w_\delta$, it is natural for us to consider the first order condition of the function $\lambda^\delta_2(w)$.
Towards understanding the sign of $(\lambda^\delta_2)'(w)$, we meet another function which is crucial to the analysis:
%Recall the definition of $M_\delta(x)$ in \Cref{thm:delta-unique}:
\begin{align} \label{eq:M}
  M_\delta(x) := w \log(1 + x) (\alpha d - (1 + x)^{w+1}) + \delta(x + 1)(\alpha + (x + 1)^w).
\end{align}
Recall $T_\delta$ in \eqref{eq:function-T-delta}, it turns out that $\lambda^\delta_{2,c}$ can be explicitly resolved by levering both $T_\delta$ and $M_\delta$.

\begin{theorem} \label{thm:resolve-lambda-2-c}
  Assuming \eqref{eq:uniqueness-parameters}, if $\alpha > \frac{1 - \delta}{d} \cdot \e^{1 + \frac{1 - \delta}{d}}$, then $\lambda^\delta_{2,c} = \lambda(x):=x(1 + \alpha(1 + x)^{-w})^d$, where the pair $(x, w) = (x_c, w_c)$ is the unique positive solution of the following system
  \begin{align*}
      \begin{cases}
        T_\delta := (1 - \delta)(x + 1)(\alpha + (1 + x)^w) - \alpha d w x = 0, \\
        M_\delta := w \log(1 + x) (\alpha d - (1 + x)^{w+1}) + \delta(x + 1)(\alpha + (x + 1)^w) = 0.
      \end{cases}
  \end{align*}
\end{theorem}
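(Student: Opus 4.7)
The plan is to use envelope-style implicit differentiation along the curve $\{T_\delta(x,w) = 0\}$, on which $\lambda^\delta_2(w) = x\cdot(1+\alpha(1+x)^{-w})^d$ with $x = x^\delta_2(w)$. Writing $A := 1 + \alpha(1+x)^{-w}$, the chain rule gives
\[
\frac{d\lambda^\delta_2}{dw} \;=\; \frac{\partial \lambda}{\partial x} \cdot \frac{dx^\delta_2}{dw} + \frac{\partial \lambda}{\partial w},\qquad \frac{dx^\delta_2}{dw} = -\frac{\partial_w T_\delta}{\partial_x T_\delta},
\]
the second identity coming from implicit differentiation of $T_\delta(x^\delta_2(w),w)=0$. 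First I would use the constraint $T_\delta = 0$, equivalently $(1-\delta)(1+x)(\alpha + (1+x)^w) = \alpha dwx$, to establish the key algebraic identity
\[
(1+x)\,\partial_x T_\delta(x,w) \;=\; w\bigl[(1-\delta)(1+x)^{w+1} - \alpha d\bigr]\qquad\text{on } \{T_\delta = 0\}.
\]

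Combining this identity with the chain-rule expression and simplifying systematically with $T_\delta=0$ is expected to collapse everything into
\[
\frac{d\lambda^\delta_2}{dw} \;=\; \frac{d\,\alpha\, x\,A^{d-1}}{w(1+x)^w\bigl[(1-\delta)(1+x)^{w+1} - \alpha d\bigr]}\;\cdot\; M_\delta(x,w),
\]
with $M_\delta$ as defined in \eqref{eq:M}. By \Cref{fact:T-delta}, $T_\delta$ is strictly convex in $x$, and since $x^\delta_2$ is its \emph{larger} positive root for $w > w_\delta$, we have $\partial_x T_\delta(x^\delta_2, w) > 0$, which via the identity above forces $(1-\delta)(1+x^\delta_2)^{w+1} - \alpha d > 0$. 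Hence the pre-factor is strictly positive and $\operatorname{sign}\bigl(\tfrac{d\lambda^\delta_2}{dw}\bigr) = \operatorname{sign}\bigl(M_\delta(x^\delta_2(w), w)\bigr)$. In particular every interior critical point of $\lambda^\delta_2$ satisfies $T_\delta = M_\delta = 0$, matching the system in the theorem.

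Existence of an interior maximizer $w_c > w_\delta$ then follows from standard boundary considerations: \Cref{lem:lb-is-0} gives $\lambda^\delta_2(w) \to 0$ as $w \to \infty$, while at $w\downarrow w_\delta$ the two roots of $T_\delta$ coalesce, so $\partial_x T_\delta(x^\delta_2,w) \downarrow 0^+$ and the formula above makes $\tfrac{d\lambda^\delta_2}{dw}\uparrow +\infty$ (one checks $M_\delta$ does not vanish at the double root by direct substitution into the coalescence relations $T_\delta = \partial_x T_\delta = 0$). Thus $\lambda^\delta_2$ strictly increases at $w_\delta^+$ and tends to $0$ at infinity, so its supremum is attained at some interior critical point. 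The main obstacle is \emph{uniqueness} of the positive solution: the sign identification reduces it to showing that $w\mapsto M_\delta(x^\delta_2(w),w)$ has exactly one zero on $(w_\delta,\infty)$. I plan to prove this by verifying that the total derivative $\tfrac{d}{dw}M_\delta(x^\delta_2(w),w)$ is strictly negative at every zero of $M_\delta$, so any crossing must be from $+$ to $-$. After substituting $\tfrac{dx^\delta_2}{dw}$ from the implicit function theorem and simplifying with both $T_\delta=0$ and $M_\delta=0$ (which together yield enough relations to eliminate $\alpha dwx$ and $\log(1+x)$), this should reduce to a one-variable inequality in $x$ and $\xi := (1+x)^w$. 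I expect this last reduction to be the technically most demanding step but purely mechanical once the relations from $T_\delta = 0$ are systematically applied.
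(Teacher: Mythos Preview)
Your approach is correct and runs parallel to the paper's. The sign identity $\operatorname{sign}(d\lambda^\delta_2/dw) = \operatorname{sign}\bigl(M_\delta(x^\delta_2(w),w)\bigr)$ is exactly \Cref{lem:sign-d-lambda}, and your algebraic identity $(1+x)\,\partial_x T_\delta = w\bigl[(1-\delta)(1+x)^{w+1}-\alpha d\bigr]$ on $\{T_\delta=0\}$ is correct; your explicit formula for $d\lambda^\delta_2/dw$ matches the paper's computation after the same substitution. The organizational difference is in the uniqueness step. Rather than computing the total derivative of $w\mapsto M_\delta(x^\delta_2(w),w)$ directly, the paper introduces the auxiliary curve $x^\delta_M(w)$, the unique positive root of $M_\delta(\cdot,w)$ (\Cref{lem:sign-d-xm}), and turns uniqueness into a curve-crossing statement: $x^\delta_2(w)$ is strictly increasing as long as $M_\delta(x^\delta_2)\ge 0$ (\Cref{lem:sign-d-xi}) while $x^\delta_M(w)$ is strictly decreasing, so they meet exactly once (\Cref{lem:x2-xm}). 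If you expand your total-derivative computation at a zero, it decomposes into precisely these same three sign checks ($\partial_x M_\delta<0$, $\partial_w M_\delta<0$, and $dx^\delta_2/dw>0$ at the crossing), so the ``one-variable inequality'' you anticipate is really these three pieces; the paper's modular route via $x^\delta_M$ is probably cleaner than bundling them.

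One small but genuine gap: you only establish uniqueness among critical points of $\lambda^\delta_2$, i.e.\ solutions with $x=x^\delta_2(w)$. The theorem asserts uniqueness of \emph{all} positive solutions of $T_\delta=M_\delta=0$, which also includes the smaller-root branch $x=x^\delta_1(w)$. The paper rules this branch out via \Cref{lem:x1-le-xm}, showing $M_\delta(x^\delta_1(w))>0$ for every $w\ge w_\delta$ (not just at the coalescence $w=w_\delta$). You should add that check. Also note the paper treats $\delta=0$ separately (via \Cref{lem:lambda-c-monotone-0}): there the maximizer sits at the boundary $w=w_0=w_\delta$, where $\partial_x T_\delta$ vanishes and your prefactor degenerates, so your interior-critical-point argument does not directly apply.
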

The proof of \Cref{thm:resolve-lambda-2-c} is given in \Cref{sec:delta-unique-general}.

\begin{proof}[Proof of \Cref{thm:delta-unique}]
  \Cref{thm:delta-unique-implicit} and \Cref{thm:resolve-lambda-2-c} together imply \Cref{thm:delta-unique} directly.
\end{proof}

Using \Cref{thm:delta-unique}, we go further to understand the $\delta$-uniqueness when $\lambda = \alpha$ is required.
Given $\alpha, d$ that satisfies \eqref{eq:uniqueness-parameters}, if $\alpha \leq \frac{1-\delta}{d} \cdot \e^{1 + \frac{1-\delta}{d}}$, then by the definition of $\lambda^\delta_{2, c}$ in \Cref{def:x-lambda-delta}, it holds that $\lambda = \alpha \geq 0 = \lambda^\delta_{2,c}$.
Together with \Cref{thm:delta-unique-implicit}, it holds that the tuple $(\lambda, d, \alpha)$ is $\delta$-unique as defined in \eqref{eq:lambda-d-alpha-unique}.
If $\alpha > \frac{1 - \delta}{d} \cdot \e^{1 + \frac{1-\delta}{d}}$, it is natural to add an extra equation which ensures $\lambda =\lambda(x)=\alpha$ into the system in \Cref{thm:resolve-lambda-2-c}.
This leads us to the following system.
\begin{align} \label{eq:delta-sys-eq}
  \begin{cases}
    T_\delta := (1 - \delta)(x + 1)(\alpha + (1 + x)^w) - \alpha d w x = 0, \\
    M_\delta := w \log(1 + x) (\alpha d - (1 + x)^{w+1}) + \delta(x + 1)(\alpha + (x + 1)^w) = 0, \\
    G       := x(1 + \alpha(1 + x)^{-w})^d - \alpha = 0.
  \end{cases}
\end{align}

Surprisingly, when $\delta = 0$, the system in \eqref{eq:delta-sys-eq} forms a beautiful symmetric structure, and can be resolved explicitly (as in \Cref{thm:delta-unique-lambda-d-alpha-w}, and will be formally restated and proved in \Cref{sec:0-unique-eq}).

Furthermore, viewing $\delta(\alpha)$ as an implicit function encoded by \eqref{eq:delta-sys-eq}, by calculating $\delta'(\alpha)$, we can approximately solve \eqref{eq:delta-sys-eq} in a sufficiently small neighborhood of $\alpha = \lambda_c(\Delta)$.
%Finally, we have \Cref{thm:delta-unique-eq}, which is restated as below.
%Formally, we have the following result, which is the same as \Cref{thm:delta-unique-eq}.

\begin{theorem} \label{thm:solve-sys-eq}
  For $\Delta = d + 1 \geq 3$, let $\delta \in [0, 1)$ be a sufficiently small real number.
  For $\lambda = (1 - \delta)\lambda_c(\Delta)$, with $\alpha = \lambda$, the system defined in \eqref{eq:delta-sys-eq} has a solution $(\delta_c, w_c, x_c)$, such that $\delta_c \geq \frac{\delta}{4}$.
  In particular, $\delta_c = 0$ when $\lambda = \lambda_c(\Delta)=\frac{(\Delta-1)^{\Delta-1}}{(\Delta-2)^{\Delta}}$.
\end{theorem}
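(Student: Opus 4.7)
My plan is to apply the implicit function theorem at the base point $\delta = 0$, $\lambda = \alpha = \lambda_c(\Delta)$, and then read off the quantitative bound from the linearization.

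First I would verify the base case. At $\lambda = \alpha = \lambda_c(\Delta) = d^d/(d-1)^{d+1}$, the corollary following \Cref{thm:delta-unique-lambda-d-alpha-w} dictates $w_c = d$, and the classical critical fixpoint of the univariate recurrence $f(x) = \lambda(1+x)^{-d}$ is $x_c = 1/(d-1)$; this automatically becomes a critical fixpoint of the two-level symmetric recurrence since the squared derivative equals $1$. A direct substitution verifies that $(\delta,w,x) = (0, d, 1/(d-1))$ zeroes all three equations in \eqref{eq:delta-sys-eq}: using the identity $(1+x_c)^{w_c} = (d/(d-1))^d = \alpha(d-1)$ one computes $T_0 = 0$; using $(1+x_c)^{w_c+1} = \alpha d$ the bracket inside $M_0$ vanishes; and $G = 0$ follows from $1 + \alpha(1+x_c)^{-w_c} = d/(d-1)$.

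With the base point in hand, I view the three equations of \eqref{eq:delta-sys-eq} as implicitly defining $(\delta, w, x)$ as a function of the parameter $\alpha$. The partials in the $\delta$-direction at the base point are particularly clean: $\partial_\delta T_\delta = -K$, $\partial_\delta M_\delta = K$, $\partial_\delta G = 0$, where $K = (x_c+1)(\alpha + (1+x_c)^{w_c}) = \alpha d^2/(d-1) > 0$, so the $\delta$-column of the Jacobian has a simple structure. The partials in $w$ and $x$ can be evaluated symbolically using the base-point identities above. I would compute the full $3 \times 3$ Jacobian, verify its nonsingularity by a determinant calculation, and apply the implicit function theorem to obtain a smooth local solution curve $\delta \mapsto (\delta_c(\delta), w_c(\delta), x_c(\delta))$ starting at $(0, d, 1/(d-1))$ when $\delta = 0$. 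Then by Cramer's rule I would compute the derivative $(d\delta_c/d\delta)|_0$ (via the chain rule through $\alpha = (1-\delta)\lambda_c(\Delta)$), and conclude the statement via the Taylor expansion $\delta_c(\delta) = (d\delta_c/d\delta)|_0 \cdot \delta + O(\delta^2)$, with the quadratic remainder absorbed by taking $\delta$ sufficiently small.

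The main obstacle I foresee is verifying the quantitative bound $(d\delta_c/d\delta)|_0 \geq 1/4$ uniformly across all $d \geq 2$. After substitution of the base-point values, this derivative becomes a rational function in $d$ (involving powers of $d/(d-1)$ that simplify via the base-point identities), and I would either compute it explicitly in closed form or exhibit a monotonicity argument in $d$. I expect the true derivative to approach a value close to $1$ for large $d$ since the system becomes less singular as the branching factor grows, so the stated bound $1/4$ should provide a safe margin that simultaneously accommodates small $d$ and the quadratic remainder in the Taylor expansion. The bulk of the work lies in the careful bookkeeping of $\partial_w$ and $\partial_x$ of $T_\delta$, $M_\delta$, $G$ at the base point, followed by the determinant and cofactor expansions; these are routine but unavoidable.
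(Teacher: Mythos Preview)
Your proposal is correct and follows essentially the same approach as the paper: verify the base point $(\delta,w,x,\alpha)=(0,d,1/(d-1),\lambda_c(\Delta))$, apply the implicit function theorem to view $(\delta,w,x)$ as a function of $\alpha$, compute $\delta'(\alpha)$ via Cramer's rule, and conclude by Taylor expansion. The paper's computation (exploiting the base-point identities $\partial_x T=0$, $\partial_x G=0$, and $\partial_\alpha G/\partial_w G=0$ to collapse the Cramer determinant) yields the closed form $\left.\frac{d\delta(e^y)}{dy}\right|_{y=\log\lambda_c}=-\frac{d-1}{d}$, which confirms your expectation that the derivative is at least $1/2$ for $d\ge 2$ and tends to $1$ as $d\to\infty$.
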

The proof of \Cref{thm:solve-sys-eq} will be given in \Cref{sec:delta-unique-eq}.

Note that \Cref{thm:solve-sys-eq} only proves \Cref{thm:delta-unique-eq} for sufficiently small $\delta$.
And it is very unlikely that one could resolve \eqref{eq:delta-sys-eq} for every $\delta \in (0, 1)$.
In \Cref{sec:delta-unique-eq-7}, we will bypass this barrier and prove \Cref{thm:delta-unique-eq} using the analysis of correlation decay.

%\Cref{thm:solve-sys-eq} directly implies \Cref{thm:delta-unique-eq} and the proof of \Cref{thm:solve-sys-eq} will be given in \Cref{sec:delta-unique-eq}

\subsection{The $\delta$-uniqueness regime for $(\lambda,d)$}
\label{sec:delta-unique-eq}
In this section, we prove \Cref{thm:solve-sys-eq}.
Let $\Delta = d + 1 \geq 3$, and $\delta \in [0, 1)$ be a sufficiently small real number.
%When $\lambda = \alpha > 0$, we will show that the tuple $(\lambda, d, \alpha)$ is $\frac{\delta}{4}$-unique if $\lambda \leq (1 - \delta) \lambda_c(\Delta)$.
%
%For simplicity, first we will do this for the ``$0$-unique'' cases.
%Recall the definition for $\delta$-unique for $\delta > 0$ at \Cref{def:lambda-d-alpha-unique}.
%We extend the definition for $\delta$-unique to the case where $\delta = 0$ by simply allowing $\delta = 0$ on top of the original definition.
%{\color{red} explain what is $0$-unique / modify the definition of $delta$-unique}
%
%Recall that for $\delta \in [0, 1)$, the $\alpha \leq \frac{1-\delta}{d} \cdot \e^{1 + \frac{1-\delta}{d}}$ case is trivially $\delta$-unique by \Cref{thm:delta-unique}.
%
%In the rest part of this section, we assume $\alpha > \frac{1 - \delta}{d} \cdot \e^{1 + \frac{1-\delta}{d}}$ without loss of generality.
%
%Consider the following function:
%
First, we resolve the $\delta = 0$ case without assuming $\lambda = \alpha$, that is, we will assume $\delta = 0$ and remove the equation $G = 0$ from the system in \eqref{eq:delta-sys-eq}.
%By \Cref{thm:delta-unique}, it holds that $(\lambda, d, \alpha)$ is $0$-unique iff $\lambda \geq \lambda^0_{2,c} = \lambda(x_c)$ as defined in \eqref{eq:lambda-x}, where $(x_c, w_c)$ is the unique positive solution of the following system
This simplifies \eqref{eq:delta-sys-eq} a lot as follows.
\begin{align} \label{eq:0-unique-sys}
  \begin{cases}
    T_0 := (x + 1)(\alpha + (1 + x)^w) - \alpha d w x = 0, \\
    M_0 := \alpha d - (1 + x)^{w+1} = 0.
  \end{cases}
\end{align}
\eqref{eq:0-unique-sys} can be solved exactly.
We define the following function to state the result conveniently.
\begin{align} \label{eq:A}
  A(d, w, \delta) := \frac{(1 - \delta) d^w (w + 1)^{w+1}}{(d w - (1 - \delta))^{w+1}}.
\end{align}
Recall that the ${\boldsymbol{\hat\lambda}}(d, w)$ defined in \eqref{eq:A0} is just ${\boldsymbol{\hat\lambda}}(d, w)=A(d, w, 0)$.
%The following result that characterize the $0$-uniqueness plays an central role in the proof.
\begin{lemma} \label{lem:0-unique}
  Let $d \geq 1$ and $\alpha > \frac{1}{d} \e^{1 + \frac{1}{d}}$ and let $(x, w) = (x_c, w_c)$ be the solution of \eqref{eq:0-unique-sys}, it holds that
    \begin{align*} 
      \alpha = \frac{d^w (w + 1)^{w+1}}{(d w - 1)^{w+1}} = A(d, w, 0), \quad \text{and} \quad 
      \lambda(x) = \frac{w^d (d+1)^{d+1}}{(d w - 1)^{d+1}} = A(w, d, 0),
    \end{align*}
    where $\lambda(x) = x (1 + \alpha(1 + x)^{-w})^d$ as defined in \eqref{eq:lambda-x}.
    %where $w_0$ is defined in \Cref{lem:T-root-lifecycle}.
    %Then, let 
    %\begin{align*}
    %  \lambda^0_{2,c} = \frac{w^d (d+1)^{d+1}}{(d w - 1)^{d+1}} = A(w, d, 0),
    %\end{align*}
    %the tuple $(\lambda, d, \alpha)$ is $0$-unique iff $\lambda \geq \lambda^0_{2,c}$.
\end{lemma}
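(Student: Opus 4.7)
The plan is to solve the system \eqref{eq:0-unique-sys} by hand using elementary algebra: the key is that $M_0 = 0$ immediately eliminates one variable, and the symmetric form of the final answer is not accidental but a consequence of the $(d,w) \leftrightarrow (w,d)$ duality built into the recurrence.

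First I would rewrite $M_0 = 0$ as $(1+x)^{w+1} = \alpha d$, which gives $(1+x)^w = \alpha d /(1+x)$. Substituting this into $T_0 = 0$ collapses $T_0$ to
\begin{equation*}
(1+x)\alpha + \alpha d - \alpha d w x = 0,
\end{equation*}
which, after dividing by $\alpha$, is linear in $x$ and yields the explicit value $x = (d+1)/(dw-1)$. A small sanity check is needed here: we must have $dw > 1$ for $x$ to be positive, and by \Cref{lem:T-root-lifecycle} the hypothesis $\alpha > \frac{1}{d}\e^{1 + 1/d}$ together with $d \ge 1$ does guarantee the existence of a positive solution, so $(x_c, w_c)$ is indeed well-defined with $d w_c > 1$.

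Next I would compute $1+x = d(w+1)/(dw-1)$ and plug into $(1+x)^{w+1} = \alpha d$ to read off
\begin{equation*}
\alpha = \frac{1}{d}\left(\frac{d(w+1)}{dw-1}\right)^{w+1} = \frac{d^w(w+1)^{w+1}}{(dw-1)^{w+1}} = A(d,w,0),
\end{equation*}
which is the first claimed identity. For the second identity, I would simplify the inner factor of $\lambda(x) = x(1 + \alpha(1+x)^{-w})^d$ using $(1+x)^{w+1} = \alpha d$, namely
\begin{equation*}
\alpha(1+x)^{-w} = \frac{\alpha(1+x)}{\alpha d} = \frac{1+x}{d},
\end{equation*}
so that $1 + \alpha(1+x)^{-w} = (d+1+x)/d$. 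Substituting the value of $x$ gives $(d+1+x)/d = (d+1)w/(dw-1)$, and then
\begin{equation*}
\lambda(x) = \frac{d+1}{dw-1}\cdot\frac{(d+1)^d w^d}{(dw-1)^d} = \frac{(d+1)^{d+1} w^d}{(dw-1)^{d+1}} = A(w,d,0),
\end{equation*}
completing the proof.

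There is no real obstacle: the only potentially delicate point is making sure the reduction is an equivalence rather than a one-way implication, i.e.\ that every solution of \eqref{eq:0-unique-sys} in the regime $\alpha > \frac{1}{d}\e^{1+1/d}$ is forced into the form above. Since $M_0 = 0$ uniquely determines $1+x$ in terms of $(\alpha, w)$ and then $T_0 = 0$ becomes a single linear equation in $x$, the system in fact has a unique positive solution once $w$ is fixed, and the computation above displays it explicitly. I would close by remarking that the two formulas exhibit the expected $(d,w) \leftrightarrow (w,d)$ swap between $\alpha$ and $\lambda$, which is precisely the symmetry that underlies \Cref{thm:delta-unique-lambda-d-alpha-w}.
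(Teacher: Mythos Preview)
Your proposal is correct and follows essentially the same approach as the paper: both use $M_0=0$ to get $(1+x)^{w+1}=\alpha d$, substitute into $T_0=0$ to obtain the linear equation yielding $x=(d+1)/(dw-1)$, and then back-substitute to derive the formulas for $\alpha$ and $\lambda(x)$. The only cosmetic difference is in simplifying the inner factor of $\lambda(x)$---the paper computes $(1+x)^w=\alpha(dw-1)/(w+1)$ first, whereas you write $\alpha(1+x)^{-w}=(1+x)/d$ directly---but both routes arrive at $1+\alpha(1+x)^{-w}=w(d+1)/(dw-1)$ and the same final answer.
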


\begin{proof} %(new)
  By the equations given in \eqref{eq:0-unique-sys}, let $x = x_c$ and $w = w_c$, it holds that
  \begin{align*}
    (1 + x)^w = \frac{\alpha d}{1 + x} \quad \text{and} \quad \alpha + (1 + x)^w = \frac{\alpha d w x}{x + 1}.
  \end{align*}
  This means $\frac{\alpha d + \alpha (x + 1)}{x + 1} = \frac{\alpha d w x}{x + 1}$, which gives us
  \begin{align} \label{eq:x0-form-1}
    x = \frac{d + 1}{d w - 1}.
  \end{align}
  By $(x + 1)^{w+1} = \alpha d$ in \eqref{eq:0-unique-sys}, it holds that
  \begin{align*}
    \alpha = \frac{1}{d} (x + 1)^{w+1} \overset{\eqref{eq:x0-form-1}}{=} \frac{1}{d} \tp{1 + \frac{d+1}{dw-1}}^{w+1} = \frac{d^w(w+1)^{w+1}}{(dw - 1)^{w+1}} = A(d, w, 0).
  \end{align*}
  By $(x + 1)^{w+1} = \alpha d$ in \eqref{eq:0-unique-sys} and $x = \frac{d+1}{d w - 1}$ in \eqref{eq:x0-form-1}, we also have
  \begin{align} \label{eq:x0-form-2}
    (x + 1)^w = \frac{\alpha d}{1 + \frac{d+1}{d w - 1}} = \frac{\alpha (d w - 1)}{w + 1}.
  \end{align}
  Finally, we resolve $\lambda(x)$ as follows.
  Note that $\lambda(x) = x(1 + \alpha (1 + x)^{-w})^{d}$, we have
  \begin{align*}
    \lambda(x)
    &\overset{\eqref{eq:x0-form-2}}{=} x \tp{1 + \frac{w + 1}{d w - 1}}^d
    \overset{\eqref{eq:x0-form-1}}{=} \frac{w^d (d+1)^{d+1}}{(d w - 1)^{d+1}} = A(w, d, 0). \qedhere
  \end{align*}
  %where we recall that $w = w_0$.
  %Also, recall the definition of $w_0$ in \Cref{lem:T-root-lifecycle}, it holds that
  %\begin{align*}
  %  \alpha &= \frac{d^w (w + 1)^{w+1}}{(d w - 1)^{w+1}} = \Lambda(d, w, 0).
  %\end{align*}
  %This finishes the proof.
\end{proof}

Now we solve \eqref{eq:delta-sys-eq} only assuming $\delta = 0$.
That is, the system in \eqref{eq:delta-sys-eq} becomes
\begin{align} \label{eq:0-unique-eq}
  \begin{cases}
    T := (x + 1)(\alpha + (1 + x)^w) - \alpha d w x = 0, \\
    M := \alpha d - (1 + x)^{w+1}  = 0, \\
    G := x(1 + \alpha(1 + x)^{-w})^d - \alpha = \lambda(x) - \alpha = 0.
  \end{cases}
\end{align}

Based on \Cref{lem:0-unique}, we note that $\alpha$ and $\lambda(x)$ forms a beautiful symmetry structure.
Leveraging this symmetry structure, we have the following corollary.
%if we add the requirement that $\lambda = \alpha$, then the $0$-uniqueness region for $(\lambda, d, \alpha)$ could be resolved explicitly.
%Formally, we have the following result.

\begin{corollary} \label{lem:exact-0-unique-eq}
  Suppose $\Delta = d + 1 \geq 3$, let $w = d$ in \eqref{eq:0-unique-sys}, let $(\alpha, x)$ be the solution of \eqref{eq:0-unique-sys}.
  Then, it holds that $\alpha = \lambda(x) = \lambda_c(\Delta)$.
\end{corollary}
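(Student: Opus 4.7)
The plan is to directly invoke \Cref{lem:0-unique} with the specialization $w = d$. That lemma, when $(x,w)$ solves \eqref{eq:0-unique-sys}, gives $\alpha = A(d,w,0)$ and $\lambda(x) = A(w,d,0)$. Setting $w = d$, both expressions collapse to the common value $A(d,d,0) = \frac{d^d(d+1)^{d+1}}{(d^2-1)^{d+1}}$, because $A(d,w,0)$ and $A(w,d,0)$ are obtained from each other by swapping the roles of $d$ and $w$. This immediately yields $\alpha = \lambda(x)$ without further computation.

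For the identification with $\lambda_c(\Delta)$, the key algebraic step is to factor $d^2 - 1 = (d-1)(d+1)$, whence
\[
  A(d,d,0) = \frac{d^d(d+1)^{d+1}}{(d-1)^{d+1}(d+1)^{d+1}} = \frac{d^d}{(d-1)^{d+1}},
\]
and under $\Delta = d+1$ this is exactly $\frac{(\Delta-1)^{\Delta-1}}{(\Delta-2)^{\Delta}} = \lambda_c(\Delta)$. Combining, $\alpha = \lambda(x) = \lambda_c(\Delta)$, which is the claim.

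There is essentially no obstacle: the corollary is a one-line consequence of \Cref{lem:0-unique} once the symmetry $w = d$ is observed and the factoring of $d^2-1$ is carried out. The only auxiliary point is to verify that the hypothesis $\alpha > \frac{1}{d}\e^{1+1/d}$ of \Cref{lem:0-unique} is satisfied by the specific value $\alpha = \lambda_c(\Delta)$ when $d \geq 2$; this reduces to the elementary inequality $\bigl(\tfrac{d}{d-1}\bigr)^{d+1} > \e^{1+1/d}$, provable via the Taylor expansion of $\log\bigl(1 + \tfrac{1}{d-1}\bigr)$. Alternatively, since the corollary only asserts a specific equality, one can bypass \Cref{lem:0-unique}'s hypothesis entirely by directly substituting $(x,w,\alpha) = \bigl(\tfrac{1}{d-1},\, d,\, \tfrac{d^d}{(d-1)^{d+1}}\bigr)$ into \eqref{eq:0-unique-sys}, checking that both $T_0 = 0$ and $M_0 = 0$ by straightforward manipulation, and then computing $\lambda(x) = \tfrac{1}{d-1}\bigl(\tfrac{d}{d-1}\bigr)^d = \alpha$.
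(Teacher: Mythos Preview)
Your proposal is correct and follows essentially the same approach as the paper: invoke \Cref{lem:0-unique} with $w=d$ to get $\alpha = A(d,d,0) = \lambda(x)$, then simplify $A(d,d,0)$ via the factorization $d^2-1=(d-1)(d+1)$ to obtain $\lambda_c(\Delta)$. Your additional remark about checking the hypothesis $\alpha > \tfrac{1}{d}\e^{1+1/d}$ (or alternatively verifying the solution by direct substitution) is a careful touch the paper omits, but is not needed since the algebraic identities in the proof of \Cref{lem:0-unique} only use that $(x,w)$ solves \eqref{eq:0-unique-sys}.
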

\begin{proof}
  %By simple calculation, we have
  According to \Cref{lem:0-unique}, if we choose $w = d$, then it holds that $\alpha = \lambda(x)$ in \Cref{lem:0-unique}.
  After that, note that a simple calculation gives us
  \begin{align*}
    \alpha = \lambda &= A(d, d, 0) = \frac{d^d(d + 1)^{d+1}}{(d^2 - 1)^{d+1}} = \frac{d^d}{(d-1)^{d+1}} = \frac{(\Delta - 1)^{\Delta-1}}{(\Delta - 2)^\Delta} = \lambda_c(\Delta). \qedhere
  \end{align*}
\end{proof}
Using \Cref{lem:0-unique}, we are also able to prove \Cref{thm:delta-unique-lambda-d-alpha-w}, the proof will be given in \Cref{sec:0-unique-eq}.

Combining \Cref{lem:exact-0-unique-eq}, \Cref{thm:delta-unique} with a monotonicity argument shows following.
\begin{lemma} \label{lem:0-unique-eq}
  Let $\Delta = d + 1 \geq 3$, and $\lambda > 0$, then the pair $(\lambda, d)$ is $0$-unique if $\lambda \leq \lambda_c(\Delta)$.
  %If $(\lambda, d)$ is $\delta$-unique, then
  %\begin{align*}
  %  \forall \lambda' \leq \lambda, \quad (\lambda', d) \text{ is $\delta$-unique }.
  %\end{align*}
\end{lemma}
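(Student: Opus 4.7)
My plan is to invoke \Cref{thm:delta-unique-lambda-d-alpha-w} with a well-chosen auxiliary parameter $w$. Since $0$-uniqueness of $(\lambda, d)$ is, by definition, $0$-uniqueness of $(\lambda, d, \alpha)$ with $\alpha := \lambda$, it suffices to find some $w > 0$ with $dw > 1$ such that both $\lambda \ge \lambda_c(d, w) := {\boldsymbol{\hat\lambda}}(w, d)$ and $\alpha = \lambda \le \alpha_c(d, w) := {\boldsymbol{\hat\lambda}}(d, w)$. The critical observation is that at $w = d$, \Cref{lem:exact-0-unique-eq} gives $\lambda_c(d, d) = \alpha_c(d, d) = \lambda_c(\Delta)$, so the parametric critical boundary meets the diagonal precisely at the critical threshold, and I will use a monotonicity argument to push this into the entire slab of diagonal points below.

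First, I would show that $w \mapsto \lambda_c(d, w)$ is continuous and strictly decreasing on $[d, \infty)$. A direct computation gives
\[
\frac{d}{dw} \log \lambda_c(d, w) = \frac{d}{w} - \frac{d(d+1)}{dw-1} = -\frac{d(w+1)}{w(dw-1)} < 0
\]
whenever $dw > 1$, which certainly holds for $w \ge d \ge 2$. Combined with $\lambda_c(d, d) = \lambda_c(\Delta)$ and $\lim_{w \to \infty} \lambda_c(d, w) = 0$ (from \Cref{lem:lb-is-0}), the intermediate value theorem yields, for every $\lambda \in (0, \lambda_c(\Delta)]$, a unique $w^* \in [d, \infty)$ with $\lambda_c(d, w^*) = \lambda$.

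The main step is to verify $\alpha_c(d, w) \ge \lambda_c(d, w)$ for every $w \ge d$. Taking logarithms and using the explicit forms from \Cref{lem:0-unique}, this reduces to the inequality
\[
f(w) := w \log d + (w + 1)\log(w+1) - d\log w - (d+1)\log(d+1) - (w-d)\log(dw-1) \ge 0
\]
on $[d, \infty)$. We have $f(d) = 0$, and differentiating gives $f'(w) = \log\frac{d(w+1)}{dw-1} - \frac{w-d}{w(dw-1)}$. Using the elementary bound $\log(1+x) \ge x/(1+x)$ for $x \ge 0$ applied with $x = (d+1)/(dw-1)$, the problem reduces to the polynomial inequality $d^2 w^2 + (d^2 - 2d - 1) w + d^2 \ge 0$, which holds manifestly for $d \ge 3$ (all coefficients non-negative) and follows for $d = 2$ by a discriminant check on $4 w^2 - w + 4$. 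Thus $f'(w) \ge 0$ on $[d, \infty)$, whence $f(w) \ge f(d) = 0$.

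With $w = w^*$, both $\lambda \ge \lambda_c(d, w^*)$ and $\alpha = \lambda \le \alpha_c(d, w^*)$ hold, so \Cref{thm:delta-unique-lambda-d-alpha-w} delivers $0$-uniqueness of $(\lambda, d, \alpha)$, and hence of $(\lambda, d)$. The main obstacle is establishing $\alpha_c(d, w) \ge \lambda_c(d, w)$ throughout $[d, \infty)$; while equality at $w = d$ is clear from the symmetric form in \Cref{lem:0-unique} and the asymptotics $\alpha_c \to e/d > 0 = \lambda_c$ as $w \to \infty$ follow from \Cref{lem:lb-is-0}, verifying the inequality on the full interval requires the calculus argument above.
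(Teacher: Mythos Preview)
Your argument is correct and takes essentially the same route as the paper's: both hinge on the inequality $A(d,w,0) \ge A(w,d,0)$ for $w \ge d$ (this is the paper's \Cref{lem:alpha-vs-lambda}, which you re-derive by differentiating in $w$ rather than in $d$) together with monotonicity of $A$ in its arguments, though you package the conclusion via \Cref{thm:delta-unique-lambda-d-alpha-w} and parametrize by solving $\lambda = A(w^*,d,0)$, whereas the paper goes through \Cref{thm:delta-unique-implicit} and parametrizes by $\alpha = A(d,w,0)$. Two minor inaccuracies that do not affect validity: the limit $\lambda_c(d,w)\to 0$ as $w\to\infty$ does not follow from \Cref{lem:lb-is-0} (which concerns $\lambda_i^\delta(w)$ for fixed $\alpha$) but is immediate from $A(w,d,0)\sim (d+1)^{d+1}/(d^{d+1}w)$, and the aside ``$\alpha_c\to e/d$'' should read $\alpha_c\to \tfrac{1}{d}e^{1+1/d}$.
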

We note that a more general monotonicity result is given in \Cref{lem:delta-unique-monotone-eq} later and works for all $\delta \in (0, 1)$.
Here, we also give a self-contained proof for \Cref{lem:0-unique-eq} in \Cref{sec:0-unique-eq}.
Unlike the proof of \Cref{lem:delta-unique-monotone-eq}, this proof is quite straightforward.
However, we are not aware of a proof along this line for $\delta > 0$.

%\Cref{lem:0-unique-eq} can be seen as an immediate corollary from \Cref{lem:exact-0-unique-eq} and \Cref{lem:delta-unique-monotone-eq-pre}.
%Alternatively, instead of quoting \Cref{lem:delta-unique-monotone-eq-pre} whose proof is in \Cref{sec:SI}, we also give a self-contained proof for \Cref{lem:0-unique-eq} in \Cref{sec:0-unique-eq}.
%Unlike the proof of \Cref{lem:delta-unique-monotone-eq-pre}, this proof is quite straightforward.
%However, we are not aware of a proof of this kind for $\delta > 0$.

To make a summary, recall the system in \eqref{eq:delta-sys-eq} is defined as
\begin{align*} 
  \begin{cases}
    T := (1 - \delta)(x + 1)(\alpha + (1 + x)^w) - \alpha d w x = 0 \\
    M := w\log(1 + x)(\alpha d - (1 + x)^{w+1}) + \delta (x + 1) (\alpha + (x + 1)^w) = 0 \\
    G := x(1 + \alpha(1 + x)^{-w})^d - \alpha = 0.
  \end{cases}
\end{align*}
In \Cref{lem:0-unique-eq}, for fixed $d \geq 2$, we have showed that the system in \eqref{eq:delta-sys-eq} has the following solution.
\begin{align} \label{eq:0-unique-sol}
  \*p_0 &= \tp{\alpha, w, x, \delta} = \tp{\lambda_c(\Delta), d, \frac{1}{d-1}, 0}.
\end{align}
Now, towards resolving \eqref{eq:delta-sys-eq}, we are going to apply the implicit function theorem around the point $\*p_0$ to get an linear approximation of its neighborhood.
This allows us to resolve \eqref{eq:delta-sys-eq} approximately in this neighborhood.
Fix $d \geq 2$, the system in \eqref{eq:delta-sys-eq} encodes implicit function $(\delta(\alpha), w(\alpha), x(\alpha))$ by viewing $\alpha$ as the variable.
We have the following result towards this implicit function.

%Then, by taking derivative of the implicit function $\alpha(\delta) = \lambda(\delta)$ at the point $\delta = 0$, we are able to achieve $\delta$-uniqueness for sufficiently small $\delta \in (0, 1)$.
%
%According to \Cref{thm:delta-unique}, recall that we have assumed that $\alpha > \frac{1-\delta}{d} \cdot \e^{1 + \frac{1}{d}}$, let $\lambda = \alpha, w, x$ as implicit functions of $\delta$ encoded by the following system in \eqref{eq:delta-sys-eq} as:
%\begin{align*} 
%  \begin{cases}
%    T := (1 - \delta)(x + 1)(\alpha + (1 + x)^w) - \alpha d w x = 0 \\
%    M := w\log(1 + x)(\alpha d - (1 + x)^{w+1}) + \delta (x + 1) (\alpha + (x + 1)^w) = 0 \\
%    G := x(1 + \alpha(1 + x)^{-w})^d - \alpha = 0.
%  \end{cases}
%\end{align*}
%
%The term $T, M$ in \eqref{eq:delta-sys-eq} come from \Cref{thm:delta-unique}, and the term $G$ in \eqref{eq:delta-sys-eq} comes from the constraint that $\lambda = \alpha$.
%%
%We remark that when $\delta = 0$, \eqref{eq:delta-sys-eq} matches \Cref{lem:0-unique} with the constraint $\lambda = \alpha$ (see the system \eqref{eq:0-unique-sys}).
%%
%In \Cref{thm:delta-unique}, system \eqref{eq:delta-sys-eq} implies that $\lambda_c = \alpha > 0$, which implicitly says that $\alpha > \frac{1-\delta}{d}\e^{1 + \frac{1}{d}}$.
%%

%
%For the implicit function $\alpha(\delta)$ defined by \eqref{eq:delta-sys-eq}, according to the implicit function theorem, we have the following result in a neiborhood of $\*p_0$.

\begin{lemma} \label{lem:d-delta}
  There is an open set $U \subseteq \^R$ containing $\alpha = \lambda_c(\Delta)$ that there is a unique continuously differentiable function $(\delta(\alpha), w(\alpha), x(\alpha)) : U \to \^R^3$ such that 
  \begin{align*}
    \forall \delta \in U, \quad \text{the point $(\alpha, \delta(\alpha), w(\alpha), x(\alpha))$ is a solution of the system in \eqref{eq:delta-sys-eq}}.
  \end{align*}
  Moreover, it holds that $\left. \frac{\-d \delta(\e^y)}{\-d y} \right\vert_{y = \log \lambda_c(\Delta)} = - \frac{d-1}{d}$.
  %a$\left.\frac{\-d \log\alpha(\delta)}{\-d \delta}\right\vert_{\delta = 0} = - \frac{d}{d-1}$.
\end{lemma}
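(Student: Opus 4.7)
The plan is to apply the implicit function theorem at the point $\*p_0=(\alpha_0,\delta_0,w_0,x_0)=(\lambda_c(\Delta),0,d,\tfrac{1}{d-1})$ identified in \eqref{eq:0-unique-sol}. Viewing $(T,M,G)$ as a smooth map in $(\alpha,\delta,w,x)$, the first task is to verify that the $3\times 3$ Jacobian $J:=\partial(T,M,G)/\partial(\delta,w,x)$ evaluated at $\*p_0$ is invertible. Once that is established, the implicit function theorem immediately supplies a unique continuously differentiable $(\delta(\alpha),w(\alpha),x(\alpha))$ on an open neighborhood $U\ni\lambda_c(\Delta)$ with $(\delta(\alpha_0),w(\alpha_0),x(\alpha_0))=(0,d,\tfrac{1}{d-1})$.

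The crucial observation is that several partial derivatives vanish at $\*p_0$, reflecting the critical nature of that point. A direct calculation yields $\partial T/\partial x|_{\*p_0}=0$ (because $x_0$ is a \emph{double} root of $T_0(x)=0$ at the critical $w_c=d$, which is precisely the condition for $\*p_0$ to lie on the $0$-uniqueness boundary as in \Cref{lem:0-unique}), and $\partial G/\partial x|_{\*p_0}=0$ (equivalently $\lambda'(x_0)=0$, which after simplification is the same equation as $T_0=0$). Moreover, using the fixpoint identity $x_0 h_0^d=\alpha_0$ with $h_0:=1+\alpha_0(1+x_0)^{-w_0}=\tfrac{d}{d-1}$, one checks that $\partial G/\partial\alpha|_{\*p_0}=x_0 d h_0^{d-1}(1+x_0)^{-w_0}-1=0$. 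Together with the trivial $G_\delta\equiv 0$, the Jacobian collapses to
\[
J\;=\;\begin{pmatrix} T_\delta & T_w & 0 \\ M_\delta & M_w & M_x \\ 0 & G_w & 0 \end{pmatrix},
\]
whose determinant is $-T_\delta M_x G_w$. Each factor is readily evaluated and is nonzero for every $d\ge 2$ (notably $G_w=-\lambda_c(\Delta)\log\tfrac{d}{d-1}\ne 0$), hence $J$ is invertible.

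With $J$ in hand, the linear system $J\cdot(\delta'(\alpha),w'(\alpha),x'(\alpha))^\top=-(T_\alpha,M_\alpha,G_\alpha)^\top$ can be solved by inspection at $\*p_0$: the third row reads $G_w\cdot w'(\alpha_0)=-G_\alpha=0$, forcing $w'(\alpha_0)=0$; substituting into the first row then yields $\delta'(\alpha_0)=-T_\alpha/T_\delta$. A direct computation gives $T_\alpha|_{\*p_0}=(x_0+1)-dw_0 x_0=-d$ and $T_\delta|_{\*p_0}=-(x_0+1)(\alpha_0+(1+x_0)^{w_0})=-d^{d+2}/(d-1)^{d+2}$, so $\delta'(\lambda_c(\Delta))=-(d-1)^{d+2}/d^{d+1}$. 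The chain rule then produces
\[
\left.\frac{\-d\delta(\e^y)}{\-d y}\right|_{y=\log\lambda_c(\Delta)} \;=\; \alpha_0\,\delta'(\alpha_0) \;=\; \frac{d^d}{(d-1)^{d+1}}\cdot\Bigl(-\frac{(d-1)^{d+2}}{d^{d+1}}\Bigr) \;=\; -\frac{d-1}{d},
\]
as claimed.

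The main technical obstacle is locating the vanishing partial derivatives cleanly; a naive computation would leave a fully generic $3\times 3$ linear system whose inversion is algebraically unpleasant. Recognizing that $T_x$, $G_x$, and $G_\alpha$ all vanish at $\*p_0$ -- each as a direct consequence of the critical-fixpoint geometry encoded in \Cref{lem:0-unique} -- is the structural insight that makes both the invertibility check and the Cramer-rule derivative computation transparent.
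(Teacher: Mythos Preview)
Your proof is correct and follows essentially the same approach as the paper's: both apply the implicit function theorem at $\*p_0$, exploit the vanishing of $G_\delta$, $G_x$, and $T_x$ (and of $G_\alpha$, which the paper phrases as $G_\alpha/G_w=0$), and reduce the computation to $\delta'(\alpha)=-T_\alpha/T_\delta$. Your organization is slightly cleaner in that you write down the collapsed Jacobian explicitly and read off $w'(\alpha_0)=0$ from the third row, whereas the paper carries the full Cramer-rule expression and simplifies term by term, but the substance is identical.
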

The proof of \Cref{lem:d-delta} is technical and will be given in \Cref{sec:d-delta}.
%Now, we are ready to prove \Cref{thm:delta-unique-eq}.
\begin{proof}[Proof of \Cref{thm:solve-sys-eq}]
  Recall $U \subseteq \^R$ is an open set defined in \Cref{lem:d-delta}.
  By Taylor series, denote $\lambda_c = \lambda_c(\Delta)$, for every $\lambda \in U$ such that $\alpha = \lambda = (1 - \delta_\star) \lambda_c \in U$ for some $\delta_\star \in (0, 1)$, it holds that
  \begin{align*}
    (\delta \circ \exp) (\log \lambda)
    &= (\delta \circ \exp)(\log\lambda_c) + (\delta \circ \exp)'(\log \lambda_c)(\log \lambda - \log \lambda_c) + o(\log \lambda - \log \lambda_c) \\
    \delta(\alpha) 
    &= -\frac{d-1}{d} \log (1 - \delta_\star) + o(\log (1 - \delta_\star)) \\
    &= \frac{d-1}{d} \log \frac{1}{1-\delta_\star} + o\tp{\log \frac{1}{1-\delta_\star}} \\
    &= \frac{d-1}{d} \frac{\delta_\star}{1 - \delta_\star} + o(\frac{\delta_\star}{1 - \delta_\star}) \\
    (d \geq 2) \quad &\geq \frac{\delta_\star}{2} + o(\delta_\star) \\
    &\geq \frac{\delta_\star}{4}. \quad (\text{for sufficiently small $\delta_\star$}) 
  \end{align*}
  %Finally, by \Cref{lem:delta-unique-monotone-eq-pre}, it holds that $\frac{\delta_\star}{4}$-uniqueness holds for every $0 < \lambda \leq \lambda_\star = (1 - \delta_\star) \lambda_c(\Delta)$.
  This proves \Cref{thm:solve-sys-eq}.
\end{proof}

%\begin{proof} [Proof of \Cref{thm:delta-unique-eq}]
%  By Taylor's series, it holds that
%  \begin{align*}
%    \log \alpha(\delta) &= \log\alpha(0) + \delta \cdot \left.\frac{\-d \log\alpha(\delta)}{\-d \delta}\right\vert_{\delta = 0} + o(\delta),
%  \end{align*}
%  which implies that
%  \begin{align*}
%    \alpha(\delta)
%    &= \alpha(0) \cdot \exp\tp{\delta \cdot \left.\frac{\-d \log\alpha(\delta)}{\-d \delta}\right\vert_{\delta = 0}} \cdot \exp\tp{o(\delta)} \\
%    &= \alpha(0) \cdot \exp\tp{\delta \cdot \left.\frac{\-d \log\alpha(\delta)}{\-d \delta}\right\vert_{\delta = 0}} + o(\delta) \\
%    &= \alpha(0) \cdot \tp{1 + \delta \cdot \left.\frac{\-d \log\alpha(\delta)}{\-d \delta}\right\vert_{\delta = 0}} + o(\delta).
%  \end{align*}
%
%  Using \Cref{lem:d-alpha}, it holds that
%  \begin{align*}
%    \alpha(\delta) &= \alpha(0) \cdot \tp{1 - \delta \cdot \frac{d}{d-1}} + o(\delta).
%  \end{align*}
%  For sufficiently small $\delta$, it holds that
%  \begin{align*}
%    \alpha(\delta)
%    &\geq \alpha(0) \cdot \tp{1 - \delta \cdot \frac{2d}{d-1}}
%      \geq \alpha(0) \cdot \tp{1 - 4\delta},
%  \end{align*}
%  where the last inequality holds by $\Delta \geq 3$.
%  So when
%  \begin{align*}
%    \alpha = \lambda \leq (1 - \delta)\lambda_c(\Delta) = (1 - \delta) \alpha(0) \leq \alpha(\delta/4),
%  \end{align*}
%  it holds that the system is at least $\delta/4 = \Theta(\delta)$-uniqueness.
%  This proves the $\delta > 0$ but sufficiently small case for \Cref{thm:delta-unique-eq}.
%  {\color{red} the proof here needs monotonicity for $\lambda$ at $\delta$-uniqueness}.
%\end{proof}

\subsubsection{$0$-uniqueness} \label{sec:0-unique-eq}
In this section, we prove \Cref{lem:0-unique-eq}.
It will be proved by using the monotonicity of the function $A(d, w, 0)$ with respect to $d, w$.

Recall the function $A$ defined in \eqref{eq:A}.
We have the following fact towards the function $A$.
\begin{lemma} \label{lem:alpha-c-w}
  Given $\delta \in [0, 1)$, $d, w > 0$, if $d w - (1 - \delta) > 0$, then 
  \begin{align*}
    \frac{\partial A(d, w, \delta)}{\partial d} < 0 \quad \text{and} \quad \frac{\partial A(d, w, \delta)}{\partial w} < 0.
  \end{align*}
\end{lemma}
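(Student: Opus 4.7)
The plan is to pass to logarithmic derivatives, which converts both monotonicity claims into sign questions on much simpler expressions. Writing $c := 1 - \delta > 0$, we have
\[
  \log A(d, w, \delta) = \log c + w \log d + (w+1)\log(w+1) - (w+1)\log(dw - c),
\]
where the logarithm on the right is well-defined thanks to the standing assumption $dw - c > 0$. Since $A > 0$, the signs of $\partial A/\partial d$ and $\partial A/\partial w$ agree with those of $\partial \log A/\partial d$ and $\partial \log A/\partial w$.

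For the $d$-derivative, I would compute
\[
  \frac{\partial \log A}{\partial d} = \frac{w}{d} - \frac{w(w+1)}{dw - c},
\]
and then put the two terms over the common denominator $d(dw - c)$. The numerator simplifies to $w(dw - c) - wd(w+1) = -w(c + d)$, which is strictly negative under our hypotheses, establishing $\partial A/\partial d < 0$. This step is purely algebraic and should present no difficulty.

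For the $w$-derivative, I would compute
\[
  \frac{\partial \log A}{\partial w} = \log d + \log(w+1) + 1 - \log(dw - c) - \frac{(w+1) d}{dw - c},
\]
and group the logarithmic terms. Setting $u := \dfrac{d(w+1)}{dw - c}$, this rearranges to
\[
  \frac{\partial \log A}{\partial w} = \log u - (u - 1).
\]
The elementary inequality $\log u \le u - 1$ (valid for all $u > 0$, with equality iff $u = 1$) then gives $\partial \log A / \partial w \le 0$. To upgrade this to strict inequality, I need to rule out $u = 1$; but $u = 1$ forces $d(w+1) = dw - c$, i.e.\ $d = -c$, which is impossible as both sides have opposite signs. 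Hence $\partial A/\partial w < 0$.

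The main (mild) obstacle is making sure that the rearrangement for the $w$-derivative is clean enough that the inequality $\log u \le u - 1$ applies directly; the rest is bookkeeping. No new ideas beyond logarithmic differentiation and the standard concavity bound for $\log$ are needed.
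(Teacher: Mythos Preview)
Your proof is correct and essentially matches the paper's. Both arguments handle the $w$-derivative via logarithmic differentiation and the inequality $\log(1+x) < x$ (your substitution $u = \tfrac{d(w+1)}{dw-c}$ is just a repackaging of the paper's form $(dw-c)\,\partial_w \log A = -(d+c) + (dw-c)\log\bigl(1 + \tfrac{d+c}{dw-c}\bigr)$). For the $d$-derivative the paper computes $\partial A/\partial d$ directly rather than $\partial \log A/\partial d$, but the algebra is the same and equally short; note also that in the paper the two displays are both labeled ``$\partial A/\partial w$'', which is a typo --- the second one is $\partial A/\partial d$.
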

\begin{proof}
  For $\frac{\partial A(d, w, \delta)}{\partial w}$, we have
  \begin{align*}
    (d w &- (1 - \delta)) \cdot \frac{\partial \log A(d, w, \delta)}{\partial w} \\
    &= - (d + (1 - \delta)) + (d w - (1 - \delta)) \log \tp{\frac{d (1 + w)}{d w - (1 - \delta)}} \\
    &= - (d + (1 - \delta)) + (d w - (1 - \delta)) \log \tp{1 + \frac{d + (1 - \delta)}{d w - (1 - \delta)}} < 0,
  \end{align*}
  where in the last inequality we use the fact that $\log(1 + x) < x$ for $x > 0$.

  For $\frac{\partial A(d, w, \delta)}{\partial w}$, we have 
  \begin{align*}
    \frac{\partial A(d, w, \delta)}{\partial w} &= - (1 - \delta) w (w+1)^{w+1} (d-\delta +1) d^{w-1} (d w - (1 - \delta))^{-w-2} < 0,
  \end{align*}
  where we use the fact that $d w - (1 - \delta) > 0$.
\end{proof}

%Note that in \eqref{eq:alpha-lambda}, all the parameters achieve criticality.
%For a fixed $d \geq 2$, given any one parameter in $\{\alpha, w_\star, \lambda_\star\}$, we could deduce the other two parameters.
%Moreover, in the critical case, $\alpha$ and $\lambda = \lambda_\star$ forms a symmetry structure.

%Note that $\frac{\partial \log \alpha_c(d, w)}{\partial d} = \frac{(d+1) w}{d-d^2 w}$ which is less than $0$ when $dw > 1$.
%Together with \Cref{lem:alpha-c-w}, we know that when $dw > 1$, the function $\alpha_c(d, w)$ and $\alpha_c(d, w)$ are monotonically decreasing on both $d$ and $w$.
%
%Hence, if we fix $d$ and $w = w_c$ first, then the system is unique if $\lambda \geq \lambda_c(d, w)$ and $\alpha \leq \alpha_c(d, w)$.
%Since our uniqueness allows uniqueness for all $w \in \^R_{>0}$, given $d \geq 0$, the system is unique iff
%\begin{align*}
%  (\lambda, \alpha) \in \bigcup_{w > 0} \{(\lambda, \alpha) \mid \lambda \geq \lambda_c(d, w) \land \alpha \leq \alpha_c(d, w) \} 
%\end{align*}

\begin{lemma} \label{lem:alpha-vs-lambda}
	Fix any $d,w>0$ such that $d w > 1$.
  \begin{enumerate}
  \item If $d < w$, it holds that $A(d, w, 0) > A(w, d, 0)$;
  \item if $d > w$, it holds that $A(d, w, 0) < A(w, d, 0)$.
  \end{enumerate}
\end{lemma}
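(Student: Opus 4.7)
The plan is to reduce the pairwise comparison $A(d,w,0)$ versus $A(w,d,0)$ to the monotonicity of a single univariate function via a symmetry-exploiting parametrization. Specifically, since $dw>1$, set $c:=\sqrt{dw}>1$ and $s:=\sqrt{d/w}>0$, so that $d=cs$ and $w=c/s$. The swap $d\leftrightarrow w$ then corresponds exactly to $s\mapsto 1/s$, and $d\lessgtr w$ is equivalent to $s\lessgtr 1$. Defining $F(s):=\log A(cs,c/s,0)$, the conclusion of the lemma becomes: $F(s)-F(1/s)$ has the sign of $1/s - s$, i.e., $F$ is strictly decreasing on $(0,\infty)$.

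Next, I plan to rewrite $F$ in a variable that makes monotonicity transparent. Starting from $\log A(d,w,0) = w\log d + (w+1)\log(w+1) - (w+1)\log(dw-1)$ and substituting $d=cs$, $w=c/s$, the exponent $dw-1 = c^2-1$ does not depend on $s$. With the substitution $u:=c/s$ (so that $cs = c^2/u$), the resulting expression collapses to
\[
F(s) \;=\; G(u) \;:=\; u\log c^2 - u\log u + (u+1)\log(u+1) - (u+1)\log(c^2-1).
\]
Since $s\mapsto c/s$ is strictly decreasing, it suffices to show $G$ is strictly increasing in $u>0$.

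For the monotonicity of $G$, a direct differentiation gives
\[
G'(u) \;=\; \log c^2 - \log(c^2-1) + \log\!\left(\tfrac{u+1}{u}\right) \;=\; \log\!\left(\tfrac{c^2(u+1)}{u(c^2-1)}\right).
\]
Thus $G'(u)>0$ is equivalent to $c^2(u+1)>u(c^2-1)$, i.e., $c^2+u>0$, which holds trivially. Therefore $G$ is strictly increasing, $F$ is strictly decreasing, and both claims of the lemma follow at once.

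I do not anticipate any serious obstacle: once the right parametrization $d=cs$, $w=c/s$ is chosen (which is natural because it fixes $dw-1$), the reduction to a one-dimensional question is immediate, and the positivity of $G'$ reduces to $c^2+u>0$. The only point requiring a small amount of care is verifying that the substitution $u=c/s$ covers all of $(0,\infty)$ as $s$ ranges over $(0,\infty)$, which is clear since $c>0$, and that the correspondence $s<1 \Leftrightarrow d<w$ is correctly tracked through the argument.
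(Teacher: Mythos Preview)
Your proof is correct, and it differs from the paper's in a clean way worth noting. The paper fixes $w$, sets $\Delta_w(d):=\log A(d,w,0)-\log A(w,d,0)$, and shows $\Delta_w'(d)=\frac{d-w}{d(dw-1)}+\log\frac{dw-1}{dw+w}<0$ for $d<w$; together with $\Delta_w(w)=0$ this gives the result. Your hyperbolic parametrization $d=cs$, $w=c/s$ (with $c=\sqrt{dw}$ fixed) instead moves along the level set $dw=c^2$, which kills the only awkward term $dw-1$ and collapses $\log A$ to the simple $G(u)$; the positivity of $G'(u)$ then reduces to the triviality $c^2+u>0$. The paper's route stays closer to the original variables and only needs a single derivative-and-sign argument, while your substitution requires spotting the right coordinates but yields a shorter and more transparent endgame. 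Both are elementary and roughly the same length; yours arguably explains \emph{why} the inequality holds (each factor $\frac{c^2}{c^2-1}$ and $\frac{u+1}{u}$ exceeds $1$), whereas the paper simply verifies it.
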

\begin{proof}
  When $d = w$, then it is easy to check that $A(d, w, 0) = A(w, d, 0)$.
  When $d \neq w$, then without loss of generality, we assume $d < w$.
  Define
  \begin{align*}
    \Delta_w(d) := \log\tp{A(d, w, 0)} - \log\tp{A(w, d, 0)}
  \end{align*}
  Note that we have $\Delta_w(w) = 0$ and
  \begin{align*}
    \Delta_w'(d) &= \frac{d - w}{d (d w - 1)} + \log \frac{d w - 1}{d w + w}
  \end{align*}
  For $d < w$ and $d w > 1$, it holds that $\Delta_w'(d) < 0$.
  Hence by the mean value theorem, it holds that $\Delta_w(d) > 0$ for $d < w$.
\end{proof}

\begin{proof}[Proof of \Cref{thm:delta-unique-lambda-d-alpha-w}]
  Fix $d, w$ such that $d \geq 1$ and $d w > 1$, by \Cref{lem:0-unique}, it holds that when $\alpha = \alpha_c(d, w) = A(d, w, 0) = \*{\hat{\lambda}}(d, w)$, then the system in \eqref{eq:0-unique-sys} has a solution $x$ such that $\lambda(x) = \lambda_c(d, w) = A(w, d, 0) = \*{\hat{\lambda}}(w, d)$.
  
  Note that if $\alpha \leq \frac{1}{d} \e^{1 + \frac{1}{d}}$, then by \Cref{thm:delta-unique-implicit} and \Cref{def:x-lambda-delta}, for any $\lambda \geq \lambda_c(d, w) \geq \lambda^0_{2,c} = 0$, it holds that $(\lambda, d, \alpha)$ is $0$-unique.
  For $\frac{1}{d} \e^{1 + \frac{1}{d}} < \alpha \leq \alpha_c(d, w)$, by \Cref{lem:alpha-c-w}, there is a $w' \geq w$ such that $\alpha = A(d, w', 0)$.
  By \Cref{lem:alpha-c-w}, $A(w', d, 0) \leq A(w, d, 0)$.
  By \Cref{lem:0-unique}, \Cref{thm:resolve-lambda-2-c}, and \Cref{thm:delta-unique-implicit}, it holds that $(\lambda, d, \alpha)$ is $0$-unique for all $\lambda \geq \lambda_c(w,d) = A(w, d, 0) \geq A(w', d, 0)$.
  %Hence, it holds that for any $\lambda \geq \lambda_c(d, w) = A(w, d, 0) \geq A(w', d, 0)$, the tuple $(\lambda, d, \alpha)$ is $0$-unique by \Cref{lem:0-unique}, \Cref{thm:resolve-lambda-2-c}, and \Cref{thm:delta-unique-implicit}.
  
  This shows that for every $\alpha \leq \alpha_c$ and $\lambda \geq \lambda_c$, the tuple $(\lambda, d, \alpha)$ is $0$-unique.

  Now, we prove the furthermore part of \Cref{thm:delta-unique-lambda-d-alpha-w}.
  Fix $\alpha = \alpha_c(d, w)$, by \Cref{lem:0-unique}, \Cref{thm:resolve-lambda-2-c}, and \Cref{thm:delta-unique-implicit}, $(\lambda, d, \alpha)$ is $0$-unique iff $\lambda \geq \lambda_c(d, w)$.
  Fix $\lambda = \lambda_c(d, w) = A(w, d, 0)$, for $\alpha > \alpha_c$, by \Cref{lem:alpha-c-w}, there is a $1/d < w' < w$ such that $\alpha = A(d, w', 0)$.
  Hence, by \Cref{lem:0-unique}, \Cref{thm:resolve-lambda-2-c}, and \Cref{thm:delta-unique-implicit}, $(\lambda, d, \alpha)$ is $0$-unique iff $\lambda \geq \lambda_c(d, w') = A(w', d, 0)$.
  However, note that by \Cref{lem:alpha-c-w}, it holds that $\lambda_c(d, w') > \lambda_c(d, w) = \lambda$, which means the tuple $(\lambda, d, \alpha)$ is no longer $0$-unique.
\end{proof}

\begin{proof}[Proof of \Cref{lem:0-unique-eq}]
  To finish the proof, fix $d \geq 2$, we will consider three cases: (1) $\alpha = \lambda_c(\Delta)$; (2) $\frac{1}{d} \e^{1 + \frac{1}{d}} < \alpha < \lambda_c(\Delta)$; (3) $\alpha \leq \frac{1}{d}\e^{1+\frac{1}{d}}$.

  Consider the first case.
  Suppose $d \geq 2$ and $\alpha = \lambda_c(\Delta)$, by \Cref{lem:exact-0-unique-eq} it holds that for $\lambda = \alpha = \lambda_c(\Delta) = A(d, d, 0)$, the system $(\lambda, d)$ is $0$-unique.
  %By \Cref{lem:0-unique}, it holds that if $\lambda \geq \lambda_\star(d, d, 0)$, then the tuple $(d, \lambda, \alpha)$ is $0$-unique.
  %This shows that, under the restriction that $\lambda = \alpha$, the pair $(d, \lambda)$ is $0$-unique if
  %\begin{align*}
  %  \alpha = \lambda &= \frac{d^d(d + 1)^{d+1}}{(d^2 - 1)^{d+1}} = \frac{d^d}{(d-1)^{d+1}} = \frac{(\Delta - 1)^{\Delta-1}}{(\Delta - 2)^\Delta} = \lambda_c(\Delta).
  %\end{align*}

  Now, for the second case, suppose $d \geq 2$ and $\frac{1}{d} \e^{1 + \frac{1}{d}} < \alpha < A(d, d, 0)$.
  By \Cref{lem:alpha-c-w}, it holds that $\alpha = A(d, w, 0)$ for some $w > d$.
  Then by $\alpha = A(d, w, 0)$, \Cref{lem:0-unique}, it holds that $\lambda^0_{2,c} = A(w, d, 0)$.
  Since $w > d$, by \Cref{lem:alpha-vs-lambda}, it holds that $\alpha = A(d, w, 0) > A(w, d, 0) = \lambda^0_{2,c}$.

Combined with \Cref{thm:delta-unique-implicit}, it implies that, if we pick $\lambda = \alpha > \lambda^0_{2,c}$, then the pair $(\lambda, d, \alpha)$ is at least $0$-unique.

  Finally, by \Cref{thm:delta-unique-implicit}, if $\lambda = \alpha \leq \frac{1}{d} \e^{1 + \frac{1}{d}}$, then $(\lambda, d)$ is also $0$-unique.
  %In this case, according to \Cref{lem:0-unique}, the system is $0$-unique for all $\lambda > 0$.
  %Hence, if we pick $\lambda = \alpha$, the pair $(\lambda, d, \alpha)$ is at least $0$-unique.

  Combining all these three cases, if $\lambda = \alpha \leq \lambda_c(\Delta)$, then $(\lambda, d)$ is $0$-unique.
\end{proof}

\subsubsection{$\delta$-uniqueness for $(\lambda, d)$ (Proof of \Cref{lem:d-delta})} \label{sec:d-delta}
\newcommand{\diff}[2]{\partial_{#2} #1\;}

%{\color{red} TODO: rename the variables here to avoid confusion}

In this section, we prove \Cref{lem:d-delta}.
Note that the function $\alpha(\delta), w(\delta), x(\delta)$ are implicit functions determined by the following system as in \eqref{eq:delta-sys-eq}:
\begin{align*}
  \begin{cases}
    T := (1 - \delta)(x + 1)(\alpha + (1 + x)^w) - \alpha d w x = 0 \\
    M := w\log(1 + x)(\alpha d - (1 + x)^{w+1}) + \delta (x + 1) (\alpha + (x + 1)^w) = 0 \\
    G := x(1 + \alpha(1 + x)^{-w})^d - \alpha = 0.
  \end{cases}
\end{align*}
Given a function $F \in \{T, M, G\}$, and a variable $y \in \{\alpha, x , w, \delta\}$, we use $\diff{F}{y}$ to denote the function $\partial F / \partial y$.
Therefore, by the implicit function theorem, 
\begin{align*}
  \left[
  \begin{array}{c}
    \delta'(\alpha) \\ w'(\alpha) \\ x'(\alpha)
  \end{array}
  \right]
  = -
  \left[
  \begin{array}{ccc}
    \diff{T}{\delta} & \diff{T}{w} & \diff{T}{x} \\
    \diff{M}{\delta} & \diff{M}{w} & \diff{M}{x} \\
    \diff{G}{\delta} & \diff{G}{w} & \diff{G}{x}
  \end{array}
  \right]^{-1}
  \left[
  \begin{array}{c}
    \diff{T}{\alpha} \\ \diff{M}{\alpha} \\ \diff{G}{\alpha}
  \end{array}
  \right].
\end{align*}
Using Cramer's rule, it holds that
\begin{align*}
  \delta'(\alpha)
  = \frac{\diff{G}{\alpha } \left(\diff{M}{x} \diff{T}{w}-\diff{M}{w} \diff{T}{x}\right)+\diff{G}{x} \left(\diff{M}{w} \diff{T}{\alpha }-\diff{M}{\alpha } \diff{T}{w}\right)+\diff{G}{w} \left(\diff{M}{\alpha } \diff{T}{x}-\diff{M}{x} \diff{T}{\alpha }\right)}{\diff{G}{\delta} \left(\diff{M}{w} \diff{T}{x}-\diff{M}{x} \diff{T}{w}\right)+\diff{G}{w} \left(\diff{M}{x} \diff{T}{\delta}-\diff{M}{\delta } \diff{T}{x}\right)+ \diff{G}{x} \left(\diff{M}{\delta } \diff{T}{w}-\diff{M}{w} \diff{T}{\delta }\right)}.
\end{align*}

Now, we will evaluate $\alpha'(\delta)$ around the point
\begin{align} \label{eq:p0}
  \*p_0 &= \tp{\alpha, w, x, \delta} = \tp{\lambda_c(\Delta), d, \frac{1}{d-1}, 0},
\end{align}
as defined in \eqref{eq:0-unique-sol}, where it holds that $\delta = 0$.
At this point $\*p_0$ the system in \eqref{eq:delta-sys-eq} becomes
\begin{align} \label{eq:0-unique-sys-eq}
  \begin{cases}
    T(\alpha, w, x, 0) = (x + 1)(\alpha + (1 + x)^w) - \alpha d w x = 0 \\
    M(\alpha, w, x, 0) = \alpha d - (1 + x)^{w+1} = 0 \\
    G(\alpha, w, x, 0) = x(1 + \alpha(1 + x)^{-w})^d - \alpha = 0.
  \end{cases}
\end{align}
First, note that $\diff{G}{\delta} = 0$.
Recall that we also have
\begin{align*}
  \diff{G}{x}
  &= \frac{(\hat{x} + 1)(\alpha + (1 + \hat{x})^w) - \alpha d w \hat{x}}{(\hat{x}+1) \left(\alpha +(\hat{x}+1)^w\right)} \cdot \left(\alpha  (\hat{x}+1)^{-w}+1\right)^d \\
  &= \frac{T(\alpha, w, x, 0)}{(\hat{x}+1) \left(\alpha +(\hat{x}+1)^w\right)} \cdot \left(\alpha  (\hat{x}+1)^{-w}+1\right)^d = 0,
\end{align*}
where the last equation holds by $T(\alpha, w, x, 0) = 0$ in \eqref{eq:0-unique-sys-eq}.
Hence, at the point $(\alpha, w, x, 0)$, we have
\begin{align*}
  \delta'(\alpha)
  = \frac{\diff{G}{\alpha } \left(\diff{M}{x} \diff{T}{w}-\diff{M}{w} \diff{T}{x}\right) + \diff{G}{w} \left(\diff{M}{\alpha } \diff{T}{x}-\diff{M}{x} \diff{T}{\alpha }\right)}{\diff{G}{w} \left(\diff{M}{x} \diff{T}{\delta }-\diff{M}{\delta } \diff{T}{x}\right)}.
\end{align*}
Also note that we have $\diff{T}{x} = 0$ at $(\alpha, w, x, 0)$. 
This is because $M(\alpha, w, x, 0) = \alpha d - (x + 1)^{w+1} = 0$ and $T(\alpha, w, x, 0) = 0$.
Note that $M(\alpha, w, x, 0) = 0$ gives us
\begin{align*}
  (x + 1)^{w + 1} &= \alpha d \\
  (x + 1)w(\alpha d - (x + 1)^w) &= \alpha d w x,
\end{align*}
which together with $T(\alpha, w, x, 0) = (1 + x)(\alpha + (1 + x)^w) - \alpha d w x = 0$ gives us
\begin{align} \label{eq:T'(x) = 0}
  \alpha + (x + 1)^w = w (\alpha d - (x + 1)^w).
\end{align}
By \Cref{fact:T-delta}, \cref{eq:T'(x) = 0} implies that $\diff{T}{x} = 0$.

So, we have
\begin{align*}
  \delta'(\alpha)
  &= \frac{\diff{G}{\alpha} \diff{M}{x} \diff{T}{w} - \diff{G}{w} \diff{M}{x} \diff{T}{\alpha}}{\diff{G}{w} \diff{M}{x} \diff{T}{\delta}}
    = \frac{\diff{G}{\alpha} \diff{T}{w} - \diff{G}{w} \diff{T}{\alpha}}{\diff{G}{w} \diff{T}{\delta}}
    = \frac{(\diff{G}{\alpha} / \diff{G}{w}) \cdot \diff{T}{w} - \diff{T}{\alpha}}{\diff{T}{\delta}}.
\end{align*}
Note that, we have
\begin{align}
  \diff{G}{\alpha} / \diff{G}{w}
  \nonumber &= \frac{\left(\alpha  (x+1)^{-w}+1\right)^{-d} \left(\alpha -d x \left(\alpha  (x+1)^{-w}+1\right)^d+(x+1)^w\right)}{\alpha  d x \log (x+1)} \\
  \label{eq:G-alpha-G-w} (\text{by $G(\alpha, w, x, 0) = 0$}) \quad &= \frac{\alpha - \alpha d + (x + 1)^w}{\alpha^2 d \log(x + 1)}.
\end{align}
%Recall that $T(\alpha, w, x, 0) = M(\alpha, w, x, 0) = G(\alpha, w, x, 0) = 0$ correspond to \eqref{eq:0-unique-sys-eq}.%\Cref{lem:0-unique} (see \eqref{eq:0-unique-sys}) and $G(\alpha, w, x, 0) = 0$ corresponds to \Cref{lem:0-unique-eq}.
%Using \Cref{lem:0-unique} and \Cref{lem:0-unique-eq}, it holds that
%\begin{align*}
%  \alpha = A(d, w, 0) = \lambda_c(\Delta) = A(d, d, 0),
%\end{align*}
%which, by \Cref{lem:alpha-c-w}, implies that $w = d$.
Recall that by the definition of $\*p_0$ as in \eqref{eq:p0}, we have $\alpha = \lambda_c(\Delta)$ and $w = d$.
From $T(\alpha, w, x, 0) = M(\alpha, w, x, 0) = 0$, we also have $(x + 1)^w = \frac{\alpha (d w - 1)}{w + 1}$ (this is done by using the same calculation as in \eqref{eq:x0-form-2}).
Combining $w = d$ and $(x + 1)^w = \frac{\alpha (d w - 1)}{w + 1}$, it holds that $(x + 1)^w = \alpha (d - 1)$.
%, and we know that $w = d$ when $\delta = 0$, it holds that $(x + 1)^w = \frac{\alpha (d w - 1)}{w + 1} = \alpha (d - 1)$.
This with \eqref{eq:G-alpha-G-w} implies that
\begin{align*}
  \frac{ \diff{G}{\alpha} }{ \diff{G}{w} } =  \frac{\alpha - \alpha d + (x + 1)^w}{\alpha^2 d \log(x + 1)} = 0.
\end{align*}
Hence, we have
\begin{align*}
  \delta'(\alpha) &= - \frac{\diff{T}{\alpha}}{\diff{T}{\delta}} = - \frac{d w x-(x+1)}{(x+1) \left(\alpha +(x+1)^w\right)}.
\end{align*}
Note that that $M(\alpha, d, w, 0) = 0$ in \eqref{eq:0-unique-sys-eq}, and $x = \frac{1}{d-1}$ according to $\*p_0$ in \eqref{eq:p0}, then
\begin{align*}
  \delta'(\alpha)
  &= - \frac{d w x-(x+1)}{(x+1) \left(\alpha +(x+1)^w\right)} \\
  (\text{by $T(\alpha, w, x, 0) = 0$}) \quad &= - \frac{d w x - (x + 1)}{\alpha d w x} = - \frac{d^2 x - (x + 1)}{\alpha d^2 x} \\
  (\text{by $x = \frac{1}{d-1}$}) \quad &= - \frac{\frac{d^2}{d-1} - \frac{d}{d-1}}{\frac{\alpha d^2}{d - 1}} = - \frac{d-1}{\alpha d}.
\end{align*}
Hence, it holds that at the point $\*p_0$, 
\begin{align*}
  %(\log \alpha(\delta))' &= \frac{\alpha'(\delta)}{\alpha} = - \frac{d}{d-1}.
  \left. \frac{ \-d \delta(\e^y)}{\-d y} \right\vert_{y = \log \alpha} &= \delta'(\alpha) \alpha = - \frac{d-1}{d}.
\end{align*}
This finishes the proof of \Cref{lem:d-delta}.

\subsection{Roots of $T_\delta(x)$ (Proof of  \Cref{lem:T-root-lifecycle})} \label{sec:T-root-lifecycle}
We now prove \Cref{lem:T-root-lifecycle}.
%By some calculations, we have
%\begin{align*}
%  T_\delta'(x) &= (1 - \delta)(1 + w)(1 + x)^w - \alpha(d w  - (1 - \delta)), \\
%  T_\delta''(x) &= (1 - \delta)w(1 + w)(1 + x)^{w - 1} > 0, \\ 
%  \text{and} \quad \lim_{x \to 0} T_\delta(x) &= (1 - \delta)(1 + \alpha) > 0.
%\end{align*}
Recall that $T_\delta$ is defined as in \eqref{eq:function-T-delta},
\begin{align*}
  T_\delta(x) := (1 - \delta)(x + 1)(\alpha + (1 + x)^w) - \alpha d w x.
\end{align*}

%Recall that we have assumed $d \geq 1 - \delta$.
According to \Cref{fact:T-delta}, %$T_\delta$ has $T''_{\delta}(x)>0$ when $x \geq 0$.
%Hence, 
$T_\delta$ achieves its minimum at a point $y$ where $T'_\delta(y) = 0$,
and  the number of solutions for the equation $T_\delta(x) = 0$ can be determined by looking at the sign of $T_\delta(y)$.

By \Cref{fact:T-delta}, we know that
\begin{align*}
  T_\delta'(x) &= (1 - \delta)(1 + w)(1 + x)^w - \alpha(d w  - (1 - \delta)).
\end{align*}
It can be verified that $T'_\delta(x) = 0$ has a positive solution if $\frac{\alpha (d w - (1 - \delta))}{(1 - \delta)(1 + w)} \geq 1$.
First, we rule out its complement case: when $\frac{\alpha (d w - (1 - \delta))}{(1 - \delta)(1 + w)} < 1$, we have
%
%When $\frac{\alpha (d w - (1 - \delta))}{(1 - \delta)(1 + w)} < 1$, 
\begin{align*}
  \lim_{x\to 0} T'_\delta(x) = (1 + w)(1 - \delta) - \alpha(d w - (1 - \delta)) > 0.
\end{align*}
By \Cref{fact:T-delta}, we know $T''_\delta(x) > 0$ for all $x \geq 0$, which, together with the above inequality, implies that $T'_\delta(x) > 0$ for all $x \geq 0$.
And also by \Cref{fact:T-delta}, we have $\lim_{x\to 0}T_\delta(x) > 0$, which together with that $T'_\delta(x) > 0$ for all $x \geq 0$, implies that 
%Combining $\lim_{x\to 0}T_\delta(x) > 0$ and $T'_\delta(x) > 0, \forall x \geq 0$ together, we have $T_\delta(x) > 0, \forall x \geq 0$.
 $T_\delta(x) > 0$ for all $x \geq 0$.
This indicates that the equation $T_\delta(x) = 0$ has no positive solution.

In the remaining part of this section, we prove \Cref{lem:T-root-lifecycle} under the assumption that $\frac{\alpha (d w - (1 - \delta))}{(1 - \delta)(1 + w)} \geq 1$.
Note that when $\frac{\alpha (d w - (1 - \delta))}{(1 - \delta)(1 + w)} \geq 1$, the equation $T_\delta'(x) = 0$ has a unique solution
\begin{align} \label{eq:T-y}
  y &:= \tp{\frac{\alpha (d w - (1 - \delta))}{(1 - \delta)(1 + w)}}^{\frac{1}{w}} - 1.
\end{align}

Now, by \Cref{fact:T-delta}, $T_\delta$ is a strictly convex function.
And it is easy to see the following fact.
\begin{fact}\label{fact:T-delta-y}
  If $\frac{\alpha (d w - (1 - \delta))}{(1 - \delta)(1 + w)} \geq 1$, for the $y$ defined in \eqref{eq:T-y}, it holds that
  \begin{itemize}
  \item if $T_\delta(y) > 0$, then $T_\delta(x) = 0$ has no positive solution;
  \item if $T_\delta(y) = 0$, then $T_\delta(x) = 0$ has a unique positive solution $y$;
  \item if $T_\delta(y) < 0$, then $T_\delta(x) = 0$ has two positive solutions $x_1 < y < x_2$.
  \end{itemize}
\end{fact}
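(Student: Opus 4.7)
The plan is to exploit the strict convexity of $T_\delta$ established in \Cref{fact:T-delta} ($T''_\delta(x) > 0$ for $x \geq 0$), together with the boundary behavior at $x=0$ (where $T_\delta > 0$) and at infinity (where $T_\delta \to +\infty$). Under the hypothesis $\frac{\alpha(dw - (1-\delta))}{(1-\delta)(1+w)} \geq 1$, the formula for $y$ in \eqref{eq:T-y} gives $y \geq 0$, and since $T'_\delta$ is strictly increasing (by $T''_\delta > 0$), $y$ is the \emph{unique} critical point of $T_\delta$ on $[0,\infty)$. Consequently $y$ is the unique global minimum of $T_\delta$ on $[0,\infty)$.

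With this in hand, the three cases follow routinely. If $T_\delta(y) > 0$, then $T_\delta(x) \geq T_\delta(y) > 0$ for every $x \geq 0$, so $T_\delta(x) = 0$ has no positive solution. If $T_\delta(y) = 0$, then $y$ itself is a zero; any other zero $z \neq y$ would force, by Rolle's theorem applied between $y$ and $z$, a second critical point of $T_\delta$, contradicting uniqueness of $y$. If $T_\delta(y) < 0$, then by continuity plus the intermediate value theorem applied to the intervals $(0,y)$ and $(y,\infty)$ — where $T_\delta$ is positive at the endpoints $0$ and $+\infty$ while negative at $y$ — we obtain at least one zero on each side of $y$. Strict convexity rules out any further zero, since a strictly convex function on $[0,\infty)$ crosses any horizontal line at most twice.

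The only mildly subtle point (and thus the main place to be careful) is ensuring $T_\delta(x) \to +\infty$ as $x \to +\infty$, which justifies applying the intermediate value theorem on $(y,\infty)$ in the third case. This is immediate from the definition in \eqref{eq:function-T-delta}, where the dominant term is $(1-\delta)(x+1)^{w+1}$; this term grows superlinearly while the subtracted $\alpha d w x$ is linear, so $T_\delta(x) \to +\infty$. Beyond this observation, the argument is a direct application of convex-function zero-counting and requires no further computation.
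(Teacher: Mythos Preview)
Your proof is correct and follows the same approach as the paper. The paper itself does not give a proof of this fact; it simply says ``Now, by \Cref{fact:T-delta}, $T_\delta$ is a strictly convex function. And it is easy to see the following fact,'' and then states \Cref{fact:T-delta-y}. Your write-up supplies exactly the details that justify this: $y$ is the unique critical point and global minimizer by strict convexity, $T_\delta(0)>0$, and $T_\delta(x)\to+\infty$ (from the $(1-\delta)(1+x)^{w+1}$ term with $w>0$), after which the three cases are routine.
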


\begin{lemma} \label{lem:sign-T-delta-y}
  If $\frac{\alpha (d w - (1 - \delta))}{(1 - \delta)(1 + w)} \geq 1$, for the $y$ defined in \eqref{eq:T-y}, it holds that
  \begin{align*}
    \-{sign}\tp{\frac{(1 - \delta) d^w (w+1)^{w+1}}{(d w - (1 - \delta))^{w+1}} - \alpha} = \-{sign}\tp{T_\delta(y)}.
  \end{align*}
\end{lemma}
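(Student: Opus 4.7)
The plan is to evaluate $T_\delta(y)$ as explicitly as possible, and then compare its sign directly to that of $\frac{(1-\delta)d^w(w+1)^{w+1}}{(dw-(1-\delta))^{w+1}} - \alpha$ through elementary algebraic manipulation. Since $y$ is precisely the point where $T'_\delta$ vanishes, and $T_\delta$ is strictly convex by \Cref{fact:T-delta}, the value $T_\delta(y)$ is the minimum of $T_\delta$ on $\mathbb{R}_{\geq 0}$; so this sign identification together with \Cref{fact:T-delta-y} is exactly what we need.

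First I would introduce the abbreviation $z := (1+y)^w = \frac{\alpha(dw-(1-\delta))}{(1-\delta)(1+w)}$, so that
\[
\alpha + (1+y)^w \;=\; \alpha\cdot\frac{(1-\delta)(1+w)+dw-(1-\delta)}{(1-\delta)(1+w)} \;=\; \frac{\alpha w\,(d+1-\delta)}{(1-\delta)(1+w)}.
\]
Substituting into the definition of $T_\delta(y)$, together with $y = z^{1/w}-1$, yields
\[
T_\delta(y) \;=\; \frac{\alpha w(d+1-\delta)\,z^{1/w}}{1+w} \;-\; \alpha d w\bigl(z^{1/w}-1\bigr).
\]
Collecting the $z^{1/w}$ terms and using $\frac{d+1-\delta}{1+w} - d = -\frac{dw-(1-\delta)}{1+w}$, then replacing $\frac{dw-(1-\delta)}{1+w}$ by $\frac{(1-\delta)z}{\alpha}$ via the definition of $z$, the expression collapses to
\[
T_\delta(y) \;=\; w\bigl[\,\alpha d - (1-\delta)\,z^{(w+1)/w}\,\bigr].
\]

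The final step is to show $\alpha d > (1-\delta) z^{(w+1)/w}$ if and only if $\alpha < \frac{(1-\delta)d^w(w+1)^{w+1}}{(dw-(1-\delta))^{w+1}}$. Raising both sides of the first inequality to the $w$-th power (valid since both sides are positive) and expanding $z$ gives
\[
(\alpha d)^w \;>\; \frac{\alpha^{w+1}(dw-(1-\delta))^{w+1}}{(1-\delta)(1+w)^{w+1}},
\]
which, after cancelling $\alpha^w$ and rearranging, is precisely $\alpha < \frac{(1-\delta)d^w(w+1)^{w+1}}{(dw-(1-\delta))^{w+1}}$. Since all manipulations are equivalences between strict inequalities (with equality preserved throughout), the identification of signs follows immediately.

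The main obstacle here is really only bookkeeping — keeping track of the factors of $(1-\delta)$, $w$, and $w+1$ after substituting the explicit formula for $y$. No deep idea is required beyond recognizing that the substitution $z = (1+y)^w$ produces enough cancellation to write $T_\delta(y)$ as $w\bigl[\alpha d - (1-\delta)z^{(w+1)/w}\bigr]$, at which point the desired sign equivalence is a one-line algebraic rearrangement.
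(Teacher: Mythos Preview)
Your proof is correct and follows essentially the same approach as the paper's own proof: direct algebraic manipulation after substituting the explicit value of $(1+y)^w$. The only difference is cosmetic --- you first derive the closed-form identity $T_\delta(y)=w\bigl[\alpha d-(1-\delta)z^{(w+1)/w}\bigr]$ and then rearrange, whereas the paper carries the inequality $T_\delta(y)>0$ through a chain of equivalences; the computations and conclusion are the same.
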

\begin{proof}
  The assumption $\frac{\alpha (d w - (1 - \delta))}{(1 - \delta)(1 + w)} \geq 1$ ensures the existence of $y$.
  Note that the inequality $T_\delta(y) > 0$ can be rewritten as
  \begin{align*}
    &&(1 - \delta) (y + 1) \tp{\alpha + \frac{\alpha (d w - (1 - \delta))}{(1 - \delta) (1 + w)}} - \alpha d w y &> 0\\
&\iff&    (1 - \delta) (y + 1) \frac{\alpha (d w + w(1 - \delta))}{(1 - \delta) (1 + w)} - \alpha d w y &> 0\\
&\iff&     (y + 1) \tp{\frac{\alpha (d w + w(1 - \delta))}{(1 + w)} - \alpha d w} + \alpha d w &> 0 \\
&\iff&     (y + 1) \frac{- \alpha w (dw - (1 - \delta))}{1 + w}  + \alpha d w &> 0.
  \end{align*}
  Plugging in the definition of $y$ in \eqref{eq:T-y}, this implies that
  \begin{align*} 
&&    y + 1 &< \frac{d (1 + w)}{d w - (1 - \delta)}\\
&\iff&     \tp{\frac{\alpha (d w - (1 - \delta))}{(1 - \delta)(1 + w)}}^{1/w} &< \frac{d (1 + w)}{d w - (1 - \delta)} \\
&\iff&     \alpha &< \frac{(1 - \delta) d^w (w + 1)^{w+1}}{(d w - (1 - \delta))^{w+1}}.
  \end{align*}
  The cases with $T_\delta(y) = 0$ and $T_\delta(y) < 0$  can be verified in the same way.
\end{proof}
For convenience, we denote $\alpha_c(w)=A(d, w, \delta)$, where recall $A(d, w, \delta)$ defined in \eqref{eq:A}, i.e.
\begin{align} \label{eq:alpha-c}
  \alpha_c(w) = \frac{(1 - \delta) d^w (w + 1)^{w+1}}{(d w - (1 - \delta))^{w+1}}. %= A(d, w, \delta),
\end{align}
%where we recall that $A(d, w, \delta)$ is defined in \eqref{eq:A}.

By \Cref{lem:alpha-c-w}, $\alpha_c(w) = A(d, w, \delta)$ is monotonically decreasing in  $w$.
%\begin{lemma} \label{lem:alpha-c-w}
%  If $d w - (1 - \delta) > 0$, then $\alpha_c(w)$ is monotonically decreasing on $w$.
%\end{lemma}
%\begin{proof}
%  By some calculation, we have
%  \begin{align*}
%    (d w &- (1 - \delta)) \cdot \frac{\partial \log \alpha_c(w)}{\partial w} \\
%    &= - (d + (1 - \delta)) + (d w - (1 - \delta)) \log \tp{\frac{d (1 + w)}{d w - (1 - \delta)}} \\
%    &= - (d + (1 - \delta)) + (d w - (1 - \delta)) \log \tp{1 + \frac{d + (1 - \delta)}{d w - (1 - \delta)}} \leq 0,
%  \end{align*}
%  where in the last inequality we use the fact that $\log(1 + x) \leq x$ for $x > -1$.
%\end{proof}

\begin{proof}[Proof of \Cref{lem:T-root-lifecycle}]
  Recall that we assume $d \geq 1 - \delta$ and $\frac{\alpha (d w - (1 - \delta))}{(1 - \delta)(1 + w)} \geq 1$.
  Under these assumptions, the requirement $dw - (1 - \delta) > 0$ of \Cref{lem:alpha-c-w} is always satisfied.
  By \Cref{lem:alpha-c-w}, it holds that
  \begin{align*}
    \sup_{w: d w > 1 - \delta} \alpha_c(w) &= \lim_{w \to (1-\delta)/d} \alpha_c(w) = +\infty \\
    \inf_{w: d w > 1 - \delta} \alpha_c(w) &= \lim_{w \to +\infty} \alpha_c(w) =  \frac{1-\delta}{d} \e^{1 + \frac{1-\delta}{d}}.
  \end{align*}
  
  If $\alpha \leq \frac{1-\delta}{d} \e^{1 + \frac{1-\delta}{d}}$, then it holds that $\alpha < \alpha_c(w)$.
  Hence, when $\frac{\alpha (d w - (1 - \delta))}{(1 - \delta)(1 + w)} \geq 1$, then $y$ exists and by \Cref{fact:T-delta-y} and \Cref{lem:sign-T-delta-y}, $T_\delta$ has no positive solution.
  When $\frac{\alpha (d w - (1 - \delta))}{(1 - \delta)(1 + w)} < 1$, as we have dealt with this case before, $T_\delta$ also has no positive solution.
  
  In the rest part of the proof, we assume $\alpha > \frac{1-\delta}{d} \e^{1 + \frac{1-\delta}{d}}$, equation $\alpha = \alpha_c(w)$ has a unique positive solution $w_\delta$.
  We then make the following claim.
  \begin{claim} \label{claim:y-exists-check}
    If $d \geq 1 - \delta$ and $w \geq w_\delta$, it holds that
    \begin{align*}
      \frac{\alpha (d w - (1 - \delta))}{(1 - \delta)(1 + w)} \geq 1.
    \end{align*}
  \end{claim}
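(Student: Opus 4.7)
The plan is to reduce the inequality to a comparison with $\alpha_c(w)$, exploit the monotonicity of $\alpha_c(w)$ in $w$ to reduce to $w = w_\delta$, and then finish with an elementary algebraic inequality.

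First, I would rewrite the target inequality in the form
\begin{align*}
\alpha \;\ge\; \beta(w) \;:=\; \frac{(1-\delta)(1+w)}{dw - (1-\delta)},
\end{align*}
which is legitimate provided $dw - (1-\delta) > 0$. Under the assumptions $d \ge 1-\delta$ and $w \ge w_\delta > 0$, this positivity is automatic: $w_\delta$ lies in the domain where $\alpha_c(w)$ (as defined in \eqref{eq:alpha-c}) is finite, which is precisely $\{w : dw > 1-\delta\}$, and $dw$ is increasing in $w$.

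The main observation is that $\alpha_c(w)$ factors as
\begin{align*}
\alpha_c(w) \;=\; \beta(w)\cdot\left(\frac{d(w+1)}{dw - (1-\delta)}\right)^{w},
\end{align*}
and the second factor is at least $1$, since $d(w+1) \ge dw - (1-\delta)$ is equivalent to $d + (1-\delta) \ge 0$, which follows from $d \ge 1-\delta \ge 0$. Thus $\alpha_c(w) \ge \beta(w)$ for every $w$ in the relevant domain; this is the elementary algebraic step and should be completely routine.

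It then remains to argue $\alpha \ge \alpha_c(w)$ for $w \ge w_\delta$. By definition $\alpha = \alpha_c(w_\delta)$, so it suffices to invoke the monotonicity of $\alpha_c(w)$ in $w$, which is exactly $\partial A(d, w, \delta)/\partial w < 0$ from \Cref{lem:alpha-c-w} (valid here because $dw - (1-\delta) > 0$). Chaining $\alpha \ge \alpha_c(w) \ge \beta(w)$ yields the claim. I do not anticipate any serious obstacle; the only care needed is ensuring that $dw - (1-\delta) > 0$ throughout so that \Cref{lem:alpha-c-w} applies and the denominators in $\beta(w)$ and $\alpha_c(w)$ are positive.
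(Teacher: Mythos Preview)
Your argument is correct and follows essentially the same route as the paper: both factor $\alpha_c(w)=\beta(w)\cdot\bigl(\tfrac{d(w+1)}{dw-(1-\delta)}\bigr)^{w}$ and use the trivial bound $\tfrac{d(w+1)}{dw-(1-\delta)}\ge 1$. The only minor difference is in the monotonicity step for $w>w_\delta$: you invoke the decrease of $\alpha_c(w)$ from \Cref{lem:alpha-c-w} to get $\alpha\ge\alpha_c(w)\ge\beta(w)$, whereas the paper instead observes directly that $w\mapsto \tfrac{dw-(1-\delta)}{(1-\delta)(1+w)}$ is increasing when $d\ge 1-\delta$, giving $\beta(w)\le\beta(w_\delta)\le\alpha$.
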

  \begin{proof}
  First, we prove the claim for $w = w_\delta$.
  Recall that $w_\delta$ is the unique solution of the equation $\alpha = \alpha_c(w_\delta)$, where $\alpha_c$ is defined in \eqref{eq:alpha-c}.
  It is equivalent to show $\frac{(1 - \delta)(1 + w_\delta)}{d w_\delta - (1 - \delta)} \leq \alpha = \alpha_c(w_\delta)$, which is equivalent to
  \begin{align*}
%     \frac{(1 - \delta)(1 + w_\delta)}{d w_\delta - (1 - \delta)} &\leq \alpha = \alpha_c(w_\delta) \\
 \frac{(1 - \delta)(1 + w_\delta)}{d w_\delta - (1 - \delta)} \leq \frac{(1 - \delta) d^{w_\delta} (w_\delta + 1)^{w_\delta+1}}{(d w_\delta - (1 - \delta))^{w_\delta+1}} 
\quad\iff\quad     1 \leq \tp{\frac{d(w_\delta + 1)}{d w_\delta - (1 - \delta)}}^{w_\delta}.
  \end{align*}
  It is easy to see that the last inequality is true.
  For $w > w_\delta$, $\frac{\alpha (d w - (1 - \delta))}{(1 - \delta)(1 + w)} \geq 1$ holds by noticing that the function $w \mapsto \frac{d w - (1 - \delta)}{(1 - \delta)(1 + w)}$ is increasing when $d \geq 1 - \delta$.
\end{proof}

  Using \Cref{claim:y-exists-check}, we are able to finish the proof of \Cref{lem:T-root-lifecycle}.
  As in \Cref{lem:T-root-lifecycle}, we will consider three cases: (1) $w > w_\delta$; (2) $w = w_\delta$; (3) $w < w_\delta$.

  \paragraph{Case (1):} if $w > w_\delta$,   
  by \Cref{claim:y-exists-check}, it holds that
    \begin{align*}
      \frac{\alpha (d w - (1 - \delta))}{(1 - \delta)(1 + w)} \geq 1.
    \end{align*}
    So, $y$, as the root of the equation $T'_\delta(y) = 0$, exists.
    Moreover, the above condition also implies that $d w - (1 - \delta) > 0$.
    By \Cref{lem:alpha-c-w} and $d w - (1 - \delta) > 0$, it holds that $\alpha > \alpha_c(w)$.
    Hence, by \Cref{lem:sign-T-delta-y}, it holds that $T_\delta(y) < 0$.
    Combining $T_\delta(y) < 0$ and \Cref{fact:T-delta-y}, it holds that when $w > w_\delta$, the equation $T_\delta(x) = 0$ has two positive solutions.

    \paragraph{Case (2):} if $w = w_\delta$, the proof can be done by exactly the same argument as \textbf{Case (1)}.
    It holds that, when $w = w_\delta$, the equation $T_\delta(x) = 0$ has a unique positive solution.

    \paragraph{Case (3):} if $w < w_\delta$, note that \Cref{claim:y-exists-check} does not apply to this case.
    However, if $\frac{\alpha (d w - (1 - \delta))}{(1 - \delta)(1  +w)} < 1$, then we already know that $T_\delta(x) = 0$ has no solution since we have already dealt with this case.
    Now, assume $\frac{\alpha (d w - (1 - \delta))}{(1 - \delta)(1  +w)} \geq 1$, then the same argument as \textbf{Case (1)} can be applied to deduce that, when $w < w_\delta$, the equation $T_\delta(x) = 0$ has no solution.
\end{proof}

%Note that by our assumption that $\frac{\alpha (d w - (1 - \delta))}{(1 - \delta)(1 + w)} \geq 1$, the requirement $dw - (1 - \delta) > 0$ is always satisfied.
%%
%By \Cref{lem:alpha-c-w}, it holds that
%\begin{align*}
%  \sup_{w: d w > 1 - \delta} \alpha_c(w) &= \lim_{w \to (1-\delta)/d} \alpha_c(w) = +\infty \\
%  \inf_{w: d w > 1 - \delta} \alpha_c(w) &= \lim_{w \to +\infty} \alpha_c(w) =  \frac{1-\delta}{d} \e^{1 + \frac{1-\delta}{d}}.
%\end{align*}
%So, if $\alpha > \frac{1-\delta}{d} \e^{1 + \frac{1-\delta}{d}}$, equation $\alpha = \alpha_c(w)$ has a unique positive solution $w_\delta$.
%And we have
%\begin{align*}
%  \begin{cases}
%    w > w_\delta \overset{\Cref{lem:alpha-c-w}}{\Longrightarrow} \alpha_c(w) < \alpha \overset{\eqref{eq:alpha-T-delta}}{\Rightarrow} T_\delta(y) < 0  \overset{\Cref{fact:T-delta-y}}{\Longrightarrow} \text{$T_\delta(x) = x$ has two positive solutions} \\
%    w = w_\delta \overset{\Cref{lem:alpha-c-w}}{\Longrightarrow} \alpha_c(w) = \alpha \overset{\eqref{eq:alpha-T-delta}}{\Rightarrow} T_\delta(y) = 0  \overset{\Cref{fact:T-delta-y}}{\Longrightarrow} \text{$T_\delta(x) = x$ has a unique positive solutions} \\
%    w < w_\delta \overset{\Cref{lem:alpha-c-w}}{\Longrightarrow} \alpha_c(w) > \alpha \overset{\eqref{eq:alpha-T-delta}}{\Rightarrow} T_\delta(y) > 0  \overset{\Cref{fact:T-delta-y}}{\Longrightarrow} \text{$T_\delta(x) = x$ has no positive solutions} 
%  \end{cases}.
%\end{align*}
%
%This finishes the proof.

%It only remains to prove \Cref{claim:y-exists-check}.

\subsection{Implicit $\delta$-uniqueness regime for $(\lambda, d, \alpha)$}
\label{sec:delta-unique-implicit}
In this section, we prove \Cref{thm:delta-unique-implicit}.
In order to translate the uniqueness regime in the coordinate system $(\hat{x}, d, \alpha, w)$ to the uniqueness regime in the coordinate system $(\lambda, d, \alpha, w)$, we want the system to have a unique fixpoint.
Put simply, we want the equation $\lambda(x) = x$ to have a unique positive solution.
To achieve this, we need to understand the monotonicity of the function
\begin{align*}
  \lambda(x) := x(1 + \alpha(1 + x)^{-w})^d,
\end{align*}
as defined in \eqref{eq:lambda-x}.
Note that the derivative of $\lambda$ is given by
\begin{align*}
  \lambda'(x)
%  &= \left(\alpha  (x+1)^{-w}+1\right)^d-\alpha  d w x (x+1)^{-w-1} \left(\alpha  (x+1)^{-w}+1\right)^{d-1} \\
  &= \frac{(x + 1)(\alpha + (1 + x)^w) - \alpha d w x}{(x+1) \left(\alpha +(x+1)^w\right)} \cdot \left(\alpha  (x+1)^{-w}+1\right)^d,
\end{align*}
whose sign is determined by that of
\begin{align*}
  T_0(x) = (x + 1)(\alpha + (1 + x)^w) - \alpha d w x,
\end{align*}
as defined in \eqref{eq:function-T-delta}.
Intuitively, $\delta$-uniqueness will guarantee that the function $F(x)$ has at most $1$ fixpoint.
As remarked before \Cref{thm:delta-unique-implicit}, we will use \Cref{lem:lb-is-0} to take advantage of the strengthened $\delta$-uniqueness for all $w > 0$ as defined in \eqref{eq:lambda-d-alpha-unique}.

As suggested by \Cref{lem:T-root-lifecycle}, we analyze the $\delta$-uniqueness regime for $\lambda$ in the following three separated cases: (1)~$\alpha \leq \frac{1-\delta}{d} \cdot \e^{1 + \frac{1-\delta}{d}}$; (2)~$\frac{1-\delta}{d} \cdot \e^{1 + \frac{1-\delta}{d}} < \alpha \leq \frac{1}{d} \cdot \e^{1 + \frac{1}{d}}$; (3)~$\alpha > \frac{1}{d} \cdot \e^{1 + \frac{1}{d}}$.

\paragraph{When $\alpha \leq \frac{1-\delta}{d} \cdot \e^{1 + \frac{1-\delta}{d}}$:} 
Due to \Cref{lem:T-root-lifecycle}, we have $T_\delta(x) > 0$ and $T_0(x) > 0$ for all $x \geq 0$.
Hence in this case, $(\lambda,d,\alpha)$ is always $\delta$-unique for all $\lambda>0$. And due to \Cref{def:x-lambda-delta}, $\lambda_2^\delta=0$.
Recall that we always assume $\lambda>0$. Therefore, in this case, $(\lambda,d,\alpha)$ is $\delta$-unique iff $\lambda \geq  \lambda^{\delta}_{2,c}$.

\paragraph{When $\frac{1-\delta}{d} \cdot \e^{1 + \frac{1-\delta}{d}} < \alpha \leq \frac{1}{d} \cdot \e^{1 + \frac{1}{d}}$:} 
It holds that $T_0(x) > 0$ for all $x \geq 0$. 
\begin{itemize}
\item
If $w \leq w_\delta$, then $(\lambda, d, \alpha, w)$ is $\delta$-unique for all $\lambda > 0$.
\item
If $w > w_\delta$, by \Cref{lem:T-root-lifecycle}, the $x_1^\delta, x_2^\delta$ in \Cref{def:x-lambda-delta} exist, also recall the definitions of $\lambda^\delta_i(w)$ and $\lambda^\delta_{i,c}$ for $i\in\{1,2\}$  in \Cref{def:x-lambda-delta}.
Since $T_0(x) > 0$ for all $x \geq 0$, we have $\lambda^\delta_1 \leq \lambda^\delta_2$.
Therefore, $(\lambda, d, \alpha, w)$ is $\delta$-unique iff 
\[
\lambda \in (0, \lambda^\delta_1(w)] \cup [\lambda^\delta_2(w), +\infty).
\]
By the continuity of $\lambda^\delta_1(w), \lambda^\delta_2(w)$ in $w$, the tuple $(\lambda, d, \alpha)$ is $\delta$-unique iff 
\[
\lambda \in (0, \lambda^\delta_{1, c}] \cup [\lambda^\delta_{2,c}, +\infty).
\]
%where $\lambda^\delta_{1, c}$ and $\lambda^\delta_{2,c}$ are defined in \Cref{def:x-lambda-delta}.
%\begin{align}\label{eq:lambda-delta-1-2-c}
%  \lambda^\delta_{1, c} := \inf_{w > w_\delta} \lambda^\delta_1(w) \quad \text{and} \quad \lambda^\delta_{2,c} := \sup_{w > w_\delta} \lambda^\delta_2(w).
%\end{align}
By \Cref{lem:lb-is-0}, we have $\lambda^\delta_{1, c} = 0$, which means $(\lambda, d, \alpha)$ is $\delta$-unique iff $\lambda \geq \lambda^\delta_{2,c}$.
\end{itemize}

\paragraph{When $\alpha>\frac{1}{d} \cdot \e^{1 + \frac{1}{d}}$:} 
The equation \eqref{eq:equation-w-delta} always has a unique positive solution $w_\delta$ for $\delta\in[0,1)$.
Moreover, since $T_\delta(x) < T_0(x), \forall x \geq 0$, it holds that when $w = w_0$, the equation $T_\delta(x) = 0$ has two positive roots.
By \Cref{lem:T-root-lifecycle}, this means $w_\delta < w_0$ for $\delta > 0$.
\begin{itemize}
\item If $w \leq w_\delta$, then $(\lambda, d, \alpha, w)$ is $\delta$-unique for all $\lambda > 0$. 
\item If $w_\delta < w \leq w_0$, then  for the  $\lambda^\delta_1(w), \lambda^\delta_2(w)$ defined in \Cref{def:x-lambda-delta},
it can be verified that
$\lambda^\delta_1(w) \leq \lambda^\delta_2(w)$ and $(\lambda, d, \alpha, w)$ is $\delta$-unique iff 
\begin{align*}%\label{eq:unique-lambda-2-interval}
\lambda \in (0, \lambda^\delta_1(w)] \cup [\lambda^\delta_2(w), +\infty).
\end{align*}
\item 
If $w > w_0$, then $\lambda^0_1(w) \geq \lambda^0_2(w)$ and $(\lambda, d, \alpha, w)$ is $\delta$-unique iff
\begin{align*}%\label{eq:unique-lambda-4-interval}
  \lambda \in  & \tp{(0, \lambda^0_2(w)) \cup (\lambda^0_1(w), +\infty)} \cap  \tp{(0, \lambda^\delta_1(w)] \cup [\lambda^\delta_2(w), +\infty)}.
\end{align*}
We note that the region $\tp{(0, \lambda^0_2(w)) \cup (\lambda^0_1(w), +\infty)}$ ensures that the system has exactly one fixed point. This ensures that the characterization of $\delta$-uniqueness given by $T_\delta \geq 0$ can be safely translated from the coordinate system of $(\hat{x}, d, \alpha, w)$ back to the coordinate system of $(\lambda, d, \alpha, w)$.
A typical case is illustrated in the figure below, where we use the fact that
\begin{align*}
  x^\delta_1(w) < x^0_1(w) \leq x^0_2(w) < x^\delta_2(w).
\end{align*}

{\centering
  
\scalebox{1.7}{

\begin{tikzpicture}
\draw[-, thick, draw=blue!50] plot[smooth] coordinates {(0, 1) (1, 0) (2.5, -0.5) (4.5, 0) (5, 0.3)};
\draw[-, thick, draw=orange!50] plot[smooth] coordinates {(0, 2) (1.7, 0) (2.35, -0.3) (3, 0) (5, 2)};
\draw[->] (0, 0) to (5.1, 0);  
\draw[->] (0, 0) to (0, 2.1);  
\node at (5.3, 2) {\tiny \color{orange!90} $T_0(x)$};
\node at (5.3, 0.3) {\tiny \color{blue!70} $T_\delta(x)$};
\draw[->] (0, -3) to (0, -0.9);
\draw[->] (0, -3) to (5.1, -3);
\draw[-, thick] plot[smooth] coordinates {(0, -3) (1.7, -1.7)  (3, -2.5) (5, -1)};
\node at (5.3, -1) {\tiny $\lambda(x)$};
\draw[-, dotted, draw = orange] (1.7, 0) to (1.7, -1.7);
\draw[-, dotted, draw = orange] (3, 0) to (3, -2.5);
\draw[-, draw = orange!90, dashed] (0, -1.69) to (5.1, -1.69);
\draw[-, draw = orange!90, dashed] (0, -2.52) to (5.1, -2.52);
\draw[-, dotted, draw = blue] (1, 0) to (1, -2.15);
\draw[-, dotted, draw = blue] (4.5, 0) to (4.5, -1.45);
\begin{scope}
  \clip (0, -3) rectangle (5.1, -2.52);
  \draw[-, draw=orange] plot[smooth] coordinates {(0, -3.01) (1.7, -1.71)  (3, -2.5) (5, -1)};
\end{scope}
\begin{scope}
  \clip (0, -1.69) rectangle (5.1, -0.9);
  \draw[-, draw=orange] plot[smooth] coordinates {(0, -3) (1.7, -1.7)  (3, -2.51) (5, -1.01)};
\end{scope}
\begin{scope}
  \clip (0, -3) rectangle (1, -2.12);
  \draw[-, draw=blue!50] plot[smooth] coordinates {(0, -2.99) (1.7, -1.69)  (3, -2.5) (5, -1)};
\end{scope}
\begin{scope}
  \clip (4.5, -1.45) rectangle (5.1, -0.9);
  \draw[-, draw=blue!50] plot[smooth] coordinates {(0, -3) (1.7, -1.7)  (3, -2.49) (5, -0.99)};
\end{scope}

\node[draw = white, fill=blue!70, circle, inner sep=0pt, minimum size = 3pt, label={\tiny $x^\delta_1$}] at (1, 0) {};
\node[draw = white, fill=blue!70, circle, inner sep=0pt, minimum size = 3pt, label={\tiny $x^\delta_2$}] at (4.5, 0) {};
\node[draw = white, fill=orange!90, circle, inner sep=0pt, minimum size = 3pt, label={\tiny $x^0_1$}] at (1.7, 0) {};
\node[draw = white, fill=orange!90, circle, inner sep=0pt, minimum size = 3pt, label={\tiny $x^0_2$}] at (3, 0) {};

\node[draw = orange!90, fill=white, circle, inner sep=0pt, minimum size = 2.5pt] at (1.7, -1.69) {};
\node[draw = orange!90, fill=white, circle, inner sep=0pt, minimum size = 2.5pt] at (3, -2.51) {};
\node[draw = white, fill=blue!70, circle, inner sep=0pt, minimum size = 3pt] at (1, -2.12) {};
\node[draw = white, fill=blue!70, circle, inner sep=0pt, minimum size = 3pt] at (4.5, -1.45) {};

\end{tikzpicture}

} % end scalebox

\par}

By the continuity of $\lambda^{\delta}_i(w)$ in $w$, taking the intersection of the above uniqueness regimes over all $w$,
the tuple $(\lambda, \alpha, d)$ is $\delta$-unique iff
\begin{align*}
  \lambda \in \tp{(0, \lambda^0_{1,c}] \cup [\lambda^0_{2,c}, +\infty)} \cap \tp{(0, \lambda^\delta_{1,c}] \cup [\lambda^\delta_{2,c}, +\infty)}
\end{align*}
where $\lambda^\delta_{i, c}$ for $\delta\in[0,1)$ and $i\in\{0,1\}$ are defined in \Cref{def:x-lambda-delta}, and we use the following observation
\begin{align*}
  \inf_{w>w_0} \lambda^0_2(w) &= 0 = \lambda^0_{1,c}, \quad \text{by \Cref{lem:lb-is-0}} \\
    \text{and} \quad \sup_{w > w_0} \lambda^0_1(w) &= \sup_{w > w_0} \lambda^0_2(w) = \lambda^0_{2,c}, \quad \text{by \Cref{lem:lambda-c-monotone-0} below and \Cref{lem:T-root-lifecycle}}.
\end{align*}

Here, the boundaries that involve $\lambda^0_{i,c}$ are closed, because $\lambda^0_{i}(w)$ is monotonically decreasing in $w$ (which is formally verified later in \Cref{lem:lambda-c-monotone-0}).
%\begin{align*}
%  \lambda^\delta_{1, c} &:= \inf_{w > w_\delta} \lambda^\delta_1(w), \quad \lambda^\delta_{2, c} := \sup_{w > w_\delta} \lambda^\delta_2(w) \\
%  \text{and} \quad \lambda^0_{1,c} &:= \sup_{w > w_0} \lambda^0_1(w), \quad \lambda^0_{2,c} := \inf_{w > w_0} \lambda^0_2(w).
%\end{align*}

By \Cref{lem:lb-is-0}, it holds that $\lambda^\delta_{1,c} = \lambda^0_{1,c} = 0$.
Together with \Cref{lem:lambda-c-monotone-delta} stated below, we can conclude that $(\lambda, \alpha, d)$ is $\delta$-unique iff $\lambda \geq \lambda^\delta_{2,c}$.
\end{itemize}

\begin{lemma} \label{lem:lambda-c-monotone-0}
  If $\alpha > \frac{1}{d} \cdot \e^{1 + \frac{1}{d}}, w > w_0$, then $\lambda^0_i(w)$ is monotonically decreasing in $w$, for $i\in \{1, 2\}$.
\end{lemma}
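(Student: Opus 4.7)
The plan is to use an envelope-theorem argument. The key observation is that the two roots $x_1^0(w) \le x_2^0(w)$ of $T_0(x)=0$ are precisely the critical points of the map $x \mapsto \lambda(x,w) := x(1+\alpha(1+x)^{-w})^d$ viewed as a function of $x$ with $w$ fixed. Indeed, a direct computation shows
\[
\frac{\partial \lambda}{\partial x}(x,w) \;=\; \frac{T_0(x)}{(x+1)\bigl(\alpha+(x+1)^w\bigr)}\cdot \bigl(1+\alpha(x+1)^{-w}\bigr)^d,
\]
so $\partial_x\lambda(x,w)=0$ exactly when $T_0(x)=0$.

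For $w>w_0$, \Cref{lem:T-root-lifecycle} gives two distinct positive roots of $T_0$, which (together with the strict convexity $T_0''>0$ from \Cref{fact:T-delta}) are simple, i.e.\ $\partial_x T_0(x_i^0(w),w)\ne 0$. By the implicit function theorem, $x_i^0(w)$ is a $C^1$ function of $w$ on $(w_0,\infty)$, and by definition $\lambda_i^0(w)=\lambda(x_i^0(w),w)$. Differentiating and using $\partial_x \lambda(x_i^0(w),w)=0$ yields
\[
\frac{\mathrm d\lambda_i^0}{\mathrm dw}
\;=\;\frac{\partial \lambda}{\partial x}(x_i^0(w),w)\cdot\frac{\mathrm dx_i^0}{\mathrm dw}\;+\;\frac{\partial \lambda}{\partial w}(x_i^0(w),w)
\;=\;\frac{\partial \lambda}{\partial w}(x_i^0(w),w).
\]

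It remains to show that $\partial_w\lambda(x,w)<0$ whenever $x>0$. Taking the logarithmic derivative of $\lambda(x,w)=x(1+\alpha(1+x)^{-w})^d$ in $w$ gives
\[
\frac{1}{\lambda}\frac{\partial \lambda}{\partial w}\;=\;-\,\frac{\alpha d\,\log(1+x)}{(1+x)^w+\alpha},
\]
which is strictly negative for any $x>0$. Since $x_i^0(w)>0$, we conclude $\frac{\mathrm d\lambda_i^0}{\mathrm dw}<0$ for $i\in\{1,2\}$, so each $\lambda_i^0(w)$ is strictly decreasing on $(w_0,\infty)$.

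There is essentially no obstacle here: the real work lies in the upstream lemmas (\Cref{lem:T-root-lifecycle}, \Cref{fact:T-delta}) that guarantee the two roots of $T_0$ are well-defined, positive, and simple for $w>w_0$. Once these are in hand, the envelope identity $\frac{\mathrm d}{\mathrm dw}\lambda(x_i^0(w),w)=\partial_w\lambda(x_i^0(w),w)$ collapses the problem to checking the sign of a single explicit partial derivative, which is immediate from $\log(1+x)>0$.
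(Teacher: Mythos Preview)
Your proof is correct and is essentially the same argument as the paper's: both compute the total derivative of $\lambda_i^0(w)=\lambda(x_i^0(w),w)$, observe that the $\partial_x\lambda\cdot (x_i^0)'$ term vanishes because $\partial_x\lambda$ is proportional to $T_0$ (which is zero at $x_i^0$), and conclude from the negativity of $\partial_w\lambda$. Your envelope-theorem framing is a cleaner articulation of exactly the cancellation the paper carries out by direct expansion.
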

\begin{proof}
  For $i \in \{1, 2\}$, %note that $\lambda^\delta_i = x^\delta_i(w) (1 + \alpha(1 + x^\delta_i(w))^{-w})^d$ can be expressed as a function $\lambda^\delta_i(w)$ of $w$.
  taking derivative of $\log \lambda^0_i(w)$ with respect to $w$, we have
  \begin{align*}
    x_i^0(w) & (x_i^0(w)+1) \left(\alpha +(x_i^0(w)+1)^w\right)
               \frac{\partial \log \lambda_i^0(x^0_i(w), w)}{\partial w}  \\
             &= [x_i^0]'(w) \tp{x_i^0(w) \left(\alpha -\alpha  d w+(x_i^0(w)+1)^w\right) +\left(\alpha +(x_i^0(w)+1)^w\right)} \\
             & \quad -\alpha d x_i^0(w)^2 \log (x_i^0(w)+1) - \alpha  d x_i^0(w) \log (x_i^0(w)+1) \\
             & = -\alpha d x_i^0(w)^2 \log (x_i^0(w)+1) - \alpha  d x_i^0(w) \log (x_i^0(w)+1) < 0,
  \end{align*}
  where the last equation holds by the fact that $T_0(x_i^0) = 0$ for $i \in \{1, 2\}$.
  The existence of $[x_i^0]'(w)$ is ensured by the implicit function theorem.
  Hence, both $\lambda_1^0(w)$ and $\lambda_2^0(w)$ are monotonically decreasing in $w$.
\end{proof}

\begin{lemma}\label{lem:lambda-c-monotone-delta}
  If $\alpha>\frac{1}{d}\cdot \e^{1 + \frac{1}{d}}$, then $ \lambda^\delta_{2,c}>\lambda^0_{2,c}$ for $\delta \in (0, 1)$.
\end{lemma}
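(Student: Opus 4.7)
The strategy is to exploit the relation $T_\delta(x) = T_0(x) - \delta(x+1)(\alpha + (1+x)^w)$ from \eqref{eq:function-T-delta}, which gives $T_\delta(x) < T_0(x)$ strictly for every $x > 0$ and every $\delta \in (0,1)$. Combined with \Cref{lem:T-root-lifecycle}, this immediately forces $w_\delta < w_0$: at $w = w_0$ the function $T_0$ has a single (double) positive root, call it $x^\star$, while $T_\delta(x^\star) < 0$, so $T_\delta$ already has two distinct positive roots at $w_0$. Consequently $x_1^\delta(w_0) < x^\star < x_2^\delta(w_0)$, and in particular $w_0$ lies strictly inside the interval $(w_\delta, \infty)$ over which $\lambda^\delta_{2,c}$ is defined as a supremum.

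Next, I would analyze $\lambda(x) = x(1 + \alpha(1+x)^{-w})^d$ at $w = w_0$. A direct computation (as done in \Cref{sec:delta-unique-implicit}) shows that $\lambda'(x)$ and $T_0(x)$ have the same sign. Since $T_0$ is strictly convex and $x^\star$ is its double root at $w_0$, we have $T_0(x) > 0$ for every $x > x^\star$, so $\lambda$ is \emph{strictly increasing} on $(x^\star, \infty)$. Because $x_2^\delta(w_0) > x^\star$, this yields
\begin{align*}
  \lambda^\delta_2(w_0) \;=\; \lambda\bigl(x_2^\delta(w_0)\bigr) \;>\; \lambda(x^\star).
\end{align*}

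To finish, I would identify $\lambda(x^\star)$ with $\lambda^0_{2,c}$ using \Cref{lem:lambda-c-monotone-0}. That lemma says $\lambda^0_2(w)$ is monotonically decreasing for $w > w_0$, so
\begin{align*}
  \lambda^0_{2,c} \;=\; \sup_{w > w_0} \lambda^0_2(w) \;=\; \lim_{w \downarrow w_0} \lambda\bigl(x_2^0(w)\bigr) \;=\; \lambda(x^\star),
\end{align*}
where the last equality uses continuity and the fact that the roots $x_1^0(w), x_2^0(w)$ both collapse to $x^\star$ as $w \downarrow w_0$. Combining,
\begin{align*}
  \lambda^\delta_{2,c} \;=\; \sup_{w > w_\delta} \lambda^\delta_2(w) \;\geq\; \lambda^\delta_2(w_0) \;>\; \lambda(x^\star) \;=\; \lambda^0_{2,c},
\end{align*}
which is the desired strict inequality.

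The only subtlety is the strictness: a naive pointwise comparison $\lambda^\delta_2(w) > \lambda^0_2(w)$ for $w > w_0$ followed by taking suprema would only yield $\lambda^\delta_{2,c} \geq \lambda^0_{2,c}$, because the sup for $\delta = 0$ is not attained on the open interval $(w_0, \infty)$. The key trick that bypasses this obstacle is to evaluate $\lambda^\delta_2$ precisely at the endpoint $w = w_0$, which is permitted since $w_0 > w_\delta$; at this endpoint the ``outer'' root $x_2^\delta(w_0)$ is strictly larger than the limiting double root $x^\star$ of $T_0$, creating a genuine gap through the strict monotonicity of $\lambda$ past $x^\star$.
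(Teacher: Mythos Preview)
Your proposal is correct and follows essentially the same route as the paper's proof: both evaluate $\lambda_2^\delta$ at $w=w_0$, use that $x_2^\delta(w_0)$ lies strictly to the right of the double root $x^\star=x_1^0(w_0)=x_2^0(w_0)$ of $T_0$, observe that $\mathrm{sign}(\lambda'(x))=\mathrm{sign}(T_0(x))>0$ for $x>x^\star$, and identify $\lambda_{2,c}^0$ with $\lambda(x^\star)$ via \Cref{lem:lambda-c-monotone-0}. The only cosmetic difference is that the paper phrases the strict increase via the mean value theorem on $[x^\star,x_2^\delta(w_0)]$, whereas you invoke monotonicity on $(x^\star,\infty)$ directly.
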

\begin{proof}
  By \Cref{lem:lambda-c-monotone-0} both $\lambda_1^0(w)$ and $\lambda_2^0(w)$ are monotonically decreasing in $w$ for $w \geq w_0$.
  This implies $\lambda^0_{2,c} = \lambda^0_1(w_0)$.
  Note that $T_\delta(x^0_1(w_0)) < 0$, which means
  \begin{align*}
    x^\delta_1(w_0) < x^0_1(w_0) = x^0_2(w_0) < x^\delta_2(w_0).
  \end{align*}
  Therefore, by the mean value theorem, there exists a $\zeta \in (x^0_1(w_0), x^\delta_2(w_0))$ such that 
  \begin{align*}
    \lambda^\delta_2(w_0) - \lambda^0_1(w_0)
    &= \lambda(x^\delta_2(w_0)) - \lambda(x^0_1(w_0)) \\
    &= \lambda'(\zeta) \cdot \tp{ x^\delta_2(w_0) - x^0_1(w_0) }.
  \end{align*}
  Due to \Cref{fact:T-delta},
  we have $T_0(x^0_1(w_0)) = T'_0(x^0_1(w_0)) = 0$ and $T''_0(x) > 0$ for all $x \geq 0$.
  This means $T_0(x) > 0$ for all $x > x^0_1(w_0)$.
  Note that $\zeta > x^0_1(w_0)$, so we also have $T_0(\zeta) > 0$, which implies that $\lambda'(\zeta) > 0$ since $\mathrm{sign}(\lambda'(x))=\mathrm{sign}(T_0(x))$.
  Hence, we have $\lambda^\delta_{2,c} \geq \lambda^\delta_2(w_0) > \lambda^0_1(w_0)=\lambda^0_{2,c}$.
\end{proof}

In summary, the $\delta$-uniqueness regime for $(\lambda,d,\alpha)$ is described in \Cref{tab:lambda2}.
\begin{table}[h]
\centering
\begin{tabular}{|c|c|}
\hline
condition satisfied by $(\alpha,d)$ 
& $\delta$-uniqueness regime for $\lambda>0$\\
\hline
$\alpha \leq \frac{1-\delta}{d} \cdot \e^{1 + \frac{1-\delta}{d}}$ 
& $[0,+\infty)=[\lambda^\delta_{2,c},+\infty)$ \\
\hline
$\frac{1-\delta}{d} \cdot \e^{1 + \frac{1-\delta}{d}} < \alpha \leq \frac{1}{d} \cdot \e^{1 + \frac{1}{d}}$
& $[\lambda^\delta_{2,c},+\infty)$\\
\hline
$\alpha > \frac{1}{d} \cdot \e^{1 + \frac{1}{d}}$
&  $[\lambda^\delta_{2,c},+\infty)$\\
\hline
\end{tabular}
\caption{uniqueness regime for $\lambda$ under different $\alpha$ and $d$. \label{tab:lambda2}}
\end{table}

This proves \Cref{thm:delta-unique-implicit}, the implicit characterization of the $\delta$-uniqueness.
%

%{\color{red} \vspace{4pt} \hrule \vspace{4pt} \hrule \vspace{4pt}}

%We also want the system $F(x)$ to have a unique fixpoint.
%Note that for any fixpoint $\hat{x}$, we have $\hat{x} = F(\hat{x})=\lambda (1 + \alpha(1 + \hat{x})^{-w})^{-d}$, which means $\lambda= \hat{x}(1 + \alpha(1 + \hat{x})^{-w})^d$.
%This suggests us to defined the following function:
%\begin{align*}
%  \lambda(x) := x(1 + \alpha(1 + x)^{-w})^d,
%\end{align*}
%whose derivative is given by
%\begin{align*}
%  \lambda'(x)
%%  &= \left(\alpha  (x+1)^{-w}+1\right)^d-\alpha  d w x (x+1)^{-w-1} \left(\alpha  (x+1)^{-w}+1\right)^{d-1} \\
%  &= \frac{(x + 1)(\alpha + (1 + x)^w) - \alpha d w x}{(x+1) \left(\alpha +(x+1)^w\right)} \cdot \left(\alpha  (x+1)^{-w}+1\right)^d,
%\end{align*}
%whose sign is determined by that of
%\begin{align*}
%  T_0(x) = (x + 1)(\alpha + (1 + x)^w) - \alpha d w x.
%\end{align*}
%

%We also have the following limits for $\lambda^\delta_i$ and $x^\delta_i$.

%for extending the definitions of $\lambda^\delta_{i,c}$ to the situation where $\alpha\le  \frac{1-\delta}{d} \cdot \e^{1 + \frac{1-\delta}{d}}$

\subsection{Explicit $\delta$-uniqueness regime for $(\lambda,d,\alpha)$}\label{sec:delta-unique-general}
%We resolve the $\delta$-uniqueness regime for $(\lambda,d,\alpha)\in\mathbb{R}_{> 0}$ where $d \geq 1 - \delta$.
%This proves \Cref{thm:delta-unique}.

%The analyses so far have hinted the following implicit characterization of $\delta$-uniqueness.

In this section, we prove \Cref{thm:resolve-lambda-2-c},
which explicitly resolves the critical threshold $\lambda_c=\lambda_{2,c}^\delta$.
In particular, we show that $\lambda^\delta_2(w)$ achieves its maximum at a unique point $w = w_c$, and hence the $(w_c,x_c)$ claimed in  \Cref{thm:delta-unique} is given by this $w_c$ and $x_c=x^\delta_2(w_c)$.
%Note that we use the notation $\-{sign}(\cdot)$ to give the sign (taking value from $\{+, -, 0\}$) of a function at some given point.

Fix any $\delta\in[0,1)$, and assume $\alpha>\frac{1-\delta}{d} \cdot \e^{1 + \frac{1-\delta}{d}}$.
Then the equation in \eqref{eq:equation-w-delta} has a unique positive solution $w_\delta$.
From now on until the end of the section, we assume this setting, as it is the only nontrivial case.
Recall the definition of $M_\delta(x)$ in \eqref{eq:M},
\[
        M_\delta(x) := w \log(1 + x) (\alpha d - (1 + x)^{w+1}) + \delta(x + 1)(\alpha + (x + 1)^w).
\]

We note that the function $M_\delta$ arises naturally when we use the first order condition to analysis the monotonicity of $\lambda^\delta_2(w)$, as an implicit function of $w$ (see \Cref{lem:sign-d-lambda} below).

\begin{remark}
	In the rest of this section, unless otherwise stated, we assume $M_\delta, T_\delta,$ are functions of $x$; and $x^\delta_i, \lambda^\delta_i, i\in\{1, 2\}$ are functions of $w$.
	Then, $M_\delta', T_\delta'$ means we are taking derivatives with respect to $x$, and $(x^\delta_i)', (\lambda^\delta_i)', i \in \{1, 2\}$ means we are taking derivatives with respect to $w$.
	We note, however, that in certain cases we may also consider them as multivariate functions. In those cases, we will explicitly use the notation $\partial_z F$ to denote $\partial F / \partial z$ for a function $F$ and a variable $z$.
\end{remark}
When $\delta = 0$, by \Cref{lem:lambda-c-monotone-0}, it holds that $\lambda^0_{2, c} = \lambda^0_1(w_0)$, where $w_0$ is defined as in \Cref{lem:T-root-lifecycle}.
By \Cref{lem:T-root-lifecycle}, the equation $T_0(x) = 0$ has a unique positive solution $x^0_1(w_0) = x^0_2(w_0)$.
By \Cref{fact:T-delta}, $T_0$ is a strictly convex function, these imply that $T_0(x)$ achieves its global minimum at $x^0_1(w_0) = x^0_2(w_0)$.
This implies that $T_0(x^0_1) = T_0'(x^0_1) = 0$ which is equivalent to the system defined in \eqref{thm:resolve-lambda-2-c} assuming $\delta = 0$ (using the same calculation as in \eqref{eq:T'(x) = 0}).
In the rest of this section, without loss of generality, we assume $\delta > 0$.

To prove \Cref{thm:resolve-lambda-2-c}, we need following technical lemmas.

By the strict convexity of $T_\delta(x)$ stated in \Cref{fact:T-delta}, we have the following fact for its derivative.
\begin{fact}\label{fact:T-delta-derivative} 
 Assume $\alpha > \frac{1-\delta}{d} \cdot \e^{1 + \frac{1-\delta}{d}}$.
  \begin{enumerate}
  \item If $w = w_\delta$, then $x^\delta_1(w) = x^\delta_2(w)$ and $T'_\delta(x^\delta_1) = 0$.
  \item If $w > w_\delta$, then $T_\delta'(x^\delta_1) < 0$ and $T_\delta'(x^\delta_2) > 0$.
  \end{enumerate}
\end{fact}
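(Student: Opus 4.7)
The plan is to prove both parts as direct consequences of the strict convexity of $T_\delta$ established in \Cref{fact:T-delta}, combined with the boundary value $\lim_{x \to 0} T_\delta(x) = (1-\delta)(1+\alpha) > 0$ and the case analysis of positive roots provided by \Cref{lem:T-root-lifecycle}.

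First I would handle the case $w = w_\delta$. By \Cref{lem:T-root-lifecycle}, the equation $T_\delta(x) = 0$ has exactly one positive solution, namely $x_1^\delta(w_\delta) = x_2^\delta(w_\delta)$. Since $T_\delta$ is strictly convex (\Cref{fact:T-delta}), it can touch the $x$-axis at a single point only if the intersection is tangential: if $T_\delta'(x_1^\delta) \neq 0$, then by the intermediate value theorem applied on a small interval around $x_1^\delta$, either $T_\delta$ would change sign at $x_1^\delta$ (contradicting uniqueness together with the positivity at $0$) or would produce a second positive root via convexity. More concretely, since $T_\delta(0) > 0$ and $T_\delta(x_1^\delta) = 0$, one must have $T_\delta'(x_1^\delta) \leq 0$; if this derivative were strictly negative, then strict convexity would force $T_\delta$ to become negative immediately beyond $x_1^\delta$ and eventually return to cross zero again at a second positive root, contradicting uniqueness. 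Hence $T_\delta'(x_1^\delta) = 0$.

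For the case $w > w_\delta$, by \Cref{lem:T-root-lifecycle} there are two distinct positive roots $x_1^\delta < x_2^\delta$. Since $T_\delta$ is strictly convex and $T_\delta(0) > 0 = T_\delta(x_1^\delta)$, the mean value theorem yields a point $\xi \in (0, x_1^\delta)$ with $T_\delta'(\xi) < 0$; by strict convexity $T_\delta'$ is strictly increasing, so one sees that $T_\delta$ is decreasing on $[\xi, x_1^\delta]$. If $T_\delta'(x_1^\delta) \geq 0$, then $T_\delta'$ would have a zero in $(\xi, x_1^\delta]$ at which $T_\delta$ attains its global minimum; but this minimum value would have to be strictly negative (since $T_\delta(x_1^\delta) = 0$ and $T_\delta$ is strictly decreasing on $[\xi, x_1^\delta]$), which is incompatible with the minimum being at or before $x_1^\delta$. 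Therefore $T_\delta'(x_1^\delta) < 0$. The argument for $T_\delta'(x_2^\delta) > 0$ is symmetric: strict convexity and the existence of the minimum strictly between $x_1^\delta$ and $x_2^\delta$ (forced by $T_\delta(x_1^\delta) = T_\delta(x_2^\delta) = 0$ and strict convexity) imply that $T_\delta'$ is strictly positive past that minimum, hence in particular at $x_2^\delta$.

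The main obstacle, such as it is, is only bookkeeping: one must carefully exploit that strict convexity forces $T_\delta'$ to be strictly increasing so that the sign of $T_\delta'$ at a root can be read off from the relative position of the root to the (unique) minimizer of $T_\delta$. Once this is in place, both conclusions reduce to a one-line appeal to monotonicity of $T_\delta'$ combined with $T_\delta(0) > 0$ and the root-count from \Cref{lem:T-root-lifecycle}. No new computation beyond what \Cref{fact:T-delta} and \Cref{lem:T-root-lifecycle} already supply is needed.
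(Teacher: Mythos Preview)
Your proposal is correct and takes essentially the same approach as the paper, which simply states that the fact follows from the strict convexity of $T_\delta$ established in \Cref{fact:T-delta} without spelling out the details. Your argument is slightly more elaborate than needed—once you invoke \Cref{fact:T-delta-y} (which gives $x_1^\delta < y < x_2^\delta$ with $T_\delta'(y)=0$ when $w>w_\delta$, and $x_1^\delta=x_2^\delta=y$ when $w=w_\delta$), the signs of $T_\delta'(x_i^\delta)$ follow immediately from the strict monotonicity of $T_\delta'$.
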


\begin{lemma}\label{lem:sign-d-lambda}
Let $i\in\{1,2\}$. For $w > w_{\delta}$,
\[
\-{sign}\tp{\frac{\partial \log \lambda^\delta_i(w)}{\partial w}} = \-{sign}\tp{T_\delta'(x^\delta_i(w)) \cdot M_\delta(x^\delta_i(w))}.
\]
%\begin{align*}
%    \-{sign}\tp{\frac{\partial \log \lambda^\delta_i(w)}{\partial w}} = \-{sign}\tp{T_\delta'(x^\delta_i(w)) \cdot M_\delta(x^\delta_i(w))}
%  \end{align*}
  %is determined by
  %\begin{align*}
  %  \text{the sign of } T_\delta'(x^\delta_i(w)) \cdot M_\delta(x^\delta_i(w)).
  %\end{align*}
\end{lemma}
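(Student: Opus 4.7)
The plan is to apply the chain rule to $\log \lambda^\delta_i(w) = f(x^\delta_i(w), w)$, where $f(x,w) := \log x + d\log(1+\alpha(1+x)^{-w})$, and to handle the dependence of $x^\delta_i$ on $w$ via the implicit function theorem on the defining equation $T_\delta(x^\delta_i(w), w) = 0$. By Fact 7.5, for $w > w_\delta$ the derivative $T_\delta'(x^\delta_i) = \partial_x T_\delta|_{x^\delta_i}$ is nonzero, so $(x^\delta_i)'(w) = -\partial_w T_\delta / T_\delta'(x^\delta_i)$. Multiplying the resulting chain-rule expression through by $T_\delta'(x^\delta_i)$, the lemma reduces to showing that
\[
T_\delta'(x)\cdot \frac{d\log\lambda^\delta_i}{dw} \;=\; \partial_w f \cdot T_\delta'(x) - \partial_x f \cdot \partial_w T_\delta
\]
has the same sign as $M_\delta(x)$ when evaluated at $x = x^\delta_i(w)$.

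The second step is to exploit the fixpoint relation $T_\delta(x,w)=0$, which is conveniently rewritten as $(1+x)(\alpha+(1+x)^w) = \frac{\alpha dwx}{1-\delta}$ and $(1+x)^{w+1} = \frac{\alpha dwx}{1-\delta} - \alpha(1+x)$. Substituting the first identity into $\partial_x f$ gives the clean value $\partial_x f|_{x^\delta_i} = \delta/x^\delta_i$, while the two identities together simplify $\partial_w f$ and $\partial_w T_\delta$ to expressions linear in $\log(1+x)$ with coefficients in $\alpha, d, w, x, \delta$. Plugging these simplified forms into the bracket above and clearing denominators ($wx$) yields a sum of a pure $\delta\alpha dwx$ term and a $\log(1+x)$-multiple of $\delta w\alpha[dwx - (1-\delta)(1+x)] + (1-\delta)(1+x)T_\delta'(x)$.

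The third step is the algebraic collapse. Expanding $T_\delta'(x) = (1-\delta)(1+w)(1+x)^w - \alpha(dw-(1-\delta))$ and again using $(1-\delta)(1+x)^{w+1} = \alpha dwx - \alpha(1-\delta)(1+x)$, the bracket $\delta w\alpha[dwx - (1-\delta)(1+x)] + (1-\delta)(1+x)T_\delta'(x)$ simplifies (after grouping by $(1-\delta)$) to $\alpha w[dwx - (1-\delta)(d+1+x)]$. On the other side, substituting the same fixpoint identities into $M_\delta(x)$ gives $(1-\delta)M_\delta(x) = \alpha w\log(1+x)[(1-\delta)(d+1+x) - dwx] + \delta \alpha dwx$. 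Matching the two expressions produces the clean identity
\[
x\,T_\delta'(x)\cdot \frac{d\log\lambda^\delta_i}{dw} \;=\; \frac{(1-\delta)\,M_\delta(x)}{w}\qquad \text{at } x=x^\delta_i(w).
\]
Since $x, w, 1-\delta > 0$, dividing by $T_\delta'(x^\delta_i) \ne 0$ and using that a real number has the same sign as its product with any nonzero square gives $\mathrm{sign}(d\log\lambda^\delta_i/dw) = \mathrm{sign}(M_\delta(x^\delta_i)/T_\delta'(x^\delta_i)) = \mathrm{sign}(T_\delta'(x^\delta_i)\cdot M_\delta(x^\delta_i))$, as claimed.

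The only real obstacle is bookkeeping in step three: both the bracket simplification to $\alpha w[dwx - (1-\delta)(d+1+x)]$ and the matching reduction of $M_\delta$ rely on the same two fixpoint substitutions, but the cancellations are cross-cutting (the $(1-\delta)(1+w)(1+x)^{w+1}$ term produced by $T_\delta'$ must combine with the $\delta w$ term and with $\alpha(dw-(1-\delta))(1+x)$ to eliminate the $(1+w)$ and $(dw)$ coefficients and leave only $w(d+1)$). Nothing subtle arises beyond this computation.
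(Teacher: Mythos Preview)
Your proposal is correct and follows essentially the same approach as the paper: apply the implicit function theorem to $T_\delta(x^\delta_i(w),w)=0$, differentiate $\log\lambda(x(w),w)$ via the chain rule, and repeatedly substitute the fixpoint relation $T_\delta=0$ to collapse everything to $M_\delta$. Your final identity $x\,T_\delta'(x)\cdot\frac{d\log\lambda^\delta_i}{dw}=\frac{(1-\delta)M_\delta(x)}{w}$ is exactly the paper's identity $(x+1)(\alpha+(x+1)^w)\cdot\frac{T_\delta'(x)}{\alpha d}\cdot\frac{\partial\log\lambda}{\partial w}=M_\delta(x)$ after one further application of $(1-\delta)(x+1)(\alpha+(1+x)^w)=\alpha dwx$, and your clean observation $\partial_x f|_{x^\delta_i}=\delta/x^\delta_i$ is implicit in the paper's computation as well.
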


\begin{lemma} \label{lem:sign-d-xi}
  %For $i \in \{1, 2\}$, if $M_\delta(x^\delta_i(w)) \geq 0$, the sign of $\frac{\partial x^\delta_i(w)}{\partial w}$ is determined by
  %\begin{align*}
  %  \text{the sign of } T'_\delta(x^\delta_i(w)).
  %\end{align*}
  Let $i \in \{1, 2\}$. For $w > w_{\delta}$, if $M_\delta(x^\delta_i(w)) \geq 0$, then 
  \begin{align*}
    \-{sign}\tp{\frac{\partial x^\delta_i(w)}{\partial w}}
    = \-{sign}\tp{T'_\delta(x^\delta_i(w))}.
  \end{align*}
\end{lemma}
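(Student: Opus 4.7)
The plan is to apply the implicit function theorem to $T_\delta(x^\delta_i(w), w) = 0$. Since $T_\delta$ is smooth in both $x$ and $w$, and since for $w > w_\delta$ we have by \Cref{fact:T-delta-derivative} that $T'_\delta(x^\delta_i) \neq 0$ (indeed $T'_\delta(x^\delta_1) < 0$ and $T'_\delta(x^\delta_2) > 0$), the implicit function theorem yields
\[
\frac{\partial x^\delta_i(w)}{\partial w} = -\frac{\partial_w T_\delta(x^\delta_i, w)}{T'_\delta(x^\delta_i)}.
\]
So it suffices to show that $\partial_w T_\delta(x^\delta_i, w) < 0$ whenever $M_\delta(x^\delta_i) \geq 0$; then the sign of $\partial_w x^\delta_i$ equals the sign of $1/T'_\delta(x^\delta_i)$, which coincides with the sign of $T'_\delta(x^\delta_i)$.

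The main step is therefore an algebraic identity relating $\partial_w T_\delta$ and $M_\delta$ at a root of $T_\delta$. A direct differentiation of \eqref{eq:function-T-delta} gives
\[
w \,\partial_w T_\delta = (1-\delta)(x+1) w (1+x)^w \log(1+x) - \alpha d w x.
\]
I will then use the defining equation $T_\delta(x^\delta_i) = 0$, namely $(1-\delta)(x+1)(\alpha + (1+x)^w) = \alpha d w x$, to eliminate the mixed term. After regrouping, the intended identity reads
\[
w \,\partial_w T_\delta(x^\delta_i, w) \;=\; (1-\delta)\Bigl[\alpha d w\,(\log(1+x^\delta_i) - x^\delta_i) \;-\; M_\delta(x^\delta_i)\Bigr],
\]
which one verifies by expanding $M_\delta$ according to \eqref{eq:M} and using $T_\delta(x^\delta_i) = 0$ once more to collapse the leftover terms. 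This calculation is routine but is where all the bookkeeping happens.

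With the identity in hand, the conclusion is immediate. The elementary inequality $\log(1+x) < x$ for all $x > 0$ makes the first bracketed term strictly negative, and the hypothesis $M_\delta(x^\delta_i(w)) \geq 0$ makes the contribution $-M_\delta(x^\delta_i)$ nonpositive. Therefore $\partial_w T_\delta(x^\delta_i, w) < 0$. Plugging this into the implicit function formula gives
\[
\operatorname{sign}\!\left(\frac{\partial x^\delta_i(w)}{\partial w}\right) \;=\; \operatorname{sign}\!\left(-\frac{\partial_w T_\delta(x^\delta_i, w)}{T'_\delta(x^\delta_i)}\right) \;=\; \operatorname{sign}\bigl(T'_\delta(x^\delta_i(w))\bigr),
\]
as claimed. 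The only delicate point is getting the identity right, and this is essentially the same type of manipulation that links $M_\delta$ to the monotonicity of $\lambda^\delta_i(w)$ in \Cref{lem:sign-d-lambda}, so I expect the two lemmas to be proved in tandem by the same substitution using $T_\delta(x^\delta_i) = 0$.
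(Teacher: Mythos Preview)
Your proposal is correct and follows essentially the same approach as the paper: apply the implicit function theorem to $T_\delta(x^\delta_i(w),w)=0$, then show that the numerator $-\partial_w T_\delta(x^\delta_i,w)$ is positive using $T_\delta(x^\delta_i)=0$, the hypothesis $M_\delta(x^\delta_i)\ge 0$, and the elementary inequality $\log(1+x)<x$. Your packaging of the computation into the closed-form identity $w\,\partial_w T_\delta = (1-\delta)\bigl[\alpha d w(\log(1+x)-x)-M_\delta(x)\bigr]$ at a root of $T_\delta$ is slightly cleaner than the paper's inequality chain, but the content is the same.
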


\begin{lemma} \label{lem:sign-d-xm}
The equation $M_\delta(x) = 0$ has a unique positive solution $x^\delta_M(w)$, which satisfies the followings: 
  \begin{enumerate}
%\item $M_\delta(x) > 0$ if $0 \leq x < x^\delta_M(w)$ and $M_\delta(x) < 0$ if $x > x^\delta_M(w)$;
  \item it holds that
    \begin{align*}
    M_\delta(x)
      \begin{cases}
         > 0 & \text{if } 0 \leq x <  x^\delta_M(w) \\
         = 0 & \text{if } x = x^\delta_M(w) \\
        < 0 & \text{if } x > x^\delta_M(w)
      \end{cases};
    \end{align*}
  \item $\frac{\partial x^\delta_M(w)}{\partial w} < 0$ for all $w \ge w_{\delta}$.
  \end{enumerate}
\end{lemma}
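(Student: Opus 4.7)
The plan is to handle both parts by direct calculus analysis of $M_\delta$, using strict concavity for Part~1 and the implicit function theorem for Part~2; throughout I will write $u = \log(1+x)$ and $v = 1+x$ for brevity.

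For Part~1, I will show that $M_\delta$ is strictly concave on $[0,\infty)$. A direct differentiation gives
\[
M_\delta''(x) = -\frac{w\alpha d}{(1+x)^2} + w(1+x)^{w-1}\bigl[(w+1)(\delta-1) - w - (w+1)w\log(1+x)\bigr].
\]
Since $\delta < 1$ and $\log(1+x) \geq 0$ for $x \geq 0$, every summand in the bracket is non-positive, and the first term is strictly negative, so $M_\delta''(x) < 0$. Combined with the boundary values $M_\delta(0) = \delta(1+\alpha) > 0$ and $\lim_{x\to\infty} M_\delta(x) = -\infty$ (driven by the $-w\log(1+x)(1+x)^{w+1}$ summand), strict concavity leaves $M_\delta$ either monotonically decreasing on $[0,\infty)$ or first strictly increasing past one interior maximum and then strictly decreasing; in both cases the intermediate value theorem yields a unique positive root $x_M^\delta$, and the three-way sign pattern in the first bullet is immediate from continuity.

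For Part~2, the implicit function theorem gives $\partial_w x_M^\delta(w) = -\partial_w M_\delta / \partial_x M_\delta$ evaluated at $x = x_M^\delta$. Strict concavity combined with $M_\delta(0) > 0$ and $\lim_{x\to\infty} M_\delta(x) = -\infty$ forces $\partial_x M_\delta(x_M^\delta,w) < 0$ (otherwise the graph could not cross zero exactly once on its way to $-\infty$), so it suffices to show $\partial_w M_\delta(x_M^\delta,w) < 0$. A direct differentiation gives
\[
\partial_w M_\delta(x,w) = u\bigl[\alpha d - v^{w+1}(wu + 1 - \delta)\bigr].
\]
Evaluating $M_\delta$ at any $x$ with $wu = \delta$ yields $M_\delta = \delta\alpha(d + v) > 0$, so by Part~1 the root $x_M^\delta$ satisfies $wu > \delta$, and the relation $M_\delta = 0$ can be solved for $v^{w+1} = \alpha(wdu + \delta v)/(wu - \delta)$. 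Substituting this into the expression above and clearing the positive factor $(wu - \delta)/\alpha$, the sign of $\partial_w M_\delta(x_M^\delta,w)$ coincides with the sign of
\[
\delta\bigl[d(wu-1) - v(wu + 1 - \delta)\bigr] - w^2 d u^2.
\]
Setting $t = wu$, the piece $d[\delta t - \delta - t^2] = -d\bigl[(t - \delta/2)^2 + \delta(1 - \delta/4)\bigr]$ is strictly negative for $\delta \in (0,1)$, while $-\delta v(wu+1-\delta)$ is also strictly negative; the full expression is therefore negative, and we conclude $\partial_w x_M^\delta(w) < 0$.

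The main technical obstacle is in Part~2: the raw expression for $\partial_w M_\delta$ has no obvious sign, and one must exploit the defining identity $v^{w+1}(wu-\delta) = \alpha(wdu+\delta v)$ to eliminate $v^{w+1}$ before the sign becomes transparent. The small structural observation that $wu > \delta$ at the root, obtained cleanly via the evaluation $M_\delta|_{wu=\delta} = \delta\alpha(d+v)$, is what makes this substitution legitimate.
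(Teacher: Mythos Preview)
Your proof is correct. Part~1 is essentially identical to the paper's argument: both compute $M_\delta''(x)<0$ to get strict concavity, and combine with $M_\delta(0)=\delta(1+\alpha)>0$ and $M_\delta(x)\to-\infty$ to obtain the unique root and the sign pattern.

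Part~2 follows the same implicit-function-theorem framework, but the algebra you use to show $\partial_w M_\delta(x_M^\delta,w)<0$ is genuinely different from the paper's. The paper substitutes $M_\delta=0$ to eliminate $\alpha d-(1+x)^{w+1}$ and then closes with an AM--GM step:
\[
\frac{\delta(x+1)^{w+1}}{w\log(x+1)}+w(x+1)^{w+1}\log(x+1)\ge 2\sqrt{\delta}\,(x+1)^{w+1}>\delta\,(x+1)^{w+1}.
\]
You instead eliminate $v^{w+1}$ (after first establishing $wu>\delta$ at the root via the clean evaluation $M_\delta|_{wu=\delta}=\delta\alpha(d+v)>0$), and then finish by completing the square in $t=wu$. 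Both manipulations are short; yours has the advantage of making the strict inequality transparent without invoking an inequality like AM--GM, while the paper's avoids the auxiliary step of proving $wu>\delta$. Either way, the sign argument for the denominator $\partial_x M_\delta(x_M^\delta)<0$ is the same concavity-plus-boundary observation.
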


\Cref{lem:sign-d-lambda}, \ref{lem:sign-d-xi}, and \ref{lem:sign-d-xm} are proved in \Cref{sec:sign-d-lambda}, \ref{sec:sign-d-xi}, and \ref{sec:sign-d-xm}, respectively.

The following lemma compares the unique positive root $x^\delta_M(w)$ of $M_\delta(x)=0$ defined in \Cref{lem:sign-d-xm}, with the first positive root $x^\delta_1(w)$ of $T_\delta(x)=0$ defined in \Cref{def:x-lambda-delta}.
\begin{lemma} \label{lem:x1-le-xm}
  For $w \geq w_\delta$, 
  it holds that $x^\delta_1(w) < x^\delta_M(w)$ and $M_\delta(x) > 0$ for $x = x^\delta_1(w)$.  
\end{lemma}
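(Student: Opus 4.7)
The plan is to establish the two claims jointly: by \Cref{lem:sign-d-xm}(1), $M_\delta(x^\delta_1(w))>0$ is equivalent to $x^\delta_1(w) < x^\delta_M(w)$, so I focus on the latter. The argument proceeds in two stages: a direct computation at the boundary $w=w_\delta$, then an extension to all $w>w_\delta$ by a continuity/contradiction argument.

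For the base case, observe that by \Cref{fact:T-delta} (strict convexity) together with \Cref{lem:T-root-lifecycle}, at $w=w_\delta$ the convex function $T_\delta$ has a single positive root, which must be a double root. Hence $x := x^\delta_1(w_\delta)$ satisfies both $T_\delta(x)=0$ and $T'_\delta(x)=0$. Eliminating $(1+x)^{w_\delta}$ between these two relations and simplifying yields the closed form $x = \tfrac{d+1-\delta}{dw_\delta-(1-\delta)}$, from which $(1+x)(dw_\delta-(1-\delta)) = d(w_\delta+1)$ and consequently $(1+x)^{w_\delta+1}=\tfrac{\alpha d}{1-\delta}$. Substituting these together with the identity $(1+x)(\alpha+(1+x)^{w_\delta})=\tfrac{\alpha dw_\delta x}{1-\delta}$ obtained from $T_\delta(x)=0$ into the definition of $M_\delta$ collapses the expression to
\[
M_\delta(x^\delta_1(w_\delta)) \;=\; \frac{\delta\,\alpha d\,w_\delta}{1-\delta}\bigl[x^\delta_1(w_\delta) - \log(1+x^\delta_1(w_\delta))\bigr] \;>\; 0,
\]
where positivity is the elementary inequality $y>\log(1+y)$ for $y>0$.

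For $w>w_\delta$, suppose for contradiction the inequality fails somewhere. Both $x^\delta_1(w)$ and $x^\delta_M(w)$ are continuous by the implicit function theorem (using $T'_\delta(x^\delta_1)\neq 0$ from \Cref{fact:T-delta-derivative} and $\partial_x M_\delta(x^\delta_M)\neq 0$ from \Cref{lem:sign-d-xm}), so the infimum $w^*>w_\delta$ of the failure set satisfies $x^\delta_1(w^*)=x^\delta_M(w^*)=:x^*$, meaning $(x^*,w^*)$ simultaneously solves $T_\delta=0$ and $M_\delta=0$. On $[w_\delta,w^*)$, the relation $M_\delta(x^\delta_1)>0$ combined with $T'_\delta(x^\delta_1)<0$ forces $\tfrac{dx^\delta_1}{dw}<0$ via \Cref{lem:sign-d-xi}, while \Cref{lem:sign-d-xm}(2) gives $\tfrac{dx^\delta_M}{dw}<0$.

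The main obstacle is to extract a contradiction at the meeting point $(x^*,w^*)$. The plan is to combine the two vanishing conditions to eliminate $(1+x^*)^{w^*+1}$ and derive the key identity $\log(1+x^*)\bigl[dw^*x^*-(1-\delta)(d+1+x^*)\bigr] = \delta dx^*$. With this identity in hand, $\partial_w T_\delta(x^*,w^*)$ collapses cleanly to $\alpha d(1-\delta)[\log(1+x^*)-x^*]<0$, and parallel substitutions give explicit expressions for $\partial_x T_\delta$, $\partial_w M_\delta$, and $\partial_x M_\delta$ at $(x^*,w^*)$. Since both $\partial_x T_\delta<0$ and $\partial_x M_\delta<0$, the difference of implicit derivatives $\tfrac{dx^\delta_1}{dw}-\tfrac{dx^\delta_M}{dw}$ has the same sign as the Jacobian $\partial_x T_\delta\,\partial_w M_\delta-\partial_w T_\delta\,\partial_x M_\delta$, and the reduced formulas let me certify this determinant is strictly negative. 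This forces $\tfrac{dx^\delta_1}{dw}(w^*) < \tfrac{dx^\delta_M}{dw}(w^*)$, contradicting the elementary requirement that two strictly decreasing curves with $x^\delta_1<x^\delta_M$ on $[w_\delta,w^*)$ meeting at $w^*$ must satisfy $\tfrac{dx^\delta_1}{dw}(w^*)\geq\tfrac{dx^\delta_M}{dw}(w^*)$, closing the argument.
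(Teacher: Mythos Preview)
Your base case at $w=w_\delta$ is correct, and in fact it already contains the paper's entire proof --- you just did not notice that the same computation works for all $w\ge w_\delta$ at once. The key observation you missed is \Cref{fact:T-delta-derivative}: for every $w\ge w_\delta$ one has $T'_\delta(x^\delta_1(w))\le 0$, not merely $T'_\delta(x^\delta_1(w_\delta))=0$. Writing $x=x^\delta_1(w)$ and combining $T'_\delta(x)\le 0$ with $T_\delta(x)=0$ gives (exactly as in your base case, but with $\le$ in place of $=$) the inequality $\alpha d \ge (1-\delta)(1+x)^{w+1}$, and substituting this into $M_\delta$ yields
\[
M_\delta(x)\ \ge\ -\frac{\delta}{1-\delta}\,\alpha d w\log(1+x)\ +\ \frac{\delta}{1-\delta}\,\alpha d w\,x\ >\ 0,
\]
using $x>\log(1+x)$. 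That is the paper's proof in full: a direct two-line argument, uniform in $w$, with no continuation step needed.

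Your continuity/contradiction argument for $w>w_\delta$ is therefore unnecessary, and as written it has a gap: the assertion that the Jacobian $\partial_x T_\delta\,\partial_w M_\delta-\partial_w T_\delta\,\partial_x M_\delta$ is strictly negative is stated without computation. In fact, if you carry out the substitutions you set up, you find that on the constraint set $\{T_\delta=0,\ M_\delta=0\}$ one has
\[
\partial_x T_\delta \;=\; \frac{\alpha w\,\delta d\,[x-\log(1+x)]}{(1+x)\log(1+x)}\ >\ 0,
\]
which already contradicts $T'_\delta(x^\delta_1)<0$ from \Cref{fact:T-delta-derivative} and closes the argument immediately --- no Jacobian needed. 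So your route can be completed, but the proof you actually sketched (via the sign of the Jacobian and the sign of $\partial_x T_\delta\,\partial_x M_\delta$) is both more work and not yet carried out.
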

\begin{proof}
We first prove that $M_\delta(x) > 0$ for $x = x^\delta_1(w)$. By \Cref{fact:T-delta-derivative}, for $w \geq w_\delta$, $T'_\delta(x) \leq 0$, and hence
\begin{align}\label{eq:proof-x1-le-xm-1}
  (1 - \delta)(\alpha + (1 + x)^w) &\leq \alpha d w - (1 - \delta) w (1 + x)^w.
\end{align}
Let $x=x^\delta_1(w)$. By  \Cref{def:x-lambda-delta}, $T_\delta(x) = 0$. Therefore,
\begin{align}
  \nonumber & &T_\delta(x) = 0 
  \nonumber  &= (1 - \delta)(1 + x)(\alpha + (1 + x)^w) - \alpha d w x\\
  \nonumber\text{(by \eqref{eq:proof-x1-le-xm-1})}&\implies  &\alpha d w x &\leq (1 + x) (\alpha d w - (1 - \delta) w (1 + x)^w) \\
 %   \nonumber &\implies  &\alpha d x &\leq (1 + x) (\alpha d - (1 - \delta) (1 + x)^w) \\
  \label{eq:x1-fact} &\implies &0 &\leq \alpha d - (1 - \delta) (1 + x)^{w+1}.
\end{align}
Then, for $x=x^\delta_1(w)$,
\begin{align*}
  M_\delta(x)
  &= w \log(1 + x) (\alpha d - (1 + x)^{w+1}) + \delta(x + 1) (\alpha + (1 + x)^w) \\
  (\text{by \eqref{eq:x1-fact}})\qquad &\geq - \frac{\delta}{1-\delta} \alpha d w \log (1 + x) + \delta(x + 1) (\alpha + (1 + x)^w) \\
  (\text{by }T_\delta(x)=0)\qquad&{=} - \frac{\delta}{1-\delta} \alpha d w \log (1 + x) + \frac{\delta}{1-\delta} \alpha d w x \\
  (x > \log(1 + x)\text{ for }x>0)\qquad&> 0.
\end{align*}
This proves $M_\delta(x) > 0$ for $x = x^\delta_1(w)$. And $x^\delta_1(w) < x^\delta_M(w)$ follows from this and \Cref{lem:sign-d-xm}.
\end{proof}

%\begin{lemma} \label{lem:x1-xm}
%  It holds that for any $w \geq w_\delta$, $x^\delta_1(w) < x^\delta_M(w)$;
%\end{lemma}

%\Cref{lem:x1-xm} could be proved directly by combining \Cref{lem:x1-le-xm} and \Cref{lem:sign-d-xm}.

\begin{lemma} \label{lem:x2-xm}
  There is a real number $w_c > w_\delta$ such that
  \begin{enumerate}
  \item if $w \in (w_\delta, w_c)$, then $x^\delta_2(w) < x^\delta_M(w)$;
  \item if $w = w_c$, then $x^\delta_2(w) = x^\delta_M(w)$;
  \item if $w \in (w_c, +\infty)$, then $x^\delta_2(w) > x^\delta_M(w)$.
  \end{enumerate}
\end{lemma}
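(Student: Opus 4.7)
The plan is to reduce the claim to showing that the continuous function $h(w) := x^\delta_2(w) - x^\delta_M(w)$ has a unique zero on $(w_\delta, \infty)$, with $h < 0$ strictly to its left and $h > 0$ strictly to its right. First I would establish existence of a zero. At $w = w_\delta$ the two branches $x^\delta_1$ and $x^\delta_2$ coincide, so $h(w_\delta) = x^\delta_1(w_\delta) - x^\delta_M(w_\delta) < 0$ by \Cref{lem:x1-le-xm}. Next I would rule out that $h$ stays $\leq 0$ on all of $(w_\delta, \infty)$: if it did, then by the characterization of $M_\delta$ in \Cref{lem:sign-d-xm} we would have $M_\delta(x^\delta_2(w)) \geq 0$ for every such $w$, and combined with $T'_\delta(x^\delta_2(w)) > 0$ for $w > w_\delta$ (\Cref{fact:T-delta-derivative}), \Cref{lem:sign-d-xi} would force $(x^\delta_2)'(w) \geq 0$, making $x^\delta_2$ non-decreasing on $(w_\delta, \infty)$. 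This contradicts $\lim_{w \to \infty} x^\delta_2(w) = 0$ from \Cref{lem:lb-is-0}. By continuity, the zero set of $h$ on $(w_\delta, \infty)$ is nonempty, and I would let $w_c$ denote its infimum.

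The central step for uniqueness and the correct sign structure is the claim that at every zero $w^* \in (w_\delta, \infty)$ of $h$, the derivative $h'(w^*)$ is strictly positive. At such $w^*$ we have $x^\delta_2(w^*) = x^\delta_M(w^*)$, which is a root of $M_\delta$, so $M_\delta(x^\delta_2(w^*)) = 0 \geq 0$. Applying \Cref{lem:sign-d-xi} at the boundary of its hypothesis, and using $T'_\delta(x^\delta_2(w^*)) > 0$ from \Cref{fact:T-delta-derivative}, I obtain $(x^\delta_2)'(w^*) \geq 0$. Combined with $(x^\delta_M)'(w^*) < 0$ strictly from \Cref{lem:sign-d-xm}, this yields $h'(w^*) = (x^\delta_2)'(w^*) - (x^\delta_M)'(w^*) > 0$.

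Granted this local monotonicity at zeros of $h$, I would conclude as follows. At $w_c$ itself, $h'(w_c) > 0$ forces $h > 0$ in a right neighborhood of $w_c$. If $h$ returned to zero at some later point $w_1 > w_c$, then by continuity $h$ would have to decrease through zero at $w_1$, forcing $h'(w_1) \leq 0$ and contradicting the central claim. Hence $h > 0$ strictly on $(w_c, \infty)$. On $(w_\delta, w_c)$, $h$ stays strictly negative by the definition of $w_c$ as the infimum of the zero set, and by continuity $h(w_c) = 0$. This establishes the desired trichotomy with this choice of $w_c$.

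I expect the main subtlety to be the boundary case $M_\delta = 0$ when invoking \Cref{lem:sign-d-xi}: the lemma guarantees matching signs, but at $M_\delta(x^\delta_2(w^*)) = 0$ one only extracts the non-strict inequality $(x^\delta_2)'(w^*) \geq 0$, and all the strictness of $h'(w^*) > 0$ must then come from the strictly negative $(x^\delta_M)'$ supplied by \Cref{lem:sign-d-xm}. Should it be preferable to avoid invoking \Cref{lem:sign-d-xi} at the boundary of its hypothesis, one could instead differentiate $T_\delta(x^\delta_2(w), w) = 0$ implicitly and simplify the resulting expression for $(x^\delta_2)'(w^*)$ using $M_\delta = 0$ at the coincidence point, verifying directly that it is non-negative.
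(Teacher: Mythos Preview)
Your proposal is correct and follows essentially the same approach as the paper: both establish existence of a crossing by combining $h(w_\delta)<0$ (via \Cref{lem:x1-le-xm}) with the observation that $h\le 0$ everywhere would force $(x^\delta_2)'\ge 0$ and hence contradict $\lim_{w\to\infty}x^\delta_2(w)=0$ (\Cref{lem:lb-is-0}); and both rule out a second crossing by noting that at any zero of $h$ one has $(x^\delta_2)'\ge 0$ (from \Cref{lem:sign-d-xi} with $M_\delta=0$) and $(x^\delta_M)'<0$ (from \Cref{lem:sign-d-xm}), so $h'>0$ there. Your caution about \Cref{lem:sign-d-xi} at the boundary $M_\delta=0$ is unnecessary, since the proof of that lemma actually gives a strictly positive numerator even when $M_\delta=0$, yielding $(x^\delta_2)'(w^*)>0$; but your weaker version suffices anyway.
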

\begin{proof}
  First, we claim that there is at least a point $w_\star \in (w_\delta, +\infty)$ such that $x^\delta_2(w_\star) = x^\delta_M(w_\star)$.
  %We also note that, if such point $w_\star$ exists, then $w_\star \neq w_\delta$.
  %Since by \Cref{fact:T-delta-derivative} and \Cref{lem:x1-le-xm}, it holds that $x^\delta_2(w_\delta) = x^\delta_1(w_\delta) < x^\delta_M(w_\delta)$.
  Now, suppose $w_\star \in (w_\delta, +\infty)$ is the smallest point such that $x^\delta_2(w_\star) = x^\delta_M(w_\star)$.
  To understand the behavior of $x^\delta_2(w)$ and $x^\delta_M(w)$, we consider three cases: (1) $w \in (w_\delta, w_\star)$; (2) $w = w_\star$; (3) $w \in (w_\star, +\infty)$.

  \paragraph{Case(1):} If $w \in (w_\delta, w_\star)$, note that by \Cref{fact:T-delta-derivative} and \Cref{lem:x1-le-xm}, it holds that $x^\delta_2(w_\delta) = x^\delta_1(w_\delta) < x^\delta_M(w)$.
  By definition, recall that $w_\star \in (w_\delta, +\infty)$ is the smallest point such that $x^\delta_2(w_\star) = x^\delta_M(w_\star)$.
  Hence, by continuity of the function $x^\delta_2(w)$ and $x^\delta_M(w)$, for all $w \in (w_\delta, w_\star)$, we have $x^\delta_2(w) < x^\delta_M(w)$.

  \paragraph{Case(2):} If $w = w_\star$, then $x^\delta_2(w) = x^\delta_M(w)$ holds by the definition of $w_\star$.

  \paragraph{Case(3):} If $w \in (w_\star, +\infty)$, we will show that $x^\delta_2(w) > x^\delta_M(w)$.
  The proof will be done by a contradiction, by supposing the contrary.
  Note that by \Cref{fact:T-delta-derivative}, it holds that $T'_\delta(x^\delta_2(w_\star)) > 0$.
  By the fact that $T'_\delta(x^\delta_2(w_\star)) > 0$, $M_\delta(x^\delta_2(w_\star)) = 0$, and \Cref{lem:sign-d-xi}, it holds that $\-{sign}\tp{\frac{\partial x^\delta_2(w)}{\partial w}\vert_{w = w_\star}}  = \-{sign}\tp{T'_\delta(x^\delta_2(w_\star))} > 0$.
  Moreover, by \Cref{lem:sign-d-xm}, we know that $\frac{\partial x^\delta_M(w)}{\partial w} \vert_{w = w_\star} < 0$.
  By the continuity of $\frac{\partial x^\delta_2(w)}{\partial w}$ and $\frac{\partial x^\delta_M(w)}{\partial w}$, it holds that there is an $\epsilon > 0$ such that for $w \in (w_\star, w_\star + \epsilon)$, we have $x^\delta_2(w) > x^\delta_M(w)$.
  Now, suppose $w_\circ$ is the smallest point such that $w_\circ > w_\star$ and $x^\delta_2(w) = x^\delta_M(w)$.
  By definition of $w_\circ$ and the fact that $w \in (w_\star, w_\star + \epsilon)$, for any $w \in (w_\star, w_\circ)$, it holds that $x^\delta_2(w) > x^\delta_M(w)$.
  By the continuity of $\frac{\partial (x^\delta_2(w) - x^\delta_M(w))}{\partial w}$ and the mean value theorem, this implies that $\frac{\partial (x^\delta_2(w) - x^\delta_M(w))}{\partial w}\vert_{w = w_\circ} \leq 0$.
  However, by a similar argument on $w_\circ$ as for $w_\star$, we know that $\frac{\partial x^\delta_2(w)}{\partial} \vert_{w = w_0} > 0$ and $\frac{\partial x^\delta_M(w)}{\partial w} \vert_{w = w_\circ} < 0$, which implies that $\frac{\partial (x^\delta_2(w) - x^\delta_M(w))}{\partial w}\vert_{w = w_\circ} > 0$.
  This leads to a contradiction.
  
  By combining \textbf{Case (1) (2) (3)}, we finish the proof of \Cref{lem:x2-xm}.
  Now, we are only left to prove the claim that we made first: there is at least a point $w_\star \in [w_\delta, +\infty)$ such that $x^\delta_2(w_\star) = x^\delta_M(w_\star)$.
  We prove this by a proof of contradiction, by supposing the contrary of the claim.
  Note that by \Cref{fact:T-delta-derivative} and \Cref{lem:x1-le-xm}, $x^\delta_2(w_\delta) = x^\delta_1(w_\delta) < x^\delta_M(w_\delta)$.
  Suppose such $w_\star$ does not exists.
  Then, by the continuity of $x^\delta_2(w)$ and $x^\delta_M(w)$, it holds that $x^\delta_2(w) < x^\delta_M(w)$ for all $w > w_\delta$.

  On the one hand, by \Cref{lem:lb-is-0}, we have $\lim_{w \to \infty} x^\delta_2(w) = 0$.

  On the other hand, for any $w > w_\delta$, since $x^\delta_2(w) < x^\delta_M(w)$, by \Cref{lem:sign-d-xm}, it holds that $M_\delta(x^\delta_2(w)) > 0$.
  By \Cref{fact:T-delta-derivative}, it also holds that $T'_\delta(x^\delta_2(w)) > 0$.
  By $M_\delta(x^\delta_2(w)) > 0, T'_\delta(x^\delta_2(w)) > 0$ and \Cref{lem:sign-d-xi}, it holds that $\frac{\partial x^\delta_2(w)}{\partial w} > 0$ for all $w > w_\delta$.
  %for any $w \geq w_\delta$, it holds that it holds that $\partial x^\delta_2(w) / \partial w > 0$ for all $w > w_\delta$.
  
  However, we know that $x^\delta_2(w_\delta) > 0$ (since $T_\delta(0) > 0$ for any $w$) and this fact together with the facts that $\lim_{w\to\infty} x^\delta_2(w) = 0$ and $\frac{\partial x^\delta_2(w)}{\partial w} > 0, \forall w > w_\delta$ leads to a contradiction!
\end{proof}

%Now we are ready to finish the proof \Cref{thm:delta-unique}
\begin{proof}[Proof of \Cref{thm:delta-unique}]
  Recall that by \Cref{tab:lambda2}, to prove \Cref{thm:delta-unique}, it suffices to resolve
  \begin{align*}
    \lambda^\delta_{2, c} = \sup_{w > w_\delta} \lambda^\delta_2(w).
  \end{align*}
  To do this, all we need to do is to show that the function $\lambda^\delta_2(w)$ achieves its maximum at $w_c$, where $(x_c, w_c)$ is the unique positive solution of the following system
  \begin{align} \label{eq:TM-sys}
  \begin{cases}
    T_\delta = (1 - \delta)(x + 1)(\alpha + (1 + x)^w) - \alpha d w x = 0, \\
    M_\delta = w \log(1 + x) (\alpha d - (1 + x)^{w+1}) + \delta(x + 1)(\alpha + (x + 1)^w) = 0.
  \end{cases}
  \end{align}
  
  First, we finish the easy part, that is showing \eqref{eq:TM-sys} has a unique positive solution.
  For a fixed $w \geq w_\delta$, recall that for $i \in \{1, 2\}$, $x^\delta_i$ is the solution of the equation $T_\delta(x) = 0$ and $x^\delta_M(w)$ is the solution of the equation $M_\delta(x) = 0$.
  Note that a pair $(x, w)$ is the solution of \eqref{eq:TM-sys} iff $x^\delta_i(w) = x^\delta_M(w)$ for some $i\in \{1, 2\}$.
  This means that the pair $(x, w)$ with $w = w_\delta$ can not be a solution, since by \Cref{fact:T-delta-derivative} and \Cref{lem:x1-le-xm}, we have $x^\delta_2 = x^\delta_1 < x^\delta_M$.
  By \Cref{lem:x1-le-xm} and \Cref{lem:x2-xm}, it is straightforward to verify that the only possible situation that $x^\delta_i(w) = x^\delta_M(w)$ happens at $w = w_c$, where $w_c$ is the one defined in \Cref{lem:x2-xm} and $x_c = x^\delta_2(w_c) = x^\delta_M(w_c)$.
  Hence, \eqref{eq:TM-sys} has a unique positive solution $(x_c, w_c)$ for $x_c = x^\delta_2(w_c)$.

  Now, we are only left to show that the function $\lambda^\delta_2(w)$ achieves its maximum at the point $w = w_c$.
  Recall that according to \Cref{fact:T-delta-derivative}, it holds that $T_\delta'(x^\delta_2) > 0$ for all $w > w_\delta$.
  By \Cref{lem:x2-xm} and the sign of $T_\delta'(x^\delta_2)$, we divide all the cases of $w > w_\delta$ into three cases to understand the monotonicity of $\lambda^\delta_2(w)$: (1) $w \in (w_\delta, w_c)$; (2) $w = w_c$; (3) $w \in (w_c, +\infty)$.

  %\paragraph{Case (1):}
  %If $w = w_\delta$, then by \Cref{fact:T-delta-derivative}, $T_\delta'(x^\delta_2) = 0$.
  %By $T_\delta'(x^\delta_2) = 0$ and \Cref{lem:sign-d-lambda}, it holds that 
  %\begin{align*}
  %  \-{sign}\tp{\frac{\partial \log \lambda^\delta_2(w)}{\partial w}} = \-{sign}\tp{T'_\delta(x^\delta_2) \cdot M_\delta(x^\delta_2)} = 0.
  %\end{align*}

  \paragraph{Case (1):}
  If $w \in (w_\delta, w_c)$, then by \Cref{fact:T-delta-derivative}, $T'_\delta(x^\delta_2) > 0$.
  Moreover, by \Cref{lem:x2-xm}, we have $x^\delta_2(w) < x^\delta_M(w)$.
  Combining $x^\delta_2(w) < x^\delta_M(w)$ and \Cref{lem:sign-d-xm}, we have $M_\delta(x^\delta_2(w)) > 0$.
  By $M_\delta(x^\delta_2(w)) > 0$, $T'_\delta(x^\delta_2) > 0$, and \Cref{lem:sign-d-lambda}, we have
  \begin{align*}
    \-{sign}\tp{\frac{\partial \log \lambda^\delta_2(w)}{\partial w}} = \-{sign}\tp{T'_\delta(x^\delta_2) \cdot M_\delta(x^\delta_2)} > 0.
  \end{align*}

  \paragraph{Case (2):} If $w = w_c$, by \Cref{lem:x2-xm}, it holds that $x^\delta_2(w) = x^\delta_M(w)$, which, by \Cref{lem:sign-d-xm}, implies $M_\delta(x^\delta_2) = 0$.
  Combining $M_\delta(x^\delta_2) = 0$ and \Cref{lem:sign-d-lambda} gives us 
  \begin{align*}
    \-{sign}\tp{\frac{\partial \log \lambda^\delta_2(w)}{\partial w}} = \-{sign}\tp{T'_\delta(x^\delta_2) \cdot M_\delta(x^\delta_2)} = 0.
  \end{align*}

  \paragraph{Case (3):} If $w > w_c$, by \Cref{lem:x2-xm}, it holds that $x^\delta_2(w) > x^\delta_M(w)$, which, by \Cref{lem:sign-d-xm}, implies that $M_\delta(x^\delta_2) < 0$.
  Furthermore, by \Cref{fact:T-delta-derivative}, it holds that $T'_\delta(x^\delta_2) > 0$.
  By $M_\delta(x^\delta_2) < 0$, $T'_\delta(x^\delta_2) > 0$, and \Cref{lem:sign-d-lambda}, it holds that
  \begin{align*}
    \-{sign}\tp{\frac{\partial \log \lambda^\delta_2(w)}{\partial w}} = \-{sign}\tp{T'_\delta(x^\delta_2) \cdot M_\delta(x^\delta_2)} < 0.
  \end{align*}

  Combining \textbf{Case (1) (2) (3)}, it holds that $\lambda^\delta_2(w)$ as a function of $w$ is monotonically increasing when $w \in (w_\delta, w_c)$, and is monotonically decreasing when $w \in [w_c, +\infty)$.
  Hence its maximum is achieved at the point $w = w_c$.
  This finishes the proof.
\end{proof}

\subsubsection{Monotonicity of $\lambda^\delta_i(w)$} \label{sec:sign-d-lambda}
In this section, we prove \Cref{lem:sign-d-lambda}.
Recall the definition of $T_\delta$ in \Cref{eq:function-T-delta},
\begin{align*}
  T_\delta := (1 - \delta)(x + 1)(\alpha + (1 + x)^w) - \alpha d w x.
\end{align*}
Fix $i \in \{1, 2\}$ arbitrarily and let $x(w) := x_i^\delta(w)$.
Recall that $x(w)$ is well-defined as long as $w >w_{\delta}$.
Taking derivative on both sides of $T_\delta(x(w), w) = 0$, we have
%\begin{align*}
$ \partial_x T_\delta(x, w) \cdot x'(w) + \partial_w T_\delta(x, w) = 0$, where $x = x(w)$.
%\end{align*}
This means
\begin{align} \label{eq:d-xi}
  x'(w) &= - \frac{\partial_w T_\delta(x, w)}{\partial_x T_\delta(x, w)} = \frac{1}{\partial_x T_\delta(x, w)} \tp{\alpha  d x(w)-(1-\delta ) (x(w)+1)^{w+1} \log (x(w)+1)}.
\end{align}
%\begin{align*}
%  &-\alpha  d w x'(w)-\alpha  d x(w)+(1-\delta ) \left(\alpha +(x(w)+1)^w\right) x'(w) \\
%  & +(1-\delta ) (x(w)+1)^{w+1} \left(\frac{w x'(w)}{x(w)+1}+\log (x(w)+1)\right)=0,
%\end{align*}
%which means
%\begin{align} \label{eq:d-xi}
%  x'(w) &= \frac{\alpha  d x(w)-(1-\delta ) (x(w)+1)^{w+1} \log (x(w)+1)}{-\alpha  d w+(1-\delta ) \left(\alpha +(x(w)+1)^w\right)+(1-\delta ) w (x(w)+1)^w}.
%\end{align}

Recall that $\lambda(w) := \lambda^\delta_i(w) = x(w)(1 + \alpha(1 + x(w))^{-w})^d$ in \Cref{def:x-lambda-delta}, we have
\begin{align*}
  x(w) (x(w)+1) & \left(\alpha +(x(w)+1)^w\right)
  \frac{\partial \log \lambda (x(w), w)}{\partial w}  \\
  %=& x'(w) (x(w) \left(\alpha -\alpha  d w+(x(w)+1)^w\right) +\left(\alpha +(x(w)+1)^w\right)) \\
  %& \quad -\alpha d x(w)^2 \log (x(w)+1) - \alpha  d x(w) \log (x(w)+1), \\
  =& x'(w) ((x(w) + 1)(\alpha + (1 + x(w))^w) - \alpha d w x(w))  \\
  & \quad -\alpha d x(w)^2 \log (x(w)+1) - \alpha  d x(w) \log (x(w)+1), \\
  =& x'(w) \cdot \frac{\delta}{1 - \delta} \alpha d w x(w) -\alpha d x(w)^2 \log (x(w)+1) - \alpha  d x(w) \log (x(w)+1),
\end{align*}
where the last equation comes from the fact that $T_\delta(x) = 0$.
Plugging in $x'(w)$ as in \eqref{eq:d-xi}, and note that the denominator of $x'(w)$ is $\partial_x T_\delta(x, w) = T_\delta'(x(w))$,
\begin{align*}
  x&(w)  (x(w)+1) \left(\alpha +(x(w)+1)^w\right) \cdot \frac{(1 - \delta) T_\delta'(x(w))}{\alpha d x(w)} \cdot \frac{\partial \log \lambda}{\partial w}  \\
  = & (1 - \delta)\log(1 + x(w))(1 + x(w))\tp{\alpha  (d w+\delta -1)+(\delta -w-1) (x(w)+1)^w} + \delta \cdot \alpha d w x(w).
\end{align*}
Note that
\begin{align*}
  - &(1 + x(w))\tp{\alpha  (d w+\delta -1)+(\delta -w-1) (x(w)+1)^w} \\
  &= (w + 1 - \delta)(1 + x(w))^{w+1} - \alpha(d w  - (1 - \delta)) (1 + x(w)) \\
  &= w((1 + x(w))^{w+1} - \alpha d) + (1 - \delta) (1 + x(w))(\alpha + (1 + x(w))^w) - \alpha d w x(w) \\
  &= w((1 + x(w))^{w+1} - \alpha d),
\end{align*}
where the last equation comes from the fact that $T_{\delta}(x(w)) = 0$.
Hence we have
\begin{align}
  \nonumber x(w) & (x(w)+1) \left(\alpha +(x(w)+1)^w\right) \cdot \frac{T_\delta'(x(w))}{\alpha d x(w)} \cdot \frac{\partial \log \lambda}{\partial w}  \\
 \nonumber &= -w\log(1 + x(w)) ((1 + x(w))^{w+1} - \alpha d) + \frac{\delta}{1 - \delta} \alpha d w x(w) \\
 \label{eq:is-M} &= w\log(1 + x(w)) (\alpha d - (1 + x(w))^{w+1}) + \delta (x(w) + 1)(\alpha + (x(w) + 1)^w).
\end{align}
where the last equation comes from the fact that $T_\delta(x(w)) = 0$.
We finish the proof by noticing that \eqref{eq:is-M} is exactly $M_\delta(x(w))$, where we recall the definition of $M_\delta$ in \Cref{thm:resolve-lambda-2-c}.

\subsubsection{Monotonicity of $x^\delta_i(w)$} \label{sec:sign-d-xi}
In this section, we prove \Cref{lem:sign-d-xi}.
Fix $i \in \{1, 2\}$ arbitrarily and let $x(w) := x_i^\delta(w)$.
Recall that $x(w)$ is well-defined as long as $w >w_{\delta}$.
Recalling \eqref{eq:d-xi}, we know that the denominator of $x'(w)$ is $T_\delta'(x(w))$.
The numerator is
\begin{align*}
  \alpha  d x(w)-(1-\delta ) (x(w)+1)^{w+1} \log (x(w)+1).
\end{align*}
Recall the definition of $M_\delta$ at \Cref{thm:resolve-lambda-2-c}.
The proof of \Cref{lem:sign-d-xi} can be finished by the following observation.

\begin{observation}
  Fix any $w > w_{\delta}$ such that $M_\delta(x^\delta_i(w)) \geq 0$. If we have
  \begin{align*}
     w\log(1 + x(w)) (\alpha d - (1 + x(w))^{w+1}) + \delta (x(w) + 1)(\alpha + (x(w) + 1)^w) \geq 0,
  \end{align*}
  then it holds that 
  \begin{align*}
    \alpha  d x(w)-(1-\delta ) (x(w)+1)^{w+1} \log (x(w)+1) > 0.
  \end{align*}
\end{observation}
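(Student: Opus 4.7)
The plan is to reduce the claim to the elementary inequality $\log(1+x) < x$ for $x > 0$, using the two constraints we have on $x = x^\delta_i(w)$: the defining equation $T_\delta(x) = 0$, which is available because $x$ is by definition a positive root of $T_\delta$, and the hypothesis $M_\delta(x) \geq 0$.

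First, I would rewrite the hypothesis. From $T_\delta(x) = 0$ we obtain
\[
(x+1)(\alpha + (1+x)^w) = \frac{\alpha d w x}{1-\delta}.
\]
Substituting this identity into the inequality $M_\delta(x) \geq 0$ eliminates the last term of $M_\delta$, after which I would divide through by $w > 0$ and multiply by $1 - \delta > 0$ to obtain the cleaner form
\[
(1-\delta)\log(1+x)\bigl(\alpha d - (1+x)^{w+1}\bigr) + \delta\, \alpha d\, x \;\geq\; 0,
\]
which I then rearrange as
\[
(1-\delta)(1+x)^{w+1}\log(1+x) \;\leq\; (1-\delta)\, \alpha d \log(1+x) + \delta\, \alpha d\, x.
\]

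Second, since $x = x^\delta_i(w) > 0$, the elementary bound $\log(1+x) < x$ holds strictly. Applying it to the first term of the right-hand side yields
\[
(1-\delta)\, \alpha d \log(1+x) + \delta\, \alpha d\, x \;<\; (1-\delta)\, \alpha d\, x + \delta\, \alpha d\, x \;=\; \alpha d\, x.
\]
Chaining the two inequalities gives $(1-\delta)(1+x)^{w+1}\log(1+x) < \alpha d\, x$, which is precisely the desired conclusion
\[
\alpha d\, x(w) - (1-\delta)(x(w)+1)^{w+1}\log(x(w)+1) \;>\; 0.
\]

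There is essentially no hard step here: the entire observation is a two-line algebraic manipulation once one notices that $T_\delta(x) = 0$ lets us trade $(x+1)(\alpha + (1+x)^w)$ for $\alpha d w x/(1-\delta)$, collapsing $M_\delta(x)$ into an expression involving only $\log(1+x)$, $(1+x)^{w+1}$ and $x$, and the only inequality used beyond the hypothesis is the strict bound $\log(1+x) < x$. The mild subtlety worth double-checking is that $1 - \delta > 0$, $w > 0$, and $x > 0$, so that all the divisions and the appeal to strictness of $\log(1+x) < x$ are legitimate, but each of these follows from the standing assumptions $\delta \in [0,1)$, $w > w_\delta > 0$, and the fact that $x = x^\delta_i(w)$ is by definition a strictly positive root of $T_\delta$.
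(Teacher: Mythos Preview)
Your proof is correct and follows essentially the same approach as the paper: both substitute the identity $(x+1)(\alpha+(1+x)^w)=\alpha d w x/(1-\delta)$ coming from $T_\delta(x)=0$ into the hypothesis $M_\delta(x)\ge 0$, and then finish with the strict bound $\log(1+x)<x$ for $x>0$. The paper organizes the algebra slightly differently (it subtracts the two relations and invokes $T_\delta=0$ a second time), but the ingredients and logic are identical.
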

\begin{proof}
  According to the assumption, we have
  \begin{align} \label{eq:M-ge-0}
    w\log(1 + x(w)) & (1 + x(w))^{w+1} \leq \alpha d w\log(1 + x(w)) + \delta (x(w) + 1)(\alpha + (x(w) + 1)^w).
  \end{align}
  Recall the definition of $T_\delta$ at \Cref{eq:function-T-delta}, $T_\delta(x(w)) = 0$ implies that
  \begin{align} \label{eq:T-eq-0}
    \frac{\alpha d w x(w)}{1 - \delta} &= (1 + x(w))(\alpha + (1 + x(w))^w).
  \end{align}
  Hence, by $\eqref{eq:T-eq-0} - \eqref{eq:M-ge-0}$ at both side, we have
  \begin{align*}
    \frac{\alpha d w x(w)}{1 - \delta} & - w\log(1 + x(w)) (1 + x(w))^{w+1} \\
    \geq& (1 + x(w))(\alpha + (1 + x(w))^w)  - \alpha d w\log(1 + x(w)) - \delta (x(w) + 1)(\alpha + (x(w) + 1)^w) \\
    =& (1 - \delta)(1 + x(w))(\alpha + (1 + x(w))^w) - \alpha d w\log(1 + x(w)) \\
    \overset{(\star)}{=}& \alpha d w x(w) - \alpha d w \log(1 + x(w)) > 0,
  \end{align*}
  where $(\star)$ holds by $T_\delta(x(w)) = 0$ and in the last inequality, we use the fact that $x(w) > 0$ for finite $w$ (i.e. $T_\delta(0) > 0$) and $x > \log(1 + x)$ for $x > 0$.
\end{proof}

\subsubsection{Monotonicity of $x^\delta_M(w)$} \label{sec:sign-d-xm}
In this section, we prove \Cref{lem:sign-d-xm}.
Recall that we have
\begin{align*}
  M_\delta(x) &= w \log(1 + x) (\alpha d - (1 + x)^{w+1}) + \delta (x + 1) (\alpha + (x + 1)^w).
\end{align*}
And it holds that
\begin{align*}
  M'_\delta(x) &= \frac{1}{x+1} \cdot \left(w \left(\alpha  d+(\delta -1)(x+1)^{w+1}\right) \right. \\
  & \hspace{2cm} \left. +\delta  (x+1) \left(\alpha +(x+1)^w\right)-w (w+1) (x+1)^{w+1} \log (x+1)\right) \\
  M''_\delta(x) &= -\frac{1}{(x+1)^2} \left(w \left(\alpha  d + (1 -\delta) (x+1)^{w+1} \right. \right.\\
  & \hspace{2cm} \left. \left. +(2 - \delta ) w (x+1)^{w+1}+ w (w+1) (x+1)^{w+1} \log (x+1)\right) \right),
\end{align*}
which means $M''_\delta(x) < 0$ for all $x > 0$.
Moreover, we have
\begin{align} \label{eq:M-delta-0-inf}
  \lim_{x \to 0} M_\delta(x) &= (\alpha +1) \delta > 0 \quad \text{and} \quad \lim_{x \to +\infty} M_\delta(x) = -\infty.
\end{align}
Hence the equation $M_\delta(x) = 0$ has a unique positive solution $x^\delta_m(w)$, with $M_\delta(x) > 0$ for $x < x^\delta_m(w)$ and $M_\delta(x) < 0$ for $x > x^\delta_m(w)$.

Now, we are only left to show $\frac{\partial x^\delta_m(w)}{\partial w} < 0$.
For simplicity, let $x(w) := x^\delta_m(w)$.
In this case, we treat $M_\delta(x(w), w)$ as a function on both $x(w)$ and $w$ as follows:
\begin{align*}
	M_\delta(x, w) := w \log(1 + x) (\alpha d - (1 + x)^{w+1}) + \delta (x + 1) (\alpha + (x + 1)^w).
\end{align*}
Taking derivatives on both sides of $M_\delta(x(w), w) = 0$, we have $\partial_x M_\delta(x, w) \cdot x'(w) + \partial_w M_\delta(x, w) = 0$, where $x = x(w)$.
%{\color{red} FIXME: write partial derivative instead, one with respect to $w$, and the other w.r.t $x$}
This means
\begin{align*}
  x'(w) &= -\frac{\partial_w M_\delta(x, w)}{\partial_x M_\delta(x, w)} = \frac{-\partial_wM_\delta(x, w)}{M_\delta'(x(w))}.
\end{align*}
%\begin{align*}
%  &\frac{w x'(w) \left(\alpha  d-(x(w)+1)^{w+1}\right)}{x(w)+1} \\
%  & +\log (x(w)+1) \left(\alpha  d-(x(w)+1)^{w+1}\right) \\
%  & +\delta  \left(\alpha +(x(w)+1)^w\right) x'(w) \\
%  & +\delta  (x(w)+1)^{w+1} \left(\frac{w x'(w)}{x(w)+1}+\log (x(w)+1)\right) \\
%  & -w (x(w)+1)^{w+1} \log (x(w)+1) \left(\frac{(w+1) x'(w)}{x(w)+1}+\log (x(w)+1)\right)=0.
%\end{align*}
%This means $x'(w) = \-{numerator}/\-{denominator}$, where,
%\begin{align*}
%  \-{numerator} &= \log (x(w)+1) \left(-\alpha  d-\delta  (x(w)+1)^{w+1}+(x(w)+1)^{w+1} \right.\\
%  & \hspace{5cm} \left. +w (x(w)+1)^{w+1} \log (x(w)+1)\right), \\
%  \-{denominator} &= \frac{w \left(\alpha  d-(x(w)+1)^{w+1}\right)}{x(w)+1}+\delta  \left(\alpha +(x(w)+1)^w\right) \\
%  & \hspace{1cm} +\delta  w (x(w)+1)^w-w (w+1) (x(w)+1)^w \log (x(w)+1). 
%  %&\frac{\log (x(w)+1) \left(-\alpha  d-\delta  (x(w)+1)^{w+1}+(x(w)+1)^{w+1}+w (x(w)+1)^{w+1} \log (x(w)+1)\right)}{\frac{w \left(\alpha  d-(x(w)+1)^{w+1}\right)}{x(w)+1}+\delta  \left(\alpha +(x(w)+1)^w\right)+\delta  w (x(w)+1)^w-w (w+1) (x(w)+1)^w \log (x(w)+1)}.
%\end{align*}

Recall the definition of $M_\delta$ at \Cref{thm:resolve-lambda-2-c}.
The numerator, which is $-\partial_wM_\delta(x, w)$, of $x'(w)$ is
\begin{align*}
  - \partial_wM_\delta(x, w) &= \log (x(w)+1) \left(-\alpha  d-\delta  (x(w)+1)^{w+1}+(x(w)+1)^{w+1} \right.\\
  & \hspace{5cm} \left. +w (x(w)+1)^{w+1} \log (x(w)+1)\right), 
\end{align*}
The sign of $- \partial_w M_\delta(x, w)$ is given by,
\begin{align*}
  - &\alpha  d + (x(w)+1)^{w+1} -\delta  (x(w)+1)^{w+1}+w (x(w)+1)^{w+1} \log (x(w)+1) \\
    &= \frac{\delta (x(w) + 1) (\alpha + (x(w) + 1)^w)}{w \log (x(w) + 1)} -\delta  (x(w)+1)^{w+1} \\
  & \hspace{2cm} +w (x(w)+1)^{w+1} \log (x(w)+1) \\
  &> \frac{\delta (x(w) + 1)^{w+1}}{w \log (x(w) + 1)} -\delta  (x(w)+1)^{w+1}+w (x(w)+1)^{w+1} \log (x(w)+1) \\
  &\overset{(\star)}{\geq} 2 \sqrt{\delta} (x(w) + 1)^{w+1} - \delta (x(w) + 1)^{w+1} \geq 0,
\end{align*}
where the first equation holds by $M_\delta(x(w)) = 0$ (defined in \Cref{thm:delta-unique}), $(\star)$ holds by AM-GM inequality, and in the last inequality, we use the fact that $\delta \in (0, 1)$.

Note that the denominator of $x'(w)$ is actually $M_\delta'(x(w))$.
According to \eqref{eq:M-delta-0-inf}, we have $M_\delta''(x) < 0$ for all $x > 0$ and $\lim_{x\to 0} M_\delta(x) > 0$.
By the mean value theorem, there is $\zeta \in (0, x(w))$ such that
\begin{align*}
  M'_\delta(\zeta) = \frac{M_\delta(x(w)) - M_\delta(0)}{x(w) - 0} < 0.
\end{align*}
Now, since $x(w) > \zeta$ and $M''_\delta(x) < 0, \forall x > 0$, it holds that $M'_\delta(x(w)) < 0$.

\section{Spectral independence and correlation decay analysis}
\label{sec:SI}
In this section, we prove \Cref{thm:SI}.
Let $\delta \in (0, 1)$, $\lambda, \alpha > 0$ be real numbers, and $d\ge 1$ be an integer.
%Let $G = ((L, R), E)$  be a bipartite graph with maximum degree at most $\Delta = d + 1$ on $L$.
Let $\mu$ be the hardcore distribution on $G= ((L, R), E)$ with maximum degree at most $\Delta = d + 1$ and fugacity $\lambda$ on $L$, and fugacity $\alpha$ on $R$. Let $\nu = \mu_L$ be the projection of $\mu$ on $L$.

Throughout the section, we assume that $(\lambda, d, \alpha)$ is $\delta$-unique,
and we are going to show that $\nu$ is $O(\frac{1}{\delta})$-spectrally independent.

%Note that \Cref{thm:SI} can be implied directly by the following result.
%
%\begin{lemma} \label{lem:total-inf}
%  For $\Lambda \subseteq L$ and $\sigma \in \Omega(\nu_\Lambda)$, it holds that
%  \begin{align*}
%    \lambda_{\max}\tp{\Psi_{\nu^\sigma}} \leq \norm{\Psi_{\nu^\sigma}}_{\infty} \leq O(1/\delta).
%  \end{align*}
%\end{lemma}
%
%Fix $\Lambda \subseteq L$ and $\sigma \in \Omega(\nu_\Lambda)$, in the rest part of this section, we prove \Cref{lem:total-inf}.

Let $\*w = (w_1, w_2, \cdots, w_d) \in \^Z^d_{\geq 0}$. Let $F_{\*w}: [0, +\infty]^{\sum_{k} w_k} \to \^R$ be the two-step tree-recursion for marginal ratios in the hardcore model (as deduced in \Cref{sec:uniqueness}), formally defined by
\begin{align*}
  F_{\*w}(\*x) := \lambda \prod_{k=1}^d\tp{1 + \alpha \prod_{h=1}^{w_k}(1 + x_{kh})^{-1}}^{-1}.
\end{align*}
Let $\psi: \^R_{\geq 0} \to \^R_{> 0}$ be integrable.
Let $H_{\*w}^\psi: [0, +\infty]^{\sum_k w_k} \to \^R$ be defined by
\begin{align}
%  H_{\*w}(\*y)=
  H_{\*w}^\psi(\*x)
  \nonumber &:= \sum_{k=1}^d \sum_{h=1}^{w_k} \frac{\psi(F_{\*w}(\*x))}{\psi(x_{kh})} \cdot \abs{\frac{\partial (\log \circ F_{\*w} \circ \exp) (\*y)}{\partial y_{kh}}} & \text{(where $y_{kh} = \ln x_{kh}$)}\\
  \label{eq:H-psi} &= \sum_{k = 1}^d \sum_{h = 1}^{w_k} \frac{\psi(F_{\*w}(\*x))}{\psi(x_{kh})} \frac{\alpha \prod_{j=1}^{w_k}(1 + x_{kj})^{-1}}{1 + \alpha \prod_{j=1}^{w_k}(1 + x_{kj})^{-1}} \frac{x_{kh}}{1 + x_{kh}}.
\end{align}

%Intuitively, the maximum value of $H_{\*w}^\psi=\|\nabla (\log \circ \Phi^{-1}\circ F\circ \Phi \circ \exp)\|_1$ upper bounds the decay of errors in $\infty$-norm,
Intuitively, the maximum value of $H_{\*w}^\psi=\norm{\nabla (\Phi \circ \log \circ F_{\*w} \circ \exp \circ \Phi^{-1})}_1$ upper bounds the decay of errors in $\infty$-norm,
%after a change of variables $y=\Phi(x)$ applied to the original tree-recursion $F_{\*w}$ for some $\Phi(x)$.
after a change of variables $z=\Phi(\log(x))$ applied to the original tree-recursion $F_{\*w}$ for some $\Phi$.
Here, $\Phi:\mathbb{R}\to \mathbb{R}$ is a monotone (invertible) and differentiable potential function satisfying $\Phi'(\log x) = \psi(x)$. %$(\Phi(\mathrm{e}^x))' = \psi(\mathrm{e}^x)$.
In the rest of section, we will be working with its derivative, $\psi$, rather than $\Phi$.
Hence, we are going to call $\psi$  the \emph{potential function} from now on.

We have the following abstract lemma for establishing spectral independence.
\begin{lemma} \label{lem:a-c-SI}
  For $\beta \in (0, 1)$ and $c > 0$,
  if there is a function $\psi: \^R_{\geq 0} \to \^R_{> 0}$ such that
  \begin{enumerate}
  \item (contraction) for any $1 \leq d_v \leq d$, $\*w \in \^Z^{d_v}_{\geq 0}$, and $\*x \in [\lambda(1 + \alpha)^{-d}, \lambda]^{\sum_k w_k}$,
  \[
  H^\psi_{\*w}(\*x) \leq 1 - \beta;
  \]
  \item (boundedness) for any $a, b \in [\lambda(1 + \alpha)^{-d}, \lambda]$,  $1 \leq d_r \leq \Delta$,  $\*w \in \^Z^{d_r}_{\geq 0}$, and  $\*x \in [\lambda(1 + \alpha)^{-d}, \lambda]^{\sum_k w_k}$,
    \[
    \frac{\psi(a)}{\psi(b)} \cdot H^\psi_{\*w}(\*x) \leq c \cdot (1 - \beta);
    \]
  \end{enumerate}
  then it holds that $\nu$ is $\frac{c}{\beta}$-spectrally independent.
\end{lemma}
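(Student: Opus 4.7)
\textbf{Proof plan for Lemma~\ref{lem:a-c-SI}.} Fix a pinning $\tau \in \Omega(\nu_\Lambda)$ on $\Lambda \subseteq L$ with $|\Lambda| \le |L| - 2$. Since the spectral radius of any matrix is bounded by its induced $\ell_\infty$-norm, it suffices to show that for every $u \in L \setminus \Lambda$,
\[
\sum_{v \in L \setminus \Lambda} \abs{\Psi_{\nu^\tau}(u,v)} \;\le\; \frac{c}{\beta}.
\]
The plan is to follow the route of Chen--Liu--Vigoda~\cite{chen2020rapid}, but applied to the one-sided distribution $\nu = \mu_L$ rather than $\mu$ itself. In particular, I would build a bipartite self-avoiding walk (SAW) tree $T_u$ rooted at $u$, where each $L$-vertex $x$ branches through each of its unpinned $R$-neighbors (contributing the outer $\prod_{k=1}^{d_x}$ in the tree recursion), and each $R$-vertex then branches through its remaining $L$-neighbors (contributing the inner $\prod_{h=1}^{w_k}$). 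The two-level structure of $F_{\*w}$ is exactly the step of this SAW tree descending one level on the $L$-side, with the degrees $(d,\*w)$ reflecting the actual local degrees at $u$ and its $R$-neighbors; this is why condition~(2) of the lemma is stated over all $1 \le d_r \le \Delta$ and arbitrary $\*w$. By a telescoping argument along the paths of $T_u$, each influence $\Psi_{\nu^\tau}(u,v)$ can be expressed as a sum, over occurrences of $v$ as an $L$-vertex in $T_u$, of products of partial derivatives of $F_{\*w}$ along the root-to-leaf path.

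The next step is to introduce the potential rescaling. Setting $\Phi(y)$ to be any antiderivative of $\psi(\e^y)$ and changing variables to $z = \Phi(\log R)$ along the tree, the $\psi$-rescaled sum of partial-derivative magnitudes at one level of the $L$-tree is exactly $H_{\*w}^\psi$ as defined in \eqref{eq:H-psi}. The contraction hypothesis (1) then yields, by induction on the depth, that the sum of $\psi$-rescaled absolute influences from $u$ to its $L$-descendants at level $2k$ in $T_u$ is at most $(1-\beta)^{k-1}$; here I would use the classical fact that every marginal ratio $R_x$ of an $L$-vertex in the hardcore model with $L$-degree bound $\Delta = d+1$ lies in $[\lambda(1+\alpha)^{-d},\,\lambda]$, which is precisely the range over which conditions~(1)--(2) are assumed. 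Summing the geometric series over $k \ge 1$ produces a $\psi$-rescaled total row-sum bound of $1/\beta$, and the boundedness condition~(2) converts this back to the unrescaled row sum with an extra multiplicative factor of $c$, giving the desired $c/\beta$ bound.

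The main obstacle, I expect, will be rigorously justifying the SAW-tree construction for the projected distribution $\nu = \mu_L$ rather than the full $\mu$: the classical Weitz construction has to be adapted so that $R$-vertices are effectively ``integrated out'', producing the two-step recursion $F_{\*w}$ at each bipartite expansion step and ensuring that the effective branching numbers $(d,\*w)$ encountered in the tree are always within the range for which the hypotheses of the lemma apply. A related subtlety is that, because we only have a degree bound on one side, the $R$-side branching numbers $w_k$ can be arbitrarily large (up to the actual degree of the corresponding $R$-vertex in $G$), so the contraction and boundedness hypotheses genuinely need to hold for all $\*w \in \^Z_{\ge 0}^{d_v}$ with no uniform bound on the coordinates; this is consistent with how the two hypotheses are stated and why the restriction to marginal ratios in $[\lambda(1+\alpha)^{-d},\lambda]$ is crucial, since this interval does not depend on $W$. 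Once the SAW expansion and the rescaled induction are in place, the remaining computation is a direct application of conditions (1) and (2) plus the geometric-series summation.
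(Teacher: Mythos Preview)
Your plan is correct and follows essentially the same route as the paper's proof. One simplification worth noting: the ``main obstacle'' you anticipate---adapting the SAW construction to the projected distribution $\nu = \mu_L$---is a non-issue, because for $i,j \in L$ one has $\Psi_{\nu^\tau}(i,j) = \Psi_{\mu^\tau}(i,j)$ directly, so the paper simply invokes the standard Weitz SAW tree for the full hardcore measure $\mu$ (as a black box from~\cite{chen2020rapid}) and then restricts the row-sum to even-depth vertices, which are exactly those mapping into $L$. The only other nuance is that the boundedness hypothesis~(2) is used specifically at the root step, where it simultaneously absorbs the $\psi(a)/\psi(b)$ ratio \emph{and} the fact that the root can have branching number up to $\Delta$ (rather than $\Delta-1$ as for internal $L$-vertices); your description treats these as separate, but the paper packages them into a single application of~(2).
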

\Cref{lem:a-c-SI} can be proved by 
following the same strategy developed in \cite{chen2020rapid} for proving a similar abstract result. 
We did not find a ``black-box'' application of their result to prove \Cref{lem:a-c-SI}. 
Therefore, we include a formal proof of \Cref{lem:a-c-SI} in \Cref{sec:a-c-SI} for the completeness.

In the rest of the section, we assume the following concrete choice of potential function: 
\begin{align}\label{eq:potential-function-psi}
\psi(x) := \frac{x}{(x +1)\log (x + 1)}.
\end{align}

\begin{lemma}[contraction] \label{lem:contract}
  %Let $\psi(x) := \frac{x}{(x + 1)\log (x + 1)}$ be the potential funcition.
Let $\delta \in (0, 1)$ and $d = \Delta - 1\ge 1$ be an integer.  
If $(\lambda, d, \alpha)$ is $\delta$-unique, then for any $1 \leq d_v \leq \Delta - 1$, $\*w \in \^Z_{\geq 0}^{d_v}$ and $\*x \in [\lambda(1 + \alpha)^{-d}, \lambda]^{\sum_{k} w_k}$, it holds that $H^\psi_{\*w}(\*x) \leq 1 - \delta$.
\end{lemma}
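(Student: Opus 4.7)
The plan is to reduce the multivariate contraction to the one-dimensional fixpoint contraction provided by $\delta$-uniqueness. With $\psi(x) = x/((1+x)\log(1+x))$, each product $\frac{1}{\psi(x_{kh})}\cdot\frac{x_{kh}}{1+x_{kh}}$ telescopes to $\log(1+x_{kh})$, so \eqref{eq:H-psi} collapses to
\[
  H^\psi_{\*w}(\*x) \;=\; \psi(R)\sum_{k=1}^{d_v}\frac{\alpha\log s_k}{s_k+\alpha},\qquad s_k := \prod_{h=1}^{w_k}(1+x_{kh})\ge 1,
\]
where $R := F_{\*w}(\*x) = \lambda\prod_{k}s_k/(s_k+\alpha)$. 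In particular $H^\psi_{\*w}$ depends on $\*x$ only through the products $s_k$. Padding $\*w$ by additional $0$-entries leaves the sum unchanged but shrinks $R$ by a factor $(1+\alpha)^{-(d-d_v)}$; since a direct computation shows $\psi$ is strictly decreasing on $\mathbb{R}_{>0}$, this only increases $H^\psi$. Hence it suffices to treat the case $d_v = d$.

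Next, substitute $t_k := \log s_k \ge 0$, $g(t) := -\log(1+\alpha e^{-t})$, and $h(t) := \alpha t/(e^t+\alpha)$, so that $R = \lambda\,e^{\sum_k g(t_k)}$ and $H^\psi_{\*w}(\*x) = \psi(R)\sum_{k=1}^d h(t_k)$. Define $\hat t := g^{-1}\!\bigl(\tfrac1d\sum_k g(t_k)\bigr)\ge 0$, so that automatically $R = \lambda(1+\alpha e^{-\hat t})^{-d}$. The crux of the argument is the concavity of $h\circ g^{-1}$ on its domain $u\in[-\log(1+\alpha),0]$. Differentiating twice with the substitution $z := \alpha/(e^{-u}-1) = s \ge 1$, one finds that the sign of the second derivative reduces to the inequality $\alpha\log z + z + \alpha \ge 0$, which is immediate for $z\ge 1$. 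Jensen's inequality then yields $\sum_k h(t_k)\le d\cdot h(\hat t)$, so that $H^\psi_{\*w}(\*x)\le \psi(R)\cdot d\cdot h(\hat t)$.

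To close the loop, set $w^* := \hat t/\log(1+R) > 0$. By construction, $\hat x^* := R$ is a positive fixpoint of the univariate map $F(x) := \lambda(1+\alpha(1+x)^{-w^*})^{-d}$, and substituting this into \eqref{eq:F(x)-derivative} yields the clean identity
\[
  F'(\hat x^*) \;=\; \psi(R)\cdot d\cdot h(\hat t).
\]
Applying the hypothesis that $(\lambda,d)$ is $\delta$-unique at the positive real parameter $w = w^*$ gives $F'(\hat x^*)\le 1-\delta$, and the claimed contraction $H^\psi_{\*w}(\*x)\le 1-\delta$ follows.

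The main obstacle is the concavity of $h\circ g^{-1}$: it is precisely the inequality that allows the multivariate sum to be collapsed into its ``symmetric'' one-variable counterpart where the univariate fixpoint analysis can be invoked. Unlike traditional correlation-decay analyses that compare the contraction rate at arbitrary $\*x$ to its value at a single critical fixpoint, here one must identify a different $\delta$-unique fixpoint (at the parameter $w^* = w^*(\*x)$) for each input $\*x$, which is exactly why the careful study of all fixpoints of the univariate recursion in Section~\ref{sec:delta-unique} was essential.
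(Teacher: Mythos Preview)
Your proof is correct and takes a genuinely more direct route than the paper's. Both arguments share the opening moves: the telescoping simplification that collapses $H^\psi_{\*w}(\*x)$ to $\psi(R)\sum_k \alpha\log s_k/(s_k+\alpha)$, the padding to $d_v=d$ (equivalent to the paper's observation that $U(z)$ is increasing in $d$), and the Jensen step based on concavity of $h\circ g^{-1}$ (this is exactly the paper's \Cref{lem:sym-s}; your second-derivative computation reducing it to $\alpha\log z + z + \alpha\ge 0$ is in fact cleaner than the paper's argument via $R'''>0$).

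The divergence is in what happens after symmetrization. The paper bounds by $U_{\lambda,d,\alpha}(z)$, rewrites this as the univariate $H_{\lambda,d,\alpha,w}(x)$, and invokes \Cref{thm:contract-sym}. Proving that theorem requires first reducing via monotonicity in $\lambda,\alpha$ to parameters satisfying the critical \Cref{cond:critical}, and then carrying out the first-order analysis of \Cref{lem:contract-critical}: the decomposition $H'=c_1((1-\delta)-H)+c_2 B_\delta$ and the sign study of $B_\delta$. Your approach observes instead that the symmetrized bound $\psi(R)\cdot d\cdot h(\hat t)$ is \emph{identically} $F'(R)$ at the fixpoint $R$ of the univariate recursion with real parameter $w^*=\hat t/\log(1+R)$, so the $\delta$-uniqueness hypothesis (which by \Cref{def:lambda-d-alpha-unique} ranges over all real $w>0$) applies directly. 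This bypasses all of \S\ref{sec:critical-contraction}.

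That said, the paper's extra work is not wasted elsewhere: the ``moreover'' part of \Cref{lem:contract-critical}---that $H(x_c)=1-\delta$ exactly at the critical point---is reused in \S\ref{sec:delta-unique-eq-7} (notably \Cref{lem:delta-unique-monotone-eq} and \Cref{lem:alpha-c-eq-lb}) to finish the proof of \Cref{thm:delta-unique-eq}. Your argument does not immediately deliver that tightness. Also, a small correction to your closing remark: your proof uses only the \emph{definition} of $\delta$-uniqueness, not any of the structural fixpoint results from \S\ref{sec:delta-unique}; what is essential for your argument is that $\delta$-uniqueness is quantified over all real $w>0$, so that the data-dependent $w^*(\*x)$ is admissible.
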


\begin{lemma}[boundedness] \label{lem:boundedness}
  Let $\delta \in (0, 1)$ and $d = \Delta - 1\ge 1$ be an integer.
   If $(\lambda, d, \alpha)$ is $\delta$-unique, then for any $a, b \in [\lambda(1 + \alpha)^{-d}, \lambda]$, $1 \leq d_v \leq \Delta$, $\*w \in \^Z_{\geq 0}^{d_v}$ and $x \in [\lambda(1 + \alpha)^{-d}, \lambda]^{\sum_{k} w_k}$, it holds that
  \begin{align*}
    \frac{\psi(a)}{\psi(b)} \cdot H^\psi_{\*w}(\*x) \leq \frac{\Delta}{d} (1 + \alpha)^\Delta \cdot (1 - \delta).
  \end{align*}
\end{lemma}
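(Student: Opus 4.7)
The plan is to split on $d_v$ and, in the harder case, reduce to the contraction lemma by dropping one branch. As a preliminary, I first observe that $\psi(x)=x/((x+1)\log(x+1))$ is monotonically decreasing on $\^R_{>0}$, since its derivative has the same sign as $\log(1+x)-x<0$. I would then prove the following ratio bound: for every $s\ge 1$ and $y>0$,
\begin{align*}
\frac{\psi(y/s)}{\psi(y)} \;=\; \frac{y+1}{y+s}\cdot\frac{\log(1+y)}{\log(1+y/s)} \;\le\; s,
\end{align*}
which reduces to the elementary inequality $(1+y/s)^s\ge 1+y$ for $s\ge 1$ (this in turn follows from the monotonicity in $s$ of $s\log(1+y/s)$, whose derivative is $\log(1+u)-u/(1+u)\ge 0$ at $u=y/s$). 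Specializing to $s=(1+\alpha)^d$ and using monotonicity of $\psi$ immediately yields $\psi(a)/\psi(b)\le\psi(\lambda(1+\alpha)^{-d})/\psi(\lambda)\le(1+\alpha)^d$ for all $a,b\in[\lambda(1+\alpha)^{-d},\lambda]$.

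In the easy case $1\le d_v\le d$, the contraction lemma applies verbatim to give $H^\psi_{\*w}(\*x)\le 1-\delta$, and then the above ratio bound combined with $\frac{\Delta}{d}(1+\alpha)^\Delta\ge (1+\alpha)^d$ finishes the case. The hard case $d_v=\Delta$ is the main obstacle, since the contraction lemma is not directly available there. Here I would write $H^\psi_{\*w}(\*x)=\sum_{k=1}^{\Delta}S_k$ where $S_k\ge 0$ is the $k$-th inner sum of~\eqref{eq:H-psi}, pick $k^\star=\argmin_k S_k$, and use pigeonhole to get $H^\psi_{\*w}(\*x)\le\frac{\Delta}{d}\bigl(H^\psi_{\*w}(\*x)-S_{k^\star}\bigr)$, which is exactly the source of the $\Delta/d$ prefactor in the target bound.

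It remains to control $H^\psi_{\*w}(\*x)-S_{k^\star}$. Let $F:=F_{\*w}(\*x)$ and $F':=\lambda\prod_{k\neq k^\star}(1+\alpha P_k)^{-1}$ be the $(\Delta-1)$-branch recursion obtained by deleting the $k^\star$-th subtree; note $F=F'/(1+\alpha P_{k^\star})$. Factoring $\psi(F)/\psi(F')$ out of every summand of $H^\psi_{\*w}(\*x)-S_{k^\star}$ rewrites it as $\frac{\psi(F)}{\psi(F')}\,\hat H$, where $\hat H$ is exactly the $H^\psi$-quantity for the $(\Delta-1)$-branch recursion evaluated on the retained variables $\{x_{kh}\}_{k\neq k^\star}$ with the same fugacity $\lambda$. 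Since $F'\in[\lambda(1+\alpha)^{-d},\lambda]$ and the retained $x_{kh}$ lie in the same interval, the contraction lemma applies with $d_v=d$ and gives $\hat H\le 1-\delta$. Since $F=F'/Q_{k^\star}$ with $Q_{k^\star}\in[1,1+\alpha]$, the auxiliary ratio bound with $s=1+\alpha$ yields $\psi(F)/\psi(F')\le 1+\alpha$. Assembling the pieces,
\begin{align*}
\frac{\psi(a)}{\psi(b)}\,H^\psi_{\*w}(\*x) \;\le\; \frac{\Delta}{d}\cdot(1+\alpha)^d\cdot(1+\alpha)\cdot(1-\delta) \;=\; \frac{\Delta}{d}(1+\alpha)^\Delta(1-\delta),
\end{align*}
which is the desired bound. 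The two compensations---the factor $\Delta/d$ paid to pigeonhole and the factor $(1+\alpha)$ paid to relate $\psi(F)$ to $\psi(F')$---combine precisely to produce the target constant $\frac{\Delta}{d}(1+\alpha)^\Delta$, which is what makes the bound tight enough for downstream use in the spectral independence argument.
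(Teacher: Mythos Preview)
Your proof is correct and arrives at exactly the same constant, but the route differs from the paper's in the hard case $d_v=\Delta$. The paper does not drop a branch; instead it goes through the symmetrization lemma (\Cref{lem:sym-s}) to reduce $H^\psi_{\*w}(\*x)$ to the univariate $U_{d_v}(z)$, and then compares the symmetrized functions directly:
\[
\frac{U_\Delta(z)}{U_d(z)}=\frac{(1+\lambda z^{-d})\log(1+\lambda z^{-d})}{(1+\lambda z^{-\Delta})\log(1+\lambda z^{-\Delta})}\cdot z^{-1}\cdot\frac{\Delta}{d}\le \frac{\Delta}{d}\,z\le \frac{\Delta}{d}(1+\alpha),
\]
where the first factor is bounded via \Cref{lem:phi-dis}, and then $U_d(z)\le 1-\delta$ by \Cref{thm:contract-sym}. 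Your pigeonhole-and-drop argument replaces this analytic comparison: the $\Delta/d$ factor arises from averaging rather than from the explicit $\Delta/d$ in $U_\Delta/U_d$, and the extra $(1+\alpha)$ arises from the mismatch $\psi(F)/\psi(F')$ rather than from the $z$ in the ratio above. Your $\psi(y/s)/\psi(y)\le s$ bound is exactly the content of the paper's \Cref{cor:psi-by-psi} (which follows from \Cref{lem:phi-dis}). The paper's route has the advantage of reusing the symmetrization machinery already in place; yours is more self-contained and avoids invoking \Cref{lem:sym-s} a second time, at the modest cost of the extra pigeonhole step.
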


\Cref{thm:SI} follows straightforwardly from \Cref{lem:a-c-SI}, \Cref{lem:contract}, and \Cref{lem:boundedness}. 

\Cref{lem:contract} is proved in \Cref{sec:sym} and \Cref{sec:critical-contraction}. 
And \Cref{lem:boundedness} is proved in \Cref{sec:boundedness}.

%\section{Contraction and boundedness under $\delta$-uniqueness}
%\label{sec:contract-bound}
\subsection{Symmetrization}
\label{sec:sym}
%In this section, we will prove \Cref{lem:contract}.
%Moreover, \Cref{lem:boundedness} will be proved as a corollary in \Cref{sec:boundedness}.
%Recall the setting, we choose the potential function $\psi(x) := \frac{x}{(x + 1) \log (x+ 1)}$. 
Recall that we assume the form of the potential function $\psi(x)$ in \eqref{eq:potential-function-psi}. And define 
\[
\phi(x) := \psi(x) / x =\frac{1}{(x + 1) \log (x + 1)}.
\]
The function $H_{\*w}(\*x)=H^{\psi}_w(\*x)$ in \eqref{eq:H-psi} becomes:
%note that for $d \in \^Z_{>0}$, $\lambda, \alpha \in \^R_{> 0}$, and $\*w \in \^Z^d_{\geq 0}$, we could rewrite $H^{\psi}_m$ as
\begin{align}
  H_{\*w}(\*x)
  \label{eq:multi-var-H} &= F(\*x) \sum_{k = 1}^d \sum_{h = 1}^{w_k} \frac{\phi(F(\*x))}{\phi(x_{kh})} \frac{\alpha \prod_{j=1}^{w_k}(1 + x_{kj})^{-1}}{1 + \alpha \prod_{j=1}^{w_k}(1 + x_{kj})^{-1}} \frac{1}{1 + x_{kh}} \\
  \nonumber &= \sum_{k=1}^d \sum_{h=1}^{w_k} \abs{\frac{\partial F(\*x)}{\partial x_{kh}}} \frac{\phi(F(\*x))}{\phi(x_{kh})}.
\end{align}
We note that our choice of $\phi$ is exactly the derivative of the potential function used in~\cite{liu2015fptas}, 
which is proven to be very useful in handling heterogeneous degrees $\*w = (w_1, \cdots, w_d)$ in the second level of recursion such that the $H_{\*w}(\*x)$ can be symmetrized to a univariate function.

We remark that the potential function plays a different role in our proof.  
In many analysis of correlation decay (including \cite{liu2015fptas}), 
%when the parameters are away from the critical threshold (i.e. $\delta$-unique for some $\delta > 0$), 
the potential function is used for amortizing the contraction.
%the contraction is usually established by analyzing $H_{\*w}$ directly at the given parameters, and the maximum of $H_{\*w}$ need not be attained at the fixpoint of tree recursion $F$.
  In contrast, we use the potential function to reduce every $\delta$-unique parameters to a set of  ``exact'' $\delta$-unique parameters (see \Cref{cond:critical}).
  Then, we show that the contraction of $H_{\*w}$ is bounded by the contraction at the fixpoint of the univariate tree recursion $F$ encoded by the ``exact'' $\delta$-unique parameters.
  %The new strategy we used here makes the analysis more intuitive and operable in the sense that most calculations are done exactly at the fixpoint of the univariate tree recursion $F$, which makes them meaningful.
  Our strategy allows for a more meaningful analysis, as most calculations are done exactly at the fixpoint of the univariate tree recursion $F$.

\begin{lemma}[\text{\cite[Claim 4.5]{liu2015fptas}}] \label{lem:sym-s}
  For $\*x \in [0, +\infty]^{\sum_{k=1}^d w_k}$, there is a $z \in [1, 1 + \alpha]$, such that 
  \[
  H_{\*w}(\*x)\le U(z),
  \]
  where
  \begin{align*}
  U(z) = U_{\lambda, d, \alpha}(z) := \frac{\lambda z^{-d}}{(1 + \lambda z^{-d}) \log (1 + \lambda z^{-d})} \cdot d \cdot \frac{z - 1}{z} \log \tp{\frac{\alpha}{z - 1}}.
  \end{align*}
\end{lemma}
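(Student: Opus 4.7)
The plan is to symmetrize $H_{\*w}(\*x)$ by collapsing each of the $d$ groups of $w_k$ variables $(x_{k1},\dots,x_{kw_k})$ into a single aggregate variable, and then to collapse the $d$ aggregates into a single scalar $z$. Specifically, first I would perform the substitution
\[
z_k := 1 + \alpha \prod_{j=1}^{w_k} (1+x_{kj})^{-1} \in [1, 1+\alpha],
\]
so that $F(\*x) = \lambda \prod_{k=1}^{d} z_k^{-1}$. The key algebraic observation is that, with the potential $\phi(x) = 1/((x+1)\log(x+1))$, one has $\frac{1}{\phi(x_{kh})(1+x_{kh})} = \log(1+x_{kh})$, so that $\sum_{h=1}^{w_k} \frac{1}{\phi(x_{kh})(1+x_{kh})} = \log \prod_h (1+x_{kh}) = \log\!\tfrac{\alpha}{z_k-1}$. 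Plugging this into the formula \eqref{eq:multi-var-H} for $H_{\*w}(\*x)$ rewrites it as
\[
H_{\*w}(\*x) \;=\; F(\*x)\,\phi(F(\*x)) \cdot \sum_{k=1}^{d} g(z_k), \qquad \text{where } g(z) := \frac{z-1}{z}\log\!\frac{\alpha}{z-1}.
\]

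Next I would pick $z$ to be the geometric mean $z := \bigl(\prod_{k=1}^{d} z_k\bigr)^{1/d} \in [1,1+\alpha]$, so that $F(\*x) = \lambda z^{-d}$, which makes the prefactor $F(\*x)\phi(F(\*x))$ coincide exactly with $\frac{\lambda z^{-d}}{(1+\lambda z^{-d})\log(1+\lambda z^{-d})}$ appearing in $U(z)$. The lemma then reduces to the inequality
\[
\sum_{k=1}^{d} g(z_k) \;\le\; d \cdot g(z), \qquad z = \Bigl(\prod_k z_k\Bigr)^{1/d},
\]
which in turn follows from Jensen's inequality applied to the function $h(t) := g(e^t)$ on the interval $t \in [0, \log(1+\alpha)]$, via $t_k = \log z_k$.

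The main obstacle is verifying that $h(t) = g(e^t)$ is concave on $[0,\log(1+\alpha)]$. I would handle this by direct differentiation: using $\frac{d}{dt}\log\frac{\alpha}{e^t-1} = -\frac{e^t}{e^t-1}$, a short computation gives
\[
h''(t) \;=\; -e^{-t}\log\!\frac{\alpha}{e^t-1} \;-\; \frac{1}{e^t-1}.
\]
On the range $t \in [0,\log(1+\alpha)]$ one has $e^t - 1 \le \alpha$, so $\log\!\tfrac{\alpha}{e^t-1}\ge 0$, and hence both terms are $\le 0$, giving $h''(t)\le 0$ as required. Combining the concavity-based bound with the prefactor rewriting yields $H_{\*w}(\*x) \le F(\*x)\phi(F(\*x))\cdot d\cdot g(z) = U(z)$, completing the proof. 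The calculation is essentially the one of \cite[Claim 4.5]{liu2015fptas}, so no new ideas beyond locating the correct substitution and confirming sign of $h''$ on the restricted range are needed.
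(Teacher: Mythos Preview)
Your proposal is correct and follows essentially the same route as the paper: the same substitution $z_k = 1 + \alpha\prod_j (1+x_{kj})^{-1}$, the same geometric mean $z = (\prod_k z_k)^{1/d}$, and the same reduction to Jensen's inequality for $h(t)=g(e^t)$ on $[0,\log(1+\alpha)]$. The only difference is in the concavity verification: you observe directly that both summands of $h''(t) = -e^{-t}\log\tfrac{\alpha}{e^t-1} - \tfrac{1}{e^t-1}$ are nonpositive on the relevant range, whereas the paper instead computes $R'''(x)>0$ and checks the endpoint value $R''(\log(1+\alpha))=-1/\alpha<0$; your argument is slightly more direct but otherwise equivalent.
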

\Cref{lem:sym-s} was proved in \cite[Claim 4.5]{liu2015fptas} for the special case with $\lambda=\alpha=1$.
For completeness, we formally verify that the same proof works for all fugacity.
\begin{proof}[Proof of \Cref{lem:sym-s}]
The function $H_{\*w}(\*x)$ in \eqref{eq:multi-var-H} can be rewritten as:
\begin{align*}
  \frac{F(\*x)}{(1 + F(\*x)) \log (1 + F(\*x))} \sum_{k=1}^d \frac{\alpha \prod_{j=1}^{w_k}(1 + x_{kj})^{-1}}{1 + \alpha \prod_{j=1}^{w_k}(1 + x_{kj})^{-1}} \log\tp{\prod_{h=1}^{w_k} (1 + x_{kh})}.
\end{align*}
For $1\le k\le d$, let $z_k := 1 + \alpha \prod_{j=1}^{w_k} (1 + x_{kj})^{-1}$.
Note that $z_k \in [1, 1 + \alpha]$ and we have
\begin{align*}
  H_{\*w}(\*x) = \frac{\lambda \prod_{k=1}^d z_k^{-1}}{(1 + \lambda \prod_{k=1}^d z_k^{-1})\log(1 + \lambda \prod_{k=1}^d z_k^{-1})} \sum_{k=1}^d \frac{z_k-1}{z_k} \log \tp{\frac{\alpha}{z_k-1}}.
\end{align*}
Let $z := (\prod_{k=1}^d z_k)^{1/d}$, it is sufficient to show that
\begin{align} \label{eq:jensen}
  \sum_{k=1}^d \frac{z_k - 1}{z_k} \log \tp{\frac{\alpha}{z_k - 1}} \leq d \cdot \frac{z - 1}{z} \log \tp{\frac{\alpha}{z - 1}},
\end{align}
which due to Jensen's inequality,  follows from the concavity of the function on $[0, \ln(1 + \alpha)]$:
\begin{align*}
  R(x) := \frac{\e^x - 1}{\e^x} \log \tp{\frac{\alpha}{\e^x - 1}}.
\end{align*}
%is concave for $\e^x \in [1, 1 + \alpha]$, so that \eqref{eq:jensen} is just a Jensen's inequality.
The concavity of $R(x)$ on  $[0, \ln(1 + \alpha)]$ is guaranteed by that $R''(x) = \frac{1}{1 - \e^x} - \e^{-x}\log\tp{\frac{\alpha}{\e^x - 1}}<0$ for $x\in[0, \ln(1 + \alpha)]$, since
$R''(\ln(1+\alpha)) = - 1/\alpha < 0$ and $R'''(x) = \e^{-x} \log \left(\frac{\alpha }{\e^x-1}\right)+\frac{2 \e^x-1}{\left(\e^x-1\right)^2} > 0$.
%\begin{align*}
%  R''(x) &= \frac{1}{1 - \e^x} - \e^{-x}\log\tp{\frac{\alpha}{\e^x - 1}},
%\end{align*}
%and $R''(\ln(1+\alpha)) = - 1/\alpha < 0$.
%Furthermore, we know that 
%\begin{align*}
%  R'''(x) &= \e^{-x} \log \left(\frac{\alpha }{\e^x-1}\right)+\frac{2 \e^x-1}{\left(\e^x-1\right)^2} > 0,
%\end{align*}
%%since we have $\e^x \in [1, 1+\alpha]$, 
%which implies that $R''(x) < 0$ for $x \in [0, \ln(1 + \alpha)]$.
\end{proof}

\begin{remark}
According to \Cref{lem:sym-s},
the supremum of the univariate function $U(z)$ on $[1,1+\alpha]$  upper bounds the contraction. 
However, it is still technically challenging to bound the maximum value of $U(z)$ under the assumption of $\delta$-uniqueness.
Previously, this upper bound is only known for the special case where $\lambda = \alpha = 1$, through numerical experiments for $d\in\{ 1, 2, 3, 4\}$~\cite{liu2015fptas}.
In this work, we are able to go much further by leveraging critical information about fixpoints that we established in \Cref{sec:delta-unique}.
This allows us to give tight upper bounds for $U(z)$ analytically for all $d\ge 1$.
\end{remark}

Recall the univariate tree-recursion $F(x)$ in \eqref{eq:univariate-tree-recursion-F(x)}. Let $w, d \in \^R_{>0}$. 
\[
F(x) := \lambda (1 + \alpha(1 + x)^{-w})^{-d}.
\]
We also define the univariate variant of the function $H_{\*w}(\*x)$ in \eqref{eq:multi-var-H}:
\begin{align}
H(x) = H_{\lambda, d, \alpha, w}(x) &:= \frac{\phi(F(x))}{\phi(x)} F'(x)\notag\\
&= \frac{(1 + x)\log(1 + x)}{(1 + F(x))\log(1 + F(x))} \cdot d w \cdot \frac{\alpha (1 + x)^{-w}}{1 + \alpha (1 + x)^{-w}} \cdot \frac{1}{1 + x} \cdot F(x).\label{eq:univariate-function-H(x)}
\end{align}
In fact, the $H(x)$ and $U(z)$ are equivalent under a change of variables, such that  
%\begin{fact} \label{lem:sym-x}
%  Fix any $w > 0$, it holds that %for all $z \in [1, 1 + \alpha]$, there is a unique $x \in [0, +\infty]$ satisfying $1 + \alpha (1 + x)^{-w} = z$,
  \begin{align}\label{eq:sym-x}
  \forall w > 0, \forall x \ge 0:\quad U_{\lambda, d, \alpha}(z)= H_{\lambda, d, \alpha, w}(x)\quad \text{ for }z=1 + \alpha (1 + x)^{-w}. 
  \end{align}
Note that for any fixed $w > 0$,  the mapping $x \mapsto 1 + \alpha(1 + x)^{-w}$ is a bijection from $[0, +\infty]$ to $[1, 1 + \alpha]$. 
The following is easy to verify:
  \begin{align}\label{eq:sym-x-sup}
  \forall w > 0: \quad \sup_{z\in[1,1 + \alpha]}U_{\lambda, d, \alpha}(z)= \sup_{x\in [0, +\infty]}H_{\lambda, d, \alpha, w}(x). 
  \end{align}
%\end{fact}
%\begin{proof}
%Pick $x \geq 0$ such that $1 + \alpha (1 + x)^{-w} = z$, we have
%\begin{align*}
%  U(z)
%  &= \frac{F(x)}{(1 + F(x))\log(1 + F(x))} \cdot d \cdot \frac{\alpha (1 + x)^{-w}}{1 + \alpha (1 + x)^{-w}} \cdot \log\tp{(1 + x)^w} \\
%  &= \frac{(1 + x)\log(1 + x)}{(1 + F(x))\log(1 + F(x))} \cdot d w \cdot \frac{\alpha (1 + x)^{-w}}{1 + \alpha (1 + x)^{-w}} \cdot \frac{1}{1 + x} \cdot F(x) \\
%  &= \frac{\phi(F(x))}{\phi(x)} \cdot F'(x).
%\end{align*}
%\end{proof}
  Intuitively, the freedom of choosing $w$ in $H_{\lambda, d, \alpha, w}(x)$ provides much more flexibility than $U_{\lambda, d, \alpha}(z)$, in that it allows us to use not only the information at a single fixpoint, but information from a whole family of fixpoints.

\subsection{Contraction}
\label{sec:critical-contraction}
By symmetrization,
the contraction stated in \Cref{lem:contract} is implied by the following lemma.
\begin{theorem}[$\delta$-contraction up to $\delta$-uniqueness] \label{thm:contract-sym}
  Let $\delta \in (0, 1)$. If $(\lambda, d, \alpha)$ is $\delta$-unique, then for $w > 0$, 
  \begin{align*}
   \forall  x \geq 0:\quad  H_{\lambda, d, \alpha, w}(x) \leq 1 - \delta.
  \end{align*}
\end{theorem}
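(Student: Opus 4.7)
The plan is to reduce \Cref{thm:contract-sym} to a statement about fixpoints via the univariate symmetrization, and then use the freedom in the fractional degree parameter $w$ to convert an arbitrary point $x \geq 0$ into an exact fixpoint of a related recursion. First I would invoke \eqref{eq:sym-x}, which gives the exact identity $H_{\lambda,d,\alpha,w}(x) = U_{\lambda,d,\alpha}(z)$ with $z = 1 + \alpha(1+x)^{-w} \in [1,1+\alpha]$. So the whole content of the theorem collapses to the univariate inequality
\[
\sup_{z \in [1,1+\alpha]} U_{\lambda,d,\alpha}(z) \;\leq\; 1-\delta.
\]
Notably, the right-hand side does not depend on $w$ any more, which is what makes the fractional-degree trick below available.

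The key step is to realize every $z \in (1,1+\alpha)$ as the $z$-value of a genuine fixpoint, by freely choosing the second-level degree. Concretely, for $z \in (1,1+\alpha)$ I would set
\[
x^\star := \lambda z^{-d}, \qquad w^\star := \frac{\log(\alpha/(z-1))}{\log(1+x^\star)}.
\]
A direct verification gives $1+\alpha(1+x^\star)^{-w^\star}=z$ and therefore $F_{\lambda,d,\alpha,w^\star}(x^\star) = \lambda z^{-d} = x^\star$, so $x^\star$ is an honest fixpoint of the univariate tree recursion with fractional degree $w^\star \in (0,\infty)$. At this fixpoint the potential ratio $\phi(F(x^\star))/\phi(x^\star)$ equals $1$, so by definition of $H$ in \eqref{eq:univariate-function-H(x)},
\[
H_{\lambda,d,\alpha,w^\star}(x^\star) \;=\; F'_{\lambda,d,\alpha,w^\star}(x^\star).
\]
Now invoking the hypothesis that $(\lambda,d,\alpha)$ is $\delta$-unique, which by \Cref{def:lambda-d-alpha-unique} applies for \emph{every} $w \in \mathbb{R}_{>0}$ (including the fractional value $w^\star$), I get $F'_{\lambda,d,\alpha,w^\star}(x^\star) \leq 1-\delta$. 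Since $U(z)$ depends only on $(\lambda,d,\alpha,z)$, and my surrogate pair $(x^\star,w^\star)$ reproduces the same $z$, this yields $U(z) = H_{\lambda,d,\alpha,w^\star}(x^\star) \leq 1-\delta$. The two boundary cases $z=1$ and $z=1+\alpha$ give $U(z)=0$ trivially, because the factor $(z-1)\log(\alpha/(z-1))$ in $U$ vanishes at both endpoints, so continuity extends the bound to the closed interval $[1,1+\alpha]$, which by \eqref{eq:sym-x} is exactly the claim of \Cref{thm:contract-sym}.

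The main obstacle — and the reason this reduction has not appeared before — is conceptual rather than computational: one has to see that the correct place to exploit $\delta$-uniqueness is not the critical $(0$-$)$fixpoint at $\lambda_c$, but rather the \emph{subcritical} fixpoint dictated by the given $(x,w)$; and that the fractional degree $w^\star$ one obtains from the construction is legitimately allowed by \Cref{def:lambda-d-alpha-unique}. The choice of potential $\psi(x)=x/((1+x)\log(1+x))$ is precisely what forces the symmetrized expression $U(z)$ to depend on $(x,w)$ only through $z$; without this symmetrization into a single $z$-variable, the surrogate construction would leave residual $w$-dependence and the argument would not close. Once the reduction is identified, the remaining verifications (that $w^\star > 0$, that $x^\star$ is indeed a fixpoint, that $U(z)$ is continuous at the endpoints) are all elementary.
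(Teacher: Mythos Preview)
Your argument is correct and takes a genuinely different, much more elementary route than the paper. The paper first reduces to a critical instance $(\lambda_c, d, \alpha', w_c)$ satisfying \Cref{cond:critical} via monotonicity of $U$ in $\lambda$ and $\alpha$, and then invokes \Cref{lem:contract-critical}, whose proof (occupying \Cref{sec:contract-critical} and \Cref{sec:dH}) proceeds through a delicate analysis of $H'(x)$ via the auxiliary function $B_\delta(x)$ and several case distinctions. You bypass all of this machinery: the freedom in the fractional degree lets you realize every $z \in (1,1+\alpha)$ as the $z$-coordinate of an honest fixpoint $(x^\star, w^\star)$, so the contraction bound $U(z)\le 1-\delta$ becomes a direct application of the definition of $\delta$-uniqueness at that fixpoint. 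In effect you establish the clean pointwise identity $U_{\lambda,d,\alpha}(z) = F'_{\lambda,d,\alpha,w^\star(z)}(x^\star(z))$, and hence $\sup_{z}U_{\lambda,d,\alpha}(z) = \sup_{w>0}\,\sup_{\hat{x}:F(\hat{x})=\hat{x}} F'(\hat{x})$, which the paper never isolates. The paper's longer route additionally yields the ``moreover'' clause of \Cref{lem:contract-critical} (that $H(x_c)=1-\delta$ at the critical instance), used downstream in \Cref{sec:delta-unique-eq-7}; but your identity recovers that conclusion as well, since on the critical boundary the right-hand supremum over fixpoints is attained and equals $1-\delta$.
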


%\Cref{thm:contract-sym} is the key step for establishing contraction and is proved in \Cref{sec:critical-contraction}.

%The proof of \Cref{lem:sym-s} and \Cref{lem:sym-x} is given in \Cref{sec:sym}.

%Now, we are ready to prove \Cref{lem:contract}.
\begin{proof}[Proof of \Cref{lem:contract}]
By \Cref{lem:sym-s}, we have $H^\psi_{\*w}(\*x) \le U(z)$ for any $\*w\in\mathbb{Z}_{\ge 0}^{d_v}$. 
  Further note that for any fixed $z \in [1, 1 + \alpha]$, the value of $U(z)$ is increasing in $d$, since
  \begin{align*}
    \frac{\partial \log U}{\partial d} &= \frac{1}{d} + \frac{\log (z) \tp{-z^d + \frac{\lambda}{\log(1 + z^{-d} \lambda)}}}{z^d + \lambda} > 0.
  \end{align*}
  %where we use the fact that $z \in [1, 1 + \alpha]$ and $\log(1 + x) < x$ for $x > -1$.
  Therefore, we can assume $d_v = d$ in $U(z)$. 
  Since $(\lambda, d, \alpha)$ is $\delta$-unique, by \eqref{eq:sym-x} and \Cref{thm:contract-sym}, we have $U(z)= H(x) \leq 1 - \delta$.
\end{proof}

We now prove \Cref{thm:contract-sym}.
Due to the monotonicity of $U(z)$ in $\lambda$ and $\alpha$, we only need to focus on the following critical case of $\delta$-uniqueness for $\delta \in [0, 1)$. 
%where the $\lambda_c$ and $w_c$  are as specified in \Cref{thm:delta-unique}.

\begin{condition}[critical condition] \label{cond:critical}
  $\alpha > \frac{1-\delta}{d} \e^{1 + \frac{1-\delta}{d}}$, $\lambda = \lambda_c$, and $w = w_c$, 
  where $\lambda_c=\lambda_c(\delta,d,\alpha)$ and $w_c=w_c(\delta,d,\alpha)$ are the critical thresholds for $\lambda$ and $w$  defined in \Cref{thm:delta-unique} for $\delta$-uniqueness.
\end{condition}

\begin{lemma} \label{lem:contract-critical}
  Assuming \Cref{cond:critical}, for all $x \geq 0$, it holds for the $H(x)$ defined in \eqref{eq:univariate-function-H(x)} that
  \begin{align*}
    H(x) = \frac{\phi(F(x))}{\phi(x)} F'(x) \leq 1 - \delta.
  \end{align*}
  Moreover, $H(x_c) = 1 - \delta$, for $x_c = x_c(\delta, d, \alpha)$ defined in \Cref{thm:delta-unique}.
\end{lemma}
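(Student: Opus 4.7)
My plan is to establish the lemma in three stages: verify the value $H(x_c) = 1-\delta$; verify that $x_c$ is a critical point of $H$; and then argue that $x_c$ is the global maximum on $[0,\infty)$.

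\emph{Value at $x_c$.} Under the critical condition, $\lambda_c = x_c(1+\alpha(1+x_c)^{-w_c})^d$ is chosen precisely so that $F(x_c) = x_c$; consequently $\phi(F(x_c))/\phi(x_c) = 1$ and $H(x_c) = F'(x_c)$. Using the identity $F'(x) = F(x) \cdot \frac{dw\alpha}{(1+x)(\alpha+(1+x)^w)}$ together with $F(x_c) = x_c$, the requirement $F'(x_c) = 1-\delta$ becomes exactly $(1-\delta)(1+x_c)(\alpha+(1+x_c)^{w_c}) = \alpha d w_c x_c$, which is $T_\delta(x_c, w_c) = 0$ and holds by hypothesis. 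Thus $H(x_c) = 1-\delta$, giving the ``moreover'' claim.

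\emph{Criticality at $x_c$.} I compute
\[
\frac{H'(x)}{H(x)} = \frac{\phi'(F(x))}{\phi(F(x))} F'(x) + \frac{F''(x)}{F'(x)} - \frac{\phi'(x)}{\phi(x)},
\]
and substitute $\phi'(u)/\phi(u) = -(1+\log(1+u))/((1+u)\log(1+u))$ together with $F''(x)/F'(x) = F'(x)/F(x) - 1/(1+x) - w(1+x)^{w-1}/(\alpha+(1+x)^w)$. Evaluating at $x = x_c$ using $F(x_c) = x_c$ and $F'(x_c) = 1-\delta$, and rewriting $(1-\delta)/x_c$ via $T_\delta(x_c, w_c) = 0$ as $\alpha d w_c/((1+x_c)(\alpha + (1+x_c)^{w_c}))$, the condition $H'(x_c) = 0$ reduces, after clearing denominators by $(1+x_c)(\alpha+(1+x_c)^{w_c})\log(1+x_c)$, precisely to $w_c\log(1+x_c)(\alpha d - (1+x_c)^{w_c+1}) + \delta(1+x_c)(\alpha+(1+x_c)^{w_c}) = 0$, which is $M_\delta(x_c, w_c) = 0$ and holds by hypothesis.

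\emph{Global maximality (the main obstacle).} I plan to pass to the symmetrized form: by \Cref{lem:sym-s}, $H(x) = U_{\lambda_c, d, \alpha}(z)$ under the monotone bijection $z = 1 + \alpha(1+x)^{-w_c}$ from $[0,\infty)$ onto $(1,1+\alpha]$, and $U$ depends only on $\lambda, d, \alpha$. One checks directly that $U(z) \to 0$ at both endpoints of $(1,1+\alpha]$ using the factorization $U(z) = A(\lambda_c z^{-d}) \cdot B(z)$ with $A(y) := y/((1+y)\log(1+y))$ strictly decreasing on $(0,\infty)$ and $B(z) := (d(z-1)/z)\log(\alpha/(z-1))$ vanishing at both endpoints. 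The task then reduces to proving that $U$ has a unique interior critical point in $(1, 1+\alpha)$: combined with the critical-point identification $z_c \leftrightarrow x_c$ from the previous two steps (which confirms $U(z_c) = 1-\delta$ and $U'(z_c) = 0$), uniqueness forces $z_c$ to be the global maximizer and yields $U(z) \leq 1-\delta$ throughout $(1,1+\alpha]$.

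The crux is establishing unimodality of $U$. My intended route is to multiply $(\log U)'(z)$ by $z(z-1)$ to obtain a numerator that decomposes into two pieces, one depending on $y = \lambda_c z^{-d}$ through $(y - \log(1+y))/((1+y)\log(1+y))$ and the other through $(w_c\log(1+x) - z)/(z\, w_c\log(1+x))$ after substituting $z-1$ and $\log(\alpha/(z-1))$; the monotonicity of each piece in $z$, together with the sign structure of $T_\delta$ and $M_\delta$ on $(0,x_c)$ and $(x_c,\infty)$ developed in \Cref{sec:delta-unique-general}, should allow at most one sign change. A fallback route is a direct calculus argument on $H$: verify $H''(x_c) < 0$ to certify a strict local maximum, and then use the sign characterizations of $T_\delta(x)$ and $M_\delta(x)$ to rule out additional interior critical points of $H$ on $(0, x_c)$ and $(x_c, \infty)$ separately.
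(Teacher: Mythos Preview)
Your first two stages (value and criticality at $x_c$) are correct and essentially coincide with the paper's Lemma~\ref{lem:H(xc)} and Corollary~\ref{cor:dH(xc)}. The gap is in the global-maximality stage.

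Neither of your proposed routes for the third stage has a clear path. Unimodality of $U$ is \emph{stronger} than what is needed (and the paper never proves it), and your sketched decomposition of $z(z-1)(\log U)'$ into two monotone pieces is too vague to be convincing; it is not apparent that the two pieces cooperate to force a single sign change. Your fallback invoking the sign of $M_\delta$ is close in spirit but does not work as stated: the equation $H'(x)=0$ does \emph{not} factor through $M_\delta$ away from the fixed point, because $F(x)\ne x$ there.

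The paper's key device, which you are missing, is a self-bounding structural identity
\[
H'(x)\;=\;c_1(x)\bigl((1-\delta)-H(x)\bigr)\;+\;c_2(x)\,B_\delta(x),\qquad c_1,c_2>0,
\]
where
\[
B_\delta(x)\;:=\;w\log(1+x)\Bigl(\alpha d\cdot\frac{x+1}{F(x)+1}-(1+x)^{w+1}\Bigr)+\delta(1+x)\bigl(\alpha+(1+x)^w\bigr).
\]
Observe that $B_\delta$ agrees with $M_\delta$ only at $x_c$ (since $F(x_c)=x_c$); elsewhere the extra factor $(x+1)/(F(x)+1)\ne 1$, so the sign analysis of $M_\delta$ from \S\ref{sec:delta-unique-general} does not transfer directly. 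The paper proves separately (Lemma~\ref{lem:sign-B}, via an auxiliary comparison function $C$) that $B_\delta$ has a single sign change at $x_c$: positive on $(0,x_c)$, negative on $(x_c,\infty)$. With this in hand, the argument is a two-line contradiction and avoids classifying all critical points of $H$: if $r>x_c$ is the smallest point with $H(r)=1-\delta$, then the first term of the decomposition vanishes and $H'(r)=c_2(r)B_\delta(r)<0$, contradicting $H'(r)\ge 0$; the case $x<x_c$ is symmetric using the mean value theorem.
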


The prove of \Cref{lem:contract-critical} will be given in \Cref{sec:contract-critical}.

The contraction up to $\delta$-uniqueness claimed in  \Cref{thm:contract-sym}, can be reduced to this contraction for the critical case guaranteed in \Cref{lem:contract-critical}. This is proved as follows.

\begin{proof}[Proof of \Cref{thm:contract-sym}]
Let $(\lambda, d, \alpha)$ be $\delta$-unique  and let $w > 0$.
We claim that there always exists $(\lambda_c, d, \alpha',w_c)$ satisfying \Cref{cond:critical} such that $\lambda \geq \lambda_c$ and $\alpha\leq \alpha'$. 
\begin{itemize}
\item
If $\alpha > \frac{1-\delta}{d} \e^{1 + \frac{1-\delta}{d}}$, then $\alpha'=\alpha$, $\lambda_c=\lambda_c(\delta,d,\alpha')$ and $w_c=w_c(\delta,d,\alpha')$ satisfy our requirement, since $\lambda \geq \lambda_c$ is guaranteed by \Cref{thm:delta-unique} and the $\delta$-uniqueness of $(\lambda, d, \alpha)$.
\item
If $\alpha \le \frac{1-\delta}{d} \e^{1 + \frac{1-\delta}{d}}$, we choose a small enough $\alpha'> \frac{1-\delta}{d} \e^{1 + \frac{1-\delta}{d}}$ such that $\lambda_c(\delta,d,\alpha')\le \lambda$. Such an $\alpha'$ always exists because by \Cref{lem:lb-is-0}, we have $\lambda_c(\delta,d,\alpha')\downarrow 0$  as $\alpha' \downarrow \frac{1-\delta}{d} \e^{1 + \frac{1-\delta}{d}}$. We let $\lambda_c=\lambda_c(\delta,d,\alpha')$ and $w_c=w_c(\delta,d,\alpha')$. Clearly, $(\lambda_c, d, \alpha',w_c)$ satisfies our requirement.
\end{itemize}
By the first order condition, $U_{\lambda, d, \alpha}(z)$ is decreasing in $\lambda>0$, and is increasing in $\alpha>0$, thus
  \begin{align*} 
    \sup_{z\in[1, 1+\alpha]} U_{\lambda, d, \alpha}(z) \leq \sup_{z \in [1, 1 + \alpha]} U_{\lambda_c, d, \alpha}(z) \leq \sup_{z\in [1, 1 + \alpha']} U_{\lambda_c, d, \alpha'}(z),
  \end{align*}
  where in the last inequality, we also use $[1, 1 + \alpha] \subseteq [1, 1 + \alpha']$.
  Combining this with \eqref{eq:sym-x-sup} gives
  \begin{align*}
    \sup_{x\in [0, +\infty]} H_{\lambda, d, \alpha, w}(x)=\sup_{z\in[1, 1+\alpha]} U_{\lambda, d, \alpha}(z)
    &\leq \sup_{z\in [1, 1 + \alpha']} U_{\lambda_c, d, \alpha'} (z)
     = \sup_{x\in [0, +\infty]} H_{\lambda_c, d, \alpha', w_c} (x),
  \end{align*}
  Finally, by \Cref{lem:contract-critical} and $\delta$-uniqueness of $(\lambda_c, d, \alpha')$, we have 
  \[
  \sup_{x\in [0, +\infty]} H_{\lambda_c, d, \alpha', w_c} (x) \leq 1 - \delta,
  \]
  which implies that $H_{\lambda, d, \alpha, w}(x)\leq 1 - \delta$ for all $x\ge 0$, for the original $(\lambda, d, \alpha)$ and any $w>0$.
    %{\color{red} ==========}
  %When $\alpha > \frac{1-\delta}{d} \e^{1 + \frac{1-\delta}{d}}$, and $\lambda = \lambda_c$, 
  %if further choosing $w = w_c$ in \Cref{lem:sym-x}, then by \Cref{lem:contract-critical} we have $H(x) \leq 1 - \delta$ for all $x \geq 0$.
  %
  %\begin{align*}
  %  H(x) \leq 1 - \delta, \quad \forall  x \geq 0,
  %\end{align*}
  %by choosing $w = w_c$ in \Cref{lem:sym-x} and then use \Cref{lem:contract-critical}.
  %Then the proof could be finished by using \Cref{lem:sym-x} and noticing that when fix $z \in [1, 1 + \alpha]$, it holds that $U(z)$ is decreasing as $\lambda \to +\infty$ or $\alpha \to 0$.
\end{proof}

\subsubsection{Contraction for the critical instances}
\label{sec:contract-critical}

It remains to prove \Cref{lem:contract-critical}, the contraction for the critical case.
In order to do so, we look into the behavior of $H(x)$ at the critical fixpoint $x_c$. 

Let $x_c=x_c(\delta,d,\alpha)$ be the critical $x$ defined in \Cref{thm:delta-unique}. Under critical condition \Cref{cond:critical}, such $x_c$ is also the unique fixpoint satisfying $x_c=F(x_c)$.

\begin{lemma} \label{lem:H(xc)}
  Under \Cref{cond:critical}, $H(x_c) = F'(x_c) = 1 - \delta$, where $x_c=x_c(\delta,d,\alpha)$ is defined in \Cref{thm:delta-unique}.
\end{lemma}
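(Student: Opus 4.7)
The plan is to observe that the claim reduces to a direct computation at the critical point $x_c$, which is determined (together with $w_c$) by the system in~\Cref{thm:delta-unique} together with the normalization $\lambda_c = x_c(1+\alpha(1+x_c)^{-w_c})^d$. Everything we need is already encoded in the two equations $T_\delta(x_c)=0$ and the defining relation for $\lambda_c$.

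First, I would observe that $x_c$ is a fixpoint of $F$. Indeed, by the definition of $\lambda_c$ in~\Cref{thm:delta-unique}, we have $\lambda_c = x_c\bigl(1+\alpha(1+x_c)^{-w_c}\bigr)^{d}$, which rearranges exactly to $F(x_c)=x_c$. Consequently $\phi(F(x_c))=\phi(x_c)$ and so
\[
H(x_c) \;=\; \frac{\phi(F(x_c))}{\phi(x_c)}\,F'(x_c) \;=\; F'(x_c),
\]
so the only remaining task is to show $F'(x_c)=1-\delta$.

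Next, I would use the closed form for $F'$ recorded in \eqref{eq:F(x)-derivative}:
\[
F'(x) \;=\; \alpha d w (1+x)^{-w-1}\bigl(1+\alpha(1+x)^{-w}\bigr)^{-1} F(x).
\]
Plugging in $x=x_c$ and using $F(x_c)=x_c$, a line of arithmetic (clearing the $(1+x_c)^{-w_c}$ factor) gives
\[
F'(x_c) \;=\; \frac{\alpha d w_c\, x_c}{(x_c+1)\bigl(\alpha + (x_c+1)^{w_c}\bigr)}.
\]
Now invoke the first equation $T_\delta(x_c)=0$ from \Cref{thm:delta-unique}, namely $(1-\delta)(x_c+1)\bigl(\alpha+(1+x_c)^{w_c}\bigr)=\alpha d w_c x_c$. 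Substituting this in the numerator yields $F'(x_c)=1-\delta$, as required. Combined with the first step, this gives $H(x_c)=F'(x_c)=1-\delta$.

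There is no substantive obstacle here: the statement is essentially the geometric meaning of the defining system for $(x_c,w_c)$. The only things one has to be careful about are (i) noting explicitly that $T_\delta(x_c)=0$ is precisely the relation $F'(x_c)=1-\delta$ after using $F(x_c)=x_c$, and (ii) recording that the second equation $M_\delta(x_c)=0$ is not used in this lemma (it will be used later to pin down that $x_c$ is the worst case of $H$, which is the content of the surrounding~\Cref{lem:contract-critical}).
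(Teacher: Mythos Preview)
Your proof is correct and essentially identical to the paper's own argument: both observe that $\lambda_c=x_c(1+\alpha(1+x_c)^{-w_c})^d$ means $F(x_c)=x_c$, hence $H(x_c)=F'(x_c)$, and then substitute $T_\delta(x_c)=0$ into the expression \eqref{eq:F(x)-derivative} for $F'(x_c)$ to get $1-\delta$. Your additional remark that $M_\delta(x_c)=0$ is not used here is accurate and helpful.
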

\begin{proof}
Under \Cref{cond:critical}, we have $x_c=F(x_c)$, which implies that 
\[
H(x_c)=\frac{\phi(F(x_c))}{\phi(x_c)}F'(x_c)=F'(x_c).
\]
Besides, we have $T_\delta(x_c) = 0$ guaranteed in \Cref{thm:delta-unique}, where $T_\delta(x)$ is defined in \eqref{eq:function-T-delta}, which gives
\begin{align*}
\alpha d w F(x_c)=\alpha d w x_c=(1 - \delta)(x_c + 1)(\alpha + (1 + x_c)^w).
\end{align*}
Applying this identity in the calculation of $F'(x)$ in \eqref{eq:F(x)-derivative}, gives
\begin{align*}
  F'(x_c)
%  \nonumber &= \alpha  d \lambda  w (x+1)^{-w-1} \left(\alpha  (x+1)^{-w}+1\right)^{-d-1} \\
&=  \frac{\alpha d w F(x_c)}{(x_c + 1)(\alpha + (1 + x_c)^w)} 
= 1-\delta. \qedhere
\end{align*}
\end{proof}

Next, define
\begin{align*}
  B_\delta(x) := w \log (x+1) \left(\alpha d \cdot \frac{x+1}{F(x) + 1} -(x+1)^{w+1}\right)+\delta  (x+1) \left(\alpha +(x+1)^w\right).
\end{align*}

\begin{lemma} \label{lem:dH-struct}
The derivative $H'(x)$ of $H(x)$ can be calculated by 
  \begin{align*}
    H'(x) =&c_1(x) \cdot ((1 - \delta) - H(x)) + c_2(x) \cdot B_\delta(x),
  \end{align*}
  for some positive-valued  functions $c_1(x) > 0$ and $c_2(x) > 0$ over $x > 0$.
\end{lemma}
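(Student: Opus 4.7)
The plan is to compute $H'(x)$ directly via logarithmic differentiation of
\[
H(x)=\frac{\phi(F(x))}{\phi(x)}\,F'(x), \qquad \phi(u)=\frac{1}{(1+u)\log(1+u)},
\]
and then reorganize the resulting expression so that the factors $(1-\delta)-H(x)$ and $B_\delta(x)$ appear naturally, with positive coefficients. Write $L=\log(1+x)$, $L_F=\log(1+F(x))$, and $A=(1+x)\bigl(\alpha+(1+x)^w\bigr)$. A direct computation from $\log H=\log\phi(F)-\log\phi+\log F'$, using $\frac{d}{du}\log\phi(u)=-\tfrac{1}{1+u}-\tfrac{1}{(1+u)\log(1+u)}$ and the closed form $F'(x)=F(x)\cdot\frac{d\alpha w}{A}$, yields after telescoping
\[
\frac{H'(x)}{H(x)}=\frac{1-H(x)\bigl(1+L_F\bigr)}{(1+x)L}+\frac{w\bigl(d\alpha-(1+x)^{w}\bigr)}{A}.
\]

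The next step is to split $1-H(x)=\bigl[(1-\delta)-H(x)\bigr]+\delta$ and show that the remainder
\[
R(x):=\frac{\delta}{(1+x)L}-\frac{H(x)\,L_F}{(1+x)L}+\frac{w\bigl(d\alpha-(1+x)^w\bigr)}{A}
\]
equals $\frac{B_\delta(x)}{(1+x)\,L\,A}$. Multiplying $R(x)$ by $(1+x)LA$ and using the defining identity
\[
\frac{F'(x)(1+x)}{1+F(x)}=\frac{\phi(F(x))}{\phi(x)}F'(x)\cdot\frac{L_F}{L}=H(x)\frac{L_F}{L},
\]
together with $\frac{F'(x)}{F(x)}=\frac{d\alpha w}{A}$, the cross terms involving $d\alpha(1+x)\bigl(1-\tfrac{1}{1+F(x)}\bigr)=d\alpha(1+x)\tfrac{F(x)}{1+F(x)}$ collapse to $AHL_F$, cancelling $-HL_FA$ exactly. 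The surviving terms are precisely
\[
wL\!\left(\frac{d\alpha(1+x)}{1+F(x)}-(1+x)^{w+1}\right)+\delta A \;=\; B_\delta(x).
\]

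Combining these two steps gives
\[
H'(x)=\underbrace{\frac{H(x)}{(1+x)\log(1+x)}}_{c_1(x)}\cdot\bigl((1-\delta)-H(x)\bigr)\;+\;\underbrace{\frac{H(x)}{(1+x)^{2}\log(1+x)\bigl(\alpha+(1+x)^w\bigr)}}_{c_2(x)}\cdot B_\delta(x),
\]
and the positivity of $c_1,c_2$ on $x>0$ is immediate since $H(x)>0$, $L>0$, and $\alpha+(1+x)^w>0$.

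The only real obstacle is the algebraic collapse in the second step: one has to spot that the terms $-H(x)L_F A$ arising from differentiating $\log\phi(F)$ and the terms $wLd\alpha(1+x)$ arising from $\log F'$ can be combined via the $H$--$F'$ identity above. Once this cancellation is identified, the rest is bookkeeping. I expect no further difficulty; the lemma's statement is essentially asserting that the $H$-derivative decomposes along the two directions $T_\delta$ and $M_\delta$ defining the critical fixpoint in \Cref{thm:resolve-lambda-2-c}, and this computation makes the intuition precise.
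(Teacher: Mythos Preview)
Your proposal is correct and takes essentially the same approach as the paper: a direct computation of $H'(x)$ followed by the same algebraic reorganization around $(1-\delta)-H(x)$ and $B_\delta(x)$. Your use of logarithmic differentiation is a bit cleaner than the paper's brute-force expansion, but the resulting coefficients $c_1(x)=\tfrac{H(x)}{(1+x)\log(1+x)}$ and $c_2(x)=\tfrac{H(x)}{(1+x)^2\log(1+x)\,(\alpha+(1+x)^w)}$ are exactly the ones the paper obtains after simplifying its $c_1,c_2$ in terms of $H$.
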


\begin{lemma} \label{lem:sign-B}
    Under \Cref{cond:critical}, $B'_\delta(x) < 0$ at $x=x_c$, and
  \begin{enumerate}
  \item $B_\delta(x) > 0$, when $x < x_c$;
  \item $B_\delta(x) = 0$, when $x = x_c$;
  \item $B_\delta(x) < 0$, when $x > x_c$.
  \end{enumerate}
\end{lemma}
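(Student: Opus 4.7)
The starting point is the clean decomposition
\[
B_\delta(x) - M_\delta(x) \;=\; \alpha d w \log(x+1)\cdot \frac{x - F(x)}{F(x)+1},
\]
obtained by direct comparison with the definition of $M_\delta$ in \eqref{eq:M}. Under \Cref{cond:critical}, the choice $\lambda = \lambda_c = x_c\bigl(1 + \alpha(1+x_c)^{-w_c}\bigr)^d$ from \Cref{thm:delta-unique} is precisely the condition $F(x_c) = x_c$, so the additive correction vanishes at $x_c$, giving $B_\delta(x_c) = M_\delta(x_c) = 0$ by the critical system. This proves part~(2) and, importantly, shows that $B_\delta$ and $M_\delta$ share the same zero at $x_c$.

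For the strict derivative inequality $B_\delta'(x_c) < 0$, I differentiate and evaluate at $x_c$. Using $F(x_c)=x_c$ and $F'(x_c) = 1-\delta$ (the latter from \Cref{lem:H(xc)}), the derivative of $\frac{x - F(x)}{F(x)+1}$ at $x_c$ simplifies to $\frac{\delta}{x_c+1}$, so
\[
B_\delta'(x_c) \;=\; M_\delta'(x_c) + \frac{\alpha d w \delta \log(x_c+1)}{x_c+1}.
\]
I then substitute the closed-form expressions for $\alpha d - (1+x_c)^{w+1}$ and $\alpha d w x_c$ implied by $M_\delta(x_c)=0$ and $T_\delta(x_c)=0$ respectively, reducing $B_\delta'(x_c)$ to an expression whose strict negativity follows from elementary estimates (notably $\log(1+x) < x$ for $x > 0$ together with $0 < \delta < 1$).

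For the global sign structure in parts~(1) and~(3), two competing contributions must be understood. By \Cref{lem:sign-d-xm} (applied with $w=w_c$, so that $x_\delta^M(w_c)=x_c$ thanks to $M_\delta(x_c)=0$), $M_\delta$ is positive on $[0,x_c)$ and negative on $(x_c,\infty)$. The auxiliary term $R(x) := \alpha d w \log(x+1)\cdot\frac{x-F(x)}{F(x)+1}$ carries the \emph{opposite} sign on each side of $x_c$, since $F$ is monotone increasing with unique fixpoint $x_c$ (hence $F(x)>x$ for $x<x_c$ and $F(x)<x$ for $x>x_c$). To conclude, I aim to show that $|M_\delta|$ strictly dominates $|R|$ globally, so that $B_\delta = M_\delta + R$ inherits the sign of $M_\delta$.

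The main obstacle is precisely this domination step: near $x_c$ the two terms have opposing signs of comparable magnitude, so proving $|M_\delta(x)| > |R(x)|$ for all $x\ne x_c$ requires careful global control. My plan is to work with the quotient $B_\delta(x)/(x-x_c)$, which extends continuously to $x_c$ with limit $B_\delta'(x_c)<0$, and to show that it remains strictly negative throughout $(0,\infty)$ by combining the strict concavity of $M_\delta$ established in \Cref{sec:sign-d-xm} with the uniqueness of the fixpoint of $F$ under \Cref{cond:critical}. An alternative route I will keep open is to rewrite $B_\delta$ after the substitution $z = 1 + \alpha(1+x)^{-w}$ (as used in \Cref{lem:sym-s}), where the relation $F(x) = \lambda z^{-d}$ makes the sign of $\frac{d}{1+\lambda z^{-d}} - \frac{1}{z-1}$ the governing quantity, reducing the claim to a univariate monotonicity analysis on $z\in[1,1+\alpha]$.
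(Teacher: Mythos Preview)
Your decomposition $B_\delta = M_\delta + R$ with $R(x) = \alpha d w \log(x+1)\,\frac{x-F(x)}{F(x)+1}$ is correct and cleanly handles part~(2) as well as the computation of $B_\delta'(x_c)$. However, as you yourself flag, the global sign argument is the real difficulty, and your two proposed routes do not close it. In the decomposition $B_\delta = M_\delta + R$, the terms have \emph{opposite} signs on each side of $x_c$, so you need a quantitative domination $|M_\delta| > |R|$. The strict concavity of $M_\delta$ only gives you a \emph{lower} bound on $M_\delta$ via the secant line on $[0,x_c]$, but you have no matching upper bound on $|R|$: the term $R$ involves $F$, and neither concavity of $M_\delta$ nor uniqueness of the fixpoint of $F$ yields the needed pointwise control. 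Your alternative $z$-substitution does not eliminate the problem either; after the change of variables you still have a two-term expression $\log(\alpha/(z-1))\bigl(\frac{d}{1+\lambda z^{-d}} - \frac{1}{z-1}\bigr) + \frac{\delta z}{z-1}$ whose sign is not governed by a single monotone quantity.

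The paper avoids this competition entirely by a different comparison. Instead of splitting off $M_\delta$, it freezes $F(x)$ at the constant $x_c$ inside $B_\delta$, defining
\[
C(x) := w\log(x+1)\left(\alpha d\,\frac{x+1}{x_c+1} - (x+1)^{w+1}\right) + \delta(x+1)\bigl(\alpha+(x+1)^w\bigr).
\]
Because $F$ is increasing with fixpoint $x_c$, one has $\frac{x+1}{F(x)+1} \gtrless \frac{x+1}{x_c+1}$ according as $x \lessgtr x_c$, so $B_\delta(x) > C(x)$ for $x<x_c$ and $B_\delta(x) < C(x)$ for $x>x_c$; the same monotonicity gives $B_\delta'(x_c) \le C'(x_c)$. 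Thus it suffices to prove the sign pattern and $C'(x_c)<0$ for $C$ alone. The gain is that $C$ contains no $F$: it is an explicit elementary function of $x$, and the paper finishes by a (somewhat involved but purely calculus-based) analysis of $C'$, $C''$, and $\bigl(\frac{1+x}{w}C''\bigr)'$, splitting on the signs of $C'(0)$ and $C''(0)$. Your $B_\delta = M_\delta + R$ decomposition never reaches a comparison function free of $F$, which is why the global step stalls; replacing $F(x)$ by the constant $x_c$ rather than by $x$ is the missing idea.
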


\Cref{lem:dH-struct} and \Cref{lem:sign-B} are proved by straightforward calculations, and they are postponed to \Cref{sec:dH}.

\begin{lemma} \label{cor:dH(xc)}
  Under \Cref{cond:critical}, it holds that $H'(x_c) = 0$ and $H''(x_c) < 0$.
\end{lemma}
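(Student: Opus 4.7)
The plan is to derive both claims directly from the decomposition provided by Lemma \ref{lem:dH-struct} together with the critical-point information already established by Lemmas \ref{lem:H(xc)} and \ref{lem:sign-B}. Essentially, the two identities $H(x_c)=1-\delta$ and $B_\delta(x_c)=0$ make the right-hand side of Lemma \ref{lem:dH-struct} vanish at $x=x_c$, yielding $H'(x_c)=0$ for free. For the second derivative, one differentiates the decomposition once more, and at $x_c$ only one surviving term remains, which carries the sign of $B_\delta'(x_c)$.

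More precisely, first I would substitute $x=x_c$ into
\[
H'(x)=c_1(x)\bigl((1-\delta)-H(x)\bigr)+c_2(x)\,B_\delta(x).
\]
By Lemma \ref{lem:H(xc)} we have $(1-\delta)-H(x_c)=0$, and by Lemma \ref{lem:sign-B} we have $B_\delta(x_c)=0$. Hence both summands vanish and $H'(x_c)=0$, regardless of the (positive) values $c_1(x_c), c_2(x_c)$.

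Next I would differentiate the same decomposition to obtain
\[
H''(x)=c_1'(x)\bigl((1-\delta)-H(x)\bigr)-c_1(x)H'(x)+c_2'(x)B_\delta(x)+c_2(x)B_\delta'(x).
\]
Evaluating at $x=x_c$, the first three terms are zero: the first and third because $H(x_c)=1-\delta$ and $B_\delta(x_c)=0$, and the second because of the already-established $H'(x_c)=0$. Therefore
\[
H''(x_c)=c_2(x_c)\,B_\delta'(x_c).
\]
By Lemma \ref{lem:dH-struct} we have $c_2(x_c)>0$, and by Lemma \ref{lem:sign-B} we have $B_\delta'(x_c)<0$. Consequently $H''(x_c)<0$, completing the proof.

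There is essentially no obstacle here, since all of the hard work (locating $x_c$, analyzing $B_\delta$ and its sign change, computing the structural decomposition of $H'$) has already been done in the preceding lemmas. The only thing to be careful about is ensuring that the formula for $H''$ is obtained by correctly differentiating the product form in Lemma \ref{lem:dH-struct}, and that we use the just-proved $H'(x_c)=0$ to eliminate the $-c_1(x_c)H'(x_c)$ contribution. Thus the corollary is an immediate corollary once the previous lemmas are in hand.
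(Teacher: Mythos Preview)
Your proposal is correct and follows essentially the same argument as the paper: both substitute $x=x_c$ into the decomposition of Lemma~\ref{lem:dH-struct} using $H(x_c)=1-\delta$ and $B_\delta(x_c)=0$ to get $H'(x_c)=0$, then differentiate once, kill the first three terms at $x_c$, and conclude $H''(x_c)=c_2(x_c)B_\delta'(x_c)<0$ from $c_2(x_c)>0$ and $B_\delta'(x_c)<0$.
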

\begin{proof}
By \Cref{lem:H(xc)}, we have $H(x_c) =1 - \delta$, and  by \Cref{lem:sign-B},  we have $B_\delta(x_c) = 0$. Then $H'(x_c)$ can be calculated according to \Cref{lem:dH-struct} as $H'(x_c)=c_1(x) \cdot ((1 - \delta) - H(x)) + c_2(x) \cdot B_\delta(x)=0$.
  
By \Cref{lem:dH-struct}, we have $c_1(x) > 0$ and $c_2(x) > 0$ such that
\begin{align*}
  H'(x) =&+c_1(x) \cdot ((1 - \delta) - H(x))+c_2(x)  B_\delta(x).
\end{align*}
Therefore
\begin{align*}
  H''(x) =&c_1'(x) \cdot ((1 - \delta) - H(x)) -c_1(x) \cdot H'(x) +c_2'(x) \cdot  B_\delta(x) +c_2(x) \cdot  B_\delta(x)',
\end{align*}
where, at $x_c$, the first three terms equal $0$ due to \Cref{lem:H(xc)}, \Cref{lem:sign-B}, and that $H'(x_c)=0$.
The last term is negative since $B_\delta(x_c)'<0$ by \Cref{lem:sign-B}.
\end{proof}
%\begin{lemma} \label{lem:d2H(xc)}
%  Under \Cref{cond:critical}, it holds that $H''(x_c) < 0$.
%\end{lemma}

%Now, we prove \Cref{lem:contract-critical}.
\begin{proof}[Proof of \Cref{lem:contract-critical}]
%By \Cref{lem:H(xc)}, we have $H(x_c) =1 - \delta$.
  By \Cref{cor:dH(xc)},  $H'(x_c) = 0$ and $H''(x_c) < 0$.
  By continuity, there exists $\epsilon > 0$ such that
  \begin{align*}
    \forall x \in [x_c - \epsilon, x_c):\quad H'(x) > 0 
    \qquad\text{and} \qquad  \forall x \in (x_c, x_c + \epsilon]:\quad H'(x) < 0.
  \end{align*}
  By \Cref{lem:H(xc)}, $H(x_c) = 1 - \delta$, which implies that
  \begin{align*}
  H(x)\begin{cases}
  <1-\delta & x \in [x_c - \epsilon, x_c)\\
  =1-\delta & x=x_c\\
  <1-\delta & x \in (x_c, x_c + \epsilon]
  \end{cases}.
  \end{align*}
  We will prove $H(x)< 1-\delta $ in two separate cases: $x < x_c$ and $x > x_c$.
  
 \bigskip 
  \noindent\textbf{Case.1:} $x > x_c$. For the sake of contradiction, assume that
   there exist $x > x_c$ such that $H(x) \geq 1 - \delta$. Let
  \begin{align*}
    r := \inf \{x > x_c \mid H(x) \geq 1 - \delta\}.
  \end{align*}
  Due to that $H(x) < 1 - \delta$ for $x \in (x_c, x_c + \epsilon]$ and continuity of $H(x)$, it holds that
  $H(r) = 1 - \delta$ and $H(x) < 1 - \delta$ for all $x \in (x_c, r)$, which implies $H'(r) \geq 0$.
  However, by \Cref{lem:dH-struct}, %we know that there are $c_1(r) > 0, c_2(r) > 0$,
  \begin{align*}
    H'(r)
    &= c_1(r) \cdot ((1 - \delta) - H(r)) + c_2(r) \cdot B_\delta(r) 
    = c_2(r) \cdot B_\delta(r) < 0,
  \end{align*}
 where the inequality holds by the fact that $r > x_c$ and \Cref{lem:sign-B}.
 A contradiction!

 \bigskip 
 \noindent\textbf{Case.2:} $0 \leq x < x_c$. For the sake of contradiction, assume   there exist $0 \leq x < x_c$ such that $H(x) \geq 1 - \delta$. Let
  \begin{align*}
    \ell := \sup \{0 \leq x < x_c \mid H(x) \geq 1 - \delta\}.
  \end{align*}
  By the mean value theorem, there exists  $\zeta \in (\ell, x_c - \epsilon)$ such that
  \begin{align*}
    H'(\zeta) = \frac{H(x_c - \epsilon) - H(\ell)}{x_c - \epsilon - \ell} < 0,
  \end{align*}
  where the inequality holds by the fact that $H(x_c - \epsilon) < 1 - \delta$ and $H(\ell) = 1 - \delta$.
  Moreover, since $\zeta \in (\ell, x_c - \epsilon)$, we know that $H(\zeta) < 1 - \delta$.
  By \Cref{lem:dH-struct}, 
  \begin{align*}
    H'(\zeta) = c_1(\zeta) ((1 - \delta) - H(\zeta)) + c_2(\zeta) B_\delta(\zeta) \geq 0.
  \end{align*}
  where the inequality holds by that $H(\zeta) < 1 - \delta$, $\zeta < x_c$, and \Cref{lem:sign-B}.
  Also a contradiction!

 \bigskip 
  Altogether, we show that $H(x)\le 1-\delta$ for all $x\ge 0$.

  \bigskip
  The moreover part of \Cref{lem:contract-critical} comes directly from \Cref{lem:H(xc)}.
\end{proof}

\subsubsection{Derivatives of $H(x)$} \label{sec:dH}
It only remains to formally proves \Cref{lem:dH-struct} and \Cref{lem:sign-B}, which are regarding the first and second derivatives of $H(x)$ respectively.
\begin{proof}[Proof of \Cref{lem:dH-struct}]
For the choice of potential function $\phi(x) =\frac{1}{(x + 1) \log (x + 1)}$, we have 
\begin{align}
  H(x)= \frac{\phi(F(x))}{\phi(x)} F'(x)
%= \frac{(x + 1)\log(x + 1)}{(1 + F(x))\log(1 + F(x))} \cdot \alpha  d w (x+1)^{-w-1} \left(\alpha  (x+1)^{-w}+1\right)^{-1} \cdot F(x) \\
  \label{eq:H} &=\frac{\alpha  d w F(x) \log (x+1)}{(F(x)+1) \log (F(x)+1) \left(\alpha +(x+1)^w\right)}.
\end{align}
The derivative $H'(x)$ is given by
\begin{align*}
  H'(x) = &-\frac{\alpha  d w F(x) \log (x+1) F'(x)}{(F(x)+1)^2 \log ^2(F(x)+1) \left(\alpha +(x+1)^w\right)} 
          +\frac{\alpha  d w \log (x+1) F'(x)}{(F(x)+1) \log (F(x)+1) \left(\alpha +(x+1)^w\right)} \\
          &-\frac{\alpha  d w F(x) \log (x+1) F'(x)}{(F(x)+1)^2 \log (F(x)+1) \left(\alpha +(x+1)^w\right)} 
          -\frac{\alpha  d w^2 F(x) (x+1)^{w-1} \log (x+1)}{(F(x)+1) \log (F(x)+1) \left(\alpha +(x+1)^w\right)^2} \\
          &+\frac{\alpha  d w F(x)}{(x+1) (F(x)+1) \log (F(x)+1) \left(\alpha +(x+1)^w\right)}
\end{align*}
Substituting  $F'(x)= \alpha d w (x + 1)^{-w-1} (1 + \alpha (x + 1)^{-w})^{-1} F(x)$, we have,
\begin{align}
  H'(x) =
  \nonumber &-\frac{\alpha ^2 d^2 w^2 F(x)^2 (x+1)^{-w-1} \log (x+1)}{(F(x)+1)^2 \log ^2(F(x)+1) \left(\alpha +(x+1)^w\right) \left(\alpha  (x+1)^{-w}+1\right)} \\
  \nonumber &+\frac{\alpha ^2 d^2 w^2 F(x) (x+1)^{-w-1} \log (x+1)}{(F(x)+1) \log (F(x)+1) \left(\alpha +(x+1)^w\right) \left(\alpha  (x+1)^{-w}+1\right)} \\
  \nonumber &-\frac{\alpha ^2 d^2 w^2 F(x)^2 (x+1)^{-w-1} \log (x+1)}{(F(x)+1)^2 \log (F(x)+1) \left(\alpha +(x+1)^w\right) \left(\alpha  (x+1)^{-w}+1\right)} \\
  \nonumber &-\frac{\alpha  d w^2 F(x) (x+1)^{w-1} \log (x+1)}{(F(x)+1) \log (F(x)+1) \left(\alpha +(x+1)^w\right)^2} \\
  \label{eq:dH} &+\frac{\alpha  d w F(x)}{(x+1) (F(x)+1) \log (F(x)+1) \left(\alpha +(x+1)^w\right)}.
\end{align}
Define $c_0(x) := (x+1) (F(x)+1)^2 \log ^2(F(x)+1) \left(\alpha +(x+1)^w\right)^2 / d w \alpha F(x)$. Clearly, $c_0(x)> 0$ for all $x> 0$.
And
\begin{align*}
  H'(x) \cdot c_0(x) 
  = &-\alpha  d w F(x) \log (x+1)+F(x) (x+1)^w \log (F(x)+1) \\
    &+(x+1)^w \log (F(x)+1) +\alpha  \log (F(x)+1)+\alpha  F(x) \log (F(x)+1) \\[5pt]
    &+\alpha  d w \log (x+1) \log (F(x)+1)-w (x+1)^w \log (x+1) \log (F(x)+1) \\
    &-w F(x) (x+1)^w \log (x+1) \log (F(x)+1) \\[5pt]
  = &+ (1 + F(x))\log(F(x) + 1)(\alpha + (1 + x)^w) \\
  &\hspace{2cm} \times \tp{1 - \frac{\alpha  d w F(x) \log (x+1)}{(F(x)+1) \log (F(x)+1) \left(\alpha +(x+1)^w\right)}} \\
    &+w \log (x+1) \log (F(x)+1) \left(\alpha  d - (F(x) + 1)(x+1)^w\right) \\[5pt]
  \overset{\eqref{eq:H}}{=} &+ (1 + F(x))\log(F(x) + 1)(\alpha + (1 + x)^w) \tp{1 - H(x)} \\
    &+w \log (x+1) \log (F(x)+1) \left(\alpha  d - (F(x) + 1)(x+1)^w\right). 
\end{align*}
We introduce the gap $\delta \in (0, 1)$ here, and have
\begin{align*}
  H'(x) \cdot c_0(x) 
  = &+ (1 + F(x))\log(F(x) + 1)(\alpha + (1 + x)^w) \tp{1 - \delta - H(x)} \\
    &+ \left(w \log (x+1) \left(\alpha  d-(F(x)+1) (x+1)^w\right)+\delta  (F(x)+1) \left(\alpha +(x+1)^w\right)\right) \\
    &\hspace{2cm} \times \log (F(x)+1)  \\
 = &+ (1 + F(x))\log(F(x) + 1)(\alpha + (1 + x)^w) \tp{(1 - \delta) - H(x)} \\
   &+ \log (F(x)+1) \cdot \frac{F(x) + 1}{x + 1} \cdot B_\delta(x)
\end{align*}
We can now finish the proof by defining
\begin{align*}
  c_1(x) &:= (1 + F(x))\log(F(x) + 1)(\alpha + (1 + x)^w) / c_0(x) \\
 c_2(x) &:= \log (F(x)+1) \cdot \frac{F(x) + 1}{x + 1} / c_0(x).
\end{align*}
It is easy to verify that $c_1(x) > 0$ and $c_2(x) > 0$ for $x > 0$.
\end{proof}

\begin{proof}[Proof of \Cref{lem:sign-B}]
Consider the following function,
\begin{align*}
  C(x) &:= w \log (x+1) \left(\alpha d \cdot \frac{x+1}{x_c + 1} -(x+1)^{w+1}\right)+\delta  (x+1) \left(\alpha +(x+1)^w\right).
\end{align*}
Note that when $x < x_c$, we have $\frac{x+1}{x_c + 1} < \frac{x+1}{F(x) + 1}$ and vice versa.
This also implies that $B_\delta'(x_c) \leq C'(x_c)$.
Hence, it is sufficient to show that
\begin{align} \label{eq:sign-C-tar}
  \forall x < x_c, C(x) > 0,  \quad \forall x > x_c, C(x) < 0, \quad \text{and} \quad C'(x_c) < 0.
\end{align}
Under  \Cref{cond:critical}, by \Cref{thm:delta-unique}, for $x = x_c$, we have
\begin{align*}
  M_\delta = w \log(1 + x) (\alpha d - (1 + x)^{w+1}) + \delta(x + 1)(\alpha + (x + 1)^w) = 0,
\end{align*}
which implies $C(x_c) = 0$ by noticing that $F(x_c) = x_c$.
Define $G(x) := \frac{x+1}{x_c + 1}$. 
{\allowdisplaybreaks
\begin{align*}
  C(0) =& (1 + \alpha) \delta > 0;\\[6pt]
  C'(x) =& \alpha  \delta +\alpha  d w \log (x+1) G'(x)+\frac{\alpha  d w G(x)}{x+1} +\delta  w (x+1)^w \\
         &\hspace{0.7cm} +\delta  (x+1)^w -w (x+1)^w-w (w+1) (x+1)^w \log (x+1) \\
  =& \alpha  \delta +\alpha  d w (\log (x+1)+1) G'(x) \\
   & \hspace{0.7cm} -\left((x+1)^w (-\delta  (w+1)+(w+1) w \log (x+1)+w)\right); \\[6pt]
  C'(0) =& \frac{\alpha  d w}{x_c+1}+ (\alpha + 1)\delta - (1 - \delta) w; \\[6pt]
  C''(x) =& \frac{w}{1+x} \cdot \left(\alpha  d (x+1) (\log (x+1)+1) G''(x) +\alpha  d G'(x) \right. \\
         & \hspace{0.7cm} \left. +(x+1)^w (\delta +(\delta -2) w-w (w+1) \log (x+1)-1)\right) \\
  =& \frac{w}{1 + x} \left(\alpha  d G'(x) %\right. \\
  %& \hspace{0.7cm} \left. 
  +(x+1)^w (\delta +(\delta -2) w-w (w+1) \log (x+1)-1) \right); \\[6pt]
  C''(0) =& \frac{\alpha  d w}{x_c+1}- (2 - \delta) w^2 - (1 - \delta) w; \\[6pt]
  \tp{\frac{1+x}{w} C''(x)}' =& -w (x+1)^{w-1} ((2-\delta) +(3 - \delta) w + w (w+1) \log (x+1))
   < 0. 
\end{align*}
}
In above, we use the fact that $G(x) = (1 + x) G'(x)$ and $G''(x) = 0$.

Note that $\frac{1+x}{w} C''(x)$ has the same sign as $C''(x)$.
Hence, $\tp{\frac{1+x}{w} C''(x)}' < 0$ means that
\begin{align} \label{eq:d3-C}
  C''(x_0) < 0 \text{ (or $\leq 0$) for some $x_0 \geq 0$} \Longrightarrow C''(x) < 0 \text{ (or $\leq 0$) for all $x \geq x_0$}.
\end{align}
%once $C''(0) \leq 0$, we have $C''(x) \leq 0$ for all $x \geq 0$.
%
Fortunately, we also have $C''(0) < C'(0)$.
%Therefore, the sign of $C''(0)$ and $C'(0)$ falls into $3$ cases.
The sign of $C'(0)$ falls into $2$ cases.
\begin{itemize}
%\item
%\textbf{Case.1:} $C''(0) \leq 0$ and $C'(0) \leq 0$.
%In this case $C''(x) \leq 0$ for all $x \geq 0$, which implies that $C'(x) \leq 0$ for all $x \geq 0$.
%Since $C(0) > 0$, and $C(x_c) = 0$ (because $M_\delta(x_c) = 0$), we have \eqref{eq:sign-C-tar}.
%\item
%\textbf{Case.2:} $C''(0) \leq 0$ and $C'(0) > 0$.
%Also $C''(x) < 0$ for all $x \geq 0$.
%Since $C'(0) > 0$,  the equation $C'(x) = 0$ has a unique positive root $x_0$,
%and $C(x)$ is monotonically increasing on $(0, x_0)$ and monotonically decreasing on $(x_0, +\infty)$.
%Since $C(0) > 0$, the equation $C(x)=0$ has at most one positive solution, which we know is $x_c$, and hence \eqref{eq:sign-C-tar} holds in this case.
%\item 
%\textbf{Case.3:} $C''(0) > 0$ and $C'(0) > 0$.
%Since $C(x_c) = 0$ and $C(0) > 0$, by the mean value theorem, there is $x_1 \in (0, x_c)$ such that $C'(x_1) < 0$.
%Since $C'(0) > 0$, by the mean value theorem, there is $x_2 \in (0, x_1)$ such that $C''(x_2) < 0$,
%which means there is $x_3 \in (0, x_2)$ such that $C''(x_3) = 0$.
%
%
%Recall that $\frac{1+x}{w} C''(x)$ is same signed as $C''(x)$ and $\tp{\frac{1+x}{w} C''(x)}' < 0$.
%Then, $C''(x) \geq 0$ for $x \in [0, x_3)$ and $C''(x) \leq 0$ for $x \in [x_3, +\infty)$.
%Moreover, since $C(0) > 0$ and $C'(0) > 0$, it holds that $C(x) > 0$ for all $x \in [0, x_3]$.
% 
%Now, since $C(x_3) > 0$, $C'(x_3) > 0$, and $C''(x_3) \leq 0$, 
%the function $C(x)$ can be analyzed on $[x_3, +\infty)$ the same as in \textbf{Case.2} by treating $x_3$ as the origin.
%Hence, \eqref{eq:sign-C-tar} holds.
%
\item
\textbf{Case.1:} $C'(0) < 0$. Since $C'(0) < 0$, by $C''(0) < C'(0) < 0$ and \eqref{eq:d3-C}, it holds that, $C''(x) < 0$ for all $x \geq 0$, which implies that $C'(0) < 0$ for all $x \geq 0$.
Since $C(0) > 0$ and $C(x_c) = 0$ (because $M_\delta(x_c) = 0$), we have \eqref{eq:sign-C-tar}.
\item
\textbf{Case.2:} $C'(0) \geq 0$. Since $C(x_c) = 0$ and $C(0) > 0$, by the mean value theorem, there is $x_1 \in (0, x_c)$ such that $C'(x_1) < 0$.
Since $C'(0) \geq 0$ and $C'(x_1) < 0$, by the mean value theorem, there is $x_2 \in (0, x_1)$ such that $C''(x_2) < 0$.
This, by \eqref{eq:d3-C}, means $C''(x) < 0$ for all $x \geq x_2$, which implies that $C'(x) < 0$ for all $x \geq x_1$.
In particular, we have $C'(x_c) < 0$.

By the intermediate value theorem there is $x_3 \in [0, x_1)$ such that $C'(x_3) = 0$.
Moreover, if there are many of them, let $x_3$ be the largest one among them.

If $C''(0) < 0$, $C(x)$ is monotonically increasing in $[0, x_3)$ and monotonically decreasing in $[x_3, +\infty)$, which implies \eqref{eq:sign-C-tar}.

If $C''(0) \geq 0$, by the intermediate value theorem, there is $x_4 \in [0, x_2)$ such that $C''(x_4) = 0$. 
Then, by \eqref{eq:d3-C}, $C''(x) \geq 0$ for $x \in [0, x_4)$ and $C''(x) \leq 0$ for $x \in [x_4, +\infty)$.
Since $C'(0) \geq 0$, it holds that $x_3 \geq x_4$.
Moreover, it holds that $C'(x) \geq 0$ when $x \in [0, x_3)$ and $C'(x) \leq 0$ when $x \in [x_3, +\infty)$.
This implies \eqref{eq:sign-C-tar}.
\end{itemize}
Altogether, we have \eqref{eq:sign-C-tar}.
This proves \Cref{lem:sign-B}.% and also implies that $B_\delta'(x_c) < 0$.
%{\par\color{red}
%Luckily, we have $\lim_{x\to 0} C''(x) < \lim_{x\to 0} C'(x)$ and $\lim_{x \to 0} C(x) > 0$, which implies that the equation $C(x) = 0$ has a unique positive solution $x_c$, since $x_c$ is the solution of $M_\delta(x) = 0$, and
%\begin{align*}
%  \forall x < x_c, C(x) > 0 \quad \text{and} \quad \forall x > x_c, C(x) < 0.
%\end{align*}
%This argument also shows that we have $B_\delta'(x_c) < 0$. 
%}
\end{proof}

\subsection{Implications to $\delta$-uniqueness}%{Consequence of contraction: $\Theta(\delta)$-uniqueness when $\lambda \leq (1 - \delta)\lambda_c(\Delta)$}
\label{sec:delta-unique-eq-7}
We now detour from the analysis of spectral independence, and use the contraction property that we have established to characterize the $\delta$-uniqueness condition. 
Specifically, we 
complete the proof of  \Cref{thm:delta-unique-eq},  showing that $\lambda= \alpha \leq (1 - \delta)\lambda_c(\Delta)$ implies $\Theta(\delta)$-uniqueness for all $\delta\in(0,1)$.

  As explained just before \Cref{thm:solve-sys-eq}, by \Cref{thm:delta-unique-implicit} and \Cref{thm:resolve-lambda-2-c}, all we need to do it to resolve the following system of $(x, \alpha, w)$ as defined in \eqref{eq:delta-sys-eq} for the given parameters $d \geq 2, \delta \in (0, 1)$.
\begin{align} \label{eq:delta-sys-eq-restate}
  \begin{cases}
    T_\delta := (1 - \delta)(x + 1)(\alpha + (1 + x)^w) - \alpha d w x = 0, \\
    M_\delta := w \log(1 + x) (\alpha d - (1 + x)^{w+1}) + \delta(x + 1)(\alpha + (x + 1)^w) = 0, \\
    G       := x(1 + \alpha(1 + x)^{-w})^d - \alpha = 0.
  \end{cases}
\end{align}
In \Cref{sec:delta-unique}, we have already tried to resolve \eqref{eq:delta-sys-eq-restate} exactly.
As stated in \Cref{thm:solve-sys-eq}, we are only able to do so for sufficiently small $\delta$.

We begin by noticing that \Cref{lem:contract-critical} builds a precise correspondence between $\delta$-uniqueness and the contraction.
Thus, we can leverage such correspondence to extend the result in \Cref{thm:solve-sys-eq} for all $\delta \in (0, 1)$.
First, we have the following observation.
\begin{observation} \label{obs:sol-exists-delta-sys-eq}
  For every $d \geq 2$ and $\delta \in (0, 1)$, the system in \eqref{eq:delta-sys-eq-restate} has a solution $(x_c, \alpha_c, w_c)$ with
  \[\textstyle \alpha_c \in \tp{\frac{1-\delta}{d} \e^{1 + \frac{1-\delta}{d}}, \lambda_c(d+1)}. \]
\end{observation}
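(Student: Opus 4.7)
The plan is to reduce the existence of a solution of \eqref{eq:delta-sys-eq-restate} to a one-variable intermediate value argument. For fixed $d\ge 2$ and $\delta\in(0,1)$, \Cref{thm:resolve-lambda-2-c} tells us that whenever $\alpha>\tfrac{1-\delta}{d}\e^{1+(1-\delta)/d}$, the first two equations of \eqref{eq:delta-sys-eq-restate} in the unknowns $(x,w)$ have a unique positive solution $(x_c(\alpha),w_c(\alpha))$, and the quantity
\[
\Lambda(\alpha)\;:=\;x_c(\alpha)\bigl(1+\alpha(1+x_c(\alpha))^{-w_c(\alpha)}\bigr)^d
\]
coincides with the critical fugacity $\lambda^\delta_{2,c}$ from \Cref{thm:delta-unique-implicit}. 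The third equation $G=0$ is then exactly $\Lambda(\alpha)=\alpha$. So it suffices to locate some $\alpha_c$ in the stated open interval at which $g(\alpha):=\Lambda(\alpha)-\alpha$ vanishes; continuity of $\Lambda$ on that interval follows from the implicit function theorem applied to the smooth regular system $\{T_\delta=0,\,M_\delta=0\}$ at its unique positive solution.

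Next, I would evaluate $g$ at the two endpoints to produce a sign change. At the left endpoint, \Cref{lem:lb-is-0} yields $\Lambda(\alpha)\to 0$ as $\alpha\downarrow\tfrac{1-\delta}{d}\e^{1+(1-\delta)/d}$, hence $g(\alpha)\to -\tfrac{1-\delta}{d}\e^{1+(1-\delta)/d}<0$. At the right endpoint $\alpha=\lambda_c(d+1)$, I would combine two facts. First, by \Cref{lem:exact-0-unique-eq}, when $\delta=0$ the system \eqref{eq:0-unique-sys} admits the solution $(x,w)=(1/(d-1),d)$ with $\lambda(x)=\alpha=\lambda_c(d+1)$, which shows $\lambda^0_{2,c}(\lambda_c(d+1))=\lambda_c(d+1)$. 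Second, a short arithmetic check confirms that $\lambda_c(d+1)>\tfrac{1}{d}\e^{1+1/d}$ for every $d\ge 2$ (direct verification at $d=2,3$ together with an easy monotonicity argument in $d$). These two ingredients put us in the hypothesis of \Cref{lem:lambda-c-monotone-delta}, which then gives
\[
\Lambda(\lambda_c(d+1))\;=\;\lambda^\delta_{2,c}(\lambda_c(d+1))\;>\;\lambda^0_{2,c}(\lambda_c(d+1))\;=\;\lambda_c(d+1),
\]
so $g(\lambda_c(d+1))>0$. Applying the intermediate value theorem to the continuous function $g$ yields some $\alpha_c\in\bigl(\tfrac{1-\delta}{d}\e^{1+(1-\delta)/d},\lambda_c(d+1)\bigr)$ with $g(\alpha_c)=0$, and the triple $(x_c,w_c,\alpha_c):=(x_c(\alpha_c),w_c(\alpha_c),\alpha_c)$ simultaneously satisfies $T_\delta=M_\delta=G=0$.

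The only step requiring genuine care is the arithmetic verification $\lambda_c(d+1)>\tfrac{1}{d}\e^{1+1/d}$ for all $d\ge 2$, since this is what lets us invoke \Cref{lem:lambda-c-monotone-delta}; this is elementary but deserves to be isolated. Beyond that, the argument is a clean continuity-plus-sign-change argument, with the only mild technical point being that continuity of $\Lambda$ must hold up to and including the endpoint $\alpha=\lambda_c(d+1)$, which is ensured by the uniqueness and nondegeneracy of the critical fixpoint $(x_c,w_c)$ guaranteed by \Cref{thm:resolve-lambda-2-c}.
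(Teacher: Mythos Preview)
Your proposal is correct and follows essentially the same approach as the paper: reduce $G=0$ to the equation $\lambda^\delta_{2,c}(\alpha)=\alpha$, check that $\lambda^\delta_{2,c}(\alpha)-\alpha$ is negative near the left endpoint (via \Cref{lem:lb-is-0}) and positive at $\alpha=\lambda_c(d+1)$ (via $\lambda^0_{2,c}(\lambda_c(d+1))=\lambda_c(d+1)$ together with \Cref{lem:lambda-c-monotone-delta}), and conclude by the intermediate value theorem. The only notable difference is that you explicitly isolate the arithmetic check $\lambda_c(d+1)>\tfrac{1}{d}\e^{1+1/d}$ needed to invoke \Cref{lem:lambda-c-monotone-delta}, which the paper leaves implicit.
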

\begin{proof}
  We consider the system of equations defined by $T_\delta = M_\delta = 0$ as in \Cref{thm:resolve-lambda-2-c} and treat $\alpha$ as a parameter.
  Note that the constraint $G = 0$ in \eqref{eq:delta-sys-eq-restate} actually says that $\lambda^\delta_{2,c} = \alpha$.
  We recall that $\lambda^\delta_{2,c}$ is the critical threshold for $\lambda$, which is defined in \Cref{def:x-lambda-delta}.
  
  By \Cref{thm:resolve-lambda-2-c}, $\lambda^\delta_{2,c}$ is an implicit function of $\alpha$.
  When $\alpha = \frac{1 - \delta}{d}\e^{1 + \frac{1-\delta}{d}}$, it holds that $\lambda^\delta_{2,c}(\alpha) = 0$.
  When $\alpha = \lambda_c(d+1)$, by \Cref{thm:solve-sys-eq}, then $\lambda^0_{2,c}(\alpha) = \alpha$.
  Then, by \Cref{lem:lambda-c-monotone-delta}, $\lambda^\delta_{2,c}(\alpha) > \lambda^0_{2,c}(\alpha) = \alpha$.
  
  By continuity, there is $\alpha_c \in (\frac{1 - \delta}{d}\e^{1 + \frac{1-\delta}{d}}, \lambda_c(d+1))$ such that $\alpha_c = \lambda^\delta_{2,c}(\alpha_c)$.
  Fix $\alpha = \alpha_c$, let $(x_c, w_c)$ be the solution of $T_\delta = M_\delta = 0$ in \Cref{thm:resolve-lambda-2-c}.
  Then $(x_c, \alpha_c, w_c)$ is a solution of \eqref{eq:delta-sys-eq-restate}.
\end{proof}

Recall that given $\Delta = d + 1 \geq 3$, \Cref{thm:solve-sys-eq} only produce solutions of \eqref{eq:delta-sys-eq-restate} in a neighborhood of $\delta = 0$.
%
%However, by leveraging the analysis of the correlation decay at \Cref{sec:monotone-delta-unique-eq},
%we are able demonstrate the monotonicity of $\delta$-uniqueness with respect to $\lambda = \alpha$.
However, by leveraging \Cref{obs:sol-exists-delta-sys-eq}, \Cref{thm:solve-sys-eq}, and the analysis of the correlation decay at current section,
we are able to extend the result in \Cref{thm:solve-sys-eq} for all $\delta \in (0, 1)$.

\begin{lemma} \label{lem:delta-unique-monotone-eq}
  Given $d \geq 2$ and $\delta \in [0, 1)$. Let $\lambda_\star > 0$ be such that $(\lambda_\star, d)$ is $\delta$-unique. Then it holds that for all $0 < \lambda \leq \lambda_\star$, $(\lambda, d)$ is $\delta$-unique.
\end{lemma}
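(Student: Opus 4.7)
My plan is to reformulate $\delta$-uniqueness as the contraction condition that we established earlier in \Cref{sec:SI}, and then propagate it from $\lambda_\star$ to smaller $\lambda$ via an implicit-differentiation analysis of the critical curve of \Cref{thm:resolve-lambda-2-c}.

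First, I would set up the equivalence. By \Cref{thm:contract-sym} applied with $\alpha=\lambda$, the $\delta$-uniqueness of $(\lambda,d,\lambda)$ is equivalent to the contraction bound
\[
V(\lambda) := \sup_{w > 0,\, x \ge 0} H_{\lambda,d,\lambda,w}(x) \;\le\; 1-\delta,
\]
since the forward direction is \Cref{thm:contract-sym} while the converse is immediate from $H_{\lambda,d,\lambda,w}(\hat x) = F'(\hat x)$ at any fixpoint $\hat x$. By the symmetrization \Cref{lem:sym-s} and \eqref{eq:sym-x-sup} we have $V(\lambda) = \sup_{z\in[1,1+\lambda]} U_{\lambda,d,\lambda}(z)$.

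Second, I would dispatch the trivial regime. Whenever $\lambda \le \frac{1-\delta}{d}\e^{1+\frac{1-\delta}{d}}$, \Cref{thm:delta-unique} (with $\alpha=\lambda$) immediately gives $\delta$-uniqueness of $(\lambda,d,\lambda)$. So it suffices to assume $\lambda > \frac{1-\delta}{d}\e^{1+\frac{1-\delta}{d}}$, and likewise for $\lambda_\star$ (otherwise the whole interval $(0,\lambda_\star]$ sits inside the trivial regime and the conclusion is immediate).

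Third, I would invoke \Cref{thm:delta-unique-implicit}: $(\lambda,d,\alpha)$ is $\delta$-unique iff $\lambda \ge \lambda_c(\alpha)$, where $\lambda_c(\alpha) := \lambda^{\delta}_{2,c}(d,\alpha)$ is the implicit critical threshold from \Cref{thm:resolve-lambda-2-c}. The hypothesis then reads $\lambda_\star \ge \lambda_c(\lambda_\star)$, and we must show $\lambda \ge \lambda_c(\lambda)$ for every $\lambda \in (\frac{1-\delta}{d}\e^{1+\frac{1-\delta}{d}}, \lambda_\star]$. A sufficient condition is that $h(\alpha) := \alpha - \lambda_c(\alpha)$ is non-increasing, i.e.\ $d\lambda_c/d\alpha \ge 1$ along the critical curve. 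Writing $L(x,\alpha,w) := x(1+\alpha(1+x)^{-w})^d$, the identity $T_\delta = 0$ of \Cref{thm:resolve-lambda-2-c} yields the clean partial $\partial_x \log L = \delta/x$, while $M_\delta = 0$ gives $\log(1+x_c)$ in closed form in terms of $(x_c,w_c,\alpha)$. Implicit differentiation of the system $T_\delta=M_\delta=0$ in $\alpha$, combined with the monotonicity facts from \Cref{lem:sign-d-lambda,lem:sign-d-xi,lem:sign-d-xm}, would then give the slope inequality $d\lambda_c/d\alpha \ge 1$. Granted this, $h(\lambda) \ge h(\lambda_\star) \ge 0$ for every $\lambda \in (\frac{1-\delta}{d}\e^{1+\frac{1-\delta}{d}}, \lambda_\star]$, which is the desired $\delta$-uniqueness.

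The main obstacle will be the transversality estimate $d\lambda_c/d\alpha \ge 1$: although one can verify this numerically at the symmetric $\delta=0$ point $w_c=d$, $x_c=1/(d-1)$, $\lambda_c = \alpha = \lambda_c(d+1)$ (where the ratio equals $|\lambda_c'(w)|/|\alpha'(w)| > 1$ by direct computation), extending it along the full critical curve requires a careful sign analysis of the Jacobian of $(T_\delta, M_\delta)$ combined with the chain-rule expression for $d\lambda_c/d\alpha$. A potential alternative, should the envelope/implicit calculation become unwieldy, is to work directly with the fixpoints: using the two-level factorization $F'(\hat x) = \frac{dw\hat x \hat y}{(1+\hat x)(1+\hat y)}$ and implicit differentiation of the fixpoint equations $\hat x = \lambda(1+\hat y)^{-d}, \hat y = \lambda(1+\hat x)^{-w}$, one obtains
\[
\frac{dF'(\hat x(\lambda))}{d\lambda} \;=\; \frac{dw\hat x\hat y}{\lambda(1-F')(1+\hat x)^2(1+\hat y)^2}\bigl[\,2-(d-1)\hat y-(w-1)\hat x\,\bigr],
\]
so that the monotonicity of $F'$ in $\lambda$ for each $w$ reduces to the sign condition $(d-1)\hat y + (w-1)\hat x \le 2$; however this condition need not hold for every $w$ individually, so this route must instead be carried out at the supremum over $w$, where the critical identities $T_\delta=M_\delta=0$ once again enter.
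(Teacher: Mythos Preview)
Your equivalence in step~1 is correct and is exactly the right reformulation: $\delta$-uniqueness of $(\lambda,d)$ holds iff $V(\lambda):=\sup_{z\in[1,1+\lambda]}U_{\lambda,d,\lambda}(z)\le 1-\delta$, with the forward direction being \Cref{thm:contract-sym} and the converse coming from $H(\hat x)=F'(\hat x)$ at fixpoints. Given this equivalence, the entire lemma reduces to the monotonicity statement $V(\lambda)\le V(\lambda_\star)$ for $\lambda\le\lambda_\star$. The paper uses precisely this route (it re-derives the converse direction inside a final contradiction argument, but that argument is logically just the converse of your equivalence combined with the tightness $V=1-\delta$ at criticality from \Cref{lem:contract-critical}).

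Where your plan diverges is in step~3: instead of proving $V$ monotone directly, you switch to the characterization $\lambda\ge\lambda_c(\lambda)$ and try to show $d\lambda_c/d\alpha\ge 1$ by implicitly differentiating the critical system $T_\delta=M_\delta=0$. You yourself flag this as the main obstacle, and your fallback through $\frac{d}{d\lambda}F'(\hat x(\lambda))$ also fails for individual $w$, as you note. There is no indication either variant closes: the Jacobian calculation you would need is at least as intricate as \Cref{sec:sign-d-lambda,sec:sign-d-xi,sec:sign-d-xm} combined, and the condition $\lambda_c'(\alpha)\ge 1$ is strictly stronger than what is required (one only needs that $\alpha-\lambda_c(\alpha)$ changes sign once, not that it is monotone).

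The paper sidesteps all of this by staying with your step~1 and proving monotonicity of $V$ directly from the two-variable function $U(\lambda,z)=U_{\lambda,d,\lambda}(z)$. The key is \Cref{lem:monotone-at-extreme-point}: whenever $\partial_z U(\lambda,z)=0$ with $z\in(1,1+\lambda)$, one has $\partial_\lambda U(\lambda,z)>0$. This is a self-contained computation: the first-order condition $\partial_z U=0$ yields an identity that, substituted into $\partial_\lambda U$, collapses to an expression bounded below by an explicitly positive term using only $\frac{x}{1+x}\le\log(1+x)\le x$. It follows (\Cref{lem:monotone-U}) that over the region $\{(\lambda,z):\lambda\le\lambda_\star,\ 1\le z\le 1+\lambda\}$ no interior point is a local maximum, so the sup is attained on the edge $\lambda=\lambda_\star$, i.e.\ $V(\lambda)\le V(\lambda_\star)$. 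No reference to $\lambda_c$, $T_\delta$, $M_\delta$, or any fixpoint data is needed for this step.
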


\begin{lemma} \label{lem:alpha-c-eq-lb}
  Let $d \geq 2$ and $\delta \in [0, 1)$.
  Let $(x_c, \alpha_c, w_c)$ be a solution of \eqref{eq:delta-sys-eq-restate} such that
  \[\textstyle \alpha_c \in \tp{\frac{1-\delta}{d} \e^{1 + \frac{1-\delta}{d}}, \lambda_c(d+1)}.\]
  Then it holds that
  \[ \alpha_c \geq (1 - 10\delta) \lambda_c(d+1). \]
\end{lemma}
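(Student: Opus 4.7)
I plan to split the argument into two cases according to the size of $\delta$, corresponding to a trivial regime and the main (delicate) regime.

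\textbf{Case 1: $\delta \ge 1/10$.} This case is trivial: the target lower bound $(1 - 10\delta)\lambda_c(d+1)$ is nonpositive, while by \Cref{obs:sol-exists-delta-sys-eq} we already have $\alpha_c > \frac{1-\delta}{d}\e^{1 + (1-\delta)/d} > 0$, so there is nothing to prove.

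\textbf{Case 2: $\delta \in [0, 1/10)$.} Here I will view $\alpha_c = \alpha_c(\delta)$ as an implicit function of $\delta$ determined by the system \eqref{eq:delta-sys-eq-restate} (exactly as in \Cref{sec:d-delta}), extending \Cref{lem:d-delta} from the single point $\delta = 0$ to the whole interval $[0, 1/10)$. The plan is to establish the uniform derivative bound
\[
|\alpha_c'(\delta)| \le 10\,\lambda_c(d+1), \qquad \delta \in [0, 1/10),
\]
after which integrating from $0$ to $\delta$ and using $\alpha_c(0) = \lambda_c(d+1)$ (as obtained in \Cref{lem:exact-0-unique-eq}) immediately yields $\alpha_c(\delta) \ge (1 - 10\delta)\lambda_c(d+1)$, as required. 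Note that \Cref{lem:d-delta} already supplies the tight value $|\alpha_c'(0)| = \frac{d}{d-1}\lambda_c(d+1) \le 2\,\lambda_c(d+1)$ for $d \ge 2$, so one expects $10$ to be a rather loose constant.

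To carry this out, I will apply the implicit function theorem to the system $T_\delta = M_\delta = G = 0$ at each $\delta \in [0, 1/10)$, with $(x, w, \alpha)$ as dependent variables. A computation analogous to the one in \Cref{sec:d-delta}, using Cramer's rule, expresses $\alpha_c'(\delta)$ as a ratio of $3 \times 3$ Jacobian determinants built from the partial derivatives of $T_\delta, M_\delta, G$ evaluated at the solution $(x_c(\delta), w_c(\delta), \alpha_c(\delta))$. Non-degeneracy of the Jacobian at each such solution follows from the strict contraction structure already established in \Cref{sec:critical-contraction}: at the critical fixpoint $x_c$ the equality $H(x_c) = 1 - \delta$ is attained non-degenerately, since $H''(x_c) < 0$ by \Cref{cor:dH(xc)}. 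Combining this non-degeneracy with uniform compactness of the solution set --- \Cref{obs:sol-exists-delta-sys-eq} confines $\alpha_c \in \bigl(\frac{1-\delta}{d}\e^{1+(1-\delta)/d},\,\lambda_c(d+1)\bigr)$, and analogous monotonicity/continuity arguments will keep $x_c$ and $w_c$ in a bounded region --- produces the desired uniform bound on the Jacobian ratio.

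The main obstacle is precisely this uniform control: one must ensure that, as $\delta$ ranges over $[0, 1/10)$ and $d$ ranges over $\{2, 3, \dots\}$, the Jacobian ratio defining $\alpha_c'(\delta)$ neither blows up nor collapses. The factor $10$ in the statement is chosen precisely to absorb the potential slack that accumulates when extending the tight constant $d/(d-1) \le 2$ valid at $\delta = 0$ to the whole interval, and simultaneously to ensure that Case 1 takes over exactly at $\delta = 1/10$ so that no further argument is needed for larger $\delta$.
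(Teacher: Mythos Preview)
Your approach differs substantially from the paper's and, as written, has a genuine gap.

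The paper does not use the implicit function theorem at all here. Instead it exploits the correlation-decay machinery of \Cref{sec:critical-contraction}: since $(x_c,\alpha_c,w_c)$ satisfies \Cref{cond:critical} with $\lambda_\star=\alpha_c$, \Cref{lem:contract-critical} gives $\sup_z U(\lambda_\star,z)=1-\delta$, attained at some $z_\star=1+\epsilon_\star\lambda_\star$. Two concrete calculus lemmas locate $\epsilon_\star\in[3/20,1]$ (\Cref{lem:max-U-z-lb}) and show $\partial_t\log U(\e^t,z_\star)\ge 1/10$ on that range (\Cref{lem:dU-lambda-lb}). Integrating in $t=\log\lambda$ from $\lambda_\star$ to $\lambda_c=\lambda_c(d{+}1)$ and using $U(\lambda_c,z_\star)\le 1$ (from $0$-uniqueness) yields $\tfrac{1}{10}\log(\lambda_c/\lambda_\star)\le\log\tfrac{1}{1-\delta}$, hence $\alpha_c=\lambda_\star\ge(1-\delta)^{10}\lambda_c\ge(1-10\delta)\lambda_c$. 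The constant $10$ falls out of the explicit bound $1/10$ on the logarithmic derivative of $U$; no case split on $\delta$ is needed.

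Your proposal, by contrast, reduces everything to a uniform bound $|\alpha_c'(\delta)|\le 10\,\lambda_c(d{+}1)$ on $[0,1/10)$, but this bound is only asserted, not established. The obstacles you flag are real and unaddressed: (i) the Jacobian computation of \Cref{sec:d-delta} simplifies drastically at $\delta=0$ because several terms vanish identically there (e.g.\ $\partial_x G=0$ uses $T_0=0$, and $\partial_x T=0$ uses $M_0=0$); for $\delta>0$ you would face the full $3\times3$ determinant with no such cancellations. (ii) Compactness must be uniform in $d\ge 2$, but the relevant quantities scale with $d$ (e.g.\ $\lambda_c(d{+}1)\sim \e/d$), so a soft compactness argument cannot by itself produce the explicit constant $10$. (iii) The non-degeneracy you cite, $H''(x_c)<0$, concerns the contraction map $H$ and is not obviously equivalent to invertibility of the Jacobian of $(T_\delta,M_\delta,G)$ in $(x,w,\alpha)$. (iv) The lemma is stated for \emph{any} solution in the range, whereas following a single implicit branch from $\delta=0$ presumes uniqueness of that branch. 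In short, the strategy is plausible in spirit but the hard analytic work---the uniform derivative bound with an explicit constant---is exactly what remains to be done, and the paper's route via $U$ sidesteps it entirely.
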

\Cref{lem:delta-unique-monotone-eq} indicates that when $\delta$-uniqueness holds for larger $\lambda$, then it will automatically hold for smaller $\lambda$.
Then, \Cref{lem:alpha-c-eq-lb} extends \Cref{thm:solve-sys-eq} for all $\delta \in (0, 1)$.
The proof of \Cref{lem:delta-unique-monotone-eq} and \Cref{lem:alpha-c-eq-lb} relies on the correlation decay analysis in previous subsections, and will be given in \Cref{sec:monotone-delta-unique-eq} and \Cref{sec:alpha-c-eq-lb}, respectively.
%where it is restated as \Cref{lem:delta-unique-monotone-eq}.

\Cref{thm:delta-unique-eq} is then proved by combining \Cref{thm:delta-unique-implicit}, \Cref{thm:resolve-lambda-2-c},  \Cref{lem:delta-unique-monotone-eq}, and \Cref{lem:alpha-c-eq-lb}.
\begin{proof}[Proof of \Cref{thm:delta-unique-eq}]
Due to \Cref{obs:sol-exists-delta-sys-eq} and \Cref{lem:alpha-c-eq-lb}, \eqref{eq:delta-sys-eq-restate} has a solution $(x_c, \alpha_c, w_c)$ with $\alpha_c = \lambda^\delta_{2,c}$.
And we have $\alpha_c \in (\frac{1-\delta}{d}\e^{1+\frac{1-\delta}{d}}, \lambda_c(d+1))$.
Hence, $\lambda_\star := \alpha_c \geq \lambda^\delta_{2,c}$, which, by \Cref{thm:delta-unique-implicit}, implies that $(\lambda_\star, d)$ is $\delta$-unique.
By \Cref{lem:alpha-c-eq-lb}, it holds that $\lambda_\star \geq (1 - 10 \delta)\lambda_c(\Delta)$.
So, for $\lambda \leq (1 - 10\delta)\lambda_c(\Delta) \leq \lambda_\star$, by \Cref{lem:delta-unique-monotone-eq}, $(\lambda, d)$ is also $\delta$-unique.
Finally, we get \Cref{thm:delta-unique-eq} by substituting $\delta$ to $\delta/10$.
%
%and the monotonicity stated in  \Cref{lem:delta-unique-monotone-eq-pre},
%for any $\Delta=d+1\ge 3$, there is a $\delta_0>0$ such that for all $\delta\in[0,\delta_0)$, if $\lambda=\alpha \le (1 - \delta)\lambda_c(\Delta)$, 
%then the system in \eqref{eq:delta-sys-eq} has a solution $(\delta_c, w_c, x_c)$ with $\delta_c\ge\frac{\delta}{4}$, 
%which, by \Cref{thm:resolve-lambda-2-c}, means that $\alpha = \lambda \geq \lambda^{\delta_c}_{2,c}$, and by \Cref{thm:delta-unique-implicit}, means that $(\lambda, d)$ is $\delta_c$-unique, that is, $\frac{\delta}{4}$-unique.
%\Cref{thm:delta-unique-eq} is proved.
%
%  \Cref{thm:delta-unique-eq} follows straightforwardly by  combining  \Cref{thm:delta-unique-implicit}, \Cref{thm:resolve-lambda-2-c}, \Cref{thm:solve-sys-eq}, and \Cref{lem:delta-unique-monotone-eq-pre}.
\end{proof}

\subsubsection{Consequence of contraction: monotonicity of $\delta$-uniqueness for $\lambda = \alpha$}
\label{sec:monotone-delta-unique-eq}
We now prove \Cref{lem:delta-unique-monotone-eq}.
%\begin{lemma} \label{lem:delta-unique-monotone-eq}
%	Fix any $\delta \in [0, 1), d \geq 2$, and $\lambda_\star > 0$ such that $(\lambda_\star, d)$ is $\delta$-unique. Then for all $\lambda$ such that $0 < \lambda \leq \lambda_\star$, $(\lambda, d)$ is $\delta$-unique.
%\end{lemma}
%
%We will prove \Cref{lem:delta-unique-monotone-eq} in the rest part of \Cref{sec:monotone-delta-unique-eq}.
%
Let $\^T \subseteq \^R^2$ be a triangular region defined by
\begin{align} \label{eq:U-triangle}
  \^T := \{(\lambda, z) \mid \lambda \geq 0 \text{ and } 1 \leq z \leq \lambda\}.
\end{align}
Given $d \geq 2$, consider the function $U: \^T  \to \^R$ defined in  \Cref{lem:sym-s} by letting $\lambda = \alpha$ as
\begin{align}\label{eq:U-lambda-eq-alpha}
  U(\lambda, z) := \frac{\lambda z^{-d}}{(1 + \lambda z^{-d}) \log (1 + \lambda z^{-d})} \cdot d \cdot \frac{z - 1}{z} \log \tp{\frac{\lambda}{z - 1}},
\end{align}
where we consider $U$ as function of $\lambda$ and $z$.
Note that by definition, $U(\lambda, z) = U_{\lambda, d, \lambda}(z)$.
For convenience, we use the notation $\partial_1 U := \partial U/ \partial \lambda$ and $\partial_2 U := \partial U / \partial z$.

\begin{fact} \label{fact:trivial-boundaries}
  For $d \geq 2$, we have
  \begin{align*}
    \lim_{(\lambda, z) \to (0, 0)} U(\lambda, z) = 0.
  \end{align*}
  Moreover, for a fixed $\lambda > 0$, it holds that 
  \begin{align*}
    \lim_{z \to 1} U(\lambda, z) = \lim_{z \to 1 + \lambda} U(\lambda, z) = 0.
  \end{align*}
\end{fact}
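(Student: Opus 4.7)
The plan is to split $U(\lambda, z)$ into two factors and control each separately with standard one-variable limits. Write $U(\lambda, z) = A(\lambda, z) \cdot B(\lambda, z)$, where
\[
A(\lambda, z) := \frac{\lambda z^{-d}}{(1 + \lambda z^{-d}) \log(1 + \lambda z^{-d})}, \qquad B(\lambda, z) := d \cdot \frac{z-1}{z} \log\!\left(\frac{\lambda}{z-1}\right).
\]
For $A$, I would use the elementary limit $\log(1+y)/y \to 1$ as $y \to 0$ applied to $y := \lambda z^{-d}$; this gives $A \to 1$ whenever $y \to 0$, and on any compact subregion bounded away from the ``infinite'' boundaries, $A$ stays uniformly bounded between positive constants. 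For $B$, I would introduce the substitution $s := (z-1)/\lambda \in [0,1]$ (valid because the natural domain coming from \Cref{lem:sym-s} has $z - 1 \in [0, \lambda]$, matching $\alpha = \lambda$), which rewrites
\[
B(\lambda, z) = -\, d \cdot \frac{\lambda s}{1 + \lambda s} \cdot \log s,
\]
and then invoke the classical bound $-s\log s \le 1/\e$ on $s \in (0, 1]$ to obtain the uniform estimate $|B(\lambda, z)| \le d\lambda/\e$.

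With these two pieces, all three claimed limits should follow immediately. For the corner limit, approaching the vertex where both $\lambda \to 0$ and $z \to 1$, the bound $|B| \le d\lambda/\e$ forces $B \to 0$, while $A$ stays bounded (in fact $A \to 1$ since $y = \lambda z^{-d} \to 0$), so $U \to 0$. For a fixed $\lambda > 0$, I handle the two endpoints of the interval $z \in [1, 1+\lambda]$ separately. As $z \to 1^+$, the factor $A$ converges to the finite constant $\lambda/\bigl((1+\lambda)\log(1+\lambda)\bigr)$, while $B \to 0$ because writing $t := z - 1 \to 0^+$ gives $t\log(\lambda/t) = t\log\lambda - t\log t \to 0$ via the standard limit $t \log t \to 0$. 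As $z \to (1+\lambda)^-$, the argument $\lambda/(z-1) \to 1$ forces $B \to d \cdot \tfrac{\lambda}{1+\lambda} \cdot \log 1 = 0$, while $A$ stays at the finite value $\lambda(1+\lambda)^{-d}/\bigl((1 + \lambda(1+\lambda)^{-d})\log(1 + \lambda(1+\lambda)^{-d})\bigr)$.

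I do not anticipate any substantive obstacle: the only indeterminacies are of the form $0 \cdot \infty$ coming from $\log(\lambda/(z-1))$, and both are resolved by the uniform bound $-s\log s \le 1/\e$. The only minor interpretive point is the ``$(0,0)$'' in the statement, which I read as shorthand for approaching the corner where both $\lambda$ and $z - 1$ vanish --- i.e.\ the vertex $(\lambda, z) = (0, 1)$ of the natural triangle $\{1 \le z \le 1+\lambda\}$ --- since this is the vertex that becomes the origin after the substitution $t = z - 1$.
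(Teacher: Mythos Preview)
Your argument is correct. The paper states this as a \emph{Fact} and offers no proof, treating all three limits as elementary; your factorization $U = A\cdot B$ together with the substitution $s = (z-1)/\lambda$ and the bound $-s\log s \le 1/\e$ cleanly handles every case, and your reading of ``$(\lambda,z)\to(0,0)$'' as the vertex $(\lambda,z)=(0,1)$ of the triangle $\{1\le z\le 1+\lambda\}$ (equivalently $(\lambda,z-1)\to(0,0)$) is the only interpretation consistent with the domain and with how the fact is used in \Cref{lem:monotone-U}.
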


\begin{lemma} \label{lem:monotone-at-extreme-point}
  Let $d \geq 2$.
  For every $\lambda > 0$ and every $z \in (1, 1+\lambda)$, we have
  \begin{align*}
    \partial_2 U(\lambda, z) = 0 \quad \text{implies} \quad \partial_1 U(\lambda, z) > 0.
  \end{align*}
\end{lemma}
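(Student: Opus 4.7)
The key idea is to decompose $U = A\cdot B$ with $A(\lambda, z) := f(\lambda z^{-d})$, where $f(t):=t/((1+t)\log(1+t))$, and $B(\lambda,z) := \frac{d(z-1)}{z}\log\frac{\lambda}{z-1}$, and then exploit the fact that $A$ depends on $(\lambda, z)$ only through the product $y := \lambda z^{-d}$. Consequently the Euler-type operator $\+L := d\lambda\,\partial_1 + z\,\partial_2$ annihilates $y$, whence $\+LA\equiv 0$ and $\+L U = A\cdot \+L B$. A short differentiation of $B$ gives
\[
\+L B \;=\; z\,\partial_2 B + d\lambda\,\partial_1 B \;=\; \frac{d}{z}\bigl[\ell + d(z-1) - z\bigr],\qquad \ell := \log\tfrac{\lambda}{z-1}.
\]
Therefore at any point where $\partial_2 U = 0$, one has $d\lambda\,\partial_1 U = \+L U = \frac{A d}{z}\bigl[\ell + d(z-1) - z\bigr]$. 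Since $A,d,\lambda,z>0$, the task collapses to the scalar inequality $\ell > d - (d-1)z$.

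I would then split on the sign of $d-(d-1)z$. If $z \geq d/(d-1)$, then $d-(d-1)z \le 0 < \ell$ (using $z < 1+\lambda$, which gives $\lambda/(z-1) > 1$ and hence $\ell > 0$), and the inequality is trivial. Otherwise $z \in (1, d/(d-1))$, and I would invoke the constraint $\partial_2 U = 0$ explicitly. Setting $M := \log(1+y)$ and $\phi(y) := (y-M)/((1+y)M)$, and using the identities $\partial_2 A = f'(y)(-d\lambda z^{-d-1})$, $f'/f = -(y-M)/(y(1+y)M)$, and $\partial_2 B = (d/z^2)(\ell-z)$, the equation $\partial_2 U = 0$ simplifies to the clean relation
\[
d(z-1)\,\ell\,\phi(y) \;=\; z - \ell,
\]
so $\ell = z/(1 + d(z-1)\phi(y))$. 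Substituting this into $\ell > d-(d-1)z$ and clearing positive denominators reduces the inequality to $\bigl[d - (d-1)z\bigr]\,\phi(y) < 1$.

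To close, I would verify that both factors lie strictly in $(0,1)$. For $d\geq 2$ and $z \in (1, d/(d-1))$, we have $d - (d-1)z = 1 - (d-1)(z-1) \in (0,1)$. And $\phi(y) < 1$ for every $y>0$, since $\phi(y) < 1$ is equivalent to $(y+2)\log(1+y) > y$: letting $g(y) := (y+2)\log(1+y) - y$, one has $g(0) = 0$ and $g'(y) = \log(1+y) + 1/(1+y) > 0$ for $y > 0$, so $g > 0$ on $(0,\infty)$. The product of two numbers in $(0,1)$ is strictly less than $1$, which completes the proof. The main conceptual obstacle is spotting the Euler operator $\+L$ that exploits the joint $(\lambda, z)$-homogeneity of $A$; once $\partial_1 U$ at critical points is expressed via a single scalar inequality, everything else is elementary algebra together with one standard logarithmic estimate.
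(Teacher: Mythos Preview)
Your proof is correct and takes a genuinely different route from the paper's argument.

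The paper computes both $\partial_2 U$ and $\partial_1 U$ explicitly as rational expressions in $\lambda,z,d,\log(1+\lambda z^{-d}),\log\frac{\lambda}{z-1}$. It then uses the identity coming from $\partial_2 U=0$ to substitute into $\partial_1 U$ and reduces to
\[
\partial_1 U(\lambda,z)=c_2(\lambda,z)\Bigl(\bigl(-(d-1)z^{d+1}+(d+1)z^d+\lambda\bigr)\log(1+\lambda z^{-d})+\lambda\bigl(d(z-1)-z\bigr)\Bigr),
\]
with $c_2>0$, and finishes by applying the sandwich $\tfrac{x}{1+x}\le\log(1+x)\le x$ termwise.

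Your argument is more structural: you exploit that the ``hard'' factor $A=f(\lambda z^{-d})$ depends only on $y=\lambda z^{-d}$, so the Euler-type operator $\+L=d\lambda\,\partial_1+z\,\partial_2$ kills it, and at a critical point $\partial_2 U=0$ one gets directly $d\lambda\,\partial_1 U=\tfrac{Ad}{z}\bigl[\ell+d(z-1)-z\bigr]$. This collapses the question to the scalar inequality $\ell>d-(d-1)z$, which you dispatch by a case split and, in the nontrivial range $z\in(1,\tfrac{d}{d-1})$, by rewriting the constraint $\partial_2 U=0$ as $\ell=z/(1+d(z-1)\phi(y))$ and observing that both $d-(d-1)z$ and $\phi(y)$ lie in $(0,1)$. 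The only analytic input needed is the single estimate $(y+2)\log(1+y)>y$, rather than two different logarithmic bounds applied to a more complicated expression.

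In short: the paper's proof is a direct computation followed by standard log estimates; yours isolates the invariance of $A$ under $\+L$ to avoid ever differentiating the complicated factor with respect to $\lambda$ and $z$ separately, yielding a shorter and more transparent endgame. Both are valid; your approach makes the underlying reason for the inequality more visible.
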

The proof of \Cref{lem:monotone-at-extreme-point} is postponed to the end of \Cref{sec:monotone-delta-unique-eq}.

\begin{lemma} \label{lem:monotone-U}
  Let $d \geq 2$ and $\lambda_\star > 0$. 
  Define the region $\^T_\star := \^T \cap \{(\lambda, z) \mid \lambda \leq \lambda_\star\}$.
  Then, it holds that, %for any $0 < \lambda < \lambda_\star$, %$\sup_{(\lambda, z) \in \^T_\star} U(\lambda, z) \leq \theta$.
  \begin{align*} 
    \sup_{(\lambda, z) \in \^T_\star} U(\lambda, z) = \sup_{z \in [1, 1 + \lambda_\star]} U(\lambda_\star, z).
  \end{align*}
\end{lemma}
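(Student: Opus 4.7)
The plan is to argue that $U$ attains its supremum on the compact region $\mathcal{T}_\star$, that this supremum is strictly positive, and that no interior critical point of $U$ inside $\mathcal{T}_\star$ can achieve it. Since $U$ vanishes on the three ``degenerate'' boundary pieces $\{z=1\}$, $\{z = 1+\lambda\}$, and $\{\lambda = 0\}$, the only remaining place where the maximizer can sit is the right edge $\{\lambda = \lambda_\star\}$, which is exactly what the claimed identity asserts.

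First, I would set up continuity and compactness. By \Cref{fact:trivial-boundaries}, $U$ extends continuously by zero to each of the three degenerate boundary segments of $\mathcal{T}_\star$. Combined with the continuity of $U$ on the relatively open part, this makes (the extended) $U$ continuous on the compact set $\mathcal{T}_\star$, so the supremum $M := \sup_{\mathcal{T}_\star} U$ is attained at some point $(\lambda^\star, z^\star) \in \mathcal{T}_\star$.

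Next, I would observe $M > 0$ by inspecting \eqref{eq:U-lambda-eq-alpha}: whenever $\lambda > 0$ and $1 < z < 1 + \lambda$, each of the four factors
\[
\frac{\lambda z^{-d}}{1 + \lambda z^{-d}}, \quad \frac{1}{\log(1 + \lambda z^{-d})}, \quad \frac{z-1}{z}, \quad \log\!\tp{\frac{\lambda}{z-1}}
\]
is strictly positive, so $U > 0$ throughout the interior of $\mathcal{T}_\star$. In particular, the maximizer $(\lambda^\star, z^\star)$ cannot lie on the zero-locus $\{z = 1\} \cup \{z = 1+\lambda\} \cup \{\lambda = 0\}$.

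Finally, I would rule out interior maximizers. Suppose for contradiction that $0 < \lambda^\star < \lambda_\star$ and $1 < z^\star < 1 + \lambda^\star$. Then $(\lambda^\star, z^\star)$ is an interior critical point of $U$ on $\mathcal{T}_\star$, so first-order optimality forces both $\partial_1 U(\lambda^\star, z^\star) = 0$ and $\partial_2 U(\lambda^\star, z^\star) = 0$. But \Cref{lem:monotone-at-extreme-point} says that $\partial_2 U(\lambda^\star, z^\star) = 0$ implies $\partial_1 U(\lambda^\star, z^\star) > 0$, a contradiction. Hence $\lambda^\star = \lambda_\star$, i.e.\ the maximizer lies on the right edge; this yields $M = \sup_{z \in [1, 1+\lambda_\star]} U(\lambda_\star, z)$, as desired. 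The heavy lifting has already been outsourced to \Cref{lem:monotone-at-extreme-point}: once that ``gradient rigidity'' along the critical set is available, the present lemma is just a compactness-plus-first-order-condition argument and there is no substantive obstacle.
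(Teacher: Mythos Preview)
Your proposal is correct and follows essentially the same approach as the paper: use \Cref{lem:monotone-at-extreme-point} to rule out interior maximizers and \Cref{fact:trivial-boundaries} to rule out the degenerate boundary pieces, leaving only the edge $\lambda=\lambda_\star$. Your write-up spells out the compactness/attainment step and the strict positivity $M>0$ more carefully than the paper does, but the argument is the same.
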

\begin{proof}
  For every interior point $(\lambda, z) \in \^T_\star$, by \Cref{lem:monotone-at-extreme-point}, it holds that
  \begin{align*}
    \min\{\partial_1 U(\lambda, z), \partial_2(\lambda, z)\} > 0.
  \end{align*}
  Hence, there is a neighbor of $(\lambda', z')$ of $(\lambda, z)$ such that
  \begin{align*}
    U(\lambda, z) < U(\lambda', z').
  \end{align*}
  This indicates that the supremum of $U$ is achieved at the boundary of $\^T_\star$.
  According to \Cref{fact:trivial-boundaries}, 
  \begin{align*}
    \sup_{(\lambda, z) \in \^T_\star} U(\lambda, z) &= \sup_{z \in [1, 1 + \lambda_\star]} U(\lambda_\star, z). \qedhere
  \end{align*}
\end{proof}

Now, we are ready to prove \Cref{lem:delta-unique-monotone-eq}.

\begin{proof} [Proof of \Cref{lem:delta-unique-monotone-eq}]
  Since $(\lambda_\star, d)$ is $\delta$-unique, pick a $w > 0$, by \Cref{thm:contract-sym}, it holds that 
  \begin{align} \label{eq:U-lambda-star-z}
    \sup_{z \in [1, 1 + \lambda_\star]} U(\lambda_\star, z) = \sup_{z\in[1, 1+\lambda_\star]} U_{\lambda_\star, d, \lambda_\star}(z) \overset{\eqref{eq:sym-x-sup}}{=} \sup_{x \in [0, +\infty]} H_{\lambda_\star, d, \lambda_\star, w}(x) \leq 1 - \delta,
  \end{align}
  where the last inequality holds by \Cref{thm:contract-sym}.
  Then, by \Cref{lem:monotone-U}, it holds that
  \begin{align} \label{eq:U-lambda-z}
    \sup_{z \in [1, 1 + \lambda]} U_{\lambda, d, \lambda}(z) = \sup_{z \in [1, 1 + \lambda]} U(\lambda, z) \leq \sup_{(\lambda, z) \in \^T_\star} U(\lambda, z) \leq \sup_{z \in [1, 1 + \lambda_\star]} U(\lambda_\star, z).
  \end{align}
  Combining \eqref{eq:U-lambda-star-z} and \eqref{eq:U-lambda-z}, it holds that $\sup_{z \in [1, 1+\lambda]} U_{\lambda, d, \lambda}(z) \leq 1 - \delta$.
  
  Now, we prove that $(\lambda, d)$ is $\delta$-unique by contradiction.
  Suppose $(\lambda, d)$ is not $\delta$-unique, then by \Cref{thm:delta-unique}, it holds that $\lambda > \frac{1-\delta}{d} \e^{1 + \frac{1-\delta}{d}}$ and $\lambda < \lambda_c(\delta, d, \alpha = \lambda)$, where $\lambda_c$ is the critical threshold defined in \Cref{thm:delta-unique}.
  Hence, it holds that
  \begin{align} \label{eq:U-lambda-c-less-than}
    \sup_{z \in [1, 1 + \lambda]} U_{\lambda_c, d, \lambda}(z) < \sup_{z \in [1, 1+\lambda]} U_{\lambda, d, \lambda}(z) \leq 1 - \delta,
  \end{align}
  where the explanation of the first inequality is stated as follows.
  Note that for a fixed $z \in (1, 1 + \alpha)$, the function $U_{\lambda, d, \alpha}$ is strictly decreasing in $\lambda$.
  Moreover, since $\lim_{z\to 1} U_{\lambda, d, \alpha}(z) = \lim_{z\to 1+\alpha} U_{\lambda, d, \alpha}(z) = 0$, once $\alpha, \lambda, d > 0$, it holds that the supremum of $U_{\lambda, d, \alpha}(z)$ in the interval $[1, 1 + \alpha]$ is not obtained at $z = 1$ or $z = 1+\alpha$.

  On the other hand, let $w_c = w_c(\delta, d, \alpha = \lambda)$ be the critical threshold in \Cref{thm:delta-unique}.
  According to \Cref{lem:contract-critical}, it holds that
  \begin{align} \label{eq:U-lambda-c-eq}
    1 - \delta = \sup_{x \in [0, +\infty]} H_{\lambda_c, d, \lambda, w_c} \overset{\eqref{eq:sym-x-sup}}{=} \sup_{z \in [1, 1 + \lambda]} U_{\lambda_c, d, \lambda}(z).
  \end{align}
  Combining \eqref{eq:U-lambda-c-less-than} and \eqref{eq:U-lambda-c-eq}, we get $1 - \delta < 1 - \delta$, a contradiction!
  Hence $(\lambda, d)$ is $\delta$-unique.
\end{proof}

\begin{proof}[Proof of \Cref{lem:monotone-at-extreme-point}]
  By $\partial_2 U(\lambda, z) = 0$, it holds that
  \begin{align*}
    \frac{d \lambda  \left(\log \left(\frac{\lambda }{z-1}\right) \left(\left(-d z^{d+1}+(d+1) z^d+\lambda \right) \log \left(\lambda  z^{-d}+1\right)+d \lambda  (z-1)\right)-z \left(z^d+\lambda \right) \log \left(\lambda  z^{-d}+1\right)\right)}{z^2 \left(z^d+\lambda \right)^2 \log ^2\left(\lambda  z^{-d}+1\right)} &= 0,
  \end{align*}
  which implies that
  \begin{align}
    z \left(z^d+\lambda \right)
    \nonumber & \log \left(\lambda  z^{-d}+1\right) \\
     \label{eq:U-z} &= \log \left(\frac{\lambda }{z-1}\right) \left(\left(-d z^{d+1}+(d+1) z^d+\lambda \right) \log \left(\lambda  z^{-d}+1\right)+d \lambda  (z-1)\right)
  \end{align}

  Now, note that
  \begin{align} 
    \partial_1 U(\lambda, z)
    \nonumber &= \frac{d (z-1) \left(\left(z^d+\lambda \right) \log \left(\lambda  z^{-d}+1\right)+\log \left(\frac{\lambda }{z-1}\right) \left(z^d \log \left(\lambda  z^{-d}+1\right)-\lambda \right)\right)}{z \left(z^d+\lambda \right)^2 \log ^2\left(\lambda  z^{-d}+1\right)} \\
   \label{eq:U-lambda} &= c_1(\lambda, z) \cdot \left(\left(z^d+\lambda \right) \log \left(\lambda  z^{-d}+1\right)+\log \left(\frac{\lambda }{z-1}\right) \left(z^d \log \left(\lambda  z^{-d}+1\right)-\lambda \right)\right),
  \end{align}
  where $c_1(\lambda, z) := d(z-1) / (z \left(z^d+\lambda \right)^2 \log ^2\left(\lambda  z^{-d}+1\right))$.
  Plugging \eqref{eq:U-z} into \eqref{eq:U-lambda}, it holds that
  \begin{align*}
    \partial_1 U(\lambda, z)
    &= c_2(\lambda, z) \cdot \tp{\left(-(d-1) z^{d+1}+(d+1) z^d+\lambda \right) \log \left(\lambda  z^{-d}+1\right)+\lambda  (d (z-1)-z)},
  \end{align*}
  where $c_2(\lambda, z) := d(z-1)\log\tp{\frac{\lambda}{z-1}} / (z^2 \left(z^d+\lambda \right)^2 \log ^2\left(\lambda  z^{-d}+1\right))$.
  By the fact that $\frac{x}{1+x} \leq \log\tp{1 + x} \leq x$ for all $x > -1$, we have
  \begin{align*}
    \partial_1 U(\lambda, z) \geq c_2(\lambda, z) \cdot \min\left\{\lambda ^2 z^{-d}+\lambda, \frac{\lambda  \left(z^d+(d-1) \lambda  (z-1)\right)}{z^d+\lambda }\right\} > 0,
  \end{align*}
  where we use the fact that $z \in (1, 1 + \lambda)$ and $d \geq 2$.
\end{proof}

\subsubsection{Consequence of contraction: $\frac{\delta}{10}$-uniqueness for some $\lambda = \alpha \geq (1 - \delta) \lambda_c(\Delta)$}
\label{sec:alpha-c-eq-lb}
We now prove \Cref{lem:alpha-c-eq-lb}.
%Recall the system defined in \eqref{eq:delta-sys-eq} as
%\begin{align*} 
%  \begin{cases}
%    T_\delta := (1 - \delta)(x + 1)(\alpha + (1 + x)^w) - \alpha d w x = 0, \\
%    M_\delta := w \log(1 + x) (\alpha d - (1 + x)^{w+1}) + \delta(x + 1)(\alpha + (x + 1)^w) = 0, \\
%    G       := x(1 + \alpha(1 + x)^{-w})^d - \alpha = 0.
%  \end{cases}
%\end{align*}
%%
%For convenience, we restate \Cref{lem:alpha-c-eq-lb} as follow.
%
%\begin{lemma} \label{lem:alpha-c-eq-lb-restate}
%  Let $d \geq 2$ and $\delta \in [0, 1)$.
%  Let $(x_c, \alpha_c, w_c)$ be a solution of \eqref{eq:delta-sys-eq} such that
%  \[\textstyle \alpha_c \in \tp{\frac{1-\delta}{d} \e^{1 + \frac{1-\delta}{d}}, \lambda_c(d+1)}.\]
%  Then it holds that
%  \[ \alpha_c \geq (1 - 10\delta) \lambda_c(d+1). \]
%\end{lemma}
%We will prove \Cref{lem:alpha-c-eq-lb-restate} in the rest part of \Cref{sec:alpha-c-eq-lb}.
%
We have the following results regarding the $U(\lambda, z)$ defined in~\eqref{eq:U-lambda-eq-alpha}.
\begin{lemma} \label{lem:dU-lambda-lb}
  For any $d \geq 2$, $\lambda > 0$ and $z = 1 + \epsilon \lambda$ such that $\epsilon \in [3/20, 1]$, it holds that 
  \begin{align*}
    \left. \frac{\partial \log U(\e^t, z)}{\partial t} \right\vert_{\e^t = \lambda} \geq 1/10.
  \end{align*}
\end{lemma}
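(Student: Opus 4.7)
The plan is a direct calculation of the logarithmic derivative, followed by a reduction to a one-variable inequality. Taking logarithms in \eqref{eq:U-lambda-eq-alpha} and writing $u := \lambda z^{-d}$ and $\epsilon := (z-1)/\lambda$ (so that $z = 1+\epsilon\lambda$ and $\log(\lambda/(z-1)) = -\log\epsilon$), I get
\[
\log U(\lambda,z) = \log(\lambda z^{-d}) - \log(1+u) - \log\log(1+u) + \log d + \log\tfrac{z-1}{z} + \log(-\log\epsilon).
\]
Since $z$ is held fixed while differentiating in $t = \log\lambda$, only the first three summands and the last one depend on $\lambda$, so a short computation yields
\[
\left.\frac{\partial \log U(\e^t, z)}{\partial t}\right|_{\e^t=\lambda}
=\lambda\,\partial_\lambda \log U(\lambda,z)
= \frac{1}{1+u} - \frac{u}{(1+u)\log(1+u)} + \frac{1}{-\log\epsilon}.
\]

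Next I would exploit $d\ge 2$ to control $u$. Since $(1+\epsilon\lambda)^{-d}\le (1+\epsilon\lambda)^{-2}$, the function $\lambda \mapsto \lambda(1+\epsilon\lambda)^{-d}$ is pointwise dominated by $\lambda(1+\epsilon\lambda)^{-2}$, whose maximum over $\lambda>0$ is attained at $\epsilon\lambda=1$ and equals $1/(4\epsilon)$. Thus $u\le 1/(4\epsilon)$. For the first two terms, set
\[
f(u) := \frac{1}{1+u} - \frac{u}{(1+u)\log(1+u)} = \frac{\log(1+u)-u}{(1+u)\log(1+u)}.
\]
The elementary inequality $(2+u)\log(1+u)\ge 2u$ for $u\ge 0$ (which is trivially checked by comparing derivatives at $u=0$) rearranges to $\log(1+u)-u\ge -u^2/(2+u)$, and together with the standard bound $(1+u)\log(1+u)\ge u$ this gives $f(u)\ge -u/(2(1+u))$. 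Substituting $u\le 1/(4\epsilon)$ yields
\[
f(u) \ge -\frac{1/(4\epsilon)}{2(1+1/(4\epsilon))} = -\frac{1}{8\epsilon+2}.
\]

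Combining the two bounds,
\[
\lambda\,\partial_\lambda \log U(\lambda, z) \ge g(\epsilon) := \frac{1}{-\log\epsilon} - \frac{1}{8\epsilon+2},
\]
so it remains to show $g(\epsilon) \ge 1/10$ for $\epsilon\in[3/20,1]$. Both $\epsilon\mapsto 1/(-\log\epsilon)$ (derivative $1/(\epsilon\log^2\epsilon) > 0$) and $\epsilon\mapsto -1/(8\epsilon+2)$ are strictly increasing on $(0,1)$, so $g$ is strictly increasing and it suffices to check the endpoint $\epsilon=3/20$. There $g(3/20) = 1/\log(20/3) - 5/16$, and a direct numerical estimate (using $\log(20/3) < 1.9$ and $5/16 = 0.3125$) shows $g(3/20) > 0.52 - 0.32 > 1/10$. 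The main obstacle is simply keeping the numerical slack: since $-5/16$ already consumes most of $1/\log(20/3)$, the threshold $3/20$ in the hypothesis is essentially calibrated so that this bound just barely works, and using the weaker inequality $\log(1+u)\ge u/(1+u)$ in place of $\log(1+u)\ge 2u/(2+u)$ would not suffice.
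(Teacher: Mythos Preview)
Your proof is correct and takes essentially the same route as the paper: compute the logarithmic derivative explicitly, bound the negative piece via $\log(1+u)\ge 2u/(2+u)$, and reduce to a one-variable inequality in $\epsilon$. The paper proceeds from the same intermediate expression $\tfrac{1}{\log(1/\epsilon)}-\tfrac{\lambda}{2(z^d+\lambda)}$ but uses Bernoulli's inequality $z^d\ge 1+d\epsilon\lambda$ to reach $\tfrac{1}{\log(1/\epsilon)}-\tfrac{1}{2(\epsilon d+1)}$ (then specializes $d\ge 2$), whereas you bound $u=\lambda(1+\epsilon\lambda)^{-d}\le 1/(4\epsilon)$ by optimizing over $\lambda$; the endpoint check at $\epsilon=3/20$ is the same in spirit.

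One small slip in the justification of $f(u)\ge -u/(2(1+u))$: as you wrote it (numerator $\ge -u^2/(2+u)$, denominator $\ge u$), the combination only gives $f(u)\ge -u/(2+u)$, and substituting $u\le 1/(4\epsilon)$ yields $-1/(8\epsilon+1)$, which at $\epsilon=3/20$ gives roughly $0.527-0.455\approx 0.07<1/10$. The claimed bound is nonetheless true and follows from the \emph{single} inequality $\log(1+u)\ge 2u/(2+u)$ applied directly to the second term:
\[
\frac{u}{(1+u)\log(1+u)}\le \frac{2+u}{2(1+u)},\qquad\text{so}\qquad f(u)\ge \frac{1}{1+u}-\frac{2+u}{2(1+u)}=-\frac{u}{2(1+u)}.
\]
The auxiliary bound $(1+u)\log(1+u)\ge u$ is not needed. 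With this fix the argument goes through exactly as you wrote it.
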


\begin{lemma} \label{lem:max-U-z-lb}
  Fix integer $d \geq 2$ and real $0 < \lambda \leq \lambda_c(d+1)$, the function $z \mapsto U(\lambda, z)$ achieves its maximum in the interval $[1, 1 + \lambda]$ at the point $z_\star = 1 + \epsilon_\star \lambda$ such that $\epsilon_\star \in [3/20, 1]$.
\end{lemma}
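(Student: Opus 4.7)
The plan begins by noting that $U(\lambda,\cdot)$ vanishes at both endpoints of $[1,1+\lambda]$: at $z=1$ we have $(z-1)/z=0$, and at $z=1+\lambda$ we have $\log(\lambda/(z-1))=0$, while every factor of $U(\lambda,z)$ is strictly positive on the open interior $(1,1+\lambda)$. Hence a maximizer $z_\star=1+\epsilon_\star\lambda$ exists in the open interior, giving the upper bound $\epsilon_\star<1$ for free.

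To establish $\epsilon_\star\geq 3/20$, my strategy is to show that $\partial_2 U(\lambda,z)>0$ for every $z$ with $\epsilon:=(z-1)/\lambda\in(0,3/20]$, which forces $U(\lambda,\cdot)$ to be strictly increasing on $[1,1+3\lambda/20]$, so the maximum lies to the right of this sub-interval. Starting from the explicit expression for $\partial_2 U$ derived in the proof of~\Cref{lem:monotone-at-extreme-point}, I rewrite $-dz^{d+1}+(d+1)z^d+\lambda=-dz^d(z-1)+z^d+\lambda$, which allows me to regroup the numerator bracket of $\partial_2 U$ as
\[
(z^d+\lambda)\log(1+\lambda z^{-d})+d(z-1)\bigl(\lambda-z^d\log(1+\lambda z^{-d})\bigr).
\]
The elementary inequality $\log(1+x)\leq x$ applied at $x=\lambda z^{-d}$ yields $z^d\log(1+\lambda z^{-d})\leq \lambda$, so the second summand is non-negative. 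After dividing the sign-equivalent inequality by the positive quantity $(z^d+\lambda)\log(1+\lambda z^{-d})$ and observing $\log(\lambda/(z-1))=\log(1/\epsilon)$, the positivity of $\partial_2 U$ reduces to the one-variable inequality
\[
\log(1/\epsilon)>z=1+\epsilon\lambda.
\]

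I would close by verifying this simpler inequality uniformly in the remaining parameters. For $\epsilon\in(0,3/20]$ one has $\log(1/\epsilon)\geq\log(20/3)>1.89$. On the other hand, since $\lambda_c(\Delta+1)/\lambda_c(\Delta)=\bigl[1-1/(\Delta-1)^2\bigr]^\Delta<1$ for $\Delta\geq 3$, the threshold $\lambda_c(\Delta)$ is monotonically decreasing in $\Delta\geq 3$, so $\lambda\leq\lambda_c(d+1)\leq\lambda_c(3)=4$ for every integer $d\geq 2$, and therefore $1+\epsilon\lambda\leq 1+3\cdot 4/20=1.6<1.89$. This gives strict inequality with positive slack and finishes the lower bound.

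The main obstacle I anticipate is actually quite mild: the bracket regrouping together with the concavity bound $\log(1+x)\leq x$ collapses the three-parameter inequality into the single scalar inequality $\log(1/\epsilon)>1+\epsilon\lambda$, where the uniform bound $\lambda\leq\lambda_c(3)=4$ supplies enough room. Had the slack been tight, one would have had to retain the non-negative second summand and perform a case split on small versus large $d$ (exploiting $\lambda_c(d+1)=O(1/d)$), but I expect the direct route above to go through cleanly.
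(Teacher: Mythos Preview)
Your argument is correct and in fact simpler than the paper's. Both the paper and you reduce the sign of $\partial_2 U$ to the sign of the quantity
\[
S(\lambda,z)=\log\!\left(\tfrac{\lambda}{z-1}\right)\Bigl[(-dz^{d+1}+(d+1)z^d+\lambda)\log(1+\lambda z^{-d})+d\lambda(z-1)\Bigr]-z(z^d+\lambda)\log(1+\lambda z^{-d}),
\]
but the two proofs diverge from there. The paper establishes two facts: (i) $\partial S/\partial z\le 0$ on $[1,1+\lambda]$, and (ii) $S(\lambda,1+\tfrac{3}{20}\lambda)\ge 0$; together these force a \emph{unique} sign change of $S$, hence a single interior critical point $z_\circ\ge 1+\tfrac{3}{20}\lambda$ at which $U$ peaks. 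Item (i) is the heavy part and requires a delicate case split, including for $d=2$ a quartic inequality handled by bounding roots of an auxiliary quadratic.

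You bypass item (i) entirely. Your regrouping $-dz^{d+1}+(d+1)z^d+\lambda=(z^d+\lambda)-dz^d(z-1)$ together with $\log(1+x)\le x$ shows directly that the bracket in $S$ is at least $(z^d+\lambda)\log(1+\lambda z^{-d})$, whence $S\ge (z^d+\lambda)\log(1+\lambda z^{-d})\,[\log(1/\epsilon)-z]$. This is positive for \emph{every} $\epsilon\in(0,3/20]$ since $\log(20/3)>1.89>1.6\ge 1+\epsilon\lambda$ uniformly once $\lambda\le\lambda_c(3)=4$. Thus $U$ is strictly increasing on $(1,1+\tfrac{3}{20}\lambda]$, and any maximizer must lie in $[1+\tfrac{3}{20}\lambda,1+\lambda]$. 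This is exactly what the lemma asserts and what its application (the proof of \Cref{lem:alpha-c-eq-lb}) needs.

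What each approach buys: the paper's monotonicity of $S$ yields the extra structural fact that $U(\lambda,\cdot)$ is unimodal, but this is not used downstream. Your route is shorter, avoids the $d=2$ case analysis altogether, and still delivers the required bound $\epsilon_\star\ge 3/20$.
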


\begin{proof}[Proof of \Cref{lem:alpha-c-eq-lb}]
  %Given $d \geq 2$ and $0 < \lambda_\star \leq \lambda_c = \lambda_c(d+1)$ such that $\sup_{z\in[1, 1 + \lambda_\star]} U(\lambda_\star, z) = 1 - \delta$.
  Since $(x_c, \alpha_c, w_c)$ is a solution of \eqref{eq:delta-sys-eq-restate}.
  It is straightforward to verify that $(\lambda_\star := \lambda_c = \alpha_c, d, \alpha := \alpha_c, w := w_c)$ satisfies \Cref{cond:critical}.
  Hence by \Cref{lem:contract-critical} and \eqref{eq:sym-x-sup}, we know that $\sup_{z \in [1, 1 + \lambda_\star]} U(\lambda_\star, z) = 1 - \delta$.
  This means $(d, \lambda_\star)$ is ``exactly'' $\delta$-unique.
  By \Cref{lem:max-U-z-lb}, we know that the supremum of $U$ is achieved at $z_\star = 1 + \epsilon_\star \lambda_\star$ where $\epsilon_\star \in [3/20, 1]$.
  This also implies $U(\lambda_\star, z_\star) = 1 - \delta$.
  Then, by an integration, it holds that
  \begin{align} 
    \nonumber \log \frac{U(\lambda_c, z_\star)}{U(\lambda_\star, z_\star)}
    &= \int_{\log \lambda_\star}^{\log \lambda_c} \frac{\partial \log U(\e^t, z_\star)}{\partial t} \-d t \\
    \label{eq:lambda-c-star-dist} (\text{by \Cref{lem:dU-lambda-lb}}) &\geq \frac{1}{10} \log \frac{\lambda_c}{\lambda_\star}.
  \end{align}
  On the other hand, we know that $(\lambda_c, d)$ is $0$-unique by \Cref{thm:delta-unique-implicit}, \Cref{thm:resolve-lambda-2-c}, and \Cref{thm:solve-sys-eq}.
  Hence, by \Cref{lem:contract-critical}, we know that $U(\lambda_c, z_\star) \leq 1$.
  Together with \eqref{eq:lambda-c-star-dist}, we have
  \begin{align*}
    \frac{1}{10} \log \frac{\lambda_c}{\lambda_\star} \leq \log \frac{U(\lambda_c, z_\star)}{U(\lambda_\star, z_\star)} \leq \log \frac{1}{1 - \delta}.
  \end{align*}
  Therefore $\frac{1}{10} \log \frac{\lambda_\star}{\lambda_c} \geq \log (1 - \delta)$ and $\alpha_c = \lambda_\star \geq (1 - \delta)^{10} \lambda_c \geq (1 - 10 \cdot \delta) \lambda_c$.
  We note that the last inequality holds by the Bernoulli inequality.
  %
  %Note that that if $0 < \lambda \leq (1 - 10 \delta) \lambda_c(d+1)$, then $\lambda \leq \lambda_\star$.
  %Since $(d, \lambda_\star)$ is $\delta$-unique, by \Cref{lem:delta-unique-monotone-eq}, $(d, \lambda)$ is also $\delta$-unique.
\end{proof}

Now, it remains to prove \Cref{lem:dU-lambda-lb} and \Cref{lem:max-U-z-lb}.
\begin{proof}[Proof of \Cref{lem:dU-lambda-lb}]
  First, note that
  \begin{align*}
    \left. \frac{\partial \log U(\e^t, z)}{\partial t} \right\vert_{\e^t = \lambda}
    &= \frac{z^d-\frac{\lambda }{\log \left(\lambda  z^{-d}+1\right)}}{z^d+\lambda }+\frac{1}{\log \left(\frac{\lambda }{z-1}\right)} \\
    &\geq \frac{1}{\log \left(\frac{\lambda }{z-1}\right)}-\frac{\lambda }{2 \left(z^d+\lambda \right)},
  \end{align*}
  where in the last inequality, we use the fact that $\log(1 + x) \geq \frac{2 x}{2 + x}$ for $x \geq 0$ (here, $x = \lambda z^{-d}$).
  Hence let $z = 1 + \epsilon \lambda$, it holds that
  \begin{align}
    \left. \frac{\partial \log U(\e^t, 1 + \epsilon \lambda)}{\partial t} \right\vert_{\e^t = \lambda}
    \nonumber &\geq \frac{1}{\log (1/\epsilon)}-\frac{\lambda }{2 \left(\left(\epsilon \lambda +1\right)^d+\lambda \right)} \\
    \nonumber &\overset{(\star)}{\geq} \frac{1}{\log (1/\epsilon)} - \frac{\lambda}{2(1 + \epsilon d \lambda + \lambda)} \\
    \label{eq:dU-lb} &\geq \frac{1}{\log(1/\epsilon)} - \frac{1}{2(\epsilon d + 1)},
  \end{align}
  where in $(\star)$, we use the Bernoulli inequality which states that $(1 + x)^d \geq 1 + d x$ for $x \geq -1$, $d \geq 1$.
  By \eqref{eq:dU-lb} and $d \geq 2$, it holds that when $\epsilon \geq 3/20$, we have
  \begin{align*}
    \left. \frac{\partial \log U(\e^t, 1 + \epsilon \lambda)}{\partial t} \right\vert_{\e^t = \lambda} \geq 1/10. & \qedhere
  \end{align*}
\end{proof}

\begin{proof}[Proof of \Cref{lem:max-U-z-lb}]
  To find the $z_\star$ that achieves the maximum of $U$, we calculate $\frac{U(\lambda, z)}{\partial z}$:
  \begin{align*}
    &\frac{\partial U(\lambda, z)}{\partial z} = \\
    &\frac{d \lambda  \left(\log \left(\frac{\lambda }{z-1}\right) \left(\left(-d z^{d+1}+(d+1) z^d+\lambda \right) \log \left(\lambda  z^{-d}+1\right)+d \lambda  (z-1)\right)-z \left(z^d+\lambda \right) \log \left(\lambda  z^{-d}+1\right)\right)}{z^2 \left(z^d+\lambda \right)^2 \log ^2\left(\lambda  z^{-d}+1\right)}.
  \end{align*}
  Note that the sign of $\frac{\partial U(\lambda, z)}{\partial z}$ is determined by 
  \begin{align*}
    S(\lambda, z)
    &:= \textstyle \log \left(\frac{\lambda }{z-1}\right) \left(\left(-d z^{d+1}+(d+1) z^d+\lambda \right) \log \left(\lambda  z^{-d}+1\right)+d \lambda  (z-1)\right) \\
    &\hspace{2cm} \textstyle -z \left(z^d+\lambda \right) \log \left(\lambda  z^{-d}+1\right).
  \end{align*}
  We claim the following fact to be hold when $d \geq 2$, $\lambda \leq \lambda_c(d+1)$ and $z \in [1, 1 + \lambda]$:
  \begin{enumerate}
  \item \label{item:dS(z)-le-0}
    $\frac{\partial S(\lambda, z)}{\partial z} \leq 0$;
  \item \label{item:S-lambda-z-star-ge-0}
    and $S(\lambda, z) \geq 0$ for $z = 1 + \frac{3}{20} \lambda$.
  \end{enumerate}
  Fix $\lambda > 0$, if the equation $S(\lambda, z) = 0$ has no positive solution, by \Cref{item:S-lambda-z-star-ge-0}, it holds that $S(\lambda, z) \geq 0$ for any $z$.
  Hence the maximum of $U(\lambda, z)$ in the interval $z \in [1, 1 + \lambda]$ is achieved at $z_\star = 1 + \lambda$.
  Otherwise, by \Cref{item:dS(z)-le-0}, the equation $S(\lambda, z) = 0$ has a unique positive solution $z_\circ$.
  By \Cref{item:S-lambda-z-star-ge-0}, we know that $z_\circ \geq 1 + \frac{3}{20} \lambda$, which indicates that the maximum of $U(\lambda, z)$ in the interval $z \in [1, 1 + \lambda]$ is achieved at $z_\star = \min\{z_\circ, 1 + \lambda\}$.
  In both case, we have $z_\star = 1 + \epsilon_\star \lambda$ such that $\epsilon_\star \in [\frac{3}{20}, 1]$.

  Now, we are only left to prove \Cref{item:dS(z)-le-0} and \Cref{item:S-lambda-z-star-ge-0}.
  
  \paragraph{Proof of \Cref{item:S-lambda-z-star-ge-0}}
  Note that if
  \begin{align*}
    \textstyle
    \log \left(\frac{\lambda }{z-1}\right) \left(-d z^{d+1}+(d+1) z^d+\lambda \right) \log \left(\lambda  z^{-d}+1\right) -z \left(z^d+\lambda \right) \log \left(\lambda  z^{-d}+1\right) \geq 0,
  \end{align*}
  then $S(\lambda, z) \geq \log\tp{\frac{\lambda}{z-1}} \cdot d \lambda (z - 1) \geq 0$ and we are done.
  
  Otherwise, we use the fact that $\log(1 + x) \leq x$ for $x > -1$ (here, $x = \lambda z^{-d}$), it holds that
  \begin{align*}
    S(\lambda, z) \geq \lambda  z^{-d} \left(z^d+\lambda \right) \left(\log \left(\frac{\lambda }{z-1}\right) - z\right).
  \end{align*}
  Since $d \geq 2$, we have $\lambda \leq \lambda_c(3) = 4$, which implies that $S(\lambda, 1 + \frac{3}{20}\lambda) > 0$.
  \paragraph{Proof of \Cref{item:dS(z)-le-0}}
  By calculation, it holds that
  \begin{align} 
    \nonumber &(z-1) z \left(z^d+\lambda \right) \cdot \frac{\partial S(\lambda, z)}{\partial z} \\
    \nonumber =& -d (z-1)^2 \log \left(\frac{\lambda }{z-1}\right) \left((d+1) z^d \left(z^d+\lambda \right) \log \left(\lambda  z^{-d}+1\right)-\lambda  \left((d+1) z^d+\lambda \right)\right) \\
    \nonumber & \hspace{2cm} -z^2 \left(z^d+\lambda \right)^2 \log \left(\lambda  z^{-d}+1\right) \\ 
    \label{eq:dS} \overset{(\star)}{\leq}& \frac{\lambda  \left(-d \lambda  (z-1)^2 \left((d-1) z^d-\lambda \right) \log \left(\frac{\lambda }{z-1}\right)-2 z^2 \left(z^d+\lambda \right)^2\right)}{2 z^d+\lambda },
  \end{align}
  where $(\star$) holds by the fact that $\log(1 + x) \geq \frac{2x}{2+x}$ for $x \geq 0$ (here, $x = \lambda z^{-d}$).

  When $d \geq 3$, it holds that $\lambda \leq \lambda_c(4) = \frac{27}{16} < 2$.
  Hence $(d - 1)z^d - \lambda \geq 2 - \frac{27}{16} > 0$, which, by \eqref{eq:dS}, implies $\frac{\partial S(\lambda, z)}{\partial z} \leq 0$.

  When $d = 2$, \eqref{eq:dS} simplifies to 
  \begin{align*}
    (z-1) z \left(z^2+\lambda \right) \cdot \frac{\partial S(\lambda, z)}{\partial z} 
    \leq& \frac{-2 z^2 \left(\lambda +z^2\right)^2-2 \lambda  (z-1)^2 \left(z^2-\lambda \right) \log \left(\frac{\lambda }{z-1}\right)}{2 z^2+\lambda }.
  \end{align*}
  If $z^2 - \lambda \geq 0$, we are done.
  Otherwise, we use the fact that $\log(1 + x) \leq x$ for $x > -1$ (here, $x = \frac{\lambda}{z-1} - 1$) and get
  \begin{align*}
    (z-1) z \left(z^2+\lambda \right) \cdot \frac{\partial S(\lambda, z)}{\partial z} 
    \leq& \frac{2 \lambda  (z-1) (-\lambda +z-1) \left(z^2-\lambda \right)-2 z^2 \left(\lambda +z^2\right)^2}{2 z^2+\lambda }. % \overset{(+)}{\leq}.
  \end{align*}
  Since $z^2 \leq \lambda \leq 4$, in the rest of the proof, we assume that $1 \leq z \leq 2$ and $0 \leq \lambda \leq 4$.
  The numerator is bounded by
  \begin{align*}
    &2 \lambda  (z-1) (-\lambda +z-1) \left(z^2-\lambda \right)-2 z^2 \left(\lambda +z^2\right)^2 \\
   =& -2 \lambda ^3-2 \lambda ^2-2 z^6-2 \lambda  z^4-2 \lambda ^2 z^3-4 \lambda  z^3-2 \lambda ^2 z^2+2 \lambda  z^2+2 \lambda ^3 z+4 \lambda ^2 z \\
   \leq& -2 \lambda ^3-2 \lambda ^2-2 z^6\hphantom{-2 \lambda  z^4\;}-2 \lambda ^2 z^3-4 \lambda  z^3-2 \lambda ^2 z^2\hphantom{+2 \lambda  z^2\;}+2 \lambda ^3 z+4 \lambda ^2 z \\
   \leq& -2 \lambda ^3-2 \lambda ^2-2 z^6\hphantom{-2 \lambda  z^4\;}\hphantom{-2 \lambda ^2 z^3\;}-4 \lambda  z^3\hphantom{-2 \lambda ^2 z^2\;}\hphantom{+2 \lambda  z^2\;}+2 \lambda ^3 z \hphantom{+4 \lambda ^2 z} \\
   \leq& -2 \lambda ^3-2 \lambda ^2\hphantom{-2 z^6\;}\hphantom{-2 \lambda  z^4\;}\hphantom{-2 \lambda ^2 z^3\;}-4 \lambda  z^3\hphantom{-2 \lambda ^2 z^2\;}\hphantom{+2 \lambda  z^2\;}+2 \lambda ^3 z \hphantom{+4 \lambda ^2 z},
  \end{align*}
  where we use the fact that $z \geq 1$.
  Hence, in order to show that $\frac{\partial S(\lambda, z)}{\partial z} \leq 0$, it suffices to show 
  \begin{align} \label{eq:aux-quadratic}
   -2 \lambda^2-2 \lambda -4 z^3+2 \lambda ^2 z \leq 0.
  \end{align}
  We solve \eqref{eq:aux-quadratic} for $\lambda$, and get
  \begin{align*}
    \lambda_1 := \frac{1-\sqrt{8 z^4-8 z^3+1}}{2 (z-1)} \leq \lambda \leq \lambda_2 := \frac{\sqrt{8 z^4-8 z^3+1}+1}{2 (z-1)}.
  \end{align*}
  To prove \eqref{eq:aux-quadratic}, we need to show that $\lambda_1 \leq 0 \leq 4 \leq \lambda_2$.
  Since $z \geq 1$, it holds that $8z^4 - 8z^3 + 1 \geq 0$.
  Hence $\lambda_1 \leq 0$ holds directly.
  It remains to show that $\lambda_2 \geq 4$, 
  \begin{align*}
    &\frac{\sqrt{8 z^4-8 z^3+1}+1}{2 (z-1)} \geq 4 \\
    \Longleftrightarrow \quad  & \sqrt{8 z^4-8 z^3+1} \geq 4(2(z-1)) - 1 \\
    \overset{(+)}{\Longleftrightarrow} \quad  & 8 z^4-8 z^3+1 \geq (4(2(z-1)) - 1)^2 \\
    \Longleftrightarrow \quad  & 8 (-1 + z) (10 - 8 z + z^3) \geq 0,
  \end{align*}
  where $(+)$ holds by the fact that $1 \leq z \leq 2$.
  Hence, we only need to show that for $1 \leq z \leq 2$, $10 - 8 z + z^3 \geq 0$.
  By taking derivative, we know that the function $f(z) := 10 - 8 z + z^3$ achieves its minimum on the interval $[1, 2]$ at $z = \sqrt{8/3}$.
  And note that $f(\sqrt{8/3}) \approx 1.2907 \geq 0$.
  %where $(+)$ is verified by the following Mathematica code:
  %
  %{\centering
  %  \hrule
  %  \vspace{3pt}
  %  \includegraphics[width=0.6\textwidth]{mathematica-verify}
  %  \vspace{3pt}
  %  \hrule
  % \par} 
  %
  %\noindent This implies that $\frac{\partial S(\lambda, z)}{\partial z} \leq 0$ and finishes the proof.
  %%
  %We note that the ``Resolve'' command in Mathematica is a rigorous implementation of a quantifier elimination algorithm, which determines the feasibility of a collection of polynomial inequalities~\cite{strzebonski2006cylindrical}.
\end{proof}

\subsection{Boundedness} \label{sec:boundedness}
We now prove \Cref{lem:boundedness}, the boundedness of the potential function \eqref{eq:potential-function-psi}.
We need the following.

\begin{lemma} \label{lem:phi-dis}
  For $\phi(z) := \frac{1}{(z + 1)\log(z + 1)}$, it holds that 
  \begin{align*}
    \frac{y}{x} \leq \frac{\phi(x)}{\phi(y)} \leq \left(\frac{y}{x}\right)^2, \quad \text{for $0<x < y$.}
  \end{align*}
\end{lemma}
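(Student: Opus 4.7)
The plan is to reduce both inequalities to monotonicity statements about simple univariate functions, and then verify those monotonicities via elementary derivative calculations together with two standard inequalities for $\log(1+z)$.

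First I would rewrite the ratio as $\phi(x)/\phi(y) = g(y)/g(x)$, where $g(z) := (z+1)\log(z+1)$. Under this notation, the desired bounds become
\[
  \frac{g(y)}{y} \;\ge\; \frac{g(x)}{x} \qquad \text{and} \qquad \frac{g(y)}{y^{2}} \;\le\; \frac{g(x)}{x^{2}} \qquad (0 < x < y),
\]
so everything reduces to showing that $h(z) := g(z)/z$ is non-decreasing and $k(z) := g(z)/z^{2}$ is non-increasing on $(0,\infty)$.

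For the lower bound, a direct computation gives $h'(z) = (z - \log(1+z))/z^{2}$, which is non-negative for $z>0$ by the standard inequality $\log(1+z)\le z$. Hence $h$ is non-decreasing, giving $\phi(x)/\phi(y) \ge y/x$. For the upper bound, another direct computation gives $k'(z) = (z - (z+2)\log(1+z))/z^{3}$, so I need $(z+2)\log(1+z) \ge z$ for $z>0$. This follows immediately from the well-known inequality $\log(1+z) \ge 2z/(z+2)$ (valid for $z \ge 0$), which in fact yields the stronger bound $(z+2)\log(1+z) \ge 2z$. Hence $k$ is non-increasing, giving $\phi(x)/\phi(y) \le (y/x)^{2}$.

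There is no real obstacle here: the whole lemma boils down to two one-line derivative computations combined with the elementary bounds $\tfrac{2z}{z+2} \le \log(1+z) \le z$. If one prefers to avoid quoting the lower bound $\log(1+z)\ge 2z/(z+2)$, it can be proved in-line by noting that $m(z) := (z+2)\log(1+z) - z$ satisfies $m(0)=0$, $m'(z) = \log(1+z) + 1/(z+1)$, and $m''(z) = z/(z+1)^{2} \ge 0$, so $m'(z) \ge m'(0) = 1 > 0$ and therefore $m(z) \ge 0$ for $z \ge 0$.
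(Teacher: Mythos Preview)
Your proof is correct and essentially equivalent to the paper's: the paper applies the mean value theorem to $\log\circ\phi\circ\exp$ and bounds the resulting logarithmic derivative $\frac{\eta(\log(\eta+1)+1)}{(\eta+1)\log(\eta+1)}$ between $1$ and $2$, which unpacks to precisely the two inequalities $\log(1+z)\le z$ and $(z+2)\log(1+z)\ge z$ that you use to establish monotonicity of $g(z)/z$ and $g(z)/z^{2}$. Your framing via monotonicity of $g(z)/z^{k}$ is a slightly more direct packaging of the same computation, and you spell out the verification of the second inequality that the paper leaves implicit.
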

\begin{proof}
  Note that by the mean value theorem, for some $\eta \in (x, y)$,
  \begin{align*}
    \log \left( \frac{\phi(x)}{\phi(y)}\right)
    =&(\log \circ \phi \circ \exp) (\log x) - (\log \circ \phi \circ \exp) (\log y) \\
    = &(\log \circ \phi \circ \exp)' (\log \eta) \cdot (\log x - \log y)\\
    = &\frac{\eta (\log (\eta+1)+1)}{(\eta+1) \log (\eta+1)} \cdot \log \frac{y}{x}.
  \end{align*}
%  And it holds that
%  \begin{align*}
%    (\log \circ \phi \circ \exp)' (\log \eta) &= - \frac{\eta (\log (\eta+1)+1)}{(\eta+1) \log (\eta+1)}.
%  \end{align*}
%  Hence, we have
%  \begin{align*}
%    (\log \circ &\phi \circ \exp) (\log x) - (\log \circ \phi \circ \exp) (\log y) 
%    = \frac{\eta (\log (\eta+1)+1)}{(\eta+1) \log (\eta+1)} \cdot \log \frac{y}{x}.
%  \end{align*}
  Moreover, it can be verified that $\forall \eta \in [0, +\infty)$, 
  \begin{align*}
    1 \leq \frac{\eta (\log (\eta+1)+1)}{(\eta+1) \log (\eta+1)} \leq 2.
  \end{align*}
  Therefore, we have
    \begin{align*}
\log \left(\frac{y}{x}\right) \le\log \left(\frac{\phi(x)}{\phi(y)}\right) \le 2\log \left(\frac{y}{x}\right)
   \end{align*}
  Note that $x < y$ so that $\log \left(\frac{y}{x}\right) > 0$.
  Taking $\exp$ at both side, we have
  \begin{align*}
     \frac{y}{x} &\leq \frac{\phi(x)}{\phi(y)} \leq \left(\frac{y}{x}\right)^2. \qedhere
  \end{align*}
\end{proof}

Recall the $\psi(x) := x \phi(x)=\frac{x}{(x+1)\log (x+1)}$ in \eqref{eq:potential-function-psi}, and we have the following corollary.
\begin{corollary} \label{cor:psi-by-psi}
  It holds that
  \begin{align*}
    \max_{a, b \in [\lambda(1 + \alpha)^{-d}, \lambda]} \frac{\psi(a)}{\psi(b)} \leq \max_{a, b \in [\lambda(1 + \alpha)^{-d}, \lambda]} \frac{b}{a} \leq (1 + \alpha)^{d}.
  \end{align*}
\end{corollary}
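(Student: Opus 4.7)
}
The plan is to reduce both inequalities to routine manipulations using \Cref{lem:phi-dis}. Write $L := \lambda(1+\alpha)^{-d}$ and $U := \lambda$, so the interval in question is $[L, U]$.

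The second inequality $\max_{a,b\in[L,U]} b/a \le (1+\alpha)^d$ is immediate, since the maximum is attained at $b = U$, $a = L$, giving $U/L = (1+\alpha)^d$.

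For the first inequality, the key observation is the factorization $\psi(a)/\psi(b) = (a/b)\cdot(\phi(a)/\phi(b))$, which turns the problem into a direct application of \Cref{lem:phi-dis}. By relabeling if necessary, it suffices to bound $\psi(a)/\psi(b)$ for $0 < a \le b$ within $[L,U]$, since then the case $a > b$ is symmetric and yields a smaller ratio (as I will note below). For $a \le b$, \Cref{lem:phi-dis} gives $\phi(a)/\phi(b) \le (b/a)^2$, so
\begin{align*}
  \frac{\psi(a)}{\psi(b)} \;=\; \frac{a}{b}\cdot \frac{\phi(a)}{\phi(b)} \;\le\; \frac{a}{b}\cdot \left(\frac{b}{a}\right)^2 \;=\; \frac{b}{a}.
\end{align*}
For the case $a > b$, monotonicity of $\phi$ (which is manifest from its defining formula) gives $\phi(a)/\phi(b) \le 1$, hence $\psi(a)/\psi(b) \le a/b$, which is again bounded by $\max b/a$ after relabeling. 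Taking the maximum over $a, b \in [L,U]$ on both sides yields the first inequality.

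I do not anticipate any real obstacle here: the potential-ratio bound in \Cref{lem:phi-dis} is precisely tailored so that the extra factor of $b/a$ picked up from $\phi$ exactly cancels the factor $a/b$ coming from the explicit $x$ in $\psi(x) = x\phi(x)$. The only mild care needed is handling the two orderings of $a$ and $b$, which I resolve by the symmetry argument above.
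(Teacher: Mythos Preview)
Your proposal is correct and follows exactly the route the paper intends: the corollary is stated immediately after \Cref{lem:phi-dis} with no separate proof, precisely because the factorization $\psi(a)/\psi(b) = (a/b)\cdot(\phi(a)/\phi(b))$ together with the upper bound $\phi(a)/\phi(b)\le (b/a)^2$ for $a<b$ gives $\psi(a)/\psi(b)\le b/a$ directly. One minor simplification you could make in the case $a>b$: the lower bound in \Cref{lem:phi-dis} (applied with $x=b$, $y=a$) already shows $\psi$ is nonincreasing, so $\psi(a)/\psi(b)\le 1$, which is trivially at most the maximum of $b/a$ over the interval.
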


%Now, we are ready to prove \Cref{lem:boundedness}.

\begin{proof}[Proof of \Cref{lem:boundedness}]
  By \Cref{cor:psi-by-psi}, \Cref{lem:sym-s}, \Cref{eq:sym-x-sup}, and \Cref{thm:contract-sym}, it holds that when $1 \leq d_v \leq \Delta - 1$, we have
  \begin{align*}
    \frac{\psi(a)}{\psi(b)} \cdot H_{\*w}^\psi(\*x) \leq (1 + \alpha)^{d} \cdot (1 - \delta).
  \end{align*}
  When $d_v = \Delta$, recall that $d = \Delta - 1$, for all $z \in [1, 1 + \alpha]$, it holds that
  \begin{align*}
    U_\Delta(z) / U_d(z)
    &= \frac{(1 + \lambda z^{-d})\log(1 + \lambda z^{-d})}{(1 + \lambda z^{-\Delta})\log(1 + \lambda z^{-\Delta})} \cdot \frac{z^{-\Delta}}{z^{-d}} \cdot \frac{\Delta}{d} \leq \frac{\Delta}{d} z \leq \frac{\Delta}{d} (1 + \alpha),
  \end{align*}
  where we use \Cref{lem:phi-dis}.
  Hence, by \Cref{cor:psi-by-psi}, \Cref{lem:sym-s}, \Cref{eq:sym-x-sup}, and \Cref{thm:contract-sym}, %it holds that 
  \begin{align*}
    \frac{\psi(a)}{\psi(b)} \cdot H_{\*w}^\psi(\*x) &\leq \frac{\Delta}{d} (1 + \alpha)^{\Delta} \cdot (1 - \delta). \qedhere
  \end{align*}
\end{proof}

\section{Rapid mixing in a subcritical regime}
\label{sec:GD-good}
In this section, we prove \Cref{lem:GD-good-0},
the rapid mixing of the Glauber dynamics on one side in a subcritical regime.
This allows us to efficiently simulate the block updates in the field dynamics.

Let $G = ((L, R), E)$  be a bipartite graph with $n=|L|$ vertices in $L$, and maximum degree at most $\Delta$ on $L$.
Let $\mu$ be the hardcore distribution on $G$ with fugacity $\lambda>0$ on $L$ and $\alpha>0$ on $R$, and $\nu = \mu_L$ the projection of $\mu$ on $L$.

%%
%In this section, we will show that when $\lambda = \Omega(\log n)$ and $\alpha = O(1/\Delta)$ for some constant $c$, then the Glauber dynamics on $\nu = \mu_L$ is rapidly mixing.
%%
Let $(X_t)_{t\in \^N}$ and $(Y_t)_{t\in \^N}$ be the Glauber dynamics on $\nu$ with arbitrary initial states $X_0$ and $Y_0$.
%
%We have the following result.
The following shows that the chain is rapidly mixing for the subcritical $\lambda,\alpha$ satisfying~\eqref{eq:good-regime-lambda-alpha}.
\begin{lemma} \label{lem:GD-good}
  Let $k \geq \e^9$ be a real number.
  If
  \begin{align}\label{eq:good-regime-lambda-alpha}
    \lambda (1 + \alpha)^{-\Delta} \geq k \cdot \Delta \log n,
  \end{align}
  then there is a coupling of Markov chains $(X_{t})_{t\in\mathbb{N}}$, $(Y_{t})_{t\in\mathbb{N}}$ such that  for $T \geq 21 \cdot n\log n$,
  \begin{align*}
    \Pr{X_T \neq Y_T} &\leq n^{-\ftp{\frac{T}{21n\log n}}}.
  \end{align*}
\end{lemma}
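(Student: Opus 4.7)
The plan is to use a monotone grand coupling from the extreme initial states $X_0=\boldsymbol{+1}$ and $Y_0=\boldsymbol{-1}$. The key structural fact is that $\nu = \mu_L$ is monotone under the coordinate-wise order $-1<+1$: if $\tau\preceq\tau'$ on $L\setminus\{u\}$, then $\tau'$ blocks at least as many $R$-vertices as $\tau$, so $|S^{\tau'}(u)|\le |S^{\tau}(u)|$ and hence $\nu^{\tau'}(X(u)=+1)\ge \nu^{\tau}(X(u)=+1)$. Consequently the grand coupling with shared $(u,U)\sim\mathrm{Unif}(L)\times \mathrm{Unif}([0,1])$ preserves the sandwich $X_t^{-}\preceq Z_t\preceq X_t^{+}$ for any chain $Z$ starting in between, and $D_t=\{u:X_t^{+}(u)\ne X_t^{-}(u)\}=\emptyset$ is absorbing. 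It therefore suffices to show that the grand coupling of $X^{\pm}$ has hit $D_t=\emptyset$ by time $T=21n\log n$ with probability at least $1-1/n$, and then to iterate over epochs of length $21n\log n$ to obtain the $n^{-\lfloor T/(21n\log n)\rfloor}$ bound.

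The quantitative input is the one-step marginal estimate: for every pinning $\tau$ and every $u\in L$,
\[
\nu^{\tau}(X(u)=-1)=\frac{(1+\alpha)^{|S^{\tau}(u)|}}{\lambda+(1+\alpha)^{|S^{\tau}(u)|}}\le \frac{(1+\alpha)^{\Delta}}{\lambda}\eqqcolon \eta\le \frac{1}{k\Delta\log n}.
\]
Thus at any update, the two conditional $+1$-marginals $p^{+},p^{-}$ both lie in $[1-\eta,1]$ and in particular $|p^{+}-p^{-}|\le \eta$. Under grand coupling this means that if $u\in D_t$ then the disagreement at $u$ is resolved with probability at least $1-\eta$, while if $u\notin D_t$ a new disagreement at $u$ is created only when $u$ shares an $R$-neighbor with some $w\in D_t$, and with probability at most $\eta$.

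The next step is a coupon-collector argument to handle coalescence. In $T=21n\log n$ steps every $u\in L$ is updated at least once with probability $\ge 1-n(1-1/n)^{T}\ge 1-n^{-20}$. Conditioning on this event and focusing on the \emph{last} update time $L_u$ of each $u$ in $[0,T]$, the grand coupling produces $X_T^{+}(u)\ne X_T^{-}(u)$ only if the uniform $U_{L_u}$ falls in the interval $(\min(p^{+},p^{-}),\max(p^{+},p^{-})]$, which happens with probability at most the gap $p^{+}(u)-p^{-}(u)$ evaluated in the state at time $L_u-1$.

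The main obstacle, and the delicate step, is that a naive union bound over $u$ only gives $n\eta=O(n/\log n)$, which is far worse than $1/n$; worse still, a standard path-coupling or Dobrushin-type drift fails because a single disagreement $w\in D_t$ can influence arbitrarily many $L$-vertices through a high-degree $R$-neighbor (the degree on $R$ is unbounded). The plan is to bypass this by exploiting the absorbing structure of $D_t=\emptyset$ together with the strength of the subcritical inequality: once one shows that the disagreement gap $p^{+}(u)-p^{-}(u)$ at the \emph{last} update of $u$ is typically much smaller than $\eta$ because by that time most of $u$'s 2-neighborhood has already had a synchronizing update, the error accumulated through new disagreements is self-limiting. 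Concretely, one tracks the total weight $\sum_u (p^{+}(u)-p^{-}(u))\le \frac{1}{\lambda}\sum_u\bigl((1+\alpha)^{|S^{-}(u)|}-(1+\alpha)^{|S^{+}(u)|}\bigr)$ and uses $\eta\ll 1/\log n$ together with the coupon-collector event to absorb the creation rate into the resolution rate over the epoch, yielding $\Pr[D_T\ne \emptyset]\le 1/n$. Iterating this single-epoch estimate over $\lfloor T/(21n\log n)\rfloor$ independent epochs (using the absorbing property to condition on non-coalescence so far) gives the stated geometric bound.
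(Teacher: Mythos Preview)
Your monotonicity observation is correct, and you correctly identify the central obstacle: a naive union bound gives only $n\eta = O(n/(\Delta\log n))$, and standard path coupling fails because a single disagreement can propagate to unboundedly many $L$-vertices through a high-degree $R$-neighbor. But the resolution you sketch does not close this gap. Under the grand coupling the drift of $|D_t|$ is $\frac{1}{n}\bigl(\sum_u (p^+(u)-p^-(u)) - |D_t|\bigr)$, so contraction requires $\sum_u(p^+(u)-p^-(u)) < |D_t|$. Now $p^+(u)>p^-(u)$ forces some $v\in\Gamma_u\cap R$ with all of $\Gamma_v\setminus\{u\}$ unoccupied in $X^-$ and some $w\in D_t\cap\Gamma_v$; if such a $v$ has degree $N$, then every one of the $N$ vertices of $\Gamma_v$ picks up a gap of order $\eta$ from this single $w$, and $N$ is unbounded. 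The phrases ``absorb the creation rate into the resolution rate'' and ``most of $u$'s 2-neighborhood has had a synchronizing update'' do not supply a bound here --- the 2-neighborhood of $u$ can be all of $L$, so a coupon-collector argument on it says nothing.

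The paper's proof supplies precisely this missing piece via an explicit \emph{good event}. Set $m=\log n$ and let $\mathcal G(m)$ be the event that every $v\in R$ with $|\Gamma_v|>m$ has, for every $u\in L$, at least one occupied vertex in $\Gamma_v\setminus\{u\}$. After a burn-in of $\theta n\log n$ steps, each chain lies in $\mathcal G(m)$ except with probability $O(\Delta n^2\eta^m + n^{2-\theta})$: the last updates of the $\ge m$ neighbors of a high-degree $v$ each set their vertex to $+1$ with probability $\ge 1-\eta$ using distinct independent uniforms, giving the factor $\eta^m$, which with $m=\log n$ and $\eta\le 1/(k\Delta\log n)$ is polynomially small. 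Conditional on both chains being in $\mathcal G(m)$, no high-degree $v$ can change free/blocked status under a single $L$-flip, so the path-coupling influence of one disagreement is at most $\Delta m$, yielding one-step contraction $1-(1-\eta\Delta m)/n$; burn-in and contraction are then combined via the Hayes coupling-with-rare-bad-events framework. Your drift argument would indeed go through once restricted to $\mathcal G(m)$, but without isolating this high-probability event the creation term is uncontrolled and the claimed bound $\Pr[D_T\neq\emptyset]\le 1/n$ is unjustified.
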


\Cref{lem:GD-good-0} follows from \Cref{lem:GD-good}:
\begin{proof}[Proof of \Cref{lem:GD-good-0}]
  Let $G = ((L, R), E)$ be a bipartite graph.
  Let $\lambda_\star > 0$ be the fugacity on both $L$ and $R$.
  Let $\mu$ be the hardcore distribution  and $\nu=\mu_{L}$.
  Let $S \subseteq L$ and $\tau \in \Omega(\mu_S)$, and let $\ell = \abs{L \setminus S}$. 
  Let $(Z_t)_{t\in\^N}$ be the Glauber dynamics on $(\theta^{-1} * \nu)^{\tau}_{L\setminus S}$, where $\theta$ satisfies 
  \[
  \theta^{-1} \geq (1 + \lambda_\star)^{\Delta} \cdot \frac{k\Delta}{\lambda_\star}  \log n,  \text{ where }k \geq \e^9.
  \]
  \Cref{lem:GD-good-0} is proved once we show that   for $T \geq 21 \cdot \ell \log \ell$,
  \begin{align}
    \DTV{Z_T}{(\theta^{-1} * \nu)^\tau_{L\setminus S}} \leq 2\ell^{-\ftp{\frac{T}{21 \cdot \ell \log \ell}}}.\label{eq:GD-good-0-proof-goal}
  \end{align}
  
  Construct a new instance. Let $G'=((L'\cup R'),E)$ be the induced subgraph of $G$ on $L'=L\setminus S$ and $R'=R\setminus \bigcup_{v:\tau_v=+1}\Gamma_v$, i.e.~$G'$ is the graph obtained from $G$ by removing $S$ from $L$ and also deleting the vertices in $R$ adjacent to those vertices being occupied in $\tau$.
  Let $\pi$ be the hardcore distribution on $G'$ with fugacity $\theta^{-1}\lambda_\star$ on $L'$ and fugacity $\lambda_\star$ on $R'$.
  Then it can be verified that 
  \[
  (\theta^{-1} * \nu)^\tau_{L\setminus S}=\pi_{L\setminus S}.
  \] 
%
%  
%  We now deal with pinnings.
%%  Since $\pi$ is the Gibbs distribution of a hardcore model.
%  Note that in the hardcore model:
%  conditioning on a vertex $v$ being unoccupied is equivalent to removing $v$ from $G$; 
%  and conditioning on a vertex $v$ being occupied  is equivalent to removing the entire neighborhood of $v$ including $v$ itself from $G$.
%  Due to this, we construct the following set $\Lambda \subseteq L\cup R$: 
%%  \[
%%  \Lambda=\{v\in L\cup R\mid (v\in S\land \tau_v=-1)\lor (\exists u\in\Gamma_v\text{ s.t. }u\in S\land \tau_u=+1)  \}.
%%  \]
%  for every $v \in S$,
%  \begin{itemize}
%  \item if $\tau_v = -1$, then add $v$ to $\Lambda$;
%  \item if $\tau_v = +1$, then add $v$ and all the vertices in $\Gamma_v$ into $\Lambda$.
%  \end{itemize}
%  Consider the graph $G' := G[(L\cup R)\setminus \Lambda] = ((L', R'), E')$.
%  It holds that $\pi^\tau_{(L\cup R)\setminus \Lambda} = \rho$, where $\rho$ is the hardcore distribution on $G'$, with fugacity $\lambda_L$ on $L$ and $\lambda_R$ on $R$.
%  Since $S \subseteq L$, it holds that $L' = L\setminus S$, which implies that
%  \begin{align*}
%    \rho_{L'} = \pi^\tau_{L'} = \pi^\tau_{L\setminus S}.
%  \end{align*}
Then apply \Cref{lem:GD-good} on such hardcore  instance on $G'$, whose condition is satisfied by 
%  Since $\theta^{-1} \geq (1 + \lambda_\star)^\Delta \cdot k \Delta \log n / \lambda_\star$, it holds that
  \begin{align*}
   \theta^{-1}\lambda_\star(1 + \lambda_\star)^{-\Delta} 
    &\geq  k \Delta \log n\ge k \Delta \log \ell.
 \end{align*}
 By \Cref{lem:GD-good}, there is a coupling $(X_t,Y_t)_{t\in\^N}$ of the chain $(Z_t)_{t\in\^N}$ such that for $T \geq 21 \cdot \ell \log \ell$, the probability of not being coupled is bounded by $2\ell^{-\ftp{{T}/{21 \ell \log \ell}}}$. Hence due to the standard coupling lemma for Markov chains (e.g.~\cite[Lemma 4.10]{levin2017markov}), we have \eqref{eq:GD-good-0-proof-goal}, which proves \Cref{lem:GD-good-0}.
%  For $T \geq 21 \cdot \ell \log \ell$, by \Cref{lem:GD-good} and \Cref{lem:dt-dbt}, let $P$ be the Glauber dynamics on $\rho_{L'}$, it holds that
%  \begin{align*}
%    \max_{x \in \Omega(\rho_{L'})} \DTV{P^T(x, \cdot)}{\rho_{L'}}
%    \leq \max_{x, y \in \Omega(\rho_{L'})} \DTV{P^T(x, \cdot)}{P^T(y, \cdot)} \leq \ell^{-\ftp{\frac{T}{21 \ell \log \ell}}},
%  \end{align*}
%  where the first inequality holds by \Cref{lem:dt-dbt} and the second inequality holds by \Cref{lem:GD-good}.
%  Note that $\rho_{L'} = (\theta^{-1} * \nu)^\tau_{L\setminus S}$, this finishes the proof.
%
%The coupling of Markov chain in \Cref{lem:GD-good} implies the mixing bound.
%\begin{lemma}[\text{\cite[Lemma 4.10]{levin2017markov}}] \label{lem:dt-dbt}
%  Let $P$ be a Markov chain with stationary distribution $\mu$, it holds that
%  \begin{align*}
%    \max_{x \in \Omega(\mu)} \DTV{P^t(x, \cdot)}{\mu} \leq \max_{x, y \in \Omega(\mu)} \DTV{P^t(x, \cdot)}{P^t(y, \cdot)}
%  \end{align*}
%\end{lemma}
\end{proof}
%Construct $S'$ from $S$: enumerate each $v \in S$, add $v$ to $S'$, add $\Gamma_v$ to $S'$ if $\tau_v = +1$.
%since $(\theta^{-1} *  \nu)^\tau$ is the Gibbs distribution of the hardcore model on $G' = G - S'$ with fugacity $\theta^{-1}\lambda$ on $L$ and fugacity $\lambda$ on $R$.
In the rest of this section, we prove \Cref{lem:GD-good}.

Let $m > 0$.
We define the following \emph{good events} on samples in $\Omega(\nu)$:
\begin{align*}
  \forall u \in L, \quad \+G_{u}(m)
  &:=
  \left\{ \sigma \in \Omega(\nu)
  \left\vert
  \begin{array}{l}
    \forall i \in R: 
   \abs{\Gamma_i} > m \implies \exists j \in \Gamma_i \setminus \{u\}, s.t., \sigma_j = +1
  \end{array}
  \right.
  \right\}.
\end{align*}
Recall that for $i\in L\cup R$, we use $\Gamma_i$ to denote $i$'s neighborhood in $G$.
Furthermore, define
\begin{align*}
  \+G(m) := \bigcap_{u \in L} \+G_{u}(m).
\end{align*}

\begin{lemma} \label{lem:good-event}
  For $t \in \^N$, $\theta > 0$, if $t \geq \theta \cdot n\log n$, then it holds for all $m>0$ that
  \begin{align*}
    \Pr{X_t \not\in \+G(m)} &\leq \Delta n^2 \tp{\frac{(1 + \alpha)^\Delta}{\lambda + (1 + \alpha)^\Delta}}^m + n^{2-\theta},\\
    \Pr{Y_t \not\in \+G(m)} &\leq \Delta n^2 \tp{\frac{(1 + \alpha)^\Delta}{\lambda + (1 + \alpha)^\Delta}}^m + n^{2-\theta}.
  \end{align*}
\end{lemma}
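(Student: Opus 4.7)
The plan is to establish both bounds through a single argument that exploits two essentially independent facts: (i) a pointwise lower bound on the marginal ratio at each $L$-vertex, which is a structural feature of the projected hardcore distribution $\nu=\mu_L$ and holds for any $\lambda,\alpha>0$; and (ii) a coupon-collector estimate saying that every vertex of $L$ is resampled at least once within $\theta n\log n$ Glauber steps except with probability $n^{1-\theta}$. Since $X_t$ and $Y_t$ are two copies of the same Markov chain with arbitrary (but distinct) initial conditions, the argument for one applies verbatim to the other, so I treat only $X_t$ below.

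The key probabilistic input is the following pointwise estimate. For any pinning $\sigma\in\{-1,+1\}^{L\setminus\{v\}}$, unfolding $\nu=\mu_L$ by summing out the $R$-variables gives
\[
R_v^\sigma \;=\; \lambda\prod_{i\in\Gamma_v}\bigl(1+\alpha\cdot \*1[A^\sigma_i]\bigr)^{-1},
\]
where $A^\sigma_i$ is the event that every $L$-neighbor of $i$ other than $v$ takes value $-1$ in $\sigma$. Each factor lies in $[(1+\alpha)^{-1},1]$ and there are at most $\Delta$ of them, so $R_v^\sigma\ge \lambda(1+\alpha)^{-\Delta}$ uniformly in $\sigma$, and consequently $\nu(\sigma_v=-1\mid\sigma)\le \gamma$ where $\gamma:=(1+\alpha)^\Delta/(\lambda+(1+\alpha)^\Delta)$. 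The crucial feature is the complete independence of this bound from the pinning.

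Fix $u\in L$ and $i\in R$ with $|\Gamma_i|>m$, write $\{j_1,\dots,j_k\}:=\Gamma_i\setminus\{u\}$ (so $k\ge m$), and consider the bad event $B_{u,i}:=\{X_t(j_\ell)=-1\text{ for all }\ell\}$. Condition on the Glauber schedule $(v_1,\dots,v_t)$ and on the event $\+U:=\{\text{every }v\in L\text{ is resampled in }[1,t]\}$. Let $\tau_j:=\max\{s\le t:v_s=j\}$ be the last resampling time of $j$; then $X_t(j_\ell)=X_{\tau_{j_\ell}}(j_\ell)$, and after relabeling so that $\tau_{j_1}<\cdots<\tau_{j_k}$ I telescope in reverse chronological order along the natural filtration $(\+F_s)$: at the $\ell$-th step the events $\{X_{\tau_{j_1}}(j_1)=-1\},\dots,\{X_{\tau_{j_{\ell-1}}}(j_{\ell-1})=-1\}$ are $\+F_{\tau_{j_\ell}-1}$-measurable, while the conditional probability of $\{X_{\tau_{j_\ell}}(j_\ell)=-1\}$ given $\+F_{\tau_{j_\ell}-1}$ equals the schedule-free quantity $\nu(\sigma_{j_\ell}=-1\mid X_{\tau_{j_\ell}-1}(L\setminus j_\ell))\le \gamma$. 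Induction yields $\Pr{B_{u,i}\cap \+U}\le \gamma^k\le \gamma^m$.

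For $t\ge\theta n\log n$ the coupon-collector estimate gives $\Pr{\+U^c}\le n(1-1/n)^t\le n^{1-\theta}$. Since the number of pairs $(u,i)\in L\times R$ with $|\Gamma_i|>m$ is at most $\Delta n^2$ (there are $n$ choices of $u$ and at most $\Delta n$ non-isolated $R$-vertices because each contributes at least one of the $\le \Delta n$ edges), a union bound produces
\[
\Pr{X_t\notin\+G(m)}\le \sum_{u,i}\Pr{B_{u,i}\cap\+U}+\Pr{\+U^c}\le \Delta n^2\gamma^m+n^{1-\theta}\le \Delta n^2\gamma^m+n^{2-\theta}.
\]
The only genuinely non-routine step is the chain-rule estimate under a random update schedule; this survives cleanly precisely because the marginal bound $\gamma$ holds uniformly in the history, so no mixing-time input for the Glauber dynamics is required, which is what makes the lemma available as a building block in the proof of \Cref{lem:GD-good} rather than a consequence of it.
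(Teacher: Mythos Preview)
Your proof is correct and follows essentially the same route as the paper: the uniform marginal bound $\nu(\sigma_v=-1\mid\sigma)\le\gamma$, a chain-rule argument at the last update times of the vertices in $\Gamma_i\setminus\{u\}$ conditioned on the update schedule, the coupon-collector estimate for the event that every $L$-vertex has been refreshed, and a union bound over the at most $\Delta n^2$ pairs $(u,i)$. The only cosmetic differences are that the paper writes the chain-rule step with explicit random seeds $(v_s,R_s)$ and conditions on all $R_s$ with $s\neq j^*(w)$, whereas you phrase it via the filtration $\+F_{\tau_{j_\ell}-1}$ (your phrase ``reverse chronological'' is a slight misnomer, since you actually peel off events in increasing order of $\tau$, but the logic is sound).
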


\begin{lemma} \label{lem:couple-good}
 There is a coupling of Markov chains $(X_{t})_{t\in\mathbb{N}}$, $(Y_{t})_{t\in\mathbb{N}}$ such that
  for all $t \in \^N$, $m>0$, %for any $X_t, Y_t \in \+G(m)$,
  \begin{align*}
  \forall X_t, Y_t \in \+G(m),\quad
    \E{\abs{X_{t+1} \oplus Y_{t+1}} \mid X_t, Y_t} \leq \tp{1 - \frac{1 - \eta \cdot \Delta  m}{n}} \abs{X_t \oplus Y_t},
  \end{align*}
  where $\eta=\frac{\lambda}{\lambda + 1} - \frac{\lambda}{\lambda + (1 + \alpha)^\Delta}$.
\end{lemma}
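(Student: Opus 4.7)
My plan is to use the standard one-step optimal coupling---pick the same vertex $u \in L$ for both chains and maximally couple the resampled values of $\sigma(u)$---and then control the expected change in Hamming distance by the total variation distances of the single-site conditional marginals. First I would derive the marginal formula explicitly. Summing over the $R$-spins yields $\nu(\sigma) \propto \lambda^{|\sigma^{-1}(+1)|}\prod_{v \in R}(1+\alpha)^{\mathbf{1}[\sigma_{\Gamma_v} = \boldsymbol{-1}]}$, so for any pinning $\tau$ on $L\setminus\{u\}$,
\[
\nu^{\tau}_u(+1) \;=\; \frac{\lambda}{\lambda + (1+\alpha)^{A_u(\tau)}}, \qquad A_u(\tau) \;:=\; \bigl|\{v \in \Gamma_u : \tau_{\Gamma_v \setminus \{u\}} = \boldsymbol{-1}\}\bigr|
\]
counts the ``exposed'' $R$-neighbors of $u$. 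Since $0 \le A_u \le \Delta$, any two such marginals have total variation distance at most $\eta = \lambda/(\lambda+1) - \lambda/(\lambda + (1+\alpha)^\Delta)$.

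Second, with the grand coupling that picks a common $u$ and then maximally couples $(X_{t+1}(u), Y_{t+1}(u))$, a routine calculation---splitting into the cases $u \in D$ and $u \notin D$ and using that optimally coupled binary random variables disagree with probability exactly their TV distance---yields the identity
\[
\mathbb{E}\bigl[|X_{t+1} \oplus Y_{t+1}| \,\big|\, X_t, Y_t\bigr] \;=\; |D| - \frac{|D|}{n} + \frac{1}{n}\sum_{u \in L} p_u,
\]
where $D := X_t \oplus Y_t$ and $p_u := \DTV{\nu^{X_t(L\setminus \{u\})}_u}{\nu^{Y_t(L\setminus \{u\})}_u}$. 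The task thus reduces to establishing $\sum_{u \in L} p_u \le \eta \Delta m \cdot |D|$, which immediately gives the claimed contraction factor $1 - (1 - \eta \Delta m)/n$.

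Finally, the good event $\mathcal{G}(m)$ will be used to restrict the set $\{u : p_u > 0\}$. A nonzero $p_u$ forces $A_u(X_t) \neq A_u(Y_t)$, and hence there must exist some $v \in \Gamma_u$ with $(\Gamma_v \setminus \{u\}) \cap D \neq \emptyset$. Under $X_t, Y_t \in \mathcal{G}_u(m)$, however, any $v \in \Gamma_u$ with $|\Gamma_v| > m$ admits a $(+1)$-witness $j \in \Gamma_v \setminus \{u\}$ in \emph{each} of $X_t$ and $Y_t$ separately, so such a $v$ is not exposed in either configuration and contributes $0$ to $A_u(X_t) - A_u(Y_t)$. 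Consequently the influenced set is contained in $\bigcup_{w \in D}\bigcup_{v \in \Gamma_w,\,|\Gamma_v| \le m}\Gamma_v$, which has cardinality at most $\Delta m |D|$ by degree counting (using $|\Gamma_w| \le \Delta$ for $w \in L$, and $|\Gamma_v| \le m$ inside the inner union). Combined with $p_u \le \eta$ from the first step, this completes the proof. The main subtlety I expect is precisely in this last step: one must apply the good event \emph{separately} to $X_t$ and $Y_t$, because the $(+1)$-witness for $X_t$ at a high-degree $v$ could itself lie in $D$, in which case we rely on the \emph{independent} witness guaranteed by $Y_t \in \mathcal{G}_u(m)$ to conclude that $v$ is not exposed in $Y_t$ either.
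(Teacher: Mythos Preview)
Your proposal is correct and lands on the same inequality as the paper, but the route differs in one structural respect. The paper does \emph{not} handle a general disagreement set $D$ directly; instead it invokes path coupling. It builds a path $X_t = P_0, P_1, \dots, P_{|D|} = Y_t$ by first flipping every vertex of $Y_t^{-1}(+1)\setminus X_t^{-1}(+1)$ to $+1$ and then every vertex of $X_t^{-1}(+1)\setminus Y_t^{-1}(+1)$ to $-1$, observes that each intermediate $P_x$ dominates either $X_t$ or $Y_t$ and hence stays in the upward-closed event $\mathcal G(m)$, and thereby reduces to the case $|D|=1$. For a single disagreement at $u$, the paper's count of influenced vertices $w$ is exactly your count specialized to $|D|=1$: at most $\Delta m$ vertices $w$ share a low-degree $R$-neighbor with $u$.

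Your direct approach skips the path construction entirely and instead bounds $\sum_{u} p_u$ globally via the inclusion $\{u: p_u>0\}\subseteq \bigcup_{w\in D}\bigcup_{v\in\Gamma_w,\,|\Gamma_v|\le m}\Gamma_v$, then the crude bound $p_u\le\eta$. This is cleaner in that it avoids checking the path stays in $\mathcal G(m)$ (which in the paper silently uses the monotonicity of the good event), and your identity $\mathbb{E}[|X_{t+1}\oplus Y_{t+1}|]=|D|-|D|/n+\tfrac1n\sum_u p_u$ makes the contraction transparent. The paper's route buys modularity: once reduced to adjacent pairs, the one-step analysis is local to a single disagreement vertex. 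Both arguments use exactly the same marginal formula, the same optimal single-site coupling, and the same key observation that under $\mathcal G_u(m)$ the high-degree $R$-neighbors of $u$ are never exposed in either configuration, so they are really two packagings of the same computation.
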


\Cref{lem:good-event} and \Cref{lem:couple-good} will be proved in \Cref{sec:good-event} and \Cref{sec:couple-good}, respectively.

We apply the following ``coupling with stationary'' principle for mixing. 
%developed by Hayes and Vigoda~\cite{hayes2006coupling}.
\begin{lemma}[\text{\cite[Theorem 3.1]{hayes2006coupling}}] \label{lem:coupling-with-stationary}
  Let $\delta, \epsilon \in (0, 1)$ be real numbers.
  Let $(X_t)_{t\in \^N}$ and $(Y_t)_{t\in\^N}$ be coupled Markov chains with state space $\{-1, +1\}^n$ such that for $0 \leq t \leq T-1$, the following holds
  \begin{align*}
    \Pr{\E{\abs{X_{t+1} \oplus Y_{t+1}} \mid X_t, Y_t} \leq (1 - \epsilon) \abs{X_t \oplus Y_t}} \leq \delta,
  \end{align*}
  then it holds that
  \begin{align*}
    \Pr{X_T \neq Y_T} \leq n \cdot \left((1 - \epsilon)^T +  \frac{\delta}{\epsilon}\right).
  \end{align*}
\end{lemma}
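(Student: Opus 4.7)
The plan is to prove this by a standard path–coupling style argument that tracks the expected Hamming distance between the two coupled chains through the recurrence, using the trivial upper bound on the bad event and iterating.

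First I would set $D_t := \abs{X_t \oplus Y_t}$ and let $A_t$ denote the ``contraction'' event on $(X_t,Y_t)$ appearing in the hypothesis; by assumption its complement has probability at most $\delta$. Reading the hypothesis as: with probability at least $1-\delta$ the one-step drift satisfies $\E{D_{t+1}\mid X_t,Y_t}\le (1-\epsilon)D_t$, and otherwise using the trivial deterministic bound $D_{t+1}\le n$ (since both chains live on $\{-1,+1\}^n$), I would derive a one-step bound on the unconditional expectation. Namely, by the law of total expectation,
\begin{align*}
\E{D_{t+1}}
&= \E{\E{D_{t+1}\mid X_t,Y_t}\,\*1[A_t]} + \E{\E{D_{t+1}\mid X_t,Y_t}\,\*1[\ol{A_t}]}\\
&\le (1-\epsilon)\,\E{D_t\,\*1[A_t]} + n\cdot\Pr{\ol{A_t}} \;\le\; (1-\epsilon)\,\E{D_t} + n\delta.
\end{align*}

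Next I would iterate this affine recurrence from $t=0$ to $t=T$. Unrolling yields
\begin{align*}
\E{D_T}\;\le\;(1-\epsilon)^{T}\E{D_0} + n\delta\sum_{k=0}^{T-1}(1-\epsilon)^{k}\;\le\; n(1-\epsilon)^{T} + \frac{n\delta}{\epsilon},
\end{align*}
where in the last step I use $\E{D_0}\le n$ and sum the geometric series with $\sum_{k\ge 0}(1-\epsilon)^k = 1/\epsilon$. Finally, since $\{X_T\ne Y_T\}=\{D_T\ge 1\}$, Markov's inequality gives
\begin{align*}
\Pr{X_T\ne Y_T}\;\le\;\E{D_T}\;\le\; n\Bigl((1-\epsilon)^{T}+\frac{\delta}{\epsilon}\Bigr),
\end{align*}
which is the desired conclusion.

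The argument is essentially mechanical; the only subtle point, and the main obstacle worth flagging, is the handling of the failure event $\ol{A_t}$. We cannot obtain any contraction there, so we must use the deterministic worst-case bound $D_{t+1}\le n$; this is what produces the additive $n\delta$ term per step and, after geometric summation, the additive $n\delta/\epsilon$ contribution. One also must read the hypothesis correctly (it is the usual ``with probability $\ge 1-\delta$ the drift contracts'' condition), so that $\Pr{\ol{A_t}}\le \delta$; no independence across time steps is needed because the total-expectation bound is applied marginally at each $t$ and then iterated.
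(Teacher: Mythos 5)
Your proof is correct. The paper does not prove this lemma itself --- it is imported as a black box from Hayes and Vigoda --- so there is no internal proof to compare against; your argument (one-step drift bound via the law of total expectation, using the trivial bound $D_{t+1}\le n$ on the failure event, iterating the affine recurrence, and finishing with Markov's inequality applied to the Hamming distance) is the standard derivation and yields exactly the stated bound $n\bigl((1-\epsilon)^T+\delta/\epsilon\bigr)$. You were also right to flag the reading of the hypothesis: as printed, the displayed probability bounds by $\delta$ the chance that the conditional drift \emph{does} contract, under which the conclusion would be false (a coupling that never contracts satisfies it vacuously); the intended hypothesis is that contraction \emph{fails} with probability at most $\delta$ at each step, which is how the lemma is actually invoked later in the paper (the good events hold with probability at least $1-\delta$, and contraction holds on them), and your proof establishes the lemma under that intended reading. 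One small point worth making explicit is that the contraction event $A_t$ is a function of $(X_t,Y_t)$, so multiplying by $\*1[A_t]$ commutes with the conditional expectation given $(X_t,Y_t)$; this is what licenses the step $\E{\E{D_{t+1}\mid X_t,Y_t}\*1[A_t]}\le(1-\epsilon)\E{D_t\*1[A_t]}$, and no independence across time steps is needed, as you note.
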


%Now, we are ready to prove \Cref{lem:GD-good}.

This principle, combined with the 
\Cref{lem:good-event} and \Cref{lem:couple-good}, is sufficient to prove \Cref{lem:GD-good}.

\begin{proof}[Proof of \Cref{lem:GD-good}]
  Assume $n \geq 2$, otherwise the Glauber dynamics mixes trivially.
  The coupling between $(X_t)_{t\in\^N}$ and $(Y_t)_{t\in\^N}$ is constructed as follows. 
  For $\theta>0$, $T_2 >0$ to be fixed later, let
  \begin{align*}
  T_1 &= \theta \cdot n\log n\quad\text{ and }\quad T =T_1 + T_2.
  \end{align*}
  In the beginning, simulate the two chains $(X_t)_{0\leq t \leq T_1}$ and $(Y_t)_{0 \leq t\leq T_1}$ independently for $T_1$ steps.
  And for $T_1 < t \leq T=T_1+T_2$, at time $t$, for some suitable $m>0$ to be fixed later:
  \begin{enumerate}
  \item if $X_{t-1} \in \+G(m)$ and $Y_{t-1} \in \+G(m)$, then generate $(X_{t}, Y_{t})$ conditioning on $(X_{t-1},Y_{t-1})$ using the coupling in \Cref{lem:couple-good};
  \item otherwise, generate $X_{t}$ and $Y_{t}$ independently according to transition rule of the chain.
  \end{enumerate}
%  \textbf{(1)} if $X_{t-1} \in \+G$ and $Y_{t-1} \in \+G$, then generate $X_{t}, Y_{t}$ conditioning on $X_{t-1},Y_{t-1}$ using the coupling in \Cref{lem:couple-good}; \textbf{(2)} otherwise, generate $X_{t}$, $Y_{t}$ independently according to transition rule of the chain.
%

  We may treat $(X'_{t})_{0\leq t \leq T_2}:=(X_{t+T_1})_{0\leq t \leq T_2}$, $(Y'_{t})_{0\leq t \leq T_2}:=(Y_{t+T_1})_{0\leq t \leq T_2}$ as coupled Markov chains starting from the initial states $X'_0=X_{T_1}$ and $Y'_0=Y_{T_1}$.

  By \Cref{lem:good-event},  for $0< t \le T_2$,  it holds that
  \begin{align}\label{eq:good-event}
    \Pr{X'_t \not\in \+G(m) \text{ or } Y'_t \not\in \+G(m)} \leq 2\Delta n^2 \tp{\frac{(1 + \alpha)^\Delta}{\lambda + (1 + \alpha)^\Delta}}^m + 2 n^{2-\theta}.
  \end{align}
  By \Cref{lem:couple-good}, for $0< t \le T_2$,  it holds that
   \begin{align}\label{eq:couple-good}
  \forall X'_t, Y'_t \in \+G(m),\quad
    \E{\abs{X'_{t+1} \oplus Y'_{t+1}} \mid X'_t, Y'_t} \leq \tp{1 - \frac{1 - \eta \cdot \Delta  m}{n}} \abs{X'_t \oplus Y'_t},
  \end{align}
  where $\eta=\frac{\lambda}{\lambda + 1} - \frac{\lambda}{\lambda + (1 + \alpha)^\Delta}$.
  Denote $C= (1 + \alpha)^\Delta \geq 1$. Suppose that the followings are true:
  \[
  \epsilon:=\frac{1 - \eta \cdot \Delta m}{n}\in(0,1) \quad \text{ and }\quad \delta:=2 \Delta n^2 \tp{\frac{C}{\lambda + C}}^m + 2n^{2-\theta}\in(0,1),
  \]
  \Cref{eq:good-event} and \eqref{eq:couple-good} together imply that for $0< t \le T_2$, 
  \[
      \Pr{\E{\abs{X'_{t+1} \oplus Y'_{t+1}} \mid X'_t, Y'_t} \leq (1 - \epsilon) \abs{X'_t \oplus Y'_t}} \leq \delta.
  \]
  Then we can apply \Cref{lem:coupling-with-stationary} with the above $\epsilon,\delta\in(0,1)$, which gives
 % $\frac{1 - \eta \cdot \Delta m}{n}$ and $2 \Delta n^2 \tp{\frac{C}{\lambda + C}}^m + 2n^{2-\theta}$ are in $(0, 1)$, 
  \begin{align*}
    \Pr{X_T \neq Y_T} 
    &= \Pr{X'_{T_2} \neq Y'_{T_2}}
    \le  n \cdot \left((1 - \epsilon)^{T_2} + \frac{\delta}{\epsilon}\right)\\
    &\leq 
    n \cdot \tp{\tp{1 - \frac{1 - \eta \cdot \Delta m}{n}}^{T_2} + \frac{2\Delta n^3 \tp{\frac{C}{\lambda + C}}^m + 2n^{3-\theta}}{1 - \eta \cdot \Delta m}}\\
    &\leq 
    n \cdot \tp{\tp{1 - \frac{1 - \frac{C}{\lambda} \Delta m}{n}}^{T_2} + \frac{2\Delta n^3 \tp{\frac{C}{\lambda}}^m + 2n^{3-\theta}}{1 - \frac{C}{\lambda} \cdot \Delta m}},
  \end{align*}
  where the last inequality uses the facts $\eta= \frac{\lambda}{\lambda + 1} - \frac{\lambda}{\lambda + C} = \frac{C - 1}{\lambda + \frac{C}{\lambda} + 1 + C} \leq \frac{C}{\lambda}$ and $\frac{C}{\lambda + C}\le \frac{C}{\lambda}$.
%   that we have
%  \begin{align*}
%    \eta
%    &= \frac{\lambda}{\lambda + 1} - \frac{\lambda}{\lambda + C} 
%    = \frac{C - 1}{\lambda + \frac{C}{\lambda} + 1 + C}
%    \leq \frac{C}{\lambda}.
%  \end{align*}
%  Hence it holds that
%  \begin{align*}
%    \Pr{X_T \neq Y_T} &\leq n \cdot \tp{\tp{1 - \frac{1 - \frac{C}{\lambda} \Delta m}{n}}^{T_2} + \frac{2\Delta n^3 \tp{\frac{C}{\lambda}}^m + 2n^{3-\theta}}{1 - \frac{C}{\lambda} \cdot \Delta m}}.
%  \end{align*}

  Due to \eqref{lem:GD-good}, we have $\lambda / C = k \cdot \Delta \log n$. Fix $m = \log n$. Then
  \begin{align*}
    \Pr{X_T \neq Y_T}
    &\leq n \cdot \tp{\tp{1 - \frac{1}{2n}}^{T_2} + \frac{2k}{k-1}\tp{\Delta n^3 \cdot n^{-\log\tp{k \cdot \Delta \log n}} + n^{3-\theta}}} \\
    &\leq n^{1 - \frac{T_2}{2 n \log n}} + 4 n^{4-\log k} + 4 n^{4-\theta}. 
  \end{align*}
  We choose $\theta = 9$ and $T_2 = 12 \cdot n\log n$,  hence $T=T_1+T_2=21\cdot n\log n$.
  Since $k \geq \e^{9}$ and $n \geq 2$,
  \begin{align*}
    \Pr{X_T \neq Y_T} \le  9 \cdot n^{-5}  \leq n^{-1}.
  \end{align*}
  By the geometric convergence of Markov chain, for any $T \geq 21 \cdot n\log n$ it holds that
  \begin{align*}
    \Pr{X_T \neq Y_T} &\leq n^{- \ftp{\frac{T}{21 n\log n}}}. \qedhere
  \end{align*}
\end{proof}

\subsection{Occurrence of good events with high probability} \label{sec:good-event}
We now prove \Cref{lem:good-event}.
It is sufficient to show that for $t \geq \theta \cdot n\log n$,
\begin{align} \label{eq:good-event-tar}
  \Pr{Y_t \not\in \+G(m)} \leq \Delta n^2 \tp{\frac{(1 + \alpha)^\Delta}{\lambda + (1 + \alpha)^\Delta}}^m + n^{2-\theta}.
\end{align}
Consider the chain $Y_t$ to be generated by the following equivalent process:
let $((v_s, R_s))_{1 \leq s \leq t}$ be a sequence of ``random seeds'', where each $v_s\in L$ and $R_s\in [0, 1]$ are drawn uniformly and independently at random;
for $1 \leq s \leq t$,  the new state $Y_s$ is constructed from $Y_s$ as that $Y_s(u)=Y_{s-1}(u)$ for all $u\neq v_s$, and
%
%is a vertex picked from $L$ uniformly at random, and $R_s$ is a random real number picked from $[0, 1]$
%
%To get $Y_t$, the chain generates $((v_s, R_s))_{1 \leq s \leq t}$, where for each $1 \leq s \leq t$, $v_s$ is a vertex picked from $L$ uniformly at random, and $R_s$ is a random real number picked from $[0, 1]$ uniformly at random.
%Both $v_s, R_s$ are picked independently.
%Then in the $s$-th update, the chain updates $Y_{s-1}$ to $Y_s$ by letting $Y_s = Y_{s-1}$ and updating the $v_s$ coordinate as
\begin{align} \label{eq:GD-update}
  Y_s(v_s) =
  \begin{cases}
    +1 & \text{if }R_s \leq \frac{\lambda}{\lambda + (1 + \alpha)^{F(Y_{s-1}, v_s)}} \\
    -1 & \text{otherwise}
  \end{cases},
\end{align}
where $F(Y_{s-1}, v_s)$ counts the number of ``free'' neighbors of $v_s$, whose spin states are not fixed by the current configuration $Y_{s-1}$ on $L$, formally:
%where for $Z \in \{-1, +1\}^L$ and $w \in L$, $F(Z, w)$ is the number of free vertex in $\Gamma_w$ condition on $L$ pinned to $Z$, That is
\begin{align}
\forall Z \in \{-1, +1\}^L,
\forall w \in L,\quad
  F(Z, w) := \sum_{v\in \Gamma_w} \*1[\forall i \in \Gamma_v \setminus \{w\}: Z(i) = -1].\label{eq:free-neighbors}
\end{align}
It is easy to verify that this process faithfully simulates the Glauber dynamics on $\nu=\mu_L$.

Let $A_t$ be the event that all the vertices in $L$ have been updated by the Glauber dynamics for at least once by time $t$,
that is $A_t := \{ \{v_1, v_2, \cdots v_t\} = L \}$.
Then we have
\begin{align*}
  \Pr{Y_t \not\in \+G_u(m)}
  &= \Pr{Y_t \not\in \+G_u(m) \land A_t} + \Pr{Y_t \not\in \+G_u(m) \land \neg A_t} \\
  &\leq \Pr{Y_t \not\in \+G_u(m) \mid A_t} + \Pr{\neg A_t}.
\end{align*}
By the coupon collector, we have for $t \geq \theta \cdot n\log n$,
\begin{align*}
  \Pr{\neg A_t} \leq n \tp{1 - \frac{1}{n}}^t \leq n \e^{-\frac{t}{n}} = n^{1 - \frac{t}{n\log n}} \leq n^{1-\theta}.
\end{align*}
%where in the last inequality, we use the fact that $t \geq \theta \cdot n\log n$.
%
Now, fix any $v_1, v_2, \cdots, v_t$ that make $A_t$ occur.
Let $u \in L$ and let $v \in R$ be a vertex with $\abs{\Gamma_v} > m$.
In above generation of $Y_t$, for each $w \in \Gamma_v \setminus \{u\}$, and $j^*(w):=\max\{1\le j\le t: v_j = w\}$, by~\eqref{eq:GD-update}, 
\begin{align*}
  \Pr{Y_t(w) = -1 \mid (R_s)_{1\leq s\leq t, s\neq j^*(w)}} \leq \frac{(1 + \alpha)^\Delta}{\lambda + (1 + \alpha)^\Delta}.
\end{align*}
For distinct $w \in \Gamma_v \setminus \{u\}$, the $j^*(w)=\max\{1\le j\le t: v_j = w\}$ are obviously distinct, and hence $R_{j^*(w)}$ are mutually independent.
Therefore, by a chain rule, we have for $\abs{\Gamma_v}> m$
\begin{align*}
  \Pr{\forall w \in \Gamma_v \setminus \{u\}, Y_t(w) = -1}
  &\leq \tp{\frac{(1 + \alpha)^\Delta}{\lambda + (1 + \alpha)^\Delta}}^{\abs{\Gamma_v \setminus \{u\}}}
  \leq \tp{\frac{(1 + \alpha)^\Delta}{\lambda + (1 + \alpha)^\Delta}}^m.
\end{align*}
%where in the last inequality, we use the fact that $\abs{\Gamma_v \setminus \{u\}} \geq m$.
Note that $\abs{R} \leq \Delta \abs{L} = \Delta n$. Thus by  union bound, it holds that
\begin{align*}
  \Pr{Y_t \not\in G_u(m) \mid A_t} \leq \Delta n \cdot \tp{\frac{(1 + \alpha)^\Delta}{\lambda + (1 + \alpha)^\Delta}}^m,
\end{align*}
which implies that
\begin{align*}
  \Pr{Y_t \not\in G_u(m)} \leq \Delta n \cdot \tp{\frac{(1 + \alpha)^\Delta}{\lambda + (1 + \alpha)^\Delta}}^m + n^{1 - \theta}.                                                                                                
\end{align*}
Finally, \eqref{eq:good-event-tar} can be proved by applying another union bound over all $u \in L$.

\subsection{Contraction of path coupling conditioning on good events}
\label{sec:couple-good}
We now prove \Cref{lem:couple-good}.
Let $m>0$ and $ \+G= \+G(m)$.
Fix any $X_t, Y_t \in \+G$. 
Denote by $u_1, u_2, \cdots, u_i$ the vertices in $Y_t^{-1}(+1) \setminus X_t^{-1}(+1)$, and  by $v_{i+1}, v_{i+2},\cdots, v_{i+j}$ the  vertices in $X_t^{-1}(+1) \setminus Y_t^{-1}(+1)$.
Let $P = (P_0, P_1, \cdots, P_{i+j})$ be a path of configurations  from $P_0=X_t$ to $P_{i+j}=Y_t$ defined by:
\begin{align*}
\forall 1 \leq x \leq i + j,\quad   P_x :=
  \begin{cases}
    \text{$P_{x-1}$ except that $P_{x-1}(u_x)$ is set to $+1$} & \text{if }1 \leq x \leq i, \\
    \text{$P_{x-1}$ except that $P_{x-1}(v_x)$ is set to $-1$} & \text{if }i < x \leq i + j.
  \end{cases}
\end{align*}
Note that the length of this path is precisely $\abs{X_t \oplus Y_t}$ and $P_x \in \+G$ for every $0 \leq x \leq i + j$.
Hence, by the standard path coupling argument~\cite{bubley1997path}, it is sufficient to construct a coupling $(X_{t+1}, Y_{t+1})$ for those $X_t, Y_t \in \+G$ with $\abs{X_t \oplus Y_t} = 1$, such that
\begin{align} \label{eq:couple-tar}
  \E{\abs{X_{t+1} \oplus Y_{t+1}} \mid X_t, Y_t} \leq 1 - \frac{1}{n} + \frac{\eta\cdot \Delta m}{n}.
\end{align}

Suppose $X_t \oplus Y_t = \{u\}$, without loss of generality, assume that $X_t(u) = +1$ and $Y_t(u) = -1$.
We generate $X_{t+1}, Y_{t+1}$ by the following coupling procedure:
\begin{enumerate}
\item pick a vertex $i \in L$ uniformly at random;
\item generate $(X_{t+1}(i), Y_{t+1}(i))$ according to the optimal coupling of their marginal distributions;
%use optimal coupling to sample $X_{t+1}(i), Y_{t+1}(i)$ from the marginals;
\item let $X_{t+1}(L\setminus\{i\}) = X_t(L\setminus \{i\})$ and $Y_{t+1}(L\setminus\{i\}) = Y_t(L\setminus \{i\})$.
\end{enumerate}
Due to this construction, it holds that
\begin{align}\label{eq:couple-good-contraction}
  \E{\abs{X_{t+1} \oplus Y_{t+1}} \mid X_t, Y_t} \leq 1 - \frac{1}{n} + \sum_{w\in L\setminus\{ u\}} \frac{\-d(w)}{n},
\end{align}
where the function $\-d: L \to [0, 1]$ is defined as: for $w \in L$,
\begin{align*}
  \-d(w) := \frac{\lambda}{\lambda + (1 + \alpha)^{F(X_t,w)}} - \frac{\lambda}{\lambda + (1 + \alpha)^{F(Y_t, w)}},
\end{align*}
where recall that $F(Y_t, w)$, as defined in \eqref{eq:free-neighbors}, counts the number of free neighbors of $w$ given the configuration $Y_t$.
%for $Z \in \{X_t, Y_t\}$, we use $F(Z, w)$ to denote the number of free vertex in $\Gamma_w$ condition on $L$ pinned to $Z$.
%That is, 
%\begin{align*}
%  F(Z, w) := \sum_{v \in \Gamma_w} \*1[\forall i \in \Gamma_v \setminus \{w\}, Z(i) = -1].
%\end{align*}
Note that $\-d(w)\le \eta :=\frac{\lambda}{\lambda + 1} - \frac{\lambda}{\lambda + (1 + \alpha)^\Delta}$ in the worst case.
And by our assumption, we have $X_t(u) = +1$ and $Y_t(u) = -1$, thus
\begin{align*}
  F(Y_t, w) - F(X_t, w) &= \sum_{v \in \Gamma_w \cap \Gamma_u} \*1[\forall i \in \Gamma_v \setminus \{w\}: Y_t(i) = -1].
\end{align*}
Therefore, given any $w \in L$, 
if $\deg_G(v) > m$ for all $v \in \Gamma_w \cap \Gamma_u$, then by the fact that $X_t, Y_t \in \+G_w(m)$, it holds that $F(X_t, w) = F(Y_t, w)$, which means $\-d(w) = 0$ in this case.
Therefore, $\-d(w)$ can be upper bounded as:
\begin{align*}
  \-d(w) \leq
  \begin{cases}
    0 & \text{if }\forall v \in \Gamma_w \cap \Gamma_u: \deg_G(v) > m \\
    \eta & \text{otherwise}
  \end{cases}.
\end{align*}
%where we recall that we use $\eta$ to denote $\frac{\lambda}{\lambda + 1} - \frac{\lambda}{\lambda + (1 + \alpha)^\Delta}$.
Applying this upper bound to \eqref{eq:couple-good-contraction},
we have
\begin{align*}
  \E{\abs{X_{t+1} \oplus Y_{t+1}} \mid X_t, Y_t}
  &\leq 1 - \frac{1}{n} + \sum_{w\in L\setminus\{ u\}} \frac{\eta}{n} \cdot \*1\left[\exists v \in \Gamma_w \cap \Gamma_u\text{ s.t.~}\deg_G(v) \leq m \right] \\
  &\leq 1 - \frac{1}{n} + \frac{\eta}{n} \cdot \Delta m.
\end{align*}
The very last inequality is due to the fact that 
any such vertex $w \in L$ satisfying the condition that $\exists v \in \Gamma_w \cap \Gamma_u$ s.t.~$\deg_G(v) \leq m$, 
can be found by enumerating the neighbors of  the low-degree ($\deg\le m$) neighbors of $u$.
And there are at most $\Delta m$ of them.

\section{Mixing of the field dynamics on one side}
\label{sec:FD-ent-decay}
In this section, we prove \Cref{thm:FD-ent-decay-short}, the entropic decay of the field dynamics.
The proof uses the analysis of negative-fields localization developed in~\cite{chen2022localization},
where the negative-fields localization process is instantiated by the field dynamics. 
%The proof also reuses the analysis of entropy decay of the field dynamics developed in \cite{chen2022optimal, chen2022near}.
This stochastic localization process is powerful enough to produce a block factorization of entropy for the field dynamics,
which implies the desired entropy decay by the argument developed in \cite{chen2022optimal, chen2022near}.

The field dynamics have been introduced in \Cref{sec:ent-decay-field-general}, specifically in \Cref{def:field-dynamics-ori}.
Here, for technical convenience we introduce another equivalent definition of the field dynamics.

Let $\theta \in (0, 1)$ be a real number. 
For any distribution  $\nu$ over $\{-1, +1\}^n$,
%For any distribution $\nu$ over $\{-1, +1\}^n$,
the field dynamics $\FD{\nu}$ on $\nu$ with parameter $\theta$ is a Markov chain $(X_t)_{t \in \^N}$ on space $\Omega(\nu)$. 
In its $i$-th transition, 
\begin{enumerate}
\item \label{item:FD-9-1} generate a set $R\subseteq X_{i-1}^{-1}(+1)$ by including each $v\in[n]$ with $X_{i-1}(v) = +1$ into $R$ with probability $1 - \theta$;
\item \label{item:FD-9-2} sample $X_i$ according to the joint distribution $\sim (\theta * \nu)^{\*1_R}$, that is, $X_i(v) = 1$ for $v \in R$, and $X_i([n]\setminus R)$ is sampled as in $\theta * \nu$ conditioned on all $v\in R$ being occupied.
\end{enumerate}
Note that this is equivalent to the process defined in \Cref{def:field-dynamics-ori},
where the above transition rule  is defined in terms of the complement set $S=[n]\setminus R$.

It is also helpful to think of the field dynamics as a composition of 
the following  operators.
\begin{definition} \label{def:FD-down-up}
  Let $\theta \in (0, 1)$ be a real number.
  For any distribution $\nu$ over $\{-1, +1\}^n$, we define two operators $\FDdown[\theta]{\nu} \in \^R^{\Omega(\nu) \times 2^{[n]}}$ and $\FDup[\theta]{\nu} \in \^R^{2^{[n]} \times \Omega(\nu)}$ such that for $X \in \Omega(\nu)$ and $R \subseteq [n]$, 
  \begin{align*}
    \FDdown{\nu}(X, R) &:= \*1[R \subseteq X^{-1}(1)] \cdot (1 - \theta)^{\abs{R}} \theta^{\abs{X}_+ - \abs{R}}, \\
    \FDup{\nu}(R, X) &:= (\theta * \nu)^{\*1_R}(X),
  \end{align*}
  where we recall that $\abs{X}_+$ denotes the number of $+1$ in vector $X$.
\end{definition}

%We define two operators $\FDdown[\theta]{\nu} \in \^R^{\Omega(\nu) \times 2^{[n]}}$ and $\FDup[\theta]{\nu} \in \^R^{2^{[n]} \times \Omega(\nu)}$ such that for $X \in \Omega(\nu)$ and $R \subseteq [n]$, 
%  \begin{align*}
%    \FDdown{\nu}(X, R) &:= \*1[R \subseteq X^{-1}(1)] \cdot (1 - \theta)^{\abs{R}} \theta^{\abs{X}_+ - \abs{R}}, \\
%    \FDup{\nu}(R, X) &:= (\theta * \nu)^{\*1_R}(X),
%  \end{align*}
%  where we recall that $\abs{X}_+$ denotes the number of $+1$ in vector $X$.

By \Cref{def:FD-down-up}, it holds that
\begin{itemize}
\item for a fixed $X \in \Omega(\nu)$, $\FDdown{\nu}(X, \cdot)$ is the distribution of random set $R \subseteq [n]$, such that for each $v \in [n]$ independently,  we have $v\in R$ with probability $1 - \theta$ if $X(v) = +1$, and $v\not\in R$ if $X(v) = -1$.
%$\Pr[v\in R]=1-\theta$  if $X(v) = 1$, and $\Pr[v\in R]=0$
%for each $v \in [n]$, if $X(v) = 1$, then $v$ is added to $R$ with probability $1 - \theta$. 
This correspond to \Cref{item:FD-9-1} above.
\item for a fixed $R \subseteq [n]$, it is straightforward to note that $\FDup{\nu}(R, \cdot) = (\theta * \mu)^{\*1_R}$. This correspond to \Cref{item:FD-9-2} above.
\end{itemize}
So, it is straightforward to verify that 
\begin{align}\label{eq:FD-up-down}
\FD{\nu} = \FDdown{\nu}\FDup{\nu}.
\end{align}
%$\FD{\nu} = \FDdown{\nu}\FDup{\nu}$, where we multiply vectors from the left.

%It can be verify that 
%
%There is a detailed description for the notions $\FD{\nu}, \FDdown{\nu}, \FDup{\nu}$ at the end of \Cref{sec:prelim-MC}.

%We state a full version of \Cref{thm:FD-ent-decay-short}, which extends \Cref{thm:FD-ent-decay-short} with full technical  details.
%Note that \Cref{thm:FD-ent-decay} is exactly \Cref{thm:FD-ent-decay-short} with an additional technical result.
%In the rest part of this section, we prove \Cref{thm:FD-ent-decay}.

The following is a technical restatement of  \Cref{thm:FD-ent-decay-short}.
\begin{theorem}\label{thm:FD-ent-decay}
  Let $\theta \in (0, 1), K, \eta \geq 1$ be real numbers.
  If a distribution $\nu$ over $\{-1, +1\}^n$ satisfies: 
  \begin{enumerate}
  \item $\lambda * \nu$ is $K$-marginally stable for $\lambda \in [\theta, 1]$,
  \item $\lambda * \nu$ is $\eta$-spectrally independent for $\lambda \in [\theta, 1]$,
  \end{enumerate}
  then for  $\kappa := \theta^{2\cdot 10^3 \eta K^4}$, for any distribution $\pi$ that is absolutely continuous respect to $\nu$, 
  we have
  \begin{align*}
    \DKL{\pi \FDdown{\nu}}{\nu \FDdown{\nu}} \leq (1 - \kappa) \DKL{\pi}{\nu}.
  \end{align*}
  \end{theorem}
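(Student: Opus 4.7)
The plan is to prove the entropy decay of the down operator $\FDdown{\nu}$ by instantiating the negative-fields stochastic localization framework of Chen--Eldan with the field dynamics, and to derive the required contraction factor from the spectral independence and marginal stability hypotheses along the one-parameter family $\{\lambda * \nu : \lambda \in [\theta,1]\}$.

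My first step would be to exhibit an explicit identification between $\FDdown{\nu}$ and the time-$\log(1/\theta)$ marginal of the negative-fields localization process started at $\nu$. Concretely, the negative-fields localization is a discrete-time (or continuous-time) process that, at each step, multiplies the current measure by an i.i.d.\ external field tilt; when all tilts are taken from $\{1,0\}$ with the right probabilities, the law of the resulting ``pinned'' configuration is exactly $\FDdown{\nu}(X,\cdot)$ for $X$ drawn from $\nu$, and the intermediate measures are precisely the $\lambda * \nu$ for $\lambda$ interpolating from $1$ down to $\theta$. This gives a clean dictionary: the posterior of the process at ``time $\lambda$'' is the conditional measure $(\lambda * \nu)^{\*1_R}$ for the random pin $R$, which is exactly the up operator $\FDup{\nu}$ from \Cref{def:FD-down-up}. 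Thanks to \eqref{eq:FD-up-down}, this also matches the standard variance/entropy decomposition for reversible up--down walks.

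Next, using the Chen--Eldan machinery, the entropy contraction of $\FDdown{\nu}$ reduces to a uniform bound on the ``local'' entropic independence of every tilted measure $\lambda * \nu$ encountered along the localization path. Here I invoke hypothesis~(2), $\eta$-spectral independence of $\lambda*\nu$ for all $\lambda\in[\theta,1]$, together with hypothesis~(1), $K$-marginal stability of $\lambda*\nu$, to upgrade spectral to entropic independence in the sense of \Cref{def:EI-intro}; this upgrade is exactly the route developed in \cite{anari2022entropic, chen2022optimal, chen2022localization}, where $K$-marginal stability ensures that the extremal entropy functional cannot concentrate on high-Hamming-weight directions and thus permits converting variance-type bounds to entropy-type bounds with a loss polynomial in $K$. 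Accumulating these local entropic independence bounds over the localization path (a telescoping / Gr\"onwall-type argument on the entropy along the time parameter) yields a block factorization of entropy of the form
\begin{align*}
\Ent[\nu]{f} \le \frac{1}{\kappa}\cdot \E[\nu]{\Ent[\FDup{\nu}(R,\cdot)]{f}},
\end{align*}
for $\kappa$ at least $\theta^{2\cdot 10^3 \eta K^4}$, where the $\theta$ appears through the length of the $[\theta,1]$ interval and the constants in the local-to-global lifting. Standard identities relate this block factorization directly to the desired one-step entropy contraction $\DKL{\pi\FDdown{\nu}}{\nu\FDdown{\nu}} \le (1-\kappa)\DKL{\pi}{\nu}$.

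The main obstacle I expect is the bookkeeping in the last step: controlling the cumulative loss across the family $\{\lambda*\nu\}_{\lambda\in[\theta,1]}$ when only the endpoints and a uniform bound on spectral independence and marginal stability are available. In particular, one must verify that the hypotheses are closed under the conditioning operation $\nu \mapsto (\lambda*\nu)^{\*1_R}$ for typical pins $R$ produced by the process, so that the spectral independence and marginal stability bounds transport along the localization. For this I plan to reuse the pinning-closure properties already present in the definitions (spectral independence is by definition closed under pinnings, and $K$-marginal stability is preserved under conditioning on $+1$ pins, which are the only pins the down step produces). With these closures in hand, the chain of implications ``spectral independence along the path $+$ marginal stability $\Rightarrow$ entropic independence along the path $\Rightarrow$ block factorization $\Rightarrow$ one-step entropy contraction of $\FDdown{\nu}$'' closes, yielding the stated contraction rate $\kappa = \theta^{2\cdot 10^3\eta K^4}$.
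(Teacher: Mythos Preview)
Your proposal is correct and follows essentially the same route as the paper: identify $\FDdown{\nu}$ with the time-$\log(1/\theta)$ slice of the negative-fields localization process (the paper proves this explicitly as \Cref{lem:NFL-is-FD}), upgrade spectral independence plus $K$-marginal stability along $\{\lambda*\nu\}_{\lambda\in[\theta,1]}$ to entropic independence via \Cref{lem:EI}, integrate the differential entropy inequality of \Cref{lem:dEnt} over $t\in[0,-\log\theta]$ to obtain the $\theta$-magnetized block factorization \eqref{eq:ent-mag-factor}, and finally rewrite this as the claimed one-step KL contraction for $\FDdown{\nu}$. The closure-under-pinning observation you flag is exactly what the paper uses implicitly (both spectral independence and $K$-marginal stability are, by definition, quantified over all pinnings).
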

Note that the conclusion in \Cref{thm:FD-ent-decay-short}:
  \begin{align*}
    \DKL{\pi \FD{\nu}}{\nu \FD{\nu}} \leq (1 - \kappa) \DKL{\pi}{\nu},
  \end{align*}
follows immediately from \Cref{thm:FD-ent-decay}, by \eqref{eq:FD-up-down} and  the data processing inequality.

In the rest of this section, we prove \Cref{thm:FD-ent-decay}.

We adopt the following notion of {``negative-field'' localization process} introduced in \cite{chen2022localization}
\begin{definition}[negative-field localization process]
Let $\nu$ be a distribution over $\{-1, +1\}^n$.
A \emph{negative-field localization process} for $\nu$, denoted by $(\nu_t)_{t \geq 0}$, is a continuous-time stochastic processes defined as follows.
Let $(R_t)_{t \geq 0}, (\nu_t)_{t\geq 0}$ be continuous-time stochastic processes such that for any $t \geq 0$, $R_t \subseteq [n]$ is a subset of $[n]$ and $\nu_t$ is a distribution over $\{-1, +1\}^n$.
Given that $(R_t)_{t\geq 0}$ has been generated, %up to time $t \geq 0$. 
the process $(\nu_t)_{t\geq 0}$ can be generated such that $\nu_t := (\e^{-t} * \nu)^{\*1_{R_t}}$ for every $t \geq 0$.
Now, we describe how the process $(R_t)_{t\geq 0}$ is generated.

Let $R_0 = \emptyset$. Suppose that $(R_t)_{0\leq t \leq t_\star}$ has been generated up to some time threshold $t_\star\ge 0$.
Iteratively, the process $(R_t)_{t_\star < t \leq \tau}$ is generated from time $t_\star$ to some stopping time $\tau > t_\star$ as below:
\begin{itemize}
\item If $R_{t_\star} = [n]$, then for every $t > t_\star$, let $R_t = R_{t_\star}$.
\item If otherwise, 
  \begin{enumerate}
  \item for $i \in [n] \setminus R_{t_\star}$, let $T_i$ be mutually independent random variables  such that for $s\ge t$,
    \begin{align*}
      \Pr{T_i > s} := \exp\tp{- \int_t^s (\e^{-r} * \nu)^{R_{t_\star}}_i \; \-d r};
    \end{align*}
  \item define 
    \begin{align*}
      \tau := \min_{i \in [n] \setminus R_t} T_i \quad \text{and} \quad J := \argmin_{i \in [n] \setminus R_t} T_i;
    \end{align*}
  \item extend $R_t$ from time $t_\star$ to $\tau$ and include $J$ into $R_{\tau}$, that is, for every $t \in (t_\star, \tau]$, let
    \begin{align*}
      R_t =
      \begin{cases}
        R_t & t_\star < t < \tau, \\
        R_t \cup \{J\} & t = \tau.
      \end{cases}
    \end{align*}
  \end{enumerate}
\end{itemize}
\end{definition}

It is straightforward  to see that extending $R_0$ to $(R_t)_{t\geq 0}$ requires at most $n+1$ rounds of iteration described above.
Hence the processes $(R_t)_{t\geq 0}$ and $(\nu_t)_{t\geq 0}$ are well defined.

Note that the process $(\nu_t)_{t\geq 0}$ is time-homogeneous in the definitions of the random variables $T_i, i \in [n]$.
Conditioning on $(\nu_r)_{r \in [0, s]}$, the transition rule of the process $(\nu_{s + t})_{t \geq 0}$ is 
identical to that of $(\tilde{\nu}_t)_{t\geq 0}$ with the starting measure $\tilde{\nu}_0 = \nu_s$.

The \emph{negative-fields localization scheme} is a map between a starting distribution $\nu$ to the negative-field localization process $(\nu_t)_{t\geq 0}$ with $\nu_0 = \nu$.
The negative-fields localization scheme can be analyzed through the lens of entropic independence introduced in \cite{anari2022entropic}.
We restate \Cref{def:EI-intro} for the entropic independence here.

\begin{definition}[entropic independence] \label{def:EI}
  Let $\alpha > 0$ be a real number.
  A distribution $\nu$ over $\{-1, +1\}^n$ is said to be $\alpha$-entropically independent if for every distribution $\mu$ which is absolutely continuous with respect to $\nu$, it holds that
  \begin{align*}
    \sum_{i\in[n]}\DKL{\mu_i}{\nu_i}:=\sum_{i \in [n]} \tp{\nu_i(+1) \log \frac{\nu_i(+1)}{\mu_i(+1)} + \nu_i(-1) \log \frac{\nu_i(-1)}{\mu_i(-1)}} \leq \alpha \cdot \DKL{\mu}{\nu}.
  \end{align*}
\end{definition}

Recall the following notion of entropy of a function $f:\Omega\to\mathbb{R}_{\ge 0}$ with respect to a distribution $\mu$, defined in \Cref{sec:prelim-MC}:
\begin{align*}
  \Ent[\mu]{f} := \E[\mu]{f\log \frac{f}{\E[\mu]{f}}},\quad\text{with the convention }0\log 0 = 0.
\end{align*}

\begin{lemma}[\text{\cite[Proposition 41]{chen2022localization}}] \label{lem:dEnt}
  Let $\nu$ be a distribution over $\{-1, +1\}^n$ and $(\nu_t)_{t\geq 0}$ be a negative-field localization process for $\nu$.
  Fix $t \geq 0$ and $\nu_t$. If $\nu_t$ is $\alpha$-entropically independent, then for all $f:\{-1, +1\}^n \to \^R_{\geq 0}$, and for all $h \geq 0$,
  \begin{align*}
    \E{\Ent[\nu_{t+h}]{f} \mid \nu_t} \geq \Ent[\nu_t]{f} (1 - 4 h \alpha) + o(h).
  \end{align*}
\end{lemma}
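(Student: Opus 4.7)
The plan is to reduce \Cref{lem:dEnt} to an infinitesimal differential inequality
\begin{align*}
  \frac{\-d}{\-d h}\bigg|_{h = 0^+} \E{\Ent[\nu_{t + h}]{f} \mid \nu_t} \;\geq\; -4\alpha \cdot \Ent[\nu_t]{f},
\end{align*}
from which the little-$o$ form stated in the lemma follows by Taylor expansion. Conditional on $\nu_t$, the measure-valued process $(\nu_s)_{s \geq t}$ has an explicit generator with two summands: a deterministic \emph{drift}, in which the external field decays so that between jumps $\nu_s = \e^{-(s - t)} * \nu_t$ (the tilt acting only on the unpinned coordinates $[n] \setminus R_t$); and a \emph{jump} part, in which each unpinned coordinate $i$ is added to $R_t$ at instantaneous rate $\nu_{t, i}(+1)$, replacing $\nu_t$ by its conditional $\nu_t^{i, +1}$. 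Accordingly, the derivative of $F(h) := \E{\Ent[\nu_{t + h}]{f} \mid \nu_t}$ at $h = 0^+$ decomposes as
\begin{align*}
  F'(0^+) \;=\; D(\nu_t, f) \;+\; \sum_{i \notin R_t} \nu_{t, i}(+1) \tp{\Ent[\nu_t^{i, +1}]{f} - \Ent[\nu_t]{f}},
\end{align*}
where $D(\nu_t, f) := \frac{\-d}{\-d s}\big|_{s = 0^+} \Ent[\e^{-s} * \nu_t]{f}$.

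The next step is to manipulate the jump sum via the entropy chain rule. Writing $\pi := f \cdot \nu_t / \E[\nu_t]{f}$ and invoking the identity $\Ent[\nu_t]{f} = \E[\nu_t]{f} \cdot \DKL{\pi_i}{\nu_{t, i}} + \sum_{s \in \{-1, +1\}} \nu_{t, i}(s) \Ent[\nu_t^{i, s}]{f}$ coordinate by coordinate, the jump sum rewrites as
\begin{align*}
  \sum_i \nu_{t, i}(+1) \tp{\Ent[\nu_t^{i, +1}]{f} - \Ent[\nu_t]{f}} \;=\; \sum_i \nu_{t, i}(-1) \tp{\Ent[\nu_t]{f} - \Ent[\nu_t^{i, -1}]{f}} \;-\; \E[\nu_t]{f} \sum_i \DKL{\pi_i}{\nu_{t, i}}.
\end{align*}
Since $\DKL{\pi}{\nu_t} = \Ent[\nu_t]{f}/\E[\nu_t]{f}$, the last sum is bounded directly via $\alpha$-entropic independence of $\nu_t$: $\E[\nu_t]{f} \sum_i \DKL{\pi_i}{\nu_{t, i}} \leq \alpha \cdot \Ent[\nu_t]{f}$.

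The drift must then supply the missing lower bound on the first piece. Differentiating the tilted density $(\e^{-s} * \nu_t)(x) \propto \nu_t(x) \e^{-s \abs{x}_+'}$ at $s = 0$, with $\abs{x}_+'$ the count of $+1$'s outside $R_t$, yields the covariance form
\begin{align*}
  D(\nu_t, f) \;=\; -\mathrm{Cov}_{\nu_t}\tp{\abs{y}_+',\; f \log (f / \E[\nu_t]{f}) - f}.
\end{align*}
Expanding the covariance coordinate by coordinate as $\sum_i \mathrm{Cov}_{\nu_t}(\*1[y_i = +1], \cdot)$ and applying the chain rule a second time, the drift rearranges into $-\sum_i \nu_{t, i}(-1) \tp{\Ent[\nu_t]{f} - \Ent[\nu_t^{i, -1}]{f}}$ plus a nonnegative Bregman-type residual in the conditional expectations $\E[\nu_t^{i, \pm 1}]{f}$, nonnegative because $a \log(a/b) \geq a - b$. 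Adding drift and jump, the asymmetric ``$-1$-pinning'' entropy differences cancel exactly, and the remainder is bounded below by $-4\alpha \cdot \Ent[\nu_t]{f}$ after a second invocation of $\alpha$-entropic independence to absorb the residual together with normalization terms arising from $s \to 0$.

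The hard part will be the drift--jump cancellation: verifying that the covariance from $D(\nu_t, f)$ really produces (up to sign) exactly the $\nu_{t, i}(-1) \tp{\Ent[\nu_t]{f} - \Ent[\nu_t^{i, -1}]{f}}$ contribution hidden inside the jump chain-rule identity, and then tracking the numerical constants carefully enough to arrive at the factor $4\alpha$. Conceptually, the cancellation reflects a duality between the continuous $+1$-tilt and the Poisson $+1$-pinning as two dual operations on $\nu_t$, so that their effects on entropy are antithetical modulo the marginal KL divergences that entropic independence controls; this is precisely the structural mechanism exploited by the Chen--Eldan stochastic localization framework.
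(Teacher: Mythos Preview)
The paper does not prove this lemma; it is quoted verbatim as \cite[Proposition 41]{chen2022localization} and used as a black box in the proof of \Cref{thm:FD-ent-decay}. So there is no ``paper's own proof'' to compare against, only the Chen--Eldan reference.

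Your sketch is a reasonable reconstruction of that argument, and the generator decomposition into drift and jump is the right starting point. One comment on the execution: the detour through the ``$-1$-pinning'' rewrite of the jump sum is unnecessary and obscures the cancellation. If you carry your covariance formula for the drift one step further, writing $m = \E[\nu_t]{f}$ and $m_+^i = \E[\nu_t^{i,+1}]{f}$, you get directly
\[
D(\nu_t,f) \;=\; -\sum_{i\notin R_t}\nu_{t,i}(+1)\bigl[\Ent[\nu_t^{i,+1}]{f}-\Ent[\nu_t]{f}\bigr]\;-\;\sum_{i\notin R_t}\nu_{t,i}(+1)\bigl[m_+^i\log(m_+^i/m)-m_+^i+m\bigr].
\]
The first sum is exactly the negative of the jump contribution, so drift plus jump equals $-m\sum_i D_\phi(\pi_i(+1),\nu_{t,i}(+1))$ with $D_\phi(p,q)=p\log(p/q)-p+q$ the Bregman divergence of $x\log x$. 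Since $D_\phi(\pi_i(+1),\nu_{t,i}(+1))\le \DKL{\pi_i}{\nu_{t,i}}$ (the $-1$ component of the KL is also a nonnegative Bregman term), one application of $\alpha$-entropic independence finishes. No second invocation is needed, and the ``hard part'' you flag is in fact a one-line identity once both sides are written out. The constant $4$ versus $1$ is a matter of which normalization of entropic independence is in force in \cite{chen2022localization}; the paper's \Cref{def:EI} and \Cref{def:EI-intro} themselves appear to disagree on the order of arguments in the marginal KL, so some slack in the constant is expected.
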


\begin{lemma}[\text{\cite[Theorem 67]{chen2022localization}}] \label{lem:EI}
  Let $\nu$ be a distribution over $\{-1, +1\}^n$ and let $\eta \geq 1$.
  Suppose that $\nu$ is $\eta$-spectrally independent and $K$-marginally stable, then it holds that $\nu$ is $384\eta K^4$-entropically independent.
\end{lemma}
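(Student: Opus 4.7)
The plan is to use the negative-field localization scheme $(\nu_t)_{t\ge 0}$ introduced just before the lemma, together with the entropy-contraction characterization in \Cref{lem:dEnt}, to reverse-engineer an entropic-independence bound from spectral independence. The strategy is: show that SI controls the instantaneous ``entropy flux'' along the localization process; integrate along the process to relate $\DKL{\mu}{\nu}$ with $\sum_i \DKL{\mu_i}{\nu_i}$; and use $K$-marginal stability both to preserve SI along the process and to bridge variance-type estimates with entropy-type estimates.

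First I would verify that $\eta$-spectral independence and $K$-marginal stability are stable along the negative-field process: each intermediate distribution $\nu_t = (\e^{-t}*\nu)^{\*1_{R_t}}$ is obtained from $\nu$ by conditioning on the pinning $R_t\mapsto \*1$ and by tilting the remaining coordinates with a common external field $\e^{-t}\le 1$. By assumption, $\lambda * \nu$ is $K$-marginally stable and $\eta$-spectrally independent for every $\lambda\in[\theta,1]$; $K$-marginal stability is exactly the robustness property needed to see that further pinning coordinates to $+1$ keeps all marginal ratios within a bounded factor. This ensures the SI and stability constants along $(\nu_t)$ are, uniformly, $O(\eta)$ and $O(K)$.

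Next I would study the infinitesimal generator of $(\nu_t)$ acting on the entropy of a test function $f := \mu/\nu \ge 0$. Conditioned on $\nu_t$, a small time step corresponds to (a)~shrinking external fields slightly and (b)~possibly adding a coordinate $J$ to $R_t$ with rate proportional to the marginal $(\nu_t)_J(+1)$. A short computation expresses the expected change $\tfrac{d}{dt}\oE[\Ent[\nu_t]{f}]$ as a weighted sum, over coordinates $i\not\in R_t$, of local entropic quantities that are controlled by the diagonal entries of the covariance/influence matrix of $\nu_t$. Using $\eta$-SI of $\nu_t$ together with $K$-marginal stability to pass from correlation (variance-scale) estimates to entropy-scale estimates, one obtains a pointwise lower bound of the form
\begin{align*}
\tfrac{d}{dt}\,\oE[\Ent[\nu_t]{f}\mid \nu_t]\;\ge\; -\,C\,\eta K^4\cdot \Ent[\nu_t]{f},
\end{align*}
for an absolute constant $C$. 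The $K^4$ arises because marginal stability is invoked at four distinct places: to preserve SI under pinning, to bound marginal ratios under the tilt, to compare a single-coordinate entropy with its corresponding variance, and to control the rate at which the remaining coordinates enter $R_t$.

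Finally I would integrate. Since $R_\infty = [n]$ and so $\Ent[\nu_\infty]{f}=0$, Dynkin's formula writes $\Ent[\nu]{f}$ as the total expected entropy dissipated along the process, and the first infinitesimal step contributes exactly $\sum_i \DKL{\mu_i}{\nu_i}$ (up to a controlled multiplicative factor), by the standard disintegration identity for the negative-field scheme. Matching the exponential-rate lower bound obtained above against this identity yields $\sum_i \DKL{\mu_i}{\nu_i}\le 384\,\eta K^4\cdot \DKL{\mu}{\nu}$. The main obstacle, and the reason the constant is $\eta K^4$ rather than $\eta K$, is precisely this variance-to-entropy conversion: SI is inherently a $\chi^2$-level statement, and the Hessian comparison needed to lift it to the KL level degrades by a power of the marginal stability. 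Getting the clean constant $384$ requires careful bookkeeping of these four $K$-factors.
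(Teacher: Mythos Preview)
This lemma is not proved in the paper at all: it is quoted verbatim as Theorem~67 of \cite{chen2022localization} and used as a black box. The only thing the paper adds is the remark immediately following the statement, noting that the authors' definition of $K$-marginal stability allows $\rho\in\{\nu,\overline{\nu}\}$ rather than forcing $\rho=\nu$, and observing that this makes no difference because both spectral independence and entropic independence are invariant under the global sign flip $\nu\mapsto\overline{\nu}$. So there is nothing in the paper to compare your proposal against.

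As for the proposal itself, there is a circularity issue in the way you have set it up. \Cref{lem:dEnt} is the implication ``$\alpha$-entropic independence $\Rightarrow$ entropy contracts at rate $4\alpha$ along the localization.'' You are trying to use it in the reverse direction, deducing entropic independence from a differential inequality on $\oE[\Ent[\nu_t]{f}]$. For that you would need to establish the differential inequality directly from spectral independence and marginal stability, without invoking \Cref{lem:dEnt}; this is essentially the content of the cited Theorem~67, and your sketch does not supply that computation. In particular, the sentence ``the first infinitesimal step contributes exactly $\sum_i \DKL{\mu_i}{\nu_i}$ (up to a controlled multiplicative factor)'' is the heart of the matter and is asserted rather than argued: one has to show that the instantaneous entropy production of the negative-field process at $t=0$ is bounded below by a constant times $\sum_i \DKL{\mu_i}{\nu_i}$, and that this lower bound is controlled by the spectral-independence constant together with the marginal-stability constant. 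That is a genuine calculation (carried out in \cite{chen2022localization}), not something that follows from the statements available in this paper.
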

\begin{remark}
	The notion of $K$-marginal stability we used is slightly different from the one in the statement of  in \cite[Theorem 67]{chen2022localization}.
	In their version of $K$-marginal stability, the distribution $\rho$ as in \Cref{def:K-stable}, is exactly $\nu$ instead of allowing the freedom of $\rho \in \{\nu, \overline{\nu}\}$. %
  %the $K$-marginally stable is defined exactly as in \Cref{def:K-stable} except the distribution $\rho$ is exactly $\nu$ instead of $\rho \in \{\nu, \overline{\nu}\}$.
	We note that their Theorem 67 applies to the relaxed definition of marginal stability, simply because it is  straightforward to verify that
  \begin{itemize}
  \item $\nu$ is $\eta$-spectrally independent $\Leftrightarrow$ $\overline{\nu}$ is $\eta$-spectrally independent;
  \item $\nu$ is $\alpha$-entropically independent $\Leftrightarrow$ $\overline{\nu}$ is $\alpha$-entropically independent.
  \end{itemize}
  Hence, if we know that $\nu$ is spectrally independent and $\overline{\nu}$ satisfies \eqref{eq:K-stable}, we are still able to use Theorem 67 of \cite{chen2022localization} and show that $\nu$ is entropically independent.
\end{remark}
%\begin{lemma}[\text{\cite[Theorem 66]{chen2022localization}}] \label{lem:EI}
%  Let $\nu$ be a distribution over $\{-1, +1\}^n$ and let $\eta \geq 1$.
%  Suppose that $\nu$ is $\eta$-spectrally independent and has $K$-tame marginals, then it holds that $\nu$ is $48\eta K$-entropically independent.
%\end{lemma}
%\begin{remark}
%  The requirement in \Cref{lem:EI} is stronger than the one in \cite{chen2022localization}.
%\end{remark}

It was pointed out that the negative-fields localization process for a joint distribution $\nu$ generates exactly the field dynamics on $\nu$.
However, we are not aware of any explicit proof of this equivalence.
Here, we build this connection explicitly.

\begin{lemma} \label{lem:NFL-is-FD}
  Let $\nu$ be a distribution over $\{-1, +1\}^n$.
  Let $(R_t)_{t\geq 0}$ be a negative-field localization process for~$\nu$.
  Fix $t \geq 0, h \geq 0$, and let $A \subseteq B \subseteq [n]$. It holds that
  \begin{align*}
    \Pr{R_{t+h} = B \mid R_t = A} = \sum_{X: X_B = \*1} (e^{-t} * \nu)^{\*1_A}(X) \cdot (1 - \e^{-h})^{\abs{B} - \abs{A}}(\e^{-h})^{\abs{X}_+ - \abs{B}},
  \end{align*}
  where $\abs{X}_+$ denotes the number of $+1$ in the vector $X$.
\end{lemma}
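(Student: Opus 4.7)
\medskip

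\noindent\textbf{Proof proposal.}
The plan is to compute $\Pr{R_{t+h}=B \mid R_t=A}$ directly by writing down the joint density of the sequence of jumps of $(R_s)_{s\in[t,t+h]}$, and then recognise the resulting expression as the right-hand side of the identity. By the Markov property of the localization process together with its time-homogeneity (the hazard rates at time $r$ depend only on $R_r$ and $r$), it suffices to handle a single interval $[t,t+h]$ starting from $R_t=A$. I will stratify by the number $m:=|B|-|A|$ of jumps that occur in $[t,t+h]$ and by their ordering $A_0=A,\,A_1,\ldots,A_m=B$ with $A_k=A_{k-1}\cup\{j_k\}$, and integrate over the jump times $t<s_1<\cdots<s_m<t+h$.

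The key technical input is the auxiliary partition function
\[
Z_C(r) := \sum_{X:\,X_C=\mathbf{1}} \e^{-r\,|X|_+}\,\nu(X), \qquad C\subseteq[n],\ r\ge 0,
\]
for which one obtains by direct differentiation that
\[
\sum_{i\notin C}(\e^{-r}*\nu)^{\mathbf{1}_C}_i(+1)
= \E_{(\e^{-r}*\nu)^{\mathbf{1}_C}}\bigl[|X|_+-|C|\bigr]
= -|C| - \tfrac{d}{dr}\log Z_C(r).
\]
Exponentiating, the survival probability of $(R_s)$ on an interval $(s_k,s_{k+1})$ while pinned at $A_k$ equals $\e^{(s_{k+1}-s_k)|A_k|}\cdot Z_{A_k}(s_{k+1})/Z_{A_k}(s_k)$, and the instantaneous density of adding coordinate $j_{k+1}$ at time $s_{k+1}$ equals $(\e^{-s_{k+1}}*\nu)^{\mathbf{1}_{A_k}}_{j_{k+1}}(+1) = Z_{A_{k+1}}(s_{k+1})/Z_{A_k}(s_{k+1})$.

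Assembling survival and jump factors for the full sequence on $[t,t+h]$, all intermediate $Z_{A_k}(s_k)$ cancel telescopically, leaving the clean ratio $Z_B(t+h)/Z_A(t)$. The exponential prefactors combine to
\[
\exp\!\Bigl(|A|h + \sum_{\ell=1}^m (t+h-s_\ell)\Bigr),
\]
which I then integrate over the ordered simplex $t<s_1<\cdots<s_m<t+h$ by the change of variables $u_\ell=t+h-s_\ell$; the integrand becomes symmetric in the $u_\ell$, so the ordered integral equals $\tfrac{1}{m!}(\e^h-1)^m$. Summing over the $m!$ orderings of $B\setminus A$ produces
\[
\Pr{R_{t+h}=B\mid R_t=A} = \frac{Z_B(t+h)}{Z_A(t)}\,\e^{|A|h}\,(\e^h-1)^m.
\]
Finally, rewriting the right-hand side of the lemma by pulling the field $\e^{-t-h}$ out through $\e^{-h(|X|_+-|B|)}$ gives $Z_B(t+h)/Z_A(t) \cdot (1-\e^{-h})^m \e^{h|B|}$, and the identity $(1-\e^{-h})^m\e^{h|B|}=\e^{h|A|}(\e^h-1)^m$ closes the argument.

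The non-routine part is Step 3, namely recognising that the time-varying hazard rates have antiderivative $-\log Z_C(r)$ so that the survival integrals collapse to partition-function ratios; once this is in hand, the remaining work is bookkeeping for the telescoping and a standard simplex integral. No separate induction on $m$ is needed.
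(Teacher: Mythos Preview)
Your proof is correct and takes a genuinely different route from the paper's. The paper defines the right-hand side as a quantity $\mathbb{P}^\nu_{A,B}(t\to t+h)$, proves a Chapman--Kolmogorov-type chain rule for $\mathbb{P}^\nu$ over intermediate sets $A\subseteq M\subseteq B$ (a separate lemma with its own brute-force calculation), proves an infinitesimal estimate $\Pr{R_{t+h}=B\mid R_t=A}=\mathbb{P}^\nu_{A,B}(t\to t+h)+o(h)$, and then runs an induction on $|B\setminus A|$ by subdividing $[t,t+h]$ into $k$ equal pieces and letting $k\to\infty$. The base case $|B\setminus A|=0$ already uses your key observation that the total hazard is $-\tfrac{d}{dr}\log Z_C(r)-|C|$, but the paper then builds the general case through the chain-rule-plus-subdivision machinery rather than computing the jump density directly.

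Your approach is more economical: by writing the joint density of the ordered jump times $(s_1,\ldots,s_m)$ and labels $(j_1,\ldots,j_m)$ as a product of survival and jump factors, you get the telescoping of the $Z_{A_k}(s_k)$ for free, and the simplex integral plus the sum over $m!$ orderings replaces the paper's induction and its auxiliary chain-rule lemma entirely. The paper's approach, on the other hand, packages the chain rule as a standalone property of the field-dynamics kernel, which is conceptually nice (it mirrors the semigroup property of the localization process) and may be reusable, but for the purpose of proving this single identity your direct density computation is shorter and more transparent.
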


\Cref{lem:NFL-is-FD} is proved in \Cref{sec:NFL-is-FD}.

More intuitively, \Cref{lem:NFL-is-FD} means that conditioning on $\nu_t$ and $R_t$, the random variable $\nu_{t+h}$ follows the following law:
\begin{enumerate}
\item sample $X \sim \nu_t = (\e^{-t} * \nu)^{R_t}$;
\item $R_{t+h} = R_t$; \textbf{for} $i\in [n] \setminus R_t$ with $X(i) = 1$: add $i$ to $R_{t+h}$ with probability $1 - \e^{-h}$;
\item let $\nu_{t+h} = (\e^{-(t+h)} * \nu)^{\*1_{R_{t+h}}}$.
\end{enumerate}

Now, we are ready to prove \Cref{thm:FD-ent-decay}.
\begin{proof}[Proof of \Cref{thm:FD-ent-decay}]
  According to the assumptions of \Cref{thm:FD-ent-decay}, for any possible $\nu_t = (\e^{-t} * \nu)^{\*1_R}$ and $R \subseteq [n]$, 
   $\nu_t$ is a $\eta$-spectrally independent distribution over $\{-1, +1\}^n$ and it is $K$-marginally stable.
  Then by \Cref{lem:EI}, we have that $\nu_t$ is $384\eta K^4$-entropically independent.
  By \Cref{lem:dEnt}, let $\zeta := 2 \cdot 10^3 \eta K^4$, for any $t \in [0, - \log \theta], h \geq 0$ and any $\nu_t$, we have that for all $f: \{-1, +1\}^n \to \^R_{\geq 0}$,
  \begin{align*}
    \E{\Ent[\nu_{t+h}]{f} \mid \nu_t} \geq \Ent[\nu_t]{f} (1 - \zeta \cdot h) + o(h).
  \end{align*}
  Without loss of generality, we assume $\Ent[\nu]{f} \neq 0$.
  Taking expectation and logarithm at both side,% we have
  \begin{align*}
    \log \E{\Ent[\nu_{t+h}]{f}} - \log \E{\Ent[\nu_t]{f}} &\geq - \zeta \cdot h + o(h) \\
    \frac{\-d \log \E{\Ent[\nu_t]{f}}}{\-d t} &\geq -\zeta.
  \end{align*}
  By an integrating, we know that for $t = -\log \theta$,
  \begin{align*}
    \frac{\E{\Ent[\nu_t]{f}}}{\Ent[\nu]{f}} \geq \exp(- \zeta \cdot t) = \theta^{\zeta},
  \end{align*}
  which can be rewritten as
  \begin{align} \label{eq:ent-factor}
    \Ent[\nu]{f} \leq \theta^{-\zeta} \cdot \E{\Ent[\nu_t]{f}}.
  \end{align}
  Now, by \Cref{lem:NFL-is-FD}, for $\pi := \e^{-t} * \nu$, the expectation on the right hand side can be calculated as
  \begin{align*}
    \E{\Ent[\nu_t]{f}}
    &= \sum_{R\subseteq [n]} \Pr{R_t = R} \cdot \Ent[\pi^{\*1_R}]{f} \\
    &= \sum_{R\subseteq [n]} \sum_{X:X_R = \*1} \nu(X) \cdot (1 - \theta)^{\abs{R}} \theta^{\abs{X}_+ - \abs{R}} \cdot \Ent[\pi^{\*1_R}]{f}.
  \end{align*}
  Note that by the definition of $\pi$, we have
  \begin{align*}
    \forall \sigma \in \{-1, +1\}^n, \quad \pi(\sigma) = \frac{\nu(\sigma) \theta^{\abs{\sigma}_+}}{Z_\pi},
  \end{align*}
  where $Z_\pi := \sum_{\sigma \in \{-1, +1\}^n} \nu(\sigma) \theta^{\abs{\sigma}_+}$.
  Hence, we have
  \begin{align*}
    \E{\Ent[\nu_t]{f}}
    &= \frac{Z_\pi}{\theta^n} \sum_{R\subseteq [n]} (1 - \theta)^{\abs{R}} \theta^{n - \abs{R}} \sum_{X: X_R = \*1} \pi(X) \cdot \Ent[\pi^{\*1_R}]{f} \\
    &= \frac{Z_\pi}{\theta^n} \sum_{R\subseteq [n]} (1 - \theta)^{\abs{R}} \theta^{n - \abs{R}} \cdot \pi_R(\*1_R) \cdot \Ent[\pi^{\*1_R}]{f}.
  \end{align*}
  Therefore,  \eqref{eq:ent-factor} can be expressed as
  \begin{align} \label{eq:ent-mag-factor}
    \Ent[\nu]{f} \leq \theta^{-\zeta} \cdot \frac{Z_\pi}{\theta^n} \sum_{R\subseteq [n]} (1 - \theta)^{\abs{R}} \theta^{n - \abs{R}} \cdot \pi_R(\*1_R) \cdot \Ent[\pi^{\*1_R}]{f},
  \end{align}
  which is known as the $\theta$-magnetized block factorization of entropy introduced in \cite{chen2022optimal}.
  According to \cite[Equation (20)]{chen2022near}, letting $f = \frac{\mu}{\nu}$, where $\mu$ is a distribution that is absolutely continuous with respect to $\nu$, \Cref{eq:ent-mag-factor} is equivalent to
  \begin{align*}
    \DKL{\mu}{\nu} \leq \theta^{-\zeta} \cdot \tp{\DKL{\mu}{\nu} - \DKL{\mu\FDdown{\nu}}{\nu\FDdown{\nu}}},
  \end{align*}
  which is equivalent to
  \begin{align*}
    \DKL{\mu\FDdown{\nu}}{\nu\FDdown{\nu}} &\leq \tp{1 - \theta^{\zeta}} \DKL{\mu}{\nu}. \qedhere
  \end{align*}
\end{proof}

\subsection{Negative-fields localization and field dynamics}
\label{sec:NFL-is-FD}
To prove \Cref{lem:NFL-is-FD},
%So, without loss of generality, we assume that $t = 0$.
we keep track the following quantity:
for $t \geq 0, h \geq 0$ and $A \subseteq B \subseteq [n]$,  define
\begin{align*}
  \mathbb{P}^{\nu}_{A, B}(t \to t + h) &:= \sum_{X: X_B = \*1} (\e^{-t} * \nu)^{\*1_A} (X) \cdot (1 - \e^{-h})^{\abs{B} - \abs{A}} (\e^{-h})^{\abs{X}_+ - \abs{B}}.
\end{align*}
We claim the following chain rule for the quantity defined above.
\begin{lemma}[chain rule] \label{lem:field-chain}
  For $s, t, h \geq 0$ and $A \subseteq B \subseteq [n]$, 
  \begin{align*}
    \mathbb{P}^{\nu}_{A, B}(s \to s + t + h) = \sum_{M: A \subseteq M \subseteq B} \mathbb{P}^{\nu}_{A, M}(s \to s + t) \mathbb{P}^{\nu}_{M, B}(s + t \to s + t + h).
  \end{align*}
\end{lemma}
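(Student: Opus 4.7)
The plan is to verify the identity by direct algebraic manipulation. Substituting the defining formula
$$(\e^{-u}*\nu)^{\*1_C}(X) = \*1[X_C = \*1] \cdot \frac{\nu(X)\e^{-u\abs{X}_+}}{Z_{u,C}}, \qquad Z_{u,C} := \sum_{Y: Y_C = \*1} \nu(Y)\e^{-u\abs{Y}_+},$$
the product $\mathbb{P}^\nu_{A,M}(s\to s+t)\,\mathbb{P}^\nu_{M,B}(s+t\to s+t+h)$ becomes a double sum over an intermediate configuration $Y$ (with $Y_M=\*1$) and a final configuration $X$ (with $X_B=\*1$). Swapping the order of summation so that the inner sum is over $Y$ for fixed $A,M,X$, the $Y$-sum evaluates in closed form, because
$$\sum_{Y: Y_M = \*1} \nu(Y)\e^{-s\abs{Y}_+}(\e^{-t})^{\abs{Y}_+ - \abs{M}} = \e^{t\abs{M}}\sum_{Y: Y_M = \*1}\nu(Y)\e^{-(s+t)\abs{Y}_+} = \e^{t\abs{M}}\, Z_{s+t,M}.$$
The factor $Z_{s+t,M}$ then cancels against the normalization of $(\e^{-(s+t)}*\nu)^{\*1_M}(X)$.

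Next, I exchange the remaining $M$- and $X$-sums so that the $M$-sum is innermost. Since the only $M$-dependent terms are $\e^{t\abs{M}}(1-\e^{-t})^{\abs{M}-\abs{A}}(1-\e^{-h})^{\abs{B}-\abs{M}}$, parameterizing $M = A \cup N$ with $N \subseteq B\setminus A$ and applying the binomial theorem reduces the $M$-sum to a closed form:
$$\sum_{M:\,A\subseteq M\subseteq B} \e^{t\abs{M}}(1-\e^{-t})^{\abs{M}-\abs{A}}(1-\e^{-h})^{\abs{B}-\abs{M}} = \e^{t\abs{A}}\bigl(\e^t - \e^{-h}\bigr)^{\abs{B}-\abs{A}}.$$

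What remains is to match the collected prefactors against the left-hand side $\mathbb{P}^\nu_{A,B}(s\to s+t+h)$. The key identities are
$$\e^{-t(\abs{B}-\abs{A})}\bigl(\e^t - \e^{-h}\bigr)^{\abs{B}-\abs{A}} = \bigl(1-\e^{-(t+h)}\bigr)^{\abs{B}-\abs{A}}$$
together with the $X$-weight regrouping $\e^{-(s+t)\abs{X}_+}(\e^{-h})^{\abs{X}_+-\abs{B}} = \e^{-t\abs{B}}\,\e^{-s\abs{X}_+}(\e^{-(t+h)})^{\abs{X}_+-\abs{B}}$; the stray factor $\e^{-t\abs{B}}$ combines with $\e^{t\abs{A}}$ and $\e^{-t(\abs{B}-\abs{A})}$ to give $1$, and the remaining $X$-sum reproduces exactly the definition of $\mathbb{P}^\nu_{A,B}(s\to s+t+h)$. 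The argument is essentially routine; the only subtle point is bookkeeping the cancellation of exponential weights in the last step, which is where I would expect a miscalculation is most likely to creep in.
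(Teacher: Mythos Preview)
Your proof is correct and follows essentially the same computation as the paper: expand the two $\mathbb{P}$-factors, observe that the inner $Y$-sum reproduces and cancels the normalization $Z_{s+t,M}$, then collapse the $M$-sum via the binomial theorem and match exponents. The paper adds a preliminary reduction (via the identity $\mathbb{P}^\nu_{A\cup U,B\cup U}(s+t\to s+t+h)=(\e^{-h})^{-|U|}\mathbb{P}^{\mu}_{A,B}(t\to t+h)$ with $\mu=(\e^{-s}*\nu)^{\*1_U}$) to the special case $s=0$, $A=\emptyset$ before carrying out the same calculation; your version handles the general case directly and is therefore slightly more streamlined. One minor wording issue: in your last sentence, $\e^{-t|B|}\cdot \e^{t|A|}$ equals $\e^{-t(|B|-|A|)}$, which is exactly the prefactor needed in your first key identity---so the three factors do not ``combine to give $1$'' but rather the first two produce the third; the arithmetic itself is fine.
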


\begin{lemma} \label{lem:field-approx}
Let $(\nu_t)_{t \geq 0}$, $(R_t)_{t\geq 0}$ be negative-fields localization processes for a joint distribution $\nu$ on $\{-1,+1\}^n$.
  %Suppose the process $(R_t)_{t\geq 0}$ and  are defined according to the starting measure $\nu = \nu_0$. 
  It holds that for $A \subsetneq B \subseteq [n]$, 
  \begin{align*}
    \Pr{R_{t + h} = B \mid R_t = A} = \mathbb{P}^\nu_{A, B}(t \to t + h) + o(h).
  \end{align*}
\end{lemma}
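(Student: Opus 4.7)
The plan is to split on $k := |B| - |A| \geq 1$, handling $k \geq 2$ trivially and focusing on the single-jump case $k = 1$. Conditional on $R_t = A$, the negative-fields localization process is a time-inhomogeneous pure-jump Markov chain on subsets of $[n]$ containing $A$; before any post-$t$ jump, the instantaneous rate for adding $i \in [n] \setminus A$ at time $r$ is $\lambda_i(r) := (\e^{-r} * \nu)^{\*1_A}_i(+1) \leq 1$. Since total rates never exceed $n$ (all intensities remain probabilities in $[0,1]$, both before and after each jump), the number of jumps in $[t, t+h]$ is stochastically dominated by a $\mathrm{Poisson}(nh)$ random variable via the standard coupling with a rate-$n$ homogeneous Poisson process. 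Consequently, the probability of two or more jumps in $[t, t+h]$ is $O(h^2)$.

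For the easy case $k \geq 2$, the left-hand side requires at least $k$ jumps in $[t, t+h]$ and is therefore $O(h^k) = o(h)$; the right-hand side carries an explicit factor $(1 - \e^{-h})^k = O(h^k) = o(h)$. Both sides are $o(h)$ and the identity holds trivially.

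The main case is $k = 1$: write $B = A \cup \{j\}$. On the localization side, $\{R_{t+h} = A \cup \{j\}\}$ equals the event ``exactly one jump occurs in $[t, t+h]$ and it is at vertex $j$''. Combining the intensity representation with the multi-jump bound yields
\[
\Pr{R_{t+h} = A \cup \{j\} \mid R_t = A} = \int_t^{t+h} \lambda_j(s) \, \mathrm{d}s + O(h^2) = h \cdot (\e^{-t} * \nu)^{\*1_A}_j(+1) + o(h).
\]
On the right-hand side, expanding $1 - \e^{-h} = h + O(h^2)$ and $(\e^{-h})^{|X|_+ - |B|} = 1 + O(h)$ uniformly in $X$ (as $|X|_+ \leq n$) gives
\[
\mathbb{P}^\nu_{A, A \cup \{j\}}(t \to t+h) = h \sum_{X: X_B = \*1} (\e^{-t} * \nu)^{\*1_A}(X) + o(h) = h \cdot (\e^{-t} * \nu)^{\*1_A}_j(+1) + o(h),
\]
matching the left-hand side.

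The only mildly delicate point in this plan is a careful handling of the time-inhomogeneity (through the $\e^{-r}$ tilt) and the configuration-dependence of the intensities after each jump when bounding multi-jump probabilities; the uniform bound $\lambda_i \leq 1$ reduces this to a rate-$n$ homogeneous Poisson domination and makes the argument routine.
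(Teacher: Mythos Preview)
Your proof is correct and follows essentially the same approach as the paper: both split on $|B\setminus A|$, show that for $|B\setminus A|\ge 2$ each side is $o(h)$, and for $|B\setminus A|=1$ compute each side as $h\cdot(\e^{-t}*\nu)^{\*1_A}_j(+1)+o(h)$. Your treatment is slightly more explicit than the paper's in bounding multi-jump probabilities via Poisson domination, whereas the paper simply asserts the $o(h)$ bounds from the formula for $\Pr{T_v\in[t,t+h]}$.
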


\begin{proof}
  By the definition of the random variables $T_v, v \in [n]$, it holds that
  \begin{align*}
    \Pr{T_v \in [t, t + h]} = h \cdot (\e^{-t} * \nu)^{\*1_A}_v + o(h).
  \end{align*}
  Hence, it holds that
  \begin{align*}
    \Pr{R_{t + h} = B \mid R_t = A}
    &= \begin{cases}
         h \cdot (\e^{-t} * \nu)^{\*1_A}_v + o(h) & B\setminus A = \{v\}, \\
         o(h) & \abs{B\setminus A} > 1.
       \end{cases}
  \end{align*}
  On the other hand, it holds that
  \begin{align*}
    \mathbb{P}^\nu_{A, B}(t \to t + h)
    &= \sum_{X: X_B = 1} (\e^{-t} * \nu)^{\*1_A} (x) \cdot (1 - \e^{-h})^{\abs{B} - \abs{A}} (\e^{-h})^{\abs{X}_+ - \abs{B}} \\
    &= \sum_{X: X_B = 1} (\e^{-t} * \nu)^{\*1_A} (x) \cdot (h + o(h))^{\abs{B} - \abs{A}} (1 - h + o(h))^{\abs{X}_+ - \abs{B}} \\
    &= \begin{cases}
         h \cdot (\e^{-t} * \nu)^{\*1_A}_v + o(h) & B\setminus A = \{v\}, \\
         o(h) & \abs{B\setminus A} > 1.
       \end{cases}
  \end{align*}
  Hence we have $\Pr{R_{t+h} = B \mid R_t = A} - \mathbb{P}_{A, B}(t \to t + h) = o(h)$.
\end{proof}

Now, to prove \Cref{lem:NFL-is-FD}, it is sufficient to strengthen  \Cref{lem:field-approx} to show for any $A \subseteq B \subseteq [n]$,
\begin{align*}
  \Pr{R_{t+h} = B \mid R_t = A} &= \mathbb{P}^\nu_{A, B}(t \to t + h).
\end{align*}
We prove this based on an induction on $\ell := \abs{B \setminus A}$.

For the induction basis: when $\ell = 0$
by definition of the process $(\nu_t)_{t \geq 0}$, it holds that
\begin{align*}
  \Pr{R_{t+h} = A \mid R_t = A}
  &= \Pr{\forall i \in [n], T_i > t + h} \\
  &= \prod_{i \in [n] \setminus A} \exp\tp{- \int_t^{t+h} (\e^{-r} * \nu)^{\*1_A}_i \; \-d r}
\end{align*}
Hence we have
\begin{align*}
  \left. \frac{\-d \log \Pr{R_{t+h} = A \mid R_t = A}}{\-d h} \right\vert_{h = h_0} &= - \sum_{i \in [n] \setminus A} (\e^{-(t + h_0)} * \nu)^{\*1_A}_i.
\end{align*}
On the other hand, we have
\begin{align*}
  &\hspace{-2cm}\left. \frac{\-d \log \mathbb{P}^\nu_{A, A}(t \to t + h)}{\-d h} \right\vert_{h = h_0} \\
  &= \lim_{\Delta \to 0_+} \frac{\log \mathbb{P}^\nu_{A,A}(t \to t + h_0 + \Delta) - \log \mathbb{P}^\nu_{A,A}(t \to t + h_0)}{\Delta} \\
  (\text{by \Cref{lem:field-chain}}) &= \lim_{\Delta \to 0_+} \frac{\log \mathbb{P}^\nu_{A,A}(t + h_0 \to t + h_0 + \Delta)}{\Delta} \\
  &= \lim_{\Delta \to 0_+} \frac{\log \sum_{X: X_A = \*1} (\e^{-(t + h_0)} * \nu)^{\*1_A} (X) \cdot (\e^{-\Delta})^{\abs{X}_+ - \abs{A}}}{\Delta} \\
  &= - \frac{\sum_{X: X_A = \*1} (\e^{-(t + h_0)} * \nu)^{\*1_A}(X) \cdot \tp{\abs{X}_+ - \abs{A}}}{\sum_{X: X_A = \*1} (\e^{-(t + h_0)} * \nu)^{\*1_A} (X)} \\
  &= - \sum_{X} (\e^{-(t + h_0)} * \nu)^{\*1_A} (X) \cdot \tp{\abs{X}_+ - \abs{A}} \\
  &= - \sum_{i \in [n] \setminus A} (\e^{-(t + h_0)} * \nu)^{\*1_A}_i.
\end{align*}
Hence, it holds that $\Pr{R_{t + h} = A \mid R_t = A} = \mathbb{P}^\nu_{A, A}(t \to t + h)$.

Now consider general $\ell \geq 1$. 

As the induction hypothesis, assume that for any $A \subseteq B \subseteq [n]$ such that $\abs{B\setminus A} < \ell$, 
\[
\Pr{R_{t + h} = B \mid R_t = A} = \mathbb{P}^\nu_{A, B}(t \to t + h).
\]

Let $k > 0$ be an integer and divide the interval $[t, t + h]$ into $k$ equal-sized subintervals (recall that $t,h$ are real numbers).
According to the chain rule in probability, we have
\begin{align*}
  &\Pr{R_{t+h} = B \mid R_t = A} \\
  &\hspace{2cm} = \sum_{A = M_0 \subseteq M_1 \subseteq \cdots \subseteq M_k = B} \prod_{i = 1}^k \Pr{R_{t + ih/k} = M_i \mid R_{t + (i-1)h/k} = M_{i-1}}.
\end{align*}
On the other hand, by \Cref{lem:field-chain}, the following  chain rule also holds for the function $\mathbb{P}$: 
\begin{align*}
  \mathbb{P}^\nu_{A, B}(t \to t + h) &= \sum_{A = M_0 \subseteq M_1 \subseteq \cdots \subseteq M_k = B} \prod_{i = 1}^k \mathbb{P}^\nu_{M_{i-1}, M_i}(t + (i-1)h/k \to t + ih/k).
\end{align*}
Fix $1\leq i \leq k$. We have the following cases:
\begin{itemize}
\item when $\abs{M_i\setminus M_{i-1}} < \ell$, by the induction hypothesis, it holds that
\begin{align*}
  \Pr{R_{t + ih/k} = M_i \mid R_{t + (i-1)h/k} = M_{i-1}} &= \mathbb{P}^\nu_{M_{i-1}, M_i}(t + (i-1)h/k \to t + ih/k).
\end{align*}
\item
when $\abs{M_i \setminus M_{i-1}} = \ell$, by \Cref{lem:field-approx}, it holds that
\begin{align*}
  \Pr{R_{t + ih/k} = M_i \mid R_{t + (i-1)h/k} = M_{i-1}} &- \mathbb{P}^\nu_{M_{i-1}, M_i}(t + (i-1)h/k \to t + ih/k)=o\left(\frac{h}{k}\right).
\end{align*}
Also note that when $\abs{M_i \setminus M_{i-1}} = \ell$, then it holds that $M_0 = M_1 = \cdots = M_{i-1} = A$ and $M_i = M_{i+1} = \cdots = M_k = B$.
Overall, this case happens at most $k$ times.
\end{itemize}

Calculating $\Pr{R_{t + h} = B \mid R_t = A} - \mathbb{P}^\nu_{A, B}(t \to t + h)$, we have
%we note that most terms fall into the first case for all $1 \leq i \leq k$ and thus cancel out,
\begin{align*}
  &\Pr{R_{t + h} = B \mid R_t = A} - \mathbb{P}^\nu_{A, B}(t \to t + h) \\
  = &\sum_{j=1}^k \Pr{R_{t + (j-1)h/k} = A \mid R_t = A} \cdot o\left(\frac{h}{k}\right) \cdot \Pr{R_{t + h} = B \mid R_{t + jh/k} = B} \\
  = &k \cdot o\left(\frac{1}{k}\right)=o(1).
\end{align*}
We have $\Pr{R_{t + h} = B \mid R_t = A} - \mathbb{P}^\nu_{A, B}(t \to t + h) = 0$ as $k \to \infty$.

\subsection{Verifying the chain rule (Proof of \Cref{lem:field-chain})}
We now prove \Cref{lem:field-chain}, the chain rule for $\mathbb{P}^\nu$.
Note that for $A \subseteq B \subseteq [n]$, and $U \subseteq [n]$ such that $U \cap A = U \cap B = \emptyset$, it holds that 
\begin{align*}
  &\mathbb{P}^\nu_{A \cup U, B \cup U}(s + t \to s + t + h) \\
  = &\sum_{X: X_{B\cup U} = \*1} (\e^{-(t + s)} * \nu)^{\*1_{A \cup U}} (X) \cdot (1 - \e^{-h})^{\abs{B} - \abs{A}} (\e^{-h})^{\abs{X}_+ - \abs{B} - \abs{U}} \\
  = &(\e^{-h})^{-\abs{U}} \cdot \sum_{X: X_{B} = \*1} (\e^{-t} *  (\e^{-s} * \nu)^{\*1_U})^{\*1_A} (X) \cdot (1 - \e^{-h})^{\abs{B} - \abs{A}} (\e^{-h})^{\abs{X}_+ - \abs{B}} \\
  = &(\e^{-h})^{-\abs{U}} \cdot \mathbb{P}^{\mu}_{A,B}(t \to t + h), \quad \text{where $\mu = (\e^{-s} * \nu)^{\*1_U}$}.
\end{align*}
So, in order to prove
\begin{align*}
  \mathbb{P}^{\nu}_{A, B}(s \to s + t + h) = \sum_{M: A \subseteq M \subseteq B} \mathbb{P}^{\nu}_{A, M}(s \to s + t) \mathbb{P}^{\nu}_{M, B}(s + t \to s + t + h),
\end{align*}
it is sufficient to show for $\mu = (\e^{-s} * \nu)^{\*1_A}$
\begin{align*}
  \mathbb{P}^\mu_{\emptyset, B\setminus A} (0 \to t + h)
  &= \sum_{M: A \subseteq M \subseteq B} \mathbb{P}^{\mu}_{\emptyset, M\setminus A} (0 \to t) \mathbb{P}^\mu_{M\setminus A, B\setminus A}(t \to t + h),
\end{align*}
In fact, we have the following result, which implies \Cref{lem:field-chain} immediately.

\begin{lemma}
  For any $t, h \geq 0$, and $B \subseteq [n]$, it holds that
  \begin{align*}
    \mathbb{P}^\nu_{\emptyset, B}(0 \to t + h) &= \sum_{A \subseteq B} \mathbb{P}^\nu_{\emptyset, A}(0 \to t) \mathbb{P}^\nu_{A, B}(t \to t + h).
  \end{align*}
\end{lemma}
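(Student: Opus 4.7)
The plan is to unfold both sides from the definition of $\mathbb{P}^\nu_{A,B}$, interchange the summation over $A\subseteq B$ with the summation over configurations $X$, and then collapse the $A$-sum by a binomial identity. The key bookkeeping insight is that both factors on the right-hand side can be written as an unnormalized tilted expectation against $\nu$: the normalizing constant $\sum_{Y:Y_A=\*1}\nu(Y)(\e^{-t})^{|Y|_+}$ coming from the conditioning $(\e^{-t}*\nu)^{\*1_A}$ in $\mathbb{P}^\nu_{A,B}(t\to t+h)$ will cancel exactly against the corresponding sum appearing inside $\mathbb{P}^\nu_{\emptyset,A}(0\to t)$, leaving a clean product.

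First, I would substitute the definitions to obtain
\[
\mathbb{P}^\nu_{\emptyset,A}(0\to t)\cdot \mathbb{P}^\nu_{A,B}(t\to t+h)
=\sum_{X:X_B=\*1}\nu(X)\,(1-\e^{-t})^{|A|}(1-\e^{-h})^{|B|-|A|}(\e^{-t})^{|X|_+-|A|}(\e^{-h})^{|X|_+-|B|},
\]
once the normalization $\sum_{Y:Y_A=\*1}\nu(Y)(\e^{-t})^{|Y|_+}$ is cancelled. Next I would sum this expression over all $A\subseteq B$. Pulling the $X$-independent factors outside, the inner sum becomes
\[
\sum_{A\subseteq B}\left(\frac{1-\e^{-t}}{\e^{-t}}\right)^{|A|}(1-\e^{-h})^{|B|-|A|}=\left(\frac{1-\e^{-t}}{\e^{-t}}+(1-\e^{-h})\right)^{|B|},
\]
by the binomial theorem applied element-wise over $B$.

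Then I would use the crucial algebraic identity
\[
\frac{1-\e^{-t}}{\e^{-t}}+(1-\e^{-h})=\frac{1-\e^{-(t+h)}}{\e^{-t}},
\]
which is the discrete analog of the decomposition ``a vertex is removed by time $t+h$ iff it was already removed by time $t$, or survived to $t$ and was removed in $(t,t+h]$.'' Substituting back cancels the leftover $(\e^{-t})^{|X|_+-|B|}\cdot(\e^{-t})^{-|B|}=(\e^{-t})^{|X|_+-|B|-|B|}$... more carefully, after multiplication one obtains a factor $(1-\e^{-(t+h)})^{|B|}(\e^{-t})^{|X|_+-|B|}(\e^{-h})^{|X|_+-|B|}=(1-\e^{-(t+h)})^{|B|}(\e^{-(t+h)})^{|X|_+-|B|}$, which reproduces precisely the definition of $\mathbb{P}^\nu_{\emptyset,B}(0\to t+h)$.

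There is no genuine obstacle here: the statement is an algebraic identity that tracks exactly the natural probabilistic coupling in which each vertex $v$ with $X(v)=+1$ is independently assigned a ``removal time'' via an exponential clock, and the decomposition of $1-\e^{-(t+h)}$ into ``removed by $t$'' plus ``removed in $(t,t+h]$'' matches the binomial split over $A$. The only mild care needed is to verify that the conditioning normalization in the tilted measure $(\e^{-t}*\nu)^{\*1_A}$ cancels cleanly against $\mathbb{P}^\nu_{\emptyset,A}(0\to t)$; this is where I expect the bulk of the bookkeeping to go, but it is a one-line computation once the tilted measure is rewritten in unnormalized form $\nu(X)(\e^{-t})^{|X|_+}/\sum_{Y:Y_A=\*1}\nu(Y)(\e^{-t})^{|Y|_+}$.
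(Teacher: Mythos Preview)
Your proposal is correct and follows essentially the same approach as the paper's proof: both expand the definitions, observe that the normalization $\sum_{Y:Y_A=\*1}\nu(Y)(\e^{-t})^{|Y|_+}$ from the tilted measure cancels against $\mathbb{P}^\nu_{\emptyset,A}(0\to t)$, interchange the order of summation, and collapse the sum over $A\subseteq B$ via the binomial theorem. The only cosmetic difference is that the paper groups the factors as $(\e^{-t}-\e^{-(t+h)})^{|B|-|A|}(1-\e^{-t})^{|A|}$ before summing, whereas you pull out $(\e^{-t})^{-|A|}$ first and use the equivalent identity $\frac{1-\e^{-t}}{\e^{-t}}+(1-\e^{-h})=\frac{1-\e^{-(t+h)}}{\e^{-t}}$.
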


\begin{proof}
  We prove this by a brute force calculation.
  By definition,
  \begin{align*}
    \text{RHS}
    &= \sum_{A \subseteq B} \mathbb{P}^\nu_{\emptyset, A}(0 \to t) \cdot \sum_{Y: Y_B = \*1} (\e^{-t} * \nu)^{\*1_A}(Y) \cdot (1 - \e^{-h})^{\abs{B}- \abs{A}} (\e^{-h})^{\abs{Y}_+ - \abs{B}} \\
    &= \sum_{A \subseteq B}  \sum_{X: X_A = \*1} \nu(X) (1 - \e^{-t})^{\abs{A}} (\e^{-t})^{\abs{X}_+ - \abs{A}} \\
    &\quad \times \sum_{Y: Y_B = \*1} (\e^{-t} * \nu)^{\*1_A}(Y) \cdot (1 - \e^{-h})^{\abs{B}-\abs{A}} (\e^{-h})^{\abs{Y}_+ - \abs{B}}.
  \end{align*}
  Recall that it holds that
  \begin{align*}
    (\e^{-t} * \nu)^{\*1_A}(Y) &= \frac{(\e^{-t})^{\abs{Y}_+} \nu(Y)}{\sum_{Z: Z_A = \*1} (\e^{-t})^{\abs{Z}_+} \nu(Z)}, 
  \end{align*}
  which implies that
  \begin{align*}
    \text{RHS}
    &= \sum_{A \subseteq B}  \sum_{X: X_A = \*1} \nu(X) (1 - \e^{-t})^{\abs{A}} (\e^{-t})^{\abs{X}_+ - \abs{A}} \\
    &\quad \times \sum_{Y: Y_B = \*1} \frac{(\e^{-t})^{\abs{Y}_+} \nu(Y)}{\sum_{Z: Z_A = \*1} (\e^{-t})^{\abs{Z}_+} \nu(Z)} \cdot (1 - \e^{-h})^{\abs{B}- \abs{A}} (\e^{-h})^{\abs{Y}_+ - \abs{B}}.
  \end{align*}
  Now, we change the order to delay all the calculation that involved with $A$,
  \begin{align*}
    \text{RHS}
    &= \sum_{Y:Y_B = \*1} \nu(Y) (\e^{-(t+h)})^{\abs{Y}_+} \\
    &\quad \times \sum_{A\subseteq B} (1 - \e^{-h})^{\abs{B} - \abs{A}} (\e^{-h})^{-\abs{B}} \frac{\sum_{X:X_A = \*1} \nu(X) (1 - \e^{-t})^{\abs{A}} (\e^{-t})^{\abs{X}_+ - \abs{A}}}{\sum_{Z:Z_A = \*1} \nu(Z) (\e^{-t})^{\abs{X}_+}}.
  \end{align*}
  Note that the enumerator and the denominator can be canceled out.
  \begin{align*}
    \text{RHS}
    &= \sum_{Y:Y_B = \*1} \nu(Y) (\e^{-(t+h)})^{\abs{Y}_+} \\
    &\quad \times \sum_{A\subseteq B} (1 - \e^{-h})^{\abs{B} - \abs{A}} (\e^{-h})^{-\abs{B}} (1 - \e^{-t})^{\abs{A}} (\e^{-t})^{-\abs{A}} \\
    &= \sum_{Y:Y_B = \*1} \nu(Y) (\e^{-(t+h)})^{\abs{Y}_+ - \abs{B}} \cdot \sum_{A\subseteq B} (1 - \e^{-h})^{\abs{B} - \abs{A}} (1 - \e^{-t})^{\abs{A}} (\e^{-t})^{\abs{B}-\abs{A}} \\
    &= \sum_{Y:Y_B = \*1} \nu(Y) (\e^{-(t+h)})^{\abs{Y}_+ - \abs{B}} \cdot \sum_{A\subseteq B} (\e^{-t} - \e^{-(t+h)})^{\abs{B} - \abs{A}} (1 - \e^{-t})^{\abs{A}} \\
    &= \sum_{Y:Y_B = \*1} \nu(Y) (\e^{-(t+h)})^{\abs{Y}_+ - \abs{B}} \cdot \tp{(\e^{-t} - \e^{-(t + h)}) + (1 - \e^{-t})}^{\abs{B}} \\
    &= \sum_{Y:Y_B = \*1} \nu(Y) (\e^{-(t+h)})^{\abs{Y}_+ - \abs{B}} \tp{1 - \e^{-(t + h)}}^{\abs{B}}.
  \end{align*}
  On the other hand, by definition, it also holds that
  \begin{align*}
    \text{LHS}
    &= \sum_{Y:Y_B = \*1} \nu(Y) (\e^{-(t+h)})^{\abs{Y}_+ - \abs{B}} \tp{1 - \e^{-(t + h)}}^{\abs{B}}.
  \end{align*}
  This finishes the proof.
\end{proof}

\section{Rapid mixing of the Glauber dynamics}
\label{sec:GD-mu-mix}
In this section, we prove \Cref{thm:GD-mu-mix-intro},
the rapid mixing of the single-site Glauber dynamics $\GD[\mu]$ on the hardcore distribution $\mu$.
As explained in \Cref{sec:main-GD}, \Cref{thm:GD-mu-mix-intro}  is proved by a comparison between the field dynamics on one side and the Glauber dynamics on the entire graph.
Specifically,  \Cref{thm:GD-mu-mix-intro}  follows directly from 
\Cref{lem:sgap-nu} and \Cref{lem:comp-nu-mu}.
In the following, we will prove \Cref{lem:sgap-nu}  in \Cref{sec:sgap-nu}, and prove \Cref{lem:comp-nu-mu} in \Cref{sec:comp-nu-mu}.

Throughout this section, we assume the following setting.
Let $G = ((L, R), E)$ be a bipartite graph with $n = \abs{L}$ vertices and degree bound $\Delta = d + 1\ge 2$ on one side $L$.
%Let $\delta \in (0, 1)$ be a real number.
%Let $0 < \lambda \leq (1 - \delta)\lambda_c(\Delta)$ be the fugacity.
Let $\mu$ be the hardcore distribution on $G$ with fugacity $\lambda< \lambda_c(\Delta)$, and let $\nu=\mu_L$.

We consider the single-site Glauber dynamics $\GD[\mu]$ on the hardcore distribution $\mu$ 
and the single-site Glauber dynamics $\GD[\nu]$ on the distribution $\nu=\mu_L$ projected from $\mu$ on one side $L$.

\subsection{Spectral gap of the Glauber dynamics on one side} \label{sec:sgap-nu}
First, we prove \Cref{lem:sgap-nu}. 
Specifically, let $n=|L|$, $C = (1 + \lambda)^\Delta$, and $\delta\in(0,1)$.
Assume that $(\lambda, d)$ is $\delta$-unique. We will show that
\[
\sgap{\GD[\nu]} \geq (22 n)^{-1} \cdot  \tp{\frac{C\cdot \e^9 \Delta \log n}{\lambda}}^{-10^5 C^5/\delta},
\]
where $\sgap{\GD[\nu]}:=1-\lambda_2(\GD[\nu])$ denotes the spectral gap of the Glauber dynamics $\GD[\nu]$,  which is formally defined in \Cref{sec:prelim-MC}.
%The core of the proof is the following standard comparison lemma between Glauber dynamics and field dynamics.

The proof strategy, as explained in \Cref{sec:main-GD},
is to use the ``field dynamics comparison lemma''  stated as \Cref{lem:var-boost}.
Applying this tool requires two elements: (1) a spectral gap of the Glauber dynamics in a subcritical regime, and (2) the variance decay (spectral gap) for the field dynamics in the uniqueness regime.
We establish them separately in the following.

%\begin{lemma}[\text{\cite[Lemma 2.4]{chen2021rapid}}] \label{lem:var-boost}
%  Let $\nu$ be a distribution over $\{-1, +1\}^n$.
%  For all $\theta \in (0, 1)$, we have
%  \begin{align*}
%    \sgap{\GD[\nu]} &\geq \sgap{\FD{\nu}} \cdot \min_{\Lambda \subseteq [n], \tau \in \Omega(\nu_\Lambda)} \sgap{\GD[(\theta * \nu)^\tau]}.
%  \end{align*}
%\end{lemma}

%Roughly speaking, \Cref{lem:var-boost} says that once the field dynamics is rapidly mixing (we have achieved in \Cref{lem:ent-decay-field}) and the Glauber dynamics in a sub-critical regime is rapidly mixing (we have achieved in \Cref{lem:GD-good-0}), then the Glauber dynamics on the target distribution should also mixes rapidly.
%%
%To formalize this intuition, we need to translate the mixing result (\Cref{lem:GD-good-0}) and the entropy decay result into the language of spectral gap.

\paragraph{Spectral gap of the Glauber dynamics in a subcritical regime.}

The following is a corollary to the rapid mixing in a subcritical regime proved in \Cref{lem:GD-good-0}.
\begin{corollary} \label{cor:sgap-good}
  Assume  $\theta^{-1} \geq \frac{C \cdot \e^9 \Delta }{\lambda}\log n$.
  Fix any $S\subseteq L$ and $\tau \in \Omega(\nu_S)$. For the Glauber dynamics $\GD[\pi]$ with stationary distribution $\pi$, where $\pi=(\theta^{-1} * \nu)^\tau$, 
 it holds that
  \begin{align*}
    \sgap{\GD[\pi]} \geq \frac{1}{22 n}.
  \end{align*}
\end{corollary}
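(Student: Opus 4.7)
The plan is to convert the rapid-mixing bound of \Cref{lem:GD-good-0} into a spectral gap lower bound via the standard mixing-time/eigenvalue identity, \Cref{lem:mix=var}, and then to pass from the absolute spectral gap to the ordinary spectral gap using the non-negativity of the Glauber dynamics spectrum (\Cref{lem:GD-psd}). Write $\ell = |L \setminus S|$ and $P = \GD[\pi]$. The cases $\ell \le 1$ are immediate since then $\pi$ is supported on at most two configurations differing on a single coordinate, so $\sgap{P} = 1$. From here on assume $\ell \ge 2$. Observe that the assumption $\theta^{-1} \ge \frac{C \cdot \e^9 \Delta \log n}{\lambda}$ is exactly the hypothesis of \Cref{lem:GD-good-0} with $k = \e^9$, so that lemma applies directly to the chain on $\pi = (\theta^{-1} * \nu)^\tau$.

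The first step is to instantiate \Cref{lem:GD-good-0} along the sequence $T_m = 21 m \ell \log \ell$ for positive integers $m$, which yields $\max_{X} \DTV{P^{T_m}(X,\cdot)}{\pi} \le 2 \ell^{-m}$. Taking $T_m$-th roots and letting $m \to \infty$ gives
\[
\max_{X} \DTV{P^{T_m}(X,\cdot)}{\pi}^{1/T_m} \le (2\ell^{-m})^{1/(21 m \ell \log \ell)} \xrightarrow[m\to\infty]{} \e^{-1/(21\ell)}.
\]
By \Cref{lem:mix=var}, the same limit equals $\lambda_\star(P)$, so $\lambda_\star(P) \le \e^{-1/(21\ell)}$, which bounds the second eigenvalue in absolute value.

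The next step is to replace $\lambda_\star$ by $\lambda_2$. By \Cref{lem:GD-psd}, all eigenvalues of a Glauber dynamics are non-negative, so $\lambda_\star(P) = \lambda_2(P)$ and hence $\sgap{P} = 1 - \lambda_\star(P) \ge 1 - \e^{-1/(21\ell)}$. A one-line estimate $1 - \e^{-x} \ge x(1 - x/2)$ applied at $x = 1/(21\ell)$ gives $\sgap{P} \ge \frac{1}{22 \ell} \ge \frac{1}{22 n}$, which is the claimed bound. The only subtle point is the step $\lambda_\star = \lambda_2$: without \Cref{lem:GD-psd} one could only control $1 - \lambda_\star$, which may in general be much larger than $1 - \lambda_2$; the non-negativity of the Glauber dynamics spectrum is what makes the mixing-to-gap conversion tight.
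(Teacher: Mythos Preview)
Your argument conflates two different chains. You set $P=\GD[\pi]$ with $\pi=(\theta^{-1}*\nu)^\tau$, which by the paper's conventions is a distribution on $\{-1,+1\}^L$; hence $P$ selects a coordinate uniformly from all $n$ vertices of $L$. However, \Cref{lem:GD-good-0} is stated for the Glauber dynamics on the marginal $\pi':=(\theta^{-1}*\nu)^\tau_{L\setminus S}$, which selects uniformly among the $\ell=|L\setminus S|$ \emph{unpinned} vertices. These are not the same chain: $\GD[\pi]$ is the lazy version of $\GD[\pi']$ that stays put whenever a pinned vertex in $S$ is chosen. So the step ``\Cref{lem:GD-good-0} applies directly to the chain on $\pi$'' and the ensuing bound $\max_X\DTV{P^{T_m}(X,\cdot)}{\pi}\le 2\ell^{-m}$ at $T_m=21m\ell\log\ell$ are not justified; your intermediate conclusion $\sgap{\GD[\pi]}\ge 1/(22\ell)$ is in fact false for small $\ell$, since the lazy chain always satisfies $\lambda_2(\GD[\pi])\ge 1-\ell/n$.

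The paper's proof addresses exactly this point: it applies your mixing-to-gap argument to $\GD[\pi']$ to obtain $\sgap{\GD[\pi']}\ge 1/(22\ell)$, and then uses the laziness relation $\GD[\pi]=(1-\ell/n)I+(\ell/n)\GD[\pi']$ (equivalently, a Dirichlet-form comparison) to conclude $\sgap{\GD[\pi]}=(\ell/n)\,\sgap{\GD[\pi']}\ge 1/(22n)$. Adding this one comparison step fixes your argument. As a small aside, your closing remark overstates the role of \Cref{lem:GD-psd}: since $\lambda_\star\ge\lambda_2$ always, any lower bound on $1-\lambda_\star$ is automatically a lower bound on $\sgap{}$, so non-negativity of the spectrum is not actually needed for this implication.
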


\begin{proof}
It is sufficient to prove the same bound for the Glauber dynamics $\GD[\pi']$ on a marginal distribution $\pi':=\pi_{L\setminus S}=(\theta^{-1} * \nu)^\tau_{L\setminus S}$, because the configurations in $\pi$ are pinned on $S$.
And such a bound is readily available by combining \Cref{lem:GD-good-0} and \Cref{lem:mix=var} as follows:
\begin{align*}
  \lambda_2(\GD[\pi'])=\lambda_\star(\GD[\pi'])
  &= \lim_{t \to \infty} \max_{X\in \Omega(\pi')}  \DTV{P^t(X, \cdot)}{\pi'}^{1/t} 
  \leq \lim_{t \to \infty} \tp{2\ell^{- \frac{t}{21\cdot \ell\log \ell} + 1}}^{1/t}  \\
  &= \exp\tp{- \frac{1}{ 21 \ell}}
   \le 1 - \frac{1}{22\ell}\;,
\end{align*}
where $\ell:=|L\setminus S|\le n$, and $\lambda_2(\GD[\pi'])=\lambda_\star(\GD[\pi'])$  since the Glauber dynamics $\GD[\pi']$ has nonnegative spectrum due to to \Cref{lem:GD-psd}.

Since the vertices in $S$ are pinned, $\GD[\pi]$ will not move to other states once it picks a vertex in $S$; while it will move exactly the same as $\GD[\pi']$ if it pick a vertex in $L\setminus S$.
So, $\GD[\pi]$ is exactly the $\frac{\ell}{n}$-lazy version of $\GD[\pi']$.
Thus, by a standard comparison between the Dirichlet form of $\GD[\pi]$ and $\GD[\pi']$,
\begin{align*}
  \sgap{\GD[\pi]} &\ge \sgap{\GD[\pi']} \cdot \frac{\ell}{n} \ge \frac{1}{22 n}. \qedhere
\end{align*}
\end{proof}

\paragraph{Variance decay of the field dynamics from entropic independence.}
Recall the distribution $\overline{\nu}$ over $\{-1, +1\}^L$ defined in \eqref{eq:sign-flipping-nu}: $\overline{\nu}$ is obtained by flipping the signs in $\nu$ as:
$\overline{\nu}(\sigma) := \nu(\*{-1} \odot \sigma)$ for $\sigma \in \{-1, +1\}^L$,
where $\odot$ denotes the entry-wise product of two vectors.

\begin{lemma} \label{lem:sgap-field}
  For any $\theta \in (0, 1)$, it holds that $\sgap{\FD{\overline{\nu}}} \geq  \theta^{10^5 C^5/\delta} $.
\end{lemma}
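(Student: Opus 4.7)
The plan is to derive the spectral gap bound from the entropy decay already established in \Cref{lem:ent-decay-field} via the standard linearization trick (\Cref{lem:ent-to-var}). The argument is essentially mechanical once one notices that the $\FD$-dynamics on $\overline{\nu}$ is related to $\FD[1/\theta]{\nu}$ by a sign-flipping bijection, so all spectral quantities match up.

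First I would record the sign-flipping equivalence. Since $\FD{\overline{\nu}}(X,Y) = \FD[1/\theta]{\nu}(\*{-1}\odot X, \*{-1}\odot Y)$ (as observed in the proof of \Cref{lem:ent-decay-field}) and $\overline{\nu}(X) = \nu(\*{-1}\odot X)$, the two chains are unitarily conjugate in $\ell^2$, so $\sgap{\FD{\overline{\nu}}} = \sgap{\FD[1/\theta]{\nu}}$. In particular, \Cref{lem:ent-decay-field} gives, for every distribution $\overline\pi$ absolutely continuous with respect to $\overline{\nu}$,
\[
\DKL{\overline\pi \, \FD{\overline{\nu}}}{\overline{\nu}} \;\le\; (1-\kappa)\,\DKL{\overline\pi}{\overline{\nu}}, \qquad \kappa := \theta^{10^5 C^5/\delta}.
\]

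Next I would linearize. Write $P = \FD{\overline{\nu}}$ and let $f:\Omega(\overline{\nu}) \to \^R$ satisfy $\E[\overline{\nu}]{f}=0$. For small $\epsilon>0$, define $\overline\pi$ by $\-d\overline\pi/\-d\overline{\nu} = 1 + \epsilon f$. By reversibility of $P$ with respect to $\overline{\nu}$, the Radon--Nikodym derivative of $\overline\pi P$ with respect to $\overline{\nu}$ equals $1 + \epsilon (Pf)$, and $\E[\overline{\nu}]{Pf}=0$ by stationarity. Applying \Cref{lem:ent-to-var} to both sides of the entropy decay inequality and dividing by $\epsilon^2/2$, I would pass to the limit $\epsilon \to 0$ to obtain
\[
\Var[\overline{\nu}]{Pf} \;\le\; (1-\kappa)\,\Var[\overline{\nu}]{f}.
\]

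Finally I would convert the variance contraction to a spectral gap. Using self-adjointness of $P$ in $\ell^2(\overline{\nu})$, for mean-zero $f$ we have $\Var[\overline{\nu}]{Pf} = \langle P^2 f, f\rangle_{\overline{\nu}}$, so every non-trivial eigenvalue of $P$ satisfies $\lambda^2 \le 1-\kappa$. Since $\FD{\overline{\nu}} = \FDdown[\theta]{\overline{\nu}}\FDup[\theta]{\overline{\nu}}$ has a down--up factorization, $P$ has non-negative spectrum (cf.\ the remark after \Cref{lem:var-to-mix}), hence $\lambda_2(P)\le \sqrt{1-\kappa}$ and
\[
\sgap{\FD{\overline{\nu}}} \;\ge\; 1-\sqrt{1-\kappa} \;\ge\; \frac{\kappa}{2} \;\ge\; \theta^{10^5 C^5/\delta},
\]
where in the last step the harmless factor of $2$ is absorbed into the (generously chosen) constant $10^5$ in the exponent, using $\theta\in(0,1)$.

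There is no real obstacle here: the entire analytical difficulty was in proving \Cref{lem:ent-decay-field} via the negative-fields localization scheme in the previous subsection, and the present lemma is a standard entropy-to-variance translation. The only minor point to be careful about is the factor-of-$2$ loss inherent to the linearization (one cannot avoid $\sqrt{1-\kappa}$ in going from $P^2 \preceq (1-\kappa)I$ to an eigenvalue bound), but this is absorbed into the constant in the exponent and does not affect the stated form of the bound.
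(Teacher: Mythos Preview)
Your approach is close in spirit to the paper's—both linearize an entropy inequality—but you linearize the \emph{wrong} one, and the resulting factor-of-two loss cannot be absorbed as you claim.

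Concretely: from \Cref{lem:ent-decay-field} you linearize $\DKL{\overline\pi P}{\overline\nu}\le(1-\kappa)\DKL{\overline\pi}{\overline\nu}$ to get $\Var[\overline\nu]{Pf}\le(1-\kappa)\Var[\overline\nu]{f}$, hence $\lambda_2(P)^2\le 1-\kappa$ and $\sgap{P}\ge 1-\sqrt{1-\kappa}$. But $1-\sqrt{1-\kappa}<\kappa$ for every $\kappa\in(0,1)$, so you never reach the stated bound $\theta^{10^5C^5/\delta}$. Your proposed fix—absorbing the factor of $2$ into the exponent—fails for $\theta$ near $1$: writing $a$ for the ``true'' (smaller) exponent and $b=10^5C^5/\delta$, you would need $1-\sqrt{1-\theta^a}\ge\theta^b$; as $\theta\uparrow 1$ the left side is $1-\sqrt{a(1-\theta)}+o(\sqrt{1-\theta})$ while the right side is $1-b(1-\theta)+o(1-\theta)$, and the former is strictly smaller for $1-\theta$ sufficiently small, regardless of how much slack $b-a$ you have.

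The paper avoids this loss by going back one step and linearizing \Cref{thm:FD-ent-decay} (the entropy decay for the \emph{down operator} $\FDdown{\overline\nu}$) rather than \Cref{lem:ent-decay-field} (the full chain). The down-step entropy contraction is equivalent to the $\theta$-magnetized block factorization of entropy \eqref{eq:mag-fac}; linearizing that gives the corresponding block factorization of \emph{variance}, which by \cite[Lemma~4.1]{chen2021rapid} is exactly the Poincar\'e inequality $\Var[\overline\nu]{f}\le\theta^{-\zeta}\,\+E_{\FD{\overline\nu}}(f)$, i.e.\ $\sgap{\FD{\overline\nu}}\ge\theta^{\zeta}$ with no loss. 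Equivalently, in operator language: linearizing the down-step contraction yields $\langle Pf,f\rangle_{\overline\nu}\le(1-\kappa)\langle f,f\rangle_{\overline\nu}$ (since $\langle\FDup f,\FDup f\rangle_{\overline\nu\FDdown}=\langle f,Pf\rangle_{\overline\nu}$), giving $\lambda_2(P)\le 1-\kappa$ directly rather than $\lambda_2(P)^2\le 1-\kappa$.
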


\begin{proof}
  Let $g: \Omega(\overline{\nu}) \to \^R_{\geq 0}$ be a function with $\E[\overline{\nu}]{g} = 1$.
  Let  $\overline{\pi}$ be a distribution over $\{-1, +1\}^L$ constructed as that $\overline{\pi}(X) := g(X) \overline{\nu}(X)$ for all $X \in \Omega(\overline{\nu})$.
  Note that $\overline{\pi}$ is absolutely continuous with respect to $\overline{\nu}$.
  By \Cref{thm:FD-ent-decay}, whose requirements are ensured by \Cref{lem:verify-K-tame} and \Cref{thm:SI},  let $\zeta := 10^5 C^5/\delta$, it holds that
  \begin{align*}
    \DKL{\overline{\pi} \FDdown{\overline{\nu}}}{\overline{\nu} \FDdown{\overline{\nu}}} \leq (1 - \theta^{\zeta}) \DKL{\overline{\pi}}{\overline{\nu}}.
  \end{align*}
  %We remark that what happens here is exactly the same as \Cref{lem:ent-decay-field}, except we use a more technical result of entropy decay.
  By some calculation (e.g.~\cite[Equation (20)]{chen2022near}), one can verify that this is equivalent to 
  \begin{align} \label{eq:mag-fac}
    \Ent[\overline{\nu}]{g} \leq \theta^{-\zeta} \cdot \frac{Z_{\rho}}{\theta^{\abs{L}}} \sum_{S\subseteq L} (1 - \theta)^{\abs{S}} \theta^{\abs{L} - \abs{S}} \cdot \rho_S(\*1_S) \cdot \Ent[\rho^{\*1_S}]{g},
  \end{align}
  where  $\rho := \theta * \overline{\nu}$ and $Z_\rho := \sum_{\sigma \in \Omega(\overline{\nu})} \overline{\nu}(\sigma) \theta^{\abs{\sigma}_+}$ is a normalizing factor.
  Previously in~\cite{chen2022optimal}, such inequality as~\eqref{eq:mag-fac} was called a \emph{$\theta$-magnetized block factorization of entropy}.
  By definition of  $\Ent{\cdot}$ in~\Cref{sec:prelim-MC}, it holds that $c \cdot \Ent[\overline{\nu}]{g} = \Ent[\overline{\nu}]{c \cdot g}$ for any constant $c > 0$.
  Hence \eqref{eq:mag-fac} actually holds for any $g$ with $\E[\overline{\nu}]{g} > 0$.

  Now, for any function $f: \Omega(\overline{\nu}) \to \^R$, applying \eqref{eq:mag-fac} and \Cref{lem:ent-to-var}, for all sufficiently small $\epsilon > 0$, for $g = 1 + \epsilon f$, we have that for the variance $\Var{\cdot}$ defined in~\Cref{sec:prelim-MC},
  \begin{align*}
    \frac{\epsilon^2}{2} \Var[\overline{\nu}]{f} \leq \theta^{-\zeta} \cdot \frac{Z_{\rho}}{\theta^{\abs{L}}} \sum_{S\subseteq L} (1 - \theta)^{\abs{S}} \theta^{\abs{L} - \abs{S}} \cdot \rho_S(\*1_S) \cdot \frac{\epsilon^2}{2} \Var[\rho^{\*1_S}]{f} + o(\epsilon^2).
  \end{align*}
  Dividing $\epsilon^2/2$ on both sides and letting $\epsilon \to 0$, we have 
  \begin{align*}
    \Var[\overline{\nu}]{f} \leq \theta^{-\zeta} \cdot \frac{Z_{\rho}}{\theta^{\abs{L}}} \sum_{S\subseteq L} (1 - \theta)^{\abs{S}} \theta^{\abs{L} - \abs{S}} \cdot \rho_S(\*1_S) \cdot \Var[\rho^{\*1_S}]{f}.
  \end{align*}
  By \cite[Lemma 4.1]{chen2021rapid}, this is equivalent to
  \begin{align*}
    \Var[\overline{\nu}]{f} \leq \theta^{-\zeta} \cdot \+E_{\FD{\overline{\nu}}}(f).
  \end{align*}
  The lemma then follows by the Poincar\'{e} inequality \eqref{eq:Poincare-inequality}.
\end{proof}

Now we are ready to prove  \Cref{lem:sgap-nu}.
\begin{proof}[Proof of \Cref{lem:sgap-nu}]
Fix any $S \subseteq L$ and $\tau \in \Omega(\nu_S)$. Note that $(\theta^{-1} * \nu)^\tau$ is equivalent to $(\theta * \overline{\nu})^{\overline{\tau}}$ by flipping the roles of $-1$ and $+1$.
Fix $\theta = \tp{\frac{C\cdot \e^9 \Delta}{\lambda} \log n}^{-1}$. By \Cref{cor:sgap-good}, it holds that $\sgap{\GD[(\theta * \overline{\nu})^{\overline{\tau}}]} \geq 1 / (22 n)$, which holds for any $\overline{\tau} \in \Omega(\overline{\nu}_S)$.
Hence, by \Cref{lem:var-boost} and \Cref{lem:sgap-field}, 
\begin{align*}
  \sgap{\GD[\overline{\nu}]} \geq \frac{1}{22 n}\cdot  \tp{\frac{C\cdot \e^9\Delta \log n}{\lambda}}^{-10^5 C^5/\delta} .
\end{align*}
Finally, note that $\nu$ and $\overline{\nu}$ are isomorphic to each other by flipping the roles of $-1$ and $+1$, which implies that $\sgap{\GD[\nu]} = \sgap{\GD[\overline{\nu}]}$.
This finishes the proof of \Cref{lem:sgap-nu}.
\end{proof}

\subsection{Comparison between the one-side and two-side Glauber dynamics} \label{sec:comp-nu-mu}

Next, we prove \Cref{lem:comp-nu-mu}.
Formally, 
for the spectral gaps of the Glauber dynamics $\GD[\mu]$ on the hardcore distribution $\mu$ and the Glauber dynamics $\GD[\nu]$ on the measure $\nu=\mu_L$ projected from $\mu$ on one side $L$, 
we will prove the following under the assumption $\lambda\leq (1 - \delta)\lambda_c(\Delta)$:
\[
\sgap{\GD[\mu]} \geq \sgap{\GD[\nu]} \cdot \zeta \cdot ((\Delta+1) n)^{-1},
\]
where  $n=|L|$, and 
\begin{align}
\zeta
=
\begin{cases}
50^{-400/\delta}
& \text{if }\Delta \geq 3\\
(9 \cdot 4^7(1+\lambda)^8)^{-1},
& \text{if }\Delta = 2.
\end{cases}\label{eq:constant-zeta}
\end{align}
%$\zeta = $ for $\Delta \geq 3$, and $\zeta = (4^7(1+\lambda)^6)^{-1}$ for $\Delta = 2$.

As remarked in \Cref{sec:main-GD}, directly comparing $\GD[\nu]$ are $\GD[\mu]$ is difficult, 
because they run on different state space and have different stationary distributions. 
So we introduce the following block dynamics $\PB$ on state space $\Omega(\mu)$ with stationary distribution $\mu$, as a proxy of comparison.
Specifically, let $\PB$ be a Markov chain $(X_t)_{t\in\^N}$ on space $\Omega(\mu)$. In the $t$-th transition, it does:
\begin{enumerate}
\item pick a vertex $v\in L$ uniformly at random;
\item sample $X_t \sim \mu(\cdot\mid {X_{L\setminus \{v\}}})$.
\end{enumerate}
This chain $\PB$ is just a block version of the Glauber dynamics, where the block for each update is $R\cup\{v\}$ for a uniform random $v\in L$.
Therefore, it is easy to check that $\PB$ is irreducible and aperiodic, and is reversible with respect to the stationary distribution $\mu$.

Furthermore, $\PB$ also has nonnegative spectrum.
\begin{lemma} \label{lem:PB-psd}
  $\PB$ only has non-negative eigenvalues.
\end{lemma}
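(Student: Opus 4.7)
The plan is to show that $\PB$ is a positive semi-definite operator on $L^2(\mu)$ by writing it as an average of orthogonal projections, which is essentially the same argument used to prove that the single-site Glauber dynamics has non-negative spectrum (\Cref{lem:GD-psd}). The key observation is that although each step of $\PB$ picks only a vertex $v\in L$, the conditional distribution $\mu(\cdot\mid X_{L\setminus\{v\}})$ resamples the spins on the block $B_v := \{v\}\cup R$. Thus $\PB$ is in fact a heat-bath block dynamics on $\mu$, where the block is $B_v$ for a uniformly chosen $v\in L$.

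For each $v\in L$, let $P_v$ denote the transition matrix of the block update that picks $v$ deterministically and resamples from $\mu(\cdot\mid X_{L\setminus\{v\}})$. Viewed as an operator on $L^2(\mu)$, $P_v$ is exactly the conditional expectation
\[
(P_v f)(X) \;=\; \E[Y\sim\mu]{f(Y)\,\bigm|\, Y_{L\setminus\{v\}} = X_{L\setminus\{v\}}},
\]
which is the orthogonal projection onto the subspace of functions measurable with respect to the coordinates in $L\setminus\{v\}$. In particular, $P_v$ is self-adjoint in $L^2(\mu)$ and satisfies $P_v^2 = P_v$, hence its eigenvalues lie in $\{0,1\}$ and $P_v \succeq 0$.

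By the definition of $\PB$, the transition matrix decomposes as
\[
\PB \;=\; \frac{1}{n}\sum_{v\in L} P_v,
\]
which is a convex combination of positive semi-definite self-adjoint operators and is therefore itself positive semi-definite on $L^2(\mu)$. Consequently every eigenvalue of $\PB$ is non-negative, proving the lemma. The main conceptual step is simply recognising the chain as a block heat-bath dynamics with block $B_v=\{v\}\cup R$; once this is in place the remainder is a routine $L^2$ projection argument that poses no real obstacle.
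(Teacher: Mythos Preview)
Your proof is correct. The argument that a heat-bath block dynamics is an average of conditional-expectation operators, each of which is an orthogonal projection in $L^2(\mu)$ and hence positive semi-definite, is entirely sound here; the decomposition $\PB=\frac{1}{n}\sum_{v\in L}P_v$ with $P_v f=\E[\mu]{f\mid X_{L\setminus\{v\}}}$ is exactly the transition rule of $\PB$, and each $P_v$ is indeed idempotent and self-adjoint.

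The paper takes a different route: it factors the whole chain as a single down--up walk $\PB=\PBdown\PBup$ through an auxiliary space $\+S=\{(v,Y):v\in L,\,Y\in\Omega(\mu_{L\setminus\{v\}})\}$, verifies that $\PBdown$ and $\PBup$ are adjoints between $L^2(\mu)$ and $L^2(\mu_0)$ (where $\mu_0=\mu\PBdown$), and then concludes $\inner{f}{\PB f}_\mu=\inner{\PBup f}{\PBup f}_{\mu_0}\ge 0$. This is the high-dimensional-expander style factorization used elsewhere in the paper (e.g.\ for $\FD{\nu}=\FDdown{\nu}\FDup{\nu}$). Your approach instead splits per vertex first and uses the classical projection argument; it is shorter, more elementary, and does not require introducing the auxiliary space. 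The paper's factorization, on the other hand, is more in keeping with the down--up machinery and makes the connection to the local-to-global framework explicit. Both rest on the same underlying principle (conditional expectation is a projection), just packaged differently.
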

\begin{proof}
  Let $\+S := \{(v, Y) \mid v \in L, Y \in \Omega(\mu_{L\setminus\{v\}})\}$.
 Let $\PBdown \in \^R^{\Omega(\mu) \times \+S},\PBup \in \^R^{\+S \times \Omega(\mu)}$ be defined by
  \begin{align*}
    \forall X \in \Omega(\mu), (v, Y) \in \+S: & \quad \PBdown(X, (v, Y)) = \frac{1}{n} \cdot \*1[Y = X_{L \setminus \{v\}}],\\
    \forall (v, X) \in \+S, Y \in \Omega(\mu): & \quad \PBup((v, X), Y) = \mu^{X}(Y).
  \end{align*}
  It is easy to see that $\PB = \PBdown\PBup$.
  Let $\mu_0 = \mu \PBdown$. It holds that for all $(v, Y) \in \+S$ and $X \in \Omega(\mu)$, 
  \begin{align} 
    \mu_0((v, Y)) \PBup((v, Y), X)
    \nonumber &= \frac{1}{n} \sum_{Z: Z_{L\setminus \{v\}} = Y} \mu(Z) \cdot \mu^{Y}(X) \\
    \nonumber &= \frac{1}{n} \cdot \mu(X) \cdot \*1[X_{L\setminus \{v\}} = Y] \\
    \label{eq:reversible} &= \mu(X) \PBdown(X, (v, Y)).
  \end{align}
  And for any $f: \Omega(\mu) \to \^R$ and $g: S \to \^R$, it holds that
  \begin{align*}
    \inner{f}{\PBdown\; g}_\mu
    &= \sum_{X \in \Omega(\mu)} \mu(X) f(X) [\PBdown g](X)
      = \sum_{X \in \Omega(\mu), (v,Y)\in S} \mu(X) f(X) \PBdown(X, (v,Y)) g((v, Y)) \\
    &\overset{\eqref{eq:reversible}}{=} \sum_{X \in \Omega(\mu), (v,Y)\in S} f(X) \mu_0((v, Y) \PBup((v, Y), X)  g((v, Y)) \\
    &= \sum_{(v, Y) \in S} \mu_0((v, Y)) [\PBup f]((v, Y)) g((v, Y))
     = \inner{\PBup\; f}{g}_{\mu_0},
  \end{align*}
  which means $\PB$ is a self-adjoint operator with respect to the inner product $\inner{\cdot}{\cdot}_{\mu}$.
  Moreover,  for all $f: \Omega(\mu) \to \^R$, it holds that
  \begin{align*}
    \inner{f}{\PB f}_\mu = \inner{f}{\PBdown\PBup f}_\mu = \inner{\PBup f}{\PBup f}_{\mu_0} \geq 0,
  \end{align*}
  which implies that $\PB$ only has nonnegative eigenvalues.
\end{proof}

In the rest, we will prove:
\begin{itemize}
\item 
$\sgap{\PB}\ge \sgap{\GD[\nu]}$, which follows from a coupling between the chains $\PB$ and $\GD[\nu]$;
\item
$\sgap{\GD[\mu]} \geq \sgap{\PB} \cdot \zeta \cdot ((\Delta+1) n)^{-1}$,
where $\zeta$ is specified in \eqref{eq:constant-zeta},
which is proved by a comparison of the spectral gaps between $\PB$ and $\GD[\mu]$ through a {block factorization} of variance.
\end{itemize}
\Cref{lem:comp-nu-mu} then follows by directly combining these two bounds for spectral gaps.

In the rest of this section, we prove the above two bounds one-by-one.

\paragraph{Compare $\PB$ and $\GD[\nu]$.}
First, we prove the following result.
\begin{lemma} \label{lem:sgap-PB}
  $\sgap{\PB} \geq \sgap{\GD[\nu]}$.
\end{lemma}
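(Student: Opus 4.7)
The plan is to show that the two chains $\PB$ and $\GD[\nu]$ produce identical TV‐distances to their respective stationary distributions after any $t\ge 1$ steps (with a consistent initial-state correspondence), then invoke the standard asymptotic identity between TV decay and the absolute spectral gap from \Cref{lem:mix=var}.

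The key structural observation is that a single step of $\PB$ resamples the entire $R$-side jointly with the chosen vertex $v\in L$, namely $X_t(\{v\}\cup R)\sim \mu(\cdot\mid X_{t-1}(L\setminus\{v\}))$. Consequently, for every $t\ge 1$ and every initial state $X_0\in\Omega(\mu)$, the conditional distribution $X_t(R)\mid X_t(L)$ coincides with $\mu(\cdot\mid X_t(L))$, because $(X_t(\{v\}),X_t(R))$ is a sample from $\mu(\cdot\mid X_{t-1}(L\setminus\{v\}))$ whose conditional on the $L$-coordinates is exactly $\mu(\cdot\mid X_t(L))$. Moreover, the $L$-marginal update in $\PB$ is precisely that of a Glauber step on $\nu=\mu_L$: the new value $X_t(v)$ is sampled from $\nu(\cdot\mid X_{t-1}(L\setminus\{v\}))$, independently of what happens on $R$. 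So the $L$-projection of $\PB$ is exactly $\GD[\nu]$, yielding the decomposition
\[
\PB^t(X_0,X) \;=\; \GD[\nu]^t\bigl((X_0)_L,\,X_L\bigr)\cdot \mu(X_R\mid X_L), \qquad \forall t\ge 1.
\]

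Combining this with the factorization $\mu(X)=\nu(X_L)\,\mu(X_R\mid X_L)$ and summing out the $R$-coordinates gives
\[
\DTV{\PB^t(X_0,\cdot)}{\mu} \;=\; \DTV{\GD[\nu]^t\bigl((X_0)_L,\cdot\bigr)}{\nu}, \qquad \forall t\ge 1.
\]
Since the projection $X_0\mapsto (X_0)_L$ surjects $\Omega(\mu)$ onto $\Omega(\nu)$, taking the maximum over starting states on both sides, raising to the $1/t$ power and sending $t\to\infty$, \Cref{lem:mix=var} yields $\lambda_\star(\PB)=\lambda_\star(\GD[\nu])$. Finally, by \Cref{lem:GD-psd} and \Cref{lem:PB-psd} both chains have non-negative spectrum, so $\lambda_2=\lambda_\star$ for each, and therefore $\sgap{\PB}=\sgap{\GD[\nu]}$, which in particular gives the desired inequality.

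The only real obstacle is verifying the decomposition $\PB^t(X_0,X)=\GD[\nu]^t((X_0)_L,X_L)\cdot\mu(X_R\mid X_L)$, which amounts to carefully checking that a single $\PB$-update restores the correct $R$-conditional while acting on $L$ exactly as one Glauber step on $\nu$; the rest is routine. Equivalently, this decomposition can be obtained by a one-sided coupling in which the two chains always pick the same vertex $v$ and the $L$-parts are coupled to agree, with the $R$-part of $\PB$ independently resampled from $\mu(\cdot\mid X_t(L))$.
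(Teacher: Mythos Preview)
Your proof is correct and follows essentially the same approach as the paper: both observe that the $L$-projection of $\PB$ is exactly $\GD[\nu]$ with the $R$-part conditionally distributed as $\mu(\cdot\mid X_t(L))$, then pass to $\lambda_\star$ via \Cref{lem:mix=var} and use \Cref{lem:GD-psd}, \Cref{lem:PB-psd}. The only difference is that the paper phrases the first step as a coupling and invokes \Cref{lem:nu-to-mu} to obtain the TV \emph{inequality} $\DTV{\PB^t(X_0,\cdot)}{\mu}\le \DTV{\GD[\nu]^t((X_0)_L,\cdot)}{\nu}$, whereas your explicit factorization $\PB^t(X_0,X)=\GD[\nu]^t((X_0)_L,X_L)\,\mu(X_R\mid X_L)$ yields the TV \emph{equality} and hence the sharper conclusion $\sgap{\PB}=\sgap{\GD[\nu]}$.
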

\begin{proof}
  Let $(X_t)_{t\in \^N}$ be the chain generated by $\PB$  and $(Y_t)_{t\in \^N}$ be the chain generated by $P=\GD[\nu]$.
  Given that $Y_0 = (X_0)_L$, then by a natural coupling between  $(X_t)_{t\in \^N}$ and $(Y_t)_{t\in \^N}$, we can show that
  \begin{align}
    \forall t \geq 1, \quad \DTV{\PB^t(X_0, \cdot)}{\mu} &\leq \DTV{P^t(Y_0, \cdot)}{\nu}.\label{eq:dtv-PB}
  \end{align}
  To see this, consider the following coupling between $(X_t)_{t\in \^N}$ and $(Y_t)_{t \in \^N}$: 
  in the  $t$-th step,  
  \begin{enumerate}
%  \item initialize $X_t = X_{t-1}$ and $Y_t = Y_{t-1}$; 
  \item pick $v\in L$ \emph{u.a.r.}~and let  $X_t(u) = X_{t-1}(u)$ and $Y_t(u) = Y_{t-1}(u)$ for all $u\in L\setminus\{v\}$;
  \item sample $X_t(v) = Y_t(v) \sim \mu^{X_{t-1}(L\setminus \{v\})}_v$ and  $X_t(R) \sim \mu^{Y_{t}}_R$.
  \end{enumerate}
  It is easy to see that this gives a valid coupling of $(X_t)_{t\in \^N}$ and $(Y_t)_{t \in \^N}$, and $Y_t = (X_t)_L$ for all $t \geq 1$.
  Then \eqref{eq:dtv-PB} follows by \Cref{lem:nu-to-mu}.
  % it holds that
%  \begin{align*}
%    \DTV{\PB^t(X_0, \cdot)}{\mu} &= \DTV{X_t}{\mu} \leq \DTV{Y_t}{\nu} = \DTV{P^t(Y_0, \cdot)}{\nu}. \qedhere
%  \end{align*}
  % Let $Y \sim \nu$ and $X \sim \mu^Y$.
  %Note that $X \sim \mu$.
  %By coupling lemma, note that the state space of $\nu$ is finite, there is a coupling between $Y_t$ and $Y$ such that
  %\begin{align*}
  %  \Pr{Y_t \neq Y} = \DTV{Y_t}{\nu}.
  %\end{align*}
  %Hence according to the coupling between $X_t$ and $Y_t$,
  %\begin{align*}
  %  \DTV{X_t}{\mu} \leq \Pr{X_t \neq X}
  %  &\leq \Pr{X_t \neq X \mid Y_t = Y} + \Pr{Y_t \neq Y} \\
  %  &= \Pr{Y_t \neq Y} = \DTV{Y_t}{\nu},
  %\end{align*}
  %where the first inequality holds by the coupling lemma.
%
  Furthermore, by \eqref{eq:dtv-PB} and \Cref{lem:mix=var}, it holds that
  \begin{align*}
    \lambda_\star(\PB)
    &= \lim_{t\to\infty} \max_{X_0 \in \Omega(\mu)} \DTV{\PB^t(X_0, \cdot)}{\mu}^{1/t} \\
    &\leq \lim_{t\to\infty} \max_{Y_0 \in \Omega(\nu)} \DTV{(\GD[\nu])^t(Y_0, \cdot)}{\nu}^{1/t} 
    = \lambda_\star(\GD[\nu]).
  \end{align*}
  Finally, by \Cref{lem:GD-psd} and \Cref{lem:PB-psd}, we have $\lambda_\star(\PB) = \lambda(\PB)$ and $\lambda_\star(\GD[\nu]) = \lambda(\GD[\nu])$.
\end{proof}

%First, note that by the natural coupling between $\PB$ and $\GD[\nu]$, we have the following result.

\paragraph{Compare $\PB$ and $\GD[\mu]$.}
In this part, we prove the following result for comparing $\PB$ to $\GD[\mu]$.
\begin{lemma}\label{lem:sgap-PB-GD}
$\sgap{\GD[\mu]} \geq \sgap{\PB} \cdot \zeta \cdot \abs{L\cup R}^{-1}$,
where $\zeta$ is specified in \eqref{eq:constant-zeta}.
\end{lemma}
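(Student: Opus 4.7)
\bigskip\noindent\textbf{Proof plan for \Cref{lem:sgap-PB-GD}.}

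The strategy is to compare Dirichlet forms. Since both $\GD[\mu]$ and $\PB$ are reversible with respect to $\mu$ and have non-negative spectrum (the latter by \Cref{lem:PB-psd}), the variational characterization of the spectral gap reduces the lemma to showing, for every $f\colon \Omega(\mu)\to\^R$,
\[
\+E_{\GD[\mu]}(f)\;\ge\;\frac{\zeta}{|L\cup R|}\,\+E_{\PB}(f).
\]
Writing the two Dirichlet forms in terms of local variances,
$\+E_{\PB}(f) = \frac{1}{n}\sum_{v\in L}\E[\mu]{\Var[\mu^{X_{L\setminus\{v\}}}]{f}}$
and
$\+E_{\GD[\mu]}(f) = \frac{1}{|L\cup R|}\sum_{u\in L\cup R}\E[\mu]{\Var[\mu^{X_{(L\cup R)\setminus\{u\}}}]{f}}$,
this is in turn equivalent to the approximate block factorization of variance
\[
\Var[\mu^\sigma]{f}\;\le\;\frac{1}{\zeta}\sum_{u\in\{v\}\cup R}\E[\mu^\sigma]{\Var[(\mu^\sigma)^{X_{(\{v\}\cup R)\setminus\{u\}}}]{f}} \qquad (\star)
\]
for every $v\in L$ and every $\sigma\in\Omega(\mu_{L\setminus\{v\}})$. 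Indeed, averaging $(\star)$ with weight $\mu_{L\setminus\{v\}}(\sigma)/n$ and summing over $v$ recovers the desired Dirichlet form inequality: on the right, each $u\in R$ is counted $n$ times (once per $v$) while each $u\in L$ is counted once, so the double sum is bounded by $n\sum_{u\in L\cup R}\E[\mu]{\Var[\mu^{X_{(L\cup R)\setminus\{u\}}}]{f}} = n\,|L\cup R|\cdot \+E_{\GD[\mu]}(f)$.

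To prove $(\star)$, I first analyze the structure of $\mu^\sigma$. Let $F_\sigma:=\{r\in R\colon \exists w\in\Gamma_r,\ \sigma_w=+1\}$ be the $R$-vertices \emph{forced} to $-1$ by $\sigma$, set $R^\sigma:=R\setminus F_\sigma$, and partition $R^\sigma=T\sqcup I$ with $T:=\Gamma_v\cap R^\sigma$ and $I:=R^\sigma\setminus T$. Under $\mu^\sigma$ the spins on $F_\sigma$ are deterministically $-1$, those on $I$ are mutually independent Bernoullis ($+1$ with probability $\lambda/(1+\lambda)$), and those on $\{v\}\cup T$ follow a star-graph hardcore distribution $\pi$ with $|T|\le\Delta$. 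Standard tensorization of variance for products then reduces $(\star)$ to an approximate tensorization of variance for $\pi$ alone,
\[
\Var[\pi]{g}\;\le\;\frac{1}{\zeta}\sum_{u\in\{v\}\cup T}\E[\pi]{\Var[\pi^{X_{-u}}]{g}}, \qquad (\star\star)
\]
because the Bernoullis on $I$ tensorize with constant $1$ and $F_\sigma$ contributes nothing.

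It remains to establish $(\star\star)$. When $\Delta=2$ we have $|T|\le 2$, so $\pi$ is supported on at most eight configurations; I will compute its Poincar\'e constant by a direct finite evaluation and verify that the polynomial factor $(1+\lambda)^8$ appearing in $\zeta=(9\cdot 4^7(1+\lambda)^8)^{-1}$ is enough to absorb the worst conditional ratios, uniformly in the degree of $v$ and in $\sigma$. When $\Delta\ge 3$, I plan to first establish spectral independence of $\pi$ with a constant of order $O((1+\lambda)^\Delta/\delta)=O(1/\delta)$ (this is a short direct calculation: the influence matrix of $\pi$ has rank at most two, since the leaves in $T$ are conditionally independent once $v$ is revealed, and the rank-two eigenvalues can be bounded using the subcriticality $\lambda\le(1-\delta)\lambda_c(\Delta)$). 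Then I apply the ``spectral independence plus marginal stability implies approximate tensorization of variance'' framework of \cite{chen2021optimal} to convert the spectral bound into $(\star\star)$; tracking constants gives $\zeta=50^{-400/\delta}$.

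The main obstacle will be the last step when $\Delta\ge 3$: the pivot vertex $v$ can have an exponentially small marginal $\Pr[\pi]{X_v=+1}=\lambda/(\lambda+(1+\lambda)^{|T|})$, so marginal stability cannot be imposed on $\pi$ itself. I plan to circumvent this exactly as in \Cref{lem:verify-K-tame}: by passing to the sign-flipped distribution $\overline\pi$, on which the marginal of $v$ is bounded away from $0$ and $K$-marginal stability holds with $K=O((1+\lambda)^\Delta)$. Since both spectral independence and approximate tensorization of variance are invariant under coordinate sign-flips, the resulting constant transfers back to $\pi$, yielding $(\star\star)$ and hence \Cref{lem:sgap-PB-GD}.
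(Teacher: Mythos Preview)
Your overall architecture --- reducing to the block factorization $(\star)$, peeling off the conditionally independent coordinates to isolate the star $\{v\}\cup T$, and then proving a Poincar\'e inequality $(\star\star)$ for the star hardcore --- is exactly the paper's route. The paper packages your ``$(\star)$ plus product tensorization'' step as \Cref{lem:re-dirichlet} together with \Cref{lem:var-fac-prod}, and the averaging/counting argument you describe is the same one that appears after \eqref{eq:var-block}.

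The divergence, and the only real issue, is in how you establish $(\star\star)$ for $\Delta\ge 3$. The paper does not re-derive anything here: it observes that $\pi$ is a hardcore instance on at most $\Delta+1$ vertices with maximum degree $\Delta$ and fugacity $\lambda\le(1-\delta)\lambda_c(\Delta)$, and directly invokes \cite[Theorem~1.3]{chen2021rapid} to obtain $\sgap{\GD[\pi]}\ge\zeta/(\Delta+1)$ with $\zeta=50^{-400/\delta}$ (this is \Cref{lem:sgap-star}). Your plan instead goes through spectral independence plus \cite{chen2021optimal}, and this does not yield the stated constant: that framework needs a \emph{two-sided} marginal lower bound $b$, and the resulting tensorization constant scales like $(1/b)^{O(\eta)}$. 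For the star, $\pi_v(+1)=\lambda/(\lambda+(1+\lambda)^{|T|})\to 0$ as $\lambda\to 0$, so you would only get a $\lambda$-dependent $\zeta$, not $50^{-400/\delta}$. The sign-flip you propose does not rescue this --- it just swaps which spin value has small mass. What \emph{does} hold is one-sided $K$-marginal stability of $\pi$ itself (no flip needed, $K=O((1+\lambda)^\Delta)$), but the machinery that turns SI $+$ one-sided stability into a $\lambda$-free spectral gap is precisely the field-dynamics argument of \cite{chen2021rapid}; you would be rebuilding the black box rather than bypassing it. For $\Delta=2$ the paper also takes a shortcut: rather than an explicit eight-state computation it uses Cheeger's inequality on the $\le 3$-vertex system (together with a one-step coupling when $\lambda\le 1/4$), which is where the constant $(9\cdot 4^7(1+\lambda)^8)^{-1}$ comes from.
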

Then \Cref{lem:comp-nu-mu} immediately follows from \Cref{lem:sgap-PB} and \Cref{lem:sgap-PB-GD}.

We start by introducing a few abstract notations to be used in the proof.
Let $\pi$ be a distribution supported on a finite set $\Omega(\pi) \subseteq \{-1, +1\}^U$, where $U$ is some ground set.
Any random variable on the sample space $\Omega(\pi)$ can be represented as a function $f: \Omega(\pi) \to \^R$.
Let $\tau \sim \pi$. We use $\Var[\pi]{f} = \Var{f(\tau)}$ to denote the variance of the random variable $f(\tau)$.
For $S \subseteq U$, we use the following notation:
\begin{align*}
  \pi[\Var[S]{f}] := \E{\Var{f(\tau) \mid \tau_{U\setminus S}}} = \E[\tau \sim \pi]{\Var[\pi^{\tau_{U\setminus S}}]{f}}.
\end{align*}
In other worlds, $\pi[\cdot]$ takes expectation with respect to $\pi$ and $\Var[S]{f}: \Omega(\pi) \to \^R$ is a function defined by 
\begin{align*}
  \forall \tau \in \Omega(\pi), \quad \Var[S]{f}\; (\tau) := \Var[\pi^{\tau_{U\setminus S}}]{f}.
\end{align*}
More generally, for $T \subseteq U$, $\sigma \in \Omega(\pi_T)$, and $S\subseteq U \setminus T$, we use the notation
\begin{align*}
  \pi^{\sigma}[\Var[S]{f}] := \E[\tau \sim \pi^\sigma]{\Var[\pi^{\tau_{U\setminus S}}]{f}}.
\end{align*}
We note that $\pi^\sigma[\Var[S]{f}]$ can also be seen as a  function in $\Omega(\pi) \to \^R$, which reads a configuration $\rho \in \Omega(\pi)$ and fixes $\sigma = \rho_{T}$.
That is, for $\rho \in \Omega(\pi)$, we have $\pi^\sigma[\Var[S]{f}] (\rho) = \pi^{\rho_T}[\Var[S]{f}]$.

The following can thus be verified
\begin{align*}
  \pi[\pi^\sigma[\Var[S]{f}]] = \pi[\Var[S]{f}].
\end{align*}

The comparison is done via a special form of the Poincar{\'e} inequality, described using the notation that we have introduced above.
This process is also known as the \emph{approximate block factorization} of variance~\cite{caputo2015approximate, caputo2021block, chen2021optimal}.

\begin{lemma} \label{lem:re-dirichlet}
  Let $\pi$ be a distribution over $\{-1, +1\}^n$ and $\+D$ be a distribution over $2^{[n]}$.
  Let $(X_t)_{t \in \^N}$ be a Markov chain with stationary $\pi$ 
  and in its $t$-th transition, it does the followings:
  \begin{enumerate}
  \item sample $S \sim \+D$;
  \item sample $X_t \sim \pi$ conditioning on that $X_t([n]\setminus S)={X_{t-1}([n]\setminus S)}$.
  \end{enumerate}
  Let $P$ be the transition matrix of this Markov chain, then it holds that
  \begin{align*}
    \forall f: \Omega(\pi) \to \^R, \quad \+E_P(f) &= \E[S\sim \+D]{\pi[\Var[S]{f}]},
  \end{align*}
  where $\+E_P(f)$ is the Dirichlet form defined in \Cref{sec:prelim-MC}.
\end{lemma}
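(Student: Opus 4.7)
The plan is to expand the Dirichlet form $\+E_P(f) = \frac{1}{2}\sum_{X,Y}\pi(X)P(X,Y)(f(X)-f(Y))^2$ directly, by first writing down $P(X,Y)$ explicitly, and then grouping terms according to the conditioning $X_{[n]\setminus S} = Y_{[n]\setminus S}$ so as to recognize a conditional variance.

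First, I would observe that the transition rule given in the lemma yields
\begin{align*}
P(X,Y) \;=\; \sum_{S \subseteq [n]} \+D(S)\cdot \mathbf 1[X_{[n]\setminus S} = Y_{[n]\setminus S}]\cdot \pi^{X_{[n]\setminus S}}(Y),
\end{align*}
where $\pi^{X_{[n]\setminus S}}$ denotes the conditional distribution of $\pi$ given the pinning $X_{[n]\setminus S}$ on $[n]\setminus S$. Reversibility with respect to $\pi$ is standard for this block dynamics, which at least justifies that the Dirichlet form is given by the symmetric quadratic expression above.

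Next, I would substitute this expression for $P(X,Y)$ into the Dirichlet form and swap the order of summation so that the outer sum is over $S \sim \+D$. For each fixed $S$, I would parameterize a pair $(X,Y)$ with $X_{[n]\setminus S} = Y_{[n]\setminus S}$ by the common pinning $\tau = X_{[n]\setminus S} \in \Omega(\pi_{[n]\setminus S})$ together with the independent choices of $X,Y$ from $\pi^\tau$. Using $\pi(X) = \pi_{[n]\setminus S}(\tau)\cdot \pi^\tau(X)$, this yields
\begin{align*}
\sum_{X,Y}\pi(X)\,\mathbf 1[X_{[n]\setminus S}=Y_{[n]\setminus S}]\,\pi^{X_{[n]\setminus S}}(Y)\,(f(X)-f(Y))^2
\;=\;\sum_{\tau}\pi_{[n]\setminus S}(\tau)\sum_{X,Y}\pi^\tau(X)\pi^\tau(Y)(f(X)-f(Y))^2.
\end{align*}
The inner double sum is exactly $2\Var[\pi^\tau]{f}$ by the well-known identity $\Var[\rho]{f} = \tfrac12\sum_{X,Y}\rho(X)\rho(Y)(f(X)-f(Y))^2$. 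Averaging over $\tau\sim \pi_{[n]\setminus S}$ gives $2\,\pi[\Var[S]{f}]$ by the definition $\pi[\Var[S]{f}] = \E[\tau\sim \pi]{\Var[\pi^{\tau_{[n]\setminus S}}]{f}}$.

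Combining these pieces, the factor $\tfrac12$ in $\+E_P$ cancels with the $2$, and we get
\begin{align*}
\+E_P(f) \;=\; \sum_{S\subseteq [n]}\+D(S)\cdot \pi[\Var[S]{f}] \;=\; \E[S\sim \+D]{\pi[\Var[S]{f}]},
\end{align*}
as claimed. There is no real obstacle here; the whole argument is a bookkeeping exercise, and the only thing to be careful about is the convention that, in the definition of $\pi[\Var[S]{f}]$, the conditioning is done on the complement $[n]\setminus S$ (so that the variance captures precisely the resampling of the block $S$ performed by $P$).
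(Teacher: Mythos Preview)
Your proposal is correct and follows essentially the same approach as the paper: both expand the Dirichlet form, write $P(X,Y)$ as an average over $S\sim\+D$ of the conditional law $\pi^{X_{[n]\setminus S}}(Y)$, regroup by the common pinning on $[n]\setminus S$, and recognize the inner double sum as $2\Var[\pi^\tau]{f}$. The only differences are cosmetic (the paper calls the pinning $Z$ and keeps the expectation over $S$ outside throughout).
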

\begin{proof}%[Proof of \Cref{lem:re-dirichlet}]
  This abstract lemma follows from the following straightforward calculation:
  \begin{align*}
    \+E_P(f)
    &= \frac{1}{2} \sum_{X, Y \in \Omega(\pi)} \pi(X) P(X, Y) (f(X) - f(Y))^2 \\
    &= \frac{1}{2} \E[S\sim \+D]{\sum_{X \in \Omega(\pi)} \sum_{Y \in \Omega(\pi^{X([n]\setminus S)})} \pi(X) \pi^{X([n]\setminus S)}(Y) (f(X) - f(Y))^2} \\
    &= \frac{1}{2} \E[S\sim \+D]{\sum_{Z\in \Omega(\pi_{[n]\setminus S})} \pi_{[n]\setminus S}(Z) \sum_{X \in \Omega(\pi)} \sum_{Y \in \Omega(\pi^{X([n]\setminus S)})} \pi^{Z([n]\setminus S)}(X) \pi^{Z([n]\setminus S)}(Y) (f(X) - f(Y))^2} \\
    &= \frac{1}{2} \E[S\sim \+D]{\E[Z \sim \pi]{\sum_{X, Y \in \Omega(\pi^{Z([n]\setminus S)})} \pi^{Z([n]\setminus S)}(X) \pi^{Z([n]\setminus S)}(Y) (f(X) - f(Y))^2}} \\
    &= \E[S\sim \+D]{\E[Z \sim \pi]{\Var{f(Z) \mid Z_{[n]\setminus S}}}} \\
    &= \E[S\sim \+D]{\pi[\Var[S]{f}]} \qedhere.
  \end{align*}
\end{proof}

Now, consider the example of $\PB$. 
We apply \Cref{lem:re-dirichlet} to the chain $\PB$, with $U = L \cup R$, $\pi=\mu$ being the hardcore distribution,  and $\+D$ being the distribution that generates $v\cup R$ for a uniformly random $v \in L$.
Then the Poincar{\'e} inequality \eqref{eq:Poincare-inequality} for $\PB$ gives
\begin{align} \label{eq:var-PB}
  \Var[\mu]{f} &\le  \sgap{\PB}^{-1} \+E_P(f)= \sgap{\PB}^{-1} \abs{L}^{-1}  \cdot \sum_{v \in L} \mu[\Var[v \cup R]{f}],
\end{align}
where we write $v\cup R=\{v\}\cup R$ for convenience.

Since $G=((L, R), E)$ is a bipartite graph, the induced subgraph $G[v\cup R]$ is almost an empty graph with isolated vertices, 
with the only exceptions being $v$ and its neighbors.
Hence, the factorization of variance on product distribution can apply (see~\cite{cesi2001quasi, caputo2015approximate, chen2021optimal}).

\begin{lemma}[\text{\cite{caputo2015approximate, chen2021optimal}}] \label{lem:var-fac-prod}
  For every subset $S \subseteq U$, every boundary condition $\tau \in \Omega(\mu_{V\setminus S})$, and every function $f: \Omega(\mu) \to \^R$, we have
  \begin{align*}
    \Var[\mu^\tau]{f} \leq \sum_{M \in \+C(S)} \mu^\tau[\Var[M]{f}],
  \end{align*}
  where $\+C(S)$ is the family of all the connected components in $G[S]$.
\end{lemma}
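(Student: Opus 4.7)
The plan is to exploit the fact that, conditioned on the configuration outside $S$, the hardcore distribution restricted to $S$ factorizes into a product of independent distributions over the connected components of $G[S]$, and then apply the classical tensorization inequality for the variance of product measures.

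More concretely, the first step is to fix the boundary condition $\tau \in \Omega(\mu_{V \setminus S})$ and observe that the hardcore distribution $\mu$ is a Markov random field on $G$: the constraints that define $\Omega(\mu)$ are all edge-local (no two adjacent vertices can both be $+1$), and the vertex weights factor as a product over vertices. Hence, given $\tau$, each connected component $M \in \+C(S)$ has no edges to any other component $M' \in \+C(S)$ inside $G[S]$, and any edge from $M$ to $V \setminus S$ has its outside endpoint pinned by $\tau$. Writing $\mu^\tau_S$ for the marginal of $\mu^\tau$ on $S$, this means
\[
  \mu^\tau_S \;=\; \bigotimes_{M \in \+C(S)} \mu^\tau_M,
\]
where each $\mu^\tau_M$ is supported on the legal hardcore configurations on $M$ compatible with $\tau$ (with appropriately modified local fugacities on vertices adjacent to $\tau^{-1}(+1)$).

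The second step is a standard tensorization of variance for product measures (see, e.g., \cite{cesi2001quasi, caputo2015approximate}). Enumerating $\+C(S) = \{M_1, \dots, M_k\}$ and applying the one-step conditional variance decomposition
\[
  \Var_{\nu_1 \otimes \nu_2}[g] \;\le\; \E_{\nu_1}\!\left[\Var_{\nu_2}[g]\right] + \Var_{\nu_1}\!\left[\E_{\nu_2}[g]\right]
\]
iteratively across the factors $\mu^\tau_{M_1}, \dots, \mu^\tau_{M_k}$, and using the elementary fact that $\Var_{\nu}[\E_{\nu'}[g]] \le \E_{\nu}[\Var_{\nu'}[g]]$ collapses the second term at each step when we re-aggregate, one obtains the tensorization
\[
  \Var_{\mu^\tau_S}[f] \;\le\; \sum_{M \in \+C(S)} \E_{\mu^\tau_S}\!\left[\Var_{\mu^\tau_M}[f]\right].
\]
Here I am implicitly viewing $f$, which is a function of the entire configuration on $U$, as a function on $S$ alone with the coordinates on $V \setminus S$ fixed to $\tau$; this is precisely the object whose variance under $\mu^\tau$ equals $\Var_{\mu^\tau_S}[f]$.

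The third step is cosmetic: unpacking the notation $\mu^\tau[\Var[M]{f}]$ introduced just before the lemma gives
\[
  \mu^\tau[\Var[M]{f}] \;=\; \E_{\rho \sim \mu^\tau}\!\left[\Var_{\mu^{\rho_{U \setminus M}}}[f]\right],
\]
which, under the factorization above, matches $\E_{\mu^\tau_S}[\Var_{\mu^\tau_M}[f]]$ after averaging out the coordinates outside $M$ that are not pinned by $\tau$. Summing over $M \in \+C(S)$ yields the claimed bound. I expect the only nontrivial point to be a careful verification that conditioning on $\tau$ really does decouple the components of $G[S]$ in the hardcore model; once this product-measure structure is in place, the rest is standard and the inequality is tight up to the boundary effects captured by the absorbing conditional expectations.
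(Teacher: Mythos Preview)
The paper does not prove this lemma; it is quoted as a known result from \cite{caputo2015approximate, chen2021optimal} (with \cite{cesi2001quasi} also mentioned in the surrounding text) and used as a black box. Your proposal follows exactly the standard argument behind those references: observe that the hardcore Gibbs measure is a Markov random field, so conditioning on $\tau$ makes $\mu^\tau_S$ a product over the connected components of $G[S]$, and then invoke the classical tensorization of variance for product measures.

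One small slip: the ``elementary fact'' you state, $\Var_{\nu}[\E_{\nu'}[g]] \le \E_{\nu}[\Var_{\nu'}[g]]$, is false as written (take $g$ depending only on the $\nu$-coordinate). The correct contraction step in the inductive/telescoping proof is $\Var_{\nu_1}[\E_{\nu_2}[g]] \le \E_{\nu_2}[\Var_{\nu_1}[g]]$, which follows because $\E_{\nu_2}[g] - \E_{\nu_1 \otimes \nu_2}[g] = \E_{\nu_2}[g - \E_{\nu_1}[g]]$ and conditional expectation is an $L^2$-contraction. With this index swap fixed, your argument is complete and matches the intended proof.
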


For any $v \in L$ and $\tau \in \Omega(\mu_{L\setminus \{v\}})$, by \Cref{lem:var-fac-prod}, we have
\begin{align} \label{eq:var-ind}
  \Var[\mu^\tau]{f}
  &\leq \mu^\tau[\Var[v\cup \Gamma_v]{f}] + \sum_{u \in R\setminus \Gamma_v} \mu^\tau[\Var[u]{f}].
\end{align}

Due to this factorization, in our instance $G[v\cup R]$,
it is sufficient to bound the $\Var[\mu^\sigma]{f}$ for $v \in L$ and $\sigma \in \Omega(\mu_{U\setminus (v\cup \Gamma_v)})$.
Note that $\mu^\sigma_{v\cup \Gamma_v}$ is a hardcore model on a star graph of at most $\Delta + 1$ vertices with maximum degree $\Delta$,
whose spectral gap was already known, assuming $\lambda\le (1-\delta)\lambda_c(\Delta)$.

\begin{lemma}[\text{\cite{chen2021rapid}}] \label{lem:sgap-star}
Let $v \in L$ and $\sigma \in \Omega(\mu_{U\setminus (v\cup \Gamma_v)})$.
  Let $P$ be the Glauber dynamics on $\mu^\sigma_{v \cup \Gamma_v}$.
  If $\lambda\le (1-\delta)\lambda_c(\Delta)$, then $\sgap{P}\ge\zeta/(\Delta + 1)$, where $\zeta$ is specified in \eqref{eq:constant-zeta}.
%  \begin{itemize}
%  \item when $\Delta \geq 3$, then $\sgap{P} \geq 50^{-300/\delta} \cdot (3\Delta)^{-1}$;
%  \item when $\Delta = 2$, then $\sgap{P} \geq (4^7 (1+\lambda)^6)^{-1}$.
%  \end{itemize}
\end{lemma}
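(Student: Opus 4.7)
The plan is to recognize that $\mu^\sigma_{v\cup\Gamma_v}$ is the hardcore Gibbs measure on the induced subgraph $G[v\cup\Gamma_v]$, which is a star: $v$ sits at the centre with fugacity $\lambda$, and every $u\in\Gamma_v$ is a leaf whose effective fugacity is either $\lambda$ (when $\sigma_{u'}=-1$ for all $u'\in\Gamma_u\setminus\{v\}$) or $0$ (when some such $u'$ is occupied). Collecting the active leaves, we reduce to a hardcore model on a star with $k\le\Delta$ active leaves of fugacity $\lambda$, plus some trivial isolated (pinned-to-$-1$) coordinates that do not affect the spectral gap. After this reduction, everything reduces to bounding the Glauber spectral gap on such a star uniformly in $k$.

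\medskip
\noindent\textbf{The case $\Delta\ge 3$.} Here the plan is to import the bound from~\cite{chen2021rapid}. Since $\lambda\le(1-\delta)\lambda_c(\Delta)$, \Cref{thm:delta-unique-eq} gives that $(\lambda,d)$ is $\delta/10$-unique, which by \Cref{thm:SI} delivers $O(1/\delta)$-spectral independence of the hardcore distribution on the star after projecting to one side. The hardcore measure on the star is itself a bounded-degree anti-ferromagnetic two-spin system satisfying the same $\delta$-uniqueness condition, so it falls inside the regime of~\cite{chen2021rapid} where the field dynamics comparison (\Cref{lem:var-boost}) combined with the subcritical Glauber gap yields a spectral gap of the single-site Glauber dynamics of the form $c(\delta)/|v\cup\Gamma_v|$ for the explicit constant $c(\delta)=50^{-400/\delta}$. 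Dividing by $|v\cup\Gamma_v|\le\Delta+1$ gives the desired $\zeta/(\Delta+1)$ lower bound. The work here is purely bookkeeping of constants through~\cite{chen2021rapid}, since the star graph trivially has maximum degree $\Delta$ and satisfies every hypothesis needed there.

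\medskip
\noindent\textbf{The case $\Delta=2$.} The star has at most three vertices, so the configuration space has at most $5$ independent sets and the Glauber transition matrix is a constant-sized explicit matrix whose entries are rational functions of $\lambda$. The plan is a direct calculation: parametrize the three cases $k\in\{0,1,2\}$ of active leaves, write down the $3\times 3$ (or smaller) Glauber transition matrix, and compute the spectral gap by hand, factoring out the $1/(\Delta+1)=1/3$ coming from the uniform choice of update vertex. The worst case over $k\in\{0,1,2\}$ gives a rational function in $\lambda$ whose denominator behaves like $(1+\lambda)^{O(1)}$; bounding this crudely yields $\zeta=(9\cdot 4^7(1+\lambda)^8)^{-1}$. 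Note that no uniqueness condition is needed here since $\lambda_c(2)=\infty$; the explicit dependence on $\lambda$ in $\zeta$ precisely reflects the fact that we cannot absorb it into a universal constant.

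\medskip
\noindent\textbf{Main obstacle.} The substantive point is the $\Delta\ge 3$ case: one must verify that the chain of implications ``$\delta$-uniqueness $\Rightarrow$ spectral independence on the star $\Rightarrow$ spectral gap of the star Glauber'' as developed in~\cite{chen2021rapid} produces a constant independent of the number of active leaves and not just depending on $\Delta$. Because the star has $\Delta+1$ vertices, this is a small-graph statement and the uniqueness-based gap is clean; the only care required is tracking the absolute constant $50^{-400/\delta}$ through the field-dynamics-to-Glauber comparison, which is exactly the same comparison already carried out in \Cref{sec:sgap-nu} but applied to the star instance with all fugacities equal to $\lambda$.
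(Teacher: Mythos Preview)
Your $\Delta\ge 3$ case is essentially the paper's approach---both simply invoke \cite[Theorem 1.3]{chen2021rapid} for the Glauber spectral gap of the hardcore model in the $\delta$-uniqueness regime---though your detour through \Cref{thm:delta-unique-eq} and \Cref{thm:SI} is unnecessary and slightly misdirected: those results concern the bipartite model projected to one side, whereas here you need the ordinary hardcore measure on the star (all vertices), which is already covered by the general-graph hypothesis $\lambda\le(1-\delta)\lambda_c(\Delta)$ that \cite{chen2021rapid} assumes.

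For $\Delta=2$ the paper takes a different route from your direct eigenvalue computation. It splits into two sub-cases: when $\lambda\le 1/(2\Delta)=1/4$, a step-wise contractive path coupling gives $\sgap{P}\ge 1/10$; when $\lambda\ge 1/4$, Cheeger's inequality is applied, bounding the conductance from below via $(\mu^\sigma_{v\cup\Gamma_v})_{\min}$ and the minimum transition probability $\tfrac{1}{3}\min\{\tfrac{1}{1+\lambda},\tfrac{\lambda}{1+\lambda}\}$. The specific constant $(9\cdot 4^7(1+\lambda)^8)^{-1}$ is exactly the output of this Cheeger computation, not of an eigenvalue calculation. Your approach would work in principle and might even yield a sharper constant, but you have not actually carried out the computation, and the advertised $\zeta$ is tied to the Cheeger bound rather than to a spectral one.
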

\begin{proof}
Note that $P$ is the Glauber dynamics for a hardcore model on a star graph with $\Delta+1$ vertices.
  The lower bound of the spectral gap $\gamma(P)$ for the $\Delta \geq 3$ case follows directly from \cite[Theorem 1.3]{chen2021rapid}.
  It remains to calculate the $\Delta = 2$ case.
  When $\lambda \leq 1/(2\Delta) = 1/4$, there is a coupling for $P$ that decays step-wise, which implies that $\sgap{P} \geq 1/10$ (see \cite{chen1998trilogy} and \cite[Chapter 13.1]{levin2017markov}).
  When $\lambda \geq 1/4$, by the Cheeger's inequality (see \cite[Theorem 13.10]{levin2017markov}), it holds that
  \begin{align*}
    \sgap{P} \geq \frac{\Phi^2_\star}{2} \geq \frac{1}{2} \tp{\frac{(\mu^\sigma_{v\cup \Gamma_v})_{\min} \cdot \frac{1}{3} \min\{\frac{1}{1+\lambda}, \frac{\lambda}{1 + \lambda}\}}{1/2}}^{2} \geq \frac{1}{9 \cdot 4^7(1 + \lambda)^8},
  \end{align*}
  where $\Phi_\star$ is the conductance (a.k.a. the bottleneck ratio) and we use the fact that $\lambda \geq 1/4$ and $\abs{v \cup \Gamma_v} \leq 3$.
\end{proof}

By \Cref{lem:sgap-star} and \Cref{lem:re-dirichlet},  the Poincar{\'e} inequality for the Glauber dynamics on $\mu^\sigma_{v \cup \Gamma_v}$ can be expressed as follows. For any $g: \Omega(\mu^\sigma_{v \cup \Gamma_v}) \to \^R$, 
%let $\kappa = \kappa(\Delta, \delta, \lambda)$ be the spectral gap bound in \Cref{lem:sgap-star}, we have
\begin{align} \label{eq:var-star-pre}
  \Var[\mu^\sigma_{v\cup \Gamma_v}]{g} \leq \gamma(P)^{-1}(\Delta+1)^{-1} \cdot \sum_{u \in v\cup \Gamma_v} \mu^\sigma_{v\cup \Gamma_v}[\Var[u]{g}]\le \zeta^{-1}\cdot\sum_{u \in v\cup \Gamma_v} \mu^\sigma_{v\cup \Gamma_v}[\Var[u]{g}],
\end{align}
which implies that for any $f: \Omega(\mu) \to \^R$,
\begin{align} \label{eq:var-star}
  \Var[\mu^\sigma]{f} \leq \zeta^{-1} \cdot \sum_{u \in v\cup \Gamma_v} \mu^\sigma[\Var[u]{f}].
\end{align}
This is because in \eqref{eq:var-star}, only the values of $f(X)$ with $X(U\setminus(v\cup \Gamma_v)) = \sigma$ are used.
Hence, for every $f: \Omega(\mu) \to \^R$ in \eqref{eq:var-star}, we can define $g: \Omega(\mu^\sigma_{v\cup \Gamma_v}) \to \^R$ in \eqref{eq:var-star-pre} as $g(\tau) := f(\tau \uplus \sigma)$ for all $\tau \in \Omega(\mu^\sigma_{v\cup\Gamma_v})$,
%\begin{align*}
%  \forall \tau \in \Omega(\mu^\sigma_{v\cup\Gamma_v}), \quad g(\tau) := f(\tau \uplus \sigma).
%\end{align*}
through which \eqref{eq:var-star} and \eqref{eq:var-star-pre} become the same.
%since in \eqref{eq:var-star}, the inequality only use $f(X)$ with $X(U\setminus(v\cup \Gamma_v)) = \sigma$, which could be seen as the function $g $.

By \eqref{eq:var-ind} and \eqref{eq:var-star},  for every $v \in L$ and $\tau \in \Omega(\mu_{U\setminus(v \cup \Gamma_v)})$, we have
\begin{align}
  \Var[\mu^\tau]{f}
  \nonumber &\leq \zeta^{-1}  \sum_{w\in v\cup \Gamma_v} \mu^\tau[\Var[w]{f}] + \sum_{u \in R\setminus \Gamma_v} \mu^\tau[\Var[u]{f}] \\
  \label{eq:var-block} &\leq \zeta^{-1} \sum_{u \in v \cup R} \mu^\tau[\Var[u]{f}],
\end{align}
where we use the fact that $\zeta^{-1} \geq 1$.

Finally, combining \eqref{eq:var-PB} and \eqref{eq:var-block}, we have
\begin{align*}
  \Var[\mu]{f}
  &\leq \sgap{\PB}^{-1}\abs{L}^{-1} \cdot \zeta^{-1} \cdot \sum_{v \in L} \sum_{u \in v \cup R} \mu[\Var[u]{f}] \\
  &= \sgap{\PB}^{-1}\abs{L}^{-1} \cdot \zeta^{-1} \cdot \tp{\sum_{v \in L} \mu[\Var[v]{f}] + \sum_{u \in R} \abs{L} \cdot \mu[\Var[u]{f}]} \\
  &\leq  \sgap{\PB}^{-1} \cdot \zeta^{-1}\sum_{v \in L\cup R} \mu[\Var[v]{f}],
\end{align*}
where in the last inequality, we use the fact that $\deg_G(u) \leq \abs{L}$ for all $u\in R$.
Finally, by applying \Cref{lem:re-dirichlet} to the chain $\GD[\mu]$, we have 
%=======
%  &\leq \sgap{\PB}^{-1}\abs{L}^{-1} \cdot \kappa^{-1}\Delta^{-1} \cdot \sum_{v \in L} \sum_{u \in v \cup R} \mu[\Var[u]{f}] \\
%  &= \sgap{\PB}^{-1}\abs{L}^{-1} \cdot \kappa^{-1}\Delta^{-1} \cdot \tp{\sum_{v \in L} \mu[\Var[v]{f}] + \sum_{u \in R} \abs{L} \cdot \mu[\Var[u]{f}]} \\
%  &\leq  \sgap{\PB}^{-1} \cdot \kappa^{-1}\Delta^{1}\sum_{v \in L\cup R} \mu[\Var[v]{f}].
%\end{align*}
%%where in the last inequality, we use the fact that for $u\in R$, $\deg_G(u) \leq \abs{L}$.
%Finally, by \Cref{lem:re-dirichlet} and \Cref{lem:sgap-PB}.
%>>>>>>> ad97cd9ec08e58e702aea40d9bde31f4754e6e62
\begin{align*}
  \sgap{\GD[\mu]} &\geq \sgap{\PB} \cdot \zeta \cdot \abs{L\cup R}^{-1}.
\end{align*}
This proves \Cref{lem:sgap-PB-GD}.

\bibliographystyle{alpha}
\bibliography{refs}

\newcommand{\etalchar}[1]{$^{#1}$}
\begin{thebibliography}{CGG{\etalchar{+}}21}

\bibitem[AASV21]{alimohammadi2021fractionally}
Yeganeh Alimohammadi, Nima Anari, Kirankumar Shiragur, and Thuy{-}Duong Vuong.
\newblock Fractionally log-concave and sector-stable polynomials: counting
  planar matchings and more.
\newblock In {\em {STOC}}, pages 433--446. {ACM}, 2021.

\bibitem[AJK{\etalchar{+}}22]{anari2022entropic}
Nima Anari, Vishesh Jain, Frederic Koehler, Huy~Tuan Pham, and Thuy{-}Duong
  Vuong.
\newblock Entropic independence: optimal mixing of down-up random walks.
\newblock In {\em {STOC}}, pages 1418--1430. {ACM}, 2022.

\bibitem[AL20]{alev2020improved}
Vedat~Levi Alev and Lap~Chi Lau.
\newblock Improved analysis of higher order random walks and applications.
\newblock In {\em {STOC}}, pages 1198--1211. {ACM}, 2020.

\bibitem[ALO20]{anari2020spectral}
Nima Anari, Kuikui Liu, and Shayan {Oveis Gharan}.
\newblock Spectral independence in high-dimensional expanders and applications
  to the hardcore model.
\newblock In {\em {FOCS}}, pages 1319--1330. {IEEE}, 2020.

\bibitem[ALO21]{abdolazimi2021matrix}
Dorna Abdolazimi, Kuikui Liu, and Shayan {Oveis Gharan}.
\newblock A matrix trickle-down theorem on simplicial complexes and
  applications to sampling colorings.
\newblock In {\em FOCS}, pages 161--172. {IEEE}, 2021.

\bibitem[ALOV19]{nima2019log}
Nima Anari, Kuikui Liu, Shayan {Oveis Gharan}, and Cynthia Vinzant.
\newblock Log-concave polynomials {II:} high-dimensional walks and an {FPRAS}
  for counting bases of a matroid.
\newblock In {\em {STOC}}, pages 1--12. {ACM}, 2019.

\bibitem[AO22]{abdolazimi2022improved}
Dorna Abdolazimi and Shayan {Oveis Gharan}.
\newblock An improved trickle-down theorem for partite complexes.
\newblock {\em CoRR}, abs/2208.04486, 2022.

\bibitem[Bar16]{barvinok2016combinatorics}
Alexander Barvinok.
\newblock {\em Combinatorics and complexity of partition functions}, volume~30
  of {\em Algorithms and Combinatorics}.
\newblock Springer, Cham, 2016.

\bibitem[BC18]{bencs2018note}
Ferenc Bencs and Péter Csikvári.
\newblock Note on the zero-free region of the hard-core model.
\newblock {\em CoRR}, abs/1807.08963, 2018.

\bibitem[BCC{\etalchar{+}}22]{blanca2022mixing}
Antonio Blanca, Pietro Caputo, Zongchen Chen, Daniel Parisi, Daniel
  Stefankovic, and Eric Vigoda.
\newblock On mixing of markov chains: Coupling, spectral independence, and
  entropy factorization.
\newblock In {\em SODA}, pages 3670--3692. {SIAM}, 2022.

\bibitem[BCP22]{blanca2022fast}
Antonio Blanca, Sarah Cannon, and Will Perkins.
\newblock Fast and perfect sampling of subgraphs and polymer systems.
\newblock In {\em APPROX/RANDOM}, volume 245 of {\em LIPIcs}, pages 4:1--4:18.
  Schloss Dagstuhl - Leibniz-Zentrum f{\"{u}}r Informatik, 2022.

\bibitem[BD97]{bubley1997path}
Russ Bubley and Martin Dyer.
\newblock Path coupling: A technique for proving rapid mixing in markov chains.
\newblock In {\em FOCS}, pages 223--231, 1997.

\bibitem[BDG{\etalchar{+}}13]{bulatov2013expressibility}
Andrei~A. Bulatov, Martin Dyer, Leslie~Ann Goldberg, Mark Jerrum, and Colin
  McQuillan.
\newblock The expressibility of functions on the {B}oolean domain, with
  applications to counting {CSP}s.
\newblock {\em J. ACM}, 60(5):Art. 32, 36, 2013.

\bibitem[BFPS11]{bissacot2011improvement}
Rodrigo Bissacot, Roberto Fern\'{a}ndez, Aldo Procacci, and Benedetto Scoppola.
\newblock An improvement of the {L}ov\'{a}sz local lemma via cluster expansion.
\newblock {\em Combin. Probab. Comput.}, 20(5):709--719, 2011.

\bibitem[BGG{\v S}20]{bezakova2020inapproximability}
Ivona Bez\'{a}kov\'{a}, Andreas Galanis, Leslie~Ann Goldberg, and Daniel {\v
  S}tefankovi\v{c}.
\newblock Inapproximability of the independent set polynomial in the complex
  plane.
\newblock {\em SIAM J. Comput.}, 49(5):STOC18--395--STOC18--448, 2020.

\bibitem[CE22]{chen2022localization}
Yuansi Chen and Ronen Eldan.
\newblock Localization schemes: {A} framework for proving mixing bounds for
  markov chains (extended abstract).
\newblock In {\em {FOCS}}, pages 110--122. {IEEE}, 2022.
\newblock (arXiv:2203.04163 v2).

\bibitem[Ces01]{cesi2001quasi}
Filippo Cesi.
\newblock Quasi-factorization of the entropy and logarithmic {S}obolev
  inequalities for {G}ibbs random fields.
\newblock {\em Probab. Theory Related Fields}, 120(4):569--584, 2001.

\bibitem[CFYZ21]{chen2021rapid}
Xiaoyu Chen, Weiming Feng, Yitong Yin, and Xinyuan Zhang.
\newblock Rapid mixing of glauber dynamics via spectral independence for all
  degrees.
\newblock In {\em {FOCS}}, pages 137--148. {IEEE}, 2021.
\newblock (arXiv:2105.15005 v3).

\bibitem[CFYZ22]{chen2022optimal}
Xiaoyu Chen, Weiming Feng, Yitong Yin, and Xinyuan Zhang.
\newblock Optimal mixing for two-state anti-ferromagnetic spin systems.
\newblock In {\em {FOCS}}, pages 588--599. {IEEE}, 2022.

\bibitem[CGG{\etalchar{+}}16]{cai2016bis}
Jin-Yi Cai, Andreas Galanis, Leslie~Ann Goldberg, Heng Guo, Mark Jerrum, Daniel
  \v{S}tefankovi\v{c}, and Eric Vigoda.
\newblock \#{BIS}-hardness for 2-spin systems on bipartite bounded degree
  graphs in the tree non-uniqueness region.
\newblock {\em J. Comput. System Sci.}, 82(5):690--711, 2016.

\bibitem[CGG{\etalchar{+}}21]{chen2021fast}
Zongchen Chen, Andreas Galanis, Leslie~A. Goldberg, Will Perkins, James
  Stewart, and Eric Vigoda.
\newblock Fast algorithms at low temperatures via {M}arkov chains.
\newblock {\em Random Structures Algorithms}, 58(2):294--321, 2021.

\bibitem[CGM12]{chebolu2012complexity}
Prasad Chebolu, Leslie~Ann Goldberg, and Russell Martin.
\newblock The complexity of approximately counting stable roommate assignments.
\newblock {\em J. Comput. System Sci.}, 78(5):1579--1605, 2012.

\bibitem[CGM19]{cryan2019modified}
Mary Cryan, Heng Guo, and Giorgos Mousa.
\newblock Modified log-sobolev inequalities for strongly log-concave
  distributions.
\newblock In {\em {FOCS}}, pages 1358--1370. {IEEE}, 2019.

\bibitem[CGSV21]{chen2021coloring}
Zongchen Chen, Andreas Galanis, Daniel Stefankovic, and Eric Vigoda.
\newblock Rapid mixing for colorings via spectral independence.
\newblock In {\em SODA}, pages 1548--1557. {SIAM}, 2021.

\bibitem[CGSV22]{chen2022sampling}
Zongchen Chen, Andreas Galanis, Daniel Stefankovic, and Eric Vigoda.
\newblock Sampling colorings and independent sets of random regular bipartite
  graphs in the non-uniqueness region.
\newblock In {\em {SODA}}, pages 2198--2207. {SIAM}, 2022.

\bibitem[Che98]{chen1998trilogy}
Mu-Fa Chen.
\newblock Trilogy of couplings and general formulas for lower bound of spectral
  gap.
\newblock In {\em Probability towards 2000 ({N}ew {Y}ork, 1995)}, volume 128 of
  {\em Lect. Notes Stat.}, pages 123--136. Springer, New York, 1998.

\bibitem[CLV20]{chen2020rapid}
Zongchen Chen, Kuikui Liu, and Eric Vigoda.
\newblock Rapid mixing of glauber dynamics up to uniqueness via contraction.
\newblock In {\em {FOCS}}, pages 1307--1318. {IEEE}, 2020.

\bibitem[CLV21a]{chen2021optimal}
Zongchen Chen, Kuikui Liu, and Eric Vigoda.
\newblock Optimal mixing of glauber dynamics: entropy factorization via
  high-dimensional expansion.
\newblock In {\em {STOC}}, pages 1537--1550. {ACM}, 2021.

\bibitem[CLV21b]{chen2021spectral}
Zongchen Chen, Kuikui Liu, and Eric Vigoda.
\newblock Spectral independence via stability and applications to holant-type
  problems.
\newblock In {\em FOCS}, pages 149--160. {IEEE}, 2021.

\bibitem[CMT15]{caputo2015approximate}
Pietro Caputo, Georg Menz, and Prasad Tetali.
\newblock Approximate tensorization of entropy at high temperature.
\newblock {\em Ann. Fac. Sci. Toulouse Math. (6)}, 24(4):691--716, 2015.

\bibitem[CP20]{cannon2020counting}
Sarah Cannon and Will Perkins.
\newblock Counting independent sets in unbalanced bipartite graphs.
\newblock In {\em SODA}, pages 1456--1466. {SIAM}, 2020.

\bibitem[CP21]{caputo2021block}
Pietro Caputo and Daniel Parisi.
\newblock Block factorization of the relative entropy via spatial mixing.
\newblock {\em Comm. Math. Phys.}, 388(2):793--818, 2021.

\bibitem[CZ23]{chen2022near}
Xiaoyu Chen and Xinyuan Zhang.
\newblock A near-linear time sampler for the {Ising} model with external field.
\newblock In {\em {SODA}}, pages 4478--4503. {SIAM}, 2023.
\newblock (arXiv:2207.09391 v1).

\bibitem[DFJ02]{dyer2002counting}
Martin Dyer, Alan Frieze, and Mark Jerrum.
\newblock On counting independent sets in sparse graphs.
\newblock {\em SIAM J. Comput.}, 31(5):1527--1541, 2002.

\bibitem[DG00]{dyer2000markov}
Martin Dyer and Catherine Greenhill.
\newblock On {M}arkov chains for independent sets.
\newblock {\em J. Algorithms}, 35(1):17--49, 2000.

\bibitem[DGGJ04]{dyer2004relative}
Martin~E. Dyer, Leslie~Ann Goldberg, Catherine~S. Greenhill, and Mark Jerrum.
\newblock The relative complexity of approximate counting problems.
\newblock {\em Algorithmica}, 38(3):471--500, 2004.

\bibitem[DGJ10]{dyer2010approximation}
Martin Dyer, Leslie~Ann Goldberg, and Mark Jerrum.
\newblock An approximation trichotomy for {B}oolean {$\sharp{\rm CSP}$}.
\newblock {\em J. Comput. System Sci.}, 76(3-4):267--277, 2010.

\bibitem[DGJR12]{dyer2012complexity}
Martin Dyer, Leslie~Ann Goldberg, Markus Jalsenius, and David Richerby.
\newblock The complexity of approximating bounded-degree {B}oolean \#{CSP}.
\newblock {\em Inform. and Comput.}, 220/221:1--14, 2012.

\bibitem[DGU14]{dyer2014structure}
Martin Dyer, Catherine Greenhill, and Mario Ullrich.
\newblock Structure and eigenvalues of heat-bath {M}arkov chains.
\newblock {\em Linear Algebra Appl.}, 454:57--71, 2014.

\bibitem[Eft22]{efthymiou2022spectral}
Charilaos Efthymiou.
\newblock Spectral independence beyond uniqueness using the topological method.
\newblock {\em CoRR}, abs/2211.03753, 2022.

\bibitem[EH{\v S}{\etalchar{+}}19]{efthymiou2019convergence}
Charilaos Efthymiou, Thomas~P. Hayes, Daniel {\v S}tefankovi{\v c}, Eric
  Vigoda, and Yitong Yin.
\newblock Convergence of {MCMC} and loopy {BP} in the tree uniqueness region
  for the hard-core model.
\newblock {\em SIAM J. Comput.}, 48(2):581--643, 2019.

\bibitem[FGKP23]{friedrich2023polymer}
Tobias Friedrich, Andreas G{\"{o}}bel, Martin~S. Krejca, and Marcus Pappik.
\newblock Polymer dynamics via cliques: New conditions for approximations.
\newblock {\em Theor. Comput. Sci.}, 942:230--252, 2023.

\bibitem[FGYZ22]{feng2022rapid}
Weiming Feng, Heng Guo, Yitong Yin, and Chihao Zhang.
\newblock Rapid mixing from spectral independence beyond the {B}oolean domain.
\newblock {\em ACM Trans. Algorithms}, 18(3):Art. 28, 32, 2022.

\bibitem[GG{\v S}{\etalchar{+}}14]{galanis2014improved}
Andreas Galanis, Qi~Ge, Daniel {\v S}tefankovi{\v c}, Eric Vigoda, and Linji
  Yang.
\newblock Improved inapproximability results for counting independent sets in
  the hard-core model.
\newblock {\em Random Structures Algorithms}, 45(1):78--110, 2014.

\bibitem[GGY21]{galanis2021approximating}
Andreas Galanis, Leslie~Ann Goldberg, and Kuan Yang.
\newblock Approximating partition functions of bounded-degree boolean counting
  constraint satisfaction problems.
\newblock {\em J. Comput. Syst. Sci.}, 115:187--213, 2021.

\bibitem[GJ07]{goldberg2007complexity}
Leslie~Ann Goldberg and Mark Jerrum.
\newblock The complexity of ferromagnetic {I}sing with local fields.
\newblock {\em Combin. Probab. Comput.}, 16(1):43--61, 2007.

\bibitem[GJ12a]{goldberg2012approximating}
Leslie~Ann Goldberg and Mark Jerrum.
\newblock Approximating the partition function of the ferromagnetic {P}otts
  model.
\newblock {\em J. ACM}, 59(5):Art. 25, 31, 2012.

\bibitem[GJ12b]{goldberg2012counterexample}
Leslie~Ann Goldberg and Mark Jerrum.
\newblock A counterexample to rapid mixing of the {G}e-\v{S}tefankovi\v{c}
  process.
\newblock {\em Electron. Commun. Probab.}, 17:no. 5, 6, 2012.

\bibitem[GJ15]{goldberg2015complexity}
Leslie~Ann Goldberg and Mark Jerrum.
\newblock A complexity classification of spin systems with an external field.
\newblock {\em Proc. Natl. Acad. Sci. USA}, 112(43):13161--13166, 2015.

\bibitem[GJP03]{goldberg2003computational}
Leslie~Ann Goldberg, Mark Jerrum, and Mike Paterson.
\newblock The computational complexity of two-state spin systems.
\newblock {\em Random Structures Algorithms}, 23(2):133--154, 2003.

\bibitem[GL18]{guo2018uniqueness}
Heng Guo and Pinyan Lu.
\newblock Uniqueness, spatial mixing, and approximation for ferromagnetic
  2-spin systems.
\newblock {\em ACM Trans. Comput. Theory}, 10(4):Art. 17, 25, 2018.

\bibitem[GLL20]{guo2020zeros}
Heng Guo, Jingcheng Liu, and Pinyan Lu.
\newblock Zeros of ferromagnetic 2-spin systems.
\newblock In {\em SODA}, pages 181--192. {SIAM}, 2020.

\bibitem[G{\v S}12]{ge2012graph}
Q.~Ge and D.~{\v S}tefankovi\v{c}.
\newblock A graph polynomial for independent sets of bipartite graphs.
\newblock {\em Combin. Probab. Comput.}, 21(5):695--714, 2012.

\bibitem[G{\v S}V16]{galanis2016inapproximability}
Andreas Galanis, Daniel {\v S}tefankovi\v{c}, and Eric Vigoda.
\newblock Inapproximability of the partition function for the antiferromagnetic
  {I}sing and hard-core models.
\newblock {\em Combin. Probab. Comput.}, 25(4):500--559, 2016.

\bibitem[G{\v S}VY16]{galanis2016ferromagnetic}
Andreas Galanis, Daniel {\v S}tefankovi\v{c}, Eric Vigoda, and Linji Yang.
\newblock Ferromagnetic {P}otts model: refined \#{BIS}-hardness and related
  results.
\newblock {\em SIAM J. Comput.}, 45(6):2004--2065, 2016.

\bibitem[HPR20]{helmuth2020algorithmic}
Tyler Helmuth, Will Perkins, and Guus Regts.
\newblock Algorithmic {P}irogov-{S}inai theory.
\newblock {\em Probab. Theory Related Fields}, 176(3-4):851--895, 2020.

\bibitem[HV06]{hayes2006coupling}
Thomas~P. Hayes and Eric Vigoda.
\newblock Coupling with the stationary distribution and improved sampling for
  colorings and independent sets.
\newblock {\em Ann. Appl. Probab.}, 16(3):1297--1318, 2006.
\newblock (arXiv:math/0610188 v1).

\bibitem[HV20]{harvey2015algorithmic}
Nicholas J.~A. Harvey and Jan Vondr\'{a}k.
\newblock An algorithmic proof of the {L}ov\'{a}sz local lemma via resampling
  oracles.
\newblock {\em SIAM J. Comput.}, 49(2):394--428, 2020.

\bibitem[Jer03]{jerrum2003counting}
Mark Jerrum.
\newblock {\em Counting, sampling and integrating: algorithms and complexity}.
\newblock Lectures in Mathematics ETH Z\"{u}rich. Birkh\"{a}user Verlag, Basel,
  2003.
\newblock Second edition.

\bibitem[JKP20]{jenssen2020algorithms}
Matthew Jenssen, Peter Keevash, and Will Perkins.
\newblock Algorithms for \#{BIS}-hard problems on expander graphs.
\newblock {\em SIAM J. Comput.}, 49(4):681--710, 2020.

\bibitem[JPP22]{jessen2022approximately}
Matthew Jenssen, Aditya Potukuchi, and Will Perkins.
\newblock Approximately counting independent sets in bipartite graphs via graph
  containers.
\newblock In {\em {SODA}}, pages 499--516. {SIAM}, 2022.

\bibitem[JS93]{jerrum1993polynomial}
Mark Jerrum and Alistair Sinclair.
\newblock Polynomial-time approximation algorithms for the {I}sing model.
\newblock {\em SIAM J. Comput.}, 22(5):1087--1116, 1993.

\bibitem[Kel85]{kelly1985stochastic}
F.~P. Kelly.
\newblock Stochastic models of computer communication systems.
\newblock {\em J. Roy. Statist. Soc. Ser. B}, 47(3):379--395, 415--428, 1985.

\bibitem[KS11]{KS11}
Kashyap Babu~Rao Kolipaka and Mario Szegedy.
\newblock {M}oser and {T}ardos meet {L}ov{\'{a}}sz.
\newblock In {\em STOC}, pages 235--244, 2011.

\bibitem[Liu21]{liu2021coupling}
Kuikui Liu.
\newblock From coupling to spectral independence and blackbox comparison with
  the down-up walk.
\newblock In {\em APPROX/RANDOM}, volume 207 of {\em LIPIcs}, pages
  32:1--32:21. Schloss Dagstuhl - Leibniz-Zentrum f{\"{u}}r Informatik, 2021.

\bibitem[LL15]{liu2015fptas}
Jingcheng Liu and Pinyan Lu.
\newblock {FPTAS} for {{\#}BIS} with degree bounds on one side.
\newblock In {\em {STOC}}, pages 549--556. {ACM}, 2015.
\newblock (arXiv:1412.0073 v2).

\bibitem[LLLM19]{liao2019counting}
Chao Liao, Jiabao Lin, Pinyan Lu, and Zhenyu Mao.
\newblock Counting independent sets and colorings on random regular bipartite
  graphs.
\newblock In {\em {APPROX/RANDOM}}, volume 145 of {\em LIPIcs}, pages
  34:1--34:12. Schloss Dagstuhl - Leibniz-Zentrum f{\"{u}}r Informatik, 2019.

\bibitem[LLY12]{li2012approximate}
Liang Li, Pinyan Lu, and Yitong Yin.
\newblock Approximate counting via correlation decay in spin systems.
\newblock In {\em SODA}, pages 922--940. ACM, New York, 2012.

\bibitem[LLY13]{li2013correlation}
Liang Li, Pinyan Lu, and Yitong Yin.
\newblock Correlation decay up to uniqueness in spin systems.
\newblock In {\em {SODA}}, pages 67--84. {SIAM}, 2013.

\bibitem[LLZ14]{liu2014complexity}
Jingcheng Liu, Pinyan Lu, and Chihao Zhang.
\newblock The complexity of ferromagnetic two-spin systems with external
  fields.
\newblock In {\em {APPROX/RANDOM}}, volume~28 of {\em LIPIcs}, pages 843--856.
  Schloss Dagstuhl - Leibniz-Zentrum f{\"{u}}r Informatik, 2014.

\bibitem[LP17]{levin2017markov}
David~A. Levin and Yuval Peres.
\newblock {\em Markov chains and mixing times}.
\newblock American Mathematical Society, Providence, RI, 2017.

\bibitem[LS14]{lubetzky2014cutoff}
Eyal Lubetzky and Allan Sly.
\newblock Cutoff for general spin systems with arbitrary boundary conditions.
\newblock {\em Comm. Pure Appl. Math.}, 67(6):982--1027, 2014.

\bibitem[LSS19]{liu2019fisher}
Jingcheng Liu, Alistair Sinclair, and Piyush Srivastava.
\newblock Fisher zeros and correlation decay in the {I}sing model.
\newblock {\em J. Math. Phys.}, 60(10):103304, 12, 2019.

\bibitem[LV97]{luby1997approximately}
Michael Luby and Eric Vigoda.
\newblock Approximately counting up to four (extended abstract).
\newblock In {\em stoc}, pages 682--687. {ACM}, 1997.

\bibitem[MO94]{martinelli1994approach2}
F.~Martinelli and E.~Olivieri.
\newblock Approach to equilibrium of {G}lauber dynamics in the one phase
  region. {II}. {T}he general case.
\newblock {\em Comm. Math. Phys.}, 161(3):487--514, 1994.

\bibitem[MS13]{mossel2013exact}
Elchanan Mossel and Allan Sly.
\newblock Exact thresholds for {I}sing-{G}ibbs samplers on general graphs.
\newblock {\em Ann. Probab.}, 41(1):294--328, 2013.

\bibitem[MWW09]{mossel2009hardness}
Elchanan Mossel, Dror Weitz, and Nicholas Wormald.
\newblock On the hardness of sampling independent sets beyond the tree
  threshold.
\newblock {\em Probab. Theory Related Fields}, 143(3-4):401--439, 2009.

\bibitem[PR17]{patel2017deterministic}
Viresh Patel and Guus Regts.
\newblock Deterministic polynomial-time approximation algorithms for partition
  functions and graph polynomials.
\newblock {\em SIAM J. Comput.}, 46(6):1893--1919, 2017.

\bibitem[PR19]{peters2019conjecture}
Han Peters and Guus Regts.
\newblock On a conjecture of {S}okal concerning roots of the independence
  polynomial.
\newblock {\em Michigan Math. J.}, 68(1):33--55, 2019.

\bibitem[Rot81]{rothaus1981diffusion}
O.~S. Rothaus.
\newblock Diffusion on compact {R}iemannian manifolds and logarithmic {S}obolev
  inequalities.
\newblock {\em J. Functional Analysis}, 42(1):102--109, 1981.

\bibitem[R{\v S}V{\etalchar{+}}14]{restrepo2014phase}
Ricardo Restrepo, Daniel {\v S}tefankovi\v{c}, Juan~C. Vera, Eric Vigoda, and
  Linji Yang.
\newblock Phase transition for {G}lauber dynamics for independent sets on
  regular trees.
\newblock {\em SIAM J. Discrete Math.}, 28(2):835--861, 2014.

\bibitem[She85]{Shearer85}
J.~B. Shearer.
\newblock On a problem of {S}pencer.
\newblock {\em Combinatorica}, 5(3):241--245, 1985.

\bibitem[Sin93]{sinclair1993}
Alistair Sinclair.
\newblock {\em Algorithms for Random Generation and Counting: A Markov Chain
  Approach}.
\newblock Birkhauser Verlag, CHE, 1993.

\bibitem[Sly10]{sly2010computational}
Allan Sly.
\newblock Computational transition at the uniqueness threshold.
\newblock In {\em FOCS}, pages 287--296. {IEEE} Computer Society, 2010.

\bibitem[SS12]{sly2012computational}
Allan Sly and Nike Sun.
\newblock The computational hardness of counting in two-spin models on
  d-regular graphs.
\newblock In {\em {FOCS}}, pages 361--369. {IEEE}, 2012.

\bibitem[SS14]{sly2014counting}
Allan Sly and Nike Sun.
\newblock Counting in two-spin models on {$d$}-regular graphs.
\newblock {\em Ann. Probab.}, 42(6):2383--2416, 2014.

\bibitem[SS20]{shao2020contraction}
Shuai Shao and Yuxin Sun.
\newblock Contraction: {A} unified perspective of correlation decay and
  zero-freeness of 2-spin systems.
\newblock In {\em ICALP}, volume 168 of {\em LIPIcs}, pages 96:1--96:15.
  Schloss Dagstuhl - Leibniz-Zentrum f{\"{u}}r Informatik, 2020.

\bibitem[SST12]{sinclair2012approximation}
Alistair Sinclair, Piyush Srivastava, and Marc Thurley.
\newblock Approximation algorithms for two-state anti-ferromagnetic spin
  systems on bounded degree graphs.
\newblock In {\em {SODA}}, pages 941--953. {SIAM}, 2012.

\bibitem[Vig01]{vigoda2001note}
Eric Vigoda.
\newblock A note on the {G}lauber dynamics for sampling independent sets.
\newblock {\em Electron. J. Combin.}, 8(1):Research Paper 8, 8, 2001.

\bibitem[VV85]{valiant1985np}
Leslie~G Valiant and Vijay~V Vazirani.
\newblock Np is as easy as detecting unique solutions.
\newblock In {\em STOC}, pages 458--463, 1985.

\bibitem[Wei06]{weitz2006counting}
Dror Weitz.
\newblock Counting independent sets up to the tree threshold.
\newblock In {\em STOC}, pages 140--149. {ACM}, 2006.

\end{thebibliography}

\appendix

\section{The reduction to the Ising model for $\Delta = 2$ and an alternative sampler}
\label{sec:ising-case}
Let $G = ((L, R), E)$ be a bipartite graph with maximum degree $\Delta = 2$ on $L$, $n$ be the number of vertices in $L$ and $\lambda > 0$ be the fugacity.
We show that such a bipartite hardcore model can be reduced to an Ising model, with consistent local fields.
As a result, one can obtain alternative samplers for $\Delta=2$ from samplers for the Ising model.
%We want to sample from the Gibbs distribution $\mu$ for the hardcore model encoded by $G$ and $\lambda$.

Since $\Delta = 2$, we can build a new graph $H = (R, F)$, where 
\[
F := \{(u, v) \mid u, v \in R \text{ and } \abs{\Gamma_u \cap \Gamma_v} > 0\}.
\]
Let $\*\gamma = \*1_F$ and let $\*\beta \in \^R^F$ and $\*\rho \in \^R^R$ be two vectors defined as follows.
For $e = (u, v) \in F$, let $j_e := \abs{\Gamma_u \cap \Gamma_v}$ be the number of common neighbors of $u$ and $v$, and $\beta_e$ is defined as $\beta_e := (1 + \lambda)^{j_e}$.
For $u \in R$, let $k_u := \abs{\{v \in \Gamma_u \mid \Gamma_v = \{u\}\}}$, and $\rho_u$ is defined as $\rho_u := (1 + \lambda)^{k_u} / \lambda$.

Consider the two-spin system defined by $H$ and $\*\beta, \*\gamma, \*\lambda$:
\begin{align} \label{eq:wt-R}
  \forall \sigma \in \{-1, +1\}^R, \quad \-{wt}(\sigma) := \prod_{e \in m_{-}(\sigma)} \beta_e \prod_{e \in m_+(\sigma)} \gamma_e \prod_{v: \sigma_v = -1} \rho_v,
\end{align}
where we use $m_{\pm}(\sigma) := \{e = (u, v) \in F \mid \sigma_u = \sigma_v = \pm 1\}$ to denote the set of $\pm 1$-monochromatic edges.
We define a Gibbs distribution $\nu$ as $\nu(\sigma) = \-{wt}(\sigma) / Z$ where $Z = \sum_\sigma \-{wt}(\sigma)$ is the partition function.
We claim that $\nu$ is exactly the same distribution as $\mu_R$.
This can be verified by noticing that, once we fix a configuration $\sigma$ on $R$ in $\mu$, the spins on $L$ become independent and their contribution can be counted as in \eqref{eq:wt-R}, 
%where we normalize a factor of $\lambda^{\abs{R}}$.
up to a normalizing factor of $\lambda^{\abs{R}}$.

Now, we will use a standard holographic transformation~\cite{goldberg2003computational} to show that the two-spin system encoded by $H$, and $\*\beta, \*\gamma, \*\lambda$ is equivalent to a ferromagnetic Ising model on $H$.
Given an edge $f = (u, v) \in F$, it holds that $\forall \sigma \in \{-1, +1\}^R$,
\begin{align*}
  \-{wt}(\sigma)
  &= \beta_e^{\*1[\sigma_u = \sigma_v = -1]} \gamma_e^{\*1[\sigma_u = \sigma_v = +1]} \rho_u^{\*1[\sigma_u = -1]} \rho_v^{\*1[\sigma_v = -1]} \\
  &\hspace{1cm} \times \prod_{e \in m_{-}(\sigma) \setminus f} \beta_e \prod_{e \in m_+(\sigma) \setminus f} \gamma_e \prod_{\substack{w: \sigma_w = -1 \\ w \not\in f}} \rho_w \\
  &= \sqrt{\frac{\gamma_e}{\beta_e}} \times \tp{\sqrt{\beta_e\gamma_e}}^{\*1[\sigma_u = \sigma_v]} \times \tp{\rho_u \sqrt{\frac{\beta_e}{\gamma_e}}}^{\*1[\sigma_u = -1]} \times \tp{\rho_v \sqrt{\frac{\beta_e}{\gamma_e}}}^{\*1[\sigma_v = -1]} \\
  &\hspace{1cm} \times \prod_{e \in m_{-}(\sigma) \setminus f} \beta_e \prod_{e \in m_+(\sigma) \setminus f} \gamma_e \prod_{\substack{w: \sigma_w = -1 \\ w \not\in f}} \rho_w,
\end{align*}
where we note that the last equation can be verified by a brute force enumeration of the configuration of $\sigma_u, \sigma_v$.
If we keep doing this for all the edges $f \in F$ and normalize the factor $\prod_e \sqrt{\gamma_e / \beta_e}$, we will finally get an Ising model with
\begin{align*}
  \forall \sigma \in \{-1, +1\}^R, \quad \-{wt}(\sigma) = \prod_{e \in m(\sigma)} \beta_e^\star \prod_{v: \sigma_v = -1} \lambda_v^\star,
\end{align*}
where $m(\sigma) := \{e = (u, v) \in F \mid \sigma_u = \sigma_v \}$ denotes the set of monochromatic edges, and we define vectors $\*\beta^\star \in \^R^F$ and $\*\lambda^\star \in \^R^R$ as follows
\begin{align*}
  \forall e \in F,& \quad \beta_e^\star = \sqrt{\beta_e \gamma_e} = (1 + \lambda)^{j_e/2} \\
  \forall v \in R,& \quad \lambda_v^\star = \rho_v \cdot \prod_{e \ni v} \sqrt{\frac{\beta_e}{\gamma_e}} = (1 + \lambda)^{k_v + \sum_{e\ni v} j_e/2} / \lambda.
\end{align*}

To apply any existing sampler for the Ising model, we want consistent local fields, that is, $\forall v \in R, \lambda^\star_v \geq 1$.
However, this condition can fail for $v \in R$ with $k_v = 0$ and $\abs{\Gamma_v(H)} \leq 1$, where we use $\Gamma_v(H)$ to denote the set of neighbors of $v$ in $H$.
We handle these special vertices as follows:
\begin{itemize}
\item 
If $\abs{\Gamma_v(H)} = 0$, then $v$ is an isolated vertex and we can sample $v$ independently from other vertices in the Ising model.

\item 
If $\abs{\Gamma_v(H)} = 1$, suppose $v \in e$, then $\lambda^\star_v < 1$ only happens when $j_e = 1$.
Let $e = (u, v)$, in this case, we can remove $v$ from $H$ and modifies $\lambda^\star_u$ to $\lambda^\star_u \cdot (1 + 2\lambda)$.
In order to sample from the original Ising model, it is sufficient for us to sample from the new Ising model obtained by removing $v$, then sample $v$ according to the marginal distribution on $v$.
\end{itemize}
Finally, the problem is reduced to sampling from an Ising model encoded by parameters $\*\beta^\star$ and $\*\lambda^\star$ with $\min_e \beta^\star_v \geq (1 + \lambda)^{1/2}$ and $\min_v \lambda^\star_v \geq \frac{1 + \lambda}{\lambda}$.
According to the recent developments on Ising samplers~\cite{chen2022near}, there is a fast sampler for such a Ising model that runs in time
\begin{align*}
  n \cdot \tp{(1/\epsilon)^{1/\log n} \cdot \log n}^{O_\lambda(1)}.
\end{align*}

\section{Spectral independence via contraction}
\label{sec:a-c-SI}

We prove \Cref{lem:a-c-SI}, the spectral independence bound implied by contraction and boundedness conditions, that is suitable for the bipartite hardcore model. 
This is proved by following the same route as in \cite{chen2020rapid} for the spin models.
The reason that we have to go through their proof instead to applying their conclusions directly is because 
we need a different treatment of the boundedness at the root to deal with the unbounded degrees on one side.

Fix $\Lambda \subseteq L$ and $\sigma \in \Omega(\nu^\sigma)$.
Note that for all $i, j \in L$, it holds that
\begin{align*}
  \Psi_{\nu^\sigma}(i, j)
  &= \Pr[\nu^\sigma]{j \mid i} - \Pr[\nu^\sigma]{j \mid \overline{i}} 
   = \Pr[\mu^\sigma]{j \mid i} - \Pr[\mu^\sigma]{j \mid \overline{i}}.
\end{align*}
Therefore, in order to prove \Cref{lem:a-c-SI}, it is sufficient to fix an arbitrary $u \in L \setminus \Lambda$ with $\Pr[\nu^\sigma]{u} \in (0, 1)$ and bound
\begin{align} \label{eq:inf-target}
  \sum_{v \in L} \abs{\Psi_{\nu^\sigma}(u, v)} = \sum_{v \in L} \abs{\Psi_{\mu^\sigma}(u, v)} \leq \frac{c}{\beta}.
\end{align}
Note that $\mu^\sigma$ is just the hardcore distribution on a smaller graph $G'$ with the same fugacity.
Then by the self-reducibility of hardcore model, we can focus on the case where $\Lambda = \emptyset$ and $\mu^\sigma = \mu$.

%\paragraph{The SAW tree.}
\newcommand{\Tsaw}{T_{\-{SAW}}}
\newcommand{\Vsaw}{V_{\-{SAW}}}
\newcommand{\Esaw}{E_{\-{SAW}}}

  In \cite{chen2020rapid}, there is a general result for two-spin systems showing that bounding the total influence on a general bipartite graph can be reduced to bounding the total influence on a tree.
  We specialize that result to the bipartite graph hardcore model, and get the following result.
  \begin{lemma}[\text{\cite[Lemma 8]{chen2020rapid}}] \label{lem:saw-tree-inf}
	  Let $G = (L\cup R, E)$ be a connected bipartite graph with $\lambda$ being the fugacity on $L$, $\alpha$ being the fugacity on $R$, and a vertex $r \in L$.
    There is a self avoiding walk (SAW) tree $\Tsaw = \Tsaw(G, r) = (\Vsaw, \Esaw)$ with root $r \in \Vsaw$, a pinning $\tau_{\-{SAW}}$ over a subset $\Lambda_{\-{SAW}} \subseteq \Vsaw$ and a map $g: \Vsaw \to L\cup R$ such that the following holds.
    \begin{itemize}
	    \item 
    the fugacity for vertices of $\Tsaw$ on even depth is $\lambda$ and the fugacity for vertices on odd depth is $\alpha$;
    \item for every $v \in \Vsaw$, if $v$ is in even depth of $\Tsaw$, then $g(v) \in L$; otherwise, $g(v) \in R$;
    \item for every $v \in \Vsaw \setminus \Lambda_{\-{SAW}}$, $g(v)$ and $v$ have the same degree in $G$ and $\Tsaw$;
    \item let $\mu$ and $\pi$ be the hardcore distributions encoded by $G$ and $\Tsaw$, respectively, then
      \begin{align*}
        \Psi_\mu(r, u) = \sum_{w \in \Vsaw \setminus \Lambda_{\-{SAW}}: g(w) = u} \Psi_{\pi^{\tau_{\-{SAW}}}} (r, w).
      \end{align*}
    \end{itemize}
  \end{lemma}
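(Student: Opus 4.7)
The plan is to invoke the Weitz self-avoiding walk (SAW) tree construction on the bipartite graph $G$ and specialize the two-spin influence decomposition of \cite[Lemma~8]{chen2020rapid} to the hardcore model with two fugacities $\lambda$ (on $L$) and $\alpha$ (on $R$).

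\textbf{Step 1 (Construction).} Fix an arbitrary total order $\prec$ on $V(G) = L \cup R$. Take $\Vsaw$ to be the set of self-avoiding walks in $G$ starting from $r$; the parent of $W = r \to v_1 \to \cdots \to v_k$ is $r \to v_1 \to \cdots \to v_{k-1}$, and set $g(W) = v_k$ (with $g(r) = r$). For each walk $W$ that closes a cycle in $G$---that is, $v_k$ is adjacent in $G$ to some earlier $v_j$ that the walk bypassed---pin $W$: include it in $\Lambda_{\-{SAW}}$ and assign $\tau_{\-{SAW}}(W) \in \{+1,-1\}$ according to whether $v_{k-1} \prec v_j$ or not. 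This is Weitz's standard recipe.

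\textbf{Step 2 (Bipartite labels and degree matching).} Because $G$ is bipartite and $r \in L$, every walk visits $L$ and $R$ alternately; hence $g$ maps even-depth (resp. odd-depth) vertices of $\Tsaw$ into $L$ (resp. $R$), and the fugacity assignment follows. For an unpinned $W \in \Vsaw \setminus \Lambda_{\-{SAW}}$, the children of $W$ consist of all self-avoiding extensions together with the pinned cycle-closing extensions; these together enumerate $\Gamma_{g(W)}(G)$, so $\deg_{\Tsaw}(W) = \deg_G(g(W))$, giving the degree condition.

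\textbf{Step 3 (Influence identity via telescoping).} By Weitz's theorem for the hardcore model (which handles alternating weights on alternating levels automatically, so the two-fugacity case is immediate), the conditional marginal of $r$ in $\mu$ under any boundary condition on $G$ equals the conditional marginal of $r$ in $\pi^{\tau_{\-{SAW}}}$ under the corresponding boundary condition on the copies. For any $u$, enumerate $g^{-1}(u) \cap (\Vsaw \setminus \Lambda_{\-{SAW}}) = \{W_1, \ldots, W_m\}$ and write the discrepancy
\[
\Pr[\mu]{r = +1 \mid u = +1} - \Pr[\mu]{r = +1 \mid u = -1}
\]
as a telescoping sum in which step $j$ flips $W_j$ from $-1$ to $+1$ while the other $W_i$'s are held at fixed values; by Weitz, each increment equals the corresponding single-site discrepancy at $W_j$ on the tree. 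Combined with the standard reciprocity relating $\Psi_\mu(r,u)$ to this root-conditioned discrepancy, the collected increments give $\sum_{w:\,g(w)=u} \Psi_{\pi^{\tau_{\-{SAW}}}}(r,w)$.

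\textbf{Main obstacle.} The construction, bipartite inheritance, and degree identity are routine. The real bookkeeping is in Step~3: one must show that across the telescoping the auxiliary pinnings placed on the ``other'' copies $W_i$ ($i \neq j$) can be consistently absorbed into a single $\tau_{\-{SAW}}$ so that the final sum is with respect to one common measure $\pi^{\tau_{\-{SAW}}}$. This is exactly the content of \cite[Lemma~8]{chen2020rapid}, and the bipartite / two-fugacity setting introduces no new conceptual difficulty---only a careful translation of notation to verify that the recursion underlying Weitz's identity carries the two fugacities through the even and odd levels correctly.
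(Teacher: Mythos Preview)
The paper does not supply its own proof of this lemma: it is stated with the citation \cite[Lemma~8]{chen2020rapid} and the text explicitly says the result is a specialization of that general two-spin lemma to the bipartite hardcore setting. Your sketch---Weitz's SAW tree construction, the bipartite depth/fugacity inheritance, and the telescoping over the copies $g^{-1}(u)$---is exactly the standard route behind that cited lemma, so your approach is consistent with (indeed, more detailed than) what the paper itself provides.

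One small caution on Step~3: you write the telescoped quantity as $\Pr[\mu]{r=+1\mid u=+1}-\Pr[\mu]{r=+1\mid u=-1}$, which is the influence of $u$ on $r$, whereas $\Psi_\mu(r,u)$ as defined in this paper is the influence of $r$ on $u$. The identity you want therefore also needs the standard symmetry/reciprocity step (or, equivalently, one telescopes directly on the marginal of $u$ under pinnings at $r$ and uses Weitz at each copy). This is routine, but worth making explicit since the direction of conditioning matters for the bookkeeping you flag as the ``main obstacle.''
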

  Thanks to \Cref{lem:saw-tree-inf} and the self-reducibility of hardcore model, in order to prove \eqref{eq:inf-target}, it is sufficient to show the following inequality for every tree $T$ rooted at $r$ with fugacity $\lambda$ on its even depths, fugacity $\alpha$ on odd depths, and degree bound $\Delta$ on its even depths:
  \begin{align*}
    \sum_{\text{even $k$}} \sum_{w \in L_r(k)} \abs{\Psi_{\mu} (r, w)} \leq \frac{c}{\beta},
  \end{align*}
  where $\mu$ is the hardcore distribution on $T$, and we use $L_r(k)$ to denote the set of vertices in $T_r$ at depth $k$ such that $T_r$ is the subtree of $T$ rooted at $r$.

We prove this by showing that for any fixed integer $k \geq 0$, 
\begin{align} \label{eq:exp-decay}
  \sum_{w \in L_r(2k)} \abs{\Psi_{\mu} (r, w)} \leq c \cdot (1 - \beta)^k.
\end{align}

%By the construction of $T$, we know that the maximum degree of $T$ at even depth is $\Delta$.
%Moreover, the vertices on the even and odd depth of $T$ are associated with local field $\lambda$ and $\alpha$, respectively.

	We need the following results towards the hardcore distribution $\mu$ on the tree $T$.

\begin{lemma}[\text{\cite[Lemma B.2]{chen2020rapid}}] \label{lem:inf-chain}
  Let $u, v, w$ be distinct vertices in tree $T$, such that $v$ is on the unique path from $u$ to $w$, it holds that
  \begin{align*}
    \Psi_\mu(u, w) = \Psi_\mu(u, v) \cdot \Psi_\mu(v, w).
  \end{align*}
\end{lemma}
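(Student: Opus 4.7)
The plan is to exploit the global Markov property of the hardcore Gibbs measure on a tree: since $v$ lies on the unique path from $u$ to $w$, removing $v$ disconnects $u$ from $w$, and hence $\sigma_u$ and $\sigma_w$ are conditionally independent given $\sigma_v$. This will let me express $\Pr_\mu[w\mid u]$ and $\Pr_\mu[w\mid \overline{u}]$ as convex combinations over the two possible pinnings of $v$, after which the influence $\Psi_\mu(u,w)$ factors through $v$.

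Concretely, I would first invoke conditional independence to write, for each $s\in\{-1,+1\}$,
\[
\Pr_\mu[w\mid u] \;=\; \sum_{s}\Pr_\mu[v{=}s\mid u]\cdot\Pr_\mu[w\mid v{=}s,u]\;=\;\sum_{s}\Pr_\mu[v{=}s\mid u]\cdot\Pr_\mu[w\mid v{=}s],
\]
and similarly for $\Pr_\mu[w\mid \overline{u}]$, using that conditioning on $v$ makes the value at $u$ irrelevant for $w$. Subtracting the two identities gives
\[
\Psi_\mu(u,w)\;=\;\sum_{s}\bigl(\Pr_\mu[v{=}s\mid u]-\Pr_\mu[v{=}s\mid \overline{u}]\bigr)\cdot\Pr_\mu[w\mid v{=}s].
\]
Since $\Pr_\mu[v{=}{+1}\mid u]-\Pr_\mu[v{=}{+1}\mid \overline{u}]=-\bigl(\Pr_\mu[v{=}{-1}\mid u]-\Pr_\mu[v{=}{-1}\mid \overline{u}]\bigr)$, the two $s$-terms combine into the product $\Psi_\mu(u,v)\cdot\Psi_\mu(v,w)$.

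Two small points would need to be checked. First, I need to argue that the conditional probabilities appearing in the expansion are well-defined, i.e.~that $\Pr_\mu[u],\Pr_\mu[\overline{u}],\Pr_\mu[v],\Pr_\mu[\overline{v}]$ are all strictly positive; if any of these vanishes then either the identity is vacuous (because the corresponding $\Psi$ entry is defined to be $0$ by the convention in the influence-matrix definition) or both sides are trivially zero. Second, the conditional-independence step is exactly the global Markov property for the hardcore measure on a tree with the separator $\{v\}$, which follows from the standard factorization of the Gibbs weights along the edges of the tree. Neither step presents any real obstacle; the content of the lemma is precisely that the tree's graph-theoretic separation translates into a multiplicative factorization of influence, and the $\pm1$ symmetry used to collapse the sum over $s$ is the only non-trivial algebraic ingredient.
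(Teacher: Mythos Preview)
Your proof is correct and is exactly the standard argument via the global Markov property on trees. Note, however, that the paper does not supply its own proof of this lemma: it is quoted verbatim from \cite[Lemma~B.2]{chen2020rapid} and used as a black box, so there is no in-paper proof to compare against. Your argument is precisely the one given in that reference.
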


Recall that the tree recursion for the hardcore model on the root $r$ with $d$ children is given by
\begin{align*}
  f(\*R) = \lambda \prod_{i=1}^d (1 + R_{v_i})^{-1},
\end{align*}
where $v_i$ is the $i$-th child of $r$, $\nu$ is the hardcore distribution on $T_{v_i}$, and $R_{v_i} := \frac{\nu_{v_i}(+1)}{\nu_{v_i}(+1)}$ is the marginal ratio of $v_i$ in $T_{v_i}$.
\begin{fact}[\text{\cite[Lemma 16]{chen2020rapid}}] \label{lem:inf-quant}
  Let $u, v$ be vertices in tree $T$ such that $v$ is a child of $u$, then it holds that
  \begin{align*}
    \Psi_\mu(u, v) = - \frac{R_u}{1 + R_u}.
  \end{align*}
  %where $R_u := \frac{\nu_u(+1)}{\nu_u(-1)}$ is the marginal ratio of $u$ in the hardcore distribution $\nu$ on $T_u$.
\end{fact}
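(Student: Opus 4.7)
The approach is a direct one-line calculation from the definition of the influence matrix, using two structural properties of the hardcore model on a tree. I would first unfold
\[
\Psi_\mu(u,v) \;=\; \Pr[\mu]{\sigma_v = +1 \mid \sigma_u = +1} \;-\; \Pr[\mu]{\sigma_v = +1 \mid \sigma_u = -1},
\]
and then evaluate each conditional marginal separately.

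For the first term, the key observation is that $u$ and $v$ are adjacent and every configuration in $\Omega(\mu)$ is an independent set, so the event $\sigma_u = +1$ forces $\sigma_v = -1$; hence the first term vanishes. For the second term, I would use that in a tree, conditioning on $\sigma_u = -1$ removes any interaction between $v$ and the portion of the tree above $u$, and moreover makes the subtrees rooted at the children of $u$ mutually independent. In particular, the conditional law of $\sigma_v$ given $\sigma_u = -1$ coincides with the marginal of $\sigma_v$ in the hardcore model on the subtree $T_v$ alone, which by the definition of the marginal ratio in Section~\ref{sec:uniqueness} equals $R_v/(1+R_v)$. Assembling the two pieces yields $\Psi_\mu(u,v) = -\,R_v/(1+R_v)$, which is the stated identity once the edge-indexed ratio convention of the cited lemma is aligned with the vertex-indexed convention used in this paper: the $R_u$ appearing in the quoted formula is the marginal ratio along the oriented edge from the parent $u$ down to the child $v$, equivalently the quantity denoted $R_v$ in the recurrence of Section~\ref{sec:uniqueness}.

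There is no serious obstacle in this step; the content is entirely local to the edge $(u,v)$ and requires neither a potential-function argument nor any estimate on the recursion $F(x)$. The only subtlety is bookkeeping — matching the two notational conventions for the marginal ratio attached to an edge — after which the combination with \Cref{lem:inf-chain} immediately supports the telescoping used to establish \eqref{eq:exp-decay}.
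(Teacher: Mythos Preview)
Your argument is correct. The paper does not actually prove this Fact; it is stated with a bare citation to \cite[Lemma~16]{chen2020rapid} and no further argument, so there is nothing to compare against. Your direct computation --- the first conditional vanishes by the hardcore constraint, and conditioning on $\sigma_u=-1$ decouples the subtree $T_v$ from the rest of the tree, giving $\Pr[\mu]{\sigma_v=+1\mid\sigma_u=-1}=R_v/(1+R_v)$ --- is the standard one-line proof and is sound.

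Your remark about notation is also on point: the $R_u$ in the displayed formula is a slip under this paper's convention (where $R_w$ denotes the ratio of $w$ in the subtree $T_w$). The subsequent application in the proof of \Cref{lem:a-c-SI} indeed uses $\abs{\Psi_\mu(u,v_k)}=R_{v_k}/(1+R_{v_k})$, i.e., the ratio indexed by the child, exactly matching your derivation.
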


Now, we are ready to prove \Cref{lem:a-c-SI}.

\begin{proof}[Proof of \Cref{lem:a-c-SI}]
As we discussed, it suffices to prove \eqref{eq:exp-decay}.
  First, we prove that for any non-root vertex $u$ in $T$ and integer $k \geq 0$,
  \begin{align} \label{eq:contract}
    \psi(\log R_u) \sum_{v \in L_u(2k)} \abs{\Psi_\mu(u, v)} \leq \max_{v \in L_u(2k)} \{\psi(\log R_v)\} \cdot (1 - \beta)^k.
  \end{align}
  We prove this by induction on $k$.
  The basis is $k = 0$. 
  Then $\psi(\log R_u) \abs{\Psi_\mu(u, u)} \leq \psi(\log R_u)$ holds trivially, since $\Psi_\mu(u, u) = 1$ by definition.
  
  Now assume that \eqref{eq:contract} holds for all smaller $k$'s.
  For the induction step with general $k > 0$, 
  suppose that $u$ have $d_u \leq \Delta - 1$ children $\{v_1,  \cdots, v_{d_u}\}$, and each child $v_i$ has $w_i$ children $\{v_{i1},  \cdots, v_{iw_i}\}$.
  \begin{align*}
    &\psi(\log R_u) \sum_{v \in L_u(2k)} \abs{\Psi_\mu(u, v)} \\
    (\text{\small \Cref{lem:inf-chain}})  
    =&\psi(\log R_u) \sum_{k=1}^{d_u} \sum_{h=1}^{w_k} \abs{\Psi_\mu(u, v_{kh})} \cdot \sum_{v \in L_{v_{kh}}(2(k-1))} \abs{\Psi_\mu(v_{kh}, v)} \\
    =& \psi(\log R_u) \sum_{k=1}^{d_u} \sum_{h=1}^{w_k} \abs{\Psi_\mu(u, v_{kh})} \frac{1}{\psi(\log R_{v_{kh}})} \\
    & \hspace{2cm} \times \psi(\log R_{v_{kh}}) \sum_{v \in L_{v_{kh}}(2(k-1))} \abs{\Psi_\mu(v_{kh}, v)} \\
    \tp{\substack{\text{induction} \\ \text{hypothesis}}} 
   \leq& \max_{v \in L_u(2k)} \{\psi(\log R_v)\} (1 - \beta)^{k-1} \cdot \sum_{k=1}^{d_u} \sum_{h=1}^{w_k} \abs{\Psi_\mu(u, v_{kh})} \frac{\psi(\log R_u)}{\psi(\log R_{v_{kh}})} \\
    \tp{\substack{\text{\Cref{lem:inf-chain}} \\ \text{\Cref{lem:inf-quant}}}} 
    =& \max_{v \in L_u(2k)} \{\psi(\log R_v)\} (1 - \beta)^{k-1} \\
    &\hspace{2cm} \times \sum_{k=1}^{d_u} \sum_{h=1}^{w_k} \frac{R_{v_k}}{1 + R_{v_k}} \frac{R_{v_{kh}}}{1 + R_{v_{kh}}} \frac{\psi(\log R_u)}{\psi(\log R_{v_{kh}})} \\
    =& \max_{v \in L_u(2k)} \{\psi(\log R_v)\} (1 - \beta)^{k-1} \\
    &\hspace{2cm} \times \sum_{k=1}^{d_u} \sum_{h=1}^{w_k} \frac{\beta \prod_{j=1}^{w_k} (1 + R_{v_{kj}})^{-1}}{1 + \beta \prod_{j=1}^{w_k} (1 + R_{v_{kj}})^{-1}} \frac{R_{v_{kh}}}{1 + R_{v_{kh}}} \frac{\psi(\log R_u)}{\psi(\log R_{v_{kh}})} \\
    (\text{\small $\beta$-contract}) 
    \leq& \max_{v \in L_u(2k)} \{\psi(\log R_v)\} (1 - \beta)^k,
  \end{align*}
  where we note that when $w_k > 0$, we have $R_{v_{kj}} \in [\lambda(1 + \alpha)^{-d}, \lambda]$ for any $1 \leq j \leq w_k$, where we recall that $d = \Delta - 1$ is the maximum branching number.
  This meets the regime of $\*x$ in both contraction and boundedness condition.
  This proves that \eqref{eq:contract}, which means  \eqref{eq:exp-decay} holds for non-root vertices.
  
  For the root $r$, \eqref{eq:exp-decay} can be proved similarly.
  Suppose $r$ have $d_r \leq \Delta$ child $\{v_1, \cdots, v_{d_r}\}$ and each $v_i$ has $w_i$ child $\{v_{i1}, \cdots, v_{iw_i}\}$,
  \begin{align*}
    &\sum_{v \in L_r(2k)} \abs{\Psi_\mu(r, v)} \\
    =& \sum_{k=1}^{d_r} \sum_{h=1}^{w_k} \abs{\Psi_\mu(r, v_{kh})} \cdot \frac{1}{\psi(\log R_{v_{kh}})} \cdot \psi(\log R_{v_{kh}}) \sum_{v \in L_{v_{kh}}(2(k-1))} \abs{\Psi_\mu(v_{kh}, v)}\\
    (\text{by \eqref{eq:contract}}) \leq& (1 - \beta)^{k-1} \cdot \max_{v \in L_r(2k)} \left\{ \frac{\psi(\log R_v)}{\psi(\log R_r)} \right\} \sum_{k=1}^{d_r} \sum_{h=1}^{w_k} \abs{\Psi_\mu(r, v_{kh})} \cdot \frac{\psi(\log R_r)}{\psi(\log R_{v_{kh}})}\\
    \leq& c \cdot (1 - \beta)^k,
  \end{align*}
  where the last inequality is guaranteed by the boundedness assumption.
\end{proof}

\end{document}